\newif\iffull\fulltrue
\declaretheorem[style=acmdefinition]{definition}
\declaretheorem[style=acmplain]{lemma}
\declaretheorem[style=acmplain]{theorem}
\declaretheorem[style=acmplain]{corollary}
\declaretheorem[style=acmdefinition,numberwithin=section,name=Definition]{definitionA}
\declaretheorem[style=acmplain,numberwithin=section,name=Lemma]{lemmaA}
\declaretheorem[style=acmplain,numberwithin=section,name=Theorem]{theoremA}
\declaretheorem[style=acmplain,numberwithin=section,name=Corollary]{corollaryA}
\newcommand{\ottdrule}[4][]{{\displaystyle\frac{\begin{array}{l}#2\end{array}}{#3}\quad\ottdrulename{#4}}}
\newcommand{\ottpremise}[1]{ #1 \\}
\newenvironment{ottdefnblock}[3][]{ \framebox{\mbox{#2}} \quad #3 \\[0pt]}{}
\newcommand{\ottnt}[1]{\mathit{#1}}
\newcommand{\ottmv}[1]{\mathit{#1}}
\newcommand{\ottsym}[1]{#1}
\newcommand{\ottdrulename}[1]{\textsc{#1}}
\renewcommand{\ottdrule}[4][]
  { {\displaystyle\frac{\begin{array}{c}#2\end{array} }{#3}\quad\ottdrulename{#4} } }
\newcommand{\ottdruleCXXBase}[1]{\ottdrule[#1]{%
}{
\iota  \sim  \iota}{%
{\ottdrulename{C\_Base}}{}%
}}
\newcommand{\ottdruleCXXTyVar}[1]{\ottdrule[#1]{%
}{
\ottmv{X}  \sim  \ottmv{X}}{%
{\ottdrulename{C\_TyVar}}{}%
}}
\newcommand{\ottdruleCXXDynR}[1]{\ottdrule[#1]{%
}{
\ottnt{U}  \sim  \star}{%
{\ottdrulename{C\_DynR}}{}%
}}
\newcommand{\ottdruleCXXDynL}[1]{\ottdrule[#1]{%
}{
\star  \sim  \ottnt{U}}{%
{\ottdrulename{C\_DynL}}{}%
}}
\newcommand{\ottdruleCXXArrow}[1]{\ottdrule[#1]{%
\ottpremise{\ottnt{U_{{\mathrm{11}}}}  \sim  \ottnt{U_{{\mathrm{21}}}}  \quad  \ottnt{U_{{\mathrm{12}}}}  \sim  \ottnt{U_{{\mathrm{22}}}}}%
}{
\ottnt{U_{{\mathrm{11}}}}  \!\rightarrow\!  \ottnt{U_{{\mathrm{12}}}}  \sim  \ottnt{U_{{\mathrm{21}}}}  \!\rightarrow\!  \ottnt{U_{{\mathrm{22}}}}}{%
{\ottdrulename{C\_Arrow}}{}%
}}
\newcommand{\ottdruleTXXVar}[1]{\ottdrule[#1]{%
\ottpremise{\ottmv{x}  \ottsym{:}  \ottnt{U} \, \in \, \Gamma}%
}{
\Gamma  \vdash  \ottmv{x}  \ottsym{:}  \ottnt{U}}{%
{\ottdrulename{T\_Var}}{}%
}}
\newcommand{\ottdruleTXXConst}[1]{\ottdrule[#1]{%
}{
\Gamma  \vdash  \ottnt{c}  \ottsym{:}   \mathit{ty} ( \ottnt{c} ) }{%
{\ottdrulename{T\_Const}}{}%
}}
\newcommand{\ottdruleTXXOp}[1]{\ottdrule[#1]{%
\ottpremise{ \mathit{ty} ( \mathit{op} )   \ottsym{=}  \iota_{{\mathrm{1}}}  \!\rightarrow\!  \iota_{{\mathrm{2}}}  \!\rightarrow\!  \iota  \quad  \Gamma  \vdash  \ottnt{f_{{\mathrm{1}}}}  \ottsym{:}  \iota_{{\mathrm{1}}}  \quad  \Gamma  \vdash  \ottnt{f_{{\mathrm{2}}}}  \ottsym{:}  \iota_{{\mathrm{2}}}}%
}{
\Gamma  \vdash  \mathit{op} \, \ottsym{(}  \ottnt{f_{{\mathrm{1}}}}  \ottsym{,}  \ottnt{f_{{\mathrm{2}}}}  \ottsym{)}  \ottsym{:}  \iota}{%
{\ottdrulename{T\_Op}}{}%
}}
\newcommand{\ottdruleTXXAbs}[1]{\ottdrule[#1]{%
\ottpremise{ \Gamma ,   \ottmv{x}  :  \ottnt{U_{{\mathrm{1}}}}    \vdash  \ottnt{f}  \ottsym{:}  \ottnt{U_{{\mathrm{2}}}}}%
}{
\Gamma  \vdash   \lambda  \ottmv{x} \!:\!  \ottnt{U_{{\mathrm{1}}}}  .\,  \ottnt{f}   \ottsym{:}  \ottnt{U_{{\mathrm{1}}}}  \!\rightarrow\!  \ottnt{U_{{\mathrm{2}}}}}{%
{\ottdrulename{T\_Abs}}{}%
}}
\newcommand{\ottdruleTXXApp}[1]{\ottdrule[#1]{%
\ottpremise{\Gamma  \vdash  \ottnt{f_{{\mathrm{1}}}}  \ottsym{:}  \ottnt{U_{{\mathrm{1}}}}  \!\rightarrow\!  \ottnt{U_{{\mathrm{2}}}}  \quad  \Gamma  \vdash  \ottnt{f_{{\mathrm{2}}}}  \ottsym{:}  \ottnt{U_{{\mathrm{1}}}}}%
}{
\Gamma  \vdash  \ottnt{f_{{\mathrm{1}}}} \, \ottnt{f_{{\mathrm{2}}}}  \ottsym{:}  \ottnt{U_{{\mathrm{2}}}}}{%
{\ottdrulename{T\_App}}{}%
}}
\newcommand{\ottdruleTXXCast}[1]{\ottdrule[#1]{%
\ottpremise{\Gamma  \vdash  \ottnt{f}  \ottsym{:}  \ottnt{U}  \quad  \ottnt{U}  \sim  \ottnt{U'}}%
}{
\Gamma  \vdash  \ottsym{(}  \ottnt{f}  \ottsym{:}   \ottnt{U} \Rightarrow  \unskip ^ { \ell }  \! \ottnt{U'}   \ottsym{)}  \ottsym{:}  \ottnt{U'}}{%
{\ottdrulename{T\_Cast}}{}%
}}
\newcommand{\ottdruleTXXBlame}[1]{\ottdrule[#1]{%
}{
\Gamma  \vdash  \textsf{\textup{blame}\relax} \, \ell  \ottsym{:}  \ottnt{U}}{%
{\ottdrulename{T\_Blame}}{}%
}}
\newcommand{\ottdruleTXXVarP}[1]{\ottdrule[#1]{%
\ottpremise{\ottmv{x}  \ottsym{:}  \forall \,  \overrightarrow{ \ottmv{X_{\ottmv{i}}} }   \ottsym{.}  \ottnt{U} \, \in \, \Gamma  \\   \text{For any }  \ottmv{X_{\ottmv{j}}} \, \in \,  \overrightarrow{ \ottmv{X_{\ottmv{i}}} }  ,   \mathbbsl{T}_{\ottmv{j}}  =   \nu   \, \text{ iff } \, \ottmv{X_{\ottmv{j}}} \, \not\in \, \textit{ftv} \, \ottsym{(}  \ottnt{U}  \ottsym{)} }%
}{
\Gamma  \vdash  \ottmv{x}  [   \overrightarrow{ \mathbbsl{T}_{\ottmv{i}} }   ]  \ottsym{:}  \ottnt{U}  [   \overrightarrow{ \ottmv{X_{\ottmv{i}}} }   \ottsym{:=}   \overrightarrow{ \mathbbsl{T}_{\ottmv{i}} }   ]}{%
{\ottdrulename{T\_VarP}}{}%
}}
\newcommand{\ottdruleTXXLetP}[1]{\ottdrule[#1]{%
\ottpremise{\Gamma  \vdash  w_{{\mathrm{1}}}  \ottsym{:}  \ottnt{U_{{\mathrm{1}}}}  \quad   \Gamma ,   \ottmv{x}  :  \forall \,  \overrightarrow{ \ottmv{X_{\ottmv{i}}} }   \ottsym{.}  \ottnt{U_{{\mathrm{1}}}}    \vdash  \ottnt{f_{{\mathrm{2}}}}  \ottsym{:}  \ottnt{U_{{\mathrm{2}}}}  \\   \overrightarrow{ \ottmv{X_{\ottmv{i}}} }   \cap  \textit{ftv} \, \ottsym{(}  \Gamma  \ottsym{)}  \ottsym{=}   \emptyset }%
}{
\Gamma  \vdash   \textsf{\textup{let}\relax} \,  \ottmv{x}  =   \Lambda    \overrightarrow{ \ottmv{X_{\ottmv{i}}} }  .\,  w_{{\mathrm{1}}}   \textsf{\textup{ in }\relax}  \ottnt{f_{{\mathrm{2}}}}   \ottsym{:}  \ottnt{U_{{\mathrm{2}}}}}{%
{\ottdrulename{T\_LetP}}{}%
}}
\newcommand{\ottdruleCIXXEmpty}[1]{\ottdrule[#1]{%
}{
 \emptyset   \rightsquigarrow   \emptyset }{%
{\ottdrulename{CI\_Empty}}{}%
}}
\newcommand{\ottdruleCIXXExtendVar}[1]{\ottdrule[#1]{%
\ottpremise{\Gamma  \rightsquigarrow  \Gamma'  \quad   \overrightarrow{ \ottmv{Y_{\ottmv{j}}} }   \cap  \textit{ftv} \, \ottsym{(}  \ottnt{U}  \ottsym{)}  \ottsym{=}   \emptyset }%
}{
 \Gamma ,   \ottmv{x}  :  \forall \,  \overrightarrow{ \ottmv{X_{\ottmv{i}}} }   \ottsym{.}  \ottnt{U}    \rightsquigarrow   \Gamma' ,   \ottmv{x}  :  \forall \,  \overrightarrow{ \ottmv{X_{\ottmv{i}}} }  \,  \overrightarrow{ \ottmv{Y_{\ottmv{j}}} }   \ottsym{.}  \ottnt{U}  }{%
{\ottdrulename{CI\_ExtendVar}}{}%
}}
\newcommand{\ottdruleCIXXVarP}[1]{\ottdrule[#1]{%
\ottpremise{\ottmv{x}  \ottsym{:}  \forall \,  \overrightarrow{ \ottmv{X_{\ottmv{i}}} }   \ottsym{.}  \ottnt{U} \, \in \, \Gamma  \quad   \text{For any }  \ottmv{X_{\ottmv{j}}} \, \in \,  \overrightarrow{ \ottmv{X_{\ottmv{i}}} }  ,   \mathbbsl{T}_{\ottmv{j}}  =   \nu   \, \text{ iff } \, \ottmv{X_{\ottmv{j}}} \, \not\in \, \textit{ftv} \, \ottsym{(}  \ottnt{U}  \ottsym{)} }%
}{
\Gamma  \vdash  \ottmv{x}  \rightsquigarrow  \ottmv{x}  [   \overrightarrow{ \mathbbsl{T}_{\ottmv{i}} }   ]  \ottsym{:}  \ottnt{U}  [   \overrightarrow{ \ottmv{X_{\ottmv{i}}} }   \ottsym{:=}   \overrightarrow{ \mathbbsl{T}_{\ottmv{i}} }   ]}{%
{\ottdrulename{CI\_VarP}}{}%
}}
\newcommand{\ottdruleCIXXConst}[1]{\ottdrule[#1]{%
}{
\Gamma  \vdash  \ottnt{c}  \rightsquigarrow  \ottnt{c}  \ottsym{:}   \mathit{ty} ( \ottnt{c} ) }{%
{\ottdrulename{CI\_Const}}{}%
}}
\newcommand{\ottdruleCIXXOp}[1]{\ottdrule[#1]{%
\ottpremise{\Gamma  \vdash  e_{{\mathrm{1}}}  \rightsquigarrow  f_{{\mathrm{1}}}  \ottsym{:}  \ottnt{U_{{\mathrm{1}}}}  \quad  \Gamma  \vdash  e_{{\mathrm{2}}}  \rightsquigarrow  f_{{\mathrm{2}}}  \ottsym{:}  \ottnt{U_{{\mathrm{2}}}}}%
\ottpremise{ \mathit{ty} ( \mathit{op} )   \ottsym{=}  \iota_{{\mathrm{1}}}  \!\rightarrow\!  \iota_{{\mathrm{2}}}  \!\rightarrow\!  \iota  \quad  \ottnt{U_{{\mathrm{1}}}}  \sim  \iota_{{\mathrm{1}}}  \quad  \ottnt{U_{{\mathrm{2}}}}  \sim  \iota_{{\mathrm{2}}}}%
}{
\Gamma  \vdash  \mathit{op} \, \ottsym{(}  e_{{\mathrm{1}}}  \ottsym{,}  e_{{\mathrm{2}}}  \ottsym{)}  \rightsquigarrow  \mathit{op} \, \ottsym{(}  f_{{\mathrm{1}}}  \ottsym{:}   \ottnt{U_{{\mathrm{1}}}} \Rightarrow  \unskip ^ { \ell_{{\mathrm{1}}} }  \! \iota_{{\mathrm{1}}}   \ottsym{,}  f_{{\mathrm{2}}}  \ottsym{:}   \ottnt{U_{{\mathrm{2}}}} \Rightarrow  \unskip ^ { \ell_{{\mathrm{2}}} }  \! \iota_{{\mathrm{2}}}   \ottsym{)}  \ottsym{:}  \iota}{%
{\ottdrulename{CI\_Op}}{}%
}}
\newcommand{\ottdruleCIXXAbsE}[1]{\ottdrule[#1]{%
\ottpremise{ \Gamma ,   \ottmv{x}  :  \ottnt{U_{{\mathrm{1}}}}    \vdash  e  \rightsquigarrow  f  \ottsym{:}  \ottnt{U_{{\mathrm{2}}}}}%
}{
\Gamma  \vdash   \lambda  \ottmv{x} \!:\!  \ottnt{U_{{\mathrm{1}}}}  .\,  e   \rightsquigarrow   \lambda  \ottmv{x} \!:\!  \ottnt{U_{{\mathrm{1}}}}  .\,  f   \ottsym{:}  \ottnt{U_{{\mathrm{1}}}}  \!\rightarrow\!  \ottnt{U_{{\mathrm{2}}}}}{%
{\ottdrulename{CI\_AbsE}}{}%
}}
\newcommand{\ottdruleCIXXAbsI}[1]{\ottdrule[#1]{%
\ottpremise{ \Gamma ,   \ottmv{x}  :  \ottnt{T}    \vdash  e  \rightsquigarrow  f  \ottsym{:}  \ottnt{U}}%
}{
\Gamma  \vdash   \lambda  \ottmv{x} .\,  e   \rightsquigarrow   \lambda  \ottmv{x} \!:\!  \ottnt{T}  .\,  f   \ottsym{:}  \ottnt{T}  \!\rightarrow\!  \ottnt{U}}{%
{\ottdrulename{CI\_AbsI}}{}%
}}
\newcommand{\ottdruleCIXXApp}[1]{\ottdrule[#1]{%
\ottpremise{\Gamma  \vdash  e_{{\mathrm{1}}}  \rightsquigarrow  f_{{\mathrm{1}}}  \ottsym{:}  \ottnt{U_{{\mathrm{1}}}}  \quad  \Gamma  \vdash  e_{{\mathrm{2}}}  \rightsquigarrow  f_{{\mathrm{2}}}  \ottsym{:}  \ottnt{U_{{\mathrm{2}}}}  \quad  \ottnt{U_{{\mathrm{1}}}}  \triangleright  \ottnt{U_{{\mathrm{11}}}}  \!\rightarrow\!  \ottnt{U_{{\mathrm{12}}}}  \quad  \ottnt{U_{{\mathrm{2}}}}  \sim  \ottnt{U_{{\mathrm{11}}}}}%
}{
\Gamma  \vdash  e_{{\mathrm{1}}} \, e_{{\mathrm{2}}}  \rightsquigarrow  \ottsym{(}  f_{{\mathrm{1}}}  \ottsym{:}   \ottnt{U_{{\mathrm{1}}}} \Rightarrow  \unskip ^ { \ell_{{\mathrm{1}}} }  \! \ottnt{U_{{\mathrm{11}}}}  \!\rightarrow\!  \ottnt{U_{{\mathrm{12}}}}   \ottsym{)} \, \ottsym{(}  f_{{\mathrm{2}}}  \ottsym{:}   \ottnt{U_{{\mathrm{2}}}} \Rightarrow  \unskip ^ { \ell_{{\mathrm{2}}} }  \! \ottnt{U_{{\mathrm{11}}}}   \ottsym{)}  \ottsym{:}  \ottnt{U_{{\mathrm{12}}}}}{%
{\ottdrulename{CI\_App}}{}%
}}
\newcommand{\ottdruleCIXXLetP}[1]{\ottdrule[#1]{%
\ottpremise{\Gamma  \vdash  v_{{\mathrm{1}}}  \rightsquigarrow  w_{{\mathrm{1}}}  \ottsym{:}  \ottnt{U_{{\mathrm{1}}}}  \quad   \Gamma ,   \ottmv{x}  :  \forall \,  \overrightarrow{ \ottmv{X_{\ottmv{i}}} }  \,  \overrightarrow{ \ottmv{Y_{\ottmv{j}}} }   \ottsym{.}  \ottnt{U_{{\mathrm{1}}}}    \vdash  e_{{\mathrm{2}}}  \rightsquigarrow  f_{{\mathrm{2}}}  \ottsym{:}  \ottnt{U_{{\mathrm{2}}}}}%
\ottpremise{ \overrightarrow{ \ottmv{X_{\ottmv{i}}} }   \ottsym{=}  \textit{ftv} \, \ottsym{(}  \ottnt{U_{{\mathrm{1}}}}  \ottsym{)}  \setminus  \ottsym{(}  \textit{ftv} \, \ottsym{(}  \Gamma  \ottsym{)}  \cup  \textit{ftv} \, \ottsym{(}  v_{{\mathrm{1}}}  \ottsym{)}  \ottsym{)}  \quad   \overrightarrow{ \ottmv{Y_{\ottmv{j}}} }   \ottsym{=}  \textit{ftv} \, \ottsym{(}  w_{{\mathrm{1}}}  \ottsym{)}  \setminus  \ottsym{(}  \textit{ftv} \, \ottsym{(}  \Gamma  \ottsym{)}  \cup  \textit{ftv} \, \ottsym{(}  \ottnt{U_{{\mathrm{1}}}}  \ottsym{)}  \cup  \textit{ftv} \, \ottsym{(}  v_{{\mathrm{1}}}  \ottsym{)}  \ottsym{)}}%
}{
\Gamma  \vdash   \textsf{\textup{let}\relax} \,  \ottmv{x}  =  v_{{\mathrm{1}}}  \textsf{\textup{ in }\relax}  e_{{\mathrm{2}}}   \rightsquigarrow   \textsf{\textup{let}\relax} \,  \ottmv{x}  =   \Lambda    \overrightarrow{ \ottmv{X_{\ottmv{i}}} }     \overrightarrow{ \ottmv{Y_{\ottmv{j}}} }  .\,  w_{{\mathrm{1}}}   \textsf{\textup{ in }\relax}  f_{{\mathrm{2}}}   \ottsym{:}  \ottnt{U_{{\mathrm{2}}}}}{%
{\ottdrulename{CI\_LetP}}{}%
}}
\newcommand{\ottdrulePXXIdBase}[1]{\ottdrule[#1]{%
}{
 \iota   \sqsubseteq _{ S }  \iota }{%
{\ottdrulename{P\_IdBase}}{}%
}}
\newcommand{\ottdrulePXXTyVar}[1]{\ottdrule[#1]{%
\ottpremise{\ottmv{X} \, \in \, \textit{dom} \, \ottsym{(}  S  \ottsym{)}}%
}{
 S  \ottsym{(}  \ottmv{X}  \ottsym{)}   \sqsubseteq _{ S }  \ottmv{X} }{%
{\ottdrulename{P\_TyVar}}{}%
}}
\newcommand{\ottdrulePXXDyn}[1]{\ottdrule[#1]{%
}{
 \ottnt{U}   \sqsubseteq _{ S }  \star }{%
{\ottdrulename{P\_Dyn}}{}%
}}
\newcommand{\ottdrulePXXArrow}[1]{\ottdrule[#1]{%
\ottpremise{ \ottnt{U_{{\mathrm{1}}}}   \sqsubseteq _{ S }  \ottnt{U'_{{\mathrm{1}}}}   \quad   \ottnt{U_{{\mathrm{2}}}}   \sqsubseteq _{ S }  \ottnt{U'_{{\mathrm{2}}}} }%
}{
 \ottnt{U_{{\mathrm{1}}}}  \!\rightarrow\!  \ottnt{U_{{\mathrm{2}}}}   \sqsubseteq _{ S }  \ottnt{U'_{{\mathrm{1}}}}  \!\rightarrow\!  \ottnt{U'_{{\mathrm{2}}}} }{%
{\ottdrulename{P\_Arrow}}{}%
}}
\newcommand{\ottdrulePXXConst}[1]{\ottdrule[#1]{%
}{
 \langle  \Gamma   \vdash   \ottnt{c}  :   \mathit{ty} ( \ottnt{c} )    \sqsubseteq _{ S }   \mathit{ty} ( \ottnt{c} )   :  \ottnt{c}  \dashv  \Gamma'  \rangle }{%
{\ottdrulename{P\_Const}}{}%
}}
\newcommand{\ottdrulePXXOp}[1]{\ottdrule[#1]{%
\ottpremise{ \mathit{ty} ( \mathit{op} )   \ottsym{=}  \iota_{{\mathrm{1}}}  \!\rightarrow\!  \iota_{{\mathrm{2}}}  \!\rightarrow\!  \iota  \quad   \langle  \Gamma   \vdash   f_{{\mathrm{1}}}  :  \iota_{{\mathrm{1}}}   \sqsubseteq _{ S }  \iota_{{\mathrm{1}}}  :  f'_{{\mathrm{1}}}  \dashv  \Gamma'  \rangle   \quad   \langle  \Gamma   \vdash   f_{{\mathrm{2}}}  :  \iota_{{\mathrm{2}}}   \sqsubseteq _{ S }  \iota_{{\mathrm{2}}}  :  f'_{{\mathrm{2}}}  \dashv  \Gamma'  \rangle }%
}{
 \langle  \Gamma   \vdash   \mathit{op} \, \ottsym{(}  f_{{\mathrm{1}}}  \ottsym{,}  f_{{\mathrm{2}}}  \ottsym{)}  :  \iota   \sqsubseteq _{ S }  \iota  :  \mathit{op} \, \ottsym{(}  f'_{{\mathrm{1}}}  \ottsym{,}  f'_{{\mathrm{2}}}  \ottsym{)}  \dashv  \Gamma'  \rangle }{%
{\ottdrulename{P\_Op}}{}%
}}
\newcommand{\ottdrulePXXAbs}[1]{\ottdrule[#1]{%
\ottpremise{ \ottnt{U_{{\mathrm{1}}}}   \sqsubseteq _{ S }  \ottnt{U'_{{\mathrm{1}}}}   \quad   \langle   \Gamma ,   \ottmv{x}  :  \ottnt{U_{{\mathrm{1}}}}     \vdash   f  :  \ottnt{U_{{\mathrm{2}}}}   \sqsubseteq _{ S }  \ottnt{U'_{{\mathrm{2}}}}  :  f'  \dashv   \Gamma' ,   \ottmv{x}  :  \ottnt{U'_{{\mathrm{1}}}}    \rangle }%
}{
 \langle  \Gamma   \vdash    \lambda  \ottmv{x} \!:\!  \ottnt{U_{{\mathrm{1}}}}  .\,  f   :  \ottnt{U_{{\mathrm{1}}}}  \!\rightarrow\!  \ottnt{U_{{\mathrm{2}}}}   \sqsubseteq _{ S }  \ottnt{U'_{{\mathrm{1}}}}  \!\rightarrow\!  \ottnt{U'_{{\mathrm{2}}}}  :   \lambda  \ottmv{x} \!:\!  \ottnt{U'_{{\mathrm{1}}}}  .\,  f'   \dashv  \Gamma'  \rangle }{%
{\ottdrulename{P\_Abs}}{}%
}}
\newcommand{\ottdrulePXXApp}[1]{\ottdrule[#1]{%
\ottpremise{ \langle  \Gamma   \vdash   f_{{\mathrm{1}}}  :  \ottnt{U_{{\mathrm{1}}}}  \!\rightarrow\!  \ottnt{U_{{\mathrm{2}}}}   \sqsubseteq _{ S }  \ottnt{U'_{{\mathrm{1}}}}  \!\rightarrow\!  \ottnt{U'_{{\mathrm{2}}}}  :  f'_{{\mathrm{1}}}  \dashv  \Gamma'  \rangle   \quad   \langle  \Gamma   \vdash   f_{{\mathrm{2}}}  :  \ottnt{U_{{\mathrm{1}}}}   \sqsubseteq _{ S }  \ottnt{U'_{{\mathrm{1}}}}  :  f'_{{\mathrm{2}}}  \dashv  \Gamma'  \rangle }%
}{
 \langle  \Gamma   \vdash   f_{{\mathrm{1}}} \, f_{{\mathrm{2}}}  :  \ottnt{U_{{\mathrm{2}}}}   \sqsubseteq _{ S }  \ottnt{U'_{{\mathrm{2}}}}  :  f'_{{\mathrm{1}}} \, f'_{{\mathrm{2}}}  \dashv  \Gamma'  \rangle }{%
{\ottdrulename{P\_App}}{}%
}}
\newcommand{\ottdrulePXXVarP}[1]{\ottdrule[#1]{%
\ottpremise{\ottmv{x}  \ottsym{:}  \forall \,  \overrightarrow{ \ottmv{X_{\ottmv{i}}} }   \ottsym{.}  \ottnt{U} \, \in \, \Gamma  \quad   \text{For any }  \ottmv{X_{\ottmv{k}}} \, \in \,  \overrightarrow{ \ottmv{X_{\ottmv{i}}} }  ,   \mathbbsl{T}_{\ottmv{k}}  =   \nu   \, \text{ iff } \, \ottmv{X_{\ottmv{k}}} \, \not\in \, \textit{ftv} \, \ottsym{(}  \ottnt{U}  \ottsym{)} }%
\ottpremise{\ottmv{x}  \ottsym{:}  \forall \,  \overrightarrow{ \ottmv{X'_{\ottmv{j}}} }   \ottsym{.}  \ottnt{U'} \, \in \, \Gamma'  \quad   \text{For any }  \ottmv{X'_{\ottmv{k}}} \, \in \,  \overrightarrow{ \ottmv{X'_{\ottmv{j}}} }  ,    { \mathbbsl{T}_{\ottmv{k}} }'   =   \nu   \, \text{ iff } \, \ottmv{X'_{\ottmv{k}}} \, \not\in \, \textit{ftv} \, \ottsym{(}  \ottnt{U'}  \ottsym{)} }%
\ottpremise{ \ottnt{U}  [   \overrightarrow{ \ottmv{X_{\ottmv{i}}} }   \ottsym{:=}   \overrightarrow{ \mathbbsl{T}_{\ottmv{i}} }   ]   \sqsubseteq _{ S }  \ottnt{U'}  [   \overrightarrow{ \ottmv{X'_{\ottmv{j}}} }   \ottsym{:=}   \overrightarrow{  { \mathbbsl{T}_{\ottmv{j}} }'  }   ] }%
}{
 \langle  \Gamma   \vdash   \ottmv{x}  [   \overrightarrow{ \mathbbsl{T}_{\ottmv{i}} }   ]  :  \ottnt{U}  [   \overrightarrow{ \ottmv{X_{\ottmv{i}}} }   \ottsym{:=}   \overrightarrow{ \mathbbsl{T}_{\ottmv{i}} }   ]   \sqsubseteq _{ S }  \ottnt{U'}  [   \overrightarrow{ \ottmv{X'_{\ottmv{j}}} }   \ottsym{:=}   \overrightarrow{  { \mathbbsl{T}_{\ottmv{j}} }'  }   ]  :  \ottmv{x}  [   \overrightarrow{  { \mathbbsl{T}_{\ottmv{j}} }'  }   ]  \dashv  \Gamma'  \rangle }{%
{\ottdrulename{P\_VarP}}{}%
}}
\newcommand{\ottdrulePXXLetP}[1]{\ottdrule[#1]{%
\ottpremise{ \langle  \Gamma   \vdash   w_{{\mathrm{1}}}  :  \ottnt{U_{{\mathrm{1}}}}   \sqsubseteq _{  [   \overrightarrow{ \ottmv{X'_{\ottmv{j}}} }   :=   \overrightarrow{ \ottnt{T'_{\ottmv{j}}} }   ]  \uplus  S  }  \ottnt{U'_{{\mathrm{1}}}}  :  w'_{{\mathrm{1}}}  \dashv  \Gamma'  \rangle   \quad   \text{For any }  \ottmv{X} \, \in \, \textit{dom} \, \ottsym{(}  S  \ottsym{)} ,   \overrightarrow{ \ottmv{X_{\ottmv{i}}} }   \cap  \textit{ftv} \, \ottsym{(}  S  \ottsym{(}  \ottmv{X}  \ottsym{)}  \ottsym{)}  \ottsym{=}   \emptyset  }%
\ottpremise{ \langle   \Gamma ,   \ottmv{x}  :  \forall \,  \overrightarrow{ \ottmv{X_{\ottmv{i}}} }   \ottsym{.}  \ottnt{U_{{\mathrm{1}}}}     \vdash   f_{{\mathrm{2}}}  :  \ottnt{U_{{\mathrm{2}}}}   \sqsubseteq _{ S }  \ottnt{U'_{{\mathrm{2}}}}  :  f'_{{\mathrm{2}}}  \dashv   \Gamma' ,   \ottmv{x}  :  \forall \,  \overrightarrow{ \ottmv{X'_{\ottmv{j}}} }   \ottsym{.}  \ottnt{U'_{{\mathrm{1}}}}    \rangle   \\   \overrightarrow{ \ottmv{X_{\ottmv{i}}} }   \cap  \textit{ftv} \, \ottsym{(}  \Gamma  \ottsym{)}  \ottsym{=}   \emptyset   \quad   \overrightarrow{ \ottmv{X'_{\ottmv{j}}} }   \cap  \textit{ftv} \, \ottsym{(}  \Gamma'  \ottsym{)}  \ottsym{=}   \emptyset }%
}{
 \langle  \Gamma   \vdash    \textsf{\textup{let}\relax} \,  \ottmv{x}  =   \Lambda    \overrightarrow{ \ottmv{X_{\ottmv{i}}} }  .\,  w_{{\mathrm{1}}}   \textsf{\textup{ in }\relax}  \ottnt{f_{{\mathrm{2}}}}   :  \ottnt{U_{{\mathrm{2}}}}   \sqsubseteq _{ S }  \ottnt{U'_{{\mathrm{2}}}}  :   \textsf{\textup{let}\relax} \,  \ottmv{x}  =   \Lambda    \overrightarrow{ \ottmv{X'_{\ottmv{j}}} }  .\,  w'_{{\mathrm{1}}}   \textsf{\textup{ in }\relax}  \ottnt{f'_{{\mathrm{2}}}}   \dashv  \Gamma'  \rangle }{%
{\ottdrulename{P\_LetP}}{}%
}}
\newcommand{\ottdrulePXXCast}[1]{\ottdrule[#1]{%
\ottpremise{ \langle  \Gamma   \vdash   f  :  \ottnt{U_{{\mathrm{1}}}}   \sqsubseteq _{ S }  \ottnt{U'_{{\mathrm{1}}}}  :  f'  \dashv  \Gamma'  \rangle   \quad  \ottnt{U_{{\mathrm{1}}}}  \sim  \ottnt{U_{{\mathrm{2}}}}  \quad  \ottnt{U'_{{\mathrm{1}}}}  \sim  \ottnt{U'_{{\mathrm{2}}}}  \quad   \ottnt{U_{{\mathrm{2}}}}   \sqsubseteq _{ S }  \ottnt{U'_{{\mathrm{2}}}} }%
}{
 \langle  \Gamma   \vdash   \ottsym{(}  f  \ottsym{:}   \ottnt{U_{{\mathrm{1}}}} \Rightarrow  \unskip ^ { \ell }  \! \ottnt{U_{{\mathrm{2}}}}   \ottsym{)}  :  \ottnt{U_{{\mathrm{2}}}}   \sqsubseteq _{ S }  \ottnt{U'_{{\mathrm{2}}}}  :  \ottsym{(}  f'  \ottsym{:}   \ottnt{U'_{{\mathrm{1}}}} \Rightarrow  \unskip ^ { \ell' }  \! \ottnt{U'_{{\mathrm{2}}}}   \ottsym{)}  \dashv  \Gamma'  \rangle }{%
{\ottdrulename{P\_Cast}}{}%
}}
\newcommand{\ottdrulePXXCastL}[1]{\ottdrule[#1]{%
\ottpremise{ \langle  \Gamma   \vdash   f  :  \ottnt{U_{{\mathrm{1}}}}   \sqsubseteq _{ S }  \ottnt{U'}  :  f'  \dashv  \Gamma'  \rangle   \quad  \ottnt{U_{{\mathrm{1}}}}  \sim  \ottnt{U}  \quad   \ottnt{U}   \sqsubseteq _{ S }  \ottnt{U'} }%
}{
 \langle  \Gamma   \vdash   \ottsym{(}  f  \ottsym{:}   \ottnt{U_{{\mathrm{1}}}} \Rightarrow  \unskip ^ { \ell }  \! \ottnt{U}   \ottsym{)}  :  \ottnt{U}   \sqsubseteq _{ S }  \ottnt{U'}  :  f'  \dashv  \Gamma'  \rangle }{%
{\ottdrulename{P\_CastL}}{}%
}}
\newcommand{\ottdrulePXXCastR}[1]{\ottdrule[#1]{%
\ottpremise{ \langle  \Gamma   \vdash   f  :  \ottnt{U}   \sqsubseteq _{ S }  \ottnt{U'_{{\mathrm{1}}}}  :  f'  \dashv  \Gamma'  \rangle   \quad  \ottnt{U'_{{\mathrm{1}}}}  \sim  \ottnt{U'}  \quad   \ottnt{U}   \sqsubseteq _{ S }  \ottnt{U'} }%
}{
 \langle  \Gamma   \vdash   f  :  \ottnt{U}   \sqsubseteq _{ S }  \ottnt{U'}  :  \ottsym{(}  f'  \ottsym{:}   \ottnt{U'_{{\mathrm{1}}}} \Rightarrow  \unskip ^ { \ell' }  \! \ottnt{U'}   \ottsym{)}  \dashv  \Gamma'  \rangle }{%
{\ottdrulename{P\_CastR}}{}%
}}
\newcommand{\ottdrulePXXBlame}[1]{\ottdrule[#1]{%
\ottpremise{\Gamma'  \vdash  f'  \ottsym{:}  \ottnt{U'}  \quad   \ottnt{U}   \sqsubseteq _{ S }  \ottnt{U'} }%
}{
 \langle  \Gamma   \vdash   \textsf{\textup{blame}\relax} \, \ell  :  \ottnt{U}   \sqsubseteq _{ S }  \ottnt{U'}  :  f'  \dashv  \Gamma'  \rangle }{%
{\ottdrulename{P\_Blame}}{}%
}}
\newcommand{\ottdruleITXXVarP}[1]{\ottdrule[#1]{%
\ottpremise{\ottmv{x}  \ottsym{:}  \forall \,  \overrightarrow{ \ottmv{X_{\ottmv{i}}} }   \ottsym{.}  \ottnt{U} \, \in \, \Gamma}%
}{
\Gamma  \vdash  \ottmv{x}  \ottsym{:}  \ottnt{U}  [   \overrightarrow{ \ottmv{X_{\ottmv{i}}} }   \ottsym{:=}   \overrightarrow{ \ottnt{T_{\ottmv{i}}} }   ]}{%
{\ottdrulename{IT\_VarP}}{}%
}}
\newcommand{\ottdruleITXXConst}[1]{\ottdrule[#1]{%
}{
\Gamma  \vdash  \ottnt{c}  \ottsym{:}   \mathit{ty} ( \ottnt{c} ) }{%
{\ottdrulename{IT\_Const}}{}%
}}
\newcommand{\ottdruleITXXOp}[1]{\ottdrule[#1]{%
\ottpremise{ \mathit{ty} ( \mathit{op} )   \ottsym{=}  \iota_{{\mathrm{1}}}  \!\rightarrow\!  \iota_{{\mathrm{2}}}  \!\rightarrow\!  \iota  \quad  \Gamma  \vdash  e_{{\mathrm{1}}}  \ottsym{:}  \ottnt{U_{{\mathrm{1}}}}  \quad  \Gamma  \vdash  e_{{\mathrm{2}}}  \ottsym{:}  \ottnt{U_{{\mathrm{2}}}}  \quad  \ottnt{U_{{\mathrm{1}}}}  \sim  \iota_{{\mathrm{1}}}  \quad  \ottnt{U_{{\mathrm{2}}}}  \sim  \iota_{{\mathrm{2}}}}%
}{
\Gamma  \vdash  \mathit{op} \, \ottsym{(}  e_{{\mathrm{1}}}  \ottsym{,}  e_{{\mathrm{2}}}  \ottsym{)}  \ottsym{:}  \iota}{%
{\ottdrulename{IT\_Op}}{}%
}}
\newcommand{\ottdruleITXXAbsE}[1]{\ottdrule[#1]{%
\ottpremise{ \Gamma ,   \ottmv{x}  :  \ottnt{U_{{\mathrm{1}}}}    \vdash  e  \ottsym{:}  \ottnt{U_{{\mathrm{2}}}}}%
}{
\Gamma  \vdash   \lambda  \ottmv{x} \!:\!  \ottnt{U_{{\mathrm{1}}}}  .\,  e   \ottsym{:}  \ottnt{U_{{\mathrm{1}}}}  \!\rightarrow\!  \ottnt{U_{{\mathrm{2}}}}}{%
{\ottdrulename{IT\_AbsE}}{}%
}}
\newcommand{\ottdruleITXXAbsI}[1]{\ottdrule[#1]{%
\ottpremise{ \Gamma ,   \ottmv{x}  :  \ottnt{T}    \vdash  e  \ottsym{:}  \ottnt{U}}%
}{
\Gamma  \vdash   \lambda  \ottmv{x} .\,  e   \ottsym{:}  \ottnt{T}  \!\rightarrow\!  \ottnt{U}}{%
{\ottdrulename{IT\_AbsI}}{}%
}}
\newcommand{\ottdruleITXXApp}[1]{\ottdrule[#1]{%
\ottpremise{\Gamma  \vdash  e_{{\mathrm{1}}}  \ottsym{:}  \ottnt{U_{{\mathrm{1}}}}  \quad  \Gamma  \vdash  e_{{\mathrm{2}}}  \ottsym{:}  \ottnt{U_{{\mathrm{2}}}}  \quad  \ottnt{U_{{\mathrm{1}}}}  \triangleright  \ottnt{U_{{\mathrm{11}}}}  \!\rightarrow\!  \ottnt{U_{{\mathrm{12}}}}  \quad  \ottnt{U_{{\mathrm{2}}}}  \sim  \ottnt{U_{{\mathrm{11}}}}}%
}{
\Gamma  \vdash  e_{{\mathrm{1}}} \, e_{{\mathrm{2}}}  \ottsym{:}  \ottnt{U_{{\mathrm{12}}}}}{%
{\ottdrulename{IT\_App}}{}%
}}
\newcommand{\ottdruleITXXLetP}[1]{\ottdrule[#1]{%
\ottpremise{\Gamma  \vdash  v_{{\mathrm{1}}}  \ottsym{:}  \ottnt{U_{{\mathrm{1}}}}  \quad   \Gamma ,   \ottmv{x}  :  \forall \,  \overrightarrow{ \ottmv{X_{\ottmv{i}}} }   \ottsym{.}  \ottnt{U_{{\mathrm{1}}}}    \vdash  e_{{\mathrm{2}}}  \ottsym{:}  \ottnt{U_{{\mathrm{2}}}}  \quad   \overrightarrow{ \ottmv{X_{\ottmv{i}}} }   \ottsym{=}  \textit{ftv} \, \ottsym{(}  \ottnt{U_{{\mathrm{1}}}}  \ottsym{)}  \setminus  \ottsym{(}  \textit{ftv} \, \ottsym{(}  \Gamma  \ottsym{)}  \cup  \textit{ftv} \, \ottsym{(}  v_{{\mathrm{1}}}  \ottsym{)}  \ottsym{)}}%
}{
\Gamma  \vdash   \textsf{\textup{let}\relax} \,  \ottmv{x}  =  v_{{\mathrm{1}}}  \textsf{\textup{ in }\relax}  e_{{\mathrm{2}}}   \ottsym{:}  \ottnt{U_{{\mathrm{2}}}}}{%
{\ottdrulename{IT\_LetP}}{}%
}}
\newcommand{\ottdruleIPXXVar}[1]{\ottdrule[#1]{%
}{
 \ottmv{x}   \sqsubseteq _{ S }  \ottmv{x} }{%
{\ottdrulename{IP\_Var}}{}%
}}
\newcommand{\ottdruleIPXXConst}[1]{\ottdrule[#1]{%
}{
 \ottnt{c}   \sqsubseteq _{ S }  \ottnt{c} }{%
{\ottdrulename{IP\_Const}}{}%
}}
\newcommand{\ottdruleIPXXOp}[1]{\ottdrule[#1]{%
\ottpremise{ e_{{\mathrm{1}}}   \sqsubseteq _{ S }  e_{{\mathrm{2}}}   \quad   e'_{{\mathrm{1}}}   \sqsubseteq _{ S }  e'_{{\mathrm{2}}} }%
}{
 \mathit{op} \, \ottsym{(}  e_{{\mathrm{1}}}  \ottsym{,}  e_{{\mathrm{2}}}  \ottsym{)}   \sqsubseteq _{ S }  \mathit{op} \, \ottsym{(}  e'_{{\mathrm{1}}}  \ottsym{,}  e'_{{\mathrm{2}}}  \ottsym{)} }{%
{\ottdrulename{IP\_Op}}{}%
}}
\newcommand{\ottdruleIPXXAbsI}[1]{\ottdrule[#1]{%
\ottpremise{ e   \sqsubseteq _{ S }  e' }%
}{
  \lambda  \ottmv{x} .\,  e    \sqsubseteq _{ S }   \lambda  \ottmv{x} .\,  e'  }{%
{\ottdrulename{IP\_AbsI}}{}%
}}
\newcommand{\ottdruleIPXXAbsIE}[1]{\ottdrule[#1]{%
\ottpremise{ e   \sqsubseteq _{ S }  e' }%
}{
  \lambda  \ottmv{x} .\,  e    \sqsubseteq _{ S }   \lambda  \ottmv{x} \!:\!  \star  .\,  e'  }{%
{\ottdrulename{IP\_AbsIE}}{}%
}}
\newcommand{\ottdruleIPXXAbsEI}[1]{\ottdrule[#1]{%
\ottpremise{ e   \sqsubseteq _{ S }  e' }%
}{
  \lambda  \ottmv{x} \!:\!  \ottnt{T_{{\mathrm{1}}}}  .\,  e    \sqsubseteq _{ S }   \lambda  \ottmv{x} .\,  e'  }{%
{\ottdrulename{IP\_AbsEI}}{}%
}}
\newcommand{\ottdruleIPXXAbsE}[1]{\ottdrule[#1]{%
\ottpremise{ \ottnt{U}   \sqsubseteq _{ S }  \ottnt{U'}   \quad   e   \sqsubseteq _{ S }  e' }%
}{
  \lambda  \ottmv{x} \!:\!  \ottnt{U}  .\,  e    \sqsubseteq _{ S }   \lambda  \ottmv{x} \!:\!  \ottnt{U'}  .\,  e'  }{%
{\ottdrulename{IP\_AbsE}}{}%
}}
\newcommand{\ottdruleIPXXApp}[1]{\ottdrule[#1]{%
\ottpremise{ e_{{\mathrm{1}}}   \sqsubseteq _{ S }  e'_{{\mathrm{1}}}   \quad   e_{{\mathrm{2}}}   \sqsubseteq _{ S }  e'_{{\mathrm{2}}} }%
}{
 e_{{\mathrm{1}}} \, e_{{\mathrm{2}}}   \sqsubseteq _{ S }  e'_{{\mathrm{1}}} \, e'_{{\mathrm{2}}} }{%
{\ottdrulename{IP\_App}}{}%
}}
\newcommand{\ottdruleIPXXLetP}[1]{\ottdrule[#1]{%
\ottpremise{ v_{{\mathrm{1}}}   \sqsubseteq _{ S }  v'_{{\mathrm{1}}}   \quad   e_{{\mathrm{2}}}   \sqsubseteq _{ S }  e'_{{\mathrm{2}}} }%
}{
  \textsf{\textup{let}\relax} \,  \ottmv{x}  =  v_{{\mathrm{1}}}  \textsf{\textup{ in }\relax}  e_{{\mathrm{2}}}    \sqsubseteq _{ S }   \textsf{\textup{let}\relax} \,  \ottmv{x}  =  v'_{{\mathrm{1}}}  \textsf{\textup{ in }\relax}  e'_{{\mathrm{2}}}  }{%
{\ottdrulename{IP\_LetP}}{}%
}}
\begin{document}

\title{Dynamic Type Inference for Gradual Hindley--Milner Typing}         


\author{Yusuke Miyazaki}
\authornote{Current affiliation: Recruit Co., Ltd.}          
\orcid{0000-0003-3884-2636}             
\affiliation{
  \department{Graduate School of Informatics}              
  \institution{Kyoto University}            
  \city{Kyoto}
  \country{Japan}                    
}
\email{miyazaki@fos.kuis.kyoto-u.ac.jp}          

\author{Taro Sekiyama}
\orcid{0000-0001-9286-230X}             
\affiliation{
  \institution{National Institute of Informatics}           
  \city{Tokyo}
  \country{Japan}                   
}
\email{tsekiyama@acm.org}         

\author{Atsushi Igarashi}
\orcid{0000-0002-5143-9764}             
\affiliation{
  \department{Graduate School of Informatics}             
  \institution{Kyoto University}           
  \city{Kyoto}
  \country{Japan}                   
}
\email{igarashi@kuis.kyoto-u.ac.jp}         

\begin{abstract}
    Garcia and Cimini study a type inference problem for the ITGL, an
implicitly and gradually typed language with let-polymorphism, and
develop a sound and complete inference algorithm for it.  Soundness
and completeness mean that, if the algorithm succeeds, the input term
can be translated to a well-typed term of an explicitly typed blame
calculus by cast insertion and vice versa.  However, in general, there
are many possible translations depending on how type variables that
were left undecided by static type inference are instantiated with
concrete static types.  Worse, the translated terms may behave
differently---some evaluate to values but others raise blame.

In this paper, we propose and formalize a new blame calculus
\(\lambdaRTI\) that avoids such divergence as an intermediate language
for the ITGL.  A main idea is to allow a term to contain type
variables (that have not been instantiated during static type
inference) and defer instantiation of these type variables to run
time.  We introduce \emph{dynamic type inference} (DTI) into the
semantics of \(\lambdaRTI\) so that type variables are instantiated
along reduction.  The DTI-based semantics not only avoids the
divergence described above but also is \emph{sound and complete} with
respect to the semantics of fully instantiated terms in the following
sense: if the evaluation of a term succeeds (i.e., terminates with a
value) in the DTI-based semantics, then there is a fully instantiated
version of the term that also succeeds in the explicitly typed blame
calculus and vice versa.

Finally, we prove the gradual guarantee, which is an important
correctness criterion of a gradually typed language, for the ITGL.

\end{abstract}

\begin{CCSXML}
<ccs2012>
<concept>
<concept_id>10003752.10010124.10010131.10010134</concept_id>
<concept_desc>Theory of computation~Operational semantics</concept_desc>
<concept_significance>500</concept_significance>
</concept>
<concept>
<concept_id>10011007.10011006.10011008.10011009.10011012</concept_id>
<concept_desc>Software and its engineering~Functional languages</concept_desc>
<concept_significance>500</concept_significance>
</concept>
<concept>
<concept_id>10011007.10010940.10010992.10010998.10011001</concept_id>
<concept_desc>Software and its engineering~Dynamic analysis</concept_desc>
<concept_significance>300</concept_significance>
</concept>
<concept>
<concept_id>10011007.10011006.10011008.10011024.10011025</concept_id>
<concept_desc>Software and its engineering~Polymorphism</concept_desc>
<concept_significance>300</concept_significance>
</concept>
</ccs2012>
\end{CCSXML}

\ccsdesc[500]{Theory of computation~Operational semantics}
\ccsdesc[500]{Software and its engineering~Functional languages}
\ccsdesc[300]{Software and its engineering~Dynamic analysis}
\ccsdesc[300]{Software and its engineering~Polymorphism}

\keywords{gradual typing, dynamic type inference, gradual guarantee}  

\maketitle

\section{Introduction} \label{sec:introduction}
\subsection{Gradual Typing}
Statically and dynamically typed languages have complementary strengths.
On the one hand, static typing provides early detection of bugs, but
the enforced programming style can be too constrained, especially when
the type system is not very expressive.  On the other hand, dynamic
typing is better suited for rapid prototyping and fast adaption to
changing requirements, but error detection is deferred to run time.

There has been much work to integrate static and dynamic typing in a single
programming
language~\cite{DBLP:journals/toplas/AbadiCPP91,DBLP:conf/pldi/CartwrightF91,DBLP:conf/popl/Thatte90,DBLP:conf/oopsla/BrachaG93,DBLP:journals/toplas/FlanaganF99,conf/Scheme/SiekT06,DBLP:conf/popl/Tobin-HochstadtF08}.
\emph{Gradual typing}~\cite{conf/Scheme/SiekT06} is a framework to enable seamless code evolution from a fully dynamically typed program to a fully statically typed one within a single language.
The notion of gradual typing has been applied to various language features such
as objects~\cite{DBLP:conf/ecoop/SiekT07},
generics~\cite{DBLP:conf/oopsla/InaI11},
effects~\cite{DBLP:conf/icfp/SchwerterGT14,DBLP:journals/jfp/SchwerterGT16},
ownership~\cite{DBLP:conf/esop/SergeyC12},
parametric polymorphism~\cite{DBLP:conf/popl/AhmedFSW11,DBLP:journals/pacmpl/IgarashiSI17,DBLP:conf/esop/XieBO18}
and so on.
More recently, even methodologies to ``gradualize'' existing statically typed
languages systematically, i.e., to generate gradually typed languages, are also
studied~\cite{DBLP:conf/popl/GarciaCT16,DBLP:conf/popl/CiminiS16,DBLP:conf/popl/CiminiS17}.

The key notions in gradual typing are \emph{the dynamic type} and
\emph{type consistency}.  The dynamic type, denoted by $ \star $, is the
type for the dynamically typed code.  For instance, a function that
accepts an argument of type $ \star $ can use it in any way and the
function can be applied to any value.  So, both $\ottsym{(}   \lambda  \ottmv{x} \!:\!  \star  .\,  \ottmv{x}   \ottsym{+}   2   \ottsym{)} \,  3 $
and $\ottsym{(}   \lambda  \ottmv{x} \!:\!  \star  .\,  \ottmv{x}   \ottsym{+}   2   \ottsym{)} \,  \textsf{\textup{true}\relax} $ are well-typed programs in a gradually
typed language.  To formalize such loose static typechecking, a type
consistency relation, denoted by $ \sim $, on types replaces some use
of type equality.  In the typing rule for applications, the function
argument type and the type of an actual argument are required not to
be equal but to be consistent; also, both $\star  \sim   \textsf{\textup{int}\relax} $ and
$\star  \sim   \textsf{\textup{bool}\relax} $ hold, making the two terms above well typed.

The semantics of a gradually typed language is usually defined by a
``cast-inserting'' translation into an intermediate language with
explicit casts, which perform run-time typechecking.
For example, the two examples above can be translated into terms of the blame calculus~\cite{DBLP:conf/esop/WadlerF09} as follows.
\begin{eqnarray*}
   \ottsym{(}   \lambda  \ottmv{x} \!:\!  \star  .\,  \ottmv{x}   \ottsym{+}   2   \ottsym{)} \,  3 
    & \rightsquigarrow & \ottsym{(}   \lambda  \ottmv{x} \!:\!  \star  .\,  \ottsym{(}  \ottmv{x}  \ottsym{:}   \star \Rightarrow  \unskip ^ { \ell_{{\mathrm{1}}} }  \!  \textsf{\textup{int}\relax}    \ottsym{)}   \ottsym{+}   2   \ottsym{)} \, \ottsym{(}   3   \ottsym{:}    \textsf{\textup{int}\relax}  \Rightarrow  \unskip ^ { \ell_{{\mathrm{2}}} }  \! \star   \ottsym{)} \\
   \ottsym{(}   \lambda  \ottmv{x} \!:\!  \star  .\,  \ottmv{x}   \ottsym{+}   2   \ottsym{)} \,  \textsf{\textup{true}\relax} 
    & \rightsquigarrow & \ottsym{(}   \lambda  \ottmv{x} \!:\!  \star  .\,  \ottsym{(}  \ottmv{x}  \ottsym{:}   \star \Rightarrow  \unskip ^ { \ell_{{\mathrm{1}}} }  \!  \textsf{\textup{int}\relax}    \ottsym{)}   \ottsym{+}   2   \ottsym{)} \, \ottsym{(}   \textsf{\textup{true}\relax}   \ottsym{:}    \textsf{\textup{bool}\relax}  \Rightarrow  \unskip ^ { \ell_{{\mathrm{2}}} }  \! \star   \ottsym{)}
\end{eqnarray*}
Here, $\rightsquigarrow$ denotes cast-inserting translation.
The term of the form $\ottnt{f}  \ottsym{:}   \ottnt{U_{{\mathrm{1}}}} \Rightarrow  \unskip ^ { \ell }  \! \ottnt{U_{{\mathrm{2}}}} $ is a cast of $\ottnt{f}$ from
type $\ottnt{U_{{\mathrm{1}}}}$ to $\ottnt{U_{{\mathrm{2}}}}$ and appears where typechecking was loosened due to type
consistency.\footnote{%
  The symbol $\ell$ is called a blame label and is used to identify a
  cast.  Following \citet{DBLP:conf/snapl/SiekVCB15}, we use $\ottnt{f}$ for the intermediate language and save $e$ for the surface language ITGL.} 
In these examples, actual arguments $ 3 $ and
$ \textsf{\textup{true}\relax} $ are cast to $ \star $ and $\ottmv{x}$ of type $ \star $ is cast to
$ \textsf{\textup{int}\relax} $ before being passed to $+$, which expects an integer.
In what follows, a sequence of casts $\ottsym{(}  f  \ottsym{:}   \ottnt{U_{{\mathrm{1}}}} \Rightarrow  \unskip ^ { \ell_{{\mathrm{1}}} }  \! \ottnt{U_{{\mathrm{2}}}}   \ottsym{)}  \ottsym{:}   \ottnt{U_{{\mathrm{2}}}} \Rightarrow  \unskip ^ { \ell_{{\mathrm{2}}} }  \! \ottnt{U_{{\mathrm{3}}}} $ is often abbreviated to
$f  \ottsym{:}   \ottnt{U_{{\mathrm{1}}}} \Rightarrow  \unskip ^ { \ell_{{\mathrm{1}}} }  \!  \ottnt{U_{{\mathrm{2}}}} \Rightarrow  \unskip ^ { \ell_{{\mathrm{2}}} }  \! \ottnt{U_{{\mathrm{3}}}}  $.

The former term evaluates to $ 5 $ whereas the latter to an uncatchable
exception, called blame~\cite{DBLP:conf/icfp/FindlerF02}, $\textsf{\textup{blame}\relax} \, \ell_{{\mathrm{1}}}$,
indicating that the cast on $\ottmv{x}$ fails:
\[
  \begin{array}{lcl}
    \ottsym{(}   \lambda  \ottmv{x} \!:\!  \star  .\,  \ottsym{(}  \ottmv{x}  \ottsym{:}   \star \Rightarrow  \unskip ^ { \ell_{{\mathrm{1}}} }  \!  \textsf{\textup{int}\relax}    \ottsym{)}   \ottsym{+}   2   \ottsym{)} \, \ottsym{(}   3   \ottsym{:}    \textsf{\textup{int}\relax}  \Rightarrow  \unskip ^ { \ell_{{\mathrm{2}}} }  \! \star   \ottsym{)} &
     \longmapsto  & \ottsym{(}   3   \ottsym{:}    \textsf{\textup{int}\relax}  \Rightarrow  \unskip ^ { \ell_{{\mathrm{2}}} }  \!  \star \Rightarrow  \unskip ^ { \ell_{{\mathrm{1}}} }  \!  \textsf{\textup{int}\relax}     \ottsym{)}  \ottsym{+}   2  \\ &
     \longmapsto  &  3   \ottsym{+}   2  \\ &
     \longmapsto  &  5  \\[0.3em]
    \ottsym{(}   \lambda  \ottmv{x} \!:\!  \star  .\,  \ottsym{(}  \ottmv{x}  \ottsym{:}   \star \Rightarrow  \unskip ^ { \ell_{{\mathrm{1}}} }  \!  \textsf{\textup{int}\relax}    \ottsym{)}   \ottsym{+}   2   \ottsym{)} \, \ottsym{(}   \textsf{\textup{true}\relax}   \ottsym{:}    \textsf{\textup{bool}\relax}  \Rightarrow  \unskip ^ { \ell_{{\mathrm{2}}} }  \! \star   \ottsym{)} &
     \longmapsto  & \ottsym{(}   \textsf{\textup{true}\relax}   \ottsym{:}    \textsf{\textup{bool}\relax}  \Rightarrow  \unskip ^ { \ell_{{\mathrm{2}}} }  \!  \star \Rightarrow  \unskip ^ { \ell_{{\mathrm{1}}} }  \!  \textsf{\textup{int}\relax}     \ottsym{)}  \ottsym{+}   2  \\ &
     \longmapsto  & \textsf{\textup{blame}\relax} \, \ell_{{\mathrm{1}}}  \ottsym{+}   2  \\ &
     \longmapsto  & \textsf{\textup{blame}\relax} \, \ell_{{\mathrm{1}}}
  \end{array}
\]
The terms $ 3   \ottsym{:}    \textsf{\textup{int}\relax}  \Rightarrow  \unskip ^ { \ell_{{\mathrm{2}}} }  \! \star $ and $ \textsf{\textup{true}\relax}   \ottsym{:}    \textsf{\textup{bool}\relax}  \Rightarrow  \unskip ^ { \ell_{{\mathrm{2}}} }  \! \star $ are values
in the blame calculus and they can be understood as an integer and
Boolean, respectively, tagged with its type.  Being values, they are
passed to the function as they are.  The cast from $ \star $ to
$ \textsf{\textup{int}\relax} $ investigates if the tag of the target is $ \textsf{\textup{int}\relax} $; if it
is, the tag is removed and the untagged integer is passed to $+$;
otherwise, blame is raised with information on which cast has failed.

\subsection{Type Inference for Gradual Typing}

Type inference (a.k.a.\ type reconstruction) for languages with the
dynamic type has been studied.  \citet{DBLP:conf/dls/SiekV08} proposed
a unification-based type inference algorithm for a gradually typed
language with simple types.  \citet{DBLP:conf/popl/GarciaC15} later
proposed a type inference algorithm with a principal type property for
the Implicitly Typed Gradual Language (ITGL) with and without
let-polymorphism.  More recently, \citet{DBLP:conf/esop/XieBO18}
studied an extension of the Odersky--L{\"a}ufer type
system for higher-rank polymorphism~\cite{DBLP:conf/popl/OderskyL96}
with the dynamic type and bidirectional algorithmic typing for it.
Also, \citet{HengleinRehof95FPCA} studied a very close
problem of translation from an untyped functional language to an
ML-like language with coercions~\cite{DBLP:journals/scp/Henglein94}, using a constraint-based
type inference.

The key idea in Garcia and Cimini's work (inherited by Xie et al.) is to infer only
\emph{static} types---that is, types not containing $ \star $---for
where type annotations are omitted.  For example, for
\[
  \ottsym{(}   \lambda  \ottmv{x} .\,  \ottmv{x}  \,  2   \ottsym{)} \, \ottsym{(}   \lambda  \ottmv{y} \!:\!   \textsf{\textup{int}\relax}   .\,  \ottmv{y}   \ottsym{)},
\]
the type inference algorithm outputs the following fully annotated term:
\[
  e = \ottsym{(}   \lambda  \ottmv{x} \!:\!   \textsf{\textup{int}\relax}   \!\rightarrow\!   \textsf{\textup{int}\relax}   .\,  \ottmv{x}  \,  2   \ottsym{)} \, \ottsym{(}   \lambda  \ottmv{y} \!:\!   \textsf{\textup{int}\relax}   .\,  \ottmv{y}   \ottsym{)}.
\]
The idea of inferring only static types is significant for the
principal type property because it excludes terms like
$\ottsym{(}   \lambda  \ottmv{x} \!:\!  \star  \!\rightarrow\!   \textsf{\textup{int}\relax}   .\,  \ottmv{x}  \,  2   \ottsym{)} \, \ottsym{(}   \lambda  \ottmv{y} \!:\!   \textsf{\textup{int}\relax}   .\,  \ottmv{y}   \ottsym{)}$, $\ottsym{(}   \lambda  \ottmv{x} \!:\!   \textsf{\textup{int}\relax}   \!\rightarrow\!  \star  .\,  \ottmv{x}  \,  2   \ottsym{)} \, \ottsym{(}   \lambda  \ottmv{y} \!:\!   \textsf{\textup{int}\relax}   .\,  \ottmv{y}   \ottsym{)}$, and
$\ottsym{(}   \lambda  \ottmv{x} \!:\!  \star  .\,  \ottmv{x}  \,  2   \ottsym{)} \, \ottsym{(}   \lambda  \ottmv{y} \!:\!   \textsf{\textup{int}\relax}   .\,  \ottmv{y}   \ottsym{)}$, which are all well typed in the gradual
type system but not obtained by applying type substitution to
$e$.
Based on this idea, they showed that the ITGL enjoys the principal type property, which means that if there are type
annotations
to make a given term well typed, the type
inference succeeds and its output subsumes all other type annotations that make it well typed---in the sense that
they are obtained by applying some type substitution.

\subsection{Incoherence Problem}
\label{sec:intro-problem}

Unlike ordinary typed $\lambda$-calculi, however, the behavior of a
term depends on concrete types chosen for missing type annotations.
For example, for the following term
\[
  \ottsym{(}   \lambda  \ottmv{x} \!:\!  \star  .\,  \ottmv{x}  \,  2   \ottsym{)} \, \ottsym{(}   \lambda  \ottmv{y} .\,  \ottmv{y}   \ottsym{)}
\]
it is appropriate to recover any static type $\ottnt{T}$ for $\ottmv{y}$ if we
are interested only in obtaining a well-typed term because
$\star  \sim  \ottnt{T}  \!\rightarrow\!  \ottnt{T}$ but the evaluation of the resulting term
significantly differs depending on the choice for $\ottnt{T}$.  To see
this, let's translate $\ottsym{(}   \lambda  \ottmv{x} \!:\!  \star  .\,  \ottmv{x}  \,  2   \ottsym{)} \, \ottsym{(}   \lambda  \ottmv{y} \!:\!  \ottnt{T}  .\,  \ottmv{y}   \ottsym{)}$ (of type $ \star $).
It is translated to
$$\ottsym{(}   \lambda  \ottmv{x} \!:\!  \star  .\,  \ottsym{(}  \ottmv{x}  \ottsym{:}   \star \Rightarrow  \unskip ^ { \ell_{{\mathrm{1}}} }  \! \star  \!\rightarrow\!  \star   \ottsym{)}  \, \ottsym{(}   2   \ottsym{:}    \textsf{\textup{int}\relax}  \Rightarrow  \unskip ^ { \ell_{{\mathrm{2}}} }  \! \star   \ottsym{)}  \ottsym{)} \, \ottsym{(}  \ottsym{(}   \lambda  \ottmv{y} \!:\!  \ottnt{T}  .\,  \ottmv{y}   \ottsym{)}  \ottsym{:}   \ottnt{T}  \!\rightarrow\!  \ottnt{T} \Rightarrow  \unskip ^ { \ell_{{\mathrm{3}}} }  \! \star   \ottsym{)}$$
regardless of $\ottnt{T}$.  If $\ottnt{T} =  \textsf{\textup{int}\relax} $, then
it reduces to value $ 2   \ottsym{:}    \textsf{\textup{int}\relax}  \Rightarrow  \unskip ^ { \ell_{{\mathrm{3}}} }  \! \star $ as follows:
\[
  \begin{array}{ll}
  & \ottsym{(}   \lambda  \ottmv{x} \!:\!  \star  .\,  \ottsym{(}  \ottmv{x}  \ottsym{:}   \star \Rightarrow  \unskip ^ { \ell_{{\mathrm{1}}} }  \! \star  \!\rightarrow\!  \star   \ottsym{)}  \, \ottsym{(}   2   \ottsym{:}    \textsf{\textup{int}\relax}  \Rightarrow  \unskip ^ { \ell_{{\mathrm{2}}} }  \! \star   \ottsym{)}  \ottsym{)} \, \ottsym{(}  \ottsym{(}   \lambda  \ottmv{y} \!:\!   \textsf{\textup{int}\relax}   .\,  \ottmv{y}   \ottsym{)}  \ottsym{:}    \textsf{\textup{int}\relax}   \!\rightarrow\!   \textsf{\textup{int}\relax}  \Rightarrow  \unskip ^ { \ell_{{\mathrm{3}}} }  \! \star   \ottsym{)}\\
   \longmapsto^\ast  & \ottsym{(}  \ottsym{(}   \lambda  \ottmv{y} \!:\!   \textsf{\textup{int}\relax}   .\,  \ottmv{y}   \ottsym{)} \, \ottsym{(}   2   \ottsym{:}    \textsf{\textup{int}\relax}  \Rightarrow  \unskip ^ { \ell_{{\mathrm{2}}} }  \!  \star \Rightarrow  \unskip ^ {  \bar{ \ell_{{\mathrm{3}}} }  }  \!  \textsf{\textup{int}\relax}     \ottsym{)}  \ottsym{)}  \ottsym{:}    \textsf{\textup{int}\relax}  \Rightarrow  \unskip ^ { \ell_{{\mathrm{3}}} }  \! \star  \\
   \longmapsto   & \ottsym{(}  \ottsym{(}   \lambda  \ottmv{y} \!:\!   \textsf{\textup{int}\relax}   .\,  \ottmv{y}   \ottsym{)} \,  2   \ottsym{)}  \ottsym{:}    \textsf{\textup{int}\relax}  \Rightarrow  \unskip ^ { \ell_{{\mathrm{3}}} }  \! \star  \\
   \longmapsto   &  2   \ottsym{:}    \textsf{\textup{int}\relax}  \Rightarrow  \unskip ^ { \ell_{{\mathrm{3}}} }  \! \star 
  \end{array}
\]
but, if $\ottnt{T} =  \textsf{\textup{bool}\relax} $, it reduces to $\textsf{\textup{blame}\relax} \,  \bar{ \ell_{{\mathrm{3}}} } $ as follows:
\[
  \begin{array}{ll}
  & \ottsym{(}   \lambda  \ottmv{x} \!:\!  \star  .\,  \ottsym{(}  \ottmv{x}  \ottsym{:}   \star \Rightarrow  \unskip ^ { \ell_{{\mathrm{1}}} }  \! \star  \!\rightarrow\!  \star   \ottsym{)}  \, \ottsym{(}   2   \ottsym{:}    \textsf{\textup{int}\relax}  \Rightarrow  \unskip ^ { \ell_{{\mathrm{2}}} }  \! \star   \ottsym{)}  \ottsym{)} \, \ottsym{(}  \ottsym{(}   \lambda  \ottmv{y} \!:\!   \textsf{\textup{bool}\relax}   .\,  \ottmv{y}   \ottsym{)}  \ottsym{:}    \textsf{\textup{bool}\relax}   \!\rightarrow\!   \textsf{\textup{bool}\relax}  \Rightarrow  \unskip ^ { \ell_{{\mathrm{3}}} }  \! \star   \ottsym{)}\\
   \longmapsto^\ast  
  &  \ottsym{(}  \ottsym{(}   \lambda  \ottmv{y} \!:\!   \textsf{\textup{bool}\relax}   .\,  \ottmv{y}   \ottsym{)} \,  \graybox{  \ottsym{(}   2   \ottsym{:}    \textsf{\textup{int}\relax}  \Rightarrow  \unskip ^ { \ell_{{\mathrm{2}}} }  \!  \star \Rightarrow  \unskip ^ {  \bar{ \ell_{{\mathrm{3}}} }  }  \!  \textsf{\textup{bool}\relax}     \ottsym{)}  }   \ottsym{)}  \ottsym{:}    \textsf{\textup{bool}\relax}  \Rightarrow  \unskip ^ { \ell_{{\mathrm{3}}} }  \! \star  \\
   \longmapsto  & \textsf{\textup{blame}\relax} \,  \bar{ \ell_{{\mathrm{3}}} } 
  \end{array}
\]
(the shaded subterm is the source of blame and the blame label
$ \bar{ \ell_{{\mathrm{3}}} } $, which is the negation of $\ell_{{\mathrm{3}}}$, means that a functional
cast labeled $\ell_{{\mathrm{3}}}$ has failed due to type mismatch on the argument,
not on the return value).  \citet{DBLP:conf/esop/XieBO18} face the
same problem in a slightly different setting of a higher-rank
polymorphic type system with the dynamic type and point out that their
type system is not
\emph{coherent}~\cite{DBLP:journals/iandc/Breazu-TannenCGS91} in the
sense that the run-time behavior of the same source program depends on
the particular choice of types.
 
\citet{DBLP:conf/popl/GarciaC15} do not clearly discuss how to deal
with this problem.  Given term $\ottsym{(}   \lambda  \ottmv{x} \!:\!  \star  .\,  \ottmv{x}  \,  2   \ottsym{)} \, \ottsym{(}   \lambda  \ottmv{y} .\,  \ottmv{y}   \ottsym{)}$, their type
reconstruction algorithm outputs $\ottsym{(}   \lambda  \ottmv{x} \!:\!  \star  .\,  \ottmv{x}  \,  2   \ottsym{)} \, \ottsym{(}   \lambda  \ottmv{y} \!:\!  \ottmv{Y}  .\,  \ottmv{y}   \ottsym{)}$, where
$\ottmv{Y}$ is a type variable that is left undecided and they suggest
those undecided variables be replaced by type parameters but their
semantics is left unclear.  One possibility would be to understand
type parameters as distinguished base types (without constants) but it
would also make the execution fail because $ \textsf{\textup{int}\relax}  \neq \ottmv{Y}$.
The problem here is that the only choice for $\ottnt{T}$ that makes execution
successful is $ \textsf{\textup{int}\relax} $ but it is hard to see statically.

An alternative, which is close to what \citet{HengleinRehof95FPCA} do and
also what \citet{DBLP:conf/esop/XieBO18} suggest,
is to substitute the dynamic type $ \star $ for these undecided
type variables.  If we replace $\ottmv{Y}$ with $ \star $ in the example
above, we will get
\[
  \begin{array}{ll}
  & \ottsym{(}   \lambda  \ottmv{x} \!:\!  \star  .\,  \ottsym{(}  \ottmv{x}  \ottsym{:}   \star \Rightarrow  \unskip ^ { \ell_{{\mathrm{1}}} }  \! \star  \!\rightarrow\!  \star   \ottsym{)}  \, \ottsym{(}   2   \ottsym{:}    \textsf{\textup{int}\relax}  \Rightarrow  \unskip ^ { \ell_{{\mathrm{2}}} }  \! \star   \ottsym{)}  \ottsym{)} \, \ottsym{(}  \ottsym{(}   \lambda  \ottmv{y} \!:\!  \star  .\,  \ottmv{y}   \ottsym{)}  \ottsym{:}   \star  \!\rightarrow\!  \star \Rightarrow  \unskip ^ { \ell_{{\mathrm{3}}} }  \! \star   \ottsym{)}\\
   \longmapsto^\ast  & \ottsym{(}   \lambda  \ottmv{y} \!:\!  \star  .\,  \ottmv{y}   \ottsym{)} \, \ottsym{(}   2   \ottsym{:}    \textsf{\textup{int}\relax}  \Rightarrow  \unskip ^ { \ell_{{\mathrm{2}}} }  \! \star   \ottsym{)} \\
   \longmapsto   &  2   \ottsym{:}    \textsf{\textup{int}\relax}  \Rightarrow  \unskip ^ { \ell_{{\mathrm{2}}} }  \! \star \ .
  \end{array}
\]
As far as this example is concerned, substitution of $ \star $ sounds
like a good idea as, if there is static-type substitution that
makes execution successful (i.e., terminate at a value), substitution
of $ \star $ is expected to make execution also successful---this is
part of the gradual guarantee
property~\cite{DBLP:conf/snapl/SiekVCB15}.

However, substitution of $ \star $ can make execution successful, even
when there is \emph{no static type} that makes execution successful.
For example, let's consider the ITGL term
\[
 \ottsym{(}   \lambda  \ottmv{x} \!:\!  \star  \!\rightarrow\!  \star  \!\rightarrow\!  \star  .\,  \ottmv{x}  \,  2  \,  \textsf{\textup{true}\relax}   \ottsym{)} \, \ottsym{(}   \lambda  \ottmv{y_{{\mathrm{1}}}} .\,   \lambda  \ottmv{y_{{\mathrm{2}}}} .\,   \textsf{\textup{ if }\relax}  \ottnt{b}  \textsf{\textup{ then }\relax}  \ottmv{y_{{\mathrm{1}}}}  \textsf{\textup{ else }\relax}  \ottmv{y_{{\mathrm{2}}}}     \ottsym{)}
\]
where $ \textsf{\textup{ if }\relax}  e_{{\mathrm{1}}}  \textsf{\textup{ then }\relax}  e_{{\mathrm{2}}}  \textsf{\textup{ else }\relax}  e_{{\mathrm{3}}} $
requires $e_{{\mathrm{2}}}$ and $e_{{\mathrm{3}}}$ to have the same type and $\ottnt{b}$ is a
(statically typed) Boolean term that refers to neither $\ottmv{y_{{\mathrm{1}}}}$ nor $\ottmv{y_{{\mathrm{2}}}}$.
For this term, the type inference algorithm outputs
$
 \ottsym{(}   \lambda  \ottmv{x} \!:\!  \star  \!\rightarrow\!  \star  \!\rightarrow\!  \star  .\,  \ottmv{x}  \,  2  \,  \textsf{\textup{true}\relax}   \ottsym{)} \, \ottsym{(}   \lambda  \ottmv{y_{{\mathrm{1}}}} \!:\!  \ottmv{Y}  .\,   \lambda  \ottmv{y_{{\mathrm{2}}}} \!:\!  \ottmv{Y}  .\,   \textsf{\textup{ if }\relax}  \ottnt{b}  \textsf{\textup{ then }\relax}  \ottmv{y_{{\mathrm{1}}}}  \textsf{\textup{ else }\relax}  \ottmv{y_{{\mathrm{2}}}}     \ottsym{)}
$
and, if we substitute $ \star $ for $\ottmv{Y}$, then it executes as
follows:
\[
  \begin{array}{cl}
 & \ottsym{(}   \lambda  \ottmv{x} \!:\!  \star  \!\rightarrow\!  \star  \!\rightarrow\!  \star  .\,  \ottmv{x}  \,  2  \,  \textsf{\textup{true}\relax}   \ottsym{)} \, \ottsym{(}   \lambda  \ottmv{y_{{\mathrm{1}}}} \!:\!  \star  .\,   \lambda  \ottmv{y_{{\mathrm{2}}}} \!:\!  \star  .\,   \textsf{\textup{ if }\relax}  \ottnt{b}  \textsf{\textup{ then }\relax}  \ottmv{y_{{\mathrm{1}}}}  \textsf{\textup{ else }\relax}  \ottmv{y_{{\mathrm{2}}}}     \ottsym{)} \\  \rightsquigarrow  &
   \ottsym{(}   \lambda  \ottmv{x} \!:\!  \star  \!\rightarrow\!  \star  \!\rightarrow\!  \star  .\,  \ottmv{x}  \, \ottsym{(}   2   \ottsym{:}    \textsf{\textup{int}\relax}  \Rightarrow  \unskip ^ { \ell_{{\mathrm{1}}} }  \! \star   \ottsym{)} \, \ottsym{(}   \textsf{\textup{true}\relax}   \ottsym{:}    \textsf{\textup{bool}\relax}  \Rightarrow  \unskip ^ { \ell_{{\mathrm{2}}} }  \! \star   \ottsym{)}  \ottsym{)} \, \ottsym{(}   \lambda  \ottmv{y_{{\mathrm{1}}}} \!:\!  \star  .\,   \lambda  \ottmv{y_{{\mathrm{2}}}} \!:\!  \star  .\,   \textsf{\textup{ if }\relax}  \ottnt{b}  \textsf{\textup{ then }\relax}  \ottmv{y_{{\mathrm{1}}}}  \textsf{\textup{ else }\relax}  \ottmv{y_{{\mathrm{2}}}}     \ottsym{)} \\  \longmapsto^\ast  &
    \textsf{\textup{ if }\relax}  \ottnt{b}  \textsf{\textup{ then }\relax}  \ottsym{(}   2   \ottsym{:}    \textsf{\textup{int}\relax}  \Rightarrow  \unskip ^ { \ell_{{\mathrm{1}}} }  \! \star   \ottsym{)}  \textsf{\textup{ else }\relax}  \ottsym{(}   \textsf{\textup{true}\relax}   \ottsym{:}    \textsf{\textup{bool}\relax}  \Rightarrow  \unskip ^ { \ell_{{\mathrm{2}}} }  \! \star   \ottsym{)}  \\
     \longmapsto^\ast  &
   \left\{
    \begin{array}{l@{\qquad}l}
       2   \ottsym{:}    \textsf{\textup{int}\relax}  \Rightarrow  \unskip ^ { \ell_{{\mathrm{1}}} }  \! \star  & \text{(if $\ottnt{b}  \longmapsto^\ast   \textsf{\textup{true}\relax} $)} \\
       \textsf{\textup{true}\relax}   \ottsym{:}    \textsf{\textup{bool}\relax}  \Rightarrow  \unskip ^ { \ell_{{\mathrm{2}}} }  \! \star  & \text{(if $\ottnt{b}  \longmapsto^\ast   \textsf{\textup{false}\relax} $)} \\
    \end{array}
    \right.
  \end{array}
\]
but, if we substitute a static type $\ottnt{T}$ for $\ottmv{Y}$, then it results in blame
due to a failure of either of the shaded casts:
\[
  \begin{array}{cl}
 & \ottsym{(}   \lambda  \ottmv{x} \!:\!  \star  \!\rightarrow\!  \star  \!\rightarrow\!  \star  .\,  \ottmv{x}  \,  2  \,  \textsf{\textup{true}\relax}   \ottsym{)} \, \ottsym{(}   \lambda  \ottmv{y_{{\mathrm{1}}}} \!:\!  \ottnt{T}  .\,   \lambda  \ottmv{y_{{\mathrm{2}}}} \!:\!  \ottnt{T}  .\,   \textsf{\textup{ if }\relax}  \ottnt{b}  \textsf{\textup{ then }\relax}  \ottmv{y_{{\mathrm{1}}}}  \textsf{\textup{ else }\relax}  \ottmv{y_{{\mathrm{2}}}}     \ottsym{)} \\  \rightsquigarrow  &
   ( \lambda  \ottmv{x} \!:\!  \star  \!\rightarrow\!  \star  \!\rightarrow\!  \star  .\,  \ottmv{x} \, \ottsym{(}   2   \ottsym{:}    \textsf{\textup{int}\relax}  \Rightarrow  \unskip ^ { \ell_{{\mathrm{1}}} }  \! \star   \ottsym{)} \, \ottsym{(}   \textsf{\textup{true}\relax}   \ottsym{:}    \textsf{\textup{bool}\relax}  \Rightarrow  \unskip ^ { \ell_{{\mathrm{2}}} }  \! \star   \ottsym{)} )
    \\ & \qquad \ottsym{(}  \ottsym{(}   \lambda  \ottmv{y_{{\mathrm{1}}}} \!:\!  \ottnt{T}  .\,   \lambda  \ottmv{y_{{\mathrm{2}}}} \!:\!  \ottnt{T}  .\,   \textsf{\textup{ if }\relax}  \ottnt{b}  \textsf{\textup{ then }\relax}  \ottmv{y_{{\mathrm{1}}}}  \textsf{\textup{ else }\relax}  \ottmv{y_{{\mathrm{2}}}}     \ottsym{)}  \ottsym{:}   \ottnt{T}  \!\rightarrow\!  \ottnt{T}  \!\rightarrow\!  \ottnt{T} \Rightarrow  \unskip ^ { \ell_{{\mathrm{3}}} }  \! \star  \!\rightarrow\!  \star  \!\rightarrow\!  \star   \ottsym{)} \\
     \longmapsto^\ast  & \ottsym{(}  \ottsym{(}  \ottsym{(}   \lambda  \ottmv{y_{{\mathrm{1}}}} \!:\!  \ottnt{T}  .\,   \lambda  \ottmv{y_{{\mathrm{2}}}} \!:\!  \ottnt{T}  .\,   \textsf{\textup{ if }\relax}  \ottnt{b}  \textsf{\textup{ then }\relax}  \ottmv{y_{{\mathrm{1}}}}  \textsf{\textup{ else }\relax}  \ottmv{y_{{\mathrm{2}}}}     \ottsym{)} \,  \graybox{  \ottsym{(}   2   \ottsym{:}    \textsf{\textup{int}\relax}  \Rightarrow  \unskip ^ { \ell_{{\mathrm{1}}} }  \!  \star \Rightarrow  \unskip ^ {  \bar{ \ell_{{\mathrm{3}}} }  }  \! \ottnt{T}    \ottsym{)}  }   \ottsym{)}  \ottsym{:}   \ottnt{T}  \!\rightarrow\!  \ottnt{T} \Rightarrow  \unskip ^ { \ell_{{\mathrm{3}}} }  \! \star  \!\rightarrow\!  \star   \ottsym{)}
    \\ & \qquad \ottsym{(}   \textsf{\textup{true}\relax}   \ottsym{:}    \textsf{\textup{bool}\relax}  \Rightarrow  \unskip ^ { \ell_{{\mathrm{2}}} }  \! \star   \ottsym{)} \\
     \longmapsto^\ast  &
    \left\{\begin{array}{ll}
             \ottsym{(}  \ottsym{(}  \ottsym{(}   \lambda  \ottmv{y_{{\mathrm{2}}}} \!:\!  \ottnt{T}  .\,   \textsf{\textup{ if }\relax}  \ottnt{b}  \textsf{\textup{ then }\relax}   2   \textsf{\textup{ else }\relax}  \ottmv{y_{{\mathrm{2}}}}    \ottsym{)} \,  \graybox{  \ottsym{(}   \textsf{\textup{true}\relax}   \ottsym{:}    \textsf{\textup{bool}\relax}  \Rightarrow  \unskip ^ { \ell_{{\mathrm{2}}} }  \!  \star \Rightarrow  \unskip ^ {  \bar{ \ell_{{\mathrm{3}}} }  }  \! \ottnt{T}    \ottsym{)}  }   \ottsym{)}  \ottsym{:}   \ottnt{T} \Rightarrow  \unskip ^ { \ell_{{\mathrm{3}}} }  \! \star   \ottsym{)} & \multirow{2}{*}{\text{(if $\ottnt{T} =  \textsf{\textup{int}\relax} $)}} \\
              \hfill  \longmapsto  \textsf{\textup{blame}\relax} \,  \bar{ \ell_{{\mathrm{3}}} }  \\
             \textsf{\textup{blame}\relax} \,  \bar{ \ell_{{\mathrm{3}}} }  & \text{(otherwise)}
           \end{array}\right.
  \end{array}
\]
Substitution of $ \star $ is not only against the main idea that only
static types are inferred for missing type annotations but also not
very desirable from the viewpoint of program evolution and early bug
detection: Substitution of $ \star $ inadvertently hides the fact that
\emph{there is no longer a hope for a given term (in this case
  $\ottsym{(}   \lambda  \ottmv{x} \!:\!  \star  \!\rightarrow\!  \star  \!\rightarrow\!  \star  .\,  \ottmv{x}  \,  2  \,  \textsf{\textup{true}\relax}   \ottsym{)} \, \ottsym{(}   \lambda  \ottmv{y_{{\mathrm{1}}}} .\,   \lambda  \ottmv{y_{{\mathrm{2}}}} .\,   \textsf{\textup{ if }\relax}  \ottnt{b}  \textsf{\textup{ then }\relax}  \ottmv{y_{{\mathrm{1}}}}  \textsf{\textup{ else }\relax}  \ottmv{y_{{\mathrm{2}}}}     \ottsym{)}$) to be
  able to evolve to a well-typed term by replacing occurrences of
  $ \star $ with static types}.
This concealment of potential errors would be disappointing for
languages where it is hoped to detect as early as possible programs
that the underlying static type system does not accept after
evolution.

\subsection{Our Work: Dynamic Type Inference}

In this work, we propose and formalize a new blame calculus
\(\lambdaRTI\) that avoids both blame caused by wrong choice of static
types and the problem caused by substituting $ \star $.  A main idea is
to allow a $\lambdaRTI$ term to contain type variables (that represent
undecided ones during static---namely, usual compile-time---type inference) and defer instantiation of
these type variables to run time.  Specifically, we introduce
\emph{dynamic type inference} (DTI) into the semantics of
\(\lambdaRTI\) so that type variables are instantiated along
reduction.   For example, the term
\[
  \ottsym{(}   \lambda  \ottmv{x} \!:\!  \star  .\,  \ottsym{(}  \ottmv{x}  \ottsym{:}   \star \Rightarrow  \unskip ^ { \ell_{{\mathrm{1}}} }  \! \star  \!\rightarrow\!  \star   \ottsym{)}  \, \ottsym{(}   2   \ottsym{:}    \textsf{\textup{int}\relax}  \Rightarrow  \unskip ^ { \ell_{{\mathrm{2}}} }  \! \star   \ottsym{)}  \ottsym{)} \, \ottsym{(}  \ottsym{(}   \lambda  \ottmv{y} \!:\!  \ottmv{Y}  .\,  \ottmv{y}   \ottsym{)}  \ottsym{:}   \ottmv{Y}  \!\rightarrow\!  \ottmv{Y} \Rightarrow  \unskip ^ { \ell_{{\mathrm{3}}} }  \! \star   \ottsym{)}
\]
(obtained from $\ottsym{(}   \lambda  \ottmv{x} \!:\!  \star  .\,  \ottmv{x}  \,  2   \ottsym{)} \, \ottsym{(}   \lambda  \ottmv{y} .\,  \ottmv{y}   \ottsym{)}$) reduces to
\[
  \ottsym{(}  \ottsym{(}   \lambda  \ottmv{y} \!:\!  \ottmv{Y}  .\,  \ottmv{y}   \ottsym{)} \, \ottsym{(}   2   \ottsym{:}    \textsf{\textup{int}\relax}  \Rightarrow  \unskip ^ { \ell_{{\mathrm{2}}} }  \!  \star \Rightarrow  \unskip ^ {  \bar{ \ell_{{\mathrm{3}}} }  }  \! \ottmv{Y}    \ottsym{)}  \ottsym{)}  \ottsym{:}   \ottmv{Y} \Rightarrow  \unskip ^ { \ell_{{\mathrm{3}}} }  \! \star 
\]
and then, instead of raising blame, it reduces to
\[
  \ottsym{(}  \ottsym{(}   \lambda  \ottmv{y} \!:\!   \graybox{   \textsf{\textup{int}\relax}   }   .\,  \ottmv{y}   \ottsym{)} \,  2   \ottsym{)}  \ottsym{:}    \graybox{   \textsf{\textup{int}\relax}   }  \Rightarrow  \unskip ^ { \ell_{{\mathrm{3}}} }  \! \star 
\]
by instantiating $\ottmv{Y}$ with $ \textsf{\textup{int}\relax} $.  (Shaded parts denote where
instantiation took place.)  In general, when a tagged value
$w  \ottsym{:}   \iota \Rightarrow  \unskip ^ { \ell }  \! \star $ (where $\iota$ is a base type) meets a cast to
a type variable $\ottmv{Y}$, the cast succeeds and $\ottmv{Y}$ is instantiated
to $\iota$.  Similarly, if a tagged function value is cast to
$\ottmv{Y}$, two fresh type variables $\ottmv{Y_{{\mathrm{1}}}}$ and $\ottmv{Y_{{\mathrm{2}}}}$ are generated
and $\ottmv{Y}$ is instantiated to $\ottmv{Y_{{\mathrm{1}}}}  \!\rightarrow\!  \ottmv{Y_{{\mathrm{2}}}}$, expecting further
instantiation later.

Unlike the semantics based on substitution of $ \star $, the DTI-based
semantics raises blame if a type variable appears in conflicting
contexts.  For example, the term
\[
  \begin{array}{l}
   ( \lambda  \ottmv{x} \!:\!  \star  \!\rightarrow\!  \star  \!\rightarrow\!  \star  .\,  \ottmv{x} \, \ottsym{(}   2   \ottsym{:}    \textsf{\textup{int}\relax}  \Rightarrow  \unskip ^ { \ell_{{\mathrm{1}}} }  \! \star   \ottsym{)} \, \ottsym{(}   \textsf{\textup{true}\relax}   \ottsym{:}    \textsf{\textup{bool}\relax}  \Rightarrow  \unskip ^ { \ell_{{\mathrm{2}}} }  \! \star   \ottsym{)} )
    \\ \quad \ottsym{(}  \ottsym{(}   \lambda  \ottmv{y_{{\mathrm{1}}}} \!:\!  \ottmv{Y}  .\,   \lambda  \ottmv{y_{{\mathrm{2}}}} \!:\!  \ottmv{Y}  .\,   \textsf{\textup{ if }\relax}  \ottnt{b}  \textsf{\textup{ then }\relax}  \ottmv{y_{{\mathrm{1}}}}  \textsf{\textup{ else }\relax}  \ottmv{y_{{\mathrm{2}}}}     \ottsym{)}  \ottsym{:}   \ottmv{Y}  \!\rightarrow\!  \ottmv{Y}  \!\rightarrow\!  \ottmv{Y} \Rightarrow  \unskip ^ { \ell_{{\mathrm{3}}} }  \! \star  \!\rightarrow\!  \star  \!\rightarrow\!  \star   \ottsym{)}
  \end{array}
\]
(obtained from
$\ottsym{(}   \lambda  \ottmv{x} \!:\!  \star  \!\rightarrow\!  \star  \!\rightarrow\!  \star  .\,  \ottmv{x}  \,  2  \,  \textsf{\textup{true}\relax}   \ottsym{)} \, \ottsym{(}   \lambda  \ottmv{y_{{\mathrm{1}}}} \!:\!  \ottmv{Y}  .\,   \lambda  \ottmv{y_{{\mathrm{2}}}} \!:\!  \ottmv{Y}  .\,   \textsf{\textup{ if }\relax}  \ottnt{b}  \textsf{\textup{ then }\relax}  \ottmv{y_{{\mathrm{1}}}}  \textsf{\textup{ else }\relax}  \ottmv{y_{{\mathrm{2}}}}     \ottsym{)}$)
reduces in a few steps to
\[
  \ottsym{(}  \ottsym{(}  \ottsym{(}   \lambda  \ottmv{y_{{\mathrm{1}}}} \!:\!  \ottmv{Y}  .\,   \lambda  \ottmv{y_{{\mathrm{2}}}} \!:\!  \ottmv{Y}  .\,   \textsf{\textup{ if }\relax}  \ottnt{b}  \textsf{\textup{ then }\relax}  \ottmv{y_{{\mathrm{1}}}}  \textsf{\textup{ else }\relax}  \ottmv{y_{{\mathrm{2}}}}     \ottsym{)} \,  \graybox{  \ottsym{(}   2   \ottsym{:}    \textsf{\textup{int}\relax}  \Rightarrow  \unskip ^ { \ell_{{\mathrm{1}}} }  \!  \star \Rightarrow  \unskip ^ {  \bar{ \ell_{{\mathrm{3}}} }  }  \! \ottmv{Y}    \ottsym{)}  }   \ottsym{)}  \ottsym{:}   \ottmv{Y}  \!\rightarrow\!  \ottmv{Y} \Rightarrow  \unskip ^ { \ell_{{\mathrm{3}}} }  \! \star  \!\rightarrow\!  \star   \ottsym{)} \cdots
\]
which corresponds to $\ottmv{x} \,  2 $ in the source term.  The cast on
$ 2 $ succeeds and this term reduces to
$\ottsym{(}  \ottsym{(}   \lambda  \ottmv{y_{{\mathrm{2}}}} \!:\!   \graybox{   \textsf{\textup{int}\relax}   }   .\,   \textsf{\textup{ if }\relax}  \ottnt{b}  \textsf{\textup{ then }\relax}   2   \textsf{\textup{ else }\relax}  \ottmv{y_{{\mathrm{2}}}}    \ottsym{)}  \ottsym{:}    \graybox{   \textsf{\textup{int}\relax}   \!\rightarrow\!   \textsf{\textup{int}\relax}   }  \Rightarrow  \unskip ^ { \ell_{{\mathrm{3}}} }  \! \star  \!\rightarrow\!  \star   \ottsym{)} \, \ottsym{(}   \textsf{\textup{true}\relax}   \ottsym{:}    \textsf{\textup{bool}\relax}  \Rightarrow  \unskip ^ { \ell_{{\mathrm{2}}} }  \! \star   \ottsym{)}$ by
instantiating $\ottmv{Y}$ with $ \textsf{\textup{int}\relax} $ (where the shaded parts are the results of instantiation).  Then, in the next step, reduction
reaches the application of $\ottmv{x} \,  2 $ to $ \textsf{\textup{true}\relax} $ (in the source term):
\[
  \ottsym{(}  \ottsym{(}  \ottsym{(}   \lambda  \ottmv{y_{{\mathrm{2}}}} \!:\!   \textsf{\textup{int}\relax}   .\,   \textsf{\textup{ if }\relax}  \ottnt{b}  \textsf{\textup{ then }\relax}   2   \textsf{\textup{ else }\relax}  \ottmv{y_{{\mathrm{2}}}}    \ottsym{)} \,  \graybox{  \ottsym{(}   \textsf{\textup{true}\relax}   \ottsym{:}    \textsf{\textup{bool}\relax}  \Rightarrow  \unskip ^ { \ell_{{\mathrm{2}}} }  \!  \star \Rightarrow  \unskip ^ {  \bar{ \ell_{{\mathrm{3}}} }  }  \!  \textsf{\textup{int}\relax}     \ottsym{)}  }   \ottsym{)}  \ottsym{:}    \textsf{\textup{int}\relax}  \Rightarrow  \unskip ^ { \ell_{{\mathrm{3}}} }  \! \star   \ottsym{)}.
\]
However, the shaded cast on $ \textsf{\textup{true}\relax} $ fails.  In short, $\ottmv{Y}$ is
required to be both $ \textsf{\textup{int}\relax} $ and $ \textsf{\textup{bool}\relax} $ at the same time, which
is impossible.  As this example shows, DTI is not as permissive as the
semantics based on substitution of $ \star $ and detects a type error
early.

DTI is \emph{sound} and \emph{complete}.  Intuitively, soundness means
that, if a program evaluates to a value under the DTI semantics, then
the program obtained by applying---in advance---the type instantiation
that DTI found results in the same value and if a program results in
blame under the DTI semantics, then all type substitutions make the
program result also in blame.  Completeness means that, if some type
substitution makes the program evaluate to a value, then execution
with DTI also results in a related value.  Soundness also means that
the semantics is not too permissive: it is not the case that a program
evaluates to a value under the DTI semantics but no type substitution
makes the program evaluate to a value.  The semantics based on
substituting $ \star $ is complete but not sound; the semantics based on
``undecided type variables as base types'' as in
\citet{DBLP:conf/popl/GarciaC15} is neither sound nor complete
(because it just raises blame too often).

We equip \(\lambdaRTI\) with ML-style let-polymorphism~\cite{Milner78JCSS}.
Actually, Garcia and Cimini have already proposed two ways to implement the
ITGL with let-polymorphism: by translating it to the Polymorphic Blame
Calculus~\cite{DBLP:conf/popl/AhmedFSW11,DBLP:journals/pacmpl/AhmedJSW17}
or by expanding $ \textsf{\textup{let}\relax} $ before translating it to the (monomorphic) blame
calculus.  However, they have left a detailed comparison of the two to
future work.  Our semantics is very close to the latter, although we
do not statically expand definitions by $ \textsf{\textup{let}\relax} $.  Perhaps surprisingly,
the semantics is not quite parametric; we argue that translation to
the Polymorphic Blame Calculus, which dynamically enforces
parametricity~\cite{DBLP:conf/ifip/Reynolds83,DBLP:journals/pacmpl/AhmedJSW17},
has an undesirable consequence and our semantics (which is close to the one based on expanding $ \textsf{\textup{let}\relax} $)
is better suited for languages in which type abstraction
and application are implicit.

Other than soundness and completeness of DTI, we also study the gradual guarantee
property~\cite{DBLP:conf/snapl/SiekVCB15} for the ITGL.  The gradual
guarantee formalizes an informal expectation for gradual typing
systems that adding more static types to a program only exposes type
errors---whether they are static or dynamic---but should not change
the behavior of the program otherwise.
To deal with the ITGL, where bound variables come with optional type
annotations, we extend the notion of ``more static types'' (formalized
as \emph{precision} relations $ \sqsubseteq $ over types and terms) so
that an omitted type annotation is more precise than the annotation
with $ \star $ but less precise than an annotation with a static type.
So, for example,
\[
  \ottsym{(}   \lambda  \ottmv{x} \!:\!   \textsf{\textup{int}\relax}   \!\rightarrow\!   \textsf{\textup{int}\relax}   .\,  \ottmv{x}  \,  2   \ottsym{)} \, \ottsym{(}   \lambda  \ottmv{y} \!:\!   \textsf{\textup{int}\relax}   .\,  \ottmv{y}   \ottsym{)}  \sqsubseteq  
  \ottsym{(}   \lambda  \ottmv{x} .\,   2    \ottsym{)} \, \ottsym{(}   \lambda  \ottmv{y} \!:\!   \textsf{\textup{int}\relax}   .\,  \ottmv{y}   \ottsym{)}  \sqsubseteq 
  \ottsym{(}   \lambda  \ottmv{x} \!:\!  \star  .\,   2    \ottsym{)} \, \ottsym{(}   \lambda  \ottmv{y} \!:\!  \star  .\,  \ottmv{y}   \ottsym{)}.
\]
Intuitively, omitted type annotations are considered (fresh) type
variables, which are less specific than concrete types but they are
more precise than $ \star $ because they range only over static types.
We prove the gradual guarantee for the ITGL.  To our knowledge, the
gradual guarantee is proved for a language with let-polymorphism for
the first time.

Finally, we have implemented an interpreter of the ITGL, including an
implementation of Garcia and Cimini's type inference algorithm, a
translator to and an evaluator of \(\lambdaRTI\), in OCaml.  It
supports integer, Boolean, and unit types as base types, standard
arithmetic, comparison, and Boolean operators, conditional
expressions, and recursive definitions.  The source code is available
at \url{https://github.com/ymyzk/lambda-dti/}.

\paragraph{Contributions.}

Our contributions are summarized as follows:
\begin{itemize}
\item We propose DTI as a basis for new semantics of (an intermediate language for) the ITGL, an implicitly typed language with
   a gradual type system and Hindley--Milner polymorphism;
  \item We define a blame calculus \(\lambdaRTI\) with its syntax, type system, and operational semantics with DTI;
  \item We prove properties of \(\lambdaRTI\), including type safety,
    soundness and completeness of DTI, and the gradual guarantee;
  \item We also prove the gradual guarantee for the ITGL; and
  \item We have implemented an interpreter of the ITGL.
\end{itemize}

\paragraph{The organization of the paper.}
We define \(\lambdaRTI\) in Section~\ref{sec:language} and state its
basic properties, including type safety and conservative extension, in
Section~\ref{sec:properties}.  Then, we show soundness and
completeness of DTI in Section~\ref{sec:soundness-completeness} and
the gradual guarantee in Section~\ref{sec:gradual-guarantee}.
Finally, we discuss related work in Section~\ref{sec:related_work} and
conclude in Section~\ref{sec:conclusion}.  %
\iffull
Proofs of the stated properties are given in Appendix.
\else
All the stated properties have been proved but proofs are mostly omitted; see
the full version at \url{http://arxiv.org/abs/1810.12619}.
\fi

\section{$\lambdaRTI$: A Blame Calculus with Dynamic Type Inference}
\label{sec:language}

In this section, we develop a new blame calculus $\lambdaRTI$ with
dynamic type inference (DTI).  We start with a simply typed fragment and add
let-polymorphism in Section~\ref{sec:let_polymorphism}.  The core of the $\lambdaRTI$ calculus is
based on the calculus by \citet{DBLP:conf/snapl/SiekVCB15},
which is a simplified version of the blame calculus by
\citet{DBLP:conf/esop/WadlerF09} without refinement types.  We
augment its type system with type variables in a fairly
straightforward manner and operational semantics as described in the
last section.

Our blame calculus $\lambdaRTI$ is designed to be used as an
intermediate language to give semantics of the ITGL by
\citet{DBLP:conf/popl/GarciaC15}.  We will discuss the ITGL and
how ITGL programs are translated to $\lambdaRTI$ programs in
Section~\ref{sec:gradual-guarantee} in more detail, but as far as
the simply typed fragment is concerned it is very similar
to previous work~\cite{DBLP:conf/snapl/SiekVCB15,DBLP:conf/popl/GarciaC15}.

\subsection{Static Semantics}
We show the syntax of $\lambdaRTI$ in Figure~\ref{fig:def_syntax}.  The syntax
of the calculus extends that of the simply typed lambda calculus with the
dynamic type, casts, and type variables.

\begin{figure}[tb]
  {\small
    \input{figures/def_syntax}}
  \caption{Syntax of $\lambdaRTI$.}
  \label{fig:def_syntax}
\end{figure}

Gradual types, denoted by $\ottnt{U}$, consist of base types $\iota$
(such as $ \textsf{\textup{int}\relax} $ and $ \textsf{\textup{bool}\relax} $), type variables $\ottmv{X}$, the
dynamic type $ \star $, and function types $\ottnt{U}  \!\rightarrow\!  \ottnt{U}$.  Static types,
denoted by $\ottnt{T}$, are the subset of gradual types without the
dynamic type.  Ground types, which are used as tags to inject values
into the dynamic type, contain base types $\iota$ and the function
type $\star  \!\rightarrow\!  \star$.  We emphasize that type variables are \emph{not} in
ground types, because no value inhabits them (as we show in the canonical forms lemma (Lemma~\ref{lem:canonical_forms})) and we
do not need to use a type variable as a tag to inject values.

Terms, denoted by $\ottnt{f}$, consist of variables $\ottmv{x}$, constants $\ottnt{c}$,
primitive operations $\mathit{op} \, \ottsym{(}  \ottnt{f}  \ottsym{,}  \ottnt{f}  \ottsym{)}$, lambda abstractions $ \lambda  \ottmv{x} \!:\!  \ottnt{U}  .\,  \ottnt{f} $ (which bind $\ottmv{x}$ in $\ottnt{f}$),
applications $\ottnt{f} \, \ottnt{f}$, casts $\ottnt{f}  \ottsym{:}   \ottnt{U} \Rightarrow  \unskip ^ { \ell }  \! \ottnt{U} $, and blame $\textsf{\textup{blame}\relax} \, \ell$.  Since
this is an intermediate language, variables in abstractions are explicitly typed.  Casts
$\ottnt{f}  \ottsym{:}   \ottnt{U_{{\mathrm{1}}}} \Rightarrow  \unskip ^ { \ell }  \! \ottnt{U_{{\mathrm{2}}}} $ from $\ottnt{U_{{\mathrm{1}}}}$ to $\ottnt{U_{{\mathrm{2}}}}$ are inserted when
translating a term in the ITGL, and used for checking whether $\ottnt{f}$
of type $\ottnt{U_{{\mathrm{1}}}}$ can behave as type $\ottnt{U_{{\mathrm{2}}}}$ at run time.  Casts are annotated
also with blame labels, denoted by $\ell$, to indicate which cast has
failed; blame $\textsf{\textup{blame}\relax} \, \ell$ is used to denote a run-time failure of a
cast with $\ell$.  They have
\emph{polarity} to indicate which side of a cast to be
blamed~\cite{DBLP:conf/icfp/FindlerF02}.  For each blame label, there
is a negated blame label $ \bar{ \ell } $, which denotes the opposite side to $\ell$, and $ \bar{  \bar{ \ell }  }  = \ell$.  As we did in the introduction, we
often abbreviate a sequence of casts $\ottsym{(}  \ottnt{f}  \ottsym{:}   \ottnt{U_{{\mathrm{1}}}} \Rightarrow  \unskip ^ { \ell_{{\mathrm{1}}} }  \! \ottnt{U_{{\mathrm{2}}}}   \ottsym{)}  \ottsym{:}   \ottnt{U_{{\mathrm{2}}}} \Rightarrow  \unskip ^ { \ell_{{\mathrm{2}}} }  \! \ottnt{U_{{\mathrm{3}}}} $
to $\ottnt{f}  \ottsym{:}   \ottnt{U_{{\mathrm{1}}}} \Rightarrow  \unskip ^ { \ell_{{\mathrm{1}}} }  \!  \ottnt{U_{{\mathrm{2}}}} \Rightarrow  \unskip ^ { \ell_{{\mathrm{2}}} }  \! \ottnt{U_{{\mathrm{3}}}}  $.

Values, denoted by $w$, consist of constants, lambda abstractions, wrapped
functions, and injections.  A wrapped function is a function value
enclosed in the cast between function types.  Results, denoted by
$\ottnt{r}$, are values and blame.  Evaluation contexts, denoted by
$\ottnt{E}$, are standard and they mean that a term is evaluated from left
to right, under call-by-value.

We show the static semantics of $\lambdaRTI$ in Figure~\ref{fig:def_static}.
It consists of type consistency and typing.

Type consistency rules define the type consistency relation $\ottnt{U}  \sim  \ottnt{U'}$, a key
notion in gradual typing, over gradual types.  Intuitively, $\ottnt{U}  \sim  \ottnt{U'}$
means that it is possible for a cast from $\ottnt{U}$ to $\ottnt{U'}$ to succeed.
The rules
\rnp{C\_Base} and \rnp{C\_TyVar} mean that a base type and a type
variable, respectively, is consistent with itself.  The rules
\rnp{C\_DynL} and \rnp{C\_DynR} mean that all gradual types are
consistent with the dynamic type.  The rule \rnp{C\_Arrow} means that
two function types are consistent if their domain types are consistent
and so are their range types.  The type consistency relation is
reflexive and symmetric but not transitive.


Typing rules of the $\lambdaRTI$ extend those of the simply typed
lambda calculus.  The rules \rnp{T\_Var}, \rnp{T\_Const}, \rnp{T\_Op},
\rnp{T\_Abs}, and \rnp{T\_App} are standard.  $ \mathit{ty} ( \ottnt{c} ) $ used in \rnp{T\_Const}
assigns a base type to each constant $\ottnt{c}$, and $ \mathit{ty} ( \mathit{op} ) $ used in \rnp{T\_Op}
assigns a first-order static type without type variables to each operator
$ \mathit{op} $.  The rule \rnp{T\_Cast} allows a term to be cast to a consistent
type.

\begin{figure}[tb]
  {\small
  \input{figures/def_static}}
  \caption{Static semantics of $\lambdaRTI$.}
  \label{fig:def_static}
\end{figure}

A type substitution, denoted by $S$, is a finite mapping from type
variables to static types.  The empty mapping is denoted by $[  ]$; and
the composition of two type substitutions $S_{{\mathrm{1}}}$ and $S_{{\mathrm{2}}}$ is by $ S_{{\mathrm{1}}}  \circ  S_{{\mathrm{2}}} $.
Application $S  \ottsym{(}  \ottnt{f}  \ottsym{)}$ and $S  \ottsym{(}  \ottnt{U}  \ottsym{)}$ of type substitution $S$ to term
$\ottnt{f}$ and type $\ottnt{U}$ are defined in the usual way respectively.
We write $[   \overrightarrow{ \ottmv{X} }   :=   \overrightarrow{ \ottnt{T} }   ]$ for a type substitution that maps type variables
$ \overrightarrow{ \ottmv{X} } $ to types $ \overrightarrow{ \ottnt{T} } $ respectively and $\ottnt{f}  [   \overrightarrow{ \ottmv{X} }   \ottsym{:=}   \overrightarrow{ \ottnt{T} }   ]$ and $\ottnt{U}  [   \overrightarrow{ \ottmv{X} }   \ottsym{:=}   \overrightarrow{ \ottnt{T} }   ]$
for $[   \overrightarrow{ \ottmv{X} }   :=   \overrightarrow{ \ottnt{T} }   ]  \ottsym{(}  \ottnt{f}  \ottsym{)}$ and $[   \overrightarrow{ \ottmv{X} }   :=   \overrightarrow{ \ottnt{T} }   ]  \ottsym{(}  \ottnt{U}  \ottsym{)}$.
Note that the codomain of a type substitution is static
types, following \citet{DBLP:conf/popl/GarciaC15},
in which a type variable represents a placeholder for a static type.

As expected, type substitution preserves consistency and typing:
\begin{lemma}[name=Type Substitution Preserves Consistency and Typing,restate=lemTySubstConsistency] \leavevmode
  \begin{enumerate}
    \item If $\ottnt{U}  \sim  \ottnt{U'}$, then $S  \ottsym{(}  \ottnt{U}  \ottsym{)}  \sim  S  \ottsym{(}  \ottnt{U'}  \ottsym{)}$ for any $S$.
    \item If $\Gamma  \vdash  \ottnt{f}  \ottsym{:}  \ottnt{U}$, then $S  \ottsym{(}  \Gamma  \ottsym{)}  \vdash  S  \ottsym{(}  \ottnt{f}  \ottsym{)}  \ottsym{:}  S  \ottsym{(}  \ottnt{U}  \ottsym{)}$ for any $S$.
  \end{enumerate}
\end{lemma}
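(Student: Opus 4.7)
The plan is to prove the two parts in order, with part (1) feeding into the \rnp{T\_Cast} case of part (2). Both parts proceed by straightforward structural induction on the given derivations, so I will focus on the cases that require a small auxiliary fact rather than routine reassembly under the induction hypothesis.

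For part (1), I would induct on the derivation of $\ottnt{U}  \sim  \ottnt{U'}$, dispatching the rules of \textbf{Type consistency} in Figure~\ref{fig:def_static}. The \rnp{C\_Base}, \rnp{C\_DynL}, and \rnp{C\_DynR} cases are immediate because $S$ fixes $\iota$ and $\star$. The \rnp{C\_Arrow} case is immediate by applying $S$ componentwise to the premises, invoking the induction hypothesis on each, and reassembling with \rnp{C\_Arrow}. The only case that requires thought is \rnp{C\_TyVar}: from $\ottmv{X}  \sim  \ottmv{X}$ we must derive $S  \ottsym{(}  \ottmv{X}  \ottsym{)}  \sim  S  \ottsym{(}  \ottmv{X}  \ottsym{)}$, and since $S  \ottsym{(}  \ottmv{X}  \ottsym{)}$ is some static type $\ottnt{T}$, this reduces to reflexivity of consistency on static types. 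I would state and prove the one-line sublemma that $\ottnt{U}  \sim  \ottnt{U}$ holds for every $\ottnt{U}$, by induction on $\ottnt{U}$ using \rnp{C\_Base}, \rnp{C\_TyVar}, \rnp{C\_DynR}, and \rnp{C\_Arrow}.

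For part (2), I would induct on the derivation of $\Gamma  \vdash  \ottnt{f}  \ottsym{:}  \ottnt{U}$ and dispatch on the last rule. The cases \rnp{T\_Const} and \rnp{T\_Op} rely on the side condition built into the calculus that $ \mathit{ty} ( \ottnt{c} ) $ and $ \mathit{ty} ( \mathit{op} ) $ return types free of type variables, so $S$ acts as the identity on them; the stated base/arrow shape is therefore preserved. The \rnp{T\_Var} case follows because $\ottmv{x}  \ottsym{:}  \ottnt{U} \, \in \, \Gamma$ implies $\ottmv{x}  \ottsym{:}  S  \ottsym{(}  \ottnt{U}  \ottsym{)} \, \in \, S  \ottsym{(}  \Gamma  \ottsym{)}$ by how substitution is defined on environments. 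The cases \rnp{T\_Abs}, \rnp{T\_App}, and \rnp{T\_Blame} are then routine: apply the induction hypothesis to each subderivation, commute $S$ with the term and type constructors ($ \lambda  \ottmv{x} \!:\!  \ottnt{U_{{\mathrm{1}}}}  .\,  \ottnt{f} $, $\ottnt{f_{{\mathrm{1}}}} \, \ottnt{f_{{\mathrm{2}}}}$, arrows), and reassemble with the same rule; for \rnp{T\_Abs} I would note that extending the context commutes with substitution, i.e.\ $S  \ottsym{(}   \Gamma ,   \ottmv{x}  :  \ottnt{U_{{\mathrm{1}}}}    \ottsym{)}  \ottsym{=}   S  \ottsym{(}  \Gamma  \ottsym{)} ,   \ottmv{x}  :  S  \ottsym{(}  \ottnt{U_{{\mathrm{1}}}}  \ottsym{)}  $. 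The \rnp{T\_Cast} case is where I invoke part (1): from $\ottnt{U}  \sim  \ottnt{U'}$ I obtain $S  \ottsym{(}  \ottnt{U}  \ottsym{)}  \sim  S  \ottsym{(}  \ottnt{U'}  \ottsym{)}$, combine it with the induction hypothesis on the subterm, and conclude with \rnp{T\_Cast}.

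I do not expect any step to be a genuine obstacle; the only mild subtlety is the reflexivity sublemma needed for the \rnp{C\_TyVar} case of part (1), which arises precisely because substitution may send $\ottmv{X}$ to an arbitrary static type rather than another type variable. Everything else is a direct congruence argument, and no freshness or capture-avoidance hypothesis is needed because $S$ has static-type codomain and does not touch term-level binders.
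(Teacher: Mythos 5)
Your proposal is correct and follows essentially the same route as the paper, which dispatches the lemma with a one-line "straightforward induction on the derivation." Your identification of the \rnp{C\_TyVar} case as the one spot needing a reflexivity sublemma for $S  \ottsym{(}  \ottmv{X}  \ottsym{)}$ is accurate and is exactly the detail the paper's terse proof leaves implicit.
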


\subsection{Dynamic Semantics}

We show the dynamic semantics of $\lambdaRTI$ in
Figure~\ref{fig:def_dynamic}.  It is given in a small-step style by using two relations over
terms.  One is the reduction relation $\ottnt{f} \,  \xrightarrow{ \mathmakebox[0.4em]{} S \mathmakebox[0.3em]{} }  \, \ottnt{f'}$, which
represents a basic computation step, including dynamic checking by
casts and DTI.  The other is the evaluation relation $\ottnt{f} \,  \xmapsto{ \mathmakebox[0.4em]{} S \mathmakebox[0.3em]{} }  \, \ottnt{f'}$,
which represents top-level execution.  Both relations are annotated
with a type substitution $S$, which is generated by DTI.

\begin{figure}
  {\small
    \input{figures/def_dynamic}}
  \caption{Dynamic semantics of $\lambdaRTI$.}
  \label{fig:def_dynamic}
\end{figure}

\subsubsection{Basic Reduction Rules}

We first explain rules from the basic blame calculus, where type
substitutions are empty.  The rule \rnp{R\_Op} is for primitive
operations; the meta-function $ \llbracket\mathit{op}\rrbracket $ gives a meaning to the primitive
operation $ \mathit{op} $ and we assume that $ \llbracket\mathit{op}\rrbracket ( w_{{\mathrm{1}}} ,  w_{{\mathrm{2}}} ) $ returns a
value of the right type, i.e., the return type of $ \mathit{ty} ( \mathit{op} ) $.
The rule \rnp{R\_Beta} performs the standard \(\beta\)-reduction.  We write
$\ottnt{f}  [  \ottmv{x}  \ottsym{:=}  w  ]$ for the term obtained by substituting $w$ for $\ottmv{x}$ in $\ottnt{f}$;
term substitution is defined in a capture-avoiding manner as usual.
The rules
\rnp{R\_IdBase} and \rnp{R\_IdStar} discard identity casts on a base
type and the dynamic type, respectively.  The rules \rnp{R\_Succeed}
and \rnp{R\_Fail} check two casts where an injection (a cast from a
ground type to the dynamic type) meets a projection (a cast from the
dynamic type to a ground type).  If both ground types are equal, then
the projection succeeds and these casts are discarded.  Otherwise,
these casts fail and reduce to blame to abort execution of the program
with blame label $\ell_{{\mathrm{2}}}$ (the one on the projection).  The rule
\rnp{R\_AppCast} reduces an application of a wrapped function by
breaking the cast into two.  One is a cast on the argument, and the
other is a cast on the return value.  We negate the blame label for
the cast on the argument because the direction of the cast is swapped
from that of the function cast~\cite{DBLP:conf/icfp/FindlerF02}.  The
rules \rnp{R\_Ground} and \rnp{R\_Expand} decompose a cast between a
non-ground type and the dynamic type into two casts.  These rules
cannot be used if $\ottnt{U}$ is a type variable because type variables
are never consistent with any ground type.  The ground type in the
middle of the resulting two casts is uniquely determined:

\begin{lemma}[name=Ground Types,restate=lemGroundTypes] \label{lem:ground_types}
  \leavevmode
  \begin{enumerate}
    \item If $\ottnt{U}$ is neither a type variable nor the dynamic type, then there exists a unique $\ottnt{G}$ such that $\ottnt{U}  \sim  \ottnt{G}$.
    \item $\ottnt{G}  \sim  \ottnt{G'}$ if and only if $\ottnt{G}  \ottsym{=}  \ottnt{G'}$.
  \end{enumerate}
\end{lemma}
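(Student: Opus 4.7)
The plan is to prove both parts by straightforward case analysis on the structure of gradual types, using only the five consistency rules. First I would establish a small inversion observation that will be reused: by inspecting the rules \rn{C\_Base}, \rn{C\_TyVar}, \rn{C\_DynR}, \rn{C\_DynL}, and \rn{C\_Arrow}, the only ways to derive $U \sim U'$ are (i) $U = U' = \iota$, (ii) $U = U' = X$, (iii) one of the two sides is $\star$, or (iv) both sides are function types whose components are consistent. In particular, a base type is never consistent with a function type, and a type variable is consistent with exactly itself and $\star$, so no type variable is consistent with any ground type.

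For part~1, I would case-split on $U$. Since $U$ is neither a type variable nor $\star$, either $U = \iota$ or $U = U_1 \to U_2$. In the first case, $U \sim \iota$ holds by \rn{C\_Base}, and $\iota$ is a ground type; for uniqueness, any $G$ with $\iota \sim G$ must itself be $\iota$ by the inversion observation (the only non-$\star$ type consistent with $\iota$ is $\iota$). In the second case, $U_1 \to U_2 \sim \star \to \star$ holds by applying \rn{C\_Arrow} with the premises $U_1 \sim \star$ and $U_2 \sim \star$ provided by \rn{C\_DynR}; uniqueness holds because any ground $G$ consistent with a function type cannot be a base type, leaving only $G = \star \to \star$.

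For part~2, the ``if'' direction follows from reflexivity of $\sim$ on ground types: $\iota \sim \iota$ by \rn{C\_Base}, and $\star \to \star \sim \star \to \star$ by \rn{C\_Arrow} together with \rn{C\_DynL}/\rn{C\_DynR}. The ``only if'' direction splits on $G$ and $G'$: if both are base types, inversion forces them to be the same $\iota$; if both are $\star \to \star$, they are trivially equal; and the mixed case (one base, one function) is ruled out by inversion on \rn{C\_Arrow} and \rn{C\_Base}, since a base type cannot match an arrow on the left of any rule and vice versa.

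No step here looks hard; there is no induction required because ground types have only two very restricted shapes. The only mild subtlety is making sure that the inversion clause for $U_1 \to U_2 \sim U'$ cleanly rules out $U' = \iota$ and $U' = X$, which follows because neither \rn{C\_Arrow} nor \rn{C\_DynR} can produce such a conclusion. So I expect the whole argument to fit in a few lines once the inversion observation is stated.
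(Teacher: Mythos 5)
Your proposal is correct and follows essentially the same route as the paper, which also proves part~1 by straightforward case analysis on the structure of $\ottnt{U}$ and part~2 by case analysis on the derivation of $\ottnt{G}  \sim  \ottnt{G'}$ (resp.\ on $\ottnt{G}  \ottsym{=}  \ottnt{G'}$). The inversion observation you state up front is exactly the content that makes both case analyses immediate, so nothing further is needed.
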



\subsubsection{Reduction Rules for Dynamic Type Inference}



As discussed in the introduction, our idea is to infer the
``value'' of a type variable when it is projected from the dynamic
type, as in $w  \ottsym{:}   \star \Rightarrow  \unskip ^ { \ell }  \! \ottmv{X} $.  The value typed at the dynamic type
is always tagged with a ground type and is of the form
$w'  \ottsym{:}   \ottnt{G} \Rightarrow  \unskip ^ { \ell' }  \! \star $.  If the ground type $\ottnt{G}$ is a base type, then
the type variable will be instantiated with it.
We revisit the example used in Section~\ref{sec:introduction} and show
evaluation steps below.
\[
  \begin{array}{ll}
    & \ottsym{(}   \lambda  \ottmv{x} \!:\!  \star  .\,  \ottsym{(}  \ottmv{x}  \ottsym{:}   \star \Rightarrow  \unskip ^ { \ell_{{\mathrm{1}}} }  \! \star  \!\rightarrow\!  \star   \ottsym{)}  \, \ottsym{(}   2   \ottsym{:}    \textsf{\textup{int}\relax}  \Rightarrow  \unskip ^ { \ell_{{\mathrm{2}}} }  \! \star   \ottsym{)}  \ottsym{)} \, \ottsym{(}  \ottsym{(}   \lambda  \ottmv{y} \!:\!  \ottmv{Y}  .\,  \ottmv{y}   \ottsym{)}  \ottsym{:}   \ottmv{Y}  \!\rightarrow\!  \ottmv{Y} \Rightarrow  \unskip ^ { \ell_{{\mathrm{3}}} }  \! \star   \ottsym{)} \\
     \xmapsto{ \mathmakebox[0.4em]{} [  ] \mathmakebox[0.3em]{} }\hspace{-0.4em}{}^\ast \hspace{0.2em}  & \ottsym{(}  \ottsym{(}   \lambda  \ottmv{y} \!:\!  \ottmv{Y}  .\,  \ottmv{y}   \ottsym{)}  \ottsym{:}   \ottmv{Y}  \!\rightarrow\!  \ottmv{Y} \Rightarrow  \unskip ^ { \ell_{{\mathrm{3}}} }  \! \star  \!\rightarrow\!  \star   \ottsym{)} \, \ottsym{(}   2   \ottsym{:}    \textsf{\textup{int}\relax}  \Rightarrow  \unskip ^ { \ell_{{\mathrm{2}}} }  \! \star   \ottsym{)} \\
     \xmapsto{ \mathmakebox[0.4em]{} [  ] \mathmakebox[0.3em]{} }  & \ottsym{(}  \ottsym{(}   \lambda  \ottmv{y} \!:\!  \ottmv{Y}  .\,  \ottmv{y}   \ottsym{)} \, \ottsym{(}   2   \ottsym{:}    \textsf{\textup{int}\relax}  \Rightarrow  \unskip ^ { \ell_{{\mathrm{2}}} }  \!  \star \Rightarrow  \unskip ^ {  \bar{ \ell_{{\mathrm{3}}} }  }  \! \ottmv{Y}    \ottsym{)}  \ottsym{)}  \ottsym{:}   \ottmv{Y} \Rightarrow  \unskip ^ { \ell_{{\mathrm{3}}} }  \! \star  \\
     \xmapsto{ \mathmakebox[0.4em]{} [  \ottmv{Y}  :=   \textsf{\textup{int}\relax}   ] \mathmakebox[0.3em]{} }  & \ottsym{(}  \ottsym{(}   \lambda  \ottmv{y} \!:\!   \graybox{   \textsf{\textup{int}\relax}   }   .\,  \ottmv{y}   \ottsym{)} \,  2   \ottsym{)}  \ottsym{:}    \graybox{   \textsf{\textup{int}\relax}   }  \Rightarrow  \unskip ^ { \ell_{{\mathrm{3}}} }  \! \star  \\
     \xmapsto{ \mathmakebox[0.4em]{} [  ] \mathmakebox[0.3em]{} }  &  2   \ottsym{:}    \textsf{\textup{int}\relax}  \Rightarrow  \unskip ^ { \ell_{{\mathrm{3}}} }  \! \star 
  \end{array}
\]
The subterm $ 2   \ottsym{:}    \textsf{\textup{int}\relax}  \Rightarrow  \unskip ^ { \ell_{{\mathrm{2}}} }  \!  \star \Rightarrow  \unskip ^ {  \bar{ \ell_{{\mathrm{3}}} }  }  \! \ottmv{Y}  $ on the third line reduces to
$ 2 $---this is where DTI is performed; this reduction step is
annotated with a type substitution $[  \ottmv{Y}  :=   \textsf{\textup{int}\relax}   ]$, which roughly
means ``$\ottmv{Y}$ must be $ \textsf{\textup{int}\relax} $ for the execution of the
program to proceed further without blame.''  As we will explain soon,
the type substitution is applied to the whole term, since the type
variable $\ottmv{Y}$ may appear elsewhere in the term.  As a result, the
occurrences of $\ottmv{Y}$ in the type annotation and the last cast are
also instantiated (as the shade on $ \textsf{\textup{int}\relax} $ indicates) and so
$\ottsym{(}  \ottsym{(}   \lambda  \ottmv{y} \!:\!  \ottmv{Y}  .\,  \ottmv{y}   \ottsym{)} \, \ottsym{(}   2   \ottsym{:}    \textsf{\textup{int}\relax}  \Rightarrow  \unskip ^ { \ell_{{\mathrm{2}}} }  \!  \star \Rightarrow  \unskip ^ {  \bar{ \ell_{{\mathrm{3}}} }  }  \! \ottmv{Y}    \ottsym{)}  \ottsym{)}  \ottsym{:}   \ottmv{Y} \Rightarrow  \unskip ^ { \ell_{{\mathrm{3}}} }  \! \star $ evaluates to
$\ottsym{(}  \ottsym{(}   \lambda  \ottmv{y} \!:\!   \textsf{\textup{int}\relax}   .\,  \ottmv{y}   \ottsym{)} \,  2   \ottsym{)}  \ottsym{:}    \textsf{\textup{int}\relax}  \Rightarrow  \unskip ^ { \ell_{{\mathrm{3}}} }  \! \star $ in one step.

%
Now, we explain formal reduction rules for DTI.  The rule
\rnp{R\_InstBase} instantiates a type variable $\ottmv{X}$ with the base
type $\iota$ and generates a type substitution $[  \ottmv{X}  :=  \iota  ]$.
The rule \rnp{R\_InstArrow} instantiates a type variable $\ottmv{X}$ with a
function type $\ottmv{X_{{\mathrm{1}}}}  \!\rightarrow\!  \ottmv{X_{{\mathrm{2}}}}$ for fresh type variables $\ottmv{X_{{\mathrm{1}}}}$ and $\ottmv{X_{{\mathrm{2}}}}$.%
\footnote{We use the term ``fresh'' here to mean that
$\ottmv{X_{{\mathrm{1}}}}$ and $\ottmv{X_{{\mathrm{2}}}}$ occur nowhere in the whole program before reduction.}
At this
point, we know that $w$ is a (possibly wrapped) function, but the
domain and range types are still unknown.  We defer the decision about these types by
generating fresh type variables, which will be instantiated in the future evaluation.

Finally, we explain the evaluation rules.  The rule \rnp{E\_Step} reduces
the subterm in an evaluation context, then apply the generated type
substitution.  The substitution is applied to the whole term
$\ottnt{E}  [  \ottnt{f'}  ]$ so that the other occurrences of the same variables are
replaced at once.  The rule \rnp{E\_Abort} aborts execution of a program
if it raises blame.

We write $\ottnt{f_{{\mathrm{0}}}} \,  \xmapsto{ \mathmakebox[0.4em]{} S \mathmakebox[0.3em]{} }\hspace{-0.4em}{}^\ast \hspace{0.2em}  \, \ottnt{f_{\ottmv{n}}}$ if $\ottnt{f_{{\mathrm{0}}}} \,  \xmapsto{ \mathmakebox[0.4em]{} S_{{\mathrm{1}}} \mathmakebox[0.3em]{} }  \, \ottnt{f_{{\mathrm{1}}}}$,
$\ottnt{f_{{\mathrm{1}}}} \,  \xmapsto{ \mathmakebox[0.4em]{} S_{{\mathrm{2}}} \mathmakebox[0.3em]{} }  \, \ottnt{f_{{\mathrm{2}}}}$, \dots, and $\ottnt{f_{{\ottmv{n}-1}}} \,  \xmapsto{ \mathmakebox[0.4em]{} S_{\ottmv{n}} \mathmakebox[0.3em]{} }  \, \ottnt{f_{\ottmv{n}}}$ and
$S =    S_{\ottmv{n}}  \circ  S_{{\ottmv{n}-1}}   \circ   \cdots    \circ  S_{{\mathrm{1}}} $ (where $n \geq 0$) and similarly for
$\ottnt{f_{{\mathrm{0}}}} \,  \xmapsto{ \mathmakebox[0.4em]{} S \mathmakebox[0.3em]{} }\hspace{-0.4em}{}^+ \hspace{0.2em}  \, \ottnt{f_{\ottmv{n}}}$ (where $n \geq 1$).

One may wonder that the rule \rnp{R\_InstArrow} is redundant
because a term $w  \ottsym{:}   \star  \!\rightarrow\!  \star \Rightarrow  \unskip ^ { \ell_{{\mathrm{1}}} }  \!  \star \Rightarrow  \unskip ^ { \ell_{{\mathrm{2}}} }  \!  \star  \!\rightarrow\!  \star \Rightarrow  \unskip ^ { \ell_{{\mathrm{2}}} }  \! X_{{\mathrm{1}}}  \!\rightarrow\!  X_{{\mathrm{2}}}   $ always reduces to
$w  \ottsym{:}   \star  \!\rightarrow\!  \star \Rightarrow  \unskip ^ { \ell_{{\mathrm{2}}} }  \! X_{{\mathrm{1}}}  \!\rightarrow\!  X_{{\mathrm{2}}} $ in the next step.  Actually, we could define this
reduction rule in the following two different manners:
\begin{itemize}
  \item $w  \ottsym{:}   \star  \!\rightarrow\!  \star \Rightarrow  \unskip ^ { \ell_{{\mathrm{1}}} }  \!  \star \Rightarrow  \unskip ^ { \ell_{{\mathrm{2}}} }  \! \ottmv{X}   \,  \xrightarrow{ \mathmakebox[0.4em]{} [  \ottmv{X}  :=  \ottmv{X_{{\mathrm{1}}}}  \!\rightarrow\!  \ottmv{X_{{\mathrm{2}}}}  ] \mathmakebox[0.3em]{} }  \, w  \ottsym{:}   \star  \!\rightarrow\!  \star \Rightarrow  \unskip ^ { \ell_{{\mathrm{2}}} }  \! \ottmv{X_{{\mathrm{1}}}}  \!\rightarrow\!  \ottmv{X_{{\mathrm{2}}}} $; or
  \item $w  \ottsym{:}   \star  \!\rightarrow\!  \star \Rightarrow  \unskip ^ { \ell_{{\mathrm{1}}} }  \!  \star \Rightarrow  \unskip ^ { \ell_{{\mathrm{2}}} }  \! \ottmv{X}   \,  \xrightarrow{ \mathmakebox[0.4em]{} [  \ottmv{X}  :=  \ottmv{X_{{\mathrm{1}}}}  \!\rightarrow\!  \ottmv{X_{{\mathrm{2}}}}  ] \mathmakebox[0.3em]{} }  \, w  \ottsym{:}   \star  \!\rightarrow\!  \star \Rightarrow  \unskip ^ { \ell_{{\mathrm{1}}} }  \!  \star \Rightarrow  \unskip ^ { \ell_{{\mathrm{2}}} }  \! \ottmv{X_{{\mathrm{1}}}}  \!\rightarrow\!  \ottmv{X_{{\mathrm{2}}}}  $.
\end{itemize}
Using these rules does not change the semantics of the language, but
we choose \rnp{R\_InstArrow} for ease of proofs.

We show how the rule \rnp{R\_InstArrow} works using the following
(somewhat contrived\footnote{In fact, this term is not in the image of
  the cast-inserting translation.  An example in the image would be much
  more complicated.}) example:
\[
  \begin{array}{ll}
    & \ottsym{(}  \ottsym{(}   \lambda  \ottmv{y} \!:\!   \textsf{\textup{int}\relax}   .\,  \ottmv{y}   \ottsym{+}   1   \ottsym{)}  \ottsym{:}    \textsf{\textup{int}\relax}   \!\rightarrow\!   \textsf{\textup{int}\relax}  \Rightarrow  \unskip ^ { \ell_{{\mathrm{1}}} }  \!  \star \Rightarrow  \unskip ^ { \ell_{{\mathrm{2}}} }  \!  \ottmv{X} \Rightarrow  \unskip ^ { \ell_{{\mathrm{3}}} }  \!  \star \Rightarrow  \unskip ^ { \ell_{{\mathrm{4}}} }  \! \star  \!\rightarrow\!  \star      \ottsym{)} \, \ottsym{(}   3   \ottsym{:}    \textsf{\textup{int}\relax}  \Rightarrow  \unskip ^ { \ell_{{\mathrm{5}}} }  \! \star   \ottsym{)} \\
     \xmapsto{ \mathmakebox[0.4em]{} [  ] \mathmakebox[0.3em]{} }  & \ottsym{(}  w  \ottsym{:}   \star  \!\rightarrow\!  \star \Rightarrow  \unskip ^ { \ell_{{\mathrm{1}}} }  \!  \star \Rightarrow  \unskip ^ { \ell_{{\mathrm{2}}} }  \!  \ottmv{X} \Rightarrow  \unskip ^ { \ell_{{\mathrm{3}}} }  \!  \star \Rightarrow  \unskip ^ { \ell_{{\mathrm{4}}} }  \! \star  \!\rightarrow\!  \star      \ottsym{)} \, \ottsym{(}   3   \ottsym{:}    \textsf{\textup{int}\relax}  \Rightarrow  \unskip ^ { \ell_{{\mathrm{5}}} }  \! \star   \ottsym{)} \\
                & \multicolumn{1}{r}{\textrm{where }w = \ottsym{(}   \lambda  \ottmv{y} \!:\!   \textsf{\textup{int}\relax}   .\,  \ottmv{y}   \ottsym{+}   1   \ottsym{)}  \ottsym{:}    \textsf{\textup{int}\relax}   \!\rightarrow\!   \textsf{\textup{int}\relax}  \Rightarrow  \unskip ^ { \ell_{{\mathrm{1}}} }  \! \star  \!\rightarrow\!  \star } \\
     \xmapsto{ \mathmakebox[0.4em]{} S \mathmakebox[0.3em]{} }  & \ottsym{(}  w  \ottsym{:}   \star  \!\rightarrow\!  \star \Rightarrow  \unskip ^ { \ell_{{\mathrm{1}}} }  \!  \star \Rightarrow  \unskip ^ { \ell_{{\mathrm{2}}} }  \!  \star  \!\rightarrow\!  \star \Rightarrow  \unskip ^ { \ell_{{\mathrm{2}}} }  \!   \graybox{  \ottmv{X_{{\mathrm{1}}}}  \!\rightarrow\!  \ottmv{X_{{\mathrm{2}}}}  }  \Rightarrow  \unskip ^ { \ell_{{\mathrm{3}}} }  \!  \star \Rightarrow  \unskip ^ { \ell_{{\mathrm{4}}} }  \! \star  \!\rightarrow\!  \star       \ottsym{)} \, \ottsym{(}   3   \ottsym{:}    \textsf{\textup{int}\relax}  \Rightarrow  \unskip ^ { \ell_{{\mathrm{5}}} }  \! \star   \ottsym{)} \\
                & \multicolumn{1}{r}{\textrm{where }S  \ottsym{=}  [  \ottmv{X}  :=  \ottmv{X_{{\mathrm{1}}}}  \!\rightarrow\!  \ottmv{X_{{\mathrm{2}}}}  ]} \\
     \xmapsto{ \mathmakebox[0.4em]{} S' \mathmakebox[0.3em]{} }\hspace{-0.4em}{}^\ast \hspace{0.2em}  &  4   \ottsym{:}    \textsf{\textup{int}\relax}  \Rightarrow  \unskip ^ { \ell_{{\mathrm{4}}} }  \! \star  \qquad \textrm{where }S'  \ottsym{=}  [  \ottmv{X_{{\mathrm{1}}}}  :=   \textsf{\textup{int}\relax}   \ottsym{,}  \ottmv{X_{{\mathrm{2}}}}  :=   \textsf{\textup{int}\relax}   ].
  \end{array}
\]
The type variable $\ottmv{X}$ is instantiated with $\ottmv{X_{{\mathrm{1}}}}  \!\rightarrow\!  \ottmv{X_{{\mathrm{2}}}}$ when the
value $w$ tagged with $\star  \!\rightarrow\!  \star$ is projected to $\ottmv{X}$.
Then, $\ottmv{X_{{\mathrm{1}}}}$ and $\ottmv{X_{{\mathrm{2}}}}$ are instantiated with $ \textsf{\textup{int}\relax} $ by
\rnp{R\_InstBase}, as we have already explained, by the time the
term evaluates to the final value $ 4   \ottsym{:}    \textsf{\textup{int}\relax}  \Rightarrow  \unskip ^ { \ell_{{\mathrm{4}}} }  \! \star $.

Perhaps surprisingly, our reduction rules to instantiate type
variables are not symmetric: There are no rules to reduce terms such
as $w  \ottsym{:}   \ottmv{X} \Rightarrow  \unskip ^ { \ell_{{\mathrm{1}}} }  \!  \star \Rightarrow  \unskip ^ { \ell_{{\mathrm{2}}} }  \! \iota  $, $w  \ottsym{:}   \ottmv{X} \Rightarrow  \unskip ^ { \ell_{{\mathrm{1}}} }  \!  \star \Rightarrow  \unskip ^ { \ell_{{\mathrm{2}}} }  \! \star  \!\rightarrow\!  \star  $,
$w  \ottsym{:}   \ottmv{X} \Rightarrow  \unskip ^ { \ell_{{\mathrm{1}}} }  \!  \star \Rightarrow  \unskip ^ { \ell_{{\mathrm{2}}} }  \! \ottmv{X'}  $, and $w  \ottsym{:}   \ottmv{X} \Rightarrow  \unskip ^ { \ell_{{\mathrm{1}}} }  \! \ottmv{X} $, even though a cast
expression such as $\ottnt{f}  \ottsym{:}   \ottmv{X} \Rightarrow  \unskip ^ { \ell }  \! \star $ \emph{does} appear during
reduction.  This is because a value is not typed at a type variable as we will show in the canonical forms lemma (Lemma~\ref{lem:canonical_forms})
and we do not need the rules to reduce these
terms.  The $\ottmv{X}$ in $\ottnt{f}  \ottsym{:}   \ottmv{X} \Rightarrow  \unskip ^ { \ell }  \! \star $ will be instantiated
during evaluation of $\ottnt{f}$.

Before closing this subsection, we revisit an example from the
introduction.  It raises blame because one type variable is used in
contexts that expect different types:
\[
  \begin{array}{l@{\;}l}
& \ottsym{(}   \lambda  \ottmv{x} \!:\!  \star  \!\rightarrow\!  \star  \!\rightarrow\!  \star  .\,  \ottmv{x}  \, \ottsym{(}   2   \ottsym{:}    \textsf{\textup{int}\relax}  \Rightarrow  \unskip ^ { \ell_{{\mathrm{1}}} }  \! \star   \ottsym{)} \, \ottsym{(}   \textsf{\textup{true}\relax}   \ottsym{:}    \textsf{\textup{bool}\relax}  \Rightarrow  \unskip ^ { \ell_{{\mathrm{2}}} }  \! \star   \ottsym{)}  \ottsym{)} \, w
    \\ & \multicolumn{1}{r}{\text{where } w  \ottsym{=}  \ottsym{(}   \lambda  \ottmv{y_{{\mathrm{1}}}} \!:\!  \ottmv{Y}  .\,   \lambda  \ottmv{y_{{\mathrm{2}}}} \!:\!  \ottmv{Y}  .\,   \textsf{\textup{ if }\relax}  \ottnt{b}  \textsf{\textup{ then }\relax}  \ottmv{y_{{\mathrm{1}}}}  \textsf{\textup{ else }\relax}  \ottmv{y_{{\mathrm{2}}}}     \ottsym{)}  \ottsym{:}   \ottmv{Y}  \!\rightarrow\!  \ottmv{Y}  \!\rightarrow\!  \ottmv{Y} \Rightarrow  \unskip ^ { \ell_{{\mathrm{3}}} }  \! \star  \!\rightarrow\!  \star  \!\rightarrow\!  \star } \\
     \xmapsto{ \mathmakebox[0.4em]{} [  ] \mathmakebox[0.3em]{} }  & w \, \ottsym{(}   2   \ottsym{:}    \textsf{\textup{int}\relax}  \Rightarrow  \unskip ^ { \ell_{{\mathrm{1}}} }  \! \star   \ottsym{)} \, \ottsym{(}   \textsf{\textup{true}\relax}   \ottsym{:}    \textsf{\textup{bool}\relax}  \Rightarrow  \unskip ^ { \ell_{{\mathrm{2}}} }  \! \star   \ottsym{)} \\
     \xmapsto{ \mathmakebox[0.4em]{} [  ] \mathmakebox[0.3em]{} }  & (\ottsym{(}  \ottsym{(}   \lambda  \ottmv{y_{{\mathrm{1}}}} \!:\!  \ottmv{Y}  .\,   \lambda  \ottmv{y_{{\mathrm{2}}}} \!:\!  \ottmv{Y}  .\,   \textsf{\textup{ if }\relax}  \ottnt{b}  \textsf{\textup{ then }\relax}  \ottmv{y_{{\mathrm{1}}}}  \textsf{\textup{ else }\relax}  \ottmv{y_{{\mathrm{2}}}}     \ottsym{)} \, \ottsym{(}   2   \ottsym{:}    \textsf{\textup{int}\relax}  \Rightarrow  \unskip ^ { \ell_{{\mathrm{1}}} }  \!  \star \Rightarrow  \unskip ^ {  \bar{ \ell_{{\mathrm{3}}} }  }  \! \ottmv{Y}    \ottsym{)}  \ottsym{)}  \ottsym{:}   \ottmv{Y}  \!\rightarrow\!  \ottmv{Y} \Rightarrow  \unskip ^ { \ell_{{\mathrm{3}}} }  \! \star  \!\rightarrow\!  \star ) \\ & \qquad \ottsym{(}   \textsf{\textup{true}\relax}   \ottsym{:}    \textsf{\textup{bool}\relax}  \Rightarrow  \unskip ^ { \ell_{{\mathrm{2}}} }  \! \star   \ottsym{)} \\
     \xmapsto{ \mathmakebox[0.4em]{} [  \ottmv{Y}  :=   \textsf{\textup{int}\relax}   ] \mathmakebox[0.3em]{} }  & \ottsym{(}  \ottsym{(}  \ottsym{(}   \lambda  \ottmv{y_{{\mathrm{1}}}} \!:\!   \graybox{   \textsf{\textup{int}\relax}   }   .\,   \lambda  \ottmv{y_{{\mathrm{2}}}} \!:\!   \graybox{   \textsf{\textup{int}\relax}   }   .\,   \textsf{\textup{ if }\relax}  \ottnt{b}  \textsf{\textup{ then }\relax}  \ottmv{y_{{\mathrm{1}}}}  \textsf{\textup{ else }\relax}  \ottmv{y_{{\mathrm{2}}}}     \ottsym{)} \,  2   \ottsym{)}  \ottsym{:}    \graybox{   \textsf{\textup{int}\relax}   }   \!\rightarrow\!   \graybox{   \textsf{\textup{int}\relax}   }  \Rightarrow  \unskip ^ { \ell_{{\mathrm{3}}} }  \! \star  \!\rightarrow\!  \star   \ottsym{)} \, \ottsym{(}   \textsf{\textup{true}\relax}   \ottsym{:}    \textsf{\textup{bool}\relax}  \Rightarrow  \unskip ^ { \ell_{{\mathrm{2}}} }  \! \star   \ottsym{)} \\
     \longmapsto^\ast  & \ottsym{(}  \ottsym{(}   \lambda  \ottmv{y_{{\mathrm{2}}}} \!:\!   \textsf{\textup{int}\relax}   .\,   \textsf{\textup{ if }\relax}  \ottnt{b}  \textsf{\textup{ then }\relax}   2   \textsf{\textup{ else }\relax}  \ottmv{y_{{\mathrm{2}}}}    \ottsym{)} \, \ottsym{(}   \textsf{\textup{true}\relax}   \ottsym{:}    \textsf{\textup{bool}\relax}  \Rightarrow  \unskip ^ { \ell_{{\mathrm{2}}} }  \!  \star \Rightarrow  \unskip ^ {  \bar{ \ell_{{\mathrm{3}}} }  }  \!  \textsf{\textup{int}\relax}     \ottsym{)}  \ottsym{)}  \ottsym{:}    \textsf{\textup{int}\relax}  \Rightarrow  \unskip ^ { \ell_{{\mathrm{3}}} }  \! \star  \\
     \longmapsto  & \textsf{\textup{blame}\relax} \,  \bar{ \ell_{{\mathrm{3}}} } 
  \end{array}
\]
As for the first example, at the third step, the subterm
$( 2   \ottsym{:}    \textsf{\textup{int}\relax}  \Rightarrow  \unskip ^ { \ell_{{\mathrm{1}}} }  \!  \star \Rightarrow  \unskip ^ {  \bar{ \ell_{{\mathrm{3}}} }  }  \! \ottmv{Y}  )$ reduces to $ 2 $ by \rnp{R\_InstBase} and
a substitution $[  \ottmv{Y}  :=   \textsf{\textup{int}\relax}   ]$ is yielded.  Then, by application of
\rnp{E\_Step}, $\ottmv{Y}$ in the evaluation context also gets replaced
with $ \textsf{\textup{int}\relax} $.  (Again, the shaded types indicate which parts are
affected.)  This term eventually evaluates to blame because the cast
$( \textsf{\textup{true}\relax}   \ottsym{:}    \textsf{\textup{bool}\relax}  \Rightarrow  \unskip ^ { \ell_{{\mathrm{2}}} }  \!  \star \Rightarrow  \unskip ^ {  \bar{ \ell_{{\mathrm{3}}} }  }  \!  \textsf{\textup{int}\relax}   )$ fails.  After all, the function
\( \lambda  \ottmv{y_{{\mathrm{1}}}} .\,   \lambda  \ottmv{y_{{\mathrm{2}}}} .\,   \textsf{\textup{ if }\relax}  \ottnt{b}  \textsf{\textup{ then }\relax}  \ottmv{y_{{\mathrm{1}}}}  \textsf{\textup{ else }\relax}  \ottmv{y_{{\mathrm{2}}}}   \) cannot be used as both $ \textsf{\textup{int}\relax}   \!\rightarrow\!   \textsf{\textup{int}\relax}   \!\rightarrow\!   \textsf{\textup{int}\relax} $ and
$ \textsf{\textup{bool}\relax}   \!\rightarrow\!   \textsf{\textup{bool}\relax}   \!\rightarrow\!   \textsf{\textup{bool}\relax} $ at the same time.  It is important to ensure
that $\ottmv{Y}$ is instantiated \emph{at most once}.



\subsection{Let-Polymorphism} \label{sec:let_polymorphism}

In this section, we extend $\lambdaRTI$ to
let-polymorphism~\cite{Milner78JCSS} by introducing type schemes and
explicit type abstraction $ \Lambda    \overrightarrow{ \ottmv{X} }  .\,  w $ and application $\ottmv{x}  [   \overrightarrow{ \ottnt{T} }   ]$,
as \textit{Core}-XML~\cite{HarperMitchell93TOPLAS}.
(Here, we abbreviate sequences by using vector notations.)
Explicit type abstraction/application is needed because we need type
information at run time.

Our formulation is fairly standard but we will find a few twists that
are motivated by our working hypothesis that $ \textsf{\textup{let}\relax} \,  \ottmv{x}  =  w  \textsf{\textup{ in }\relax}  \ottnt{f} $ (in
the surface language) should behave, both statically and dynamically,
the same as $\ottnt{f}$ where $w$ is substituted for
$\ottmv{x}$---especially in languages where type abstractions and
applications are implicit.  Consider the following expression in the
surface language (extended with pairs):
\[
   \textsf{\textup{let}\relax} \,  g  =   \lambda  \ottmv{x} .\,  \ottsym{(}   \ottsym{(}   \lambda  \ottmv{y} .\,  \ottmv{y}   \ottsym{)}  ::  \star  \!\rightarrow\!  \star   \ottsym{)}  \, \ottmv{x}  \textsf{\textup{ in }\relax}  \ottsym{(}  g \,  2   \ottsym{,}  g \,  \textsf{\textup{true}\relax}   \ottsym{)} 
\]
where $ \ottsym{(}   \lambda  \ottmv{y} .\,  \ottmv{y}   \ottsym{)}  ::  \star  \!\rightarrow\!  \star $ stands for ascription, which is translated
to a cast.  By making casts, type abstraction, and type application
explicit, one would obtain something like
\[
   \textsf{\textup{let}\relax} \,  g  =   \Lambda   \ottmv{X}   \ottmv{Y} .\,   \lambda  \ottmv{x} \!:\!  \ottmv{X}  .\,  \ottsym{(}  \ottsym{(}   \lambda  \ottmv{y} \!:\!  \ottmv{Y}  .\,  \ottmv{y}   \ottsym{)}  \ottsym{:}   \ottmv{Y}  \!\rightarrow\!  \ottmv{Y} \Rightarrow  \unskip ^ { \ell }  \! \star  \!\rightarrow\!  \star   \ottsym{)}   \, \ottmv{x}  \textsf{\textup{ in }\relax}  \ottsym{(}  g  [  \ottnt{T_{{\mathrm{1}}}}  \ottsym{,}  \ottnt{T_{{\mathrm{2}}}}  ] \,  2   \ottsym{,}  g  [  \ottnt{T_{{\mathrm{3}}}}  \ottsym{,}  \ottnt{T_{{\mathrm{4}}}}  ] \,  \textsf{\textup{true}\relax}   \ottsym{)} 
\]
in $\lambdaRTI$.
Note that, due to the use of $\star  \!\rightarrow\!  \star$, different type variables are
assigned to $\ottmv{x}$ and $\ottmv{y}$ and so $g$ is bound to a
two-argument type abstraction.  Now, what should type arguments be at
the two uses of $g$?  It should be obvious that $\ottnt{T_{{\mathrm{1}}}}$ and
$\ottnt{T_{{\mathrm{3}}}}$ have to be $ \textsf{\textup{int}\relax} $ and $ \textsf{\textup{bool}\relax} $, respectively, from
arguments to $g$.  At first, it may seem that $\ottnt{T_{{\mathrm{2}}}}$ and
$\ottnt{T_{{\mathrm{4}}}}$ should be $ \textsf{\textup{int}\relax} $ and $ \textsf{\textup{bool}\relax} $, respectively, in order to
avoid blame.  However, it is hard for a type system to see it---in
fact, $g$ is given type scheme $\forall \, \ottmv{X} \, \ottmv{Y}  \ottsym{.}  \ottmv{X}  \!\rightarrow\!  \star$ and, since
$\ottmv{Y}$ is bound but not referenced, there is no clue.  We assign a
special symbol $ \nu $ to $\ottnt{T_{{\mathrm{2}}}}$ and $\ottnt{T_{{\mathrm{4}}}}$; each occurrence
of $ \nu $ is replaced with a \emph{fresh type variable} when a
(polymorphic) value is substituted for $g$ during reduction.
These fresh type variables are expected to be instantiated by DTI---to
$ \textsf{\textup{int}\relax} $ and $ \textsf{\textup{bool}\relax} $ in this example---as reduction proceeds.
\[
  \begin{array}{ll}
    &  \textsf{\textup{let}\relax} \,  g  =   \Lambda   \ottmv{X}   \ottmv{Y} .\,   \lambda  \ottmv{x} \!:\!  \ottmv{X}  .\,  \ottsym{(}  \ottsym{(}   \lambda  \ottmv{y} \!:\!  \ottmv{Y}  .\,  \ottmv{y}   \ottsym{)}  \ottsym{:}   \ottmv{Y}  \!\rightarrow\!  \ottmv{Y} \Rightarrow  \unskip ^ { \ell }  \! \star  \!\rightarrow\!  \star   \ottsym{)}   \, \ottmv{x}  \textsf{\textup{ in }\relax}  \ottsym{(}  g  [   \textsf{\textup{int}\relax}   \ottsym{,}   \nu   ] \,  2   \ottsym{,}  g  [   \textsf{\textup{bool}\relax}   \ottsym{,}   \nu   ] \,  \textsf{\textup{true}\relax}   \ottsym{)}  \\
     \longmapsto  & (\ottsym{(}   \lambda  \ottmv{x} \!:\!   \textsf{\textup{int}\relax}   .\,  \ottsym{(}  \ottsym{(}   \lambda  \ottmv{y} \!:\!  \ottmv{Y_{{\mathrm{1}}}}  .\,  \ottmv{y}   \ottsym{)}  \ottsym{:}   \ottmv{Y_{{\mathrm{1}}}}  \!\rightarrow\!  \ottmv{Y_{{\mathrm{1}}}} \Rightarrow  \unskip ^ { \ell }  \! \star  \!\rightarrow\!  \star   \ottsym{)}  \, \ottmv{x}  \ottsym{)} \,  2 , \\ & \qquad \ottsym{(}   \lambda  \ottmv{x} \!:\!   \textsf{\textup{bool}\relax}   .\,  \ottsym{(}  \ottsym{(}   \lambda  \ottmv{y} \!:\!  \ottmv{Y_{{\mathrm{2}}}}  .\,  \ottmv{y}   \ottsym{)}  \ottsym{:}   \ottmv{Y_{{\mathrm{2}}}}  \!\rightarrow\!  \ottmv{Y_{{\mathrm{2}}}} \Rightarrow  \unskip ^ { \ell }  \! \star  \!\rightarrow\!  \star   \ottsym{)}  \, \ottmv{x}  \ottsym{)} \,  \textsf{\textup{true}\relax} ) \\
     \xmapsto{ \mathmakebox[0.4em]{} [  \ottmv{Y_{{\mathrm{1}}}}  :=   \textsf{\textup{int}\relax}   ] \mathmakebox[0.3em]{} }\hspace{-0.4em}{}^\ast \hspace{0.2em}  & \ottsym{(}  \ottsym{(}   2   \ottsym{:}    \textsf{\textup{int}\relax}  \Rightarrow  \unskip ^ { \ell }  \! \star   \ottsym{)}  \ottsym{,}  \ottsym{(}   \lambda  \ottmv{x} \!:\!   \textsf{\textup{bool}\relax}   .\,  \ottsym{(}  \ottsym{(}   \lambda  \ottmv{y} \!:\!  \ottmv{Y_{{\mathrm{2}}}}  .\,  \ottmv{y}   \ottsym{)}  \ottsym{:}   \ottmv{Y_{{\mathrm{2}}}}  \!\rightarrow\!  \ottmv{Y_{{\mathrm{2}}}} \Rightarrow  \unskip ^ { \ell }  \! \star  \!\rightarrow\!  \star   \ottsym{)}  \, \ottmv{x}  \ottsym{)} \,  \textsf{\textup{true}\relax}   \ottsym{)} \\
     \xmapsto{ \mathmakebox[0.4em]{} [  \ottmv{Y_{{\mathrm{2}}}}  :=   \textsf{\textup{bool}\relax}   ] \mathmakebox[0.3em]{} }\hspace{-0.4em}{}^\ast \hspace{0.2em}  & \ottsym{(}  \ottsym{(}   2   \ottsym{:}    \textsf{\textup{int}\relax}  \Rightarrow  \unskip ^ { \ell }  \! \star   \ottsym{)}  \ottsym{,}  \ottsym{(}   \textsf{\textup{true}\relax}   \ottsym{:}    \textsf{\textup{bool}\relax}  \Rightarrow  \unskip ^ { \ell }  \! \star   \ottsym{)}  \ottsym{)}
  \end{array}
\]
We do not assign fresh type variables when type
abstractions/applications are made explicit during cast insertion.  It is because an
expression, such as $g  [  \ottnt{T_{{\mathrm{1}}}}  \ottsym{,}  \ottnt{T_{{\mathrm{3}}}}  ]$, including type applications may be
duplicated during reduction and sharing a type variable among
duplicated expressions might cause unwanted blame.


\begin{figure}[tbp]
  {\small
    \input{figures/def_let_poly}}
  \caption{$\lambdaRTI$ with polymorphic let.}
  \label{fig:def_let_poly}
\end{figure}

Figure~\ref{fig:def_let_poly} shows the definition of the extension.
A type scheme, denoted by $\sigma$, is a gradual type abstracted over a
(possibly empty) finite sequence of type variables, denoted by
$ \overrightarrow{ \ottmv{X_{\ottmv{i}}} } $.  A type environment is changed so that it maps variables
to type schemes, instead of gradual types.  Terms are extended with
$ \textsf{\textup{let}\relax} $-expressions of the form $ \textsf{\textup{let}\relax} \,  \ottmv{x}  =   \Lambda    \overrightarrow{ \ottmv{X_{\ottmv{i}}} }  .\,  w   \textsf{\textup{ in }\relax}  \ottnt{f} $,
which bind $ \overrightarrow{ \ottmv{X_{\ottmv{i}}} } $ in value $w$,\footnote{%
A monomorphic definition $ \textsf{\textup{let}\relax} \,  \ottmv{x}  =  \ottnt{f_{{\mathrm{1}}}}  \textsf{\textup{ in }\relax}  \ottnt{f_{{\mathrm{2}}}} $ (where $\ottnt{f_{{\mathrm{1}}}}$ is not necessarily a value) can be expressed as
$\ottsym{(}   \lambda  \ottmv{x} .\,  \ottnt{f_{{\mathrm{2}}}}   \ottsym{)} \, \ottnt{f_{{\mathrm{1}}}}$ as usual.}
and variables $\ottmv{x}  [   \overrightarrow{ \mathbbsl{T}_{\ottmv{i}} }   ]$, which represent type application if
$\ottmv{x}$ is $ \textsf{\textup{let}\relax} $-bound, are now annotated with a sequence of type
arguments.\footnote{A variable introduced by a lambda abstraction is
  always monomorphic, so $ \overrightarrow{ \mathbbsl{T}_{\ottmv{i}} } $ is empty for such a variable.}
A type argument is either a static type $\ottnt{T}$ or the special symbol
$ \nu $.
%
Type substitution is defined in a capture-avoiding manner.

We adopt value restriction~\cite{DBLP:journals/lisp/Wright95}, i.e., the body
of a type abstraction has to be a syntactic value, for avoiding a subtle issue
in the dynamic semantics.
If we did not adopt value restriction, we would have to deal with
cast applications where a target type is a \emph{bound} type variable.
For example, consider the term
$ \textsf{\textup{let}\relax} \,  \ottmv{x}  =   \Lambda   \ottmv{X} .\,  w   \ottsym{:}    \textsf{\textup{int}\relax}  \Rightarrow  \unskip ^ { \ell_{{\mathrm{1}}} }  \!  \star \Rightarrow  \unskip ^ { \ell_{{\mathrm{2}}} }  \! \ottmv{X}    \textsf{\textup{ in }\relax}  \ottnt{f} $, which is actually ill
formed in our language because value restriction is violated.  It has a cast
with bound $\ottmv{X}$ as its target type.
The question is how the cast $w  \ottsym{:}    \textsf{\textup{int}\relax}  \Rightarrow  \unskip ^ { \ell_{{\mathrm{1}}} }  \!  \star \Rightarrow  \unskip ^ { \ell_{{\mathrm{2}}} }  \! \ottmv{X}  $ is evaluated.
We should not apply \rnp{R\_InstBase} here because the type of $\ottmv{x}$
would change from $\forall \, \ottmv{X}  \ottsym{.}  \ottmv{X}$ to $\forall \, \ottmv{X}  \ottsym{.}   \textsf{\textup{int}\relax} $.
It appears that there is no reasonable way to reduce this cast
further---after all, the semantics of the cast depends on what type is
substituted for $\ottmv{X}$, which may be instantiated in many ways
in $\ottnt{f}$.
Value restriction resolves this issue:\footnote{Making $ \textsf{\textup{let}\relax} $
  call-by-name~\cite{DBLP:conf/popl/Leroy93} may be another option.} a
cast with its target type being a bound type variable is executed only
after the type variable is instantiated.

The typing rules of $\lambdaRTI$ are also updated.  We replace the rule for variables with
the rule \rnp{T\_VarP} and add the rule \rnp{T\_LetP}.  The rule
\rnp{T\_LetP} is standard; it allows generalization by type variables
that do not appear free in $\Gamma$.  Note that it allows abstraction
by a type variable $\ottmv{X}$ even when $\ottmv{X}$ does not appear in
$\ottnt{U_{{\mathrm{1}}}}$.  As we have already seen such abstraction can be significant
in $\lambdaRTI$.  The second premise of the rule \rnp{T\_VarP}, which
represents type application, means that extra type variables that do
not appear in $\ottnt{U}$ have to be instantiated by $ \nu $ (and
other type variables by static types).  The type expression
$\ottnt{U}  [   \overrightarrow{ \ottmv{X_{\ottmv{i}}} }   \ottsym{:=}   \overrightarrow{ \mathbbsl{T}_{\ottmv{i}} }   ]$ is notational abuse but the result will not
contain $ \nu $ because the corresponding type variables do not
appear in $\ottnt{U}$.

The rule \rnp{R\_LetP} is an additional rule to reduce
$ \textsf{\textup{let}\relax} $-expressions.  Roughly speaking, $ \textsf{\textup{let}\relax} \,  \ottmv{x}  =   \Lambda    \overrightarrow{ \ottmv{X_{\ottmv{i}}} }  .\,  w   \textsf{\textup{ in }\relax}  \ottnt{f} $
reduces to $\ottnt{f}$ in which $w$ is substituted for $\ottmv{x}$ as
usual but, due to explicit type abstraction/application, the
definition of substitution is slightly peculiar: When a variable is
replaced with a type-abstracted value, type arguments $ \overrightarrow{ \mathbbsl{T}_{\ottmv{i}} } $
are also substituted, after fresh type variable generation, for
$ \overrightarrow{ \ottmv{X_{\ottmv{i}}} } $ in the value.  We formalize this idea as substitution of
the form $\ottnt{f}  [  \ottmv{x}  \ottsym{:=}   \Lambda    \overrightarrow{ \ottmv{X_{\ottmv{i}}} }  .\,  w   ]$, shown in the lower half of
Figure~\ref{fig:def_let_poly}; in the case for variables,
the length of a sequence $ \overrightarrow{ \ottmv{X_{\ottmv{i}}} } $ or $ \overrightarrow{ \ottnt{T_{\ottmv{i}}} } $ is
denoted by $|\cdot|$.
This nonstandard substitution makes correspondence to usual reduction
for $ \textsf{\textup{let}\relax} $ (that is, $ \textsf{\textup{let}\relax} \,  \ottmv{x}  =  w  \textsf{\textup{ in }\relax}  \ottnt{f}  \longrightarrow \ottnt{f}  [  \ottmv{x}  \ottsym{:=}  w  ]$)
easier to see.  Other reduction and evaluation rules, including
\rnp{R\_InstBase} and \rnp{R\_InstArrow}, remain unchanged.

\subsection{Discussion about the semantics of let-polymorphism}

Before proceeding further, we give a brief comparison with
\citet{DBLP:conf/popl/GarciaC15}, who have suggested that the
Polymorphic Blame Calculus
(PBC)~\cite{DBLP:conf/popl/AhmedFSW11,DBLP:journals/pacmpl/AhmedJSW17}
can be used to give the semantics of $ \textsf{\textup{let}\relax} $ in the ITGL.  Using the
PBC-style semantics for type abstraction/application would, however,
raise more of blame---because the PBC enforces parametricity at run
time---even when we just give subterms names by $ \textsf{\textup{let}\relax} $.

The difference between our approach and the PBC-based one
is exemplified by the following program in the ITGL:
\[
  \ottsym{(}   \lambda  \ottmv{x} .\,   1    \ottsym{+}  \ottsym{(}  \ottsym{(}   \lambda  \ottmv{y} \!:\!  \star  .\,  \ottmv{y}   \ottsym{)} \, \ottmv{x}  \ottsym{)}  \ottsym{)} \,  2 
\]
This program is translated into $\lambdaRTI$ and evaluated as follows:
\[
  \begin{array}{lll}
     \ottsym{(}   \lambda  \ottmv{x} .\,   1    \ottsym{+}  \ottsym{(}  \ottsym{(}   \lambda  \ottmv{y} \!:\!  \star  .\,  \ottmv{y}   \ottsym{)} \, \ottmv{x}  \ottsym{)}  \ottsym{)} \,  2  &
    \rightsquigarrow & \ottsym{(}   \lambda  \ottmv{x} \!:\!   \textsf{\textup{int}\relax}   .\,   1    \ottsym{+}  \ottsym{(}  \ottsym{(}  \ottsym{(}   \lambda  \ottmv{y} \!:\!  \star  .\,  \ottmv{y}   \ottsym{)} \, \ottsym{(}  \ottmv{x}  \ottsym{:}    \textsf{\textup{int}\relax}  \Rightarrow  \unskip ^ { \ell_{{\mathrm{1}}} }  \! \star   \ottsym{)}  \ottsym{)}  \ottsym{:}   \star \Rightarrow  \unskip ^ { \ell_{{\mathrm{2}}} }  \!  \textsf{\textup{int}\relax}    \ottsym{)}  \ottsym{)} \,  2  \\
    &  \xmapsto{ \mathmakebox[0.4em]{} [  ] \mathmakebox[0.3em]{} }\hspace{-0.4em}{}^\ast \hspace{0.2em}  & 3 \\
  \end{array}
\]
This term evaluates to a constant $ 3 $, as expected.
Next, let us rewrite this program so that it uses $ \textsf{\textup{let}\relax} $ to give
a name to the function $ \lambda  \ottmv{x} .\,   1    \ottsym{+}  \ottsym{(}  \ottsym{(}   \lambda  \ottmv{y} \!:\!  \star  .\,  \ottmv{y}   \ottsym{)} \, \ottmv{x}  \ottsym{)}$.  Then, this
program is translated into $\lambdaRTI$ and evaluated as follows:
\[
  \begin{array}{ll}
    &  \textsf{\textup{let}\relax} \,  g  =   \lambda  \ottmv{x} .\,   1    \ottsym{+}  \ottsym{(}  \ottsym{(}   \lambda  \ottmv{y} \!:\!  \star  .\,  \ottmv{y}   \ottsym{)} \, \ottmv{x}  \ottsym{)}  \textsf{\textup{ in }\relax}  g  \,  2  \\
    \rightsquigarrow &  \textsf{\textup{let}\relax} \,  g  =   \Lambda   \ottmv{X} .\,   \lambda  \ottmv{x} \!:\!  \ottmv{X}  .\,   1     \ottsym{+}  \ottsym{(}  \ottsym{(}  \ottsym{(}   \lambda  \ottmv{y} \!:\!  \star  .\,  \ottmv{y}   \ottsym{)} \, \ottsym{(}  \ottmv{x}  \ottsym{:}   \ottmv{X} \Rightarrow  \unskip ^ { \ell_{{\mathrm{1}}} }  \! \star   \ottsym{)}  \ottsym{)}  \ottsym{:}   \star \Rightarrow  \unskip ^ { \ell_{{\mathrm{2}}} }  \!  \textsf{\textup{int}\relax}    \ottsym{)}  \textsf{\textup{ in }\relax}  g   [   \textsf{\textup{int}\relax}   ] \,  2  \\
     \xmapsto{ \mathmakebox[0.4em]{} [  ] \mathmakebox[0.3em]{} }  & \ottsym{(}   \lambda  \ottmv{x} \!:\!   \textsf{\textup{int}\relax}   .\,   1    \ottsym{+}  \ottsym{(}  \ottsym{(}  \ottsym{(}   \lambda  \ottmv{y} \!:\!  \star  .\,  \ottmv{y}   \ottsym{)} \, \ottsym{(}  \ottmv{x}  \ottsym{:}    \textsf{\textup{int}\relax}  \Rightarrow  \unskip ^ { \ell_{{\mathrm{1}}} }  \! \star   \ottsym{)}  \ottsym{)}  \ottsym{:}   \star \Rightarrow  \unskip ^ { \ell_{{\mathrm{2}}} }  \!  \textsf{\textup{int}\relax}    \ottsym{)}  \ottsym{)} \,  2  \\
     \xmapsto{ \mathmakebox[0.4em]{} [  ] \mathmakebox[0.3em]{} }\hspace{-0.4em}{}^\ast \hspace{0.2em}  &  3  \\
\end{array}
\]
This program also evaluates to $ 3 $.  Notice that $g$ is given
a polymorphic type scheme $\forall \, \ottmv{X}  \ottsym{.}  \ottmv{X}  \!\rightarrow\!   \textsf{\textup{int}\relax} $ because $\ottmv{x}$ is
passed to a function that expects an argument of the dynamic type and
there is no constraint on its type.  The use of $g$ comes with the
type argument $ \textsf{\textup{int}\relax} $ so that it can be applied to an integer.

We show the evaluation sequence of the same term using the
PBC-style semantics of type abstraction and application below:
\[
  \begin{array}{ll}
    &  \textsf{\textup{let}\relax} \,  g  =   \Lambda   \ottmv{X} .\,   \lambda  \ottmv{x} \!:\!  \ottmv{X}  .\,   1     \ottsym{+}  \ottsym{(}  \ottsym{(}  \ottsym{(}   \lambda  \ottmv{y} \!:\!  \star  .\,  \ottmv{y}   \ottsym{)} \, \ottsym{(}  \ottmv{x}  \ottsym{:}   \ottmv{X} \Rightarrow  \unskip ^ { \ell_{{\mathrm{1}}} }  \! \star   \ottsym{)}  \ottsym{)}  \ottsym{:}   \star \Rightarrow  \unskip ^ { \ell_{{\mathrm{2}}} }  \!  \textsf{\textup{int}\relax}    \ottsym{)}  \textsf{\textup{ in }\relax}  g   [   \textsf{\textup{int}\relax}   ] \,  2  \\
     \xmapsto{ \mathmakebox[0.4em]{} [  ] \mathmakebox[0.3em]{} }  & \ottsym{(}   \Lambda   \ottmv{X} .\,   \lambda  \ottmv{x} \!:\!  \ottmv{X}  .\,   1     \ottsym{+}  \ottsym{(}  \ottsym{(}  \ottsym{(}   \lambda  \ottmv{y} \!:\!  \star  .\,  \ottmv{y}   \ottsym{)} \, \ottsym{(}  \ottmv{x}  \ottsym{:}   \ottmv{X} \Rightarrow  \unskip ^ { \ell_{{\mathrm{1}}} }  \! \star   \ottsym{)}  \ottsym{)}  \ottsym{:}   \star \Rightarrow  \unskip ^ { \ell_{{\mathrm{2}}} }  \!  \textsf{\textup{int}\relax}    \ottsym{)}  \ottsym{)}  [   \textsf{\textup{int}\relax}   ] \,  2  \\
     \xmapsto{ \mathmakebox[0.4em]{} [  ] \mathmakebox[0.3em]{} }  & \graybox{\nu\ottmv{X}:= \textsf{\textup{int}\relax} }. \ottsym{(}   \lambda  \ottmv{x} \!:\!  \ottmv{X}  .\,   1    \ottsym{+}  \ottsym{(}  \ottsym{(}  \ottsym{(}   \lambda  \ottmv{y} \!:\!  \star  .\,  \ottmv{y}   \ottsym{)} \, \ottsym{(}  \ottmv{x}  \ottsym{:}   \ottmv{X} \Rightarrow  \unskip ^ { \ell_{{\mathrm{1}}} }  \! \star   \ottsym{)}  \ottsym{)}  \ottsym{:}   \star \Rightarrow  \unskip ^ { \ell_{{\mathrm{2}}} }  \!  \textsf{\textup{int}\relax}    \ottsym{)}  \ottsym{)} \,  2 
  \end{array}
\]
When a polymorphic function is applied to a type argument, a type
binding $\nu\ottmv{X}:= \textsf{\textup{int}\relax} $ is generated, instead of substituting
$ \textsf{\textup{int}\relax} $ for $\ottmv{X}$.  As discussed by \citet{DBLP:conf/popl/AhmedFSW11,DBLP:journals/pacmpl/AhmedJSW17},
the generation of type bindings is crucial to ensure parametricity
dynamically: $\ottmv{X}$ behaves as if it is a fresh base type in the body
of type abstraction, so evaluation proceeds as follows:
\[
  \begin{array}{rll}
\cdots &  \xmapsto{ \mathmakebox[0.4em]{} [  ] \mathmakebox[0.3em]{} }  & \nu\ottmv{X}:= \textsf{\textup{int}\relax} .  1   \ottsym{+}  \ottsym{(}  \ottsym{(}  \ottsym{(}   \lambda  \ottmv{y} \!:\!  \star  .\,  \ottmv{y}   \ottsym{)} \, \ottsym{(}   2   \ottsym{:}   \ottmv{X} \Rightarrow  \unskip ^ { \ell_{{\mathrm{1}}} }  \! \star   \ottsym{)}  \ottsym{)}  \ottsym{:}   \star \Rightarrow  \unskip ^ { \ell_{{\mathrm{2}}} }  \!  \textsf{\textup{int}\relax}    \ottsym{)} \\
    &  \xmapsto{ \mathmakebox[0.4em]{} [  ] \mathmakebox[0.3em]{} }\hspace{-0.4em}{}^\ast \hspace{0.2em}  & \nu\ottmv{X}:= \textsf{\textup{int}\relax} .  1   \ottsym{+}   \graybox{  \ottsym{(}   2   \ottsym{:}   \ottmv{X} \Rightarrow  \unskip ^ { \ell_{{\mathrm{1}}} }  \!  \star \Rightarrow  \unskip ^ { \ell_{{\mathrm{2}}} }  \!  \textsf{\textup{int}\relax}     \ottsym{)}  }  \\
    &  \xmapsto{ \mathmakebox[0.4em]{} [  ] \mathmakebox[0.3em]{} }\hspace{-0.4em}{}^\ast \hspace{0.2em}  & \textsf{\textup{blame}\relax} \, \ell_{{\mathrm{2}}} \\
  \end{array}
\]
The cast $( 2   \ottsym{:}   \ottmv{X} \Rightarrow  \unskip ^ { \ell_{{\mathrm{1}}} }  \!  \star \Rightarrow  \unskip ^ { \ell_{{\mathrm{2}}} }  \!  \textsf{\textup{int}\relax}   )$ fails and this program evaluates
to $\textsf{\textup{blame}\relax} \, \ell_{{\mathrm{2}}}$, instead of $3$!  In fact, according to
parametricity, the polymorphic type $\forall \, \ottmv{X}  \ottsym{.}  \ottmv{X}  \!\rightarrow\!   \textsf{\textup{int}\relax} $ should
behave the same regardless of $\ottmv{X}$ and so it should uniformly
return an integer constant or uniformly fail.  From the viewpoint of
parametricity, this is completely reasonable behavior.

In general, there is a conflict between polymorphism and blame in the
PBC-based semantics.  If $\ottmv{X}$ were not generalized at $ \textsf{\textup{let}\relax} $ in
the example above, $\ottmv{X}$ would be unified with $ \textsf{\textup{int}\relax} $ and the
resulting term
\[
 \textsf{\textup{let}\relax} \,  g  =   \lambda  \ottmv{x} \!:\!   \textsf{\textup{int}\relax}   .\,   1    \ottsym{+}  \ottsym{(}  \ottsym{(}  \ottsym{(}   \lambda  \ottmv{y} \!:\!  \star  .\,  \ottmv{y}   \ottsym{)} \, \ottsym{(}  \ottmv{x}  \ottsym{:}    \textsf{\textup{int}\relax}  \Rightarrow  \unskip ^ { \ell_{{\mathrm{1}}} }  \! \star   \ottsym{)}  \ottsym{)}  \ottsym{:}   \star \Rightarrow  \unskip ^ { \ell_{{\mathrm{2}}} }  \!  \textsf{\textup{int}\relax}    \ottsym{)}  \textsf{\textup{ in }\relax}  g  \,  2  
\]
would evaluate to $ 3 $ without blame.  (This translation may be
obtained by adapting the type inference algorithm to minimize the
introduction of polymorphism~\cite{Bjorner94ML} to gradual typing.)
The conflict can be considered another instance of incoherence in the
sense that translations that differ in how $ \textsf{\textup{let}\relax} $ is generalized
may result in different behavior.

We expect that our semantics is coherent and that operational
equivalence between $ \textsf{\textup{let}\relax} \,  \ottmv{x}  =  w  \textsf{\textup{ in }\relax}  \ottnt{f} $ and $\ottnt{f}  [  \ottmv{x}  \ottsym{:=}  w  ]$ holds
(whereas it does not always hold in the PBC-based semantics as we saw
above).  However, they are achieved by sacrificing (dynamically
enforced) parametricity.

\section{Basic Properties of $\lambdaRTI$} \label{sec:properties} In
this section, we show basic properties of $\lambdaRTI$: type safety and 
conservativity over the standard blame calculus extended to
let-polymorphism.

We first define 
the predicate $ \ottnt{f} \!  \Uparrow  $ to mean that $\ottnt{f}$ diverges.



\begin{definition}[Divergence]
  We write $ \ottnt{f} \!  \Uparrow  $ if there is an infinite evaluation sequence from $\ottnt{f}$.
\end{definition}

We write $\textit{ftv} \, \ottsym{(}  \ottnt{f}  \ottsym{)}$, $\textit{ftv} \, \ottsym{(}  \ottnt{U}  \ottsym{)}$, and $\textit{ftv} \, \ottsym{(}  \Gamma  \ottsym{)}$ for the set of
free type variables in $\ottnt{f}$, $\ottnt{U}$, and $\Gamma$, respectively.
We also write $\textit{dom} \, \ottsym{(}  S  \ottsym{)}$ for the domain of the type substitution $S$.

\subsection{Type Safety}
We show the type safety property
using the progress and preservation lemmas~\cite{WrightFelleisenIC94}.  First,
we state the canonical forms property below.
As we have already mentioned, this lemma also means that no values are
typed at type variables and this is why type variables are not 
ground types.

\begin{lemma}[name=Canonical Forms,restate=lemCanonicalForms] \label{lem:canonical_forms}
  If $ \emptyset   \vdash  w  \ottsym{:}  \ottnt{U}$, then one of the following holds:
  \begin{itemize}
    \item $\ottnt{U}  \ottsym{=}  \iota$ and $w  \ottsym{=}  \ottnt{c}$ for some $\iota$ and $\ottnt{c}$;
    \item $\ottnt{U}  \ottsym{=}  \ottnt{U_{{\mathrm{1}}}}  \!\rightarrow\!  \ottnt{U_{{\mathrm{2}}}}$ and $w  \ottsym{=}   \lambda  \ottmv{x} \!:\!  \ottnt{U_{{\mathrm{1}}}}  .\,  \ottnt{f} $ for some $\ottmv{x}$, $\ottnt{f}$, $\ottnt{U_{{\mathrm{1}}}}$, and $\ottnt{U_{{\mathrm{2}}}}$;
    \item $\ottnt{U}  \ottsym{=}  \ottnt{U_{{\mathrm{1}}}}  \!\rightarrow\!  \ottnt{U_{{\mathrm{2}}}}$ and $w  \ottsym{=}  w'  \ottsym{:}   \ottnt{U'_{{\mathrm{1}}}}  \!\rightarrow\!  \ottnt{U'_{{\mathrm{2}}}} \Rightarrow  \unskip ^ { \ell }  \! \ottnt{U_{{\mathrm{1}}}}  \!\rightarrow\!  \ottnt{U_{{\mathrm{2}}}} $
      for some $w'$, $\ottnt{U_{{\mathrm{1}}}}, \ottnt{U_{{\mathrm{2}}}}, \ottnt{U'_{{\mathrm{1}}}}$, $\ottnt{U'_{{\mathrm{2}}}}$, and $\ell$; or
    \item $\ottnt{U}  \ottsym{=}  \star$ and $w  \ottsym{=}  w'  \ottsym{:}   \ottnt{G} \Rightarrow  \unskip ^ { \ell }  \! \star $
      for some $w'$, $\ottnt{G}$, and $\ell$.
  \end{itemize}
\end{lemma}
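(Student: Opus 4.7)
The plan is to prove the canonical forms lemma by case analysis on the syntax of the value $w$, combined with inversion on the typing derivation $\emptyset \vdash w : U$. Since the grammar of values is restricted to four shapes, we consider each in turn and read off the shape of $U$.

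First, I would fix $w$ and distinguish the four productions of the value grammar. If $w = c$, the only applicable typing rule is \rn{T\_Const}, so $U = \mathit{ty}(c)$, which is a base type $\iota$ by assumption on $\mathit{ty}$. If $w = \lambda x\!:\!U_1.\,f$, only \rn{T\_Abs} applies, and inversion gives $U = U_1 \to U_2$ for some $U_2$ such that $x:U_1 \vdash f : U_2$. If $w = w' : U'_1 \to U'_2 \Rightarrow^\ell U_1 \to U_2$, only \rn{T\_Cast} applies, so the type of the whole cast is the target type $U_1 \to U_2$, giving the third clause. Finally, if $w = w' : G \Rightarrow^\ell \star$, again only \rn{T\_Cast} applies, and the target type is $\star$, yielding the fourth clause.

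Two small points deserve attention. One is that the value grammar already excludes casts whose target type is a type variable (casts into $\star$ must come from a ground type $G$, and $G$ ranges only over base types and $\star\to\star$, never over $\ottmv{X}$), which is precisely why type variables need not appear as the type of a value in the conclusion. The other is that under \rn{T\_Cast} we also need $U'_1 \to U'_2 \sim U_1 \to U_2$ (respectively $G \sim \star$), but we do not need to use consistency here — we only read off the target type of the cast. The typing rules for the polymorphic fragment (\rn{T\_VarP}, \rn{T\_LetP}) never apply since $w$ is not a variable or a let-expression.

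I do not expect any real obstacle: the lemma is a routine structural inversion and no tricky interaction with type variables, DTI, or $\nu$ arises, because none of the value forms are type applications or let-bindings. The most delicate step, if any, is merely observing that the productions for values syntactically rule out casts ending in a type variable, which justifies that the fourth clause lists $\star$ (via a ground tag) rather than an arbitrary gradual type.
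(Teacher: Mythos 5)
Your proposal is correct and matches the paper's own proof in substance: the paper argues by case analysis on the last typing rule used to derive $ \emptyset   \vdash  w  \ottsym{:}  \ottnt{U}$, which for values coincides exactly with your case analysis on the value grammar followed by inversion, since each value shape admits precisely one applicable rule. Your side remarks---that the value grammar already forbids casts targeting a type variable (because ground types exclude $\ottmv{X}$) and that \rn{T\_VarP}/\rn{T\_LetP} cannot apply---are exactly the observations the paper relies on.
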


\iffull\else
\begin{proof}
  By case analysis on the last typing rule applied to derive
$ \emptyset   \vdash  w  \ottsym{:}  \ottnt{U}$.
\end{proof}
\fi

The progress lemma is standard: A well-typed term can be evaluated one step further, or is a value or blame.

\begin{lemma}[name=Progress,restate=lemProgress] \label{lem:progress}
  If $ \emptyset   \vdash  \ottnt{f}  \ottsym{:}  \ottnt{U}$, then one of the following holds:
\iffull
  \begin{itemize}
    \item $\ottnt{f} \,  \xmapsto{ \mathmakebox[0.4em]{} S \mathmakebox[0.3em]{} }  \, \ottnt{f'}$ for some $S$ and $\ottnt{f'}$;
    \item $\ottnt{f}$ is a value; or
    \item $\ottnt{f}  \ottsym{=}  \textsf{\textup{blame}\relax} \, \ell$ for some $\ell$.
    \end{itemize}
    \else
    (1) $\ottnt{f} \,  \xmapsto{ \mathmakebox[0.4em]{} S \mathmakebox[0.3em]{} }  \, \ottnt{f'}$ for some $S$ and $\ottnt{f'}$;
    (2) $\ottnt{f}$ is a value; or
    (3) $\ottnt{f}  \ottsym{=}  \textsf{\textup{blame}\relax} \, \ell$ for some $\ell$.
    \fi
\end{lemma}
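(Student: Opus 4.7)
The proof proceeds by induction on the derivation of $\emptyset \vdash f : U$. Most cases are immediate or straightforward. The rules \rnp{T\_Const}, \rnp{T\_Abs}, and \rnp{T\_Blame} directly yield case (2) or (3). The rules \rnp{T\_Var} and \rnp{T\_VarP} are vacuous because $\emptyset$ contains no variable bindings. For \rnp{T\_Op} and \rnp{T\_App}, I apply the induction hypothesis to the subterms; if any subterm steps under $\ottnt{E}$ or is $\textsf{\textup{blame}\relax} \, \ell$, the whole term steps by \rnp{E\_Step} or \rnp{E\_Abort}. When all subterms are values, Canonical Forms (Lemma~\ref{lem:canonical_forms}) pins down their shapes: for \rnp{T\_Op}, both arguments are base-type constants so \rnp{R\_Op} applies; for \rnp{T\_App}, the function is either a $\lambda$-abstraction (firing \rnp{R\_Beta}) or a wrapped function (firing \rnp{R\_AppCast}). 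For \rnp{T\_LetP}, \rnp{R\_LetP} always fires because the body of the type abstraction is syntactically a value by value restriction.

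The main work, and the main obstacle, is the \rnp{T\_Cast} case $\ottnt{f}  \ottsym{:}   \ottnt{U_{{\mathrm{1}}}} \Rightarrow  \unskip ^ { \ell }  \! \ottnt{U_{{\mathrm{2}}}} $ with $\ottnt{U_{{\mathrm{1}}}}  \sim  \ottnt{U_{{\mathrm{2}}}}$: once the inner term is a value $w$, I must exhibit either a reduction step or show that the whole cast is itself a value. I do exhaustive case analysis on the pair $(\ottnt{U_{{\mathrm{1}}}}, \ottnt{U_{{\mathrm{2}}}})$, using consistency to eliminate many combinations and Canonical Forms to inspect $w$. The key observation is that Canonical Forms rules out $\ottnt{U_{{\mathrm{1}}}}  \ottsym{=}  \ottmv{X}$ entirely, since no value has a type-variable type; this is what makes it safe to omit reduction rules for casts whose source is a type variable. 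For $\ottnt{U_{{\mathrm{1}}}}  \ottsym{=}  \iota$ consistency forces $\ottnt{U_{{\mathrm{2}}}} \in \{\iota, \star\}$, giving \rnp{R\_IdBase} or a value. For $\ottnt{U_{{\mathrm{1}}}}  \ottsym{=}  \ottnt{U_{{\mathrm{11}}}}  \!\rightarrow\!  \ottnt{U_{{\mathrm{12}}}}$, the only non-value possibility is $\ottnt{U_{{\mathrm{2}}}} = \star$ with $\ottnt{U_{{\mathrm{1}}}} \neq \star  \!\rightarrow\!  \star$, which is handled by \rnp{R\_Ground} using Lemma~\ref{lem:ground_types} to produce the unique ground type.

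The richest sub-case is $\ottnt{U_{{\mathrm{1}}}}  \ottsym{=}  \star$, where Canonical Forms gives $w  \ottsym{=}  w'  \ottsym{:}   \ottnt{G} \Rightarrow  \unskip ^ { \ell' }  \! \star $. I split on $\ottnt{U_{{\mathrm{2}}}}$: for $\star$, use \rnp{R\_IdStar}; for a ground $\ottnt{U_{{\mathrm{2}}}}$ (base type or $\star  \!\rightarrow\!  \star$), compare with $\ottnt{G}$ and apply \rnp{R\_Succeed} or \rnp{R\_Fail}; for $\ottnt{U_{{\mathrm{2}}}}  \ottsym{=}  \ottmv{X}$, dispatch on $\ottnt{G}$, using \rnp{R\_InstBase} when $\ottnt{G}  \ottsym{=}  \iota$ and \rnp{R\_InstArrow} when $\ottnt{G}  \ottsym{=}  \star  \!\rightarrow\!  \star$; for a non-ground $\ottnt{U_{{\mathrm{2}}}}  \ottsym{=}  \ottnt{U_{{\mathrm{21}}}}  \!\rightarrow\!  \ottnt{U_{{\mathrm{22}}}}$, use \rnp{R\_Expand}, relying again on Lemma~\ref{lem:ground_types} to know the associated ground type exists and is $\star  \!\rightarrow\!  \star$. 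Ensuring that these cases are jointly exhaustive and mutually exclusive (so that exactly one reduction rule applies) is the point that requires the most care, because the side conditions of \rnp{R\_Ground} and \rnp{R\_Expand} explicitly exclude the ground-type boundary cases that are handled by the DTI and injection/projection rules instead.
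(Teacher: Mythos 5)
Your proposal is correct and follows essentially the same route as the paper's proof: induction on the typing derivation, the generic subterm-steps/blame argument via \rnp{E\_Step}/\rnp{E\_Abort}, Canonical Forms (Lemma~\ref{lem:canonical_forms}) to pin down value shapes and to rule out values typed at type variables, and Lemma~\ref{lem:ground_types} for the \rnp{R\_Ground}/\rnp{R\_Expand} branches. The only cosmetic difference is that you organize the \rnp{T\_Cast} case by the shapes of the two types while the paper cases on the consistency derivation $\ottnt{U_{{\mathrm{1}}}}  \sim  \ottnt{U_{{\mathrm{2}}}}$, which yields the same branches.
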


\iffull\else
\begin{proof}
  By induction on the typing derivation using Lemma \ref{lem:canonical_forms}.
\end{proof}
\fi

The statement of the preservation lemma is slightly different from a
standard one, because our semantics, which is equipped with DTI, may
substitute type variables with some static type during reduction and
evaluation.

\begin{lemma}[name=Preservation,restate=lemPreservation] \label{lem:preservation}
  Suppose that $ \emptyset   \vdash  \ottnt{f}  \ottsym{:}  \ottnt{U}$.
\iffull
  \begin{enumerate}
    \item If $\ottnt{f} \,  \xrightarrow{ \mathmakebox[0.4em]{} S \mathmakebox[0.3em]{} }  \, \ottnt{f'}$, then $ \emptyset   \vdash  S  \ottsym{(}  \ottnt{f'}  \ottsym{)}  \ottsym{:}  S  \ottsym{(}  \ottnt{U}  \ottsym{)}$.
    \item If $\ottnt{f} \,  \xmapsto{ \mathmakebox[0.4em]{} S \mathmakebox[0.3em]{} }  \, \ottnt{f'}$, then $ \emptyset   \vdash  \ottnt{f'}  \ottsym{:}  S  \ottsym{(}  \ottnt{U}  \ottsym{)}$.
    \end{enumerate}
    \else
    (1) If $\ottnt{f} \,  \xrightarrow{ \mathmakebox[0.4em]{} S \mathmakebox[0.3em]{} }  \, \ottnt{f'}$, then $ \emptyset   \vdash  S  \ottsym{(}  \ottnt{f'}  \ottsym{)}  \ottsym{:}  S  \ottsym{(}  \ottnt{U}  \ottsym{)}$ and (2) if $\ottnt{f} \,  \xmapsto{ \mathmakebox[0.4em]{} S \mathmakebox[0.3em]{} }  \, \ottnt{f'}$, then $ \emptyset   \vdash  \ottnt{f'}  \ottsym{:}  S  \ottsym{(}  \ottnt{U}  \ottsym{)}$.
    \fi
\end{lemma}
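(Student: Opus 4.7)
The plan is to prove (1) by case analysis on the derivation of $\ottnt{f} \,  \xrightarrow{ \mathmakebox[0.4em]{} S \mathmakebox[0.3em]{} }  \, \ottnt{f'}$ and then derive (2) from (1) together with a context-typing decomposition lemma. Before starting the case analysis, I will need a handful of auxiliary facts: (i) a term substitution lemma of the shape ``if $ \Gamma ,   \ottmv{x}  :  \ottnt{U_{{\mathrm{1}}}}    \vdash  \ottnt{f}  \ottsym{:}  \ottnt{U_{{\mathrm{2}}}}$ and $\Gamma  \vdash  w  \ottsym{:}  \ottnt{U_{{\mathrm{1}}}}$ then $\Gamma  \vdash  \ottnt{f}  [  \ottmv{x}  \ottsym{:=}  w  ]  \ottsym{:}  \ottnt{U_{{\mathrm{2}}}}$'' for \rnp{R\_Beta}; (ii) a polymorphic substitution lemma for the peculiar substitution $\ottnt{f}  [  \ottmv{x}  \ottsym{:=}   \Lambda    \overrightarrow{ \ottmv{X_{\ottmv{i}}} }  .\,  w   ]$ used in \rnp{R\_LetP}, which requires reasoning about how $ \nu $ is expanded into fresh type variables and then combined with Lemma 2.1 (type-substitution preservation); (iii) the type-substitution preservation lemma itself (already stated); and (iv) a context typing lemma: if $\Gamma  \vdash  \ottnt{E}  [  \ottnt{f}  ]  \ottsym{:}  \ottnt{U}$, then $\Gamma  \vdash  \ottnt{f}  \ottsym{:}  \ottnt{U'}$ for some $\ottnt{U'}$, and moreover plugging any $\ottnt{f''}$ of type $\ottnt{U'}$ back in yields $\Gamma  \vdash  \ottnt{E}  [  \ottnt{f''}  ]  \ottsym{:}  \ottnt{U}$. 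A companion fact is that $S$-substitution commutes with evaluation contexts so that $S  \ottsym{(}  \ottnt{E}  [  \ottnt{f'}  ]  \ottsym{)}$ splits cleanly into $S  \ottsym{(}  \ottnt{E}  \ottsym{)}$ with hole $S  \ottsym{(}  \ottnt{f'}  \ottsym{)}$.

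For part (1), most reduction rules emit the empty substitution $[  ]$ and are routine. \rnp{R\_Op} follows from the assumption on $ \llbracket\mathit{op}\rrbracket $; \rnp{R\_Beta} uses (i); \rnp{R\_IdBase}, \rnp{R\_IdStar}, \rnp{R\_Succeed}, \rnp{R\_Ground}, \rnp{R\_Expand} are by straightforward inversion of \rnp{T\_Cast} together with reflexivity/symmetry of $ \sim $ and Lemma 2.3 on ground types. \rnp{R\_Fail} is \rnp{T\_Blame}. \rnp{R\_AppCast} requires inverting \rnp{T\_App} and \rnp{T\_Cast} repeatedly and checking consistency of the decomposed casts, which is immediate from \rnp{C\_Arrow}. \rnp{R\_LetP} is handled by (ii). The genuinely new cases are the two DTI rules. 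For \rnp{R\_InstBase} with $S = [  \ottmv{X}  :=  \iota  ]$, inversion gives $\ottnt{f}$ of type $\ottmv{X}$, and canonical forms (Lemma 3.1) forces the value under the injection to have a base type; since values are closed, $S$ acts trivially on $\ottnt{f'} = w$, and $S  \ottsym{(}  \ottmv{X}  \ottsym{)}  \ottsym{=}  \iota$ matches the type of $w$. For \rnp{R\_InstArrow} with $S = [  \ottmv{X}  :=  \ottmv{X_{{\mathrm{1}}}}  \!\rightarrow\!  \ottmv{X_{{\mathrm{2}}}}  ]$ and $\ottmv{X_{{\mathrm{1}}}}, \ottmv{X_{{\mathrm{2}}}}$ fresh, I retype the right-hand side by \rnp{T\_Cast} applied three times, using $\star  \!\rightarrow\!  \star  \sim  \star$, $\star  \sim  \star$, and $\star  \!\rightarrow\!  \star  \sim  \ottmv{X_{{\mathrm{1}}}}  \!\rightarrow\!  \ottmv{X_{{\mathrm{2}}}}$ (the last via \rnp{C\_Arrow} and \rnp{C\_DynL}); freshness ensures $S$ does not disturb the typing of $w$.

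For part (2), the case \rnp{E\_Step} is where the bookkeeping is the most delicate: starting from $ \emptyset   \vdash  \ottnt{E}  [  \ottnt{f}  ]  \ottsym{:}  \ottnt{U}$, the context lemma (iv) produces $\ottnt{U'}$ with $ \emptyset   \vdash  \ottnt{f}  \ottsym{:}  \ottnt{U'}$, and by (1) we get $ \emptyset   \vdash  S  \ottsym{(}  \ottnt{f'}  \ottsym{)}  \ottsym{:}  S  \ottsym{(}  \ottnt{U'}  \ottsym{)}$; then Lemma 2.1 applied to the plugged-in context yields $ \emptyset   \vdash  S  \ottsym{(}  \ottnt{E}  [  \ottnt{f'}  ]  \ottsym{)}  \ottsym{:}  S  \ottsym{(}  \ottnt{U}  \ottsym{)}$. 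The case \rnp{E\_Abort} is immediate from \rnp{T\_Blame} with $S  \ottsym{=}  [  ]$.

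The main obstacle will be the polymorphic substitution lemma (ii) for \rnp{R\_LetP}. The definition of $\ottnt{f}  [  \ottmv{x}  \ottsym{:=}   \Lambda    \overrightarrow{ \ottmv{X_{\ottmv{i}}} }  .\,  w   ]$ is nonstandard because at each occurrence $\ottmv{x}  [   \overrightarrow{ \mathbbsl{T}_{\ottmv{i}} }   ]$ it must (a) replace entries tagged $ \nu $ with \emph{freshly generated} type variables and (b) then apply the resulting static-type substitution to $w$. The freshness side-condition means the same \emph{source} value $w$ is copied with different type variables at each use site, so the proof must track the invariant that these fresh variables do not clash with $\textit{ftv} \, \ottsym{(}  \Gamma  \ottsym{)}$ or with already-existing type variables of the surrounding term, and must reconcile the typing derivation of $ \textsf{\textup{let}\relax} \,  \ottmv{x}  =   \Lambda    \overrightarrow{ \ottmv{X_{\ottmv{i}}} }  .\,  w   \textsf{\textup{ in }\relax}  \ottnt{f} $ produced by \rnp{T\_LetP} with the per-occurrence renaming demanded by \rnp{T\_VarP}. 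I expect to set up an auxiliary lemma saying that if $\Gamma  \vdash  w  \ottsym{:}  \ottnt{U_{{\mathrm{1}}}}$ with $ \overrightarrow{ \ottmv{X_{\ottmv{i}}} }   \cap  \textit{ftv} \, \ottsym{(}  \Gamma  \ottsym{)}  \ottsym{=}   \emptyset $ and $ \Gamma ,   \ottmv{x}  :  \forall \,  \overrightarrow{ \ottmv{X_{\ottmv{i}}} }   \ottsym{.}  \ottnt{U_{{\mathrm{1}}}}    \vdash  \ottnt{f_{{\mathrm{2}}}}  \ottsym{:}  \ottnt{U_{{\mathrm{2}}}}$, then $\Gamma  \vdash  \ottnt{f_{{\mathrm{2}}}}  [  \ottmv{x}  \ottsym{:=}   \Lambda    \overrightarrow{ \ottmv{X_{\ottmv{i}}} }  .\,  w   ]  \ottsym{:}  \ottnt{U_{{\mathrm{2}}}}$, proved by induction on the typing of $\ottnt{f_{{\mathrm{2}}}}$ with Lemma 2.1 used at the \rnp{T\_VarP} case to transport the typing of $w$ through the per-occurrence type substitution (with $ \nu $-entries first turned into fresh variables).
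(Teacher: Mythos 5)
Your proposal is correct and follows essentially the same route as the paper: part (1) by exhausting the reduction rules (the paper organizes this as a case analysis on the typing rule with sub-cases on the applicable reductions, which is the same content), the polymorphic substitution lemma for \rnp{R\_LetP} proved by induction on the typing derivation with the type-substitution lemma invoked at \rnp{T\_VarP} (this is exactly the paper's Lemma~\ref{lem:term_var_substitution}), and part (2) reduced to part (1) plus type substitution. The one organizational difference is in \rnp{E\_Step}: you package the context reasoning as a decomposition-and-replugging lemma, whereas the paper simply does an inner induction on the structure of $\ottnt{E}$, re-deriving the typing of $S  \ottsym{(}  \ottnt{E}  [  \ottnt{f'}  ]  \ottsym{)}$ piece by piece with Lemma~2.1 applied to the non-hole subterms. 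Your version needs a little care to state the replugging lemma so that it is stable under type substitution (you must plug a term of type $S  \ottsym{(}  \ottnt{U'}  \ottsym{)}$ into $S  \ottsym{(}  \ottnt{E}  \ottsym{)}$, not of type $\ottnt{U'}$ into $\ottnt{E}$), but both formulations work. Two small justifications in your DTI cases should be repaired, although the conclusions stand: for \rnp{R\_InstBase}, ``values are closed so $S$ acts trivially'' is not the right reason (values may contain free type variables); what saves you is that canonical forms forces $w$ to be a constant, or more uniformly, Lemma~2.1 gives $ \emptyset   \vdash  S  \ottsym{(}  w  \ottsym{)}  \ottsym{:}  S  \ottsym{(}  \iota  \ottsym{)}  \ottsym{=}  \iota  \ottsym{=}  S  \ottsym{(}  \ottmv{X}  \ottsym{)}$. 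Likewise for \rnp{R\_InstArrow}, freshness of $\ottmv{X_{{\mathrm{1}}}},\ottmv{X_{{\mathrm{2}}}}$ does not prevent $S  \ottsym{=}  [  \ottmv{X}  :=  \ottmv{X_{{\mathrm{1}}}}  \!\rightarrow\!  \ottmv{X_{{\mathrm{2}}}}  ]$ from acting nontrivially on $w$ (which may contain $\ottmv{X}$ free); the correct step is again Lemma~2.1, giving $ \emptyset   \vdash  S  \ottsym{(}  w  \ottsym{)}  \ottsym{:}  \star  \!\rightarrow\!  \star$, after which your three applications of \rnp{T\_Cast} go through as in the paper.
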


\iffull\else
\begin{proof}
  \leavevmode
  \begin{enumerate}
    \item By case analysis on the typing rule applied to $\ottnt{f}$.
    \item By case analysis on the evaluation rule applied to $\ottnt{f}$.
      In the case \rnp{E\_Step}, use induction on the structure of $\ottnt{E}$.
      \qedhere
  \end{enumerate}
\end{proof}
\fi

The second item (for the evaluation relation) does not require $S$
to be applied to $\ottnt{f'}$ in the type judgment (whereas the first item
does) because it is already applied---see \rnp{E\_Step}.

Finally, type safety holds in our blame calculus using the above
lemmas.  The following statement is also slightly different from a
standard one in the literature for the same reason as the preservation.

\begin{theorem}[name=Type Safety,restate=thmTypeSafety] \label{thm:type_safety}
  If $ \emptyset   \vdash  \ottnt{f}  \ottsym{:}  \ottnt{U}$,
  then one of the following holds:
  \iffull
  \begin{itemize}
    \item $\ottnt{f} \,  \xmapsto{ \mathmakebox[0.4em]{} S \mathmakebox[0.3em]{} }\hspace{-0.4em}{}^\ast \hspace{0.2em}  \, \ottnt{r}$ for some $S$ and $\ottnt{r}$ such that $ \emptyset   \vdash  \ottnt{r}  \ottsym{:}  S  \ottsym{(}  \ottnt{U}  \ottsym{)}$; or
    \item $ \ottnt{f} \!  \Uparrow  $.
    \end{itemize}
    \else
    (1) $\ottnt{f} \,  \xmapsto{ \mathmakebox[0.4em]{} S \mathmakebox[0.3em]{} }\hspace{-0.4em}{}^\ast \hspace{0.2em}  \, \ottnt{r}$ for some $S$ and $\ottnt{r}$ such that $ \emptyset   \vdash  \ottnt{r}  \ottsym{:}  S  \ottsym{(}  \ottnt{U}  \ottsym{)}$; or
    (2) $ \ottnt{f} \!  \Uparrow  $.
    \fi
\end{theorem}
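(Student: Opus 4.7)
The plan is to derive Type Safety in the standard Wright--Felleisen style by iterating the Progress and Preservation lemmas (Lemmas~\ref{lem:progress} and \ref{lem:preservation}), with the extra bookkeeping that each evaluation step carries a type substitution and these substitutions must be composed along the sequence.

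Starting from $\emptyset \vdash f : U$, I would apply Progress (Lemma~\ref{lem:progress}): either $f$ is already a result $r$ (a value or $\textsf{blame}\,\ell$), in which case the conclusion holds with $S = [\,]$ and $S(U) = U$; or there exist $S_1$ and $f_1$ with $f \xmapsto{S_1} f_1$. In the latter case, Preservation (Lemma~\ref{lem:preservation}, item 2) gives $\emptyset \vdash f_1 : S_1(U)$. Repeating this reasoning generates a (possibly infinite) sequence
\[
  f = f_0 \xmapsto{S_1} f_1 \xmapsto{S_2} f_2 \xmapsto{S_3} \cdots
\]
such that for every $n$ we have $\emptyset \vdash f_n : (S_n \circ \cdots \circ S_1)(U)$, established by a straightforward induction on $n$ using Preservation at the inductive step.

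Now I split on whether the sequence is infinite. If it is, then by definition $f \Uparrow$, giving case (2) of the theorem. If the sequence terminates at some $f_n$, then Progress applied to $f_n$ must yield either a value or blame (it cannot yield a further reduction, since the sequence terminated); setting $r = f_n$ and $S = S_n \circ \cdots \circ S_1$, we obtain $f \xmapsto{S}^\ast r$ with $\emptyset \vdash r : S(U)$, which is case (1). Formally, this last step relies on the definition of $\xmapsto{S}^\ast$ as the reflexive-transitive closure annotated with the composition of the individual substitutions, which matches exactly the convention introduced after the evaluation rules.

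I do not anticipate a deep obstacle here: Progress and Preservation are already in hand, and the only nonstandard element is tracking the composed substitution, which is a routine bookkeeping matter handled by the annotated reflexive-transitive closure $\xmapsto{S}^\ast$. The mildest subtlety is to check that composition is well-defined as the sequence grows---in particular that fresh type variables introduced by \textsc{R\_InstArrow} do not clash with existing domains of earlier substitutions---but freshness is guaranteed by the side condition of that rule, so composing $S_{n+1}$ with $S_n \circ \cdots \circ S_1$ poses no problem. Thus the proof is essentially a two-line argument wrapping Progress and Preservation.
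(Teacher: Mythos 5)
Your proposal is correct and is exactly the paper's argument: the paper proves Theorem~\ref{thm:type_safety} by citing Lemmas~\ref{lem:progress} and \ref{lem:preservation}, i.e., the standard Wright--Felleisen iteration, with the composed substitution handled by the paper's definition of $\xmapsto{S}{}^\ast$ as the closure annotated with $S_n \circ \cdots \circ S_1$. Nothing further is needed.
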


\iffull\else
\begin{proof}
  By Lemmas \ref{lem:progress} and \ref{lem:preservation}.
\end{proof}
\fi

\subsection{Conservative Extension}
Our blame calculus is a conservative extension of the standard simply typed
blame calculus~\cite{DBLP:conf/snapl/SiekVCB15} extended to let-polymorphism
(we call it $\lambdaBC$).
We can obtain
$\lambdaBC$ by disallowing free type variables and removing reduction
rules \rnp{R\_InstBase} and \rnp{R\_InstArrow}.  We denote judgments
and relations for this sublanguage by subscripting symbols with \textsf{B};
we also omit type substitutions on the reduction/evaluation relations of $\lambdaBC$
because they are always empty.  So, we write $\ottnt{f} \, \longrightarrow_{\textsf{\textup{B}\relax}\relax} \, \ottnt{f'}$ for reduction and $\ottnt{f} \, \longmapsto_{\textsf{\textup{B}\relax}\relax} \, \ottnt{f'}$ for
evaluation.  Then, it is easy to show that $\lambdaRTI$ is a
conservative extension of $\lambdaBC$ in the following sense.

\begin{theorem}[name=Conservative Extension,restate=thmConservativeExtension] \label{thm:conservative_extension}
  Suppose that $\textit{ftv} \, \ottsym{(}  \ottnt{f}  \ottsym{)}  \ottsym{=}   \emptyset $ and $\ottnt{f}$ does not contain $ \nu $.

  \begin{enumerate}
    \item $\ottnt{f} \,  \xmapsto{ \mathmakebox[0.4em]{} [  ] \mathmakebox[0.3em]{} }\hspace{-0.4em}{}^\ast \hspace{0.2em}  \, \ottnt{r}$ if and only if $\ottnt{f} \, \longmapsto_{\textsf{\textup{B}\relax}\relax}^\ast \, \ottnt{r}$.
    \item $ \ottnt{f} \!  \Uparrow  $ if and only if $ \ottnt{f} \!  \Uparrow _{\textsf{\textup{B}\relax}\relax}  $.
  \end{enumerate}
\end{theorem}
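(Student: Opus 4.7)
The plan is to prove a reduction-level invariant: if $\textit{ftv}(f) = \emptyset$ and $f$ contains no $\nu$, and $f \xrightarrow{S} f'$, then $S = [\,]$, $\textit{ftv}(f') = \emptyset$, and $f'$ contains no $\nu$. Once this invariant is in place, every $\lambdaRTI$ reduction step on such a term is simply a $\lambdaBC$ step, and every $\lambdaBC$ step is a $\lambdaRTI$ step with the empty substitution, so parts (1) and (2) follow by induction on the length of the (finite or infinite) evaluation sequence.

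First I would prove the invariant by case analysis on the reduction rule applied. The DTI rules \textsc{R\_InstBase} and \textsc{R\_InstArrow} are the only ones that generate a nonempty $S$, and both require the target of the outer cast to be a type variable $\ottmv{X}$; since $\textit{ftv}(f) = \emptyset$, no free $\ottmv{X}$ can occur at that position, so neither rule is applicable. For \textsc{R\_LetP}, I would invoke the substitution clause $(\ottmv{x}[\overrightarrow{\mathbbsl{T}_{\ottmv{i}}}])[\ottmv{x} := \Lambda \overrightarrow{\ottmv{X_{\ottmv{i}}}}. w] = [\overrightarrow{\ottmv{X_{\ottmv{i}}}} := \overrightarrow{\ottnt{T_{\ottmv{i}}}}](w)$: because $f$ contains no $\nu$, every $\mathbbsl{T}_{\ottmv{i}}$ is already a static type and $\ottnt{T_{\ottmv{i}}} = \mathbbsl{T}_{\ottmv{i}}$, so no fresh type variables are introduced and $\nu$-freeness is preserved. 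Closedness of $f$ implies $\textit{ftv}(w) \subseteq \overrightarrow{\ottmv{X_{\ottmv{i}}}}$ and $\textit{ftv}(\mathbbsl{T}_{\ottmv{i}}) = \emptyset$, so the result is still free-type-variable-free. The remaining rules (\textsc{R\_Op}, \textsc{R\_Beta}, \textsc{R\_IdBase}, \textsc{R\_IdStar}, \textsc{R\_Succeed}, \textsc{R\_Fail}, \textsc{R\_AppCast}, \textsc{R\_Ground}, \textsc{R\_Expand}) only rearrange existing subterms and types already present in $f$, so the invariant propagates trivially.

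Next I would lift the invariant to the evaluation relation: \textsc{E\_Step} applies $S = [\,]$ to the reassembled term, which is a no-op, and \textsc{E\_Abort} yields $\textsf{blame}\,\ell$, which trivially has no free type variables and no $\nu$. A straightforward induction on the structure of the evaluation context shows that an $\lambdaRTI$ step on a term satisfying the invariant corresponds to exactly one applicable rule instance, and this instance is shared with $\lambdaBC$ (since the DTI rules are excluded and the substitution in \textsc{R\_LetP} degenerates to ordinary capture-avoiding value substitution). Conversely, every $\lambdaBC$ step is an $\lambdaRTI$ step with empty substitution. Hence $f \xmapsto{[\,]} f'$ iff $f \longmapsto_{\textsf{\textup{B}\relax} } f'$. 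Part (1) follows by induction on the length of the evaluation sequence to a result $\ottnt{r}$; part (2) follows by the same bisimulation, since an infinite sequence on one side yields an infinite sequence on the other.

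The main obstacle I anticipate is the \textsc{R\_LetP} case: one must carefully verify that the nonstandard substitution $f[\ottmv{x} := \Lambda \overrightarrow{\ottmv{X_{\ottmv{i}}}}. w]$ collapses, under the $\nu$-freeness assumption, to the ordinary value substitution used in $\lambdaBC$, and that no free type variable leaks into the result. This is essentially bookkeeping, but requires the invariant to be threaded precisely through the variable clause of the substitution definition. The rest of the argument is a routine case split on reduction rules, combined with a structural induction on the evaluation context in \textsc{E\_Step}.
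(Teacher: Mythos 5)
Your proof is correct, and it supplies exactly the routine argument the paper omits (the appendix proof of this theorem is literally the single word ``Easy''): the invariant that reduction of closed, $\nu$-free terms yields the empty substitution and preserves closedness and $\nu$-freeness is the right key fact, and the case analysis you sketch for \textsc{R\_InstBase}/\textsc{R\_InstArrow} and \textsc{R\_LetP} covers the only nontrivial points. One small remark: your claim that no type variable can occur as a cast target relies implicitly on the fact that evaluation contexts never descend under a $\Lambda$-binder (the let is itself the redex), so any $X$ in redex position is genuinely free; this is worth stating explicitly, but it does not affect the validity of the argument.
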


\section{Soundness and Completeness of Dynamic Type Inference}
\label{sec:soundness-completeness}

In this section, we show soundness and completeness of DTI; we
formalize these properties by comparing evaluation of a given term
under $\lambdaRTI$ and its instances of type substitution in
$\lambdaBC$.

Soundness of \emph{static} type inference means that, if the
type inference algorithm succeeds, the program is well typed (under
the reconstructed type annotation).  In DTI, the success of type inference
and the reconstructed type annotation roughly correspond to normally
terminating evaluation and type substitution obtained through
evaluation, respectively.  So, soundness of DTI means that, if
evaluation of a program normally terminates at value $w$ with a
type substitution (yielded by DTI), then applying the type
substitution \emph{before} evaluation also makes evaluation normally terminate
at the same value.

We state the soundness property below.  In addition to normally terminating
programs, we can show similar results for programs that abort by blame
(the second item) or diverge (the third item).  The second item means
that, if a well-typed term evaluates to blame with DTI, then \emph{no}
substitution $S'$ can help $S'  \ottsym{(}  \ottnt{f}  \ottsym{)}$ avoid blame.  In other
words, DTI avoids blame as much as possible.

\begin{theorem}[name=Soundness of Dynamic Type Inference,restate=thmSoundness] \label{thm:soundness}
  Suppose $ \emptyset   \vdash  \ottnt{f}  \ottsym{:}  \ottnt{U}$.
  \begin{enumerate}
    \item If $\ottnt{f} \,  \xmapsto{ \mathmakebox[0.4em]{} S \mathmakebox[0.3em]{} }\hspace{-0.4em}{}^\ast \hspace{0.2em}  \, \ottnt{r}$,
          then, for any $S'$ such that $\textit{ftv} \, \ottsym{(}  S'  \ottsym{(}  S  \ottsym{(}  \ottnt{f}  \ottsym{)}  \ottsym{)}  \ottsym{)}  \ottsym{=}   \emptyset $,
          $S'  \ottsym{(}  S  \ottsym{(}  \ottnt{f}  \ottsym{)}  \ottsym{)} \,  \xmapsto{ \mathmakebox[0.4em]{} S'' \mathmakebox[0.3em]{} }\hspace{-0.4em}{}^\ast \hspace{0.2em}  \, S'  \ottsym{(}  \ottnt{r}  \ottsym{)}$
          for some $S''$.
    \item If $\ottnt{f} \,  \xmapsto{ \mathmakebox[0.4em]{} S \mathmakebox[0.3em]{} }\hspace{-0.4em}{}^\ast \hspace{0.2em}  \, \textsf{\textup{blame}\relax} \, \ell$,
      then, for any $S'$, $S'  \ottsym{(}  \ottnt{f}  \ottsym{)} \,  \xmapsto{ \mathmakebox[0.4em]{} S'' \mathmakebox[0.3em]{} }\hspace{-0.4em}{}^\ast \hspace{0.2em}  \, \textsf{\textup{blame}\relax} \, \ell'$ for some
      $S''$ and $\ell'$.

    \item If $ \ottnt{f} \!  \Uparrow  $, then,
      for any $S$ such that $\textit{ftv} \, \ottsym{(}  S  \ottsym{(}  \ottnt{f}  \ottsym{)}  \ottsym{)}  \ottsym{=}   \emptyset $, either
      $ S  \ottsym{(}  \ottnt{f}  \ottsym{)} \!  \Uparrow  $ or
      $S  \ottsym{(}  \ottnt{f}  \ottsym{)} \,  \xmapsto{ \mathmakebox[0.4em]{} S' \mathmakebox[0.3em]{} }\hspace{-0.4em}{}^\ast \hspace{0.2em}  \, \textsf{\textup{blame}\relax} \, \ell$ for some $\ell$ and $S'$.
  \end{enumerate}
\end{theorem}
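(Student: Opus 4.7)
The plan is to reduce all three parts to a single-step simulation lemma and then lift it to multi-step evaluation. I would first prove the following simulation lemma: if $\ottnt{f} \xmapsto{S_1} \ottnt{f_1}$ in one evaluation step and $T$ is any substitution satisfying $T \circ S_1 = T$ (so $T$ is ``compatible'' with the DTI-generated $S_1$), then $T(\ottnt{f}) \xmapsto{[\,]} T(\ottnt{f_1})$ in exactly one step. The proof is a case analysis on the reduction rule used. For the rules where $S_1 = [\,]$, substitution commutes with reduction because the codomain of a type substitution consists only of static types, so the side conditions of \rnp{R\_Ground} and \rnp{R\_Expand} involving $\ottnt{U} \neq \star$, $\ottnt{U} \neq \ottnt{G}$, and $\ottnt{U} \sim \ottnt{G}$ all survive. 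The critical cases are \rnp{R\_InstBase} and \rnp{R\_InstArrow}: in the former, compatibility forces $T(\ottmv{X}) = \iota$, so the cast $\iota \Rightarrow \star \Rightarrow T(\ottmv{X})$ reduces by \rnp{R\_Succeed} to $T(w)$, matching $T(\ottnt{f_1})$; in the latter, $T(\ottmv{X})$ must be a static function type and, because the codomain of $T$ excludes $\star$, cannot be $\star \!\rightarrow\! \star$, so \rnp{R\_Expand} splits the cast to exactly the shape of $T(\ottnt{f_1})$.

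I would then iterate this to a multi-step statement: if $\ottnt{f}$ evaluates to $\ottnt{f'}$ with accumulated DTI substitution $S$, and $T \circ S = T$, then $T(\ottnt{f})$ evaluates to $T(\ottnt{f'})$ without any further instantiation. The key auxiliary fact is that $S \circ S_i = S$ for every factor $S_i$ of the accumulated substitution: each $S_i$ is either empty, maps a variable to a base type (fixed by $S$), or maps a variable to an arrow of fresh type variables whose $S$-images reconstruct the $S$-image of the original. Therefore $T \circ S = T$ propagates to $T \circ S_i = T$, so the single-step lemma applies at every point of the trace.

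Part~(1) then follows by taking $T = S' \circ S$: the hypothesis $\textit{ftv}(S'(S(\ottnt{f}))) = \emptyset$ and idempotency of $S$ give $T(\ottnt{f})$ closed and $T \circ S = T$, so the multi-step lemma yields $S'(S(\ottnt{f}))$ evaluating to $S'(\ottnt{r})$ with empty $S''$. For part~(2) I would simulate step-by-step under $S'$: if $S'$ stays compatible with every DTI substitution, the single-step lemma delivers $S'(\ottnt{f})$ to $\textsf{\textup{blame}\relax}\,\ell$; otherwise, at the first step where $S'$ disagrees with the DTI choice---say \rnp{R\_InstBase} expects $\iota$ but $S'(\ottmv{X}) \neq \iota$---a focused analysis of the offending cast shows that it reduces in a few steps via \rnp{R\_Expand}, possibly \rnp{R\_AppCast}, and finally \rnp{R\_Fail}, to some $\textsf{\textup{blame}\relax}\,\ell'$, after which \rnp{E\_Abort} propagates blame to the top. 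Part~(3) is handled analogously: either the simulation proceeds forever (giving $S(\ottnt{f})$ divergence) or it breaks at a finite step, in which case blame arises as in part~(2).

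The main obstacle I anticipate is the case analysis for parts~(2) and~(3) in the conflicting-substitution scenario. In \rnp{R\_InstArrow} in particular, an incompatible $T(\ottmv{X})$ may be a function type whose mismatch with the DTI expectation surfaces only at deeper positions, so the resulting cast sequence must be unwound through several \rnp{R\_Expand} and \rnp{R\_AppCast} steps before a ground-type mismatch triggers \rnp{R\_Fail}. Making this failure argument precise, and tracking blame labels carefully enough to witness the existentially quantified $\ell'$, will likely require a small separate lemma about the evaluation of cast sequences between conflicting closed static types.
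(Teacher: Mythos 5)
Your overall strategy—pre-apply the substitution and simulate the DTI trace step by step, handling conflicting instantiations by showing the corresponding cast sequence fails—is the same family of argument as the paper's, and your treatment of part (2) (simulate until the first incompatible instantiation, then unwind the offending cast to \rnp{R\_Fail}) is essentially the paper's Lemma~\ref{lem:soundness_blame_2}. However, your key single-step lemma is too strong and, as stated, false: it is not true that $T(\ottnt{f}) \xmapsto{ \mathmakebox[0.4em]{} [  ] \mathmakebox[0.3em]{} } T(\ottnt{f_{{\mathrm{1}}}})$ whenever $T$ is compatible with the step. The culprit is not \rnp{R\_InstArrow} (which you analyze correctly) but the fresh type variables introduced by polymorphic $ \textsf{\textup{let}\relax} $: when a variable occurrence $\ottmv{x}  [   \overrightarrow{ \mathbbsl{T}_{\ottmv{i}} }   ]$ with a $ \nu $ argument is substituted, a fresh $\ottmv{Y}$ is generated in $\ottnt{f_{{\mathrm{1}}}}$, and your invariant $T(\ottnt{f_{{\mathrm{1}}}})$ replaces that $\ottmv{Y}$ by $T(\ottmv{Y})$ (since $T = S'  \circ  S$ records its eventual DTI instantiation), whereas the actual one-step reduct of $T(\ottnt{f})$ still contains $\ottmv{Y}$ uninstantiated, awaiting its own later DTI step. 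The same phenomenon shows that the residual substitution in part (1) is \emph{not} empty in general: even a term with $\textit{ftv} \, \ottsym{(}  \ottnt{f}  \ottsym{)}  \ottsym{=}   \emptyset $ can generate and then instantiate fresh variables during evaluation. This is precisely why the paper's main lemma (Lemma~\ref{lem:soundness_result}) only pre-applies the restriction $S_{{\mathrm{1}}}$ of $S$ to $\textit{ftv} \, \ottsym{(}  \ottnt{f}  \ottsym{)}$ and carries a residual substitution $S'_{{\mathrm{2}}}$ over the freshly generated variables, with side conditions ensuring $S'_{{\mathrm{2}}}$ stays disjoint from the codomain of $S_{{\mathrm{1}}}$. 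Your induction can be repaired by weakening the invariant in the same way, but that weakening is the actual content of the proof, not a cosmetic adjustment.

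Two smaller points. First, even after the repair you still need a separate lemma (the paper's Lemma~\ref{lem:eval_multi_step_any_fresh_subst}) to push the remaining closing substitution $S'$ through an evaluation that still performs DTI on fresh variables; your formulation conflates this with the main simulation. Second, for part (3) the paper does not run the simulation at all: it observes that if $S  \ottsym{(}  \ottnt{f}  \ottsym{)}$ terminated at a value, completeness (Theorem~\ref{thm:completeness}) would force $\ottnt{f}$ to terminate at a value, contradicting $ \ottnt{f} \!  \Uparrow  $; combined with type safety this leaves only divergence or blame. This is considerably shorter than your proposed ``simulate forever or break at a finite step'' analysis, which would additionally inherit the gap above.
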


We state a main lemma to prove Theorem~\ref{thm:soundness}(1) below.
\begin{lemma}[name=,restate=lemSoundnessResult] \label{lem:soundness_result}
 If
 $\ottnt{f} \,  \xmapsto{ \mathmakebox[0.4em]{}  S_{{\mathrm{1}}}  \uplus  S_{{\mathrm{2}}}  \mathmakebox[0.3em]{} }\hspace{-0.4em}{}^\ast \hspace{0.2em}  \, \ottnt{r}$ and
 $\textit{dom} \, \ottsym{(}  S_{{\mathrm{1}}}  \ottsym{)}  \subseteq  \textit{ftv} \, \ottsym{(}  \ottnt{f}  \ottsym{)}$,
 then
 $S_{{\mathrm{1}}}  \ottsym{(}  \ottnt{f}  \ottsym{)} \,  \xmapsto{ \mathmakebox[0.4em]{} S'_{{\mathrm{2}}} \mathmakebox[0.3em]{} }\hspace{-0.4em}{}^\ast \hspace{0.2em}  \, \ottnt{r}$ for some $S'_{{\mathrm{2}}}$ such that
 $\textit{dom} \, \ottsym{(}  S'_{{\mathrm{2}}}  \ottsym{)}  \subseteq  \textit{dom} \, \ottsym{(}  S_{{\mathrm{2}}}  \ottsym{)}$ and $\textit{ftv} \, \ottsym{(}  S_{{\mathrm{1}}}  \ottsym{(}  \ottmv{X}  \ottsym{)}  \ottsym{)}  \cap  \textit{dom} \, \ottsym{(}  S'_{{\mathrm{2}}}  \ottsym{)}  \ottsym{=}   \emptyset $
 for any $\ottmv{X} \, \in \, \textit{dom} \, \ottsym{(}  S_{{\mathrm{1}}}  \ottsym{)}$.
\end{lemma}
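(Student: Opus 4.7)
The plan is to proceed by induction on the length $n$ of the evaluation sequence $\ottnt{f} \,  \xmapsto{ \mathmakebox[0.4em]{}  S_{{\mathrm{1}}}  \uplus  S_{{\mathrm{2}}}  \mathmakebox[0.3em]{} }\hspace{-0.4em}{}^\ast \hspace{0.2em}  \, \ottnt{r}$. The base case $n = 0$ is immediate: $\ottnt{f} = \ottnt{r}$ and $S_{{\mathrm{1}}}  \uplus  S_{{\mathrm{2}}} = [\,]$, so $S_{{\mathrm{1}}} = [\,]$ and the claim holds with $S'_{{\mathrm{2}}} = [\,]$. For $n \geq 1$, I decompose the sequence as $\ottnt{f} \,  \xmapsto{ \mathmakebox[0.4em]{} S_{{\mathrm{0}}} \mathmakebox[0.3em]{} }  \, \ottnt{f_{{\mathrm{1}}}} \,  \xmapsto{ \mathmakebox[0.4em]{} S'' \mathmakebox[0.3em]{} }\hspace{-0.4em}{}^\ast \hspace{0.2em}  \, \ottnt{r}$ with $S_{{\mathrm{1}}}  \uplus  S_{{\mathrm{2}}} =  S''  \circ  S_{{\mathrm{0}}} $, and I case-split on $S_{{\mathrm{0}}}$.

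When $S_{{\mathrm{0}}} = [\,]$, the first step is purely computational. I first establish a ``substitution commutes with non-DTI reduction'' sublemma: if $\ottnt{f} \,  \xmapsto{ \mathmakebox[0.4em]{} [  ] \mathmakebox[0.3em]{} }  \, \ottnt{f_{{\mathrm{1}}}}$, then $S_{{\mathrm{1}}}  \ottsym{(}  \ottnt{f}  \ottsym{)} \,  \xmapsto{ \mathmakebox[0.4em]{} [  ] \mathmakebox[0.3em]{} }  \, S_{{\mathrm{1}}}  \ottsym{(}  \ottnt{f_{{\mathrm{1}}}}  \ottsym{)}$. This follows by routine case analysis on the reduction rule used (none of \rnp{R\_Beta}, \rnp{R\_IdBase}, \rnp{R\_IdStar}, \rnp{R\_Succeed}, \rnp{R\_Fail}, \rnp{R\_AppCast}, \rnp{R\_Ground}, \rnp{R\_Expand}, \rnp{R\_LetP} inspects the identity of type variables, and Lemma~\ref{lem:ground_types} handles \rnp{R\_Ground}/\rnp{R\_Expand} applicability). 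I then re-partition $S'' = S'_{{\mathrm{1}}}  \uplus  S'_{{\mathrm{2}}}$ so that $\textit{dom} \, \ottsym{(}  S'_{{\mathrm{1}}}  \ottsym{)}  \subseteq  \textit{ftv} \, \ottsym{(}  \ottnt{f_{{\mathrm{1}}}}  \ottsym{)}$ and $ S'_{{\mathrm{1}}}  \circ  S_{{\mathrm{1}}}  $ plays the role of the pre-applied substitution at $\ottnt{f_{{\mathrm{1}}}}$ (moving into $S'_{{\mathrm{1}}}$ any variable that became free by the step, and absorbing into $S'_{{\mathrm{2}}}$ any variable of $S_{{\mathrm{1}}}$ that vanished), and apply the induction hypothesis.

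When $S_{{\mathrm{0}}}$ comes from \rnp{R\_InstBase} (so $S_{{\mathrm{0}}} = [  \ottmv{X}  :=  \iota  ]$) or \rnp{R\_InstArrow} (so $S_{{\mathrm{0}}} = [  \ottmv{X}  :=  \ottmv{X_{{\mathrm{1}}}}  \!\rightarrow\!  \ottmv{X_{{\mathrm{2}}}}  ]$ with $\ottmv{X_{{\mathrm{1}}}}, \ottmv{X_{{\mathrm{2}}}}$ fresh), I distinguish on whether $\ottmv{X} \, \in \, \textit{dom} \, \ottsym{(}  S_{{\mathrm{1}}}  \ottsym{)}$. If $\ottmv{X} \, \not\in \, \textit{dom} \, \ottsym{(}  S_{{\mathrm{1}}}  \ottsym{)}$, then $S_{{\mathrm{1}}}$ commutes with $S_{{\mathrm{0}}}$, the DTI step fires identically on $S_{{\mathrm{1}}}  \ottsym{(}  \ottnt{f}  \ottsym{)}$, and the newly generated substitution is folded into the output $S'_{{\mathrm{2}}}$ after applying the IH to $\ottnt{f_{{\mathrm{1}}}}$. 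If $\ottmv{X} \, \in \, \textit{dom} \, \ottsym{(}  S_{{\mathrm{1}}}  \ottsym{)}$, then from $ S''  \circ  S_{{\mathrm{0}}}  = S_{{\mathrm{1}}}  \uplus  S_{{\mathrm{2}}}$ and disjointness of the union, I get $S_{{\mathrm{1}}}  \ottsym{(}  \ottmv{X}  \ottsym{)} = S''  \ottsym{(}  S_{{\mathrm{0}}}  \ottsym{(}  \ottmv{X}  \ottsym{)}  \ottsym{)}$, so $S_{{\mathrm{1}}}  \ottsym{(}  \ottmv{X}  \ottsym{)}$ is a base type $\iota$ (resp.\ a function type $S''  \ottsym{(}  \ottmv{X_{{\mathrm{1}}}}  \!\rightarrow\!  \ottmv{X_{{\mathrm{2}}}}  \ottsym{)}$). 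Pre-application of $S_{{\mathrm{1}}}$ turns the DTI redex $w  \ottsym{:}   \ottnt{G} \Rightarrow  \unskip ^ { \ell_{{\mathrm{1}}} }  \!  \star \Rightarrow  \unskip ^ { \ell_{{\mathrm{2}}} }  \! \ottmv{X}  $ into $w  \ottsym{:}   \ottnt{G} \Rightarrow  \unskip ^ { \ell_{{\mathrm{1}}} }  \!  \star \Rightarrow  \unskip ^ { \ell_{{\mathrm{2}}} }  \! S_{{\mathrm{1}}}  \ottsym{(}  \ottmv{X}  \ottsym{)}  $, which reduces by \rnp{R\_Succeed} (for \rnp{R\_InstBase}) or by \rnp{R\_Expand} followed by \rnp{R\_Succeed} (for \rnp{R\_InstArrow}), without emitting any substitution. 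The remainder $\ottnt{f_{{\mathrm{1}}}} \,  \xmapsto{ \mathmakebox[0.4em]{} S'' \mathmakebox[0.3em]{} }\hspace{-0.4em}{}^\ast \hspace{0.2em}  \, \ottnt{r}$ is then shown to correspond to the post-step evaluation of $S_{{\mathrm{1}}}  \ottsym{(}  \ottnt{f}  \ottsym{)}$ via the induction hypothesis, after observing that the fresh $\ottmv{X_{{\mathrm{1}}}}, \ottmv{X_{{\mathrm{2}}}}$ occur nowhere in $S_{{\mathrm{1}}}  \ottsym{(}  \ottnt{f}  \ottsym{)}$ so their substitution in $S''$ can be composed into $S'_{{\mathrm{2}}}$.

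The main obstacle I anticipate is the \rnp{R\_InstArrow} subcase with $\ottmv{X} \, \in \, \textit{dom} \, \ottsym{(}  S_{{\mathrm{1}}}  \ottsym{)}$: here $S_{{\mathrm{1}}}$ already knows a concrete function type for $\ottmv{X}$, whereas the original evaluation only committed to $\ottmv{X_{{\mathrm{1}}}}  \!\rightarrow\!  \ottmv{X_{{\mathrm{2}}}}$ and refined it later. I expect to need a bookkeeping sublemma that, given the decomposition $S_{{\mathrm{1}}}  \ottsym{(}  \ottmv{X}  \ottsym{)} = S''  \ottsym{(}  \ottmv{X_{{\mathrm{1}}}}  \ottsym{)}  \!\rightarrow\!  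S''  \ottsym{(}  \ottmv{X_{{\mathrm{2}}}}  \ottsym{)}$, rearranges the remaining substitution $S''$ into a form where the contributions to $\ottmv{X_{{\mathrm{1}}}}$ and $\ottmv{X_{{\mathrm{2}}}}$ are already present in the pre-applied $S_{{\mathrm{1}}}$, so that the IH delivers the correct $S'_{{\mathrm{2}}}$ with $\textit{ftv} \, \ottsym{(}  S_{{\mathrm{1}}}  \ottsym{(}  \ottmv{X}  \ottsym{)}  \ottsym{)}  \cap  \textit{dom} \, \ottsym{(}  S'_{{\mathrm{2}}}  \ottsym{)}  \ottsym{=}   \emptyset $. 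A secondary difficulty is maintaining the invariant $\textit{dom} \, \ottsym{(}  S_{{\mathrm{1}}}  \ottsym{)}  \subseteq  \textit{ftv} \, \ottsym{(}  \ottnt{f}  \ottsym{)}$ across the $S_{{\mathrm{0}}} = [\,]$ cases where a subterm (and its free type variables) can be discarded by \rnp{R\_Beta}; I handle this by pushing such variables out of $S_{{\mathrm{1}}}$ and into $S'_{{\mathrm{2}}}$ before invoking the induction hypothesis, which is sound because the freshness condition on $S'_{{\mathrm{2}}}$ is preserved.
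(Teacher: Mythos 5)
Your proposal is correct and takes essentially the same route as the paper's own proof: induction on the length of the evaluation sequence, commuting the pre-applied substitution past non-DTI steps, and splitting the \rnp{R\_InstBase}/\rnp{R\_InstArrow} cases on whether the instantiated variable lies in $\textit{dom} \, \ottsym{(}  S_{{\mathrm{1}}}  \ottsym{)}$, in which case the instantiated redex fires \rnp{R\_Succeed} (resp.\ \rnp{R\_Expand}) instead of a DTI rule. The substitution-decomposition bookkeeping you flag for the \rnp{R\_InstArrow} subcase is exactly where the paper's proof also spends its effort.
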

Here, $ S_{{\mathrm{1}}}  \uplus  S_{{\mathrm{2}}} $ is a type substitution generated by DTI.\footnote{%
  The concatenation $ S_{{\mathrm{1}}}  \uplus  S_{{\mathrm{2}}} $ of $S_{{\mathrm{1}}}$ and $S_{{\mathrm{2}}}$ is defined only if
$\textit{dom} \, \ottsym{(}  S_{{\mathrm{1}}}  \ottsym{)}$ and $\textit{dom} \, \ottsym{(}  S_{{\mathrm{2}}}  \ottsym{)}$ are disjoint and it maps $\ottmv{X}$ to
$S_{{\mathrm{1}}}  \ottsym{(}  \ottmv{X}  \ottsym{)}$ if $\ottmv{X} \, \in \, \textit{dom} \, \ottsym{(}  S_{{\mathrm{1}}}  \ottsym{)}$ and to $S_{{\mathrm{2}}}  \ottsym{(}  \ottmv{X}  \ottsym{)}$ if $\ottmv{X} \, \in \, \textit{dom} \, \ottsym{(}  S_{{\mathrm{2}}}  \ottsym{)}$.}
Since fresh
type variables are generated during reduction, we split the type
substitution into two parts: $S_{{\mathrm{1}}}$ for type variables that appear
in $\ottnt{f}$ and $S_{{\mathrm{2}}}$ for generated type variables.  The conditions
on the type substitution $S'_{{\mathrm{2}}}$ mean that $S_{{\mathrm{1}}}  \ottsym{(}  \ottnt{f}  \ottsym{)}$ may generate
fewer type variables than $\ottnt{f}$ and the domain of $S'_{{\mathrm{2}}}$ is fresh
(with respect to $S_{{\mathrm{1}}}  \ottsym{(}  \ottnt{f}  \ottsym{)}$).  The statement is similar to
Theorem~\ref{thm:soundness}(1) but additional conditions make proof by
induction on the number of evaluation steps work.


We cannot say much about diverging programs---Theorem~\ref{thm:soundness}(3)
means that, if a well-typed program diverges, then no type
substitution $S$ makes $S  \ottsym{(}  \ottnt{f}  \ottsym{)}$ normally terminating.  One may
expect a stronger property that, if evaluation with DTI diverges, then
without DTI it also diverges (after applying some type substitution that
instantiates all type variables); but actually it is not the case.

\begin{theorem}[name=,restate=thmDivergence]
  There exists $\ottnt{f}$ such that (1) $ \emptyset   \vdash  \ottnt{f}  \ottsym{:}  \ottnt{U}$ and (2)
  $ \ottnt{f} \!  \Uparrow  $ and (3) for any $S$ such that
  $\textit{ftv} \, \ottsym{(}  S  \ottsym{(}  \ottnt{f}  \ottsym{)}  \ottsym{)}  \ottsym{=}   \emptyset $, it holds that $S  \ottsym{(}  \ottnt{f}  \ottsym{)} \, \longmapsto_{\textsf{\textup{B}\relax}\relax}^\ast \, \textsf{\textup{blame}\relax} \, \ell$ for
  some $\ell$.
\end{theorem}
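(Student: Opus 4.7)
The plan is to exhibit a specific witness $\ottnt{f}$ and verify the three clauses directly. I would construct $\ottnt{f}$ as a variant of the standard divergent combinator $\Omega = w_1\,(w_1 : \star \to \star \Rightarrow \star)$ in which the ``self-application'' loop is threaded through a cast whose target is a free type variable $X$. Concretely, let $w_X = \lambda x : \star .\, ((x : \star \Rightarrow X) : X \Rightarrow \star \Rightarrow \star \to \star)\, x$ (of type $\star \to \star$), and take $\ottnt{f} = w_X\,(w_X : \star \to \star \Rightarrow \star)$. The essential feature is that after one $\beta$-step, the body contains a redex of the form $w' : \star \to \star \Rightarrow \star \Rightarrow X$, matching \rnp{R\_InstArrow} exactly and producing fresh $X_1,X_2$ with $[X := X_1 \to X_2]$.

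To establish (1), I would type-check $\ottnt{f}$ step by step using $\star \sim X$, $X \sim \star$, and $\star \sim \star \to \star$; the type is $\star$. For (2), I would trace one ``iteration'' of reduction: starting from the self-application form, $\beta$-reduction followed by \rnp{R\_InstArrow} (and the auxiliary \rnp{R\_Succeed}/\rnp{R\_Ground}/\rnp{R\_AppCast} steps needed to re-expose the loop shape) produces a term of the same syntactic form, but with some fresh variable $X_i$ in place of $X$. An induction on the number of iterations then yields an infinite evaluation sequence, i.e.\ $\ottnt{f} \Uparrow$.

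For (3), fix any $S$ with $\textit{ftv}(S(\ottnt{f})) = \emptyset$, and let $T = S(X)$. Because the codomain of a type substitution consists of static types (no $\star$) and because $T$ has no free type variables, $T$ is a finite tree whose leaves are base types $\iota$. I would then run the $\lambdaBC$ semantics on $S(\ottnt{f})$ in lockstep with the DTI trace of $\ottnt{f}$: the one difference is that where DTI uses \rnp{R\_InstArrow} to introduce a fresh $X_1 \to X_2$, $\lambdaBC$ must instead reduce $w : \star \to \star \Rightarrow \star \Rightarrow T'$ where $T'$ is the current descendant of $T$. If $T'$ is a function type, \rnp{R\_Expand} applies and peels off one layer, decreasing the remaining depth; if $T'$ is a base type $\iota$, then since the source tag is $\star \to \star \neq \iota$, \rnp{R\_Fail} yields $\textsf{blame}\, \ell$. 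An induction on the depth of $T$ shows that BC reaches a base-type leaf after at most $\mathrm{depth}(T)$ iterations and blames there.

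The main obstacle is \emph{verifying} that (a) one DTI iteration truly reconstitutes the loop shape with a fresh type variable in place of $X$ rather than getting stuck or blaming, and (b) that the parallel BC trace under $S$ cannot shortcut to success via some \rnp{R\_Succeed} before reaching a base-type leaf of $T$. Both hinge on the invariant that the only values flowing through the self-application are wrapped functions tagged with $\star \to \star$, so base-type leaves of $T$ inevitably mismatch, while fresh variables introduced by \rnp{R\_InstArrow} always match. Once this invariant is made precise and maintained across iterations, each clause of the theorem follows routinely.
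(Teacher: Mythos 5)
Your overall strategy is the same as the paper's: the witness is an $\Omega$-like self-application in which a projection from $\star$ to a free type variable $X$ drives \textsc{R\_InstArrow} forever under DTI, while any closing static substitution replaces $X$ by a finite static type whose ``order'' is consumed until a base-type leaf meets a $\star\to\star$-tagged value and blames. The paper's witness differs only in where $X$ sits --- it annotates the bound variable of the left half, $\lambda x{:}X.\,(x : X\Rightarrow\star\Rightarrow\star\to\star)\,(x:X\Rightarrow\star)$, wrapped by a cast $X\to\star\Rightarrow\star\to\star$, whereas you thread $\star\Rightarrow X\Rightarrow\star$ through the body of a $\lambda x{:}\star$ --- and either placement can be made to work.

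The genuine gap is in the invariant you propose for your induction. After one iteration the term does \emph{not} ``reconstitute the loop shape with a fresh type variable in place of $X$'': each pass through \textsc{R\_AppCast}, \textsc{R\_Ground}/\textsc{R\_Expand}, and \textsc{R\_Succeed} leaves an extra layer of wrapping behind, so the cast chains on the two halves of the self-application strictly grow. An induction on the number of iterations whose hypothesis is ``same syntactic form'' therefore fails at the very first step. The paper repairs this by generalizing the hypothesis to a whole \emph{family} of terms closed under one iteration: $\Omega^X$, whose members carry an arbitrary-length chain of casts through types $U^X_1,\dots,U^X_n$ built from type variables and arrows (with a lemma that every $E[f]$ with $f\in\Omega^X$ evaluates to some $E'[f']$ with $f'\in\Omega^X$), and, for clause (3), the families $\Omega^\iota_i$ indexed by the \emph{minimum order over the entire accumulated chain}, together with lemmas that this minimum strictly decreases per iteration and that a minimum of one forces \textsc{R\_Fail}. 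Your measure ``depth of $T$'' must likewise be taken over the accumulated chain rather than over a single descendant of $T$; this matters for your particular witness, since its body re-injects the full cast $\star\Rightarrow T$ at every $\beta$-unfolding, so a naive per-iteration depth does not decrease. Once the invariant is strengthened to such a family and the measure to the chain minimum, the rest of your plan goes through and coincides with the paper's argument.
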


\iffull
We can show that the term $\ottnt{f}  \ottsym{=}  \ottsym{(}  \ottsym{(}   \lambda  \ottmv{x} \!:\!  \ottmv{X}  .\,  \ottsym{(}  \ottmv{x}  \ottsym{:}   \ottmv{X} \Rightarrow  \unskip ^ { \ell }  \!  \star \Rightarrow  \unskip ^ { \ell }  \! \star  \!\rightarrow\!  \star    \ottsym{)}  \, \ottsym{(}  \ottmv{x}  \ottsym{:}   \ottmv{X} \Rightarrow  \unskip ^ { \ell }  \! \star   \ottsym{)}  \ottsym{)}  \ottsym{:}   \ottmv{X}  \!\rightarrow\!  \star \Rightarrow  \unskip ^ { \ell }  \! \star  \!\rightarrow\!  \star   \ottsym{)} \, \ottsym{(}  \ottsym{(}   \lambda  \ottmv{x} \!:\!  \star  .\,  \ottsym{(}  \ottmv{x}  \ottsym{:}   \star \Rightarrow  \unskip ^ { \ell }  \! \star  \!\rightarrow\!  \star   \ottsym{)}  \, \ottmv{x}  \ottsym{)}  \ottsym{:}   \star  \!\rightarrow\!  \star \Rightarrow  \unskip ^ { \ell }  \! \star   \ottsym{)}$ witnesses this theorem.
\else
\begin{proof}
  Let $\ottnt{f}  \ottsym{=}  \ottsym{(}  \ottsym{(}   \lambda  \ottmv{x} \!:\!  \ottmv{X}  .\,  \ottsym{(}  \ottmv{x}  \ottsym{:}   \ottmv{X} \Rightarrow  \unskip ^ { \ell }  \!  \star \Rightarrow  \unskip ^ { \ell }  \! \star  \!\rightarrow\!  \star    \ottsym{)}  \, \ottsym{(}  \ottmv{x}  \ottsym{:}   \ottmv{X} \Rightarrow  \unskip ^ { \ell }  \! \star   \ottsym{)}  \ottsym{)}  \ottsym{:}   \ottmv{X}  \!\rightarrow\!  \star \Rightarrow  \unskip ^ { \ell }  \! \star  \!\rightarrow\!  \star   \ottsym{)} \, \ottsym{(}  \ottsym{(}   \lambda  \ottmv{x} \!:\!  \star  .\,  \ottsym{(}  \ottmv{x}  \ottsym{:}   \star \Rightarrow  \unskip ^ { \ell }  \! \star  \!\rightarrow\!  \star   \ottsym{)}  \, \ottmv{x}  \ottsym{)}  \ottsym{:}   \star  \!\rightarrow\!  \star \Rightarrow  \unskip ^ { \ell }  \! \star   \ottsym{)}$.
  We can show that $ \emptyset   \vdash  \ottnt{f}  \ottsym{:}  \star$ and $ \ottnt{f} \!  \Uparrow  $ and $S  \ottsym{(}  \ottnt{f}  \ottsym{)} \, \longmapsto_{\textsf{\textup{B}\relax}\relax} \, \textsf{\textup{blame}\relax} \, \ell'$ for any $S$ where $\textit{ftv} \, \ottsym{(}  S  \ottsym{(}  \ottnt{f}  \ottsym{)}  \ottsym{)}  \ottsym{=}   \emptyset $.
\end{proof}
\fi
\iffull This term \else The term $\ottnt{f}$ in the proof \fi is essentially the combinator
$\Omega = \ottsym{(}   \lambda  \ottmv{x} .\,  \ottmv{x}  \, \ottmv{x}  \ottsym{)} \, \ottsym{(}   \lambda  \ottmv{x} .\,  \ottmv{x}  \, \ottmv{x}  \ottsym{)}$, for which we would need recursive
types to give a type (if we did not use the dynamic type).
Correspondingly, evaluation of $\ottnt{f}$ diverges while generating type
substitutions and fresh type variables \emph{infinitely often}:
$\ottmv{X}$ is instantiated to $\ottmv{X_{{\mathrm{1}}}}  \!\rightarrow\!  \ottmv{X_{{\mathrm{2}}}}$ by \rnp{R\_InstArrow}, $\ottmv{X_{{\mathrm{1}}}}$
is instantiated to $\ottmv{X_{{\mathrm{3}}}}  \!\rightarrow\!  \ottmv{X_{{\mathrm{4}}}}$, $\ottmv{X_{{\mathrm{3}}}}$ is instantiated to
$\ottmv{X_{{\mathrm{5}}}}  \!\rightarrow\!  \ottmv{X_{{\mathrm{6}}}}$, and so on.  On the other hand, if we instantiate the
type variable $\ottmv{X}$ in $\ottnt{f}$ with a finite static type without
type variables, evaluation will eventually reach a cast
$w  \ottsym{:}   \star  \!\rightarrow\!  \star \Rightarrow  \unskip ^ { \ell_{{\mathrm{1}}} }  \!  \star \Rightarrow  \unskip ^ { \ell_{{\mathrm{2}}} }  \! \iota  $ and fail.

Completeness of ordinary static type inference means that, if
there is some type substitution that makes a given program well typed,
the type inference algorithm succeeds and finds a more general type
substitution.  A similar property holds for DTI: if there is some type
substitution that makes a given program normally terminating,
evaluation with DTI also results in a related value and the obtained
type substitution is more general.  We can also prove that, if there is
some type substitution that makes a program diverge, then evaluation
with DTI also diverges.

The main lemma to prove completeness is below.  It intuitively means
completeness for one step: if a term $S  \ottsym{(}  f  \ottsym{)}$ evaluates in one step
and it does not result in blame, then the original, uninstantiated
term $f$ also evaluates with DTI.
\AI{Removed: ``and the obtained type substitution is more
  general than the first type substitution $S$.''}
The assumption ``it does not result in blame'', which means that the
first type substitution is a good one, is crucial.

\begin{lemma}[name=,restate=lemCompletenessValueEvalStep] \label{lem:completeness_value_eval_step}
  If $ \emptyset   \vdash  \ottnt{f}  \ottsym{:}  \ottnt{U}$ and $S  \ottsym{(}  \ottnt{f}  \ottsym{)} \,  \xmapsto{ \mathmakebox[0.4em]{} S'_{{\mathrm{1}}} \mathmakebox[0.3em]{} }  \, \ottnt{f'}$ and $\ottnt{f'} \,  \centernot{\xmapsto{ \mathmakebox[0.4em]{} S_{{\mathrm{0}}} \mathmakebox[0.3em]{} } }\hspace{-0.4em}{}^\ast \hspace{0.2em}  \, \textsf{\textup{blame}\relax} \, \ell$ for any $S_{{\mathrm{0}}}$ and $\ell$,
  then $\ottnt{f} \,  \xmapsto{ \mathmakebox[0.4em]{} S' \mathmakebox[0.3em]{} }  \, \ottnt{f''}$ and $S''  \ottsym{(}  \ottnt{f''}  \ottsym{)}  \ottsym{=}  \ottnt{f'}$
  for some $S'$, $S''$, and $\ottnt{f''}$.
\end{lemma}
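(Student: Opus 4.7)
The plan is to proceed by case analysis on the evaluation step $S(f) \xmapsto{S'_1} f'$. The rule \textsc{E\_Abort} is immediately ruled out by the non-blame hypothesis: if $f' = \text{blame}\,\ell$ we would have $f' \xmapsto{[]}^* \text{blame}\,\ell$ in zero steps, contradicting the assumption. So only the \textsc{E\_Step} case remains: $S(f) = E_1[f_a]$ with $f_a \xrightarrow{S'_1} f_b$ and $f' = S'_1(E_1[f_b])$.

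First I would establish a ``decomposition-commutes-with-substitution'' lemma: since type substitution acts only on type annotations and never changes whether a subterm is a value or reshapes an evaluation context into a non-evaluation context, one can find $f = E_2[f_c]$ with $S(E_2) = E_1$ and $S(f_c) = f_a$. The key auxiliary fact is that $S$ preserves the status of values (constants, abstractions, wrapped functions, and injections remain such under substitution) and hence preserves the innermost-redex structure determined by the evaluation contexts.

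Next I would do case analysis on the reduction rule firing for $f_a \xrightarrow{S'_1} f_b$, showing that $f_c$ reduces by the same rule (or, for \textsc{R\_Inst*}, a rule of the same shape over a possibly different type variable). The crucial tool here is the canonical forms lemma (Lemma~\ref{lem:canonical_forms}): it forbids closed values of type $X$, so whenever the reduced position in $f_c$ carries a value immediately cast from some $V$ with $S(V) = \iota$, $S(V) = \star$, $S(V) = G$, or $S(V) = \star \to \star$, that $V$ itself cannot be a bare type variable, and we can conclude $V$ has exactly the shape needed to trigger the rule in $f_c$. For \textsc{R\_Ground}/\textsc{R\_Expand} one additionally argues that $V \sim G$ holds structurally, not just after substitution. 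For \textsc{R\_InstBase} or \textsc{R\_InstArrow} the target position in $f_c$ is some type variable $Y$ with $S(Y) = X$, and $f_c$ reduces by the same rule yielding $S' = [Y := \iota]$ or $S' = [Y := Y_1 \to Y_2]$ with $Y_1, Y_2$ chosen fresh for $f$ (not $S(f)$); the case \textsc{R\_Fail} is impossible by the same no-blame hypothesis used to kill \textsc{E\_Abort}.

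Finally I would construct the witness $S''$. Set $f'' = E_2[f_b']$ where $f_b'$ is the contractum of $f_c$. For the rules generating the empty substitution ($S' = []$), one simply takes $S'' = S$ and uses compatibility of $S$ with evaluation contexts and substitution on terms. For \textsc{R\_InstBase}/\textsc{R\_InstArrow}, define $S''$ to agree with $S$ on all type variables except $Y$ (and the freshly introduced $Y_i$), mapping them consistently with $S'_1$; one checks $S'' \circ [Y := \iota] = [X := \iota] \circ S$ on the relevant variables and similarly for the arrow case. The main obstacle I anticipate is precisely this bookkeeping for \textsc{R\_InstArrow}: the two reduction steps introduce distinct fresh variables in $f$ and in $S(f)$, so $S''$ must simultaneously identify the fresh variables introduced by the original step with those introduced by DTI on the unsubstituted term while preserving all previously established mappings. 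Managing this without disturbing other type variables in $E_2$ is the delicate part, but it is enabled by the freshness conditions in \textsc{R\_InstArrow} and the freedom to choose $Y_1, Y_2$ appropriately.
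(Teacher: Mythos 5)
Your skeleton—ruling out \textsc{E\_Abort} and \textsc{R\_Fail} with the no-blame hypothesis, pulling the evaluation context back through $S$ via the fact that type substitution preserves and (on well-typed terms) reflects value-hood, and using canonical forms to forbid casts whose \emph{source} type is a bare type variable—matches the paper's. The genuine gap is in the middle step: you claim $f_c$ fires \emph{the same} reduction rule as $f_a = S(f_c)$, the only exception being that an instantiation step may rename the variable (i.e., the subcase $S(Y)=X$). That misses the cases at the heart of the lemma, where the two rules differ in kind. If $f_c = w : \iota \Rightarrow^{\ell_1} \star \Rightarrow^{\ell_2} X$ with $S(X)=\iota$, then $S(f_c)$ fires \textsc{R\_Succeed} with empty substitution while $f_c$ must fire \textsc{R\_InstBase} with $[X:=\iota]$; if $f_c = w : \star\to\star \Rightarrow^{\ell_1} \star \Rightarrow^{\ell_2} X$ with $S(X)=T_1\to T_2$, then $S(f_c)$ fires \textsc{R\_Expand} while $f_c$ must fire \textsc{R\_InstArrow}. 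A case analysis organized by the rule firing in the \emph{substituted} term, under the assumption that the original term follows suit, silently skips exactly these; indeed your own sub-argument for \textsc{R\_Expand} ("argue that $V \sim G$ holds structurally") would fail there, since a type variable is not consistent with any ground type.

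Consequently your final bookkeeping is also incomplete: you construct $S''$ for mismatched \emph{fresh variables} only when both sides fire an \textsc{R\_Inst} rule, but in the \textsc{R\_Expand}-vs-\textsc{R\_InstArrow} case $S''$ must send the freshly generated $X_1,X_2$ to the \emph{concrete} static types $T_1,T_2$ coming from $S(X)$, and in the \textsc{R\_Succeed}-vs-\textsc{R\_InstBase} case the step on $S(f)$ produces the empty substitution while the step on $f$ does not, so the "take $S''=S$" recipe needs the separate check $S = S''\circ[X:=\iota]$ on the relevant variables. There is also a residual subcase ($S(X)=T_1\to T_2$ but the injected value is tagged with $\iota$) that must be discharged by showing $f'$ reaches blame \emph{after several steps}, so the no-blame hypothesis is needed beyond the immediate next step. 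The paper handles all of this by sub-casing on the shape of $S(X)$ whenever the cast target in $f$ is a type variable; without that extra layer the proof does not go through.
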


\iffull\else
\begin{proof}
  By case analysis on the evaluation rule applied to $S  \ottsym{(}  \ottnt{f}  \ottsym{)}$.
\end{proof}
\fi

We state the completeness property of DTI as follows:

\begin{theorem}[name=Completeness of Dynamic Type Inference,restate=thmCompleteness] \label{thm:completeness}
 Suppose $ \emptyset   \vdash  \ottnt{f}  \ottsym{:}  \ottnt{U}$.
  \begin{enumerate}
    \item If $S  \ottsym{(}  \ottnt{f}  \ottsym{)} \,  \xmapsto{ \mathmakebox[0.4em]{} S' \mathmakebox[0.3em]{} }\hspace{-0.4em}{}^\ast \hspace{0.2em}  \, w$,
      then $\ottnt{f} \,  \xmapsto{ \mathmakebox[0.4em]{} S'' \mathmakebox[0.3em]{} }\hspace{-0.4em}{}^\ast \hspace{0.2em}  \, w'$ and $S'''  \ottsym{(}  w'  \ottsym{)}  \ottsym{=}  w$
      for some $w'$, $S''$, and $S'''$.
    \item If $ S  \ottsym{(}  \ottnt{f}  \ottsym{)} \!  \Uparrow  $, then $ \ottnt{f} \!  \Uparrow  $.
  \end{enumerate}
\end{theorem}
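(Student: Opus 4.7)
The proof of both parts hinges on Lemma~\ref{lem:completeness_value_eval_step}, which lifts a single step of the instantiated term $S(f)$ back to a step of $f$ itself, provided the next term cannot evaluate to blame. I would handle each part as follows.

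For Part~1, I would proceed by induction on the number $n$ of evaluation steps in $S(f) \longmapsto^\ast w$. In the base case $n = 0$, $S(f) = w$; since type substitution affects only type annotations inside a term and never its outermost value constructor, $f$ itself must already be a value $w'$ with $S(w') = w$, so taking $S''' := S$ suffices. For the inductive case, split the reduction as $S(f) \longmapsto g$ followed by $g \longmapsto^\ast w$ in $n-1$ steps. To invoke the lemma, I must rule out that $g \longmapsto^\ast \textsf{blame}\,\ell$ for any substitution and any $\ell$. This follows from determinacy of $\longmapsto$ (up to $\alpha$-renaming of type variables freshly generated by \textsc{R\_InstArrow}) together with the already-given terminating reduction of $g$ to $w$. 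The lemma then yields $f \longmapsto f''$ together with some $S_1$ such that $S_1(f'') = g$, and Lemma~\ref{lem:preservation} keeps $f''$ well typed. Applying the induction hypothesis to $f''$, the substitution $S_1$, and the reduction $S_1(f'') = g \longmapsto^\ast w$ of length $n-1$ produces $f'' \longmapsto^\ast w'$ and some $S_2$ with $S_2(w') = w$. Concatenation yields $f \longmapsto^\ast w'$ with $S''' := S_2$.

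For Part~2, I would construct an infinite reduction of $f$ step-by-step from an infinite reduction of $S(f)$. Given the infinite chain $g_0 = S(f) \longmapsto g_1 \longmapsto g_2 \longmapsto \cdots$ witnessing $S(f)\,{\Uparrow}$, determinacy again ensures that no $g_i$ reduces to blame under any substitution or label. Maintaining the invariant ``$\emptyset \vdash f_i : U_i$ and $S_i(f_i) = g_i$'' and starting from $f_0 := f$, $S_0 := S$, at each step I apply Lemma~\ref{lem:completeness_value_eval_step} to $S_i(f_i) = g_i \longmapsto g_{i+1}$ to extract $f_i \longmapsto f_{i+1}$ together with a substitution $S_{i+1}$ satisfying $S_{i+1}(f_{i+1}) = g_{i+1}$; preservation re-establishes the typing invariant. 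The resulting infinite chain $f = f_0 \longmapsto f_1 \longmapsto \cdots$ witnesses $f\,{\Uparrow}$.

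The main obstacle is justifying, at every step in both parts, the universally-quantified no-blame premise of Lemma~\ref{lem:completeness_value_eval_step}. This requires a separate determinacy lemma for $\longmapsto$, modulo $\alpha$-renaming of the type variables introduced freshly by \textsc{R\_InstArrow} and of the variables introduced when substituting into $\nu$-annotations via \textsc{R\_LetP}. Determinacy should follow by straightforward inspection of the reduction and evaluation rules, but the freshness accounting---especially showing that $\alpha$-equivalent paths all terminate in the same way (value, blame, or infinite)---requires some care.
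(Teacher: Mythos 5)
Your plan is essentially the paper's own proof: Part~1 is proved by induction on the length of the evaluation sequence of $S  \ottsym{(}  \ottnt{f}  \ottsym{)}$, lifting each step back through Lemma~\ref{lem:completeness_value_eval_step} and re-establishing typing via preservation, and Part~2 iterates that same single-step lemma along the (arbitrarily long) reduction of $S  \ottsym{(}  \ottnt{f}  \ottsym{)}$, with the no-blame premise discharged, as you say, because a term that reaches a value or diverges cannot also reach blame. One small caution: your base-case justification (``substitution never affects the outermost value constructor'') is not literally true --- applying $S$ can turn a non-value cast such as $f_{1}  \ottsym{:}   \ottmv{X} \Rightarrow  \unskip ^ { \ell }  \! \ottmv{X} $ into a value-shaped function wrapper when $S  \ottsym{(}  \ottmv{X}  \ottsym{)}$ is an arrow type --- so the direction ``$S  \ottsym{(}  \ottnt{f}  \ottsym{)}$ a value implies $\ottnt{f}$ a value'' needs its own lemma, proved in the paper by the canonical-forms fact that no closed value is typed at a type variable.
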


\iffull\else
\begin{proof}
  By Lemma \ref{lem:completeness_value_eval_step}.
\end{proof}
\fi

In the first item, $S''$ is required to relate the two values
because DTI may still leave some type variables uninstantiated.  In
fact, $S''$ witnesses that $S'$ is more general than $S$.


\paragraph{Remark:} As we discussed in
Section~\ref{sec:intro-problem}, an alternative to DTI, which is to
substitute $ \star $ for type variables that type inference left
uninstantiated, breaks the following property, which could be viewed
as soundness:
\begin{quotation}
  If $ \emptyset   \vdash  \ottnt{f}  \ottsym{:}  \ottnt{U}$, $S$ substitutes $ \star $ for all type
  variables in $\ottnt{f}$, and $S  \ottsym{(}  \ottnt{f}  \ottsym{)} \, \longmapsto_{\textsf{\textup{B}\relax}\relax}^\ast \, w$, then there is some
  \emph{static} type substitution $S'$ such that $S'  \ottsym{(}  \ottnt{f}  \ottsym{)} \, \longmapsto_{\textsf{\textup{B}\relax}\relax}^\ast \, w'$
  for some $w'$.
\end{quotation}
The term
\[
  \begin{array}{l}
   ( \lambda  \ottmv{x} \!:\!  \star  .\,  \ottsym{(}  \ottsym{(}  \ottsym{(}  \ottmv{x}  \ottsym{:}   \star \Rightarrow  \unskip ^ { \ell_{{\mathrm{1}}} }  \! \star  \!\rightarrow\!  \star   \ottsym{)} \, \ottsym{(}   2   \ottsym{:}    \textsf{\textup{int}\relax}  \Rightarrow  \unskip ^ { \ell_{{\mathrm{2}}} }  \! \star   \ottsym{)}  \ottsym{)}  \ottsym{:}   \star \Rightarrow  \unskip ^ { \ell_{{\mathrm{3}}} }  \! \star  \!\rightarrow\!  \star   \ottsym{)} \, \ottsym{(}   \textsf{\textup{true}\relax}   \ottsym{:}    \textsf{\textup{bool}\relax}  \Rightarrow  \unskip ^ { \ell_{{\mathrm{4}}} }  \! \star   \ottsym{)} )
    \\ \quad \ottsym{(}  \ottsym{(}   \lambda  \ottmv{y_{{\mathrm{1}}}} \!:\!  \ottmv{Y}  .\,   \lambda  \ottmv{y_{{\mathrm{2}}}} \!:\!  \ottmv{Y}  .\,   \textsf{\textup{ if }\relax}  \ottnt{b}  \textsf{\textup{ then }\relax}  \ottmv{y_{{\mathrm{1}}}}  \textsf{\textup{ else }\relax}  \ottmv{y_{{\mathrm{2}}}}     \ottsym{)}  \ottsym{:}   \star  \!\rightarrow\!  \star  \!\rightarrow\!  \star \Rightarrow  \unskip ^ { \ell_{{\mathrm{5}}} }  \! \star   \ottsym{)}
  \end{array}
\]
is a counterexample.  However, completeness would be satisfied.  The
semantics where uninstantiated type variables are regarded as
distinguished base types, as in \citet{DBLP:conf/popl/GarciaC15},
would satisfy neither the second item of soundness nor completeness.

\section{The Gradual Guarantee}
\label{sec:gradual-guarantee}

The gradual guarantee~\cite{DBLP:conf/snapl/SiekVCB15}, a property capturing the
essence of gradual code evolution, is one of the important criteria for
gradually typed languages.
%
%
%
%
It formalizes the intuition that, in a gradual type
system, giving more ``precise'' type annotations can find more typing
errors either statically or dynamically but otherwise does not change the behavior
of programs.  In a simple setting, a type is more precise than
another, if the former is obtained by replacing some occurrences of
$ \star $ by other types.  For example, $ \textsf{\textup{int}\relax}   \!\rightarrow\!  \star  \!\rightarrow\!   \textsf{\textup{bool}\relax} $ is more precise
than $\star  \!\rightarrow\!  \star$.  The statement consists of two parts: one about
typeability (the \emph{static} gradual guarantee) and the other about
evaluation (the \emph{dynamic} gradual guarantee).

The static gradual guarantee states that, if term $e$ is well
typed, then so is a term that is the same as $e$ except that it is
less precisely annotated.

The dynamic gradual guarantee ensures that a more precisely annotated program
and a less precisely annotated one behave in the same way as far as the more
precisely annotated program does not raise blame.
For example, let us consider the following gradually typed program (which is
similar to the example used in Section~\ref{sec:introduction}):
\[
  \ottsym{(}   \lambda  \ottmv{x} \!:\!  \star  \!\rightarrow\!  \star  .\,  \ottmv{x}  \,  2   \ottsym{)} \, \ottsym{(}   \lambda  \ottmv{y} \!:\!  \star  .\,  \ottmv{y}   \ottsym{)}
\]
This program is translated into a term in the blame calculus and evaluates to
value $ 2   \ottsym{:}    \textsf{\textup{int}\relax}  \Rightarrow  \unskip ^ { \ell }  \! \star $.
On one hand, we can obtain a less precisely annotated program by replacing the
type annotation $\star  \!\rightarrow\!  \star$ for variable $\ottmv{x}$ with $ \star $, which is the least
precise type.
\[
  \ottsym{(}   \lambda  \ottmv{x} \!:\!  \star  .\,  \ottmv{x}  \,  2   \ottsym{)} \, \ottsym{(}   \lambda  \ottmv{y} \!:\!  \star  .\,  \ottmv{y}   \ottsym{)}
\]
This program evaluates to the same value $ 2   \ottsym{:}    \textsf{\textup{int}\relax}  \Rightarrow  \unskip ^ { \ell }  \! \star $ as the gradual
guarantees expects.
On the other hand, we can get a more precisely annotated program by replacing
the type annotation $ \star $ for variable $\ottmv{y}$ with $ \textsf{\textup{int}\relax} $.
\[
  \ottsym{(}   \lambda  \ottmv{x} \!:\!  \star  \!\rightarrow\!  \star  .\,  \ottmv{x}  \,  2   \ottsym{)} \, \ottsym{(}   \lambda  \ottmv{y} \!:\!   \textsf{\textup{int}\relax}   .\,  \ottmv{y}   \ottsym{)}
\]
Then, it also evaluates to the same value $ 2   \ottsym{:}    \textsf{\textup{int}\relax}  \Rightarrow  \unskip ^ { \ell }  \! \star $.
However, more precisely annotated programs may raise blame even if less
precisely annotated ones do not.
For example, let us consider the case that $\ottmv{y}$ in the above program is given
a wrong type annotation, e.g., $ \textsf{\textup{bool}\relax} $.
\[
  \ottsym{(}   \lambda  \ottmv{x} \!:\!  \star  \!\rightarrow\!  \star  .\,  \ottmv{x}  \,  2   \ottsym{)} \, \ottsym{(}   \lambda  \ottmv{y} \!:\!   \textsf{\textup{bool}\relax}   .\,  \ottmv{y}   \ottsym{)}
\]
Then, it raises blame because the evaluation triggers a sequence of casts
$ 2   \ottsym{:}    \textsf{\textup{int}\relax}  \Rightarrow  \unskip ^ { \ell_{{\mathrm{1}}} }  \!  \star \Rightarrow  \unskip ^ { \ell_{{\mathrm{2}}} }  \!  \textsf{\textup{bool}\relax}   $, which fails.

In the rest of this section,
we show that the ITGL satisfies both the static and dynamic gradual guarantee.
We extend the statements to take static/dynamic type inference into
account: For example, the statement of the static gradual guarantee
becomes that, given $e$, if static
type inference succeeds, then for any less precisely annotated
term $e'$ static type inference also succeeds.
To state formally, we first introduce the ITGL and translation of ITGL terms to
$\lambdaRTI$ terms.
Then, the notion of ``more (or less) precisely annotated'' ITGL terms
is formalized by a \emph{precision} relation over the ITGL terms.
Using the precision, we show the static gradual guarantee.
The dynamic gradual guarantee is proved by giving a precision relation also for
$\lambdaRTI$ and showing the two properties: (1) the translation results of
precision-related ITGL terms are precision-related; and (2) precision-related
$\lambdaRTI$ terms behave equivalently if the more precisely annotated term does
not raise blame.

\subsection{ITGL: the Implicitly Typed Gradual Language}
\subsubsection{Definition}
This section reviews the definition of the ITGL~\cite{DBLP:conf/popl/GarciaC15}.
\begin{figure}
  {\small
    \input{figures/def_ITGL}}
  \caption{The ITGL.}
  \label{fig:def_ITGL}
\end{figure}
Figure~\ref{fig:def_ITGL} shows the syntax and the typing rules of the ITGL and
the cast insertion rules to translate the ITGL to $\lambdaRTI$.
The syntax is from the standard lambda calculus, except that 
it provides two kinds of abstractions: one is $ \lambda  \ottmv{x} \!:\!  \ottnt{U}  .\,  e $, where the type of
argument $\ottmv{x}$ is given as $\ottnt{U}$ explicitly, and the other is $ \lambda  \ottmv{x} .\,  e $,
where the type of $\ottmv{x}$ is implicit.
A $ \textsf{\textup{let}\relax} $-expression $ \textsf{\textup{let}\relax} \,  \ottmv{x}  =  v  \textsf{\textup{ in }\relax}  e $ allows $v$ to be polymorphic in
$e$; type variables to be generalized are implicit in terms of
the ITGL, while they are explicit in $\lambdaRTI$.

The typing rules are essentially the same as what is called schematic
typing in \citet{DBLP:conf/popl/GarciaC15} modulo a few minor
adaptations.
The rules \rnp{IT\_VarP} and \rnp{IT\_AbsI} require implicit types to be static.
The rule \rnp{IT\_Op} says that types of arguments have to be consistent with the
argument types of $ \mathit{op} $.
The rule \rnp{IT\_App} for application $e_{{\mathrm{1}}} \, e_{{\mathrm{2}}}$ is standard in gradual
typing~\cite{DBLP:conf/snapl/SiekVCB15}.
The type $\ottnt{U_{{\mathrm{1}}}}$ of $e_{{\mathrm{1}}}$ is consistent with function type $\ottnt{U_{{\mathrm{11}}}}  \!\rightarrow\!  \ottnt{U_{{\mathrm{12}}}}$
obtained by operation $ \triangleright $, which is defined as
\[\begin{array}{ccc@{\qquad\qquad}ccc}
  \star       & \triangleright & \star  \!\rightarrow\!  \star &
 \ottnt{U_{{\mathrm{1}}}}  \!\rightarrow\!  \ottnt{U_{{\mathrm{2}}}} & \triangleright & \ottnt{U_{{\mathrm{1}}}}  \!\rightarrow\!  \ottnt{U_{{\mathrm{2}}}},
  \end{array}\]
and the type $\ottnt{U_{{\mathrm{2}}}}$ of $e_{{\mathrm{2}}}$ has to be consistent with the argument type of
$\ottnt{U_{{\mathrm{11}}}}  \!\rightarrow\!  \ottnt{U_{{\mathrm{12}}}}$.
The rule \rnp{IT\_Let} is standard in languages with
let-polymorphism~\cite{Milner78JCSS} except the condition on type variables to
be generalized.
We do not allow type variables that appear in type annotations of $v$ to be
generalized because allowing it invalidates the property that
$ \textsf{\textup{let}\relax} \,  \ottmv{x}  =  v  \textsf{\textup{ in }\relax}  e $ is well typed if and only if so is
$e  [  \ottmv{x}  \ottsym{:=}  v  ]$ (if $e$ refers to $\ottmv{x}$).
For example, let us consider ITGL program
\[
  \textsf{\textup{let}\relax} \,  \ottmv{x}  =   \lambda  \ottmv{y} \!:\!  \ottmv{X}  .\,  \ottmv{y}   \textsf{\textup{ in }\relax}  \ottsym{(}  \ottmv{x} \,  2   \ottsym{,}  \ottmv{x} \,  \textsf{\textup{true}\relax}   \ottsym{)} .
\]
This program would be well typed if $\ottmv{X}$ could be generalized.
However, the result of expanding let
\[
 \ottsym{(}  \ottmv{x} \,  2   \ottsym{,}  \ottmv{x} \,  \textsf{\textup{true}\relax}   \ottsym{)}  [  \ottmv{x}  \ottsym{:=}   \lambda  \ottmv{y} \!:\!  \ottmv{X}  .\,  \ottmv{y}   ] = \ottsym{(}  \ottsym{(}   \lambda  \ottmv{y} \!:\!  \ottmv{X}  .\,  \ottmv{y}   \ottsym{)} \,  2   \ottsym{,}  \ottsym{(}   \lambda  \ottmv{y} \!:\!  \ottmv{X}  .\,  \ottmv{y}   \ottsym{)} \,  \textsf{\textup{true}\relax}   \ottsym{)}
\]
is not well typed because $\ottmv{X}$ could not be instantiated in two ways:
$ \textsf{\textup{int}\relax} $ and $ \textsf{\textup{bool}\relax} $.
Our restriction to generalizable type variables rejects not only the latter but
also the former; OCaml seems to adopt the same strategy (such type variables
are generalized only at the top level).

Translation $\Gamma  \vdash  e  \rightsquigarrow  \ottnt{f}  \ottsym{:}  \ottnt{U}$ of ITGL term $e$ of type $\ottnt{U}$ to
$\lambdaRTI$ term $\ottnt{f}$ is achieved by inserting casts and making implicit
type information explicit.
The cast insertion rules, which are shown in the lower half of
Figure~\ref{fig:def_ITGL}, are standard~\cite{DBLP:conf/snapl/SiekVCB15} except
\rnp{CI\_VarP} and \rnp{CI\_LetP}.
The rule \rnp{CI\_VarP} makes type instantiations explicit in terms.
The side condition is the same as that of \rnp{T\_VarP}.
The rule \rnp{CI\_LetP} is similar to \rnp{T\_LetP}, but it allows not only type
variables $ \overrightarrow{ \ottmv{X_{\ottmv{i}}} } $ that appear in the type $\ottnt{U_{{\mathrm{1}}}}$ of polymorphic value
$w_{{\mathrm{1}}}$ but also $ \overrightarrow{ \ottmv{Y_{\ottmv{i}}} } $ that appear only in $w_{{\mathrm{1}}}$---they are implicit in ITGL
terms---to be generalized for identifying $ \textsf{\textup{let}\relax} \,  \ottmv{x}  =  v  \textsf{\textup{ in }\relax}  e $ with $e  [  \ottmv{x}  \ottsym{:=}  v  ]$
semantically.
For example, let us consider the following ITGL term:
\[
  \textsf{\textup{let}\relax} \,  \ottmv{x}  =   \lambda  \ottmv{y} .\,  \ottsym{(}   \ottsym{(}   \lambda  \ottmv{z} .\,  \ottmv{z}   \ottsym{)}  ::  \star  \!\rightarrow\!  \star   \ottsym{)}  \, \ottmv{y}  \textsf{\textup{ in }\relax}  \ottsym{(}  \ottmv{x} \,  2   \ottsym{,}  \ottmv{x} \,  \textsf{\textup{true}\relax}   \ottsym{)} 
\]
where $ e  ::  \ottnt{U} $ means $\ottsym{(}   \lambda  \ottmv{x} \!:\!  \ottnt{U}  .\,  \ottmv{x}   \ottsym{)} \, e$.
If type variables that appear in only $\ottnt{U_{{\mathrm{1}}}}$ could be generalized,
it would be translated to
\[
  \textsf{\textup{let}\relax} \,  \ottmv{x}  =   \Lambda   \ottmv{X} .\,   \lambda  \ottmv{y} \!:\!  \ottmv{X}  .\,  \ottsym{(}  \ottsym{(}   \lambda  \ottmv{z} \!:\!  \ottmv{Y}  .\,  \ottmv{z}   \ottsym{)}  \ottsym{:}   \ottmv{Y}  \!\rightarrow\!  \ottmv{Y} \Rightarrow  \unskip ^ { \ell_{{\mathrm{1}}} }  \! \star  \!\rightarrow\!  \star   \ottsym{)}   \, \ottsym{(}  \ottmv{y}  \ottsym{:}   \ottmv{X} \Rightarrow  \unskip ^ { \ell_{{\mathrm{2}}} }  \! \star   \ottsym{)}  \textsf{\textup{ in }\relax}  \ottsym{(}  \ottmv{x}  [   \textsf{\textup{int}\relax}   ] \,  2   \ottsym{,}  \ottmv{x}  [   \textsf{\textup{bool}\relax}   ] \,  \textsf{\textup{true}\relax}   \ottsym{)} 
\]
where $\ottmv{Y}$ is not generalized.
The evaluation of this term results in blame because the monomorphic type
variable $\ottmv{Y}$ will be instantiated with different base types $ \textsf{\textup{int}\relax} $ and
$ \textsf{\textup{bool}\relax} $ by DTI.
However, the evaluation based on let-expansion does not trigger blame:
\[\begin{array}{l@{\;}l}
 & \ottsym{(}  \ottmv{x} \,  2   \ottsym{,}  \ottmv{x} \,  \textsf{\textup{true}\relax}   \ottsym{)}  [  \ottmv{x}  \ottsym{:=}   \lambda  \ottmv{y} .\,  \ottsym{(}   \ottsym{(}   \lambda  \ottmv{z} .\,  \ottmv{z}   \ottsym{)}  ::  \star  \!\rightarrow\!  \star   \ottsym{)}  \, \ottmv{y}  ] \\
 =                        & \ottsym{(}  \ottsym{(}   \lambda  \ottmv{y} .\,  \ottsym{(}   \ottsym{(}   \lambda  \ottmv{z} .\,  \ottmv{z}   \ottsym{)}  ::  \star  \!\rightarrow\!  \star   \ottsym{)}  \, \ottmv{y}  \ottsym{)} \,  2   \ottsym{,}  \ottsym{(}   \lambda  \ottmv{y} .\,  \ottsym{(}   \ottsym{(}   \lambda  \ottmv{z} .\,  \ottmv{z}   \ottsym{)}  ::  \star  \!\rightarrow\!  \star   \ottsym{)}  \, \ottmv{y}  \ottsym{)} \,  \textsf{\textup{true}\relax}   \ottsym{)} \\
 \rightsquigarrow         & (\ottsym{(}   \lambda  \ottmv{y} \!:\!   \textsf{\textup{int}\relax}   .\,  \ottsym{(}  \ottsym{(}   \lambda  \ottmv{z} \!:\!  \ottmv{Y_{{\mathrm{1}}}}  .\,  \ottmv{z}   \ottsym{)}  \ottsym{:}   \ottmv{Y_{{\mathrm{1}}}}  \!\rightarrow\!  \ottmv{Y_{{\mathrm{1}}}} \Rightarrow  \unskip ^ { \ell_{{\mathrm{1}}} }  \! \star  \!\rightarrow\!  \star   \ottsym{)}  \, \ottsym{(}  \ottmv{y}  \ottsym{:}    \textsf{\textup{int}\relax}  \Rightarrow  \unskip ^ { \ell_{{\mathrm{2}}} }  \! \star   \ottsym{)}  \ottsym{)} \,  2 , \\
                          & \quad \ottsym{(}   \lambda  \ottmv{y} \!:\!   \textsf{\textup{bool}\relax}   .\,  \ottsym{(}  \ottsym{(}   \lambda  \ottmv{z} \!:\!  \ottmv{Y_{{\mathrm{2}}}}  .\,  \ottmv{z}   \ottsym{)}  \ottsym{:}   \ottmv{Y_{{\mathrm{2}}}}  \!\rightarrow\!  \ottmv{Y_{{\mathrm{2}}}} \Rightarrow  \unskip ^ { \ell_{{\mathrm{3}}} }  \! \star  \!\rightarrow\!  \star   \ottsym{)}  \, \ottsym{(}  \ottmv{y}  \ottsym{:}    \textsf{\textup{bool}\relax}  \Rightarrow  \unskip ^ { \ell_{{\mathrm{4}}} }  \! \star   \ottsym{)}  \ottsym{)} \,  \textsf{\textup{true}\relax} )\footnotemark \\
  \xmapsto{ \mathmakebox[0.4em]{} [  \ottmv{Y_{{\mathrm{1}}}}  :=   \textsf{\textup{int}\relax}   ] \mathmakebox[0.3em]{} }\hspace{-0.4em}{}^\ast \hspace{0.2em}   & \ottsym{(}  \ottsym{(}   2   \ottsym{:}    \textsf{\textup{int}\relax}  \Rightarrow  \unskip ^ { \ell_{{\mathrm{1}}} }  \! \star   \ottsym{)}  \ottsym{,}  \ottsym{(}   \lambda  \ottmv{y} \!:\!   \textsf{\textup{bool}\relax}   .\,  \ottsym{(}  \ottsym{(}   \lambda  \ottmv{z} \!:\!  \ottmv{Y_{{\mathrm{2}}}}  .\,  \ottmv{z}   \ottsym{)}  \ottsym{:}   \ottmv{Y_{{\mathrm{2}}}}  \!\rightarrow\!  \ottmv{Y_{{\mathrm{2}}}} \Rightarrow  \unskip ^ { \ell_{{\mathrm{3}}} }  \! \star  \!\rightarrow\!  \star   \ottsym{)}  \, \ottsym{(}  \ottmv{y}  \ottsym{:}    \textsf{\textup{bool}\relax}  \Rightarrow  \unskip ^ { \ell_{{\mathrm{4}}} }  \! \star   \ottsym{)}  \ottsym{)} \,  \textsf{\textup{true}\relax}   \ottsym{)} \\
  \xmapsto{ \mathmakebox[0.4em]{} [  \ottmv{Y_{{\mathrm{2}}}}  :=   \textsf{\textup{bool}\relax}   ] \mathmakebox[0.3em]{} }\hspace{-0.4em}{}^\ast \hspace{0.2em}  & \ottsym{(}  \ottsym{(}   2   \ottsym{:}    \textsf{\textup{int}\relax}  \Rightarrow  \unskip ^ { \ell_{{\mathrm{1}}} }  \! \star   \ottsym{)}  \ottsym{,}  \ottsym{(}   \textsf{\textup{true}\relax}   \ottsym{:}    \textsf{\textup{bool}\relax}  \Rightarrow  \unskip ^ { \ell_{{\mathrm{3}}} }  \! \star   \ottsym{)}  \ottsym{)}
  \end{array}\]
\footnotetext{We omit trivial identity functions inserted due to type ascription
here.}
By allowing generalization of type variables appearing in not only $\ottnt{U_{{\mathrm{1}}}}$ but
also $w_{{\mathrm{1}}}$ (but not $v_{{\mathrm{1}}}$), we achieve the same semantics as let-expansion.

We show that the cast insertion is type-preserving and type safety of
the ITGL.
\begin{theorem}[name=Cast Insertion is Type-Preserving,restate=thmCastInsertion]
 \label{thm:ci_type_preservation}
 If $\Gamma  \vdash  e  \ottsym{:}  \ottnt{U}$,
 then $\Gamma  \vdash  e  \rightsquigarrow  \ottnt{f}  \ottsym{:}  \ottnt{U}$ and $\Gamma  \vdash  \ottnt{f}  \ottsym{:}  \ottnt{U}$ for some $\ottnt{f}$.
\end{theorem}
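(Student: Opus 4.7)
}
The plan is to proceed by induction on the derivation of $\Gamma \vdash e : U$ in the ITGL, exhibiting, for each rule, an applicable cast insertion rule that produces some $f$ together with a derivation of $\Gamma \vdash f : U$ in $\lambdaRTI$. The rules for constants, operators, abstractions, and applications should be routine: each ITGL typing rule has a direct counterpart among the CI rules (\rn{CI\_Const}, \rn{CI\_Op}, \rn{CI\_AbsE}, \rn{CI\_AbsI}, \rn{CI\_App}), the inductive hypotheses yield translated subterms at the same types, and the corresponding $\lambdaRTI$ typing rules (\rn{T\_Const}, \rn{T\_Op}, \rn{T\_Abs}, \rn{T\_App}) fit. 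Whenever a CI rule inserts a cast \(f : U \Rightarrow^{\ell} U'\), consistency \(U \sim U'\) is supplied by the very premise that made the ITGL rule applicable (e.g.\ \(U_1 \sim \iota_1\) in \rn{IT\_Op}, \(U_2 \sim U_{11}\) in \rn{IT\_App}), so \rn{T\_Cast} discharges the obligation.

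The interesting cases are \rn{IT\_VarP} and \rn{IT\_LetP}, which interact with generalization and instantiation. For \rn{IT\_VarP}, the ITGL rule permits arbitrary static types \(\overrightarrow{T_i}\) to instantiate \(\overrightarrow{X_i}\), whereas \rn{CI\_VarP} forces \(\mathbbsl{T}_j = \nu\) whenever \(X_j \notin \textit{ftv}(U)\). I would reconcile this by defining \(\mathbbsl{T}_j \mathrel{{:}{=}} \nu\) for those \(j\) and \(\mathbbsl{T}_j \mathrel{{:}{=}} T_j\) otherwise; since the discarded \(T_j\)'s do not appear in \(U\), the resulting type \(U[\overrightarrow{X_i} := \overrightarrow{\mathbbsl{T}_i}]\) coincides with the ITGL result \(U[\overrightarrow{X_i} := \overrightarrow{T_i}]\), and \rn{T\_VarP} then types the translated term at that type.

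The \rn{IT\_LetP} case is where the main obstacle lies. The ITGL rule generalizes only over \(\overrightarrow{X_i} = \textit{ftv}(U_1) \setminus (\textit{ftv}(\Gamma) \cup \textit{ftv}(v_1))\), so the inductive hypothesis on \(e_2\) delivers a translation in the environment \(\Gamma, x : \forall \overrightarrow{X_i}.\,U_1\). However, \rn{CI\_LetP} additionally generalizes over the extra type variables \(\overrightarrow{Y_j} = \textit{ftv}(w_1) \setminus (\textit{ftv}(\Gamma) \cup \textit{ftv}(U_1) \cup \textit{ftv}(v_1))\) that are introduced by cast insertion for \(v_1\) (mostly via \rn{CI\_AbsI}), and therefore requires a translation of \(e_2\) in the strictly larger environment \(\Gamma, x : \forall \overrightarrow{X_i}\,\overrightarrow{Y_j}.\,U_1\). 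Bridging the gap requires an auxiliary ``extra-generalization'' lemma: if \(\Gamma, x : \forall \overrightarrow{X_i}.\,U_1 \vdash e \rightsquigarrow f : U'\) and \(\overrightarrow{Y_j}\) are fresh with \(\overrightarrow{Y_j} \cap \textit{ftv}(U_1) = \emptyset\), then there is some \(f'\) (obtained by extending each type-application site \(x[\overrightarrow{\mathbbsl{T}_i}]\) inside \(f\) with \(|\overrightarrow{Y_j}|\) additional \(\nu\) arguments) such that \(\Gamma, x : \forall \overrightarrow{X_i}\,\overrightarrow{Y_j}.\,U_1 \vdash e \rightsquigarrow f' : U'\). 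This is proved by a straightforward induction on the cast-insertion derivation, with the variable case directly using that the side condition ``\(\mathbbsl{T}_k = \nu\) iff \(X_k \notin \textit{ftv}(U_1)\)'' in \rn{CI\_VarP} is preserved when extending the quantifier prefix with variables not in \(\textit{ftv}(U_1)\). Once this lemma is in hand, I would then invoke the inductive hypothesis on \(v_1\), invoke the lemma to obtain the required translation of \(e_2\), assemble the conclusion via \rn{CI\_LetP}, and type the result with \rn{T\_LetP}; the disjointness side condition \(\overrightarrow{X_i}\,\overrightarrow{Y_j} \cap \textit{ftv}(\Gamma) = \emptyset\) holds by the very definition of \(\overrightarrow{X_i}\) and \(\overrightarrow{Y_j}\).
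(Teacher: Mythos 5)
Your proposal is correct, and its core insight coincides with the paper's: the only non-routine obstacle is that \textsc{CI\_LetP} generalizes over the extra variables $\overrightarrow{Y_j}$ introduced by cast insertion into $v_1$, so the translation of $e_2$ is needed under a \emph{more} generalized binding for $x$ than the one the ITGL typing derivation provides. Where you differ is in how this gap is bridged. The paper strengthens the induction hypothesis up front: it introduces an auxiliary relation $\Gamma \rightsquigarrow \Gamma'$ (allowing each $x : \forall \overrightarrow{X_i}.U$ in $\Gamma$ to become $x : \forall \overrightarrow{X_i}\,\overrightarrow{Y_j}.U$ with $\overrightarrow{Y_j} \cap \textit{ftv}(U) = \emptyset$ in $\Gamma'$) and proves the generalized statement ``if $\Gamma \vdash e : U$ and $\Gamma \rightsquigarrow \Gamma'$ then $\Gamma' \vdash e \rightsquigarrow f : U$'' by a single induction on the typing derivation; the let case then goes through directly because the IH already quantifies over the extended environment, and the variable case pads the type arguments with $\nu$ exactly as you describe. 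It then separately proves $\Gamma \vdash e \rightsquigarrow f : U$ implies $\Gamma \vdash f : U$ by induction on the cast-insertion derivation, and the theorem follows from reflexivity of $\rightsquigarrow$. You instead keep the plain induction and repair the let case a posteriori with an ``extra-generalization'' (weakening) lemma on cast-insertion derivations, which must transform the already-produced term $f$ by appending $\nu$ arguments at every type-application site for $x$. Both routes work; the paper's buys a cleaner let case at the cost of carrying the environment relation through every case, while yours keeps the main induction in one-to-one correspondence with the typing rules but requires an additional induction over cast-insertion derivations (and a term transformation) whose variable case is essentially the same $\nu$-padding argument the paper performs inside its generalized induction.
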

\begin{definition}[Evaluation of ITGL terms]
 We write $ \langle  \Gamma   \vdash   e  :  \ottnt{U}  \rangle   \xmapsto{ \mathmakebox[0.4em]{} S \mathmakebox[0.3em]{} }\hspace{-0.4em}{}^\ast \hspace{0.2em}    \ottnt{f} $
 if $\Gamma  \vdash  e  \rightsquigarrow  \ottnt{f'}  \ottsym{:}  \ottnt{U}$ and $\ottnt{f'} \,  \xmapsto{ \mathmakebox[0.4em]{} S \mathmakebox[0.3em]{} }\hspace{-0.4em}{}^\ast \hspace{0.2em}  \, \ottnt{f}$ for some $\ottnt{f'}$.
 We also write $ \langle  \Gamma   \vdash   e  :  \ottnt{U}  \rangle   \Uparrow  $
 if $\Gamma  \vdash  e  \rightsquigarrow  \ottnt{f}  \ottsym{:}  \ottnt{U}$ and $ \ottnt{f} \!  \Uparrow  $ for some $\ottnt{f}$.
\end{definition}
\begin{corollary}[name=Type Safety of the ITGL,restate=corTypeSafeITGL]
 If $ \emptyset   \vdash  e  \ottsym{:}  \ottnt{U}$, then:
 \begin{itemize}
  \item $ \langle   \emptyset    \vdash   e  :  \ottnt{U}  \rangle   \xmapsto{ \mathmakebox[0.4em]{} S \mathmakebox[0.3em]{} }\hspace{-0.4em}{}^\ast \hspace{0.2em}    w $ for some $S$ and $w$
        such that $ \emptyset   \vdash  w  \ottsym{:}  S  \ottsym{(}  \ottnt{U}  \ottsym{)}$;
  \item $ \langle   \emptyset    \vdash   e  :  \ottnt{U}  \rangle   \xmapsto{ \mathmakebox[0.4em]{} S \mathmakebox[0.3em]{} }\hspace{-0.4em}{}^\ast \hspace{0.2em}    \textsf{\textup{blame}\relax} \, \ell $
        for some $S$ and $\ell$; or
  \item $ \langle   \emptyset    \vdash   e  :  \ottnt{U}  \rangle   \Uparrow  $.
 \end{itemize}
\end{corollary}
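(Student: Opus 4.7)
The plan is to derive this corollary as a direct combination of Theorem~\ref{thm:ci_type_preservation} (Cast Insertion is Type-Preserving) and Theorem~\ref{thm:type_safety} (Type Safety of $\lambdaRTI$), bridged by the definition of ITGL evaluation via the cast-inserting translation.

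First, I would invoke Theorem~\ref{thm:ci_type_preservation} on the hypothesis $\emptyset \vdash e : U$ to obtain some $\lambdaRTI$ term $\ottnt{f}$ with both $\emptyset \vdash e \rightsquigarrow \ottnt{f} : U$ and $\emptyset \vdash \ottnt{f} : U$. Next, I would apply Theorem~\ref{thm:type_safety} to the well-typed closed $\lambdaRTI$ term $\ottnt{f}$, which yields that either (a) $\ottnt{f} \xmapsto{S}^{\ast} \ottnt{r}$ for some $S$ and some result $\ottnt{r}$ with $\emptyset \vdash \ottnt{r} : S(U)$, or (b) $\ottnt{f} \Uparrow$.

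Then I would do a case split on the shape of $\ottnt{r}$ in case (a): if $\ottnt{r}$ is a value $w$, this matches the first conclusion of the corollary via the definition of $\langle \emptyset \vdash e : U \rangle \xmapsto{S}^{\ast} w$ (using the witness $\ottnt{f}$ from step one); if $\ottnt{r} = \textsf{blame}\,\ell$, the second conclusion follows similarly, noting that $\emptyset \vdash \textsf{blame}\,\ell : S(U)$ is automatic from \rnp{T\_Blame} and so the typing condition need not be stated. Case (b) directly gives the third conclusion by the definition of $\langle \emptyset \vdash e : U \rangle \Uparrow$, again using the same witness $\ottnt{f}$.

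There is essentially no obstacle in this proof: every nontrivial ingredient has already been established. The only thing to be careful about is bookkeeping around the definition of the ITGL-level evaluation judgment, in particular that it is parameterized by a cast-inserted term whose existence and well-typedness are precisely what Theorem~\ref{thm:ci_type_preservation} delivers. Consequently, the corollary is really just a ``lifting'' of $\lambdaRTI$ type safety through the translation, and the proof can be written in a few lines.
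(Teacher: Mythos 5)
Your proposal is correct and follows exactly the route the paper takes: its proof of this corollary is literally ``By Theorems~\ref{thm:ci_type_preservation} and \ref{thm:type_safety},'' i.e., obtain a well-typed cast-inserted term and then apply $\lambdaRTI$ type safety, with the case split on the result matching the three bullets via the definition of ITGL evaluation. Your additional bookkeeping (the witness $\ottnt{f}$ and the observation that blame needs no typing side condition) is accurate and adds nothing beyond what the paper's one-line proof already relies on.
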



\subsubsection{Precision}
We introduce precision relations for the ITGL to formalize more precise types
and more precisely annotated terms.

\paragraph{Type precision}
As we have already mentioned, intuitively, type $\ottnt{U}$ is more
precise than type $\ottnt{U'}$ if $\ottnt{U}$ is obtained by replacing some
occurrences of the dynamic type in $\ottnt{U'}$ with other types.
We generalize this notion of precision to type variables.
Type variables are similar to the dynamic type in that type variables that
appear in well-typed terms can be instantiated with any type as values of any
type can flow to the dynamic type.  \TS{We may need to refine this sentence...}
However, they differ from the dynamic type in that all occurrences of a type
variable have to be instantiated with a single static type.
For example, $ \textsf{\textup{int}\relax}   \!\rightarrow\!   \textsf{\textup{bool}\relax} $ can be considered---and \emph{is} in our
precision---more precise than $\ottmv{X}$ because we get the former by instantiating
$\ottmv{X}$ with $ \textsf{\textup{int}\relax}   \!\rightarrow\!   \textsf{\textup{bool}\relax} $, while it cannot be compared with $\ottmv{X}  \!\rightarrow\!  \ottmv{X}$
because $\ottmv{X}$ is not allowed to be instantiated in two different ways: $ \textsf{\textup{int}\relax} $ and
$ \textsf{\textup{bool}\relax} $.

\begin{figure}
  {\small
    \input{figures/def_type_precision}}
  \vspacerules
  {\small
    \input{figures/def_ITGL_term_precision}}
  \caption{Precision of the ITGL.}
  \label{fig:prec_ITGL}
\end{figure}
To ensure that all occurrences of a type variable are instantiated in the same
way, we equip type precision $ \ottnt{U}   \sqsubseteq _{ S }  \ottnt{U'} $, which means that $\ottnt{U}$
is more precise than $\ottnt{U'}$, with type substitution $S$, which gives
instantiations of type variables in $\ottnt{U'}$.
The type precision rules, given in the top of Figure~\ref{fig:prec_ITGL}, are
standard~\cite{DBLP:conf/snapl/SiekVCB15} except \rnp{P\_TyVar}: a base type
is more precise than itself \rnp{P\_IdBase}; components of
precision-related function types are also precision-related \rnp{P\_Arrow}; and
the dynamic type is the least precise type \rnp{P\_Dyn}.
The rule \rnp{P\_TyVar} allows type variables to be instantiated according to
$S$.
If we want to relate $\ottmv{X}$ to itself, we can give substitution $[  \ottmv{X}  :=  \ottmv{X}  ]$.
This type precision cooperates well with DTI in the sense that, if the
evaluation of a more precisely annotated term using type annotation
$ \textsf{\textup{int}\relax}   \!\rightarrow\!   \textsf{\textup{bool}\relax} $ does not raise blame, then that of a less precisely
annotated one using $\ottmv{X}$ does not, either, because DTI would
instantiate $\ottmv{X}$ with $ \textsf{\textup{int}\relax}   \!\rightarrow\!   \textsf{\textup{bool}\relax} $ or a less precise static type (such as $\ottmv{Y}  \!\rightarrow\!   \textsf{\textup{bool}\relax} $).
%


\paragraph{Term precision}
Rules of term precision $ e   \sqsubseteq _{ S }  e' $, which means that $e$ is
more precisely annotated than $e'$, are given in the bottom of
Figure~\ref{fig:prec_ITGL}.
All rules but \rnp{IP\_AbsIE}, \rnp{IP\_AbsEI}, and \rnp{IP\_AbsE} are just
for compatibility.
Lambda abstraction $ \lambda  \ottmv{x} .\,  e $ is more precisely annotated than $ \lambda  \ottmv{x} \!:\!  \star  .\,  e $
\rnp{IP\_AbsIE} because the static type inference gives variable $\ottmv{x}$ in
$ \lambda  \ottmv{x} .\,  e $ a static type (if any), which is more precise than the dynamic type.
Lambda abstraction $ \lambda  \ottmv{x} \!:\!  \ottnt{T}  .\,  e $ is more precisely annotated than $ \lambda  \ottmv{x} .\,  e $
\rnp{IP\_AbsEI} because the static type inference algorithm by
\citet{DBLP:conf/popl/GarciaC15} gives a principal static type to $\ottmv{x}$ in
$ \lambda  \ottmv{x} .\,  e $ and it should be less precise than $\ottnt{T}$ under some type
substitution due to principality.
Finally, $ \lambda  \ottmv{x} \!:\!  \ottnt{U}  .\,  e $ is more precisely annotated than $ \lambda  \ottmv{x} \!:\!  \ottnt{U'}  .\,  e' $ if $\ottnt{U}$
is more precise than $\ottnt{U'}$ and $e$ is more precisely annotated than
$e'$ under $S$ \rnp{IP\_AbsE}.

\subsection{The Static Gradual Guarantee}
\begin{definition}[Principal Type Inference]
 We suppose that there is a partial function $PT(\Gamma,e)$ that (1) if
 $S'  \ottsym{(}  \Gamma  \ottsym{)}  \vdash  S'  \ottsym{(}  e  \ottsym{)}  \ottsym{:}  \ottnt{U'}$ for some $S'$ and $\ottnt{U'}$,
 produces a pair $(S,\ottnt{U})$ such that
 (a) $S  \ottsym{(}  \Gamma  \ottsym{)}  \vdash  S  \ottsym{(}  e  \ottsym{)}  \ottsym{:}  \ottnt{U}$ and
 (b) for any $S_{{\mathrm{1}}}$ and $\ottnt{U_{{\mathrm{1}}}}$ such that
 $S_{{\mathrm{1}}}  \ottsym{(}  \Gamma  \ottsym{)}  \vdash  S_{{\mathrm{1}}}  \ottsym{(}  e  \ottsym{)}  \ottsym{:}  \ottnt{U_{{\mathrm{1}}}}$, there exists $S_{{\mathrm{2}}}$ such that
 $S_{{\mathrm{1}}}  \ottsym{=}   S_{{\mathrm{2}}}  \circ  S $; and (2) is undefined otherwise.
 \AI{See a comment by Reviewer C.  I don't think we should follow his/her comment, though.}
 \TS{I don't think so, either.}
\end{definition}
We can obtain a principal type inference algorithm $PT$ from
\citet{DBLP:conf/popl/GarciaC15}.
Now, we show the static gradual guarantee for the ITGL.
\begin{theorem}[name=Static Gradual Guarantee,restate=thmStaticGG]
 If $ e   \sqsubseteq _{ S_{{\mathrm{0}}} }  e' $ and $PT( \emptyset , e) = (S_{{\mathrm{1}}},\ottnt{U})$,
 then $PT( \emptyset , e') = (S'_{{\mathrm{1}}}, \ottnt{U'})$ and $ \ottnt{U}   \sqsubseteq _{ S_{{\mathrm{2}}} }  \ottnt{U'} $
 for some $S_{{\mathrm{2}}}$, $S'_{{\mathrm{1}}}$, and $\ottnt{U'}$.
\end{theorem}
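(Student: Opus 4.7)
The plan is to prove the static gradual guarantee by reducing it to a more basic semantic gradual guarantee about typeability, then invoking principality. Concretely, I would first establish a lemma of the following form: \emph{if $e \sqsubseteq_{S_0} e'$ and $\Gamma \sqsubseteq_{S_0} \Gamma'$ and $S(\Gamma) \vdash S(e) : U$ for some $S$, then there exist $S'$, $S_1$, and $U'$ such that $S'(\Gamma') \vdash S'(e') : U'$ and $U \sqsubseteq_{S_1} U'$.} Intuitively, this says that any typing derivation for a more precise term can be transported (under an appropriate substitution) to a typing derivation for the less precise term with a suitably less precise result type. Once this lemma is in hand, the static gradual guarantee follows immediately: from $PT(\emptyset,e) = (S_1,U)$ we get $S_1(\emptyset) \vdash S_1(e) : U$, the lemma gives us \emph{some} typing derivation for $e'$, and then the principal type property of $PT$ (Garcia and Cimini's algorithm) hands us $PT(\emptyset,e') = (S_1', U')$ together with a substitution witnessing that the inferred $U'$ is an instance of anything $e'$ can be typed at—so in particular less precise than $U$ modulo an appropriate $S_2$.

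To prove the key lemma, I would induct on the derivation of $e \sqsubseteq_{S_0} e'$. The compatibility cases \rn{IP\_Var}, \rn{IP\_Const}, \rn{IP\_Op}, \rn{IP\_App}, \rn{IP\_LetP}, and \rn{IP\_AbsI} (where both sides have implicit annotations) are routine: one inverts the ITGL typing rule used for $S(e)$, applies the induction hypothesis to each subterm, and reassembles using the corresponding ITGL rule, making use of the fact that type precision is preserved by consistency ($U_1 \sim U_2$ with $U_1 \sqsubseteq_S U_1'$ and $U_2 \sqsubseteq_S U_2'$ implies $U_1' \sim U_2'$) and by the matching operation $\triangleright$. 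The interesting cases are the three abstraction rules \rn{IP\_AbsE}, \rn{IP\_AbsIE}, and \rn{IP\_AbsEI}, since they relate explicit and implicit annotations asymmetrically. In \rn{IP\_AbsIE} (going from $\lambda x.\,e$ to $\lambda x{:}\star.\,e'$), the less precise side simply types $x$ at $\star$, which is weaker than whatever static type the IH provides; we pick that type on the more precise side and extend $S_0$ so that \rn{P\_Dyn} supplies the needed precision between them. The case \rn{IP\_AbsEI} is the subtle mirror image: we are given an explicit annotation $T$ on the more precise side and an implicit one on the less precise side, so the IH produces some type $T'$ for the bound variable on the right, and we must show $T \sqsubseteq_{S_1} T'$ for some $S_1$; here we rely on the fact that the static type inferred on the right is principal among static typings, so $T$ must be an instance of it.

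The main technical obstacle is threading the substitutions correctly. The precision relation $e \sqsubseteq_{S_0} e'$ fixes a single substitution $S_0$ interpreting type variables of $e'$ as types in $e$, but the typing derivations on the two sides use \emph{different} substitutions ($S$ and $S'$ in the lemma statement), and the principal type property of $PT$ introduces yet another substitution relating any ad-hoc typing to the principal one. In the abstraction cases, fresh type variables introduced on one side (by \rn{IP\_AbsEI} or during principal type inference) have to be instantiated consistently with the constraints arising from the body. I expect to manage this by carrying auxiliary invariants about disjointness of substitution domains and about the relationship between $S_0$ and the typing substitutions, and by freely composing with the substitution produced by principality at the final step. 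A second subtle point is \rn{IP\_LetP}: since ITGL's \rn{IT\_LetP} generalizes exactly the type variables of $U_1$ not free in $\Gamma$ or $v_1$, I need to argue that precision is preserved under this generalization, which follows from \rn{P\_Scheme} because precision already quantifies only over static instantiations. Once these bookkeeping issues are handled uniformly, the final combination with the principality of $PT$ gives the theorem in one line.
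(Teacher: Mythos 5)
Your proposal follows essentially the same route as the paper: a key lemma transporting any typing of the more precise term across the untyped precision relation to some typing of the less precise term (proved by induction on $ e   \sqsubseteq _{ S_{{\mathrm{0}}} }  e' $, with the $\mathit{op}$/application cases resting on preservation of consistency under precision), followed by an appeal to the principality of $PT$ to replace that ad-hoc typing of $e'$ by its principal one and a final transitivity step. The only quibble is your justification of the \textsc{IP\_AbsEI} case: no appeal to principality is needed there, since the implicit annotation on the right may be instantiated freely, so one simply chooses it to match (or generalize) the explicit static annotation on the left.
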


\subsection{Precision of $\lambdaRTI$}
As described at the beginning of this section, we show the dynamic gradual
guarantee via reduction of ITGL term precision to $\lambdaRTI$ term precision.
We denote $\lambdaRTI$ term precision by $$ \langle  \Gamma   \vdash   \ottnt{f}  :  \ottnt{U}   \sqsubseteq _{ S }  \ottnt{U'}  :  \ottnt{f'}  \dashv  \Gamma'  \rangle ,$$
which means that $\lambdaRTI$ term $\ottnt{f}$ having type $\ottnt{U}$ under $\Gamma$ is
more precisely annotated than $\ottnt{f'}$ having type $\ottnt{U'}$ under $\Gamma'$; type
substitution $S$ plays the same role as in precision of the ITGL.

\begin{figure}
  {\small
    \input{figures/def_term_precision}}
  \caption{Term precision of $\lambdaRTI$.}
  \label{fig:def_term_precision}
\end{figure}
We show precision rules of $\lambdaRTI$ in Figure~\ref{fig:def_term_precision}.
The rules \rnp{P\_Var}, \rnp{P\_Const}, \rnp{P\_Op}, \rnp{P\_Abs}, \rnp{P\_App},
and \rnp{P\_Cast} are similar to the corresponding typing rules of $\lambdaRTI$
except that they ensure that type information on the left-hand side is
more precise than that on the right-hand side.
The rule \rnp{P\_LetP} relates two let-expressions
$ \textsf{\textup{let}\relax} \,  \ottmv{x}  =   \Lambda    \overrightarrow{ \ottmv{X_{\ottmv{i}}} }  .\,  w_{{\mathrm{1}}}   \textsf{\textup{ in }\relax}  \ottnt{f_{{\mathrm{2}}}} $ and $ \textsf{\textup{let}\relax} \,  \ottmv{x}  =   \Lambda    \overrightarrow{ \ottmv{X'_{\ottmv{j}}} }  .\,  w'_{{\mathrm{1}}}   \textsf{\textup{ in }\relax}  \ottnt{f'_{{\mathrm{2}}}} $.
This rule allows polymorphic values $w_{{\mathrm{1}}}$ and $w'_{{\mathrm{1}}}$ to be related under
$S$ augmented with a type substitution $[   \overrightarrow{ \ottmv{X'_{\ottmv{j}}} }   :=   \overrightarrow{ \ottnt{T'_{\ottmv{j}}} }   ]$, which maps
bound type variables $ \overrightarrow{ \ottmv{X'_{\ottmv{j}}} } $ on the imprecise side to more precise types
$ \overrightarrow{ \ottnt{T'_{\ottmv{j}}} } $ on the precise side.
The second premise claims that $S$ does not capture bound variables
$ \overrightarrow{ \ottmv{X_{\ottmv{i}}} } $; note that free type variables in types to which $S$ maps may
appear on only the precise side.
The rules \rnp{P\_CastL} and \rnp{P\_CastR} are given because modifying type
annotations of casts changes casts generated at run time.
The rule \rnp{P\_Blame} represents that a term on the precise side may involve
blame even if one on the imprecise side does not.

\subsection{The Dynamic Gradual Guarantee}
Now, we show the dynamic gradual guarantee.
\begin{theorem}[name=Dynamic Gradual Guarantee,restate=thmDynamicGG]
  \label{thm:dynamicGG}
 Suppose that $ e   \sqsubseteq _{ S_{{\mathrm{0}}} }  e' $.
 Let
 $(S,\ottnt{U}) = PT( \emptyset ,e)$ and
 $(S',\ottnt{U'}) = PT( \emptyset , e')$.
 \begin{enumerate}
  \item \begin{itemize}
         \item If $ \langle   \emptyset    \vdash   S  \ottsym{(}  e  \ottsym{)}  :  \ottnt{U}  \rangle   \xmapsto{ \mathmakebox[0.4em]{} S_{{\mathrm{1}}} \mathmakebox[0.3em]{} }\hspace{-0.4em}{}^\ast \hspace{0.2em}    w $,
               then $ \langle   \emptyset    \vdash   S'  \ottsym{(}  e'  \ottsym{)}  :  \ottnt{U'}  \rangle   \xmapsto{ \mathmakebox[0.4em]{} S'_{{\mathrm{1}}} \mathmakebox[0.3em]{} }\hspace{-0.4em}{}^\ast \hspace{0.2em}    w' $ and
               $ \langle   \emptyset    \vdash   w  :  S_{{\mathrm{1}}}  \ottsym{(}  \ottnt{U}  \ottsym{)}   \sqsubseteq _{ S'_{{\mathrm{0}}} }  S'_{{\mathrm{1}}}  \ottsym{(}  \ottnt{U'}  \ottsym{)}  :  w'  \dashv   \emptyset   \rangle $
               for some $w'$, $S'_{{\mathrm{1}}}$, and $S'_{{\mathrm{0}}}$.
         \item If $ \langle   \emptyset    \vdash   S  \ottsym{(}  e  \ottsym{)}  :  \ottnt{U}  \rangle   \Uparrow  $,
               then $ \langle   \emptyset    \vdash   S'  \ottsym{(}  e'  \ottsym{)}  :  \ottnt{U'}  \rangle   \Uparrow  $.
        \end{itemize}
  \item \begin{itemize}
        \item If $ \langle   \emptyset    \vdash   S'  \ottsym{(}  e'  \ottsym{)}  :  \ottnt{U'}  \rangle   \xmapsto{ \mathmakebox[0.4em]{} S'_{{\mathrm{1}}} \mathmakebox[0.3em]{} }\hspace{-0.4em}{}^\ast \hspace{0.2em}    w' $,
          then either (1)
          $ \langle   \emptyset    \vdash   S  \ottsym{(}  e  \ottsym{)}  :  \ottnt{U}  \rangle   \xmapsto{ \mathmakebox[0.4em]{} S_{{\mathrm{1}}} \mathmakebox[0.3em]{} }\hspace{-0.4em}{}^\ast \hspace{0.2em}    w $ and
          $ \langle   \emptyset    \vdash   w  :  S_{{\mathrm{1}}}  \ottsym{(}  \ottnt{U}  \ottsym{)}   \sqsubseteq _{ S'_{{\mathrm{0}}} }  S'_{{\mathrm{1}}}  \ottsym{(}  \ottnt{U'}  \ottsym{)}  :  w'  \dashv   \emptyset   \rangle $
          for some $w$, $S_{{\mathrm{1}}}$, and $S'_{{\mathrm{0}}}$;
          or (2) $ \langle   \emptyset    \vdash   S  \ottsym{(}  e  \ottsym{)}  :  \ottnt{U}  \rangle   \xmapsto{ \mathmakebox[0.4em]{} S_{{\mathrm{1}}} \mathmakebox[0.3em]{} }\hspace{-0.4em}{}^\ast \hspace{0.2em}    \textsf{\textup{blame}\relax} \, \ell $ for some $\ell$ and $S_{{\mathrm{1}}}$.
        \item If $ \langle   \emptyset    \vdash   S'  \ottsym{(}  e'  \ottsym{)}  :  \ottnt{U'}  \rangle   \xmapsto{ \mathmakebox[0.4em]{} S'_{{\mathrm{1}}} \mathmakebox[0.3em]{} }\hspace{-0.4em}{}^\ast \hspace{0.2em}    \textsf{\textup{blame}\relax} \, \ell' $,
          then $ \langle   \emptyset    \vdash   S  \ottsym{(}  e  \ottsym{)}  :  \ottnt{U}  \rangle   \xmapsto{ \mathmakebox[0.4em]{} S_{{\mathrm{1}}} \mathmakebox[0.3em]{} }\hspace{-0.4em}{}^\ast \hspace{0.2em}    \textsf{\textup{blame}\relax} \, \ell $ for some $\ell$ and $S_{{\mathrm{1}}}$.
        \item If $ \langle   \emptyset    \vdash   S'  \ottsym{(}  e'  \ottsym{)}  :  \ottnt{U'}  \rangle   \Uparrow  $,
          then either (1) $ \langle   \emptyset    \vdash   S  \ottsym{(}  e  \ottsym{)}  :  \ottnt{U}  \rangle   \Uparrow  $,
          or (2) $ \langle   \emptyset    \vdash   S  \ottsym{(}  e  \ottsym{)}  :  \ottnt{U}  \rangle   \xmapsto{ \mathmakebox[0.4em]{} S_{{\mathrm{1}}} \mathmakebox[0.3em]{} }\hspace{-0.4em}{}^\ast \hspace{0.2em}    \textsf{\textup{blame}\relax} \, \ell $ for some $\ell$ and $S_{{\mathrm{1}}}$.
       \end{itemize}
 \end{enumerate}
\end{theorem}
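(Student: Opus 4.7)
The plan is to prove the Dynamic Gradual Guarantee by lifting precision from ITGL to $\lambdaRTI$ through cast insertion, and then establishing simulation lemmas for $\lambdaRTI$ reduction up to precision. First I would prove a \emph{cast-insertion precision} lemma: if $e \sqsubseteq_{S_0} e'$ in the ITGL, $PT(\emptyset,e) = (S,\ottnt{U})$, and $PT(\emptyset,e') = (S',\ottnt{U'})$, then there exists $S'_0$ such that $\langle \emptyset \vdash S(e) : \ottnt{U} \sqsubseteq_{S'_0} \ottnt{U'} : S'(e') \dashv \emptyset \rangle$ holds after cast insertion. This reduces the theorem to proving a dynamic gradual guarantee for the blame-calculus precision relation. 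The main technical content lies here: principal type inference respects precision (by the static gradual guarantee), but I must check that inserted casts correspond correctly, handling the three cast rules \rnp{P\_Cast}, \rnp{P\_CastL}, \rnp{P\_CastR}---needed because a cast on the precise side may have a static target (hence generate no cast) while the imprecise side retains a cast targeting $\star$.

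Next I would prove a pair of \emph{simulation lemmas} for well-typed, precision-related $\lambdaRTI$ terms. The forward direction (part~1) states: if $\langle \emptyset \vdash \ottnt{f} : \ottnt{U} \sqsubseteq_{S_0} \ottnt{U'} : \ottnt{f'} \dashv \emptyset \rangle$ and $\ottnt{f} \xmapsto{S_1} \ottnt{f_2}$, then $\ottnt{f'} \xmapsto{S'_1}{}^\ast \ottnt{f'_2}$ with $\langle \emptyset \vdash \ottnt{f_2} : S_1(\ottnt{U}) \sqsubseteq_{S''_0} S'_1(\ottnt{U'}) : \ottnt{f'_2} \dashv \emptyset \rangle$, for appropriate $S''_0$. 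Proving this goes by case analysis on the reduction of $\ottnt{f}$, using inversion on the precision derivation; the cases \rnp{R\_InstBase} and \rnp{R\_InstArrow} are delicate because DTI on the precise side must be matched with \emph{some} number of steps on the imprecise side, and the generated type substitution on the precise side must remain compatible with $S_0$. The backward direction (part~2) is harder: reduction on the imprecise side may be simulated on the precise side by either a matching reduction, eventual blame, or (when catch-up casts fire) no reduction at all; the blame case requires the \rnp{P\_Blame} rule for the resulting configuration.

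Given the two simulation lemmas, both parts of the theorem follow by induction on the number of evaluation steps, with the divergence cases handled by co-induction or by contradiction against termination at a value/blame. For the ``$\Uparrow$'' cases in part~2, I would argue that if the precise side did not diverge, it would terminate at either a value (forcing the imprecise side to terminate by part~1, contradicting divergence) or blame.

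The main obstacle will be the interaction between DTI and precision: the instantiations generated during reduction of the precise side and of the imprecise side can \emph{differ in structure}---e.g., the precise side may already have a static type where the imprecise side has a type variable that DTI will instantiate later, or vice versa. Making the invariant on $S'_0$ strong enough so that it composes across reduction steps, while being weak enough to be established at the start by cast insertion, is the crux. A subsidiary obstacle is the \rnp{P\_LetP} case, where the substitution relating the two bodies is augmented by $[\overrightarrow{\ottmv{X'_j}} := \overrightarrow{\ottnt{T'_j}}]$; when the let is unfolded by \rnp{R\_LetP} with fresh type variables generated from $\nu$-marks on both sides, I must show the fresh variables can be related by extending $S'_0$ accordingly---which requires a careful substitution lemma for $\lambdaRTI$ term precision that I would prove as a standalone structural lemma before tackling the simulation lemmas.
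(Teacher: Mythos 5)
Your decomposition for part~1 matches the paper's: a lemma showing that the cast-inserting translation maps ITGL-precision-related terms (after principal type inference) to $\lambdaRTI$-precision-related terms, followed by a forward simulation lemma for $\lambdaRTI$ (the paper's ``Simulation of More Precise Programs''), with the auxiliary catch-up-to-value lemma, the substitution lemma for \rnp{P\_LetP}, and the composability invariant on the relating substitution that you correctly identify as the crux. The one place where you diverge from the paper is part~2: you propose proving a \emph{backward} simulation lemma directly (imprecise step matched by a precise step, eventual blame, or no step). The paper does not prove any backward simulation; it derives all of part~2 from part~1 by the evaluation trichotomy of type safety --- the precise term either terminates at a value, raises blame, or diverges, and in the value and divergence cases the \emph{forward} results force the imprecise side's behaviour, so any mismatch is a contradiction. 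This is strictly less work and sidesteps the genuine difficulties of a backward simulation: you would need a dual ``catch up on the right'' lemma (right-hand value implies the left becomes a value or blames), and the zero-step cases make it impossible to conclude divergence of the precise side from divergence of the imprecise side by simulation alone --- which is presumably why you already fall back on the contradiction argument for the $\Uparrow$ case. You could simply apply that same contradiction argument to the value and blame cases of part~2 and drop the backward simulation entirely; as proposed, your route is workable but substantially harder than necessary.
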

\iffull\else
\begin{proof}
  It is shown from auxiliary theorems that (1) the cast-inserting
  translation is precision-preserving and (2) precision-related
  $\lambdaRTI$ terms behave equivalently if the precisely annotated
  term does not raise blame.
\end{proof}
\fi

\paragraph{Connection between the gradual guarantee and the soundness--completeness of DTI}
\TS{CHECK ME!}
The dynamic gradual guarantee states that precise type annotations may find more type
errors at run time but otherwise do not change the behavior of programs.
Perhaps interestingly, the soundness and completeness of DTI also
ensure a very similar property (although it is a property of the
intermediate language, not the ITGL): they state that applying
a type substitution may find more type errors at run time
but otherwise does not\AI{To Review A: This is OK.} change the behavior of programs.
Roughly speaking, the completeness corresponds to the first item of the
dynamic gradual guarantee and the soundness to the second item.
Given the fact that $e = S  \ottsym{(}  e'  \ottsym{)}$ implies $ e   \sqsubseteq _{ S }  e' $,
the premise of Theorem~\ref{thm:dynamicGG} is weaker than
those of Theorems~\ref{thm:soundness} and \ref{thm:completeness}.
However, the conclusions of the soundness and completeness are
stronger than that of the gradual guarantee:
if compared terms evaluate to values, they are \emph{syntactically
equivalent} modulo type instantiations, while in the gradual guarantee
the evaluation results may have different occurrences of the dynamic type.
Moreover, Theorem~\ref{thm:soundness} ensures that, if the RHS
evaluates to a value, then the LHS evaluates also to a value and does
not raise blame.  

\section{Related Work} \label{sec:related_work}
\paragraph{Monotonic reference.}
\citet{DBLP:conf/esop/SiekVCTG15} study so-called \emph{monotonic references}, which are
an efficient implementation of mutable references for gradually typed languages.
The traditional
approach~\cite{DBLP:conf/sfp/HermanTF07,DBLP:journals/lisp/HermanTF10} to mutable
references in gradual typing is to create a proxy which performs run-time typechecking on reads
and writes.  The proxy-based approach works well theoretically, but it incurs significant run-time overhead even for statically
typed parts in a gradually typed program.  Monotonic references preserve a global
invariant that the type of a value in the heap is at least as precise as
the types given to references that point to the value.  As a consequence, there is no overhead
for reading and writing through \emph{references of static types}, because
no type (other than itself) is more precise than a static type.
Monotonic references are similar to DTI in the sense that monotonic references
refine types of values in the heap during evaluation whereas DTI refines type
variables to more precise types at run time.

\paragraph{Staged type inference.}  \citet{DBLP:conf/popl/ShieldsSJ98} propose a combination of dynamic
typing and staged
computation~\cite{DBLP:journals/tcs/TahaS00,DaviesPfenning01JACM} in a
statically typed language.  In their language, all code values are
given a single code type $\langle \rangle$ (without information on the
static type of the code as is typical in type systems for staged
computation~\cite{DBLP:journals/tcs/TahaS00,DaviesPfenning01JACM,Davies96LICS,DBLP:conf/popl/TahaN03,DBLP:conf/popl/KimYC06,TsukadaIgarashi10LMCS})
and they are typechecked only when it is executed by \texttt{eval}.
Such a ``typecheck before eval'' strategy is called \emph{staged type
  inference}.  They also propose an \emph{incremental} staged type
inference, where part of typechecking is performed when code fragments
are composed using the quasiquotation
mechanism~\cite{Quine81,DBLP:conf/pepm/Bawden99,DBLP:books/lib/Steele90};
if an obvious inconsistency (such as applying multiplication to a string
constant) is found, execution results in a special code constant,
which indicates that type checking has failed.  This incremental inference
is close to ours in that type variables in code fragments are unified
and instantiated at run time.  However, it is limited for code
manipulation.  We integrate a similar idea to gradual typing and cast
semantics.  Another (somewhat minor) technical difference is that our
semantics does not require full first-order unification at run time:
it is always the case that one of the two types to be unified is a
type variable and there is no need for occur check.

\paragraph{Type inference for gradually typed languages.}

\citet{HengleinRehof95FPCA} studied a type reconstruction algorithm
for a language with the dynamic type,
coercions~\cite{DBLP:journals/scp/Henglein94}, and constraint-based
polymorphism.  They did discuss the issue of uninstantiated type
variables and propose to replace them with the dynamic type before
running the program.  So, their semantics can be too permissive; it
allows successful termination even when there are no static types that
can be substituted for uninstantiated type variables without
causing run-time errors.  \AI{Correct?}

\citet{DBLP:conf/dls/SiekV08} propose a type
reconstruction algorithm for a gradually typed language and \citet{DBLP:conf/popl/GarciaC15} later propose a type inference
algorithm for a very similar language\footnote{In fact, their type
  systems are shown to be equivalent in a certain sense~\cite{DBLP:conf/popl/GarciaC15}.
  The algorithm in \citet{DBLP:conf/dls/SiekV08} replaces
  uninstantiated type variables with the dynamic type, similarly
  to \citet{DBLP:journals/scp/Henglein94}.} ITGL with a principal
type property. The key idea of Garcia and Cimini is to infer only static types for type
variables, which represent omitted type annotations.  They have also
discussed how the surface language where type annotations are inferred
can be translated to a blame calculus, where types are explicit not
only in lambda abstractions but also in casts, by showing type-directed
translation into a slight variant\footnote{Type variables, which are
  bound by type abstraction, and type parameters, which correspond to
  type variables left undecided and occur free in a program, are
  distinguished in the syntax.} of the Polymorphic Blame
Calculus~\cite{DBLP:conf/popl/AhmedFSW11,DBLP:journals/pacmpl/AhmedJSW17}.
As we have discussed, this translation raises two problems: (1) The
interpretation of type variables left undecided by type inference is
not clear---if they are interpreted as fresh base types, a program may
fail earlier at run time than a type-substitution instance of it; and
(2) the semantics based on the Polymorphic Blame Calculus does not match the
intuition that $ \textsf{\textup{let}\relax} \,  \ottmv{x}  =  w  \textsf{\textup{ in }\relax}  f $ behaves the same as $f  [  \ottmv{x}  \ottsym{:=}  w  ]$,
which we believe is desirable especially in languages where type
abstractions and applications are implicit.  Actually, Garcia and
Cimini mention that the semantics of the ITGL can be defined by first
expanding $ \textsf{\textup{let}\relax} $ and translating into a simply typed blame
calculus---although it would still have the first problem---but
differences of the two translations are not discussed.

\citet{DBLP:conf/popl/RastogiCH12} present a flow-based
type inference algorithm for ActionScript.  The type system is based
on subtyping rather than polymorphism.  Their type inference algorithm assigns
a type without type variables for all type variables, so the problem
we have tackled in this paper does not arise.

More recently, \citet{DBLP:conf/esop/XieBO18} introduce the dynamic
type into the Odersky--L{\"a}ufer type
system~\cite{DBLP:conf/popl/OderskyL96} for higher-rank polymorphism.
They also develop a bidirectional algorithmic type system, which is
used to infer static monotypes for missing type declarations at
lambda abstractions and type instantiations for polymorphic
types.  As we have already discussed, they point out that the
interpretation of type variables left undecided by type inference
affects the run-time behavior of a program and suggest to substitute
the dynamic type for such undecided type variables, which is basically the
same idea as \citet{DBLP:journals/scp/Henglein94} and
\citet{DBLP:conf/dls/SiekV08}, although some refinement based on
static and gradual type parameters~\cite{DBLP:conf/popl/GarciaC15} is
discussed.

\paragraph{Polymorhic gradual typing.}
\citet{DBLP:conf/popl/AhmedFSW11,DBLP:journals/pacmpl/AhmedJSW17}
propose the Polymorphic Blame Calculus, which is based on System F~\cite{Girard72,Reynolds74}.
They introduce the notion of type bindings to enforce parametricity.
\citet{DBLP:journals/pacmpl/IgarashiSI17} also propose
a polymorphic gradually typed language and a similar polymorphic blame
calculus.  As we have already discussed above, it has an undesirable
consequence to implement polymorphic $ \textsf{\textup{let}\relax} $ via type abstraction and type
application with dynamic enforcement of parametricity.


\paragraph{Nongeneralizable type variables in OCaml.}

In the implementation \texttt{ocaml}, a read-eval-print loop for
OCaml, when the type of a given expression (bound by a top-level
\texttt{let}) contains type variables which cannot be generalized due
to (relaxed) value
restriction~\cite{DBLP:conf/flops/Garrigue04,DBLP:journals/lisp/Wright95},
the expression is accepted with those type variables being left
uninstantiated.\footnote{OCaml FAQ:
  \url{http://ocaml.org/learn/faq.html}} (In Standard ML, such an
input requires a type annotation; otherwise it will be rejected.)  The
following session with \texttt{ocaml} (taken from the OCaml FAQ) shows an
example:
\begin{trivlist}
\item\verb+# let r = ref [];;+
\item\texttt{\textsl{val r : '\symbol{`\_}weak1 list ref = \symbol{`\{}contents = []\symbol{`\}}}}
\end{trivlist}
Here, \verb+'_weak1+ stands for a type variable which was undecided by
type inference but cannot be used polymorphically.  Such type
variables will be instantiated as the declared variable is used in a
more specific context.  For example, the following input enforces that
the contents of \texttt{r} have to be an integer list:
\begin{trivlist}
\item\verb+# r := 42 :: !r;;+
\item\texttt{\textsl{- : unit = ()}}
\end{trivlist}
Now, the type of \texttt{r} changes to \texttt{int list ref} by
substituting \texttt{int} for \verb+'_weak1+:
\begin{trivlist}
\item\verb+# r;;+
\item\textsl{\texttt{- : int list ref = \symbol{`\{}contents = [42]\symbol{`\}}}}
\end{trivlist}
This behavior is similar to our DTI in the sense
that type variables left undecided by type inference are instantiated
according to their use.  However, instantiation of type variables
is caused by (compile-time) type inference, which takes place before
evaluating every input expression.

\section{Conclusion} \label{sec:conclusion}

We have developed a new blame calculus $\lambdaRTI$ with DTI, which
infers types for undecided (at compile-time) type variables along
evaluation.  We have also extended $\lambdaRTI$ to let-polymorphism
and proposed a nonparametric semantics, which we argue is better than
the PBC-based
approach~\cite{DBLP:conf/popl/AhmedFSW11,DBLP:journals/pacmpl/AhmedJSW17}
for languages with implicit type abstraction and application.  Our
calculus can be used as an intermediate language to give semantics to
the ITGL by \citet{DBLP:conf/popl/GarciaC15}.  We have shown the type
safety of $\lambdaRTI$ and the soundness and completeness of DTI.  We
have also shown the type safety and the gradual guarantee in the ITGL.  To
our knowledge, the gradual guarantee for a gradually typed language
with let-polymorphism is shown for the first time.  Although we leave
a more formal investigation for future work, we have pointed out a
relationship between the gradual guarantee and the
soundness--completeness property of DTI.


We have implemented a prototype evaluator of DTI, which is used in an
interpreter of the ITGL but leave the study of efficiency issues on
DTI to future work.  We are interested in the impact of run-time
overhead incurred by DTI and an efficient implementation to address
it.
Note that not all type variables are subject to dynamic type inference
and some of the run-time type information can be erased
by introducing the distinction between static and gradual type
variables~\cite{DBLP:conf/popl/GarciaC15,DBLP:journals/pacmpl/IgarashiSI17}.
In particular, if a program does not use the dynamic type
$ \star $  at all, we expect that a program can be run without passing type
information or instantiating type variables dynamically.
%
%
Another direction for efficiency improvement is to integrate DTI with\AI{To Reviewer A: this is OK.} space-efficient cast
calculi~\cite{DBLP:conf/sfp/HermanTF07,DBLP:journals/lisp/HermanTF10,DBLP:conf/popl/SiekW10}.
The space-efficient calculi make efficient use of the memory space for casts
generated at run time by ``merging'' two casts into one dynamically.
It is interesting to investigate how to merge casts referring to type variables
effectively.
%

We expect DTI can be extended to other typing features such as
subtyping.  An obvious candidate to apply DTI would be the gradual
extension of higher-rank polymorphism~\cite{DBLP:conf/esop/XieBO18},
which we also leave for future work.

\begin{acks}                            
  
  We would like to thank anonymous reviewers from both PC and AEC for valuable comments and Yuu Igarashi
  for the fruitful discussions.
  This work was supported in part by the JSPS KAKENHI Grant Number
  JP17H01723 (Igarashi)
  and ERATO HASUO Metamathematics for Systems Design Project
  (No.~\grantnum{http://dx.doi.org/10.13039/501100009024}{JPMJER1603}), JST
  (Sekiyama).

\end{acks}

  \bibliography{local}


  \iffull
  \newif\ifrestate \restatetrue
  \clearpage
  \appendix

  \newif\ifrestate \restatetrue

\section{Proofs}

\subsection{Type Safety}

\ifrestate
\lemTySubstConsistency*
\else
\begin{lemmaA}[Type Substitution Preserves Consistency and Typing] \leavevmode
  \begin{enumerate}
    \item If $\ottnt{U}  \sim  \ottnt{U'}$, then $S  \ottsym{(}  \ottnt{U}  \ottsym{)}  \sim  S  \ottsym{(}  \ottnt{U'}  \ottsym{)}$ for any $S$.
    \item If $\Gamma  \vdash  \ottnt{f}  \ottsym{:}  \ottnt{U}$, then $S  \ottsym{(}  \Gamma  \ottsym{)}  \vdash  S  \ottsym{(}  \ottnt{f}  \ottsym{)}  \ottsym{:}  S  \ottsym{(}  \ottnt{U}  \ottsym{)}$ for any $S$.
  \end{enumerate}
\end{lemmaA}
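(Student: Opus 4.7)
The plan is to prove the two parts in sequence, with part (1) used in the proof of part (2) at the cast case.

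For part (1), I would proceed by straightforward induction on the derivation of $\ottnt{U} \sim \ottnt{U'}$. The cases \rnp{C\_Base}, \rnp{C\_DynL}, and \rnp{C\_DynR} are essentially immediate, since $S$ leaves $\iota$ and $\star$ unchanged. The \rnp{C\_Arrow} case follows directly from the induction hypotheses applied componentwise, because $S(U_1 \to U_2) = S(U_1) \to S(U_2)$. The only case requiring a small side observation is \rnp{C\_TyVar}, where we need $S(\ottmv{X}) \sim S(\ottmv{X})$: since the codomain of $S$ consists of static types, this reduces to reflexivity of $\sim$ on static types, which is itself an easy structural induction (every type is consistent with itself via \rnp{C\_Base}, \rnp{C\_TyVar}, and \rnp{C\_Arrow}).

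For part (2), I would proceed by induction on the typing derivation $\Gamma \vdash \ottnt{f} : \ottnt{U}$. Most cases are routine: \rnp{T\_Var} uses the fact that $\ottmv{x} : \ottnt{U} \in \Gamma$ implies $\ottmv{x} : S(\ottnt{U}) \in S(\Gamma)$; \rnp{T\_Const} and \rnp{T\_Op} are immediate because $\mathit{ty}(\ottnt{c})$ and $\mathit{ty}(\mathit{op})$ contain no type variables so $S$ acts trivially on them; \rnp{T\_Abs} and \rnp{T\_App} follow by applying $S$ to the components and invoking the induction hypothesis; \rnp{T\_Blame} is immediate. The case for \rnp{T\_Cast} is where we invoke part (1) to preserve the consistency premise $\ottnt{U} \sim \ottnt{U'}$ under $S$.

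The delicate cases are \rnp{T\_VarP} and \rnp{T\_LetP} for let-polymorphism, because of the bound type variables in the type scheme. For \rnp{T\_LetP}, by the usual Barendregt-style convention we may $\alpha$-rename the generalized variables $\overrightarrow{\ottmv{X_i}}$ so that they are disjoint from $\mathit{dom}(S) \cup \bigcup_{\ottmv{Y}\in\mathit{dom}(S)}\mathit{ftv}(S(\ottmv{Y}))$. This ensures that $S$ commutes with the binder: $S(\forall \overrightarrow{\ottmv{X_i}}. \ottnt{U_1})$ is just $\forall \overrightarrow{\ottmv{X_i}}. S(\ottnt{U_1})$, and the freshness side condition $\overrightarrow{\ottmv{X_i}} \cap \mathit{ftv}(\Gamma) = \emptyset$ is preserved by the induction hypothesis. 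For \rnp{T\_VarP}, the conclusion involves an instantiation substitution $[\overrightarrow{\ottmv{X_i}} := \overrightarrow{\mathbbsl{T}_i}]$; after $\alpha$-renaming the bound variables to avoid clashes with $\mathit{dom}(S)$, we have $S(\ottnt{U}[\overrightarrow{\ottmv{X_i}} := \overrightarrow{\mathbbsl{T}_i}]) = S(\ottnt{U})[\overrightarrow{\ottmv{X_i}} := \overrightarrow{S(\mathbbsl{T}_i)}]$ by a standard substitution-commutation lemma, and the $ \nu $-side condition is preserved because $S$ does not alter $\mathit{ftv}(\ottnt{U})$ up to $\alpha$-equivalence.

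The main obstacle is thus the bookkeeping around bound type variables in the polymorphic cases: making precise the implicit $\alpha$-conversion and the commutation of $S$ with the capture-avoiding substitutions on schemes. Once the substitution-commutation lemma is established as a preliminary, both polymorphic cases reduce to direct applications of the induction hypothesis.
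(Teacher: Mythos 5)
Your proposal is correct and follows the same route as the paper, which simply states ``by straightforward induction on the derivation'' for both parts; your elaboration of the \textsc{C\_TyVar} case via reflexivity of consistency on static types and of the polymorphic cases via $\alpha$-renaming and substitution commutation fills in exactly the details the paper leaves implicit.
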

\fi 

\begin{proof}
  By straightforward induction on the derivation.
\end{proof}

\ifrestate
\lemGroundTypes*
\else
\begin{lemmaA}[name=Ground Types,restate=lemGroundTypes] \label{lem:ground_types}
  \leavevmode
  \begin{enumerate}
    \item If $\ottnt{U}$ is neither a type variable nor the dynamic type, then there exists a unique $\ottnt{G}$ such that $\ottnt{U}  \sim  \ottnt{G}$.
    \item $\ottnt{G}  \sim  \ottnt{G'}$ if and only if $\ottnt{G}  \ottsym{=}  \ottnt{G'}$.
  \end{enumerate}
\end{lemmaA}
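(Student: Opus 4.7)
The plan is to prove both parts by straightforward case analysis, the first on the structure of $\ottnt{U}$ and the second by inversion on the consistency relation applied to ground types. Since ground types are restricted to $\iota$ and $\star \!\rightarrow\! \star$, the case analyses have only a handful of cases each.

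For part 1, I would case-split on $\ottnt{U}$, which by assumption is neither $ \star $ nor a type variable, so it is either a base type $\iota$ or a function type $\ottnt{U_{{\mathrm{1}}}}  \!\rightarrow\!  \ottnt{U_{{\mathrm{2}}}}$. For existence, if $\ottnt{U} = \iota$, take $\ottnt{G} = \iota$ and apply \rnp{C\_Base}; if $\ottnt{U} = \ottnt{U_{{\mathrm{1}}}}  \!\rightarrow\!  \ottnt{U_{{\mathrm{2}}}}$, take $\ottnt{G} = \star  \!\rightarrow\!  \star$ and apply \rnp{C\_Arrow} with two uses of \rnp{C\_DynR} for the premises. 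For uniqueness, suppose $\ottnt{U}  \sim  \ottnt{G}$ and invert on the derivation. In the base case $\ottnt{U} = \iota$, only \rnp{C\_Base} and \rnp{C\_DynR} can conclude with $\iota$ on the left; \rnp{C\_DynR} would force $\ottnt{G} = \star$, which is not a ground type, leaving $\ottnt{G} = \iota$. In the arrow case $\ottnt{U} = \ottnt{U_{{\mathrm{1}}}}  \!\rightarrow\!  \ottnt{U_{{\mathrm{2}}}}$, only \rnp{C\_Arrow} and \rnp{C\_DynR} can apply; \rnp{C\_DynR} is again ruled out, and \rnp{C\_Arrow} forces $\ottnt{G}$ to be a function type, whose only ground form is $\star  \!\rightarrow\!  \star$.

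For part 2, the ``if'' direction is immediate by reflexivity: $\iota  \sim  \iota$ follows from \rnp{C\_Base}, and $\star  \!\rightarrow\!  \star  \sim  \star  \!\rightarrow\!  \star$ follows from \rnp{C\_Arrow} applied to $\star  \sim  \star$ (derivable by \rnp{C\_DynL} or \rnp{C\_DynR}). For the ``only if'' direction, I would case-split on the pairs $(\ottnt{G}, \ottnt{G'})$. When both are base types, only \rnp{C\_Base} can derive $\iota  \sim  \iota'$, giving $\iota = \iota'$. When one is a base type and the other is $\star  \!\rightarrow\!  \star$, no consistency rule applies: neither is $ \star $, so the \rnp{C\_Dyn} rules are excluded; base types are not function types, so \rnp{C\_Arrow} does not apply; and the bases differ in shape, so \rnp{C\_Base} does not apply. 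When both are $\star  \!\rightarrow\!  \star$, we already have equality.

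I do not expect any real obstacle here: the whole argument is a small finite case analysis, and the only subtlety is remembering that type variables are deliberately excluded from ground types, so the inversion in part 1's arrow case cannot produce, e.g., $\ottmv{X}  \!\rightarrow\!  \ottmv{Y}$ as a ``ground'' function type. Once this is kept in mind, both parts reduce to a routine check of which of the five consistency rules can fire in each case.
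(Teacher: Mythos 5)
Your proposal is correct and takes essentially the same approach as the paper, which proves part 1 by case analysis on the structure of $\ottnt{U}$ and part 2 by case analysis on $\ottnt{G}  \sim  \ottnt{G'}$ and $\ottnt{G}  \ottsym{=}  \ottnt{G'}$; you have simply spelled out the finite case check that the paper leaves implicit.
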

\fi 

\begin{proof}
  \leavevmode
  \begin{enumerate}
    \item Straightforward by case analysis on $\ottnt{U}$.
    \item By case analysis on $\ottnt{G}  \sim  \ottnt{G'}$ and $\ottnt{G}  \ottsym{=}  \ottnt{G'}$. \qedhere
  \end{enumerate}
\end{proof}

\ifrestate
\lemCanonicalForms*
\else
\begin{lemmaA}[name=Canonical Forms,restate=lemCanonicalForms] \label{lem:canonical_forms}
  If $ \emptyset   \vdash  w  \ottsym{:}  \ottnt{U}$, then one of the following holds:
  \begin{itemize}
    \item $\ottnt{U}  \ottsym{=}  \iota$ and $w  \ottsym{=}  \ottnt{c}$ for some $\iota$ and $\ottnt{c}$;
    \item $\ottnt{U}  \ottsym{=}  \ottnt{U_{{\mathrm{1}}}}  \!\rightarrow\!  \ottnt{U_{{\mathrm{2}}}}$ and $w  \ottsym{=}   \lambda  \ottmv{x} \!:\!  \ottnt{U_{{\mathrm{1}}}}  .\,  \ottnt{f} $ for some $\ottmv{x}$, $\ottnt{f}$, $\ottnt{U_{{\mathrm{1}}}}$, and $\ottnt{U_{{\mathrm{2}}}}$;
    \item $\ottnt{U}  \ottsym{=}  \ottnt{U_{{\mathrm{1}}}}  \!\rightarrow\!  \ottnt{U_{{\mathrm{2}}}}$ and $w  \ottsym{=}  w'  \ottsym{:}   \ottnt{U'_{{\mathrm{1}}}}  \!\rightarrow\!  \ottnt{U'_{{\mathrm{2}}}} \Rightarrow  \unskip ^ { \ell }  \! \ottnt{U_{{\mathrm{1}}}}  \!\rightarrow\!  \ottnt{U_{{\mathrm{2}}}} $
      for some $w'$, $\ottnt{U_{{\mathrm{1}}}}, \ottnt{U_{{\mathrm{2}}}}, \ottnt{U'_{{\mathrm{1}}}}$, $\ottnt{U'_{{\mathrm{2}}}}$, and $\ell$; or
    \item $\ottnt{U}  \ottsym{=}  \star$ and $w  \ottsym{=}  w'  \ottsym{:}   \ottnt{G} \Rightarrow  \unskip ^ { \ell }  \! \star $
      for some $w'$, $\ottnt{G}$, and $\ell$.
  \end{itemize}
\end{lemmaA}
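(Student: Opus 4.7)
The proof will proceed by straightforward case analysis on the structure of the value $w$ (equivalently, on the last typing rule used to derive $\emptyset \vdash w : U$). Since $w$ is a value, by the grammar of values, it must be one of: a constant $c$, a lambda abstraction $\lambda x : U_1. f$, a wrapped function $w' : U_1' \to U_2' \Rightarrow^\ell U_1 \to U_2$, or an injection into the dynamic type $w' : G \Rightarrow^\ell \star$.

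I will treat each case in turn. If $w = c$, inversion on the derivation must use \rnp{T\_Const}, which yields $U = ty(c)$, and by the assumption on $ty$ this is a base type $\iota$; this matches the first clause. If $w = \lambda x : U_1. f$, the derivation must end with \rnp{T\_Abs}, giving $U = U_1 \to U_2$; this matches the second clause. If $w = w' : U_1' \to U_2' \Rightarrow^\ell U_1 \to U_2$, the derivation must end with \rnp{T\_Cast}, so the overall type is precisely the target $U_1 \to U_2$, giving the third clause. Finally, if $w = w' : G \Rightarrow^\ell \star$, again \rnp{T\_Cast} applies and the target type $\star$ gives the fourth clause. No other syntactic form of value exists, and in particular no value form has a type variable $X$ as its outer type, so the list is exhaustive.

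Since the case split is directly driven by the value grammar and each case admits only one applicable typing rule, no step presents real difficulty — this lemma is essentially a bookkeeping exercise. The only thing worth double-checking is that the grammar of values correctly excludes forms whose type would be a type variable (so that the statement can omit such a clause), which is exactly what justifies the design choice that type variables are not included in ground types $G$.
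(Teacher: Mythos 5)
Your proof is correct and follows essentially the same route as the paper, which argues by case analysis on the last typing rule used to derive $ \emptyset   \vdash  w  \ottsym{:}  \ottnt{U}$; your observation that no value form is typed at a type variable is exactly the point the paper highlights as the reason type variables are excluded from ground types.
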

\fi

\begin{proof}
  By case analysis on the typing rule applied to derive $ \emptyset   \vdash  w  \ottsym{:}  \ottnt{U}$.
\end{proof}

\ifrestate
\lemProgress*
\else
\begin{lemmaA}[name=Progress,restate=lemProgress] \label{lem:progress}
  If $ \emptyset   \vdash  \ottnt{f}  \ottsym{:}  \ottnt{U}$, then one of the following holds:
  \begin{itemize}
    \item $\ottnt{f} \,  \xmapsto{ \mathmakebox[0.4em]{} S \mathmakebox[0.3em]{} }  \, \ottnt{f'}$ for some $S$ and $\ottnt{f'}$;
    \item $\ottnt{f}$ is a value; or
    \item $\ottnt{f}  \ottsym{=}  \textsf{\textup{blame}\relax} \, \ell$ for some $\ell$.
  \end{itemize}
\end{lemmaA}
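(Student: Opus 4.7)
The plan is to proceed by induction on the derivation of $ \emptyset   \vdash  \ottnt{f}  \ottsym{:}  \ottnt{U}$, with heavy use of Lemma~\ref{lem:canonical_forms} in the cases where some subterm has already reduced to a value. The trivial cases are straightforward: \rnp{T\_Var} and \rnp{T\_VarP} cannot apply under the empty environment; \rnp{T\_Const}, \rnp{T\_Abs} immediately give values; \rnp{T\_Blame} gives blame; and \rnp{T\_LetP} always reduces by \rnp{R\_LetP}.

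The congruence cases \rnp{T\_Op}, \rnp{T\_App}, and \rnp{T\_Cast} all follow a common pattern. For each subterm, the induction hypothesis yields one of three outcomes. If some subterm is $\textsf{\textup{blame}\relax} \, \ell$, we use \rnp{E\_Abort} with a nontrivial context built from the appropriate context former. If some subterm takes a step, we use \rnp{E\_Step}, extending the evaluation context accordingly (noting that left-to-right evaluation requires the leftward subterm to be a value before the rightward one may step). When all subterms are values, the canonical forms lemma pins down their shape so that the right reduction rule can fire: for \rnp{T\_Op}, both values must be constants of the specified base types and \rnp{R\_Op} applies; for \rnp{T\_App}, the function value is either a $\lambda$-abstraction (apply \rnp{R\_Beta}) or a wrapped function (apply \rnp{R\_AppCast}).

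The main obstacle is the \rnp{T\_Cast} case $w  \ottsym{:}   \ottnt{U_{{\mathrm{1}}}} \Rightarrow  \unskip ^ { \ell }  \! \ottnt{U_{{\mathrm{2}}}} $, where I must case-analyze on $\ottnt{U_{{\mathrm{1}}}}$, $\ottnt{U_{{\mathrm{2}}}}$, and the shape of $w$ to pick exactly one applicable rule. Here I plan to first split on whether either side is $ \star $: if neither is, then consistency $\ottnt{U_{{\mathrm{1}}}}  \sim  \ottnt{U_{{\mathrm{2}}}}$ together with canonical forms (which rules out $w$ having a type-variable type) narrows it to $\iota \Rightarrow \iota$, handled by \rnp{R\_IdBase}, or $\ottnt{U_{{\mathrm{11}}}}  \!\rightarrow\!  \ottnt{U_{{\mathrm{12}}}} \Rightarrow \ottnt{U_{{\mathrm{21}}}}  \!\rightarrow\!  \ottnt{U_{{\mathrm{22}}}}$, where the whole cast is itself a wrapped-function value. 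If $\ottnt{U_{{\mathrm{2}}}} = \star$ and $\ottnt{U_{{\mathrm{1}}}} \neq \star$, I split on whether $\ottnt{U_{{\mathrm{1}}}}$ is a ground type (in which case the cast is already an injection value) or not (apply \rnp{R\_Ground}, whose applicability is guaranteed by Lemma~\ref{lem:ground_types}(1); again, canonical forms excludes $\ottnt{U_{{\mathrm{1}}}}$ being a type variable). The subtlest sub-case is $\ottnt{U_{{\mathrm{1}}}} = \star$: by canonical forms $w = w'  \ottsym{:}   \ottnt{G} \Rightarrow  \unskip ^ { \ell' }  \! \star $, so the cast is $w'  \ottsym{:}   \ottnt{G} \Rightarrow  \unskip ^ { \ell' }  \!  \star \Rightarrow  \unskip ^ { \ell }  \! \ottnt{U_{{\mathrm{2}}}}  $, and I further split on $\ottnt{U_{{\mathrm{2}}}}$: when $\ottnt{U_{{\mathrm{2}}}}$ is a ground type, \rnp{R\_Succeed} or \rnp{R\_Fail} applies using Lemma~\ref{lem:ground_types}(2); when $\ottnt{U_{{\mathrm{2}}}}$ is a non-ground, non-dynamic, non-variable type, \rnp{R\_Expand} applies; and when $\ottnt{U_{{\mathrm{2}}}}$ is a type variable, I invoke \rnp{R\_InstBase} or \rnp{R\_InstArrow} depending on whether $\ottnt{G}$ is a base type or $\star  \!\rightarrow\!  \star$. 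Verifying that this enumeration is exhaustive and nonoverlapping---in particular that type-variable targets are covered exactly by the two \rnp{R\_Inst} rules and that \rnp{R\_IdStar} catches the $\star \Rightarrow \star$ residue---is the main bookkeeping hurdle.
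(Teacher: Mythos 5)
Your proposal is correct and follows essentially the same route as the paper: induction on the typing derivation, dispatching the congruence cases via the induction hypothesis with \rnp{E\_Step}/\rnp{E\_Abort}, and resolving the value cases through the canonical forms lemma, with the cast case being the only real work. The paper organizes that case by the last rule of the consistency derivation $\ottnt{U_{{\mathrm{1}}}}  \sim  \ottnt{U_{{\mathrm{2}}}}$ rather than by the syntactic shapes of the two types, but your enumeration covers exactly the same sub-cases (including the $\star  \Rightarrow  \star$ residue via \rnp{R\_IdStar} and the exclusion of type-variable-typed values via canonical forms).
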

\fi

\begin{proof}
  By induction on the typing derivation.

  \begin{caseanalysis}
    \case{\rnp{T\_VarP}} Cannot happen.

    \case{\rnp{T\_Const}, \rnp{T\_Abs}, \rnp{T\_Blame}} Obvious.

    \case{\rnp{T\_Op}}
    We are given $ \emptyset   \vdash  \mathit{op} \, \ottsym{(}  \ottnt{f_{{\mathrm{1}}}}  \ottsym{,}  \ottnt{f_{{\mathrm{2}}}}  \ottsym{)}  \ottsym{:}  \iota$
    for some $\ottnt{f_{{\mathrm{1}}}}$, $\ottnt{f_{{\mathrm{2}}}}$, and $\iota$ where $\ottnt{U}  \ottsym{=}  \iota$.
    By inversion, we have $ \mathit{ty} ( \mathit{op} )   \ottsym{=}  \iota_{{\mathrm{1}}}  \!\rightarrow\!  \iota_{{\mathrm{2}}}  \!\rightarrow\!  \iota$,
    $ \emptyset   \vdash  \ottnt{f_{{\mathrm{1}}}}  \ottsym{:}  \iota_{{\mathrm{1}}}$, and $ \emptyset   \vdash  \ottnt{f_{{\mathrm{2}}}}  \ottsym{:}  \iota_{{\mathrm{2}}}$
    for some $\iota_{{\mathrm{1}}}$ and $\iota_{{\mathrm{2}}}$.

    If $\ottnt{f_{{\mathrm{1}}}}$ is not a value, by case analysis on $\ottnt{f_{{\mathrm{1}}}}$ with the IH.
    \begin{caseanalysis}
      \case{$\ottnt{f_{{\mathrm{1}}}} \,  \xmapsto{ \mathmakebox[0.4em]{} S_{{\mathrm{1}}} \mathmakebox[0.3em]{} }  \, \ottnt{f'_{{\mathrm{1}}}}$}
      By case analysis on the evaluation rule applied to $\ottnt{f_{{\mathrm{1}}}}$.
      \begin{caseanalysis}
        \case{\rnp{E\_Step}}
        We are given $\ottnt{f_{{\mathrm{11}}}} \,  \xrightarrow{ \mathmakebox[0.4em]{} S_{{\mathrm{1}}} \mathmakebox[0.3em]{} }  \, \ottnt{f'_{{\mathrm{11}}}}$ where $\ottnt{f_{{\mathrm{1}}}}  \ottsym{=}  \ottnt{E}  [  \ottnt{f_{{\mathrm{11}}}}  ]$
        for some $\ottnt{E}$, $\ottnt{f_{{\mathrm{11}}}}$, and $\ottnt{f'_{{\mathrm{11}}}}$.
        Finish by \rnp{E\_Step}.

        \case{\rnp{E\_Abort}}
        We are given $\ottnt{f_{{\mathrm{1}}}}  \ottsym{=}  \ottnt{E}  [  \textsf{\textup{blame}\relax} \, \ell  ]$ for some $\ottnt{E}$ and $\ell$.
        Finish by \rnp{E\_Abort}.
      \end{caseanalysis}

      \case{$\ottnt{f_{{\mathrm{1}}}}$ is a value} Contradiction.

      \case{$\ottnt{f_{{\mathrm{1}}}}  \ottsym{=}  \textsf{\textup{blame}\relax} \, \ell$} Finish by \rnp{E\_Abort}.
    \end{caseanalysis}

    If $\ottnt{f_{{\mathrm{2}}}}$ is not a value, we finish similarly to the previous case.

    Otherwise, suppose both $\ottnt{f_{{\mathrm{1}}}}$ and $\ottnt{f_{{\mathrm{2}}}}$ are values.
    By Lemma \ref{lem:canonical_forms}, both $\ottnt{f_{{\mathrm{1}}}}$ and $\ottnt{f_{{\mathrm{2}}}}$ are constants.
    We finish by \rnp{R\_Op} and \rnp{E\_Step}.

    \case{\rnp{T\_App}}
    We are given $ \emptyset   \vdash  \ottnt{f_{{\mathrm{1}}}} \, \ottnt{f_{{\mathrm{2}}}}  \ottsym{:}  \ottnt{U}$ for some $\ottnt{f_{{\mathrm{1}}}}$ and $\ottnt{f_{{\mathrm{2}}}}$.
    By inversion, we have $ \emptyset   \vdash  \ottnt{f_{{\mathrm{1}}}}  \ottsym{:}  \ottnt{U'}  \!\rightarrow\!  \ottnt{U}$ and $ \emptyset   \vdash  \ottnt{f_{{\mathrm{2}}}}  \ottsym{:}  \ottnt{U'}$
    for some $\ottnt{U'}$.

    If $\ottnt{f_{{\mathrm{1}}}}$ is not a value, by case analysis on $\ottnt{f_{{\mathrm{1}}}}$ with the IH.
    \begin{caseanalysis}
      \case{$\ottnt{f_{{\mathrm{1}}}} \,  \xmapsto{ \mathmakebox[0.4em]{} S_{{\mathrm{1}}} \mathmakebox[0.3em]{} }  \, \ottnt{f'_{{\mathrm{1}}}}$}
      By case analysis on the evaluation rule applied to $\ottnt{f_{{\mathrm{1}}}}$.
      \begin{caseanalysis}
        \case{\rnp{E\_Step}}
        We are given $\ottnt{f_{{\mathrm{11}}}} \,  \xrightarrow{ \mathmakebox[0.4em]{} S_{{\mathrm{1}}} \mathmakebox[0.3em]{} }  \, \ottnt{f'_{{\mathrm{11}}}}$ where $\ottnt{f_{{\mathrm{1}}}}  \ottsym{=}  \ottnt{E}  [  \ottnt{f_{{\mathrm{11}}}}  ]$
        for some $\ottnt{E}$, $\ottnt{f_{{\mathrm{11}}}}$, and $\ottnt{f'_{{\mathrm{11}}}}$.
        Finish by \rnp{E\_Step}.

        \case{\rnp{E\_Abort}}
        We are given $\ottnt{f_{{\mathrm{1}}}}  \ottsym{=}  \ottnt{E}  [  \textsf{\textup{blame}\relax} \, \ell  ]$ for some $\ottnt{E}$ and $\ell$.
        Finish by \rnp{E\_Abort}.
      \end{caseanalysis}

      \case{$\ottnt{f_{{\mathrm{1}}}}$ is a value} Contradiction.

      \case{$\ottnt{f_{{\mathrm{1}}}}  \ottsym{=}  \textsf{\textup{blame}\relax} \, \ell$} Finish by \rnp{E\_Abort}.
    \end{caseanalysis}

    If $\ottnt{f_{{\mathrm{2}}}}$ is not a value, we finish similarly to the previous case.

    Otherwise, suppose both $\ottnt{f_{{\mathrm{1}}}}$ and $\ottnt{f_{{\mathrm{2}}}}$ are values.
    By case analysis on the structure of $\ottnt{f_{{\mathrm{1}}}}$
    using Lemma \ref{lem:canonical_forms}.

    \begin{caseanalysis}
      \case{$\ottnt{f_{{\mathrm{1}}}}  \ottsym{=}   \lambda  \ottmv{x} \!:\!  \ottnt{U'}  .\,  \ottnt{f'_{{\mathrm{1}}}} $ for some $\ottmv{x}$ and $\ottnt{f'_{{\mathrm{1}}}}$}
      \leavevmode\\
      We finish by \rnp{R\_Beta} and \rnp{E\_Step}.

      \case{$\ottnt{f_{{\mathrm{1}}}}  \ottsym{=}  w  \ottsym{:}   \ottnt{U_{{\mathrm{11}}}}  \!\rightarrow\!  \ottnt{U_{{\mathrm{12}}}} \Rightarrow  \unskip ^ { \ell }  \! \ottnt{U_{{\mathrm{21}}}}  \!\rightarrow\!  \ottnt{U_{{\mathrm{22}}}} $ for some $w$, $\ottnt{U_{{\mathrm{11}}}}$, $\ottnt{U_{{\mathrm{12}}}}$, $\ottnt{U_{{\mathrm{21}}}}$, and $\ottnt{U_{{\mathrm{22}}}}$}
      \leavevmode\\
      We finish by \rnp{R\_AppCast} and \rnp{E\_Step}.
    \end{caseanalysis}

    \case{\rnp{T\_Cast}}
    We are given $ \emptyset   \vdash  \ottsym{(}  \ottnt{f_{{\mathrm{1}}}}  \ottsym{:}   \ottnt{U'} \Rightarrow  \unskip ^ { \ell }  \! \ottnt{U}   \ottsym{)}  \ottsym{:}  \ottnt{U}$ for some $\ottnt{f_{{\mathrm{1}}}}$, $\ottnt{U'}$, and $\ell$.
    By inversion, we have $ \emptyset   \vdash  \ottnt{f_{{\mathrm{1}}}}  \ottsym{:}  \ottnt{U'}$ and $\ottnt{U'}  \sim  \ottnt{U}$.

    If $\ottnt{f_{{\mathrm{1}}}}$ is not a value, we finish similarly to the case for \rnp{T\_App}.
    Otherwise, we proceed by case analysis on $\ottnt{U'}  \sim  \ottnt{U}$.

    \begin{caseanalysis}
      \case{\rnp{C\_Base}} By \rnp{R\_IdBase} and \rnp{E\_Step}.

      \case{\rnp{C\_TyVar}} This contradicts Lemma \ref{lem:canonical_forms}.

      \case{\rnp{C\_DynL}}
      We are given $\star  \sim  \ottnt{U}$.
      By Lemma \ref{lem:canonical_forms}, $\ottnt{f_{{\mathrm{1}}}}  \ottsym{=}  w'  \ottsym{:}   \ottnt{G'} \Rightarrow  \unskip ^ { \ell' }  \! \star $
      for some $w'$, $\ottnt{G'}$, and $\ell'$.
      By inversion, $ \emptyset   \vdash  w'  \ottsym{:}  \ottnt{G'}$ and $\ottnt{G'}  \sim  \star$.
      By case analysis on $\ottnt{U'}$.

      \begin{caseanalysis}
        \case{$\ottnt{U}  \ottsym{=}  \star$}
        We finish by \rnp{R\_IdStar} and \rnp{E\_Step}.

        \case{$\ottnt{U}  \ottsym{=}  \ottnt{G}$ and $\ottnt{G}  \ottsym{=}  \ottnt{G'}$}
        We finish by \rnp{R\_Succeed} and \rnp{E\_Step}.

        \case{$\ottnt{U}  \ottsym{=}  \ottnt{G}$ and $\ottnt{G}  \neq  \ottnt{G'}$}
        We finish by \rnp{R\_Fail} and \rnp{E\_Step}.

        \case{$\ottnt{U}  \ottsym{=}  \ottmv{X}$ and $\ottnt{G'}  \ottsym{=}  \iota$ for some $\ottmv{X}$ and $\iota$}
        We finish by \rnp{R\_InstBase} and \rnp{E\_Step}.

        \case{$\ottnt{U}  \ottsym{=}  \ottmv{X}$ and $\ottnt{G'}  \ottsym{=}  \star  \!\rightarrow\!  \star$ for some $\ottmv{X}$}
        We finish by \rnp{R\_InstArrow} and \rnp{E\_Step}.

        \otherwise
        We finish by \rnp{R\_Expand} and \rnp{E\_Step}.
      \end{caseanalysis}

      \case{\rnp{C\_DynR}}
      We are given $\ottnt{U'}  \sim  \star$.
      By case analysis on $\ottnt{U'}$.

      \begin{caseanalysis}
        \case{$\ottnt{U'}  \ottsym{=}  \star$} We finish by \rnp{R\_IdStar} and \rnp{E\_Step}.
        \otherwise
        By Lemma \ref{lem:ground_types}.1,
        there is a unique ground type $\ottnt{G}$ such that $\ottnt{U'}  \sim  \ottnt{G}$.
        If $\ottnt{U'}  \ottsym{=}  \ottnt{G}$, then $\ottnt{f}$ is a value.
        Otherwise, we finish by \rnp{R\_Ground} and \rnp{E\_Step}.
      \end{caseanalysis}

      \case{\rnp{C\_Arrow}} $\ottnt{f}$ is a value.
    \end{caseanalysis}

    \case{\rnp{T\_LetP}}
    We are given $ \emptyset   \vdash   \textsf{\textup{let}\relax} \,  \ottmv{x}  =   \Lambda    \overrightarrow{ \ottmv{X_{\ottmv{i}}} }  .\,  w_{{\mathrm{1}}}   \textsf{\textup{ in }\relax}  \ottnt{f_{{\mathrm{2}}}}   \ottsym{:}  \ottnt{U}$
    for some $\ottmv{x}$, $ \overrightarrow{ \ottmv{X_{\ottmv{i}}} } $, $w_{{\mathrm{1}}}$, and $\ottnt{f_{{\mathrm{2}}}}$.
    We finish by \rnp{R\_LetP} and \rnp{E\_Step}.
    \qedhere
  \end{caseanalysis}
\end{proof}

\begin{lemmaA}[Weakening] \label{lem:weakening}
  If $\Gamma  \vdash  \ottnt{f}  \ottsym{:}  \ottnt{U}$ and $\Gamma$ does not contain $\ottmv{x}$,
  then $ \Gamma ,   \ottmv{x}  :  \sigma    \vdash  \ottnt{f}  \ottsym{:}  \ottnt{U}$.
\end{lemmaA}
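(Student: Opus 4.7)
The plan is a straightforward induction on the derivation of $\Gamma \vdash f : U$. For each typing rule, I will show that the conclusion still holds when the environment is extended with $x : \sigma$, using the induction hypotheses on the subderivations and, when necessary, implicit $\alpha$-renaming of bound term variables so that they avoid $x$ (this is where the hypothesis ``$\Gamma$ does not contain $x$'' is used to guarantee freshness for the extended environment as well).

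First I would dispatch the leaf cases: T\_Const and T\_Blame are immediate since they place no demand on $\Gamma$; T\_Var and T\_VarP follow because $y : U \in \Gamma$ (resp.\ $y : \sigma' \in \Gamma$) still holds in $\Gamma, x : \sigma$ once we observe $x \neq y$ (otherwise $\Gamma$ would contain $x$, contradicting the hypothesis). Next I would handle the purely compatible cases T\_Op, T\_App, and T\_Cast, each by applying the IH to every premise and reassembling via the same rule; the consistency premise $U \sim U'$ in T\_Cast is unaffected by the environment.

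The interesting cases are the binding ones, T\_Abs and T\_LetP. For T\_Abs, the derivation ends in $\Gamma, y : U_1 \vdash f : U_2$; by $\alpha$-renaming I may assume $y \neq x$, so the IH gives $\Gamma, y : U_1, x : \sigma \vdash f : U_2$. To reconclude T\_Abs I need this in the order $\Gamma, x : \sigma, y : U_1$, so I would first establish (or silently use) an exchange lemma for type environments whose variable names are distinct—this is routine because lookup in $\Gamma$ is name-based. The T\_LetP case is analogous: the bound type variables $\overrightarrow{X_i}$ can be chosen fresh with respect to $\sigma$ (so the freshness side condition $\overrightarrow{X_i} \cap \mathit{ftv}(\Gamma) = \emptyset$ lifts to $\Gamma, x : \sigma$), and the term variable bound by the $\mathsf{let}$ can be $\alpha$-renamed away from $x$, after which the IH and exchange finish the case.

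The only mild obstacle is the bookkeeping around T\_LetP: I must make sure that the freshness of the generalized type variables with respect to $\textit{ftv}(\sigma)$ can be arranged, which is the standard convention for binders but should be stated explicitly. Everything else is mechanical, so the proof is essentially one page of case analysis once exchange is in hand.
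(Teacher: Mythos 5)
Your proof is correct and follows exactly the route the paper takes: the paper's own proof is the one-line ``Straightforward by induction on the typing derivation of $\ottnt{f}$,'' and your case analysis (including the $\alpha$-renaming and exchange bookkeeping for \textsc{T\_Abs} and \textsc{T\_LetP}) simply spells out the details that the paper leaves implicit.
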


\begin{proof}
  Straightforward by induction on the typing derivation of $\ottnt{f}$.
\end{proof}

\begin{lemmaA}[Strengthening] \label{lem:strengthening}
  If $ \Gamma ,   \ottmv{x}  :  \sigma    \vdash  \ottnt{f}  \ottsym{:}  \ottnt{U}$ and $\ottmv{x} \, \not\in \, \textit{fv} \, \ottsym{(}  \ottnt{f}  \ottsym{)}$,
  then $\Gamma  \vdash  \ottnt{f}  \ottsym{:}  \ottnt{U}$.
\end{lemmaA}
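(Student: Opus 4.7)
The plan is to prove the Strengthening Lemma by structural induction on the typing derivation of $\Gamma, x{:}\sigma \vdash f : U$, with a case analysis on the last typing rule applied. In every case the key observation is that $x \notin \textit{fv}(f)$ implies $x \notin \textit{fv}(f')$ for every subterm $f'$ of $f$, so the inductive hypothesis applies to the premises.

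First I would dispatch the trivial cases. For \rnp{T\_Const} and \rnp{T\_Blame}, the conclusion holds under any type environment, so it holds under $\Gamma$. For \rnp{T\_Var} (with conclusion $\Gamma, x{:}\sigma \vdash y : U$ from $y{:}U \in \Gamma, x{:}\sigma$), the assumption $x \notin \textit{fv}(y) = \{y\}$ gives $x \neq y$, so $y{:}U \in \Gamma$ already and the rule reapplies. The \rnp{T\_VarP} case is analogous: from $y \neq x$ we get $y{:}\forall\overrightarrow{X_i}.U \in \Gamma$, and the side condition on the type-argument sequence is unchanged, so \rnp{T\_VarP} reapplies.

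Next I would handle the compositional cases \rnp{T\_Op}, \rnp{T\_App}, and \rnp{T\_Cast}. In each case, the free variables of the whole term are the union of those of the subterms, so $x$ is not free in any subterm. The IH converts each premise's derivation under $\Gamma, x{:}\sigma$ into one under $\Gamma$, and the original rule reapplies with the same side conditions (type consistency, operator signature, etc., do not mention $\Gamma$).

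The binder cases \rnp{T\_Abs} and \rnp{T\_LetP} are where care is required, and I expect this to be the only mildly tricky part. For \rnp{T\_Abs}, we have $f = \lambda y{:}U_1.\,f_1$ with premise $\Gamma, x{:}\sigma, y{:}U_1 \vdash f_1 : U_2$. By $\alpha$-renaming we may assume $y \neq x$. Since $x \notin \textit{fv}(\lambda y{:}U_1.f_1)$ and $y \neq x$, we have $x \notin \textit{fv}(f_1)$. Using the exchange lemma (trivial from the set-theoretic view of $\Gamma$) we can rewrite the premise as $\Gamma, y{:}U_1, x{:}\sigma \vdash f_1 : U_2$, apply the IH to strip $x{:}\sigma$, and reapply \rnp{T\_Abs}. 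The \rnp{T\_LetP} case follows the same pattern: $\alpha$-rename the let-bound variable away from $x$, observe that $x$ is not free in $w_1$ or $f_2$, apply the IH to both premises (the side condition $\overrightarrow{X_i} \cap \textit{ftv}(\Gamma) = \emptyset$ is preserved since removing an assumption can only shrink $\textit{ftv}(\Gamma)$), and reapply the rule. The main obstacle, such as it is, amounts to bookkeeping about $\alpha$-conversion and the implicit exchange of bindings in $\Gamma$; no novel reasoning is needed.
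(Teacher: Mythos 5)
Your proof is correct and follows exactly the route the paper intends: the paper's own proof is simply ``straightforward induction on the typing derivation,'' and your case analysis (trivial axioms, variable lookup via $x \neq y$, compositional cases by the IH, binder cases via $\alpha$-renaming and exchange) is the standard unpacking of that induction. No gaps; the only auxiliary fact you invoke (exchange of bindings in $\Gamma$) is routine and consistent with how the paper treats type environments.
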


\begin{proof}
  Straightforward by induction on the typing derivation of $\ottnt{f}$.
\end{proof}

\begin{lemmaA} \label{lem:type_param_substitution_in_cosistency}
  If $\ottnt{U}  \sim  \ottnt{U'}$,
  then $S  \ottsym{(}  \ottnt{U}  \ottsym{)}  \sim  S  \ottsym{(}  \ottnt{U'}  \ottsym{)}$ for any $S$.
\end{lemmaA}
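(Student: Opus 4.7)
The plan is to prove the lemma by induction on the derivation of $\ottnt{U}  \sim  \ottnt{U'}$, doing a case analysis on the last rule applied (one of \rnp{C\_Base}, \rnp{C\_TyVar}, \rnp{C\_DynR}, \rnp{C\_DynL}, \rnp{C\_Arrow}). Four of the five cases are routine and require no real work. For \rnp{C\_Base} we have $\iota \sim \iota$, and since $S(\iota) = \iota$ the conclusion follows immediately by another use of \rnp{C\_Base}. For \rnp{C\_DynR} and \rnp{C\_DynL}, the dynamic type $ \star $ is unaffected by substitution, so we reapply the same rule. For \rnp{C\_Arrow}, we have $\ottnt{U_{{\mathrm{11}}}}  \sim  \ottnt{U_{{\mathrm{21}}}}$ and $\ottnt{U_{{\mathrm{12}}}}  \sim  \ottnt{U_{{\mathrm{22}}}}$; applying the induction hypothesis to each premise and reassembling with \rnp{C\_Arrow} (and the fact that $S$ commutes with the arrow constructor) does the job.

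The one case that needs a small auxiliary observation is \rnp{C\_TyVar}, where we have $\ottmv{X}  \sim  \ottmv{X}$ and must show $S  \ottsym{(}  \ottmv{X}  \ottsym{)}  \sim  S  \ottsym{(}  \ottmv{X}  \ottsym{)}$. Since the codomain of a type substitution is restricted to static types $\ottnt{T}$, $S  \ottsym{(}  \ottmv{X}  \ottsym{)}$ is a static type, so what is needed is reflexivity of $ \sim $ on static types. This I would establish as a short side lemma: by induction on the structure of $\ottnt{T}$, using \rnp{C\_Base} when $\ottnt{T} = \iota$, \rnp{C\_TyVar} when $\ottnt{T} = \ottmv{Y}$, and \rnp{C\_Arrow} with the two inductive hypotheses when $\ottnt{T} = \ottnt{T_{{\mathrm{1}}}}  \!\rightarrow\!  \ottnt{T_{{\mathrm{2}}}}$.

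I do not expect any real obstacle here; the argument is entirely structural, and the only subtlety is remembering that \rnp{C\_TyVar} only gives $\ottmv{X}  \sim  \ottmv{X}$ (not $\ottmv{X}  \sim  \ottnt{T}$ for arbitrary $\ottnt{T}$), so the case reduces to reflexivity rather than a stronger property. Note also that this lemma is exactly part~(1) of the earlier restated \texttt{lemTySubstConsistency}, so the proof pattern mirrors what one would write for that combined statement, stopping before the typing clause.
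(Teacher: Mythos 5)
Your proof is correct and follows exactly the route the paper takes: the paper's own proof is just ``straightforward by induction on the consistency derivation,'' and your case analysis (including the observation that the \rnp{C\_TyVar} case reduces to reflexivity of $ \sim $ on the static type $S  \ottsym{(}  \ottmv{X}  \ottsym{)}$, which the paper already asserts when it notes consistency is reflexive) fills in the details faithfully. No gaps.
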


\begin{proof}
  Straightforward by induction on the consistency derivation.
\end{proof}

\begin{lemmaA} \label{lem:type_param_substitution}
  If $\Gamma  \vdash  \ottnt{f}  \ottsym{:}  \ottnt{U}$,
  then $S  \ottsym{(}  \Gamma  \ottsym{)}  \vdash  S  \ottsym{(}  \ottnt{f}  \ottsym{)}  \ottsym{:}  S  \ottsym{(}  \ottnt{U}  \ottsym{)}$ for any $S$.
\end{lemmaA}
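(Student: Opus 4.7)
The plan is to proceed by induction on the derivation of $\Gamma \vdash \ottnt{f} \ottsym{:} \ottnt{U}$, doing a case analysis on the last typing rule used. As usual with such substitution lemmas, the monomorphic cases are essentially bookkeeping, while the polymorphic cases require careful handling of bound type variables to avoid variable capture. We will freely use Lemma \ref{lem:type_param_substitution_in_cosistency} in the cast case, and Lemmas \ref{lem:weakening} and \ref{lem:strengthening} for moving freshly introduced variables in and out of the environment when dealing with let-bindings.

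First I would dispatch the easy cases. For \rnp{T\_Const} and \rnp{T\_Blame}, the result is immediate since $ \mathit{ty} ( \ottnt{c} ) $ is a base type unaffected by $S$ and blame inhabits every type. For \rnp{T\_Var}, observe that $\ottmv{x} \ottsym{:} \ottnt{U} \in \Gamma$ implies $\ottmv{x} \ottsym{:} S(\ottnt{U}) \in S(\Gamma)$ by definition of substitution on environments. For \rnp{T\_Op}, since $ \mathit{ty} ( \mathit{op} ) $ is a first-order type without type variables we have $S( \mathit{ty} ( \mathit{op} ) ) =  \mathit{ty} ( \mathit{op} ) $, and the IH gives the typing of the two subterms. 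The cases \rnp{T\_Abs} and \rnp{T\_App} follow by applying the IH and using that $S$ distributes over arrow types; for \rnp{T\_Abs} we also use that $S( \Gamma , \ottmv{x} \ottsym{:} \ottnt{U_{{\mathrm{1}}}} ) = S(\Gamma), \ottmv{x} \ottsym{:} S(\ottnt{U_{{\mathrm{1}}}})$. For \rnp{T\_Cast}, the IH provides $S(\Gamma) \vdash S(\ottnt{f}) \ottsym{:} S(\ottnt{U})$ and Lemma \ref{lem:type_param_substitution_in_cosistency} gives $S(\ottnt{U}) \sim S(\ottnt{U'})$, allowing us to reassemble the cast.

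The main obstacle is the pair of polymorphic cases \rnp{T\_VarP} and \rnp{T\_LetP}, where bound type variables may clash with $\textit{dom}(S)$ or $\textit{cod}(S)$. For \rnp{T\_VarP}, we have $\ottmv{x} \ottsym{:} \forall \, \overrightarrow{\ottmv{X_{\ottmv{i}}}} . \ottnt{U_{{\mathrm{0}}}} \in \Gamma$ and $\ottnt{f} = \ottmv{x}[ \overrightarrow{\mathbbsl{T}_{\ottmv{i}}} ]$ with $\ottnt{U} = \ottnt{U_{{\mathrm{0}}}}[ \overrightarrow{\ottmv{X_{\ottmv{i}}}} \ottsym{:=} \overrightarrow{\mathbbsl{T}_{\ottmv{i}}} ]$. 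Working up to $\alpha$-equivalence of type schemes, we may assume $\overrightarrow{\ottmv{X_{\ottmv{i}}}}$ is disjoint from $\textit{dom}(S) \cup \textit{ftv}(\textit{cod}(S))$; then $S(\Gamma)$ binds $\ottmv{x}$ to $\forall \, \overrightarrow{\ottmv{X_{\ottmv{i}}}} . S(\ottnt{U_{{\mathrm{0}}}})$ and the side condition on $ \nu $ is preserved because $\textit{ftv}(S(\ottnt{U_{{\mathrm{0}}}}))$ and $\textit{ftv}(\ottnt{U_{{\mathrm{0}}}})$ agree on whether each $\ottmv{X_{\ottmv{j}}}$ occurs free. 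Applying \rnp{T\_VarP} yields a term of type $S(\ottnt{U_{{\mathrm{0}}}})[ \overrightarrow{\ottmv{X_{\ottmv{i}}}} \ottsym{:=} \overrightarrow{S(\mathbbsl{T}_{\ottmv{i}})} ]$, which equals $S(\ottnt{U})$ by the substitution lemma for types (this relies on the assumed disjointness).

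For \rnp{T\_LetP}, we have $\ottnt{f} =  \textsf{\textup{let}\relax} \, \ottmv{x} = \Lambda\, \overrightarrow{\ottmv{X_{\ottmv{i}}}} . w_{{\mathrm{1}}} \textsf{\textup{ in }\relax} \ottnt{f_{{\mathrm{2}}}} $ with the side condition $\overrightarrow{\ottmv{X_{\ottmv{i}}}} \cap \textit{ftv}(\Gamma) = \emptyset$. Again $\alpha$-renaming $\overrightarrow{\ottmv{X_{\ottmv{i}}}}$, we may assume it is also disjoint from $\textit{dom}(S) \cup \textit{ftv}(\textit{cod}(S))$, which ensures $\overrightarrow{\ottmv{X_{\ottmv{i}}}} \cap \textit{ftv}(S(\Gamma)) = \emptyset$. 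Since $\overrightarrow{\ottmv{X_{\ottmv{i}}}}$ is disjoint from $\textit{dom}(S)$, the IH applied to $w_{{\mathrm{1}}}$ gives $S(\Gamma) \vdash S(w_{{\mathrm{1}}}) \ottsym{:} S(\ottnt{U_{{\mathrm{1}}}})$, and the IH applied to $\ottnt{f_{{\mathrm{2}}}}$ under the extended environment gives $S(\Gamma), \ottmv{x} \ottsym{:} \forall \, \overrightarrow{\ottmv{X_{\ottmv{i}}}} . S(\ottnt{U_{{\mathrm{1}}}}) \vdash S(\ottnt{f_{{\mathrm{2}}}}) \ottsym{:} S(\ottnt{U_{{\mathrm{2}}}})$. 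Re-applying \rnp{T\_LetP} finishes the case. The delicate point throughout is that $S$ must commute with the bound type variables of schemes and type abstractions, and this is justified by $\alpha$-renaming combined with the capture-avoiding definition of type substitution on terms and schemes.
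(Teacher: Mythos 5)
Your proof is correct and follows the same route as the paper, whose entire proof is ``Straightforward by induction on the typing derivation of $\ottnt{f}$.'' The only substantive content you add is the explicit $\alpha$-renaming discipline keeping the bound type variables of \rnp{T\_VarP} and \rnp{T\_LetP} clear of $\textit{dom} \, \ottsym{(}  S  \ottsym{)}$ and the free variables of its codomain, which is indeed the one delicate point and which you handle correctly.
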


\begin{proof}
  Straightforward by induction on the typing derivation of $\ottnt{f}$.
\end{proof}

\begin{lemmaA} \label{lem:term_var_substitution}
  If $\Gamma  \vdash  w  \ottsym{:}  \ottnt{U}$ and $ \Gamma ,   \ottmv{x}  :  \forall \,  \overrightarrow{ \ottmv{X_{\ottmv{i}}} }   \ottsym{.}  \ottnt{U}    \vdash  \ottnt{f}  \ottsym{:}  \ottnt{U'}$,
  then $\Gamma  \vdash  \ottnt{f}  [  \ottmv{x}  \ottsym{:=}   \Lambda    \overrightarrow{ \ottmv{X_{\ottmv{i}}} }  .\,  w   ]  \ottsym{:}  \ottnt{U'}$.
\end{lemmaA}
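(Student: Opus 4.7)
\medskip

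The plan is to proceed by induction on the derivation of $ \Gamma ,   \ottmv{x}  :  \forall \,  \overrightarrow{ \ottmv{X_{\ottmv{i}}} }   \ottsym{.}  \ottnt{U}    \vdash  \ottnt{f}  \ottsym{:}  \ottnt{U'}$, and then case-analyze on the last typing rule used. We assume the usual Barendregt-style convention that the bound type variables $ \overrightarrow{ \ottmv{X_{\ottmv{i}}} } $ in the scheme $\forall \,  \overrightarrow{ \ottmv{X_{\ottmv{i}}} }   \ottsym{.}  \ottnt{U}$ are fresh with respect to $\Gamma$, and that the fresh $\ottmv{Y_{\ottmv{i}}}$ used in the definition of $\ottnt{f}  [  \ottmv{x}  \ottsym{:=}   \Lambda    \overrightarrow{ \ottmv{X_{\ottmv{i}}} }  .\,  w   ]$ to expand $\nu$-slots are fresh with respect to all other names in scope. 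Cases \rnp{T\_Const} and \rnp{T\_Blame} are immediate; \rnp{T\_Op}, \rnp{T\_App}, and \rnp{T\_Cast} follow directly by applying the induction hypothesis to each subterm and reusing the typing rule; the \rnp{T\_VarP} case when the variable is some $\ottmv{x'} \neq \ottmv{x}$ is trivial by \rnp{T\_VarP} itself and strengthening (Lemma~\ref{lem:strengthening}).

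The heart of the argument is the \rnp{T\_VarP} case for $\ottmv{x}$ itself. Here $\ottnt{f}  \ottsym{=}  \ottmv{x}  [   \overrightarrow{ \mathbbsl{T}_{\ottmv{i}} }   ]$ with $\ottnt{U'}  \ottsym{=}  \ottnt{U}  [   \overrightarrow{ \ottmv{X_{\ottmv{i}}} }   \ottsym{:=}   \overrightarrow{ \mathbbsl{T}_{\ottmv{i}} }   ]$ and $\mathbbsl{T}_{\ottmv{j}}  =  \nu$ iff $\ottmv{X_{\ottmv{j}}} \not\in \textit{ftv} \, \ottsym{(}  \ottnt{U}  \ottsym{)}$. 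By the definition of substitution, $\ottnt{f}  [  \ottmv{x}  \ottsym{:=}   \Lambda    \overrightarrow{ \ottmv{X_{\ottmv{i}}} }  .\,  w   ]  \ottsym{=}  [   \overrightarrow{ \ottmv{X_{\ottmv{i}}} }   \ottsym{:=}   \overrightarrow{ \ottnt{T_{\ottmv{i}}} }   ]  \ottsym{(}  w  \ottsym{)}$, where $\ottnt{T_{\ottmv{j}}}  \ottsym{=}  \ottmv{Y_{\ottmv{j}}}$ (a fresh type variable) when $\mathbbsl{T}_{\ottmv{j}}  =  \nu$ and $\ottnt{T_{\ottmv{j}}}  \ottsym{=}  \mathbbsl{T}_{\ottmv{j}}$ otherwise. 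Applying Lemma~\ref{lem:type_param_substitution} to $\Gamma  \vdash  w  \ottsym{:}  \ottnt{U}$ with the substitution $[   \overrightarrow{ \ottmv{X_{\ottmv{i}}} }   \ottsym{:=}   \overrightarrow{ \ottnt{T_{\ottmv{i}}} }   ]$ yields $[   \overrightarrow{ \ottmv{X_{\ottmv{i}}} }   \ottsym{:=}   \overrightarrow{ \ottnt{T_{\ottmv{i}}} }   ]  \ottsym{(}  \Gamma  \ottsym{)}  \vdash  [   \overrightarrow{ \ottmv{X_{\ottmv{i}}} }   \ottsym{:=}   \overrightarrow{ \ottnt{T_{\ottmv{i}}} }   ]  \ottsym{(}  w  \ottsym{)}  \ottsym{:}  \ottnt{U}  [   \overrightarrow{ \ottmv{X_{\ottmv{i}}} }   \ottsym{:=}   \overrightarrow{ \ottnt{T_{\ottmv{i}}} }   ]$. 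Freshness of $ \overrightarrow{ \ottmv{X_{\ottmv{i}}} } $ with respect to $\Gamma$ gives $[   \overrightarrow{ \ottmv{X_{\ottmv{i}}} }   \ottsym{:=}   \overrightarrow{ \ottnt{T_{\ottmv{i}}} }   ]  \ottsym{(}  \Gamma  \ottsym{)}  \ottsym{=}  \Gamma$, and the side condition on $\nu$ (that $\ottmv{X_{\ottmv{j}}} \not\in \textit{ftv} \, \ottsym{(}  \ottnt{U}  \ottsym{)}$ whenever $\mathbbsl{T}_{\ottmv{j}}  =  \nu$) ensures $\ottnt{U}  [   \overrightarrow{ \ottmv{X_{\ottmv{i}}} }   \ottsym{:=}   \overrightarrow{ \ottnt{T_{\ottmv{i}}} }   ]  \ottsym{=}  \ottnt{U}  [   \overrightarrow{ \ottmv{X_{\ottmv{i}}} }   \ottsym{:=}   \overrightarrow{ \mathbbsl{T}_{\ottmv{i}} }   ]  \ottsym{=}  \ottnt{U'}$, since the fresh $\ottmv{Y_{\ottmv{j}}}$ introduced at $\nu$-slots are substituted only for $\ottmv{X_{\ottmv{j}}}$'s that do not occur in $\ottnt{U}$.

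The remaining binder cases are handled by alpha-renaming and weakening. For \rnp{T\_Abs}, the bound term variable $\ottmv{x'}$ is renamed to avoid clashes with $\ottmv{x}$ and with $\textit{fv} \, \ottsym{(}  w  \ottsym{)}$; the induction hypothesis then applies to the body, after using Lemma~\ref{lem:weakening} to extend $\Gamma  \vdash  w  \ottsym{:}  \ottnt{U}$ with the binding for $\ottmv{x'}$. For \rnp{T\_LetP}, both the bound term variable $\ottmv{x'}$ and the bound type variables $ \overrightarrow{ \ottmv{X'_{\ottmv{i}}} } $ are renamed so that they do not occur in $w$ (for term variables) or in $\textit{ftv} \, \ottsym{(}  w  \ottsym{)}  \cup  \textit{ftv} \, \ottsym{(}  \ottnt{U}  \ottsym{)}$ (for type variables), preserving the side condition $ \overrightarrow{ \ottmv{X'_{\ottmv{i}}} }   \cap  \textit{ftv} \, \ottsym{(}  \Gamma  \ottsym{)}  \ottsym{=}   \emptyset $ after extension; the induction hypothesis is then applied to both $w_{{\mathrm{1}}}$ and $\ottnt{f_{{\mathrm{2}}}}$.

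The main obstacle is the \rnp{T\_VarP} case, specifically verifying that the ad hoc replacement of $\nu$ by fresh type variables on the substitution side and the typing rule's side condition on $\nu$ (that the corresponding $\ottmv{X_{\ottmv{j}}}$ do not occur in $\ottnt{U}$) line up so that the witnessing type produced by Lemma~\ref{lem:type_param_substitution} coincides exactly with $\ottnt{U}  [   \overrightarrow{ \ottmv{X_{\ottmv{i}}} }   \ottsym{:=}   \overrightarrow{ \mathbbsl{T}_{\ottmv{i}} }   ]$. Everything else reduces to bookkeeping about alpha-conversion and appeals to Lemmas~\ref{lem:weakening}, \ref{lem:strengthening}, and \ref{lem:type_param_substitution}.
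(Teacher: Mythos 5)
Your proposal is correct and follows the same route as the paper: induction on the typing derivation of $\ottnt{f}$, with Lemma~\ref{lem:type_param_substitution} doing the work in the $\ottmv{x}  [   \overrightarrow{ \mathbbsl{T}_{\ottmv{i}} }   ]$ case. The paper leaves the details implicit; your treatment of the $\nu$-slots and the freshness bookkeeping is a faithful expansion of that one-line argument.
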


\begin{proof}
  Straightforward by induction on the typing derivation of $\ottnt{f}$.\\
  Use Lemma~\ref{lem:type_param_substitution} for the case where $\ottnt{f} = \ottmv{x}  [   \overrightarrow{ \mathbbsl{T}_{\ottmv{i}} }   ]$.
\end{proof}

\ifrestate
\lemPreservation*
\else
\begin{lemmaA}[name=Preservation,restate=lemPreservation] \label{lem:preservation}
  Suppose that $ \emptyset   \vdash  \ottnt{f}  \ottsym{:}  \ottnt{U}$.
  \begin{enumerate}
    \item If $\ottnt{f} \,  \xrightarrow{ \mathmakebox[0.4em]{} S \mathmakebox[0.3em]{} }  \, \ottnt{f'}$, then $ \emptyset   \vdash  S  \ottsym{(}  \ottnt{f'}  \ottsym{)}  \ottsym{:}  S  \ottsym{(}  \ottnt{U}  \ottsym{)}$.
    \item If $\ottnt{f} \,  \xmapsto{ \mathmakebox[0.4em]{} S \mathmakebox[0.3em]{} }  \, \ottnt{f'}$, then $ \emptyset   \vdash  \ottnt{f'}  \ottsym{:}  S  \ottsym{(}  \ottnt{U}  \ottsym{)}$.
  \end{enumerate}
\end{lemmaA}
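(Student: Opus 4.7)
The plan is to prove the two parts separately, with Part 1 by case analysis on the reduction rule used to derive $\ottnt{f} \,  \xrightarrow{ \mathmakebox[0.4em]{} S \mathmakebox[0.3em]{} }  \, \ottnt{f'}$, and Part 2 by case analysis on the evaluation rule, reducing the \rnp{E\_Step} case to Part 1 via an induction on the structure of the evaluation context $\ottnt{E}$.

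For Part 1, most of the reduction rules inhabit the fragment where $S = [\,]$, so they reduce to standard preservation arguments. The cases \rnp{R\_Op}, \rnp{R\_IdBase}, \rnp{R\_IdStar}, \rnp{R\_Succeed}, \rnp{R\_Fail}, \rnp{R\_Ground}, and \rnp{R\_Expand} each require only inversion on the given typing derivation and basic facts about $\sim$ and ground types (Lemma~\ref{lem:ground_types}). \rnp{R\_Beta} needs the standard term substitution lemma for the monomorphic case, which follows by induction on typing derivations. \rnp{R\_AppCast} requires inverting \rnp{T\_App} twice and reassembling the casts with the right domain/range types (using \rnp{T\_Cast} and the inversion that $\ottnt{U_{{\mathrm{11}}}}  \!\rightarrow\!  \ottnt{U_{{\mathrm{12}}}}  \sim  \ottnt{U_{{\mathrm{21}}}}  \!\rightarrow\!  \ottnt{U_{{\mathrm{22}}}}$ gives component-wise consistency). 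The \rnp{R\_LetP} case needs the let-polymorphic substitution lemma, proved by induction on the typing derivation of the body $\ottnt{f}$; the nontrivial clause is \rnp{T\_VarP}, where we must replace $\ottmv{x}  [   \overrightarrow{ \mathbbsl{T}_{\ottmv{i}} }   ]$ by $[  \overrightarrow{ \ottmv{X_{\ottmv{i}}} }  := \overrightarrow{\ottnt{T_{\ottmv{i}}}}](w)$ (freshening $ \nu $ to fresh type variables) and appeal to the type substitution lemma from Section~2.

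The nonstandard cases are \rnp{R\_InstBase} and \rnp{R\_InstArrow}, where the produced substitution $S$ is nonempty. For \rnp{R\_InstBase} with $S = [\ottmv{X} := \iota]$, the source term $w  \ottsym{:}   \iota \Rightarrow  \unskip ^ { \ell_{{\mathrm{1}}} }  \!  \star \Rightarrow  \unskip ^ { \ell_{{\mathrm{2}}} }  \! \ottmv{X}  $ has type $\ottmv{X}$, so the target type $S(\ottmv{X}) = \iota$ matches the type of $w$ by inversion, and $S(w) = w$ since $w$ is closed. For \rnp{R\_InstArrow} with $S = [\ottmv{X} := \ottmv{X_{{\mathrm{1}}}} \!\rightarrow\! \ottmv{X_{{\mathrm{2}}}}]$, freshness of $\ottmv{X_{{\mathrm{1}}}}, \ottmv{X_{{\mathrm{2}}}}$ ensures the resulting nested cast is well typed via three applications of \rnp{T\_Cast}, each validated by rules \rnp{C\_DynR}, \rnp{C\_DynL} and \rnp{C\_Arrow}. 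In both cases the key observation is that $S$ is applied \emph{only to the reduct} in the statement, and applying $S$ to everything commutes correctly with closedness of $w$.

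For Part 2, case analysis on the evaluation rule. In \rnp{E\_Abort}, the conclusion $ \emptyset   \vdash  \textsf{\textup{blame}\relax} \, \ell  \ottsym{:}  S  \ottsym{(}  \ottnt{U}  \ottsym{)}$ is immediate from \rnp{T\_Blame}. The \rnp{E\_Step} case is the main load: given $\ottnt{f} = \ottnt{E}  [  \ottnt{f_{{\mathrm{1}}}}  ]$ with $\ottnt{f_{{\mathrm{1}}}} \,  \xrightarrow{ \mathmakebox[0.4em]{} S \mathmakebox[0.3em]{} }  \, \ottnt{f'_{{\mathrm{1}}}}$ and $\ottnt{f'} = S  \ottsym{(}  \ottnt{E}  [  \ottnt{f'_{{\mathrm{1}}}}  ]  \ottsym{)}$, I will proceed by induction on $\ottnt{E}$. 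For the base case $\ottnt{E} = \left[ \, \right]$, we have $ \emptyset   \vdash  \ottnt{f_{{\mathrm{1}}}}  \ottsym{:}  \ottnt{U}$ directly, and Part 1 delivers $ \emptyset   \vdash  S  \ottsym{(}  \ottnt{f'_{{\mathrm{1}}}}  \ottsym{)}  \ottsym{:}  S  \ottsym{(}  \ottnt{U}  \ottsym{)}$. For each inductive frame, inversion on the typing of $\ottnt{E}  [  \ottnt{f_{{\mathrm{1}}}}  ]$ yields a typing of the hole at some type $\ottnt{U_{{\mathrm{0}}}}$; the IH gives $ \emptyset   \vdash  S  \ottsym{(}  \ottnt{E'}  [  \ottnt{f'_{{\mathrm{1}}}}  ]  \ottsym{)}  \ottsym{:}  S  \ottsym{(}  \ottnt{U_{{\mathrm{0}}}}  \ottsym{)}$ (for the appropriate subcontext), and reassembling the frame uses the type substitution lemma applied to the other subterms of the frame to coherently propagate $S$, together with the fact that $S$ preserves consistency so any cast annotation in the frame remains well formed under $S$. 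This is the main obstacle: one must verify that applying $S$ to the surrounding context (which may contain occurrences of the freshly generated variables from $S$ only if reduced through an instantiation rule) preserves both typing and consistency side conditions throughout, which is exactly what the type substitution lemma guarantees.
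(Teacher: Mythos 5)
Your proposal is correct and follows essentially the same route as the paper's proof: Part~1 by exhaustive case analysis (the paper organizes it by the last typing rule with sub-cases on the applicable reductions, which is equivalent to your organization by reduction rule), using the term-substitution lemma for \rnp{R\_Beta}/\rnp{R\_LetP} and the type-substitution lemma (Lemma~1) for the instantiation rules; Part~2 by case analysis on the evaluation rule with an inner induction on $\ottnt{E}$, pushing $S$ through the frame via the type-substitution lemma and discharging \rnp{E\_Abort} with \rnp{T\_Blame}. One minor imprecision: in \rnp{R\_InstBase} the justification for $S  \ottsym{(}  w  \ottsym{)}  \ottsym{=}  w$ should be canonical forms ($w$ is a constant, since $ \emptyset   \vdash  w  \ottsym{:}  \iota$) rather than term-closedness, although the type-substitution lemma makes this equation unnecessary anyway.
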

\fi 

\begin{proof}
  \leavevmode
  \begin{enumerate}
    \item
      By case analysis on the typing rule applied to $\ottnt{f}$.
      \begin{caseanalysis}
        \case{\rnp{T\_Op}}
        We are given $ \emptyset   \vdash  \mathit{op} \, \ottsym{(}  \ottnt{f_{{\mathrm{1}}}}  \ottsym{,}  \ottnt{f_{{\mathrm{2}}}}  \ottsym{)}  \ottsym{:}  \iota$
        for some $\ottnt{f_{{\mathrm{1}}}}$, $\ottnt{f_{{\mathrm{2}}}}$, and $\iota$
        where $\ottnt{f}  \ottsym{=}  \ottnt{f_{{\mathrm{1}}}} \, \ottnt{f_{{\mathrm{2}}}}$ and $\ottnt{U}  \ottsym{=}  \iota$.
        By inversion, we have $ \mathit{ty} ( \mathit{op} )   \ottsym{=}  \iota_{{\mathrm{1}}}  \!\rightarrow\!  \iota_{{\mathrm{2}}}  \!\rightarrow\!  \iota$,
        $ \emptyset   \vdash  \ottnt{f_{{\mathrm{1}}}}  \ottsym{:}  \iota_{{\mathrm{1}}}$, and $ \emptyset   \vdash  \ottnt{f_{{\mathrm{2}}}}  \ottsym{:}  \iota_{{\mathrm{2}}}$
        for some $\iota_{{\mathrm{1}}}$ and $\iota_{{\mathrm{2}}}$.
        By case analysis on the reduction rules applicable to $\ottnt{f_{{\mathrm{1}}}} \, \ottnt{f_{{\mathrm{2}}}}$.

        \begin{caseanalysis}
          \case{\rnp{R\_Op}}
          We are given $\mathit{op} \, \ottsym{(}  w_{{\mathrm{1}}}  \ottsym{,}  w_{{\mathrm{2}}}  \ottsym{)} \,  \xrightarrow{ \mathmakebox[0.4em]{} [  ] \mathmakebox[0.3em]{} }  \,  \llbracket\mathit{op}\rrbracket ( w_{{\mathrm{1}}} ,  w_{{\mathrm{2}}} ) $
          where $S  \ottsym{=}  [  ]$, $\ottnt{f_{{\mathrm{1}}}}  \ottsym{=}  w_{{\mathrm{1}}}$, and $\ottnt{f_{{\mathrm{2}}}}  \ottsym{=}  w_{{\mathrm{2}}}$.
          $ \llbracket\mathit{op}\rrbracket ( w_{{\mathrm{1}}} ,  w_{{\mathrm{2}}} ) $ is assumed to have type $\iota$.

          \otherwise Cannot happen.
        \end{caseanalysis}

        \case{\rnp{T\_App}}
        We are given $ \emptyset   \vdash  \ottnt{f_{{\mathrm{1}}}} \, \ottnt{f_{{\mathrm{2}}}}  \ottsym{:}  \ottnt{U}$ for some $\ottnt{f_{{\mathrm{1}}}}$ and $\ottnt{f_{{\mathrm{2}}}}$
        where $\ottnt{f}  \ottsym{=}  \ottnt{f_{{\mathrm{1}}}} \, \ottnt{f_{{\mathrm{2}}}}$.
        By inversion, we have $ \emptyset   \vdash  \ottnt{f_{{\mathrm{1}}}}  \ottsym{:}  \ottnt{U'}  \!\rightarrow\!  \ottnt{U}$ and $ \emptyset   \vdash  \ottnt{f_{{\mathrm{2}}}}  \ottsym{:}  \ottnt{U'}$
        for some $\ottnt{U'}$.
        By case analysis on the reduction rules applicable to $\ottnt{f_{{\mathrm{1}}}} \, \ottnt{f_{{\mathrm{2}}}}$.

        \begin{caseanalysis}
          \case{\rnp{R\_Beta}}
          We are given $\ottsym{(}   \lambda  \ottmv{x} \!:\!  \ottnt{U'}  .\,  \ottnt{f'_{{\mathrm{1}}}}   \ottsym{)} \, \ottnt{f_{{\mathrm{2}}}} \,  \xrightarrow{ \mathmakebox[0.4em]{} [  ] \mathmakebox[0.3em]{} }  \, \ottnt{f'_{{\mathrm{1}}}}  [  \ottmv{x}  \ottsym{:=}  \ottnt{f_{{\mathrm{2}}}}  ]$
          where $S  \ottsym{=}  [  ]$, $\ottnt{f_{{\mathrm{1}}}}  \ottsym{=}   \lambda  \ottmv{x} \!:\!  \ottnt{U'}  .\,  \ottnt{f'_{{\mathrm{1}}}} $, and $\ottnt{f_{{\mathrm{2}}}}$ is a value.
          By inversion, $ \ottmv{x}  :  \ottnt{U'}   \vdash  \ottnt{f'_{{\mathrm{1}}}}  \ottsym{:}  \ottnt{U}$.
          By Lemma \ref{lem:term_var_substitution}, $ \emptyset   \vdash  \ottnt{f'_{{\mathrm{1}}}}  [  \ottmv{x}  \ottsym{:=}  \ottnt{f_{{\mathrm{2}}}}  ]  \ottsym{:}  \ottnt{U}$.
          Finally, $ \emptyset   \vdash  [  ]  \ottsym{(}  \ottnt{f'_{{\mathrm{1}}}}  [  \ottmv{x}  \ottsym{:=}  \ottnt{f_{{\mathrm{2}}}}  ]  \ottsym{)}  \ottsym{:}  [  ]  \ottsym{(}  \ottnt{U}  \ottsym{)}$.

          \case{\rnp{R\_AppCast}}
          We are given
          $\ottsym{(}  w_{{\mathrm{1}}}  \ottsym{:}   \ottnt{U_{{\mathrm{11}}}}  \!\rightarrow\!  \ottnt{U_{{\mathrm{12}}}} \Rightarrow  \unskip ^ { \ell }  \! \ottnt{U'}  \!\rightarrow\!  \ottnt{U}   \ottsym{)} \, \ottnt{f_{{\mathrm{2}}}} \,  \xrightarrow{ \mathmakebox[0.4em]{} [  ] \mathmakebox[0.3em]{} }  \, \ottsym{(}  w_{{\mathrm{1}}} \, \ottsym{(}  \ottnt{f_{{\mathrm{2}}}}  \ottsym{:}   \ottnt{U'} \Rightarrow  \unskip ^ {  \bar{ \ell }  }  \! \ottnt{U_{{\mathrm{11}}}}   \ottsym{)}  \ottsym{)}  \ottsym{:}   \ottnt{U_{{\mathrm{12}}}} \Rightarrow  \unskip ^ { \ell }  \! \ottnt{U} $
          where $S  \ottsym{=}  [  ]$, $\ottnt{f_{{\mathrm{1}}}}  \ottsym{=}  w_{{\mathrm{1}}}  \ottsym{:}   \ottnt{U_{{\mathrm{11}}}}  \!\rightarrow\!  \ottnt{U_{{\mathrm{12}}}} \Rightarrow  \unskip ^ { \ell }  \! \ottnt{U'}  \!\rightarrow\!  \ottnt{U} $,
          and $\ottnt{f_{{\mathrm{2}}}}$ is a value.
          By inversion, $ \emptyset   \vdash  w_{{\mathrm{1}}}  \ottsym{:}  \ottnt{U_{{\mathrm{11}}}}  \!\rightarrow\!  \ottnt{U_{{\mathrm{12}}}}$ and $\ottnt{U_{{\mathrm{11}}}}  \!\rightarrow\!  \ottnt{U_{{\mathrm{12}}}}  \sim  \ottnt{U'}  \!\rightarrow\!  \ottnt{U}$.
          By \rnp{C\_Arrow}, $\ottnt{U_{{\mathrm{11}}}}  \sim  \ottnt{U'}$ and $\ottnt{U_{{\mathrm{12}}}}  \sim  \ottnt{U}$.
          By \rnp{T\_Cast}, $ \emptyset   \vdash  \ottsym{(}  \ottnt{f_{{\mathrm{2}}}}  \ottsym{:}   \ottnt{U'} \Rightarrow  \unskip ^ {  \bar{ \ell }  }  \! \ottnt{U_{{\mathrm{11}}}}   \ottsym{)}  \ottsym{:}  \ottnt{U_{{\mathrm{11}}}}$.
          By \rnp{T\_App}, $ \emptyset   \vdash  w_{{\mathrm{1}}} \, \ottsym{(}  \ottnt{f_{{\mathrm{2}}}}  \ottsym{:}   \ottnt{U'} \Rightarrow  \unskip ^ {  \bar{ \ell }  }  \! \ottnt{U_{{\mathrm{11}}}}   \ottsym{)}  \ottsym{:}  \ottnt{U_{{\mathrm{12}}}}$.
          By \rnp{T\_Cast}, $ \emptyset   \vdash  \ottsym{(}  w_{{\mathrm{1}}} \, \ottsym{(}  \ottnt{f_{{\mathrm{2}}}}  \ottsym{:}   \ottnt{U'} \Rightarrow  \unskip ^ {  \bar{ \ell }  }  \! \ottnt{U_{{\mathrm{11}}}}   \ottsym{)}  \ottsym{)}  \ottsym{:}   \ottnt{U_{{\mathrm{12}}}} \Rightarrow  \unskip ^ { \ell }  \! \ottnt{U}   \ottsym{:}  \ottnt{U}$.
          Finally, $ \emptyset   \vdash  [  ]  \ottsym{(}  \ottsym{(}  w_{{\mathrm{1}}} \, \ottsym{(}  \ottnt{f_{{\mathrm{2}}}}  \ottsym{:}   \ottnt{U'} \Rightarrow  \unskip ^ {  \bar{ \ell }  }  \! \ottnt{U_{{\mathrm{11}}}}   \ottsym{)}  \ottsym{)}  \ottsym{:}   \ottnt{U_{{\mathrm{12}}}} \Rightarrow  \unskip ^ { \ell }  \! \ottnt{U}   \ottsym{)}  \ottsym{:}  [  ]  \ottsym{(}  \ottnt{U}  \ottsym{)}$.

          \otherwise Cannot happen.
        \end{caseanalysis}

        \case{\rnp{T\_Cast}}
        We are given $ \emptyset   \vdash  \ottsym{(}  \ottnt{f_{{\mathrm{1}}}}  \ottsym{:}   \ottnt{U'} \Rightarrow  \unskip ^ { \ell }  \! \ottnt{U}   \ottsym{)}  \ottsym{:}  \ottnt{U}$
        for some $\ottnt{f_{{\mathrm{1}}}}$, $\ottnt{U'}$, and $\ell$ where $\ottnt{f}  \ottsym{=}  \ottnt{f_{{\mathrm{1}}}}  \ottsym{:}   \ottnt{U'} \Rightarrow  \unskip ^ { \ell }  \! \ottnt{U} $.
        By inversion, we have $ \emptyset   \vdash  \ottnt{f_{{\mathrm{1}}}}  \ottsym{:}  \ottnt{U'}$ and $\ottnt{U'}  \sim  \ottnt{U}$.
        By case analysis on the reduction rule applicable to $\ottnt{f_{{\mathrm{1}}}}  \ottsym{:}   \ottnt{U'} \Rightarrow  \unskip ^ { \ell }  \! \ottnt{U} $.

        \begin{caseanalysis}
          \case{\rnp{R\_IdBase}, \rnp{R\_IdStar}}
          We are given $w  \ottsym{:}   \ottnt{U} \Rightarrow  \unskip ^ { \ell }  \! \ottnt{U}  \,  \xrightarrow{ \mathmakebox[0.4em]{} [  ] \mathmakebox[0.3em]{} }  \, w$ where $\ottnt{f_{{\mathrm{1}}}}  \ottsym{=}  w$ and $\ottnt{U}  \ottsym{=}  \ottnt{U'}$.
          So, $ \emptyset   \vdash  [  ]  \ottsym{(}  w  \ottsym{)}  \ottsym{:}  [  ]  \ottsym{(}  \ottnt{U}  \ottsym{)}$.

          \case{\rnp{R\_Succeed}}
          We are given $w  \ottsym{:}   \ottnt{G} \Rightarrow  \unskip ^ { \ell' }  \!  \star \Rightarrow  \unskip ^ { \ell }  \! \ottnt{G}   \,  \xrightarrow{ \mathmakebox[0.4em]{} [  ] \mathmakebox[0.3em]{} }  \, w$ where
          $\ottnt{f_{{\mathrm{1}}}}  \ottsym{=}  w  \ottsym{:}   \ottnt{G} \Rightarrow  \unskip ^ { \ell' }  \! \star $, $\ottnt{U}  \ottsym{=}  \ottnt{G}$, and $\ottnt{U'}  \ottsym{=}  \star$.
          By inversion, $ \emptyset   \vdash  \ottsym{(}  w  \ottsym{:}   \ottnt{G} \Rightarrow  \unskip ^ { \ell' }  \! \star   \ottsym{)}  \ottsym{:}  \star$.
          By inversion, $ \emptyset   \vdash  w  \ottsym{:}  \ottnt{G}$.
          So, $ \emptyset   \vdash  [  ]  \ottsym{(}  w  \ottsym{)}  \ottsym{:}  [  ]  \ottsym{(}  \ottnt{G}  \ottsym{)}$.

          \case{\rnp{R\_Fail}}
          We are given $w  \ottsym{:}   \ottnt{G_{{\mathrm{1}}}} \Rightarrow  \unskip ^ { \ell' }  \!  \star \Rightarrow  \unskip ^ { \ell }  \! \ottnt{G_{{\mathrm{2}}}}   \,  \xrightarrow{ \mathmakebox[0.4em]{} [  ] \mathmakebox[0.3em]{} }  \, \textsf{\textup{blame}\relax} \, \ell$ where
          $\ottnt{G_{{\mathrm{1}}}}  \neq  \ottnt{G_{{\mathrm{2}}}}$,
          $\ottnt{f_{{\mathrm{1}}}}  \ottsym{=}  w  \ottsym{:}   \ottnt{G_{{\mathrm{1}}}} \Rightarrow  \unskip ^ { \ell' }  \! \star $, $\ottnt{U}  \ottsym{=}  \ottnt{G_{{\mathrm{2}}}}$, and $\ottnt{U'}  \ottsym{=}  \star$.
          By \rnp{T\_Blame}, $ \emptyset   \vdash  \textsf{\textup{blame}\relax} \, \ell  \ottsym{:}  \ottnt{G_{{\mathrm{2}}}}$.
          So, $ \emptyset   \vdash  [  ]  \ottsym{(}  \textsf{\textup{blame}\relax} \, \ell  \ottsym{)}  \ottsym{:}  [  ]  \ottsym{(}  \ottnt{G_{{\mathrm{2}}}}  \ottsym{)}$.

          \case{\rnp{R\_Ground}}
          We are given $w  \ottsym{:}   \ottnt{U'} \Rightarrow  \unskip ^ { \ell }  \! \star  \,  \xrightarrow{ \mathmakebox[0.4em]{} [  ] \mathmakebox[0.3em]{} }  \, w  \ottsym{:}   \ottnt{U'} \Rightarrow  \unskip ^ { \ell }  \!  \ottnt{G} \Rightarrow  \unskip ^ { \ell }  \! \star  $ where
          $\ottnt{f_{{\mathrm{1}}}}  \ottsym{=}  w$, $\ottnt{U'}  \neq  \star$, $\ottnt{U'}  \neq  \ottnt{G}$, $\ottnt{U'}  \sim  \ottnt{G}$, and $\ottnt{U}  \ottsym{=}  \star$.
          By inversion, $ \emptyset   \vdash  w  \ottsym{:}  \ottnt{U'}$.
          By \rnp{T\_Cast}, $ \emptyset   \vdash  w  \ottsym{:}   \ottnt{U'} \Rightarrow  \unskip ^ { \ell }  \! \ottnt{G}   \ottsym{:}  \ottnt{G}$.
          By \rnp{T\_Cast}, $ \emptyset   \vdash  w  \ottsym{:}   \ottnt{U'} \Rightarrow  \unskip ^ { \ell }  \!  \ottnt{G} \Rightarrow  \unskip ^ { \ell }  \! \star    \ottsym{:}  \star$.
          So, $ \emptyset   \vdash  [  ]  \ottsym{(}  w  \ottsym{:}   \ottnt{U'} \Rightarrow  \unskip ^ { \ell }  \!  \ottnt{G} \Rightarrow  \unskip ^ { \ell }  \! \star    \ottsym{)}  \ottsym{:}  [  ]  \ottsym{(}  \star  \ottsym{)}$.

          \case{\rnp{R\_Expand}}
          We are given $w  \ottsym{:}   \star \Rightarrow  \unskip ^ { \ell }  \! \ottnt{U}  \,  \xrightarrow{ \mathmakebox[0.4em]{} [  ] \mathmakebox[0.3em]{} }  \, w  \ottsym{:}   \star \Rightarrow  \unskip ^ { \ell }  \!  \ottnt{G} \Rightarrow  \unskip ^ { \ell }  \! \ottnt{U}  $ where
          $\ottnt{f_{{\mathrm{1}}}}  \ottsym{=}  w$, $\ottnt{U}  \neq  \star$, $\ottnt{U}  \neq  \ottnt{G}$, $\ottnt{U}  \sim  \ottnt{G}$, and $\ottnt{U}  \ottsym{=}  \star$.
          By inversion, $ \emptyset   \vdash  w  \ottsym{:}  \star$.
          By \rnp{T\_Cast}, $ \emptyset   \vdash  w  \ottsym{:}   \star \Rightarrow  \unskip ^ { \ell }  \! \ottnt{G}   \ottsym{:}  \ottnt{G}$.
          By \rnp{T\_Cast}, $ \emptyset   \vdash  w  \ottsym{:}   \star \Rightarrow  \unskip ^ { \ell }  \!  \ottnt{G} \Rightarrow  \unskip ^ { \ell }  \! \ottnt{U}    \ottsym{:}  \ottnt{U}$.
          So, $ \emptyset   \vdash  [  ]  \ottsym{(}  w  \ottsym{:}   \star \Rightarrow  \unskip ^ { \ell }  \!  \ottnt{G} \Rightarrow  \unskip ^ { \ell }  \! \ottnt{U}    \ottsym{)}  \ottsym{:}  [  ]  \ottsym{(}  \star  \ottsym{)}$.

          \case{\rnp{R\_InstBase}}
          We are given $w  \ottsym{:}   \iota \Rightarrow  \unskip ^ { \ell' }  \!  \star \Rightarrow  \unskip ^ { \ell }  \! \ottmv{X}   \,  \xrightarrow{ \mathmakebox[0.4em]{} [  \ottmv{X}  :=  \iota  ] \mathmakebox[0.3em]{} }  \, w$ where
          $\ottnt{f_{{\mathrm{1}}}}  \ottsym{=}  w  \ottsym{:}   \iota \Rightarrow  \unskip ^ { \ell' }  \! \star $, $\ottnt{U}  \ottsym{=}  \ottmv{X}$, and $\ottnt{U'}  \ottsym{=}  \star$.
          By inversion, $ \emptyset   \vdash  \ottsym{(}  w  \ottsym{:}   \iota \Rightarrow  \unskip ^ { \ell' }  \! \star   \ottsym{)}  \ottsym{:}  \star$.
          By inversion, $ \emptyset   \vdash  w  \ottsym{:}  \iota$.
          By Lemma \ref{lem:type_param_substitution},
          $[  \ottmv{X}  :=  \iota  ]  \ottsym{(}   \emptyset   \ottsym{)}  \vdash  [  \ottmv{X}  :=  \iota  ]  \ottsym{(}  w  \ottsym{)}  \ottsym{:}  [  \ottmv{X}  :=  \iota  ]  \ottsym{(}  \iota  \ottsym{)}$.
          Finally, $ \emptyset   \vdash  [  \ottmv{X}  :=  \iota  ]  \ottsym{(}  w  \ottsym{)}  \ottsym{:}  [  \ottmv{X}  :=  \iota  ]  \ottsym{(}  X  \ottsym{)}$.

          \case{\rnp{R\_InstArrow}}
          \leavevmode\\
          We are given $w  \ottsym{:}   \star  \!\rightarrow\!  \star \Rightarrow  \unskip ^ { \ell' }  \!  \star \Rightarrow  \unskip ^ { \ell }  \! \ottmv{X}   \,  \xrightarrow{ \mathmakebox[0.4em]{} [  \ottmv{X}  :=  \ottmv{X_{{\mathrm{1}}}}  \!\rightarrow\!  \ottmv{X_{{\mathrm{2}}}}  ] \mathmakebox[0.3em]{} }  \, w  \ottsym{:}   \star  \!\rightarrow\!  \star \Rightarrow  \unskip ^ { \ell' }  \!  \star \Rightarrow  \unskip ^ { \ell }  \!  \star  \!\rightarrow\!  \star \Rightarrow  \unskip ^ { \ell }  \! \ottmv{X_{{\mathrm{1}}}}  \!\rightarrow\!  \ottmv{X_{{\mathrm{2}}}}   $ where
          $\ottnt{f_{{\mathrm{1}}}}  \ottsym{=}  w  \ottsym{:}   \star  \!\rightarrow\!  \star \Rightarrow  \unskip ^ { \ell' }  \! \star $, $\ottnt{U}  \ottsym{=}  \ottmv{X}$, and $\ottnt{U'}  \ottsym{=}  \star$.
          By inversion, $ \emptyset   \vdash  \ottsym{(}  w  \ottsym{:}   \star  \!\rightarrow\!  \star \Rightarrow  \unskip ^ { \ell' }  \! \star   \ottsym{)}  \ottsym{:}  \star$.
          By \rnp{T\_Cast}, $ \emptyset   \vdash  \ottsym{(}  w  \ottsym{:}   \star  \!\rightarrow\!  \star \Rightarrow  \unskip ^ { \ell' }  \!  \star \Rightarrow  \unskip ^ { \ell }  \! \star  \!\rightarrow\!  \star    \ottsym{)}  \ottsym{:}  \star  \!\rightarrow\!  \star$.
          By \rnp{T\_Cast}, $ \emptyset   \vdash  \ottsym{(}  w  \ottsym{:}   \star  \!\rightarrow\!  \star \Rightarrow  \unskip ^ { \ell' }  \!  \star \Rightarrow  \unskip ^ { \ell }  \!  \star  \!\rightarrow\!  \star \Rightarrow  \unskip ^ { \ell }  \! \ottmv{X_{{\mathrm{1}}}}  \!\rightarrow\!  \ottmv{X_{{\mathrm{2}}}}     \ottsym{)}  \ottsym{:}  \ottmv{X_{{\mathrm{1}}}}  \!\rightarrow\!  \ottmv{X_{{\mathrm{2}}}}$.
          By Lemma \ref{lem:type_param_substitution},
          $[  \ottmv{X}  :=  \ottmv{X_{{\mathrm{1}}}}  \!\rightarrow\!  \ottmv{X_{{\mathrm{2}}}}  ]  \ottsym{(}   \emptyset   \ottsym{)}  \vdash  [  \ottmv{X}  :=  \ottmv{X_{{\mathrm{1}}}}  \!\rightarrow\!  \ottmv{X_{{\mathrm{2}}}}  ]  \ottsym{(}  w  \ottsym{:}   \star  \!\rightarrow\!  \star \Rightarrow  \unskip ^ { \ell' }  \!  \star \Rightarrow  \unskip ^ { \ell }  \!  \star  \!\rightarrow\!  \star \Rightarrow  \unskip ^ { \ell }  \! \ottmv{X_{{\mathrm{1}}}}  \!\rightarrow\!  \ottmv{X_{{\mathrm{2}}}}     \ottsym{)}  \ottsym{:}  [  \ottmv{X}  :=  \ottmv{X_{{\mathrm{1}}}}  \!\rightarrow\!  \ottmv{X_{{\mathrm{2}}}}  ]  \ottsym{(}  \ottmv{X_{{\mathrm{1}}}}  \!\rightarrow\!  \ottmv{X_{{\mathrm{2}}}}  \ottsym{)}$.
          Finally, $ \emptyset   \vdash  [  \ottmv{X}  :=  \ottmv{X_{{\mathrm{1}}}}  \!\rightarrow\!  \ottmv{X_{{\mathrm{2}}}}  ]  \ottsym{(}  w  \ottsym{:}   \star  \!\rightarrow\!  \star \Rightarrow  \unskip ^ { \ell' }  \!  \star \Rightarrow  \unskip ^ { \ell }  \!  \star  \!\rightarrow\!  \star \Rightarrow  \unskip ^ { \ell }  \! \ottmv{X_{{\mathrm{1}}}}  \!\rightarrow\!  \ottmv{X_{{\mathrm{2}}}}     \ottsym{)}  \ottsym{:}  [  \ottmv{X}  :=  \ottmv{X_{{\mathrm{1}}}}  \!\rightarrow\!  \ottmv{X_{{\mathrm{2}}}}  ]  \ottsym{(}  \ottmv{X}  \ottsym{)}$.

          \otherwise Cannot happen.
        \end{caseanalysis}

        \case{\rnp{T\_LetP}}
        We are given $ \emptyset   \vdash   \textsf{\textup{let}\relax} \,  \ottmv{x}  =   \Lambda    \overrightarrow{ \ottmv{X_{\ottmv{i}}} }  .\,  w_{{\mathrm{1}}}   \textsf{\textup{ in }\relax}  \ottnt{f_{{\mathrm{2}}}}   \ottsym{:}  \ottnt{U}$
        for some $\ottmv{x}$, $ \overrightarrow{ \ottmv{X_{\ottmv{i}}} } $, $w_{{\mathrm{1}}}$, and $\ottnt{f_{{\mathrm{2}}}}$.
        By case analysis on the reduction rules applicable to $ \textsf{\textup{let}\relax} \,  \ottmv{x}  =   \Lambda    \overrightarrow{ \ottmv{X_{\ottmv{i}}} }  .\,  w_{{\mathrm{1}}}   \textsf{\textup{ in }\relax}  \ottnt{f_{{\mathrm{2}}}} $.

        \begin{caseanalysis}
          \case {\rnp{R\_LetP}}
          \leavevmode\\
          We are given $ \textsf{\textup{let}\relax} \,  \ottmv{x}  =   \Lambda    \overrightarrow{ \ottmv{X_{\ottmv{i}}} }  .\,  w_{{\mathrm{1}}}   \textsf{\textup{ in }\relax}  \ottnt{f_{{\mathrm{2}}}}  \,  \xmapsto{ \mathmakebox[0.4em]{} [  ] \mathmakebox[0.3em]{} }  \, \ottnt{f_{{\mathrm{2}}}}  [  \ottmv{x}  \ottsym{:=}   \Lambda    \overrightarrow{ \ottmv{X_{\ottmv{i}}} }  .\,  w_{{\mathrm{1}}}   ]$
          where $S  \ottsym{=}  [  ]$.
          By inversion, we have
          $ \emptyset   \vdash  w_{{\mathrm{1}}}  \ottsym{:}  \ottnt{U_{{\mathrm{1}}}}$, $  \emptyset  ,   \ottmv{x}  :  \forall \,  \overrightarrow{ \ottmv{X_{\ottmv{i}}} }   \ottsym{.}  \ottnt{U_{{\mathrm{1}}}}    \vdash  \ottnt{f_{{\mathrm{2}}}}  \ottsym{:}  \ottnt{U}$, and $ \overrightarrow{ \ottmv{X_{\ottmv{i}}} }   \cap  \textit{ftv} \, \ottsym{(}  \Gamma  \ottsym{)}  \ottsym{=}   \emptyset $.
          By Lemma \ref{lem:term_var_substitution},
          we have $ \emptyset   \vdash  \ottnt{f_{{\mathrm{2}}}}  [  \ottmv{x}  \ottsym{:=}   \Lambda    \overrightarrow{ \ottmv{X_{\ottmv{i}}} }  .\,  w_{{\mathrm{1}}}   ]  \ottsym{:}  \ottnt{U}$.
          Finally, $ \emptyset   \vdash  [  ]  \ottsym{(}  \ottnt{f_{{\mathrm{2}}}}  [  \ottmv{x}  \ottsym{:=}   \Lambda    \overrightarrow{ \ottmv{X_{\ottmv{i}}} }  .\,  w_{{\mathrm{1}}}   ]  \ottsym{)}  \ottsym{:}  [  ]  \ottsym{(}  \ottnt{U}  \ottsym{)}$.

          \otherwise Cannot happen.
        \end{caseanalysis}

        \otherwise
        Cannot happen.
      \end{caseanalysis}

    \item
      By case analysis on the evaluation rule applied to $\ottnt{f}$.

      \begin{caseanalysis}
        \case{\rnp{E\_Step}}
        We are given $\ottnt{E}  [  \ottnt{f_{{\mathrm{1}}}}  ] \,  \xmapsto{ \mathmakebox[0.4em]{} S \mathmakebox[0.3em]{} }  \, S  \ottsym{(}  \ottnt{E}  [  \ottnt{f'_{{\mathrm{1}}}}  ]  \ottsym{)}$ and $\ottnt{f_{{\mathrm{1}}}} \,  \xrightarrow{ \mathmakebox[0.4em]{} S \mathmakebox[0.3em]{} }  \, \ottnt{f'_{{\mathrm{1}}}}$
        for some $S$, $\ottnt{f_{{\mathrm{1}}}}$, and $\ottnt{f'_{{\mathrm{1}}}}$
        where $\ottnt{f}  \ottsym{=}  \ottnt{E}  [  \ottnt{f_{{\mathrm{1}}}}  ]$ and $\ottnt{f'}  \ottsym{=}  S  \ottsym{(}  \ottnt{E}  [  \ottnt{f'_{{\mathrm{1}}}}  ]  \ottsym{)}$.
        By induction on the structure of $\ottnt{E}$.

        \begin{caseanalysis}
          \case{$\ottnt{E}  \ottsym{=}  \left[ \, \right]$}
          We are given $ \emptyset   \vdash  \ottnt{f_{{\mathrm{1}}}}  \ottsym{:}  \ottnt{U}$ and $\ottnt{f_{{\mathrm{1}}}} \,  \xmapsto{ \mathmakebox[0.4em]{} S \mathmakebox[0.3em]{} }  \, S  \ottsym{(}  \ottnt{f'_{{\mathrm{1}}}}  \ottsym{)}$.
          By Lemma \ref{lem:preservation}.1, $ \emptyset   \vdash  S  \ottsym{(}  \ottnt{f'_{{\mathrm{1}}}}  \ottsym{)}  \ottsym{:}  S  \ottsym{(}  \ottnt{U}  \ottsym{)}$.

          \case{$\ottnt{E}  \ottsym{=}  \mathit{op} \, \ottsym{(}  \ottnt{E'}  \ottsym{,}  \ottnt{f''}  \ottsym{)}$ for some $\ottnt{E'}$ and $\ottnt{f''}$}\ \\
          We are given $ \emptyset   \vdash  \mathit{op} \, \ottsym{(}  \ottnt{E'}  [  \ottnt{f_{{\mathrm{1}}}}  ]  \ottsym{,}  \ottnt{f''}  \ottsym{)}  \ottsym{:}  \ottnt{U}$ and
          $\mathit{op} \, \ottsym{(}  \ottnt{E'}  [  \ottnt{f_{{\mathrm{1}}}}  ]  \ottsym{,}  \ottnt{f''}  \ottsym{)} \,  \xmapsto{ \mathmakebox[0.4em]{} S \mathmakebox[0.3em]{} }  \, S  \ottsym{(}  \mathit{op} \, \ottsym{(}  \ottnt{E'}  [  \ottnt{f'_{{\mathrm{1}}}}  ]  \ottsym{,}  \ottnt{f''}  \ottsym{)}  \ottsym{)}$.
          By inversion, we have $ \mathit{ty} ( \mathit{op} )   \ottsym{=}  \iota_{{\mathrm{1}}}  \!\rightarrow\!  \iota_{{\mathrm{2}}}  \!\rightarrow\!  \iota$,
          $ \emptyset   \vdash  \ottnt{E'}  [  \ottnt{f_{{\mathrm{1}}}}  ]  \ottsym{:}  \iota_{{\mathrm{1}}}$ and
          $ \emptyset   \vdash  \ottnt{f''}  \ottsym{:}  \iota_{{\mathrm{2}}}$ for some $\iota_{{\mathrm{1}}}$, $\iota_{{\mathrm{2}}}$, and $\iota$
          where $\ottnt{U}  \ottsym{=}  \iota$.
          By \rnp{E\_Step}, $\ottnt{E'}  [  \ottnt{f_{{\mathrm{1}}}}  ] \,  \xmapsto{ \mathmakebox[0.4em]{} S \mathmakebox[0.3em]{} }  \, S  \ottsym{(}  \ottnt{E'}  [  \ottnt{f'_{{\mathrm{1}}}}  ]  \ottsym{)}$.
          By the IH, $ \emptyset   \vdash  S  \ottsym{(}  \ottnt{E'}  [  \ottnt{f'_{{\mathrm{1}}}}  ]  \ottsym{)}  \ottsym{:}  S  \ottsym{(}  \iota_{{\mathrm{1}}}  \ottsym{)}$.
          By Lemma \ref{lem:type_param_substitution}, $ \emptyset   \vdash  S  \ottsym{(}  \ottnt{f''}  \ottsym{)}  \ottsym{:}  S  \ottsym{(}  \iota_{{\mathrm{2}}}  \ottsym{)}$.
          By definition, $S  \ottsym{(}  \iota_{{\mathrm{1}}}  \ottsym{)}  \ottsym{=}  \iota_{{\mathrm{1}}}$, $S  \ottsym{(}  \iota_{{\mathrm{2}}}  \ottsym{)}  \ottsym{=}  \iota_{{\mathrm{2}}}$,
          and $S  \ottsym{(}  \iota  \ottsym{)}  \ottsym{=}  \iota$.
          By \rnp{T\_Op}, $ \emptyset   \vdash  \mathit{op} \, \ottsym{(}  S  \ottsym{(}  \ottnt{E'}  [  \ottnt{f'_{{\mathrm{1}}}}  ]  \ottsym{)}  \ottsym{,}  S  \ottsym{(}  \ottnt{f''}  \ottsym{)}  \ottsym{)}  \ottsym{:}  \iota$.
          By definition, $ \emptyset   \vdash  S  \ottsym{(}  \mathit{op} \, \ottsym{(}  \ottnt{E'}  [  \ottnt{f'_{{\mathrm{1}}}}  ]  \ottsym{,}  \ottnt{f''}  \ottsym{)}  \ottsym{)}  \ottsym{:}  S  \ottsym{(}  \iota  \ottsym{)}$.

          \case{$\ottnt{E}  \ottsym{=}  \mathit{op} \, \ottsym{(}  \ottnt{f''}  \ottsym{,}  \ottnt{E'}  \ottsym{)}$ for some $\ottnt{E'}$ and $\ottnt{f''}$}
          Similarly to the previous case.

          \case{$\ottnt{E}  \ottsym{=}  \ottnt{E'} \, \ottnt{f''}$ for some $\ottnt{E'}$ and $\ottnt{f''}$}
          We are given $ \emptyset   \vdash  \ottnt{E'}  [  \ottnt{f_{{\mathrm{1}}}}  ] \, \ottnt{f''}  \ottsym{:}  \ottnt{U}$ and
          $\ottnt{E'}  [  \ottnt{f_{{\mathrm{1}}}}  ] \, \ottnt{f''} \,  \xmapsto{ \mathmakebox[0.4em]{} S \mathmakebox[0.3em]{} }  \, S  \ottsym{(}  \ottnt{E'}  [  \ottnt{f'_{{\mathrm{1}}}}  ] \, \ottnt{f''}  \ottsym{)}$.
          By inversion, we have $ \emptyset   \vdash  \ottnt{E'}  [  \ottnt{f_{{\mathrm{1}}}}  ]  \ottsym{:}  \ottnt{U'}  \!\rightarrow\!  \ottnt{U}$ and
          $ \emptyset   \vdash  \ottnt{f''}  \ottsym{:}  \ottnt{U'}$ for some $\ottnt{U'}$.
          By \rnp{E\_Step}, $\ottnt{E'}  [  \ottnt{f_{{\mathrm{1}}}}  ] \,  \xmapsto{ \mathmakebox[0.4em]{} S \mathmakebox[0.3em]{} }  \, S  \ottsym{(}  \ottnt{E'}  [  \ottnt{f'_{{\mathrm{1}}}}  ]  \ottsym{)}$.
          By the IH, $ \emptyset   \vdash  S  \ottsym{(}  \ottnt{E'}  [  \ottnt{f'_{{\mathrm{1}}}}  ]  \ottsym{)}  \ottsym{:}  S  \ottsym{(}  \ottnt{U'}  \!\rightarrow\!  \ottnt{U}  \ottsym{)}$.
          By Lemma \ref{lem:type_param_substitution}, $ \emptyset   \vdash  S  \ottsym{(}  \ottnt{f''}  \ottsym{)}  \ottsym{:}  S  \ottsym{(}  \ottnt{U'}  \ottsym{)}$.
          By \rnp{T\_App}, $ \emptyset   \vdash  S  \ottsym{(}  \ottnt{E'}  [  \ottnt{f'_{{\mathrm{1}}}}  ] \, \ottnt{f''}  \ottsym{)}  \ottsym{:}  S  \ottsym{(}  \ottnt{U}  \ottsym{)}$.

          \case{$\ottnt{E}  \ottsym{=}  w \, \ottnt{E'}$ for some $\ottnt{E'}$ and $w$}\ \\
          We are given $ \emptyset   \vdash  w \, \ottnt{E'}  [  \ottnt{f_{{\mathrm{1}}}}  ]  \ottsym{:}  \ottnt{U}$ and
          $w \, \ottnt{E'}  [  \ottnt{f_{{\mathrm{1}}}}  ] \,  \xmapsto{ \mathmakebox[0.4em]{} S \mathmakebox[0.3em]{} }  \, S  \ottsym{(}  w \, \ottnt{E'}  [  \ottnt{f'_{{\mathrm{1}}}}  ]  \ottsym{)}$.
          By inversion, we have $ \emptyset   \vdash  w  \ottsym{:}  \ottnt{U'}  \!\rightarrow\!  \ottnt{U}$ and
          $ \emptyset   \vdash  \ottnt{E'}  [  \ottnt{f_{{\mathrm{1}}}}  ]  \ottsym{:}  \ottnt{U'}$ for some $\ottnt{U'}$.
          By \rnp{E\_Step}, $\ottnt{E'}  [  \ottnt{f_{{\mathrm{1}}}}  ] \,  \xmapsto{ \mathmakebox[0.4em]{} S \mathmakebox[0.3em]{} }  \, S  \ottsym{(}  \ottnt{E'}  [  \ottnt{f'_{{\mathrm{1}}}}  ]  \ottsym{)}$.
          By the IH, $ \emptyset   \vdash  S  \ottsym{(}  \ottnt{E'}  [  \ottnt{f'_{{\mathrm{1}}}}  ]  \ottsym{)}  \ottsym{:}  S  \ottsym{(}  \ottnt{U'}  \ottsym{)}$.
          By Lemma \ref{lem:type_param_substitution},
          $ \emptyset   \vdash  S  \ottsym{(}  w  \ottsym{)}  \ottsym{:}  S  \ottsym{(}  \ottnt{U'}  \!\rightarrow\!  \ottnt{U}  \ottsym{)}$.
          By \rnp{T\_App}, $ \emptyset   \vdash  S  \ottsym{(}  w \, \ottnt{E'}  [  \ottnt{f'_{{\mathrm{1}}}}  ]  \ottsym{)}  \ottsym{:}  S  \ottsym{(}  \ottnt{U}  \ottsym{)}$.

          \case{$\ottnt{E}  \ottsym{=}  \ottnt{E'}  \ottsym{:}  \ottnt{U'}  \Rightarrow   \unskip ^ { \ell }  \, \ottnt{U}$ for some $\ottnt{E'}$, $\ottnt{U'}$, and $\ell$}
          \leavevmode\\
          We are given $ \emptyset   \vdash  \ottsym{(}  \ottnt{E'}  [  \ottnt{f_{{\mathrm{1}}}}  ]  \ottsym{:}   \ottnt{U'} \Rightarrow  \unskip ^ { \ell }  \! \ottnt{U}   \ottsym{)}  \ottsym{:}  \ottnt{U}$ and
          $\ottnt{E'}  [  \ottnt{f_{{\mathrm{1}}}}  ]  \ottsym{:}   \ottnt{U'} \Rightarrow  \unskip ^ { \ell }  \! \ottnt{U}  \,  \xmapsto{ \mathmakebox[0.4em]{} S \mathmakebox[0.3em]{} }  \, S  \ottsym{(}  \ottnt{E'}  [  \ottnt{f'_{{\mathrm{1}}}}  ]  \ottsym{:}   \ottnt{U'} \Rightarrow  \unskip ^ { \ell }  \! \ottnt{U}   \ottsym{)}$.
          By inversion, we have $ \emptyset   \vdash  \ottnt{E'}  [  \ottnt{f_{{\mathrm{1}}}}  ]  \ottsym{:}  \ottnt{U'}$ and $\ottnt{U'}  \sim  \ottnt{U}$.
          By \rnp{E\_Step}, $\ottnt{E'}  [  \ottnt{f_{{\mathrm{1}}}}  ] \,  \xmapsto{ \mathmakebox[0.4em]{} S \mathmakebox[0.3em]{} }  \, S  \ottsym{(}  \ottnt{E'}  [  \ottnt{f'_{{\mathrm{1}}}}  ]  \ottsym{)}$.
          By the IH, $ \emptyset   \vdash  S  \ottsym{(}  \ottnt{E'}  [  \ottnt{f'_{{\mathrm{1}}}}  ]  \ottsym{)}  \ottsym{:}  S  \ottsym{(}  \ottnt{U'}  \ottsym{)}$.
          By Lemma \ref{lem:type_param_substitution_in_cosistency}, $S  \ottsym{(}  \ottnt{U'}  \ottsym{)}  \sim  S  \ottsym{(}  \ottnt{U}  \ottsym{)}$.
          By \rnp{T\_Cast}, $ \emptyset   \vdash  S  \ottsym{(}  \ottnt{E'}  [  \ottnt{f'_{{\mathrm{1}}}}  ]  \ottsym{:}   \ottnt{U'} \Rightarrow  \unskip ^ { \ell }  \! \ottnt{U}   \ottsym{)}  \ottsym{:}  S  \ottsym{(}  \ottnt{U}  \ottsym{)}$.
        \end{caseanalysis}

        \case{\rnp{E\_Abort}}
        We are given $\ottnt{f} \,  \xmapsto{ \mathmakebox[0.4em]{} S \mathmakebox[0.3em]{} }  \, \textsf{\textup{blame}\relax} \, \ell$ for some $\ell$
        where $\ottnt{f'}  \ottsym{=}  \textsf{\textup{blame}\relax} \, \ell$.
        Finish by \rnp{T\_Blame}. \qedhere
      \end{caseanalysis}
  \end{enumerate}
\end{proof}

\ifrestate
\thmTypeSafety*
\else
\begin{theoremA}[name=Type Safety,restate=thmTypeSafety] \label{thm:type_safety}
  If $ \emptyset   \vdash  \ottnt{f}  \ottsym{:}  \ottnt{U}$,
  then one of the following holds:
  \begin{itemize}
    \item $\ottnt{f} \,  \xmapsto{ \mathmakebox[0.4em]{} S \mathmakebox[0.3em]{} }\hspace{-0.4em}{}^\ast \hspace{0.2em}  \, \ottnt{r}$ for some $S$ and $\ottnt{r}$ such that $ \emptyset   \vdash  \ottnt{r}  \ottsym{:}  S  \ottsym{(}  \ottnt{U}  \ottsym{)}$; or
    \item $ \ottnt{f} \!  \Uparrow  $.
  \end{itemize}
\end{theoremA}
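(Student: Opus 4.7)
The plan is to derive Type Safety as a direct consequence of Progress (Lemma~\ref{lem:progress}) and Preservation (Lemma~\ref{lem:preservation}), following the standard Wright--Felleisen recipe, adapted to account for the type substitutions generated by DTI along the evaluation sequence.

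First, I would classify by whether $\ottnt{f}$ has an infinite evaluation sequence. If it does, the second disjunct $ \ottnt{f} \!  \Uparrow  $ holds by the definition of divergence, and we are done. Otherwise, every maximal evaluation sequence starting from $\ottnt{f}$ is finite; pick one such sequence $\ottnt{f}  \ottsym{=}  \ottnt{f_{{\mathrm{0}}}} \,  \xmapsto{ \mathmakebox[0.4em]{} S_{{\mathrm{1}}} \mathmakebox[0.3em]{} }  \, \ottnt{f_{{\mathrm{1}}}} \,  \xmapsto{ \mathmakebox[0.4em]{} S_{{\mathrm{2}}} \mathmakebox[0.3em]{} }  \, \cdots \,  \xmapsto{ \mathmakebox[0.4em]{} S_{\ottmv{n}} \mathmakebox[0.3em]{} }  \, \ottnt{f_{\ottmv{n}}}$ where $\ottnt{f_{\ottmv{n}}}$ cannot be evaluated further. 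Let $S  \ottsym{=}    S_{\ottmv{n}}  \circ  S_{{\ottmv{n}-1}}   \circ   \cdots    \circ  S_{{\mathrm{1}}} $, so that $\ottnt{f} \,  \xmapsto{ \mathmakebox[0.4em]{} S \mathmakebox[0.3em]{} }\hspace{-0.4em}{}^\ast \hspace{0.2em}  \, \ottnt{f_{\ottmv{n}}}$.

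Next, I would establish by induction on $n$ that $ \emptyset   \vdash  \ottnt{f_{\ottmv{n}}}  \ottsym{:}  S  \ottsym{(}  \ottnt{U}  \ottsym{)}$. The base case $n = 0$ is the assumption $ \emptyset   \vdash  \ottnt{f}  \ottsym{:}  \ottnt{U}$ (note $S  \ottsym{=}  [  ]$). For the inductive step, the IH gives $ \emptyset   \vdash  \ottnt{f_{\ottmv{k}}}  \ottsym{:}  \ottsym{(}   S_{\ottmv{k}}  \circ   \cdots    \circ  S_{{\mathrm{1}}}   \ottsym{)}  \ottsym{(}  \ottnt{U}  \ottsym{)}$, and Preservation (item~2) applied to $\ottnt{f_{\ottmv{k}}} \,  \xmapsto{ \mathmakebox[0.4em]{} S_{\ottmv{k+1}} \mathmakebox[0.3em]{} }  \, \ottnt{f_{\ottmv{k+1}}}$ yields $ \emptyset   \vdash  \ottnt{f_{\ottmv{k+1}}}  \ottsym{:}  S_{\ottmv{k+1}}  \ottsym{(}  \ottsym{(}   S_{\ottmv{k}}  \circ   \cdots    \circ  S_{{\mathrm{1}}}   \ottsym{)}  \ottsym{(}  \ottnt{U}  \ottsym{)}  \ottsym{)}$, which is exactly $ \emptyset   \vdash  \ottnt{f_{\ottmv{k+1}}}  \ottsym{:}  \ottsym{(}   S_{\ottmv{k+1}}  \circ   \cdots    \circ  S_{{\mathrm{1}}}   \ottsym{)}  \ottsym{(}  \ottnt{U}  \ottsym{)}$.

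Finally, I apply Progress to $\ottnt{f_{\ottmv{n}}}$: since $ \emptyset   \vdash  \ottnt{f_{\ottmv{n}}}  \ottsym{:}  S  \ottsym{(}  \ottnt{U}  \ottsym{)}$, the lemma gives three possibilities, but the first (taking another evaluation step) contradicts maximality of the sequence. Hence $\ottnt{f_{\ottmv{n}}}$ is either a value $w$ or blame $\textsf{\textup{blame}\relax} \, \ell$, i.e., a result $\ottnt{r}$ with $ \emptyset   \vdash  \ottnt{r}  \ottsym{:}  S  \ottsym{(}  \ottnt{U}  \ottsym{)}$, establishing the first disjunct. The main potential obstacle is a bookkeeping one rather than a conceptual one: one must be careful about how type substitutions compose and how they interact with the already-applied substitution on $\ottnt{E}  [  \ottnt{f'}  ]$ in \rnp{E\_Step}; Preservation's second item is formulated precisely so that $S$ is applied once per step, making the composition go through cleanly.
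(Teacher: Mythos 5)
Your proposal is correct and is exactly the argument the paper intends: the paper's proof of Theorem~\ref{thm:type_safety} is simply ``By Lemmas~\ref{lem:progress} and~\ref{lem:preservation},'' i.e., the standard progress-and-preservation recipe you spell out, with the composed substitution $S =   S_{\ottmv{n}}  \circ   \cdots    \circ  S_{{\mathrm{1}}} $ threaded through by Preservation's second item precisely as you describe. Your elaboration of the bookkeeping (one substitution application per step, matching the definition of $ \xmapsto{ \mathmakebox[0.4em]{} S \mathmakebox[0.3em]{} }\hspace{-0.4em}{}^\ast \hspace{0.2em} $) is sound and fills in the details the paper leaves implicit.
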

\fi 

\begin{proof}
  By Lemmas \ref{lem:progress} and \ref{lem:preservation}.
\end{proof}

\subsection{Conservative Extension}

\ifrestate
\thmConservativeExtension*
\else
\begin{theoremA}[name=Conservative Extension,restate=thmConservativeExtension] \label{thm:conservative_extension}
  Suppose $\textit{ftv} \, \ottsym{(}  \ottnt{f}  \ottsym{)}  \ottsym{=}   \emptyset $ and $\ottnt{f}$ does not contain $ \nu $.

  \begin{enumerate}
    \item $\ottnt{f} \,  \xmapsto{ \mathmakebox[0.4em]{} [  ] \mathmakebox[0.3em]{} }\hspace{-0.4em}{}^\ast \hspace{0.2em}  \, \ottnt{r}$ if and only if $\ottnt{f} \, \longmapsto_{\textsf{\textup{B}\relax}\relax}^\ast \, \ottnt{r}$.
    \item $ \ottnt{f} \!  \Uparrow  $ if and only if $ \ottnt{f} \!  \Uparrow _{\textsf{\textup{B}\relax}\relax}  $.
  \end{enumerate}
\end{theoremA}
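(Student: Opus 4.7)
My plan is to show that under the hypotheses $\textit{ftv}(\ottnt{f}) = \emptyset$ and ``no occurrences of $ \nu $ in $\ottnt{f}$'' the DTI-specific reduction rules \rnp{R\_InstBase} and \rnp{R\_InstArrow} are never applicable, and the substitution in \rnp{R\_LetP} never generates fresh type variables; consequently every $\lambdaRTI$ step matches a $\lambdaBC$ step, and vice versa. The two directions of items (1) and (2) then follow by straightforward induction on the length of the reduction sequence (resp.\ by diagonalization for divergence).

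First I would establish an invariant lemma: if $\textit{ftv}(\ottnt{f}) = \emptyset$, $\ottnt{f}$ contains no $ \nu $, and $\ottnt{f} \xmapsto{\ [\,]\ } \ottnt{f'}$, then $\textit{ftv}(\ottnt{f'}) = \emptyset$ and $\ottnt{f'}$ also contains no $ \nu $. This invariant is the technical core. The key cases are:
\begin{itemize}
\item \rnp{R\_InstBase} and \rnp{R\_InstArrow}: these would require a cast whose target is a type variable $\ottmv{X}$, which contradicts $\textit{ftv}(\ottnt{f}) = \emptyset$, so they cannot fire.
\item \rnp{R\_LetP}: by inspecting the definition of $\ottnt{f}  [  \ottmv{x}  \ottsym{:=}   \Lambda    \overrightarrow{ \ottmv{X_{\ottmv{i}}} }  .\,  w   ]$, fresh type variables are introduced only when some $\mathbbsl{T}_i =  \nu $; but the no-$ \nu $ assumption rules this out. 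Thus the substitution reduces to ordinary capture-avoiding term and type substitution, and no new free type variables appear.
\item All other reduction rules are syntactically identical between $\lambdaRTI$ and $\lambdaBC$ and clearly preserve both invariants.
\end{itemize}

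Second, I would establish a one-step correspondence: under the invariant, $\ottnt{f} \xmapsto{\ S\ } \ottnt{f'}$ in $\lambdaRTI$ iff $S = [\,]$ and $\ottnt{f} \longmapsto_{\textsf{B}} \ottnt{f'}$ in $\lambdaBC$. The forward direction uses the fact that \rnp{R\_InstBase}/\rnp{R\_InstArrow} are excluded, so every redex fires by a rule shared with $\lambdaBC$ producing $S = [\,]$, after which \rnp{E\_Step} applies $[\,]$ trivially. The backward direction uses that every $\lambdaBC$ rule is literally a $\lambdaRTI$ rule yielding $S = [\,]$. Item~(1) then follows by induction on the length of the evaluation sequence, using the invariant at each step to preserve the hypotheses, and item~(2) follows from (1): an infinite $\lambdaRTI$ sequence corresponds step-for-step to an infinite $\lambdaBC$ sequence, and conversely.

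The main obstacle I expect is the careful verification of the invariant for \rnp{R\_LetP}. One must inspect the clause of the substitution operation that handles $\ottmv{x}  [   \overrightarrow{ \mathbbsl{T}_{\ottmv{i}} }   ]$ and confirm that the ``fresh type variable $\ottmv{Y_{\ottmv{i}}}$ for $\mathbbsl{T}_i =  \nu $'' branch is dead under the hypothesis, and moreover that the resulting substituted term inherits the no-free-type-variable and no-$ \nu $ properties from its constituents (a routine structural induction on the definition of substitution). Once this is in hand, everything else is essentially bookkeeping over the reduction relation.
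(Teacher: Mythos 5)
Your proposal is correct and is the natural elaboration of the argument the paper leaves implicit (the paper's own proof is just ``Easy''): the key observations---that \rnp{R\_InstBase}/\rnp{R\_InstArrow} cannot fire without a free type variable, that the $ \nu $-branch of the let-substitution is dead, and that both properties are preserved step by step---are exactly what is needed, and the rest is bookkeeping as you say. No gaps.
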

\fi

\begin{proof}
Easy.
\end{proof}

\subsection{Divergence}
In this section, we do not care about blame labels for ease of proof.
All blame labels are written with $\ell$.

\begin{definitionA}[Auxiliary Types]
  \begin{align*}
    U^X & ::= \ottmv{X} \mid U^X  \!\rightarrow\!  U^X \\
    U^\iota & ::= \iota \mid U^\iota  \!\rightarrow\!  U^\iota \\
  \end{align*}
\end{definitionA}

\begin{definitionA}[Order of Type] \leavevmode
  \begin{itemize}
    \item $\textit{ord} \, \ottsym{(}  \star  \ottsym{)} = 0$
    \item $\textit{ord} \, \ottsym{(}  \ottmv{X}  \ottsym{)} = 0$
    \item $\textit{ord} \, \ottsym{(}  \iota  \ottsym{)} = 0$
    \item $\textit{ord} \, \ottsym{(}  \ottnt{U_{{\mathrm{1}}}}  \!\rightarrow\!  \ottnt{U_{{\mathrm{2}}}}  \ottsym{)} = \textit{ord} \, \ottsym{(}  \ottnt{U_{{\mathrm{1}}}}  \ottsym{)} + 1$
  \end{itemize}
\end{definitionA}

\begin{lemmaA} \label{lem:omega_translate_to_normal_form}
  If
  \begin{itemize}
    \item $\ottnt{f_{{\mathrm{1}}}}  \ottsym{=}  \ottsym{(}   \lambda  \ottmv{x} \!:\!  X  .\,  \ottsym{(}  \ottmv{x}  \ottsym{:}   X \Rightarrow  \unskip ^ { \ell }  \!  \star \Rightarrow  \unskip ^ { \ell }  \! \star  \!\rightarrow\!  \star    \ottsym{)}  \, \ottsym{(}  \ottmv{x}  \ottsym{:}   X \Rightarrow  \unskip ^ { \ell }  \! \star   \ottsym{)}  \ottsym{)}  \ottsym{:}   X  \!\rightarrow\!  \star \Rightarrow  \unskip ^ { \ell }  \! \star  \!\rightarrow\!  \star $, and
    \item $\ottnt{f_{{\mathrm{2}}}}  \ottsym{=}  \ottsym{(}   \lambda  \ottmv{x} \!:\!  \star  .\,  \ottsym{(}  \ottmv{x}  \ottsym{:}   \star \Rightarrow  \unskip ^ { \ell }  \! \star  \!\rightarrow\!  \star   \ottsym{)}  \, \ottmv{x}  \ottsym{)}  \ottsym{:}   \star  \!\rightarrow\!  \star \Rightarrow  \unskip ^ { \ell }  \! \star $,
  \end{itemize}
  then there exist $\ottnt{f'_{{\mathrm{1}}}}$ and $\ottnt{f'_{{\mathrm{2}}}}$ such that
  \begin{itemize}
    \item $\ottnt{f'_{{\mathrm{1}}}}  \ottsym{=}  \ottsym{(}   \lambda  \ottmv{x} \!:\!  \star  .\,  \ottsym{(}  \ottmv{x}  \ottsym{:}   \star \Rightarrow  \unskip ^ { \ell }  \! \star  \!\rightarrow\!  \star   \ottsym{)}  \, \ottmv{x}  \ottsym{)}  \ottsym{:}   \star  \!\rightarrow\!  \star \Rightarrow  \unskip ^ { \ell }  \!  X_{{\mathrm{1}}}  \!\rightarrow\!  X_{{\mathrm{2}}} \Rightarrow  \unskip ^ { \ell }  \! \star  \!\rightarrow\!  \star  $,
    \item $\ottnt{f'_{{\mathrm{2}}}}  \ottsym{=}  \ottsym{(}   \lambda  \ottmv{x} \!:\!  \star  .\,  \ottsym{(}  \ottmv{x}  \ottsym{:}   \star \Rightarrow  \unskip ^ { \ell }  \! \star  \!\rightarrow\!  \star   \ottsym{)}  \, \ottmv{x}  \ottsym{)}  \ottsym{:}   \star  \!\rightarrow\!  \star \Rightarrow  \unskip ^ { \ell }  \!  X_{{\mathrm{1}}}  \!\rightarrow\!  X_{{\mathrm{2}}} \Rightarrow  \unskip ^ { \ell }  \!  \star  \!\rightarrow\!  \star \Rightarrow  \unskip ^ { \ell }  \! \star   $, and
    \item $\ottnt{f_{{\mathrm{1}}}} \, \ottnt{f_{{\mathrm{2}}}} \,  \xmapsto{ \mathmakebox[0.4em]{} [  X  :=  X_{{\mathrm{1}}}  \!\rightarrow\!  X_{{\mathrm{2}}}  ] \mathmakebox[0.3em]{} }\hspace{-0.4em}{}^\ast \hspace{0.2em}  \, \ottsym{(}  \ottnt{f'_{{\mathrm{1}}}} \, \ottnt{f'_{{\mathrm{2}}}}  \ottsym{)}  \ottsym{:}   \star \Rightarrow  \unskip ^ { \ell }  \! \star $.
  \end{itemize}
\end{lemmaA}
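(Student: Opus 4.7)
The plan is to carry out an explicit reduction trace, tracking at each step which rule applies and what the resulting term and type substitution look like. Since $\ottnt{f_{{\mathrm{1}}}}$ is of the form $w_1 : X  \!\rightarrow\!  \star \Rightarrow^\ell \star  \!\rightarrow\!  \star$ (a wrapped function value) and $\ottnt{f_{{\mathrm{2}}}}$ is of the form $w_2 : \star  \!\rightarrow\!  \star \Rightarrow^\ell \star$ (an injected function value), the application $\ottnt{f_{{\mathrm{1}}}}\ottnt{f_{{\mathrm{2}}}}$ is a redex for \textsc{R\_AppCast}. First I would apply \textsc{R\_AppCast} to obtain
\[
  ((\lambda x{:}X.\,\ldots)\,(\ottnt{f_{{\mathrm{2}}}} : \star \Rightarrow^{\bar\ell} X)) : \star \Rightarrow^\ell \star,
\]
with empty substitution. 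The inner argument $\ottnt{f_{{\mathrm{2}}}} : \star \Rightarrow^\ell X$ unfolds to $w_2 : \star  \!\rightarrow\!  \star \Rightarrow^\ell \star \Rightarrow^\ell X$, which matches the left-hand side of \textsc{R\_InstArrow}; applying it yields $w_2 : \star  \!\rightarrow\!  \star \Rightarrow^\ell \star \Rightarrow^\ell \star  \!\rightarrow\!  \star \Rightarrow^\ell X_1  \!\rightarrow\!  X_2$ together with the substitution $[X := X_1  \!\rightarrow\!  X_2]$, which \textsc{E\_Step} propagates to the whole term (so the abstraction's annotation $X$ and the internal casts through $X$ also become $X_1  \!\rightarrow\!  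X_2$).

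Next I would simplify the argument. The subsequence $\star  \!\rightarrow\!  \star \Rightarrow^\ell \star \Rightarrow^\ell \star  \!\rightarrow\!  \star$ is an injection-projection pair with matching ground types, so \textsc{R\_Succeed} collapses it, leaving the argument as the wrapped value $w_2 : \star  \!\rightarrow\!  \star \Rightarrow^\ell X_1  \!\rightarrow\!  X_2$. Now \textsc{R\_Beta} applies: substituting this value for $x$ in the body
$(x : X_1  \!\rightarrow\!  X_2 \Rightarrow^\ell \star \Rightarrow^\ell \star  \!\rightarrow\!  \star)\,(x : X_1  \!\rightarrow\!  X_2 \Rightarrow^\ell \star)$ produces the application of two subterms, one destined to become $\ottnt{f'_{{\mathrm{1}}}}$ and the other $\ottnt{f'_{{\mathrm{2}}}}$.

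Finally, each argument and function position needs further simplification so that the shapes match the claimed $\ottnt{f'_{{\mathrm{1}}}}$ and $\ottnt{f'_{{\mathrm{2}}}}$. For the function position, the subterm $(w_2 : \star  \!\rightarrow\!  \star \Rightarrow^\ell X_1  \!\rightarrow\!  X_2) : X_1  \!\rightarrow\!  X_2 \Rightarrow^\ell \star \Rightarrow^\ell \star  \!\rightarrow\!  \star$ requires an \textsc{R\_Ground} step (since $X_1  \!\rightarrow\!  X_2$ is not a ground type and $X_1  \!\rightarrow\!  X_2 \sim \star  \!\rightarrow\!  \star$) to insert an intermediate $\star  \!\rightarrow\!  \star$, after which \textsc{R\_Succeed} collapses the middle injection-projection pair, giving precisely $\ottnt{f'_{{\mathrm{1}}}} = w_2 : \star  \!\rightarrow\!  \star \Rightarrow^\ell X_1  \!\rightarrow\!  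X_2 \Rightarrow^\ell \star  \!\rightarrow\!  \star$. For the argument position, an \textsc{R\_Ground} step on the cast from $X_1  \!\rightarrow\!  X_2$ to $\star$ produces $\ottnt{f'_{{\mathrm{2}}}}$ directly. Composing all steps gives the claimed $\ottnt{f_{{\mathrm{1}}}}\ottnt{f_{{\mathrm{2}}}} \,  \xmapsto{ \mathmakebox[0.4em]{} [  X  :=  X_{{\mathrm{1}}}  \!\rightarrow\!  X_{{\mathrm{2}}}  ] \mathmakebox[0.3em]{} }\hspace{-0.4em}{}^\ast \hspace{0.2em}  \, (\ottnt{f'_{{\mathrm{1}}}}\ottnt{f'_{{\mathrm{2}}}}) : \star \Rightarrow^\ell \star$ with the single nonempty substitution produced by the \textsc{R\_InstArrow} step.

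The main bookkeeping obstacle is ensuring the resulting casts sequences are written in the exact normalized form stated in the lemma; in particular, the interplay between where \textsc{R\_Ground} inserts a $\star  \!\rightarrow\!  \star$ and where \textsc{R\_Succeed} then cancels matching pairs must be traced carefully so that the final casts align with $\ottnt{f'_{{\mathrm{1}}}}$ and $\ottnt{f'_{{\mathrm{2}}}}$ syntactically rather than only up to semantic equivalence. The secondary subtlety is that the substitution $[X := X_1  \!\rightarrow\!  X_2]$ must be applied to every occurrence of $X$ in the enclosing evaluation context (including the lambda's type annotation and the internal casts), as prescribed by \textsc{E\_Step}; checking this uniformly over the sub-expressions is routine but must be done to justify that the post-beta body has the expected shape.
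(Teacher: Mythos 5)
Your proposal is correct and follows essentially the same route as the paper's proof: an explicit reduction trace applying \rnp{R\_AppCast}, \rnp{R\_InstArrow}, \rnp{R\_Succeed}, \rnp{R\_Beta}, then \rnp{R\_Ground}/\rnp{R\_Succeed} on the function position and \rnp{R\_Ground} on the argument position, with the single nonempty substitution $[  X  :=  X_{{\mathrm{1}}}  \!\rightarrow\!  X_{{\mathrm{2}}}  ]$ arising from the \rnp{R\_InstArrow} step and propagated by \rnp{E\_Step}. The two bookkeeping points you flag (syntactic alignment of the final cast chains, and uniform application of the substitution to the evaluation context) are exactly the ones the paper's step-by-step trace discharges.
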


\begin{proof}
  \leavevmode
  \begin{enumerate}
    \item
      We are given,
      \begin{itemize}
        \item $\ottnt{f_{{\mathrm{1}}}}  \ottsym{=}  \ottsym{(}   \lambda  \ottmv{x} \!:\!  X  .\,  \ottsym{(}  \ottmv{x}  \ottsym{:}   X \Rightarrow  \unskip ^ { \ell }  \!  \star \Rightarrow  \unskip ^ { \ell }  \! \star  \!\rightarrow\!  \star    \ottsym{)}  \, \ottsym{(}  \ottmv{x}  \ottsym{:}   X \Rightarrow  \unskip ^ { \ell }  \! \star   \ottsym{)}  \ottsym{)}  \ottsym{:}   X  \!\rightarrow\!  \star \Rightarrow  \unskip ^ { \ell }  \! \star  \!\rightarrow\!  \star $ and
        \item $\ottnt{f_{{\mathrm{2}}}}  \ottsym{=}  \ottsym{(}   \lambda  \ottmv{x} \!:\!  \star  .\,  \ottsym{(}  \ottmv{x}  \ottsym{:}   \star \Rightarrow  \unskip ^ { \ell }  \! \star  \!\rightarrow\!  \star   \ottsym{)}  \, \ottmv{x}  \ottsym{)}  \ottsym{:}   \star  \!\rightarrow\!  \star \Rightarrow  \unskip ^ { \ell }  \! \star $.
      \end{itemize}
    \item
      By \rnp{R\_AppCast} and \rnp{E\_Step},
      there exist $\ottnt{f_{{\mathrm{11}}}}$ and $\ottnt{f_{{\mathrm{12}}}}$ such that
      \begin{itemize}
        \item $\ottnt{f_{{\mathrm{11}}}}  \ottsym{=}   \lambda  \ottmv{x} \!:\!  X  .\,  \ottsym{(}  \ottmv{x}  \ottsym{:}   X \Rightarrow  \unskip ^ { \ell }  \!  \star \Rightarrow  \unskip ^ { \ell }  \! \star  \!\rightarrow\!  \star    \ottsym{)}  \, \ottsym{(}  \ottmv{x}  \ottsym{:}   X \Rightarrow  \unskip ^ { \ell }  \! \star   \ottsym{)}$,
        \item $\ottnt{f_{{\mathrm{12}}}}  \ottsym{=}  \ottsym{(}   \lambda  \ottmv{x} \!:\!  \star  .\,  \ottsym{(}  \ottmv{x}  \ottsym{:}   \star \Rightarrow  \unskip ^ { \ell }  \! \star  \!\rightarrow\!  \star   \ottsym{)}  \, \ottmv{x}  \ottsym{)}  \ottsym{:}   \star  \!\rightarrow\!  \star \Rightarrow  \unskip ^ { \ell }  \!  \star \Rightarrow  \unskip ^ { \ell }  \! X  $, and
        \item $\ottnt{f_{{\mathrm{1}}}} \, \ottnt{f_{{\mathrm{2}}}} \,  \xmapsto{ \mathmakebox[0.4em]{} [  ] \mathmakebox[0.3em]{} }  \, \ottsym{(}  \ottnt{f_{{\mathrm{11}}}} \, \ottnt{f_{{\mathrm{12}}}}  \ottsym{)}  \ottsym{:}   \star \Rightarrow  \unskip ^ { \ell }  \! \star $
      \end{itemize}
    \item
      By \rnp{R\_InstArrow} and \rnp{E\_Step},
      there exist $\ottnt{f_{{\mathrm{21}}}}$ and $\ottnt{f_{{\mathrm{22}}}}$ such that
      \begin{itemize}
        \item $\ottnt{f_{{\mathrm{21}}}}  \ottsym{=}   \lambda  \ottmv{x} \!:\!  X_{{\mathrm{1}}}  \!\rightarrow\!  X_{{\mathrm{2}}}  .\,  \ottsym{(}  \ottmv{x}  \ottsym{:}   X_{{\mathrm{1}}}  \!\rightarrow\!  X_{{\mathrm{2}}} \Rightarrow  \unskip ^ { \ell }  \!  \star \Rightarrow  \unskip ^ { \ell }  \! \star  \!\rightarrow\!  \star    \ottsym{)}  \, \ottsym{(}  \ottmv{x}  \ottsym{:}   X_{{\mathrm{1}}}  \!\rightarrow\!  X_{{\mathrm{2}}} \Rightarrow  \unskip ^ { \ell }  \! \star   \ottsym{)}$,
        \item $\ottnt{f_{{\mathrm{22}}}}  \ottsym{=}  \ottsym{(}   \lambda  \ottmv{x} \!:\!  \star  .\,  \ottsym{(}  \ottmv{x}  \ottsym{:}   \star \Rightarrow  \unskip ^ { \ell }  \! \star  \!\rightarrow\!  \star   \ottsym{)}  \, \ottmv{x}  \ottsym{)}  \ottsym{:}   \star  \!\rightarrow\!  \star \Rightarrow  \unskip ^ { \ell }  \!  \star \Rightarrow  \unskip ^ { \ell }  \!  \star  \!\rightarrow\!  \star \Rightarrow  \unskip ^ { \ell }  \! X_{{\mathrm{1}}}  \!\rightarrow\!  X_{{\mathrm{2}}}   $, and
        \item $\ottsym{(}  \ottnt{f_{{\mathrm{11}}}} \, \ottnt{f_{{\mathrm{12}}}}  \ottsym{)}  \ottsym{:}   \star \Rightarrow  \unskip ^ { \ell }  \! \star  \,  \xmapsto{ \mathmakebox[0.4em]{} [  X  :=  X_{{\mathrm{1}}}  \!\rightarrow\!  X_{{\mathrm{2}}}  ] \mathmakebox[0.3em]{} }  \, \ottsym{(}  \ottnt{f_{{\mathrm{21}}}} \, \ottnt{f_{{\mathrm{22}}}}  \ottsym{)}  \ottsym{:}   \star \Rightarrow  \unskip ^ { \ell }  \! \star $
      \end{itemize}
    \item
      By \rnp{R\_Succeed} and \rnp{E\_Step},
      there exist $\ottnt{f_{{\mathrm{31}}}}$ and $\ottnt{f_{{\mathrm{32}}}}$ such that
      \begin{itemize}
        \item $\ottnt{f_{{\mathrm{31}}}}  \ottsym{=}   \lambda  \ottmv{x} \!:\!  X_{{\mathrm{1}}}  \!\rightarrow\!  X_{{\mathrm{2}}}  .\,  \ottsym{(}  \ottmv{x}  \ottsym{:}   X_{{\mathrm{1}}}  \!\rightarrow\!  X_{{\mathrm{2}}} \Rightarrow  \unskip ^ { \ell }  \!  \star \Rightarrow  \unskip ^ { \ell }  \! \star  \!\rightarrow\!  \star    \ottsym{)}  \, \ottsym{(}  \ottmv{x}  \ottsym{:}   X_{{\mathrm{1}}}  \!\rightarrow\!  X_{{\mathrm{2}}} \Rightarrow  \unskip ^ { \ell }  \! \star   \ottsym{)}$,
        \item $\ottnt{f_{{\mathrm{32}}}}  \ottsym{=}  \ottsym{(}   \lambda  \ottmv{x} \!:\!  \star  .\,  \ottsym{(}  \ottmv{x}  \ottsym{:}   \star \Rightarrow  \unskip ^ { \ell }  \! \star  \!\rightarrow\!  \star   \ottsym{)}  \, \ottmv{x}  \ottsym{)}  \ottsym{:}   \star  \!\rightarrow\!  \star \Rightarrow  \unskip ^ { \ell }  \! X_{{\mathrm{1}}}  \!\rightarrow\!  X_{{\mathrm{2}}} $, and
        \item $\ottsym{(}  \ottnt{f_{{\mathrm{21}}}} \, \ottnt{f_{{\mathrm{22}}}}  \ottsym{)}  \ottsym{:}   \star \Rightarrow  \unskip ^ { \ell }  \! \star  \,  \xmapsto{ \mathmakebox[0.4em]{} [  ] \mathmakebox[0.3em]{} }  \, \ottsym{(}  \ottnt{f_{{\mathrm{31}}}} \, \ottnt{f_{{\mathrm{32}}}}  \ottsym{)}  \ottsym{:}   \star \Rightarrow  \unskip ^ { \ell }  \! \star $.
      \end{itemize}
    \item
      By \rnp{R\_Beta} and \rnp{E\_Step},
      there exist $\ottnt{f_{{\mathrm{41}}}}$ and $\ottnt{f_{{\mathrm{42}}}}$ such that
      \begin{itemize}
        \item $\ottnt{f_{{\mathrm{41}}}}  \ottsym{=}  \ottsym{(}   \lambda  \ottmv{x} \!:\!  \star  .\,  \ottsym{(}  \ottmv{x}  \ottsym{:}   \star \Rightarrow  \unskip ^ { \ell }  \! \star  \!\rightarrow\!  \star   \ottsym{)}  \, \ottmv{x}  \ottsym{)}  \ottsym{:}   \star  \!\rightarrow\!  \star \Rightarrow  \unskip ^ { \ell }  \!  X_{{\mathrm{1}}}  \!\rightarrow\!  X_{{\mathrm{2}}} \Rightarrow  \unskip ^ { \ell }  \!  \star \Rightarrow  \unskip ^ { \ell }  \! \star  \!\rightarrow\!  \star   $,
        \item $\ottnt{f_{{\mathrm{42}}}}  \ottsym{=}  \ottsym{(}   \lambda  \ottmv{x} \!:\!  \star  .\,  \ottsym{(}  \ottmv{x}  \ottsym{:}   \star \Rightarrow  \unskip ^ { \ell }  \! \star  \!\rightarrow\!  \star   \ottsym{)}  \, \ottmv{x}  \ottsym{)}  \ottsym{:}   \star  \!\rightarrow\!  \star \Rightarrow  \unskip ^ { \ell }  \!  X_{{\mathrm{1}}}  \!\rightarrow\!  X_{{\mathrm{2}}} \Rightarrow  \unskip ^ { \ell }  \! \star  $, and
        \item $\ottsym{(}  \ottnt{f_{{\mathrm{31}}}} \, \ottnt{f_{{\mathrm{32}}}}  \ottsym{)}  \ottsym{:}   \star \Rightarrow  \unskip ^ { \ell }  \! \star  \,  \xmapsto{ \mathmakebox[0.4em]{} [  ] \mathmakebox[0.3em]{} }  \, \ottsym{(}  \ottnt{f_{{\mathrm{41}}}} \, \ottnt{f_{{\mathrm{42}}}}  \ottsym{)}  \ottsym{:}   \star \Rightarrow  \unskip ^ { \ell }  \! \star $.
      \end{itemize}
    \item
      By \rnp{R\_Ground} and \rnp{E\_Step},
      there exist $\ottnt{f_{{\mathrm{51}}}}$ and $\ottnt{f_{{\mathrm{52}}}}$ such that
      \begin{itemize}
        \item $\ottnt{f_{{\mathrm{51}}}}  \ottsym{=}  \ottsym{(}   \lambda  \ottmv{x} \!:\!  \star  .\,  \ottsym{(}  \ottmv{x}  \ottsym{:}   \star \Rightarrow  \unskip ^ { \ell }  \! \star  \!\rightarrow\!  \star   \ottsym{)}  \, \ottmv{x}  \ottsym{)}  \ottsym{:}   \star  \!\rightarrow\!  \star \Rightarrow  \unskip ^ { \ell }  \!  X_{{\mathrm{1}}}  \!\rightarrow\!  X_{{\mathrm{2}}} \Rightarrow  \unskip ^ { \ell }  \!  \star  \!\rightarrow\!  \star \Rightarrow  \unskip ^ { \ell }  \!  \star \Rightarrow  \unskip ^ { \ell }  \! \star  \!\rightarrow\!  \star    $,
        \item $\ottnt{f_{{\mathrm{52}}}}  \ottsym{=}  \ottsym{(}   \lambda  \ottmv{x} \!:\!  \star  .\,  \ottsym{(}  \ottmv{x}  \ottsym{:}   \star \Rightarrow  \unskip ^ { \ell }  \! \star  \!\rightarrow\!  \star   \ottsym{)}  \, \ottmv{x}  \ottsym{)}  \ottsym{:}   \star  \!\rightarrow\!  \star \Rightarrow  \unskip ^ { \ell }  \!  X_{{\mathrm{1}}}  \!\rightarrow\!  X_{{\mathrm{2}}} \Rightarrow  \unskip ^ { \ell }  \! \star  $, and
        \item $\ottsym{(}  \ottnt{f_{{\mathrm{41}}}} \, \ottnt{f_{{\mathrm{42}}}}  \ottsym{)}  \ottsym{:}   \star \Rightarrow  \unskip ^ { \ell }  \! \star  \,  \xmapsto{ \mathmakebox[0.4em]{} [  ] \mathmakebox[0.3em]{} }  \, \ottsym{(}  \ottnt{f_{{\mathrm{51}}}} \, \ottnt{f_{{\mathrm{52}}}}  \ottsym{)}  \ottsym{:}   \star \Rightarrow  \unskip ^ { \ell }  \! \star $.
      \end{itemize}
    \item
      By \rnp{R\_Succeed} and \rnp{E\_Step},
      there exist $\ottnt{f_{{\mathrm{61}}}}$ and $\ottnt{f_{{\mathrm{62}}}}$ such that
      \begin{itemize}
        \item $\ottnt{f_{{\mathrm{61}}}}  \ottsym{=}  \ottsym{(}   \lambda  \ottmv{x} \!:\!  \star  .\,  \ottsym{(}  \ottmv{x}  \ottsym{:}   \star \Rightarrow  \unskip ^ { \ell }  \! \star  \!\rightarrow\!  \star   \ottsym{)}  \, \ottmv{x}  \ottsym{)}  \ottsym{:}   \star  \!\rightarrow\!  \star \Rightarrow  \unskip ^ { \ell }  \!  X_{{\mathrm{1}}}  \!\rightarrow\!  X_{{\mathrm{2}}} \Rightarrow  \unskip ^ { \ell }  \! \star  \!\rightarrow\!  \star  $,
        \item $\ottnt{f_{{\mathrm{62}}}}  \ottsym{=}  \ottsym{(}   \lambda  \ottmv{x} \!:\!  \star  .\,  \ottsym{(}  \ottmv{x}  \ottsym{:}   \star \Rightarrow  \unskip ^ { \ell }  \! \star  \!\rightarrow\!  \star   \ottsym{)}  \, \ottmv{x}  \ottsym{)}  \ottsym{:}   \star  \!\rightarrow\!  \star \Rightarrow  \unskip ^ { \ell }  \!  X_{{\mathrm{1}}}  \!\rightarrow\!  X_{{\mathrm{2}}} \Rightarrow  \unskip ^ { \ell }  \! \star  $, and
        \item $\ottsym{(}  \ottnt{f_{{\mathrm{51}}}} \, \ottnt{f_{{\mathrm{52}}}}  \ottsym{)}  \ottsym{:}   \star \Rightarrow  \unskip ^ { \ell }  \! \star  \,  \xmapsto{ \mathmakebox[0.4em]{} [  ] \mathmakebox[0.3em]{} }  \, \ottsym{(}  \ottnt{f_{{\mathrm{61}}}} \, \ottnt{f_{{\mathrm{62}}}}  \ottsym{)}  \ottsym{:}   \star \Rightarrow  \unskip ^ { \ell }  \! \star $.
      \end{itemize}
    \item
      By \rnp{R\_Ground} and \rnp{E\_Step},
      there exist $\ottnt{f'_{{\mathrm{1}}}}$ and $\ottnt{f'_{{\mathrm{2}}}}$ such that
      \begin{itemize}
        \item $\ottnt{f'_{{\mathrm{1}}}}  \ottsym{=}  \ottsym{(}   \lambda  \ottmv{x} \!:\!  \star  .\,  \ottsym{(}  \ottmv{x}  \ottsym{:}   \star \Rightarrow  \unskip ^ { \ell }  \! \star  \!\rightarrow\!  \star   \ottsym{)}  \, \ottmv{x}  \ottsym{)}  \ottsym{:}   \star  \!\rightarrow\!  \star \Rightarrow  \unskip ^ { \ell }  \!  X_{{\mathrm{1}}}  \!\rightarrow\!  X_{{\mathrm{2}}} \Rightarrow  \unskip ^ { \ell }  \! \star  \!\rightarrow\!  \star  $,
        \item $\ottnt{f'_{{\mathrm{2}}}}  \ottsym{=}  \ottsym{(}   \lambda  \ottmv{x} \!:\!  \star  .\,  \ottsym{(}  \ottmv{x}  \ottsym{:}   \star \Rightarrow  \unskip ^ { \ell }  \! \star  \!\rightarrow\!  \star   \ottsym{)}  \, \ottmv{x}  \ottsym{)}  \ottsym{:}   \star  \!\rightarrow\!  \star \Rightarrow  \unskip ^ { \ell }  \!  X_{{\mathrm{1}}}  \!\rightarrow\!  X_{{\mathrm{2}}} \Rightarrow  \unskip ^ { \ell }  \!  \star  \!\rightarrow\!  \star \Rightarrow  \unskip ^ { \ell }  \! \star   $, and
        \item $\ottsym{(}  \ottnt{f_{{\mathrm{61}}}} \, \ottnt{f_{{\mathrm{62}}}}  \ottsym{)}  \ottsym{:}   \star \Rightarrow  \unskip ^ { \ell }  \! \star  \,  \xmapsto{ \mathmakebox[0.4em]{} [  ] \mathmakebox[0.3em]{} }  \, \ottsym{(}  \ottnt{f'_{{\mathrm{1}}}} \, \ottnt{f'_{{\mathrm{2}}}}  \ottsym{)}  \ottsym{:}   \star \Rightarrow  \unskip ^ { \ell }  \! \star $.
      \end{itemize}
  \end{enumerate}

  Finally, $\ottnt{f_{{\mathrm{1}}}} \, \ottnt{f_{{\mathrm{2}}}} \,  \xmapsto{ \mathmakebox[0.4em]{} [  X  :=  X_{{\mathrm{1}}}  \!\rightarrow\!  X_{{\mathrm{2}}}  ] \mathmakebox[0.3em]{} }\hspace{-0.4em}{}^\ast \hspace{0.2em}  \, \ottsym{(}  \ottnt{f'_{{\mathrm{1}}}} \, \ottnt{f'_{{\mathrm{2}}}}  \ottsym{)}  \ottsym{:}   \star \Rightarrow  \unskip ^ { \ell }  \! \star $.
\end{proof}

\begin{definitionA}
  $\Omega^X$ is a set of terms:
  $\Omega^X = \{
    \ottsym{(}  \ottnt{f_{{\mathrm{1}}}}  \ottsym{:}   \star  \!\rightarrow\!  \star \Rightarrow  \unskip ^ { \ell }  \!  U^X_{{\mathrm{1}}} \Rightarrow  \unskip ^ { \ell }  \!  \star  \!\rightarrow\!  \star \Rightarrow  \unskip ^ { \ell }  \!   \cdots  \Rightarrow  \unskip ^ { \ell }  \!  \star  \!\rightarrow\!  \star \Rightarrow  \unskip ^ { \ell }  \!  U^X_{\ottmv{m}} \Rightarrow  \unskip ^ { \ell }  \! \star  \!\rightarrow\!  \star        \ottsym{)} \, \ottsym{(}  \ottnt{f_{{\mathrm{1}}}}  \ottsym{:}   \star  \!\rightarrow\!  \star \Rightarrow  \unskip ^ { \ell }  \!  U^X_{{\mathrm{1}}} \Rightarrow  \unskip ^ { \ell }  \!  \star  \!\rightarrow\!  \star \Rightarrow  \unskip ^ { \ell }  \!   \cdots  \Rightarrow  \unskip ^ { \ell }  \!  \star  \!\rightarrow\!  \star \Rightarrow  \unskip ^ { \ell }  \!  U^X_{\ottmv{n}} \Rightarrow  \unskip ^ { \ell }  \!  \star  \!\rightarrow\!  \star \Rightarrow  \unskip ^ { \ell }  \! \star         \ottsym{)}
  \mid
    \ottnt{f_{{\mathrm{1}}}}  \ottsym{=}   \lambda  \ottmv{x} \!:\!  \star  .\,  \ottsym{(}  \ottmv{x}  \ottsym{:}   \star \Rightarrow  \unskip ^ { \ell }  \! \star  \!\rightarrow\!  \star   \ottsym{)}  \, \ottmv{x} \,\wedge\,
    \min\limits_{m,n}(\textit{ord} \, \ottsym{(}  U^X_{\ottmv{m}}  \ottsym{)}, \textit{ord} \, \ottsym{(}  U^X_{\ottmv{n}}  \ottsym{)}) > 0 \,\wedge\,
    m, n \geq 0
  \}$.
\end{definitionA}

\begin{lemmaA} \label{lem:omega_normal_form_eval}
  For any $\ottnt{E}  [  \ottnt{f}  ]$ where $\ottnt{f} \in \Omega^X$,
  there exist $\ottnt{E'}$, $\ottnt{f'}$, and $S$ such that
  $\ottnt{E}  [  \ottnt{f}  ] \,  \xmapsto{ \mathmakebox[0.4em]{} S \mathmakebox[0.3em]{} }\hspace{-0.4em}{}^\ast \hspace{0.2em}  \, \ottnt{E'}  [  \ottnt{f'}  ]$ where $\ottnt{f'} \in \Omega^X$.
\end{lemmaA}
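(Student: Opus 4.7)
}

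The plan is to perform a detailed case analysis on the outermost structure of $\ottnt{f} \in \Omega^X$ and simulate the reduction steps using the rules in Figure~\ref{fig:def_dynamic}. Write
\[
  \ottnt{f} \;=\; \ottnt{f_L} \, \ottnt{f_R},
\]
where
$\ottnt{f_L} = \ottnt{f_{{\mathrm{1}}}}  \ottsym{:}   \star  \!\rightarrow\!  \star \Rightarrow  \unskip ^ { \ell }  \! U^X_{{\mathrm{1}}}  \Rightarrow   \cdots  \Rightarrow U^X_{\ottmv{m}}  \Rightarrow  \star  \!\rightarrow\!  \star$
(ending at $\star \!\rightarrow\! \star$) and
$\ottnt{f_R} = \ottnt{f_{{\mathrm{1}}}}  \ottsym{:}   \star  \!\rightarrow\!  \star \Rightarrow  \unskip ^ { \ell }  \! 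U^X_{{\mathrm{1}}}  \Rightarrow   \cdots  \Rightarrow U^X_{\ottmv{n}}  \Rightarrow  \star  \!\rightarrow\!  \star  \Rightarrow  \star$
(ending at $\star$), with all $U^X_{\ottmv{i}}$ of positive order and $\ottnt{f_{{\mathrm{1}}}} =  \lambda  \ottmv{x} \!:\!  \star  .\,  \ottsym{(}  \ottmv{x}  \ottsym{:}   \star \Rightarrow  \unskip ^ { \ell }  \! \star  \!\rightarrow\!  \star   \ottsym{)}  \, \ottmv{x} $. Observe that $\ottnt{f_R}$ is always a value at $\star$ (tagged with $\star  \!\rightarrow\!  \star$) and $\ottnt{f_L}$ is always a value at $\star  \!\rightarrow\!  \star$. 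Hence $\ottnt{f}$ is in a redex position inside $\ottnt{E}$, and by \rn{E\_Step} it suffices to reduce $\ottnt{f}$ to some $\ottnt{E_0}[\ottnt{f'}]$ with $\ottnt{f'} \in \Omega^X$; then $\ottnt{E}[\ottnt{f}] \xmapsto{S}^\ast \ottnt{E}[\ottnt{E_0}[\ottnt{f'}]]$ gives the required result with $\ottnt{E'} = \ottnt{E}[\ottnt{E_0}]$.

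The first step would be to handle the case $m = 0$, where $\ottnt{f_L} = \ottnt{f_{{\mathrm{1}}}}$ is itself a lambda. One step of \rn{R\_Beta} yields $\ottsym{(}  \ottnt{f_R}  \ottsym{:}   \star \Rightarrow  \unskip ^ { \ell }  \! \star  \!\rightarrow\!  \star   \ottsym{)} \, \ottnt{f_R}$. Since $\ottnt{f_R}$ is tagged at $\star  \!\rightarrow\!  \star$, the outer cast fires by \rn{R\_Succeed}, stripping the last two casts of the left copy. The resulting term is of the form $\ottnt{f'_L}\,\ottnt{f'_R}$ in which $\ottnt{f'_L}$ ends at $\star \!\rightarrow\! \star$ with $n$ middle $U^X$ casts, and $\ottnt{f'_R} = \ottnt{f_R}$ still ends at $\star$ with $n$ middle $U^X$ casts, and hence is in $\Omega^X$ with $m' = n' = n$. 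Next I would handle $m \geq 1$: the outermost function cast $U^X_{\ottmv{m}} \Rightarrow  \star  \!\rightarrow\!  \star$ is a cast between function types (since $\textit{ord}(U^X_{\ottmv{m}}) > 0$), so \rn{R\_AppCast} applies, inserting an argument cast $\star \Rightarrow  \unskip ^ { \ell }  \! V_1$ on $\ottnt{f_R}$ and a result cast $V_2 \Rightarrow  \unskip ^ { \ell }  \! \star  \!\rightarrow\!  \star$ outside, where $U^X_{\ottmv{m}} = V_1 \!\rightarrow\! V_2$. Continue peeling outer casts: the pattern of alternating $\star  \!\rightarrow\!  \star$ and $U^X_{\ottmv{i}}$ tags means each peel either uses \rn{R\_AppCast} again or a ground-related rule. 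These steps can be organized as a short, terminating sequence that unwraps all $m$ outer $\star \!\rightarrow\! \star \Rightarrow U^X_{\ottmv{i}} \Rightarrow \star \!\rightarrow\! \star$ layers and eventually reduces to the $m=0$ case applied to a modified $\ottnt{f_R}$.

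The delicate step is to show that after every peel the resulting argument cast composes with the top of $\ottnt{f_R}$ so as to preserve the $\Omega^X$ structure. Specifically, when $(\ottnt{f_R} : \star \Rightarrow V_1)$ is reduced, $\ottnt{f_R}$'s outermost cast $\star \!\rightarrow\! \star \Rightarrow \star$ meets a projection to $V_1 = U^X$ which has positive order. Because the tag on $\ottnt{f_R}$ is $\star \!\rightarrow\! \star$ and $V_1$ is an arrow of $U^X$ form, \rn{R\_Succeed} eliminates the middle $\star$ and we are left with a cast $\star \!\rightarrow\! \star \Rightarrow V_1$ between two function types, which can again be handled by \rn{R\_AppCast}. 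Iterating these steps, the final reduction will arrive at a term of the shape $\ottnt{f'_L}\,\ottnt{f'_R}$ where both $\ottnt{f'_L}$ and $\ottnt{f'_R}$ are $\ottnt{f_{{\mathrm{1}}}}$ wrapped in some alternating sequence of $\star \!\rightarrow\! \star$ and $U^X$ casts of positive order, with $\ottnt{f'_L}$ ending at $\star \!\rightarrow\! \star$ and $\ottnt{f'_R}$ ending at $\star$.

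The main obstacle will be the careful bookkeeping of the cast sequences, especially keeping track of (a) which rule applies at each outermost cast position (\rn{R\_Succeed}, \rn{R\_Ground}, \rn{R\_Expand}, or \rn{R\_AppCast}), and (b) ensuring that no cast of the form $w : G \Rightarrow \star \Rightarrow \ottmv{X}$ with a base-type $G$ occurs (which would trigger \rn{R\_InstBase} and destroy the $U^X$ invariant). The key invariant ensuring this is that every tag in the reduction is $\star  \!\rightarrow\!  \star$ and every projection target is a $U^X$ of positive order (hence an arrow, never a type variable or base type). Showing this invariant holds across all intermediate steps, using Lemma~\ref{lem:ground_types} to pin down the unique ground tag, should close the argument.
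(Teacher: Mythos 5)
Your skeleton coincides with the paper's: the $m=0$ case is dispatched by \rnp{R\_Beta} followed by \rnp{R\_Succeed}, and the $m\ge 1$ case begins with \rnp{R\_AppCast} on the outermost cast $U^X_{\ottmv{m}} \Rightarrow \star\!\rightarrow\!\star$ and then reduces the freshly created argument cast. The gap is in your ``key invariant'': you claim that every projection target arising in the reduction is ``a $U^X$ of positive order (hence an arrow, never a type variable or base type),'' and the rest of your argument leans on this. It is false. Only $U^X_{\ottmv{m}}$ itself is guaranteed to have positive order; writing $U^X_{\ottmv{m}} = V_1 \!\rightarrow\! V_2$, nothing constrains $V_1$, and since the grammar $U^X ::= \ottmv{X} \mid U^X \!\rightarrow\! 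U^X$ contains no base types, $\textit{ord}(U^X_{\ottmv{m}}) = 1$ forces $V_1$ to be a bare type variable $\ottmv{X}$. The argument cast produced by \rnp{R\_AppCast} then has the shape $w : \star\!\rightarrow\!\star \Rightarrow^{\ell_1} \star \Rightarrow^{\ell_2} \ottmv{X}$, which fires \rnp{R\_InstArrow} (not \rnp{R\_Succeed}) and emits the substitution $[\ottmv{X} := \ottmv{X_{{\mathrm{1}}}} \!\rightarrow\! \ottmv{X_{{\mathrm{2}}}}]$. This is the only place the $S$ in the lemma's statement can be nonempty, and it is the raison d'\^etre of the whole construction (the surrounding divergence theorem needs a term whose evaluation generates fresh type variables and substitutions infinitely often). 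Your plan explicitly rules this case out, and consequently never verifies the one fact that actually needs checking there: that the instantiation, propagated to the entire term by \rnp{E\_Step}, maps $U^X$ types to $U^X$ types and so preserves membership in $\Omega^X$. The paper's proof makes precisely this case split: $\textit{ord}(U^X_{\ottmv{i}}) = 1$ handled with \rnp{R\_InstArrow}, \rnp{R\_Succeed}, \rnp{R\_AppCast}; $\textit{ord}(U^X_{\ottmv{i}}) > 1$ with \rnp{R\_Expand}, \rnp{R\_AppCast}.

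Two smaller points. First, even in the case where $V_1$ is an arrow, \rnp{R\_Succeed} alone cannot ``eliminate the middle $\star$'' in $w : \star\!\rightarrow\!\star \Rightarrow \star \Rightarrow V_1$ unless $V_1$ is literally the ground type $\star\!\rightarrow\!\star$ (which it never is, as $\star \notin U^X$); \rnp{R\_Expand} must fire first to insert the intermediate $\star\!\rightarrow\!\star$, and only then does \rnp{R\_Succeed} apply. Second, you do not need to ``unwrap all $m$ outer layers down to the $m=0$ case'': peeling a single layer already yields, inside an enlarged evaluation context, a new application of two cast chains of the required alternating shape (with $m-1$ middle types on the left and one extra on the right), which is all the lemma asks for. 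Insisting on reaching the $\beta$-redex in one pass makes the bookkeeping strictly harder without buying anything.
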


\begin{proof}
  We are given $\ottnt{f}  \ottsym{=}  \ottsym{(}  \ottnt{f_{{\mathrm{1}}}}  \ottsym{:}   \star  \!\rightarrow\!  \star \Rightarrow  \unskip ^ { \ell }  \!  U^X_{{\mathrm{1}}} \Rightarrow  \unskip ^ { \ell }  \!  \star  \!\rightarrow\!  \star \Rightarrow  \unskip ^ { \ell }  \!   \cdots  \Rightarrow  \unskip ^ { \ell }  \!  \star  \!\rightarrow\!  \star \Rightarrow  \unskip ^ { \ell }  \!  U^X_{\ottmv{i}} \Rightarrow  \unskip ^ { \ell }  \! \star  \!\rightarrow\!  \star        \ottsym{)} \, \ottsym{(}  \ottnt{f_{{\mathrm{1}}}}  \ottsym{:}   \star  \!\rightarrow\!  \star \Rightarrow  \unskip ^ { \ell }  \!  U^X_{{\mathrm{1}}} \Rightarrow  \unskip ^ { \ell }  \!  \star  \!\rightarrow\!  \star \Rightarrow  \unskip ^ { \ell }  \!   \cdots  \Rightarrow  \unskip ^ { \ell }  \!  \star  \!\rightarrow\!  \star \Rightarrow  \unskip ^ { \ell }  \!  U^X_{\ottmv{j}} \Rightarrow  \unskip ^ { \ell }  \!  \star  \!\rightarrow\!  \star \Rightarrow  \unskip ^ { \ell }  \! \star         \ottsym{)}$ where $\ottnt{f_{{\mathrm{1}}}}  \ottsym{=}   \lambda  \ottmv{x} \!:\!  \star  .\,  \ottsym{(}  \ottmv{x}  \ottsym{:}   \star \Rightarrow  \unskip ^ { \ell }  \! \star  \!\rightarrow\!  \star   \ottsym{)}  \, \ottmv{x}$.
  By case analysis on $\ottmv{i}$.

  \begin{caseanalysis}
    \case{$i = 0$}
    We are given $\ottnt{f}  \ottsym{=}  \ottnt{f_{{\mathrm{1}}}} \, \ottsym{(}  \ottnt{f_{{\mathrm{1}}}}  \ottsym{:}   \star  \!\rightarrow\!  \star \Rightarrow  \unskip ^ { \ell }  \!  U^X_{{\mathrm{1}}} \Rightarrow  \unskip ^ { \ell }  \!  \star  \!\rightarrow\!  \star \Rightarrow  \unskip ^ { \ell }  \!   \cdots  \Rightarrow  \unskip ^ { \ell }  \!  \star  \!\rightarrow\!  \star \Rightarrow  \unskip ^ { \ell }  \!  U^X_{\ottmv{j}} \Rightarrow  \unskip ^ { \ell }  \!  \star  \!\rightarrow\!  \star \Rightarrow  \unskip ^ { \ell }  \! \star         \ottsym{)}$.

    By \rnp{R\_Beta} and \rnp{E\_Step},
    $\ottnt{E}  [  \ottnt{f}  ] \,  \xrightarrow{ \mathmakebox[0.4em]{} [  ] \mathmakebox[0.3em]{} }  \, \ottnt{E}  [  \ottnt{f'}  ]$
    where $\ottnt{f'}  \ottsym{=}  \ottsym{(}  \ottnt{f_{{\mathrm{1}}}}  \ottsym{:}   \star  \!\rightarrow\!  \star \Rightarrow  \unskip ^ { \ell }  \!  U^X_{{\mathrm{1}}} \Rightarrow  \unskip ^ { \ell }  \!  \star  \!\rightarrow\!  \star \Rightarrow  \unskip ^ { \ell }  \!   \cdots  \Rightarrow  \unskip ^ { \ell }  \!  \star  \!\rightarrow\!  \star \Rightarrow  \unskip ^ { \ell }  \!  U^X_{\ottmv{j}} \Rightarrow  \unskip ^ { \ell }  \!  \star  \!\rightarrow\!  \star \Rightarrow  \unskip ^ { \ell }  \!  \star \Rightarrow  \unskip ^ { \ell }  \! \star  \!\rightarrow\!  \star          \ottsym{)} \, \ottsym{(}  \ottnt{f_{{\mathrm{1}}}}  \ottsym{:}   \star  \!\rightarrow\!  \star \Rightarrow  \unskip ^ { \ell }  \!  U^X_{{\mathrm{1}}} \Rightarrow  \unskip ^ { \ell }  \!  \star  \!\rightarrow\!  \star \Rightarrow  \unskip ^ { \ell }  \!   \cdots  \Rightarrow  \unskip ^ { \ell }  \!  \star  \!\rightarrow\!  \star \Rightarrow  \unskip ^ { \ell }  \!  U^X_{\ottmv{j}} \Rightarrow  \unskip ^ { \ell }  \!  \star  \!\rightarrow\!  \star \Rightarrow  \unskip ^ { \ell }  \! \star         \ottsym{)}$.

    By \rnp{R\_Succeed},
    $\ottnt{E}  [  \ottnt{f'}  ] \,  \xrightarrow{ \mathmakebox[0.4em]{} [  ] \mathmakebox[0.3em]{} }  \, \ottnt{E}  [  \ottnt{f''}  ]$
    where $\ottnt{f''}  \ottsym{=}  \ottsym{(}  \ottnt{f_{{\mathrm{1}}}}  \ottsym{:}   \star  \!\rightarrow\!  \star \Rightarrow  \unskip ^ { \ell }  \!  U^X_{{\mathrm{1}}} \Rightarrow  \unskip ^ { \ell }  \!  \star  \!\rightarrow\!  \star \Rightarrow  \unskip ^ { \ell }  \!   \cdots  \Rightarrow  \unskip ^ { \ell }  \!  \star  \!\rightarrow\!  \star \Rightarrow  \unskip ^ { \ell }  \!  U^X_{\ottmv{j}} \Rightarrow  \unskip ^ { \ell }  \! \star  \!\rightarrow\!  \star        \ottsym{)} \, \ottsym{(}  \ottnt{f_{{\mathrm{1}}}}  \ottsym{:}   \star  \!\rightarrow\!  \star \Rightarrow  \unskip ^ { \ell }  \!  U^X_{{\mathrm{1}}} \Rightarrow  \unskip ^ { \ell }  \!  \star  \!\rightarrow\!  \star \Rightarrow  \unskip ^ { \ell }  \!   \cdots  \Rightarrow  \unskip ^ { \ell }  \!  \star  \!\rightarrow\!  \star \Rightarrow  \unskip ^ { \ell }  \!  U^X_{\ottmv{j}} \Rightarrow  \unskip ^ { \ell }  \!  \star  \!\rightarrow\!  \star \Rightarrow  \unskip ^ { \ell }  \! \star         \ottsym{)}$.

    Finally, $\ottnt{E}  [  \ottnt{f}  ] \,  \xmapsto{ \mathmakebox[0.4em]{} [  ] \mathmakebox[0.3em]{} }\hspace{-0.4em}{}^\ast \hspace{0.2em}  \, \ottnt{E}  [  \ottnt{f''}  ]$.
    Obviously, $\ottnt{f''} \in \Omega^X$.

    \case{$i > 0$}
    We are given $\ottnt{f}  \ottsym{=}  \ottsym{(}  \ottnt{f_{{\mathrm{1}}}}  \ottsym{:}   \star  \!\rightarrow\!  \star \Rightarrow  \unskip ^ { \ell }  \!  U^X_{{\mathrm{1}}} \Rightarrow  \unskip ^ { \ell }  \!  \star  \!\rightarrow\!  \star \Rightarrow  \unskip ^ { \ell }  \!   \cdots  \Rightarrow  \unskip ^ { \ell }  \!  \star  \!\rightarrow\!  \star \Rightarrow  \unskip ^ { \ell }  \!  U^X_{\ottmv{i}} \Rightarrow  \unskip ^ { \ell }  \! \star  \!\rightarrow\!  \star        \ottsym{)} \, \ottsym{(}  \ottnt{f_{{\mathrm{1}}}}  \ottsym{:}   \star  \!\rightarrow\!  \star \Rightarrow  \unskip ^ { \ell }  \!  U^X_{{\mathrm{1}}} \Rightarrow  \unskip ^ { \ell }  \!  \star  \!\rightarrow\!  \star \Rightarrow  \unskip ^ { \ell }  \!   \cdots  \Rightarrow  \unskip ^ { \ell }  \!  \star  \!\rightarrow\!  \star \Rightarrow  \unskip ^ { \ell }  \!  U^X_{\ottmv{j}} \Rightarrow  \unskip ^ { \ell }  \!  \star  \!\rightarrow\!  \star \Rightarrow  \unskip ^ { \ell }  \! \star         \ottsym{)}$.

    Here $\textit{ord} \, \ottsym{(}  U^X_{\ottmv{i}}  \ottsym{)} > 0$, so there exist $U^X_{\ottmv{i}\,{\mathrm{1}}}$ and $U^X_{\ottmv{i}\,{\mathrm{2}}}$
    such that $U^X_{\ottmv{i}}  \ottsym{=}  U^X_{\ottmv{i}\,{\mathrm{1}}}  \!\rightarrow\!  U^X_{\ottmv{i}\,{\mathrm{2}}}$.

    By \rnp{R\_AppCast} and \rnp{E\_Step},
    $\ottnt{E}  [  \ottnt{f}  ] \,  \xmapsto{ \mathmakebox[0.4em]{} [  ] \mathmakebox[0.3em]{} }  \, \ottnt{E}  [  \ottnt{f'}  \ottsym{:}   U^X_{\ottmv{i}\,{\mathrm{2}}} \Rightarrow  \unskip ^ { \ell }  \! \star   ]$
    where $\ottnt{f'}  \ottsym{=}  \ottsym{(}  \ottnt{f_{{\mathrm{1}}}}  \ottsym{:}   \star  \!\rightarrow\!  \star \Rightarrow  \unskip ^ { \ell }  \!  U^X_{{\mathrm{1}}} \Rightarrow  \unskip ^ { \ell }  \!  \star  \!\rightarrow\!  \star \Rightarrow  \unskip ^ { \ell }  \!   \cdots  \Rightarrow  \unskip ^ { \ell }  \!  \star  \!\rightarrow\!  \star \Rightarrow  \unskip ^ { \ell }  \! U^X_{\ottmv{i}}       \ottsym{)} \, \ottsym{(}  \ottnt{f_{{\mathrm{1}}}}  \ottsym{:}   \star  \!\rightarrow\!  \star \Rightarrow  \unskip ^ { \ell }  \!  U^X_{{\mathrm{1}}} \Rightarrow  \unskip ^ { \ell }  \!  \star  \!\rightarrow\!  \star \Rightarrow  \unskip ^ { \ell }  \!   \cdots  \Rightarrow  \unskip ^ { \ell }  \!  \star  \!\rightarrow\!  \star \Rightarrow  \unskip ^ { \ell }  \!  U^X_{\ottmv{j}} \Rightarrow  \unskip ^ { \ell }  \!  \star  \!\rightarrow\!  \star \Rightarrow  \unskip ^ { \ell }  \!  \star \Rightarrow  \unskip ^ { \ell }  \! U^X_{\ottmv{i}\,{\mathrm{1}}}          \ottsym{)}$.

    \begin{caseanalysis}
      \case{$\textit{ord} \, \ottsym{(}  U^X_{\ottmv{i}}  \ottsym{)} = 1$}
      Here, $\textit{ord} \, \ottsym{(}  U^X_{\ottmv{i}\,{\mathrm{1}}}  \ottsym{)} = 0$.  There exists $\ottmv{X}$ such that $U^X_{\ottmv{i}\,{\mathrm{1}}}  \ottsym{=}  \ottmv{X}$.
      By evaluating $\ottnt{E}  [  \ottnt{f'}  ]$ using the rules \rnp{R\_InstArrow}, \rnp{R\_Succeed}, \rnp{R\_AppCast}, and \rnp{E\_Step},
      $\ottnt{E}  [  \ottnt{f}  ] \,  \xmapsto{ \mathmakebox[0.4em]{} S \mathmakebox[0.3em]{} }\hspace{-0.4em}{}^\ast \hspace{0.2em}  \, \ottnt{E}  [  \ottnt{f''}  \ottsym{:}   U^X_{\ottmv{i}\,{\mathrm{2}}} \Rightarrow  \unskip ^ { \ell }  \!  \star \Rightarrow  \unskip ^ { \ell }  \! U^X_{\ottmv{i}\,{\mathrm{2}}}    ]$
      where
      $S  \ottsym{=}  [  \ottmv{X}  :=  \ottmv{X_{{\mathrm{1}}}}  \!\rightarrow\!  \ottmv{X_{{\mathrm{2}}}}  ]$ and
      $\ottnt{f''}  \ottsym{=}  \ottsym{(}  \ottnt{f_{{\mathrm{1}}}}  \ottsym{:}   \star  \!\rightarrow\!  \star \Rightarrow  \unskip ^ { \ell }  \!  S  \ottsym{(}  U^X_{{\mathrm{1}}}  \ottsym{)} \Rightarrow  \unskip ^ { \ell }  \!  \star  \!\rightarrow\!  \star \Rightarrow  \unskip ^ { \ell }  \!   \cdots  \Rightarrow  \unskip ^ { \ell }  \! \star  \!\rightarrow\!  \star      \ottsym{)} \, \ottsym{(}  \ottnt{f_{{\mathrm{1}}}}  \ottsym{:}   \star  \!\rightarrow\!  \star \Rightarrow  \unskip ^ { \ell }  \!  S  \ottsym{(}  U^X_{{\mathrm{1}}}  \ottsym{)} \Rightarrow  \unskip ^ { \ell }  \!  \star  \!\rightarrow\!  \star \Rightarrow  \unskip ^ { \ell }  \!   \cdots  \Rightarrow  \unskip ^ { \ell }  \!  \star  \!\rightarrow\!  \star \Rightarrow  \unskip ^ { \ell }  \!  S  \ottsym{(}  U^X_{\ottmv{j}}  \ottsym{)} \Rightarrow  \unskip ^ { \ell }  \!  \star  \!\rightarrow\!  \star \Rightarrow  \unskip ^ { \ell }  \!  \star \Rightarrow  \unskip ^ { \ell }  \!  S  \ottsym{(}  U^X_{\ottmv{i}\,{\mathrm{1}}}  \ottsym{)} \Rightarrow  \unskip ^ { \ell }  \! \star           \ottsym{)}$.
      Obviously, $\ottnt{f''} \in \Omega^X$.

      \case{$\textit{ord} \, \ottsym{(}  U^X_{\ottmv{i}}  \ottsym{)} > 1$}
      Here, $\textit{ord} \, \ottsym{(}  U^X_{\ottmv{i}\,{\mathrm{1}}}  \ottsym{)} > 0$.
      By evaluating $\ottnt{E}  [  \ottnt{f'}  ]$ using \rnp{R\_Expand}, \rnp{R\_AppCast}, and \rnp{E\_Step},
      $\ottnt{E}  [  \ottnt{f}  ] \,  \xmapsto{ \mathmakebox[0.4em]{} [  ] \mathmakebox[0.3em]{} }\hspace{-0.4em}{}^\ast \hspace{0.2em}  \, \ottnt{E}  [  \ottnt{f''}  \ottsym{:}   U^X_{\ottmv{i}\,{\mathrm{2}}} \Rightarrow  \unskip ^ { \ell }  \!  \star \Rightarrow  \unskip ^ { \ell }  \! U^X_{\ottmv{i}\,{\mathrm{2}}}    ]$
      where $\ottnt{f''}  \ottsym{=}  \ottsym{(}  \ottnt{f_{{\mathrm{1}}}}  \ottsym{:}   \star  \!\rightarrow\!  \star \Rightarrow  \unskip ^ { \ell }  \!  U^X_{{\mathrm{1}}} \Rightarrow  \unskip ^ { \ell }  \!  \star  \!\rightarrow\!  \star \Rightarrow  \unskip ^ { \ell }  \!   \cdots  \Rightarrow  \unskip ^ { \ell }  \! \star  \!\rightarrow\!  \star      \ottsym{)} \, \ottsym{(}  \ottnt{f_{{\mathrm{1}}}}  \ottsym{:}   \star  \!\rightarrow\!  \star \Rightarrow  \unskip ^ { \ell }  \!  U^X_{{\mathrm{1}}} \Rightarrow  \unskip ^ { \ell }  \!  \star  \!\rightarrow\!  \star \Rightarrow  \unskip ^ { \ell }  \!   \cdots  \Rightarrow  \unskip ^ { \ell }  \!  \star  \!\rightarrow\!  \star \Rightarrow  \unskip ^ { \ell }  \!  U^X_{\ottmv{j}} \Rightarrow  \unskip ^ { \ell }  \!  \star  \!\rightarrow\!  \star \Rightarrow  \unskip ^ { \ell }  \!  \star \Rightarrow  \unskip ^ { \ell }  \!  U^X_{\ottmv{i}\,{\mathrm{1}}} \Rightarrow  \unskip ^ { \ell }  \! \star           \ottsym{)}$.
      Obviously, $\ottnt{f''} \in \Omega^X$.
\qedhere
    \end{caseanalysis}
  \end{caseanalysis}
\end{proof}

\begin{lemmaA} \label{lem:omega_diverge}
  If
  \begin{itemize}
    \item $\ottnt{f_{{\mathrm{1}}}}  \ottsym{=}  \ottsym{(}   \lambda  \ottmv{x} \!:\!  X  .\,  \ottsym{(}  \ottmv{x}  \ottsym{:}   X \Rightarrow  \unskip ^ { \ell }  \!  \star \Rightarrow  \unskip ^ { \ell }  \! \star  \!\rightarrow\!  \star    \ottsym{)}  \, \ottsym{(}  \ottmv{x}  \ottsym{:}   X \Rightarrow  \unskip ^ { \ell }  \! \star   \ottsym{)}  \ottsym{)}  \ottsym{:}   X  \!\rightarrow\!  \star \Rightarrow  \unskip ^ { \ell }  \! \star  \!\rightarrow\!  \star $, and
    \item $\ottnt{f_{{\mathrm{2}}}}  \ottsym{=}  \ottsym{(}   \lambda  \ottmv{x} \!:\!  \star  .\,  \ottsym{(}  \ottmv{x}  \ottsym{:}   \star \Rightarrow  \unskip ^ { \ell }  \! \star  \!\rightarrow\!  \star   \ottsym{)}  \, \ottmv{x}  \ottsym{)}  \ottsym{:}   \star  \!\rightarrow\!  \star \Rightarrow  \unskip ^ { \ell }  \! \star $,
  \end{itemize}
  then $ \ottnt{f_{{\mathrm{1}}}} \, \ottnt{f_{{\mathrm{2}}}} \!  \Uparrow  $.
\end{lemmaA}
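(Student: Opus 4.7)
The plan is to exhibit an infinite evaluation sequence starting from $\ottnt{f_{{\mathrm{1}}}} \, \ottnt{f_{{\mathrm{2}}}}$, using the two preceding results as the main engine. By Lemma~\ref{lem:omega_translate_to_normal_form}, we have $\ottnt{f_{{\mathrm{1}}}} \, \ottnt{f_{{\mathrm{2}}}} \,  \xmapsto{ \mathmakebox[0.4em]{} [  X  :=  X_{{\mathrm{1}}}  \!\rightarrow\!  X_{{\mathrm{2}}}  ] \mathmakebox[0.3em]{} }\hspace{-0.4em}{}^\ast \hspace{0.2em}  \, \ottsym{(}  \ottnt{f'_{{\mathrm{1}}}} \, \ottnt{f'_{{\mathrm{2}}}}  \ottsym{)}  \ottsym{:}   \star \Rightarrow  \unskip ^ { \ell }  \! \star $, which turns the initial self-application (where the ``left copy'' still references the free variable $X$) into one in which both occurrences of the underlying abstraction $\ottnt{f_{{\mathrm{1}}}}' = \lambda x\!:\!\star.\,(x:\star \Rightarrow \star\!\to\!\star)\,x$ are wrapped only by casts whose intermediate types are built from $\star$, $\star\!\to\!\star$, and $X_1\!\to\!X_2$. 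So once $X$ has been instantiated, the subsequent shape belongs to the family $\Omega^X$.

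Next I would verify that the resulting application $\ottnt{f'_{{\mathrm{1}}}} \, \ottnt{f'_{{\mathrm{2}}}}$ sits inside the evaluation context $[\,]  \ottsym{:}   \star \Rightarrow  \unskip ^ { \ell }  \! \star $, and that $\ottnt{f'_{{\mathrm{1}}}} \, \ottnt{f'_{{\mathrm{2}}}} \in \Omega^X$ by directly matching it against the pattern (with $m = n = 1$ and $U^X_1 = X_1 \!\to\! X_2$, whose order is $1 > 0$). Then Lemma~\ref{lem:omega_normal_form_eval} applies: for any context $E$ and any term $\ottnt{g} \in \Omega^X$, $E[\ottnt{g}]$ evaluates in finitely many steps (at least one) to some $E'[\ottnt{g'}]$ with $\ottnt{g'} \in \Omega^X$. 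Applying this lemma repeatedly produces an infinite sequence of $ \xmapsto{ \mathmakebox[0.4em]{} S \mathmakebox[0.3em]{} } $-steps, since each invocation yields at least one non-trivial step before re-entering $\Omega^X$.

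Stringing these together gives the desired infinite evaluation sequence from $\ottnt{f_{{\mathrm{1}}}} \, \ottnt{f_{{\mathrm{2}}}}$, establishing $ \ottnt{f_{{\mathrm{1}}}} \, \ottnt{f_{{\mathrm{2}}}} \!  \Uparrow  $. Concretely, I would formalize this by defining the sequence $\ottnt{g_0} = \ottnt{f'_{{\mathrm{1}}}} \, \ottnt{f'_{{\mathrm{2}}}}$, $E_0 = [\,]  \ottsym{:}   \star \Rightarrow  \unskip ^ { \ell }  \! \star $, and inductively choosing $E_{i+1}$ and $\ottnt{g}_{i+1}\in\Omega^X$ so that $E_i[\ottnt{g}_i] \,  \xmapsto{ \mathmakebox[0.4em]{} S_i \mathmakebox[0.3em]{} }\hspace{-0.4em}{}^+ \hspace{0.2em}  \, E_{i+1}[\ottnt{g}_{i+1}]$ via Lemma~\ref{lem:omega_normal_form_eval}, prepending the finite prefix from Lemma~\ref{lem:omega_translate_to_normal_form}.

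The main obstacle I anticipate is the verification that the shape produced by Lemma~\ref{lem:omega_translate_to_normal_form} genuinely fits the pattern of $\Omega^X$ (rather than something off by one cast) and that Lemma~\ref{lem:omega_normal_form_eval} is really applicable in context $E_0 = [\,]  \ottsym{:}   \star \Rightarrow  \unskip ^ { \ell }  \! \star $ without the outer cast interfering with the inner reduction—that is, that \rnp{E\_Step} can always reduce inside the hole. A minor subtlety is that the type substitutions generated at each step (as the lemma instantiates fresh $\ottmv{X_{\ottmv{i}}}$ to base types or arrows) must not destroy membership in $\Omega^X$ of the residual term; this follows because $\Omega^X$ is closed under substitution, since substituting for a type variable in any $U^X_k$ either preserves or increases its order, never making it $0$.
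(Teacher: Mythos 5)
Your proposal is correct and follows essentially the same route as the paper, which proves this lemma directly by combining Lemma~\ref{lem:omega_translate_to_normal_form} (to reach a term of the form $\ottnt{E_{{\mathrm{0}}}}  [  \ottnt{g_{{\mathrm{0}}}}  ]$ with $\ottnt{g_{{\mathrm{0}}}} \in \Omega^X$) with repeated applications of Lemma~\ref{lem:omega_normal_form_eval}. Your explicit observation that each invocation of the latter lemma must contribute at least one evaluation step (since its statement only claims $ \xmapsto{ \mathmakebox[0.4em]{} S \mathmakebox[0.3em]{} }\hspace{-0.4em}{}^\ast \hspace{0.2em} $) is a detail the paper leaves implicit, and it is indeed discharged by inspecting that lemma's proof, where a reduction is always performed.
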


\begin{proof}
  By Lemmas \ref{lem:omega_translate_to_normal_form} and \ref{lem:omega_normal_form_eval}.
\end{proof}

\begin{lemmaA} \label{lem:omega_blame_translate_to_normal_form}
  If
  \begin{itemize}
    \item $\ottnt{f_{{\mathrm{1}}}}  \ottsym{=}  \ottsym{(}   \lambda  \ottmv{x} \!:\!  U^\iota  .\,  \ottsym{(}  \ottmv{x}  \ottsym{:}   U^\iota \Rightarrow  \unskip ^ { \ell }  \!  \star \Rightarrow  \unskip ^ { \ell }  \! \star  \!\rightarrow\!  \star    \ottsym{)}  \, \ottsym{(}  \ottmv{x}  \ottsym{:}   U^\iota \Rightarrow  \unskip ^ { \ell }  \! \star   \ottsym{)}  \ottsym{)}  \ottsym{:}   U^\iota  \!\rightarrow\!  \star \Rightarrow  \unskip ^ { \ell }  \! \star  \!\rightarrow\!  \star $,
    \item $\ottnt{f_{{\mathrm{2}}}}  \ottsym{=}  \ottsym{(}   \lambda  \ottmv{x} \!:\!  \star  .\,  \ottsym{(}  \ottmv{x}  \ottsym{:}   \star \Rightarrow  \unskip ^ { \ell }  \! \star  \!\rightarrow\!  \star   \ottsym{)}  \, \ottmv{x}  \ottsym{)}  \ottsym{:}   \star  \!\rightarrow\!  \star \Rightarrow  \unskip ^ { \ell }  \! \star $,
    \item $\textit{ord} \, \ottsym{(}  U^\iota  \ottsym{)} > 0$,
  \end{itemize}
  then
  \begin{itemize}
    \item $\ottnt{f'_{{\mathrm{1}}}}  \ottsym{=}  \ottsym{(}   \lambda  \ottmv{x} \!:\!  \star  .\,  \ottsym{(}  \ottmv{x}  \ottsym{:}   \star \Rightarrow  \unskip ^ { \ell }  \! \star  \!\rightarrow\!  \star   \ottsym{)}  \, \ottmv{x}  \ottsym{)}  \ottsym{:}   \star  \!\rightarrow\!  \star \Rightarrow  \unskip ^ { \ell }  \!  U^\iota \Rightarrow  \unskip ^ { \ell }  \! \star  \!\rightarrow\!  \star  $,
    \item $\ottnt{f'_{{\mathrm{2}}}}  \ottsym{=}  \ottsym{(}   \lambda  \ottmv{x} \!:\!  \star  .\,  \ottsym{(}  \ottmv{x}  \ottsym{:}   \star \Rightarrow  \unskip ^ { \ell }  \! \star  \!\rightarrow\!  \star   \ottsym{)}  \, \ottmv{x}  \ottsym{)}  \ottsym{:}   \star  \!\rightarrow\!  \star \Rightarrow  \unskip ^ { \ell }  \!  U^\iota \Rightarrow  \unskip ^ { \ell }  \!  \star  \!\rightarrow\!  \star \Rightarrow  \unskip ^ { \ell }  \! \star   $, and
    \item $\ottnt{f_{{\mathrm{1}}}} \, \ottnt{f_{{\mathrm{2}}}} \,  \xmapsto{ \mathmakebox[0.4em]{} [  ] \mathmakebox[0.3em]{} }\hspace{-0.4em}{}^\ast \hspace{0.2em}  \, \ottsym{(}  \ottnt{f'_{{\mathrm{1}}}} \, \ottnt{f'_{{\mathrm{2}}}}  \ottsym{)}  \ottsym{:}   \star \Rightarrow  \unskip ^ { \ell }  \! \star $.
  \end{itemize}
\end{lemmaA}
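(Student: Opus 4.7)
The plan is to mirror the structure of the proof of Lemma~\ref{lem:omega_translate_to_normal_form} almost step by step, replacing each use of \rnp{R\_InstArrow} (which was needed to deal with the free type variable $X$) by \rnp{R\_Expand} (which handles the non-ground, non-dynamic type $U^\iota$). Since $\textit{ord}(U^\iota)>0$, the type $U^\iota$ has the form $U^\iota_1 \!\rightarrow\! U^\iota_2$, and since $U^\iota$ contains only base types at the leaves, $U^\iota \neq \star \!\rightarrow\! \star$. By Lemma~\ref{lem:ground_types}, $\star \!\rightarrow\! \star$ is the unique ground type consistent with $U^\iota$, so \rnp{R\_Expand} applies to any cast of the form $w : \star \Rightarrow^\ell U^\iota$, producing $w : \star \Rightarrow^\ell \star \!\rightarrow\! \star \Rightarrow^\ell U^\iota$ without generating any type substitution.

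First, I will apply \rnp{R\_AppCast} together with \rnp{E\_Step} to the outer application, which reduces $\ottnt{f_{{\mathrm{1}}}} \, \ottnt{f_{{\mathrm{2}}}}$ to $(\ottnt{f_{{\mathrm{11}}}} \, \ottnt{f_{{\mathrm{12}}}}) : \star \Rightarrow^\ell \star$, where $\ottnt{f_{{\mathrm{11}}}}$ is the body lambda and $\ottnt{f_{{\mathrm{12}}}} = \ottnt{f_{{\mathrm{2}}}} : \star \Rightarrow^\ell U^\iota$. Next, I will use \rnp{R\_Expand} on the cast $\ottnt{f_{{\mathrm{2}}}} : \star \Rightarrow^\ell U^\iota$ (this is the key step that replaces \rnp{R\_InstArrow} in the $X$-variant) to expose the ground intermediary $\star \!\rightarrow\! \star$, and then apply \rnp{R\_Succeed} to collapse the inner $\star\!\rightarrow\!\star \Rightarrow \star \Rightarrow \star\!\rightarrow\!\star$ pair that arises from $\ottnt{f_{{\mathrm{2}}}}$'s own outer cast. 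This produces a wrapped-function value of the form $(\lambda x:\star.\;(x : \star \Rightarrow^\ell \star\!\rightarrow\!\star)\,x) : \star\!\rightarrow\!\star \Rightarrow^\ell U^\iota$, which is a legal argument value.

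Then I will apply \rnp{R\_Beta} to perform the substitution into the body $(x : U^\iota \Rightarrow \star \Rightarrow \star\!\rightarrow\!\star)(x : U^\iota \Rightarrow \star)$, yielding an application whose left subterm has the cast sequence $\star\!\rightarrow\!\star \Rightarrow U^\iota \Rightarrow \star \Rightarrow \star\!\rightarrow\!\star$ and whose right subterm has the cast sequence $\star\!\rightarrow\!\star \Rightarrow U^\iota \Rightarrow \star$. Finally, on each subterm I will eliminate the intermediate $\star$ injection: for the left subterm, a \rnp{R\_Ground}/\rnp{R\_Succeed} pair (via an expanded intermediate $\star\!\rightarrow\!\star$) collapses $U^\iota \Rightarrow \star \Rightarrow \star\!\rightarrow\!\star$ down to the desired shape; for the right subterm, \rnp{R\_Ground} on $U^\iota \Rightarrow \star$ produces the final target $\star\!\rightarrow\!\star \Rightarrow U^\iota \Rightarrow \star\!\rightarrow\!\star \Rightarrow \star$. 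Combining all these steps using \rnp{E\_Step} inside the outer evaluation context $[\,] : \star \Rightarrow^\ell \star$ yields $(\ottnt{f'_{{\mathrm{1}}}} \, \ottnt{f'_{{\mathrm{2}}}}) : \star \Rightarrow^\ell \star$ as required, with the empty type substitution throughout.

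The main obstacle will be purely bookkeeping: keeping track of the exact blame labels, the order in which \rnp{R\_Ground} versus \rnp{R\_Expand} is applied on each subterm, and ensuring that each intermediate shape is well-formed as a value (in particular, that every cast to $U^\iota$ that we want to keep is a function-type cast, and that every cast to $\star$ is tagged by the ground type $\star\!\rightarrow\!\star$). I do not expect any genuinely new insight to be required beyond the $X$-variant proof; the conceptual content is that \rnp{R\_Expand} plays precisely the role of \rnp{R\_InstArrow} when the target type is a concrete non-ground function type rather than a type variable, and in particular produces no type-substitution side effect, which is why the reduction sequence is annotated with $[\,]$.
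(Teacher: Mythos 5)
Your proposal is correct and takes essentially the same route the paper intends: the paper's own proof of this lemma is just ``by applying reduction rules and evaluation rules,'' and your trace is precisely the detailed proof of Lemma~\ref{lem:omega_translate_to_normal_form} with \rnp{R\_Expand} substituted for \rnp{R\_InstArrow} (justified because $\textit{ord}(U^\iota)>0$ makes $U^\iota$ a non-ground function type uniquely consistent with $\star\!\rightarrow\!\star$), yielding the empty type substitution throughout. No gap.
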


\begin{proof}
  By applying reduction rules and evaluation rules.
\end{proof}

\begin{definitionA}
  $\Omega^\iota_i$ is a set of terms:
  $\Omega^\iota_i = \{
    \ottnt{f_{{\mathrm{1}}}} \, \ottsym{(}  \ottnt{f_{{\mathrm{1}}}}  \ottsym{:}   \star  \!\rightarrow\!  \star \Rightarrow  \unskip ^ { \ell }  \! \star   \ottsym{)}
  \mid
    \ottnt{f_{{\mathrm{1}}}}  \ottsym{=}  \ottsym{(}   \lambda  \ottmv{x} \!:\!  \star  .\,  \ottsym{(}  \ottmv{x}  \ottsym{:}   \star \Rightarrow  \unskip ^ { \ell }  \! \star  \!\rightarrow\!  \star   \ottsym{)}  \, \ottmv{x}  \ottsym{)}  \ottsym{:}   \star  \!\rightarrow\!  \star \Rightarrow  \unskip ^ { \ell }  \!  U^\iota_{{\mathrm{1}}} \Rightarrow  \unskip ^ { \ell }  \!  \star  \!\rightarrow\!  \star \Rightarrow  \unskip ^ { \ell }  \!  U^\iota_{{\mathrm{2}}} \Rightarrow  \unskip ^ { \ell }  \!   \cdots  \Rightarrow  \unskip ^ { \ell }  \!  U^\iota_{\ottmv{n}} \Rightarrow  \unskip ^ { \ell }  \! \star  \!\rightarrow\!  \star       \,\wedge\,
    \min\limits_{n}(\textit{ord} \, \ottsym{(}  U^\iota_{\ottmv{n}}  \ottsym{)}) = \iota \,\wedge\,
    n \geq 0
  \}$.
\end{definitionA}

\begin{lemmaA} \label{lem:omega_blame_move_cast}
  For any $\ottnt{E}$, $w_{{\mathrm{1}}}$, $w_{{\mathrm{2}}}$, and $U^\iota_{{\mathrm{1}}}$ where $\textit{ord} \, \ottsym{(}  U^\iota_{{\mathrm{1}}}  \ottsym{)} > 1$,
  there exist $\ottnt{E'}$ and $U^\iota_{{\mathrm{11}}}$ such that
  $\ottnt{E}  [  \ottsym{(}  w_{{\mathrm{1}}}  \ottsym{:}   \star  \!\rightarrow\!  \star \Rightarrow  \unskip ^ { \ell }  \!  U^\iota_{{\mathrm{1}}} \Rightarrow  \unskip ^ { \ell }  \! \star  \!\rightarrow\!  \star    \ottsym{)} \, \ottsym{(}  w_{{\mathrm{2}}}  \ottsym{:}   \star  \!\rightarrow\!  \star \Rightarrow  \unskip ^ { \ell }  \! \star   \ottsym{)}  ] \,  \xmapsto{ \mathmakebox[0.4em]{} [  ] \mathmakebox[0.3em]{} }\hspace{-0.4em}{}^\ast \hspace{0.2em}  \, \ottnt{E'}  [  w_{{\mathrm{1}}} \, \ottsym{(}  w_{{\mathrm{2}}}  \ottsym{:}   \star  \!\rightarrow\!  \star \Rightarrow  \unskip ^ { \ell }  \!  U^\iota_{{\mathrm{11}}} \Rightarrow  \unskip ^ { \ell }  \!  \star  \!\rightarrow\!  \star \Rightarrow  \unskip ^ { \ell }  \! \star     \ottsym{)}  ]$
  and $\textit{ord} \, \ottsym{(}  U^\iota_{{\mathrm{11}}}  \ottsym{)} = \textit{ord} \, \ottsym{(}  U^\iota_{{\mathrm{1}}}  \ottsym{)} - 1$.
\end{lemmaA}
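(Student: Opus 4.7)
The plan is to exhibit a short reduction sequence, each step annotated with the empty type substitution, that peels off the outer wrap on $w_1$, rearranges the casts on the argument, and reassembles them into the shape demanded by the conclusion. The key observation is that $\textit{ord}(U^\iota_1) > 1$ forces $U^\iota_1 = U^\iota_{11} \to U^\iota_{12}$ with $\textit{ord}(U^\iota_{11}) > 0$, so $U^\iota_{11}$ is itself a function type. Because $U^\iota_{11}$ lies in $U^\iota$ (which contains no occurrence of $\star$), $U^\iota_{11} \neq \star$ and $U^\iota_{11} \neq \star \to \star$, while $U^\iota_{11} \sim \star \to \star$ holds trivially. Hence both \rnp{R\_Expand} and \rnp{R\_Ground} are applicable at type $U^\iota_{11}$ with ground type $\star \to \star$, which is what enables the rest of the argument.

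With this, the reduction proceeds in four bursts. First, \rnp{R\_AppCast} decomposes the outermost wrap $U^\iota_1 \Rightarrow \star \to \star$ on $w_1$, placing $(w_2 : \star \to \star \Rightarrow \star) : \star \Rightarrow U^\iota_{11}$ on the argument side and a return cast $U^\iota_{12} \Rightarrow \star$ on the outside. Second, \rnp{R\_Expand} rewrites the argument projection $\star \Rightarrow U^\iota_{11}$ as $\star \Rightarrow \star \to \star \Rightarrow U^\iota_{11}$, and then \rnp{R\_Succeed} collapses the adjacent chain $\star \to \star \Rightarrow \star \Rightarrow \star \to \star$ on $w_2$ to leave the argument as the value $w_2 : \star \to \star \Rightarrow U^\iota_{11}$. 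Third, a second application of \rnp{R\_AppCast}---now to the remaining wrap $\star \to \star \Rightarrow U^\iota_1$ on $w_1$---strips that wrap, producing $w_1$ applied to $(w_2 : \star \to \star \Rightarrow U^\iota_{11}) : U^\iota_{11} \Rightarrow \star$ under an outer return cast $\star \Rightarrow U^\iota_{12}$. Finally, \rnp{R\_Ground} rewrites that argument injection as $U^\iota_{11} \Rightarrow \star \to \star \Rightarrow \star$, giving exactly the $w_2 : \star \to \star \Rightarrow U^\iota_{11} \Rightarrow \star \to \star \Rightarrow \star$ shape called for.

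Each of the local rewrites above lifts to an evaluation step via \rnp{E\_Step} with the empty substitution, carried out inside the ambient context $\ottnt{E}$ possibly further embedded under intermediate cast frames. The two residual return casts $\star \Rightarrow U^\iota_{12}$ and $U^\iota_{12} \Rightarrow \star$ wrap the hole of $\ottnt{E}$ to yield the target evaluation context $\ottnt{E'}$, which is well-formed because evaluation contexts are closed under the production $\ottnt{E} : \ottnt{U} \Rightarrow \ottnt{U'}$. The equation $\textit{ord}(U^\iota_{11}) = \textit{ord}(U^\iota_1) - 1$ is immediate from the definition of $\textit{ord}$ on arrows.

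The main obstacle is the choreography in the middle: the projection $\star \Rightarrow U^\iota_{11}$ on the argument cannot be reduced directly (since $U^\iota_{11}$ is a non-ground function type), and must first be expanded through $\star \to \star$ by \rnp{R\_Expand} so that the ground tag $\star \to \star \Rightarrow \star$ already present on $w_2$ can be cancelled by \rnp{R\_Succeed} \emph{before} the second \rnp{R\_AppCast} fires. Symmetrically, the final outer projection $U^\iota_{11} \Rightarrow \star$ on the newly wrapped $w_2$ must be split into the two-cast form by \rnp{R\_Ground}, again using that $U^\iota_{11}$ is neither $\star$ nor $\star \to \star$. Skipping or reordering either of these non-trivial rewrites would leave a residual cast structure that does not match the target; so while no individual step is difficult, the precise order and the side conditions enabled by the assumption $\textit{ord}(U^\iota_1) > 1$ must be tracked cleanly.
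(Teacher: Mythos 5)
Your proposal is correct and follows essentially the same route as the paper's own proof: decompose $U^\iota_1$ as $U^\iota_{11}\to U^\iota_{12}$, apply \textsc{R\_AppCast} twice with the \textsc{R\_Expand}/\textsc{R\_Succeed} detour on the argument in between and a final \textsc{R\_Ground}, accumulating the two return casts into the evaluation context. If anything, your account is more explicit than the paper's displayed reduction sequence (which elides the intermediate \textsc{R\_Expand} step), and your justification of the side conditions via $U^\iota_{11}$ being a $\star$-free function type is exactly what the argument needs.
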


\begin{proof}
  There exist $\ottnt{U_{{\mathrm{11}}}}$ and $\ottnt{U_{{\mathrm{12}}}}$ such that $\ottnt{U_{{\mathrm{1}}}}  \ottsym{=}  \ottnt{U_{{\mathrm{11}}}}  \!\rightarrow\!  \ottnt{U_{{\mathrm{12}}}}$
  because $\textit{ord} \, \ottsym{(}  U^\iota_{{\mathrm{1}}}  \ottsym{)} > 1$.
  By definition, $\textit{ord} \, \ottsym{(}  U^\iota_{{\mathrm{11}}}  \ottsym{)} = \textit{ord} \, \ottsym{(}  U^\iota_{{\mathrm{1}}}  \ottsym{)} - 1$.
  $\ottnt{E}  [  \ottsym{(}  w_{{\mathrm{1}}}  \ottsym{:}   \star  \!\rightarrow\!  \star \Rightarrow  \unskip ^ { \ell }  \!  U^\iota_{{\mathrm{11}}}  \!\rightarrow\!  U^\iota_{{\mathrm{12}}} \Rightarrow  \unskip ^ { \ell }  \! \star  \!\rightarrow\!  \star    \ottsym{)} \, \ottsym{(}  w_{{\mathrm{2}}}  \ottsym{:}   \star  \!\rightarrow\!  \star \Rightarrow  \unskip ^ { \ell }  \! \star   \ottsym{)}  ]$ is evaluated as follows:
  \[
    \begin{array}{cl}
                 & \ottnt{E}  [  \ottsym{(}  w_{{\mathrm{1}}}  \ottsym{:}   \star  \!\rightarrow\!  \star \Rightarrow  \unskip ^ { \ell }  \!  U^\iota_{{\mathrm{11}}}  \!\rightarrow\!  U^\iota_{{\mathrm{12}}} \Rightarrow  \unskip ^ { \ell }  \! \star  \!\rightarrow\!  \star    \ottsym{)} \, \ottsym{(}  w_{{\mathrm{2}}}  \ottsym{:}   \star  \!\rightarrow\!  \star \Rightarrow  \unskip ^ { \ell }  \! \star   \ottsym{)}  ] \\
     \xmapsto{ \mathmakebox[0.4em]{} [  ] \mathmakebox[0.3em]{} }  & \ottnt{E'}  [  \ottsym{(}  w_{{\mathrm{1}}}  \ottsym{:}   \star  \!\rightarrow\!  \star \Rightarrow  \unskip ^ { \ell }  \! U^\iota_{{\mathrm{11}}}  \!\rightarrow\!  U^\iota_{{\mathrm{12}}}   \ottsym{)} \, \ottsym{(}  w_{{\mathrm{2}}}  \ottsym{:}   \star  \!\rightarrow\!  \star \Rightarrow  \unskip ^ { \ell }  \!  \star \Rightarrow  \unskip ^ { \ell }  \! U^\iota_{{\mathrm{11}}}    \ottsym{)}  ] \\
                 & \textrm{where $\ottnt{E'}  \ottsym{=}  \ottnt{E}  [  \left[ \, \right]  \ottsym{:}  U^\iota_{{\mathrm{12}}}  \Rightarrow   \unskip ^ { \ell }  \, \star  ]$} \\
     \xmapsto{ \mathmakebox[0.4em]{} [  ] \mathmakebox[0.3em]{} }  & \ottnt{E'}  [  \ottsym{(}  w_{{\mathrm{1}}}  \ottsym{:}   \star  \!\rightarrow\!  \star \Rightarrow  \unskip ^ { \ell }  \! U^\iota_{{\mathrm{11}}}  \!\rightarrow\!  U^\iota_{{\mathrm{12}}}   \ottsym{)} \, \ottsym{(}  w_{{\mathrm{2}}}  \ottsym{:}   \star  \!\rightarrow\!  \star \Rightarrow  \unskip ^ { \ell }  \!  \star \Rightarrow  \unskip ^ { \ell }  \! U^\iota_{{\mathrm{11}}}    \ottsym{)}  ] \\
     \xmapsto{ \mathmakebox[0.4em]{} [  ] \mathmakebox[0.3em]{} }  &  \ottnt{E'}  [  \ottsym{(}  w_{{\mathrm{1}}}  \ottsym{:}   \star  \!\rightarrow\!  \star \Rightarrow  \unskip ^ { \ell }  \! U^\iota_{{\mathrm{11}}}  \!\rightarrow\!  U^\iota_{{\mathrm{12}}}   \ottsym{)} \, \ottsym{(}  w_{{\mathrm{2}}}  \ottsym{:}   \star  \!\rightarrow\!  \star \Rightarrow  \unskip ^ { \ell }  \! U^\iota_{{\mathrm{11}}}   \ottsym{)}  ] \\
     \xmapsto{ \mathmakebox[0.4em]{} [  ] \mathmakebox[0.3em]{} }  &  \ottnt{E''}  [  w_{{\mathrm{1}}} \, \ottsym{(}  w_{{\mathrm{2}}}  \ottsym{:}   \star  \!\rightarrow\!  \star \Rightarrow  \unskip ^ { \ell }  \!  U^\iota_{{\mathrm{11}}} \Rightarrow  \unskip ^ { \ell }  \! \star    \ottsym{)}  ] \\
                 & \textrm{where $\ottnt{E''}  \ottsym{=}  \ottnt{E'}  [  \left[ \, \right]  \ottsym{:}  \star  \Rightarrow   \unskip ^ { \ell }  \, U^\iota_{{\mathrm{12}}}  ]$} \\
     \xmapsto{ \mathmakebox[0.4em]{} [  ] \mathmakebox[0.3em]{} }  &  \ottnt{E''}  [  w_{{\mathrm{1}}} \, \ottsym{(}  w_{{\mathrm{2}}}  \ottsym{:}   \star  \!\rightarrow\!  \star \Rightarrow  \unskip ^ { \ell }  \!  U^\iota_{{\mathrm{11}}} \Rightarrow  \unskip ^ { \ell }  \!  \star  \!\rightarrow\!  \star \Rightarrow  \unskip ^ { \ell }  \! \star     \ottsym{)}  ]
  \end{array}
  \]
\end{proof}

\begin{lemmaA} \label{lem:omega_blame_reduce_order}
  For any $\ottnt{E}$ and $\ottnt{f} \in \Omega^\iota_i$ where $i > 1$,
  there exist $\ottnt{E'}$ and $\ottnt{f'}$ such that
  $\ottnt{E}  [  \ottnt{f}  ] \,  \xmapsto{ \mathmakebox[0.4em]{} [  ] \mathmakebox[0.3em]{} }\hspace{-0.4em}{}^\ast \hspace{0.2em}  \, \ottnt{E'}  [  \ottnt{f'}  ]$ and $\ottnt{f'} \in \Omega^\iota_{i-1}$.
\end{lemmaA}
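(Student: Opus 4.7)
The plan is to reduce the lemma to a single use of Lemma \ref{lem:omega_blame_move_cast}, followed by a renaming argument that verifies the resulting term still lies in the $\Omega^\iota_{i-1}$ family. Given $\ottnt{f} \in \Omega^\iota_i$, I have $\ottnt{f}  \ottsym{=}  \ottnt{f_{{\mathrm{1}}}} \, \ottsym{(}  \ottnt{f_{{\mathrm{1}}}}  \ottsym{:}   \star  \!\rightarrow\!  \star \Rightarrow  \unskip ^ { \ell }  \! \star   \ottsym{)}$ where $\ottnt{f_{{\mathrm{1}}}}$ has the cast sequence $\star  \!\rightarrow\!  \star \Rightarrow  \unskip ^ { \ell }  \!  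U^\iota_{{\mathrm{1}}} \Rightarrow  \unskip ^ { \ell }  \!   \cdots  \Rightarrow  \unskip ^ { \ell }  \!  U^\iota_{\ottmv{n}} \Rightarrow  \unskip ^ { \ell }  \! \star  \!\rightarrow\!  \star    $ wrapped around the omega-like core, and $\min_k \textit{ord} \, \ottsym{(}  U^\iota_{\ottmv{k}}  \ottsym{)}  \ottsym{=}  \ottmv{i}  \ottsym{>}   1 $. First, I would read off the outermost layer of the cast sequence in $\ottnt{f_{{\mathrm{1}}}}$: splitting $\ottnt{f_{{\mathrm{1}}}}  \ottsym{=}  \ottsym{(}  w_{{\mathrm{1}}}  \ottsym{:}   \star  \!\rightarrow\!  \star \Rightarrow  \unskip ^ { \ell }  \!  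U^\iota_{\ottmv{n}} \Rightarrow  \unskip ^ { \ell }  \! \star  \!\rightarrow\!  \star    \ottsym{)}$ exposes exactly the pattern required by Lemma \ref{lem:omega_blame_move_cast}, with $w_{{\mathrm{2}}}  \ottsym{=}  \ottnt{f_{{\mathrm{1}}}}$.

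Next, since $\textit{ord} \, \ottsym{(}  U^\iota_{\ottmv{n}}  \ottsym{)}  \geq  \ottmv{i}  \ottsym{>}   1 $, I can apply Lemma \ref{lem:omega_blame_move_cast} directly to $\ottnt{E}  [  \ottnt{f}  ]$. This yields
\[
  \ottnt{E}  [  \ottnt{f}  ] \,  \xmapsto{ \mathmakebox[0.4em]{} [  ] \mathmakebox[0.3em]{} }\hspace{-0.4em}{}^\ast \hspace{0.2em}  \, \ottnt{E'}  [  w_{{\mathrm{1}}} \, \ottsym{(}  \ottnt{f_{{\mathrm{1}}}}  \ottsym{:}   \star  \!\rightarrow\!  \star \Rightarrow  \unskip ^ { \ell }  \!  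U^\iota_{\ottmv{n} \ottsym{1}} \Rightarrow  \unskip ^ { \ell }  \!  \star  \!\rightarrow\!  \star \Rightarrow  \unskip ^ { \ell }  \! \star     \ottsym{)}  ]
\]
with $\textit{ord} \, \ottsym{(}  U^\iota_{\ottmv{n} \ottsym{1}}  \ottsym{)}  \ottsym{=}  \textit{ord} \, \ottsym{(}  U^\iota_{\ottmv{n}}  \ottsym{)}   -   1 $. Now comes the bookkeeping: observe that $\ottnt{f_{{\mathrm{1}}}}  \ottsym{:}   \star  \!\rightarrow\!  \star \Rightarrow  \unskip ^ { \ell }  \!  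U^\iota_{\ottmv{n} \ottsym{1}} \Rightarrow  \unskip ^ { \ell }  \! \star  \!\rightarrow\!  \star   $ is again a cast sequence of the same shape as $\ottnt{f_{{\mathrm{1}}}}$, namely the omega-like core wrapped through $U^\iota_{{\mathrm{1}}} ,\ \ldots ,\  U^\iota_{\ottmv{n}} ,\  U^\iota_{\ottmv{n} \ottsym{1}}$. Calling this $\ottnt{f_{{\mathrm{1}}}}'$, the resulting subterm is $w_{{\mathrm{1}}} \, \ottsym{(}  \ottnt{f_{{\mathrm{1}}}}'  \ottsym{:}   \star  \!\rightarrow\!  \star \Rightarrow  \unskip ^ { \ell }  \! \star   \ottsym{)}$, which matches the $\Omega^\iota$ shape save that the function is $w_{{\mathrm{1}}}$ rather than $\ottnt{f_{{\mathrm{1}}}}'$. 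This is the main technical point, and I expect it to be handled exactly as in the treatment of $\Omega^X$ by Lemma \ref{lem:omega_normal_form_eval}, where the function and argument of an $\Omega$-element are allowed to carry distinct cast sequences sharing the common omega core.

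To get the minimum-order bound right, I would analyze two cases. If $U^\iota_{\ottmv{n}}$ achieves the minimum $i$, then $U^\iota_{\ottmv{n} \ottsym{1}}$ has order $\ottmv{i}   -   1 $ and so the overall minimum becomes $\ottmv{i}   -   1 $, placing the term in $\Omega^\iota_{\ottmv{i}-{\mathrm{1}}}$. Otherwise the minimum is achieved at some inner $U^\iota_{\ottmv{k}}$ with $\ottmv{k}  \ottsym{<}  \ottmv{n}$; here I would iterate, applying Lemma \ref{lem:omega_blame_move_cast} successively to peel off the outer casts of order $\ottsym{>}  \ottmv{i}$ until the cast of minimum order becomes the outermost, at which point one more application lowers it to $\ottmv{i}   -   1 $.

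The hard part is the accounting just described: a single step of Lemma \ref{lem:omega_blame_move_cast} only touches the outermost cast, so when the minimum sits deep inside the sequence, I need to justify that the intermediate terms remain in the $\Omega^\iota$ family (possibly at order $i$) through each peeling step, without ever producing a term that falls outside the family or triggers a premature failure via \rnp{R\_Fail}. This requires verifying that \rnp{R\_AppCast} applications and the intermediate cast expansions (via \rnp{R\_Ground}/\rnp{R\_Expand}) preserve the characteristic alternation between $\star  \!\rightarrow\!  \star$ and the $U^\iota$ types, and that no $U^\iota$ with order $0$ is exposed in a position that would force $\star  \!\rightarrow\!  \star \Rightarrow  \unskip ^ { \ell' }  \! \iota $ to fire. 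Assuming the definition of $\Omega^\iota_i$ is read in the same flexible spirit as $\Omega^X$, this bookkeeping should go through directly.
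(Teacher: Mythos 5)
You have found the right key lemma (Lemma~\ref{lem:omega_blame_move_cast}), but your stopping criterion creates a genuine gap. Each application of that lemma strips one cast layer from the \emph{function} position and appends a decremented cast to the \emph{argument} position; the two positions therefore carry \emph{different} cast sequences from the very first step onward. The set $\Omega^\iota_i$ is defined with the \emph{same} term $\ottnt{f_{{\mathrm{1}}}}$ (hence the same sequence $U^\iota_{{\mathrm{1}}},\dots,U^\iota_{\ottmv{n}}$) in both positions — unlike $\Omega^X$, which explicitly allows $m \neq n$. So your plan of ``peel until the minimum-order cast becomes outermost, apply the lemma once more, and stop'' lands on a term whose function still carries the leftover casts $U^\iota_{{\mathrm{1}}},\dots,U^\iota_{\ottmv{k}-{\mathrm{1}}}$ while its argument carries $U^\iota_{{\mathrm{1}}},\dots,U^\iota_{\ottmv{n}}$ plus the new decremented ones; that term is not in $\Omega^\iota_{i-1}$ as defined, and you acknowledge this only by assuming a ``flexible reading'' of the definition that the paper does not adopt (and which would force you to re-verify Lemma~\ref{lem:omega_blame_order_one_blame} for the asymmetric family).

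The paper closes exactly this gap with two moves your proposal omits. First, it applies Lemma~\ref{lem:omega_blame_move_cast} a full $n$ times — no case analysis on where the minimum sits is needed, since every $U^\iota_{\ottmv{k}}$ has order $\geq i > 1$ — until the function position is the \emph{bare} lambda and the argument carries $2n$ casts whose minimum order is $i-1$ (the original $n$ casts with minimum $i$ together with $n$ decremented copies). Second, it resymmetrizes: $\beta$-reducing the bare $ \lambda  \ottmv{x} \!:\!  \star  .\,  \ottsym{(}  \ottmv{x}  \ottsym{:}   \star \Rightarrow  \unskip ^ { \ell }  \! \star  \!\rightarrow\!  \star   \ottsym{)}  \, \ottmv{x}$ duplicates the argument into both positions, and \rnp{R\_Succeed} discharges the resulting $ \star  \!\rightarrow\!  \star \Rightarrow  \unskip ^ { \ell }  \!  \star \Rightarrow  \unskip ^ { \ell }  \! \star  \!\rightarrow\!  \star  $ pair, so that both positions again carry the identical $2n$-cast sequence and the term is literally a member of $\Omega^\iota_{i-1}$. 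Without this $\beta$-plus-\rnp{R\_Succeed} step your argument cannot conclude, so you should either add it (adopting the paper's ``peel everything'' strategy) or genuinely generalize the definition of $\Omega^\iota_i$ and repair the downstream lemmas.
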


\begin{proof}
  We are given $\ottnt{f}  \ottsym{=}  \ottnt{f_{{\mathrm{1}}}} \, \ottnt{f_{{\mathrm{2}}}}$ where
  \begin{itemize}
    \item $\ottnt{f_{{\mathrm{1}}}}  \ottsym{=}  \ottsym{(}   \lambda  \ottmv{x} \!:\!  \star  .\,  \ottsym{(}  \ottmv{x}  \ottsym{:}   \star \Rightarrow  \unskip ^ { \ell }  \! \star  \!\rightarrow\!  \star   \ottsym{)}  \, \ottmv{x}  \ottsym{)}  \ottsym{:}   \star  \!\rightarrow\!  \star \Rightarrow  \unskip ^ { \ell }  \!  U^\iota_{{\mathrm{1}}} \Rightarrow  \unskip ^ { \ell }  \!  \star  \!\rightarrow\!  \star \Rightarrow  \unskip ^ { \ell }  \!  U^\iota_{{\mathrm{2}}} \Rightarrow  \unskip ^ { \ell }  \!   \cdots  \Rightarrow  \unskip ^ { \ell }  \!  U^\iota_{\ottmv{n}} \Rightarrow  \unskip ^ { \ell }  \! \star  \!\rightarrow\!  \star      $ and
    \item $\ottnt{f_{{\mathrm{2}}}}  \ottsym{=}  \ottsym{(}   \lambda  \ottmv{x} \!:\!  \star  .\,  \ottsym{(}  \ottmv{x}  \ottsym{:}   \star \Rightarrow  \unskip ^ { \ell }  \! \star  \!\rightarrow\!  \star   \ottsym{)}  \, \ottmv{x}  \ottsym{)}  \ottsym{:}   \star  \!\rightarrow\!  \star \Rightarrow  \unskip ^ { \ell }  \!  U^\iota_{{\mathrm{1}}} \Rightarrow  \unskip ^ { \ell }  \!  \star  \!\rightarrow\!  \star \Rightarrow  \unskip ^ { \ell }  \!  U^\iota_{{\mathrm{2}}} \Rightarrow  \unskip ^ { \ell }  \!   \cdots  \Rightarrow  \unskip ^ { \ell }  \!  U^\iota_{\ottmv{n}} \Rightarrow  \unskip ^ { \ell }  \! \star  \!\rightarrow\!  \star      $.
  \end{itemize}

  By applying Lemma \ref{lem:omega_blame_move_cast} $n$ times,
  there exist $\ottnt{E'}$ and $\ottnt{f'}$ such that $\ottnt{E}  [  \ottnt{f}  ] \,  \xmapsto{ \mathmakebox[0.4em]{} [  ] \mathmakebox[0.3em]{} }\hspace{-0.4em}{}^\ast \hspace{0.2em}  \, \ottnt{E'}  [  \ottnt{f'}  ]$
  where $\ottnt{f'}  \ottsym{=}  \ottnt{f'_{{\mathrm{1}}}} \, \ottnt{f'_{{\mathrm{2}}}}$, $m = 2n$,
  \begin{itemize}
    \item $\ottnt{f'_{{\mathrm{1}}}}  \ottsym{=}   \lambda  \ottmv{x} \!:\!  \star  .\,  \ottsym{(}  \ottmv{x}  \ottsym{:}   \star \Rightarrow  \unskip ^ { \ell }  \! \star  \!\rightarrow\!  \star   \ottsym{)}  \, \ottmv{x}$, and
    \item $\ottnt{f'_{{\mathrm{2}}}}  \ottsym{=}  \ottsym{(}   \lambda  \ottmv{x} \!:\!  \star  .\,  \ottsym{(}  \ottmv{x}  \ottsym{:}   \star \Rightarrow  \unskip ^ { \ell }  \! \star  \!\rightarrow\!  \star   \ottsym{)}  \, \ottmv{x}  \ottsym{)}  \ottsym{:}   \star  \!\rightarrow\!  \star \Rightarrow  \unskip ^ { \ell }  \!  U^\iota_{{\mathrm{1}}} \Rightarrow  \unskip ^ { \ell }  \!  \star  \!\rightarrow\!  \star \Rightarrow  \unskip ^ { \ell }  \!  U^\iota_{{\mathrm{2}}} \Rightarrow  \unskip ^ { \ell }  \!   \cdots  \Rightarrow  \unskip ^ { \ell }  \!  U^\iota_{\ottmv{m}} \Rightarrow  \unskip ^ { \ell }  \! \star  \!\rightarrow\!  \star      $.
  \end{itemize}
  Then, $\min\limits_{1 \leq i \leq m} \textit{ord} \, \ottsym{(}  U^\iota_{\ottmv{i}}  \ottsym{)} = \min\limits_{1 \leq i \leq n} \textit{ord} \, \ottsym{(}  U^\iota_{\ottmv{i}}  \ottsym{)} - 1$.

  By applying reduction and evaluation rules,
  there exist $\ottnt{E''}$ and $\ottnt{f''}$ such that $\ottnt{E'}  [  \ottnt{f'}  ] \,  \xmapsto{ \mathmakebox[0.4em]{} [  ] \mathmakebox[0.3em]{} }\hspace{-0.4em}{}^\ast \hspace{0.2em}  \, \ottnt{E''}  [  \ottnt{f''}  ]$
  where $\ottnt{f''}  \ottsym{=}  \ottnt{f''_{{\mathrm{1}}}} \, \ottnt{f''_{{\mathrm{2}}}}$,
  \begin{itemize}
    \item $\ottnt{f''_{{\mathrm{1}}}}  \ottsym{=}  \ottsym{(}   \lambda  \ottmv{x} \!:\!  \star  .\,  \ottsym{(}  \ottmv{x}  \ottsym{:}   \star \Rightarrow  \unskip ^ { \ell }  \! \star  \!\rightarrow\!  \star   \ottsym{)}  \, \ottmv{x}  \ottsym{)}  \ottsym{:}   \star  \!\rightarrow\!  \star \Rightarrow  \unskip ^ { \ell }  \!  U^\iota_{{\mathrm{1}}} \Rightarrow  \unskip ^ { \ell }  \!  \star  \!\rightarrow\!  \star \Rightarrow  \unskip ^ { \ell }  \!  U^\iota_{{\mathrm{2}}} \Rightarrow  \unskip ^ { \ell }  \!   \cdots  \Rightarrow  \unskip ^ { \ell }  \!  U^\iota_{\ottmv{m}} \Rightarrow  \unskip ^ { \ell }  \! \star  \!\rightarrow\!  \star      $, and
    \item $\ottnt{f''_{{\mathrm{2}}}}  \ottsym{=}  \ottsym{(}   \lambda  \ottmv{x} \!:\!  \star  .\,  \ottsym{(}  \ottmv{x}  \ottsym{:}   \star \Rightarrow  \unskip ^ { \ell }  \! \star  \!\rightarrow\!  \star   \ottsym{)}  \, \ottmv{x}  \ottsym{)}  \ottsym{:}   \star  \!\rightarrow\!  \star \Rightarrow  \unskip ^ { \ell }  \!  U^\iota_{{\mathrm{1}}} \Rightarrow  \unskip ^ { \ell }  \!  \star  \!\rightarrow\!  \star \Rightarrow  \unskip ^ { \ell }  \!  U^\iota_{{\mathrm{2}}} \Rightarrow  \unskip ^ { \ell }  \!   \cdots  \Rightarrow  \unskip ^ { \ell }  \!  U^\iota_{\ottmv{m}} \Rightarrow  \unskip ^ { \ell }  \! \star  \!\rightarrow\!  \star      $.
  \end{itemize}

  Finally, $\ottnt{f''} \in \Omega^\iota_{i-1}$.
\end{proof}

\begin{lemmaA} \label{lem:omega_blame_last_cast_order_one_blame}
  For any $\ottnt{E}$, $w_{{\mathrm{1}}}$, $w_{{\mathrm{2}}}$, and $U^\iota$ where $\textit{ord} \, \ottsym{(}  U^\iota  \ottsym{)} = 1$,
  $\ottnt{E}  [  \ottsym{(}  w_{{\mathrm{1}}}  \ottsym{:}   U^\iota \Rightarrow  \unskip ^ { \ell }  \! \star  \!\rightarrow\!  \star   \ottsym{)} \, \ottsym{(}  w_{{\mathrm{2}}}  \ottsym{:}   \star  \!\rightarrow\!  \star \Rightarrow  \unskip ^ { \ell }  \! \star   \ottsym{)}  ] \,  \xmapsto{ \mathmakebox[0.4em]{} [  ] \mathmakebox[0.3em]{} }\hspace{-0.4em}{}^\ast \hspace{0.2em}  \, \textsf{\textup{blame}\relax} \,  \bar{ \ell } $.
\end{lemmaA}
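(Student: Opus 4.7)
The plan is to reduce the term directly in a short sequence of reduction and evaluation steps, exploiting the hypothesis $\textit{ord}(U^\iota)=1$ to force a ground-type tag mismatch that triggers \rnp{R\_Fail}.

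First, I would observe that if $\textit{ord}(U^\iota)=1$, then by the grammar of $U^\iota$ and the definition of $\textit{ord}$, the only possibility is $U^\iota = \iota_1 \to U^\iota_2$ for some base type $\iota_1$ and some $U^\iota_2$ (the domain must have order $0$, hence be a base type). Thus the redex in the hole has the shape $(w_1 : \iota_1 \to U^\iota_2 \Rightarrow^{\ell} \star \to \star)\,(w_2 : \star \to \star \Rightarrow^{\ell} \star)$, which is a wrapped-function application to a value. This is exactly the pattern targeted by \rnp{R\_AppCast}.

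Next, I would apply \rnp{R\_AppCast} (plugged into $E$ by \rnp{E\_Step}) to obtain
\[
  E\bigl[\,(w_1\,(\,(w_2 : \star\!\to\!\star \Rightarrow^{\ell} \star) : \star \Rightarrow^{\bar\ell} \iota_1\,))\ :\ U^\iota_2 \Rightarrow^{\ell} \star\,\bigr].
\]
Focus on the inner subterm $(w_2 : \star\!\to\!\star \Rightarrow^{\ell} \star) : \star \Rightarrow^{\bar\ell} \iota_1$. It has precisely the shape $w_2 : G_1 \Rightarrow^{\ell_1} \star \Rightarrow^{\ell_2} G_2$ with $G_1 = \star\!\to\!\star$, $G_2 = \iota_1$, and these ground types are distinct (one is a function type, the other a base type; by Lemma \ref{lem:ground_types}(2), $\ottnt{G_{{\mathrm{1}}}}  \neq  \ottnt{G_{{\mathrm{2}}}}$). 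Therefore \rnp{R\_Fail} applies and the subterm reduces to $\textsf{blame}\,\bar\ell$, carrying the label $\ell_2 = \bar\ell$.

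Finally, plugging this blame into the surrounding evaluation context (which is nonempty by construction, containing at least the outer cast and application frames), \rnp{E\_Abort} propagates the blame, giving $E[\cdots] \xmapsto{[\,]}^{\ast} \textsf{blame}\,\bar\ell$, as required. I do not expect any serious obstacle; the one subtlety worth double-checking is the bookkeeping of blame-label polarity in \rnp{R\_AppCast} (the argument cast is negated, so its outer label is $\bar\ell$, and \rnp{R\_Fail} then returns $\bar\ell$ unchanged, not $\ell$), together with the fact that the ambient context $E$ itself may need to be extended by a few frames after \rnp{E\_Step} applies substitution-free reductions in \rnp{R\_AppCast} and \rnp{R\_Fail}, but since all substitutions in this lemma are empty this causes no complication.
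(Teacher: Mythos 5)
Your proposal is correct and matches the paper's intent: the paper's own proof of this lemma is just ``Obviously by applying reduction and evaluation rules,'' and your derivation (\rnp{R\_AppCast}, then \rnp{R\_Fail} on the resulting $\star\!\to\!\star$ versus $\iota_1$ tag mismatch, then \rnp{E\_Abort}) spells out exactly that computation, including the correct polarity bookkeeping for $\bar\ell$.
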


\begin{proof}
  Obviously by applying reduction and evaluation rules.
\end{proof}

\begin{lemmaA} \label{lem:omega_blame_order_one_blame}
  For any $\ottnt{E}$ and $\ottnt{f} \in \Omega^\iota_1$,
  there exists $\ell$ such that $\ottnt{E}  [  \ottnt{f}  ] \,  \xmapsto{ \mathmakebox[0.4em]{} [  ] \mathmakebox[0.3em]{} }\hspace{-0.4em}{}^\ast \hspace{0.2em}  \, \textsf{\textup{blame}\relax} \, \ell$.
\end{lemmaA}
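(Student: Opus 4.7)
The plan is to reduce the claim to invocations of the two preceding technical lemmas, Lemma \ref{lem:omega_blame_move_cast} and Lemma \ref{lem:omega_blame_last_cast_order_one_blame}, by induction on the length $n$ of the cast chain $U^\iota_1, \ldots, U^\iota_n$ appearing in $\ottnt{f_{{\mathrm{1}}}}$. Recall that each $U^\iota_j$ is a type built from base types with $\to$; the hypothesis that $\ottnt{f} \in \Omega^\iota_1$ gives us $\min_j \textit{ord}(U^\iota_j) = 1$, so at least one $U^\iota_k$ has the form $\iota' \to U$ with $\iota'$ a base type. The driver of the whole argument is that when the tagged value $w_2 : \star\!\rightarrow\!\star \Rightarrow \star$ eventually gets pushed into a cast $\star \Rightarrow \iota'$, the rule \rnp{R\_Fail} fires, because $\star \!\rightarrow\! \star \neq \iota'$.

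The base case $n = 1$ forces $\textit{ord}(U^\iota_1) = 1$, so $U^\iota_1 = \iota' \to U$ and $\ottnt{f}$ factors as $(w_1 : U^\iota_1 \Rightarrow \star\!\rightarrow\!\star) \, (w_2 : \star\!\rightarrow\!\star \Rightarrow \star)$ after pairing the first two casts on $\ottnt{f_{{\mathrm{1}}}}$. Lemma \ref{lem:omega_blame_last_cast_order_one_blame} directly yields blame in any evaluation context. For the inductive step on $n > 1$, I distinguish by the order of the \emph{outermost} cast $U^\iota_n$: if $\textit{ord}(U^\iota_n) = 1$, the same factoring trick puts the term in the shape demanded by Lemma \ref{lem:omega_blame_last_cast_order_one_blame} and we are done; if $\textit{ord}(U^\iota_n) > 1$, Lemma \ref{lem:omega_blame_move_cast} rewrites $\ottnt{E}[\ottnt{f}]$ to a term of the form $\ottnt{E'}[w_1' \, (w_2' : \star\!\rightarrow\!\star \Rightarrow U^\iota_{n1} \Rightarrow \star\!\rightarrow\!\star \Rightarrow \star)]$, where the outer chain on the function side has lost one entry (the $U^\iota_n$) but the cast of order $1$, coming from some $U^\iota_k$ with $k < n$, still lives inside $w_1'$'s chain.

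The main obstacle is that after one application of Lemma \ref{lem:omega_blame_move_cast} the term no longer literally belongs to $\Omega^\iota_1$: the argument has acquired an extra $U^\iota_{n1} \Rightarrow \star\!\rightarrow\!\star \Rightarrow \star$ cast, whereas the definition of $\Omega^\iota_1$ pins down the argument shape to exactly $w_2 : \star\!\rightarrow\!\star \Rightarrow \star$. To handle this cleanly, I plan to strengthen the induction hypothesis so that it applies to a broader family of terms in which both the function and the argument carry arbitrary $U^\iota$-labelled cast chains, with the sole side condition that some cast in either chain has order $1$. The induction measure then becomes the sum $\sum \textit{ord}(U^\iota_j)$ over all $U^\iota$-casts appearing in the whole term, which strictly decreases under Lemma \ref{lem:omega_blame_move_cast} (moving a cast of order $> 1$ becomes a cast of order one less) while preserving the invariant that at least one $U^\iota$-cast of order $1$ survives. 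The bookkeeping for building the evaluation context $\ottnt{E'}$ after each step is essentially the same as in Lemma \ref{lem:omega_blame_move_cast} and should be mechanical.
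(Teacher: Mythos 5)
Your overall route is the paper's: locate an order-$1$ cast in the function's chain, repeatedly apply Lemma~\ref{lem:omega_blame_move_cast} to strip the outer casts of order greater than $1$, and close with Lemma~\ref{lem:omega_blame_last_cast_order_one_blame} once the order-$1$ cast is outermost. The paper avoids your strengthened induction hypothesis entirely: it picks the index $i$ with $\mathit{ord}(U^\iota_i)=1$ (implicitly the largest such, so that every cast above it has order greater than $1$) and applies Lemma~\ref{lem:omega_blame_move_cast} exactly $n-i$ times. This is legitimate because both auxiliary lemmas are stated for arbitrary values $w_1$, $w_2$ rather than for members of $\Omega^\iota_1$, so nothing requires the intermediate terms to re-enter $\Omega^\iota_1$; the argument position only ever needs to have the shape $w : \star\!\to\!\star \Rightarrow \star$, and each application of Lemma~\ref{lem:omega_blame_move_cast} produces an argument of exactly that shape (the newly acquired casts are all between function types, hence values, followed by an injection into $\star$).

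Two details of your strengthening need tightening. First, the invariant ``some cast in \emph{either} chain has order $1$'' is too weak to close the induction: Lemma~\ref{lem:omega_blame_last_cast_order_one_blame} produces blame only when the order-$1$ cast sits outermost on the \emph{function} side, so an order-$1$ cast surviving only in the argument's chain gives you no terminal case. The invariant you actually want, and which is preserved, is that the \emph{function's} chain contains a cast of order $1$ --- preserved because Lemma~\ref{lem:omega_blame_move_cast} only ever removes function-side casts of order strictly greater than $1$. Second, your measure $\sum_j \mathit{ord}(U^\iota_j)$ taken ``over the whole term'' does not obviously decrease: Lemma~\ref{lem:omega_blame_move_cast} pushes the codomain component $U^\iota_{12}$ of the peeled cast into the surrounding evaluation context (as casts $U^\iota_{12}\Rightarrow\star$ and $\star\Rightarrow U^\iota_{12}$), so counting the context can make the sum grow. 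Restrict the measure to the casts in the redex $f'_1\,f'_2$, or simply use the number of function-side casts sitting above the chosen order-$1$ cast (the paper's $n-i$), and the argument goes through.
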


\begin{proof}
  There exist $\ottnt{f_{{\mathrm{1}}}}$ and $\ottnt{f_{{\mathrm{2}}}}$ such that $\ottnt{f}  \ottsym{=}  \ottnt{f_{{\mathrm{1}}}} \, \ottnt{f_{{\mathrm{2}}}}$,
  \begin{itemize}
    \item $\ottnt{f_{{\mathrm{1}}}}  \ottsym{=}  \ottsym{(}   \lambda  \ottmv{x} \!:\!  \star  .\,  \ottsym{(}  \ottmv{x}  \ottsym{:}   \star \Rightarrow  \unskip ^ { \ell }  \! \star  \!\rightarrow\!  \star   \ottsym{)}  \, \ottmv{x}  \ottsym{)}  \ottsym{:}   \star  \!\rightarrow\!  \star \Rightarrow  \unskip ^ { \ell }  \!  U^\iota_{{\mathrm{1}}} \Rightarrow  \unskip ^ { \ell }  \!  \star  \!\rightarrow\!  \star \Rightarrow  \unskip ^ { \ell }  \!  U^\iota_{{\mathrm{2}}} \Rightarrow  \unskip ^ { \ell }  \!   \cdots  \Rightarrow  \unskip ^ { \ell }  \!  U^\iota_{\ottmv{n}} \Rightarrow  \unskip ^ { \ell }  \! \star  \!\rightarrow\!  \star      $, and
    \item $\ottnt{f_{{\mathrm{2}}}}  \ottsym{=}  \ottsym{(}   \lambda  \ottmv{x} \!:\!  \star  .\,  \ottsym{(}  \ottmv{x}  \ottsym{:}   \star \Rightarrow  \unskip ^ { \ell }  \! \star  \!\rightarrow\!  \star   \ottsym{)}  \, \ottmv{x}  \ottsym{)}  \ottsym{:}   \star  \!\rightarrow\!  \star \Rightarrow  \unskip ^ { \ell }  \!  U^\iota_{{\mathrm{1}}} \Rightarrow  \unskip ^ { \ell }  \!  \star  \!\rightarrow\!  \star \Rightarrow  \unskip ^ { \ell }  \!  U^\iota_{{\mathrm{2}}} \Rightarrow  \unskip ^ { \ell }  \!   \cdots  \Rightarrow  \unskip ^ { \ell }  \!  U^\iota_{\ottmv{n}} \Rightarrow  \unskip ^ { \ell }  \! \star  \!\rightarrow\!  \star      $.
  \end{itemize}
  And, there exists $i$ such that $\textit{ord} \, \ottsym{(}  U^\iota_{\ottmv{i}}  \ottsym{)} = 1$.

  By applying Lemma \ref{lem:omega_blame_move_cast} $n - i$ times,
  there exist $\ottnt{E'}$, $\ottnt{f'_{{\mathrm{1}}}}$, and $\ottnt{f'_{{\mathrm{2}}}}$ such that $\ottnt{E}  [  \ottnt{f}  ] \,  \xmapsto{ \mathmakebox[0.4em]{} [  ] \mathmakebox[0.3em]{} }\hspace{-0.4em}{}^\ast \hspace{0.2em}  \, \ottnt{E'}  [  \ottnt{f'_{{\mathrm{1}}}} \, \ottnt{f'_{{\mathrm{2}}}}  ]$,
  \begin{itemize}
    \item $\ottnt{f'_{{\mathrm{1}}}}  \ottsym{=}  \ottsym{(}   \lambda  \ottmv{x} \!:\!  \star  .\,  \ottsym{(}  \ottmv{x}  \ottsym{:}   \star \Rightarrow  \unskip ^ { \ell }  \! \star  \!\rightarrow\!  \star   \ottsym{)}  \, \ottmv{x}  \ottsym{)}  \ottsym{:}   \star  \!\rightarrow\!  \star \Rightarrow  \unskip ^ { \ell }  \!  U^\iota_{{\mathrm{1}}} \Rightarrow  \unskip ^ { \ell }  \!  \star  \!\rightarrow\!  \star \Rightarrow  \unskip ^ { \ell }  \!  U^\iota_{{\mathrm{2}}} \Rightarrow  \unskip ^ { \ell }  \!   \cdots  \Rightarrow  \unskip ^ { \ell }  \!  U^\iota_{\ottmv{i}} \Rightarrow  \unskip ^ { \ell }  \! \star  \!\rightarrow\!  \star      $, and
    \item $\ottnt{f'_{{\mathrm{2}}}}  \ottsym{=}  \ottsym{(}   \lambda  \ottmv{x} \!:\!  \star  .\,  \ottsym{(}  \ottmv{x}  \ottsym{:}   \star \Rightarrow  \unskip ^ { \ell }  \! \star  \!\rightarrow\!  \star   \ottsym{)}  \, \ottmv{x}  \ottsym{)}  \ottsym{:}   \star  \!\rightarrow\!  \star \Rightarrow  \unskip ^ { \ell }  \!  U^\iota_{{\mathrm{1}}} \Rightarrow  \unskip ^ { \ell }  \!  \star  \!\rightarrow\!  \star \Rightarrow  \unskip ^ { \ell }  \!  U^\iota_{{\mathrm{2}}} \Rightarrow  \unskip ^ { \ell }  \!   \cdots  \Rightarrow  \unskip ^ { \ell }  \!  U^\iota_{\ottmv{m}} \Rightarrow  \unskip ^ { \ell }  \! \star  \!\rightarrow\!  \star      $.
  \end{itemize}

  By Lemma \ref{lem:omega_blame_last_cast_order_one_blame},
  $\ottnt{E'}  [  \ottnt{f'_{{\mathrm{1}}}} \, \ottnt{f'_{{\mathrm{2}}}}  ] \,  \xmapsto{ \mathmakebox[0.4em]{} [  ] \mathmakebox[0.3em]{} }\hspace{-0.4em}{}^\ast \hspace{0.2em}  \, \textsf{\textup{blame}\relax} \, \ell$.
  Finally, $\ottnt{E}  [  \ottnt{f}  ] \,  \xmapsto{ \mathmakebox[0.4em]{} [  ] \mathmakebox[0.3em]{} }\hspace{-0.4em}{}^\ast \hspace{0.2em}  \, \textsf{\textup{blame}\relax} \, \ell$.
\end{proof}

\begin{lemmaA} \label{lem:omega_blame}
  For any $S$ such that $\textit{ftv} \, \ottsym{(}  S  \ottsym{(}  X  \ottsym{)}  \ottsym{)}  \ottsym{=}   \emptyset $,
  $S  \ottsym{(}  \ottnt{f_{{\mathrm{1}}}} \, \ottnt{f_{{\mathrm{2}}}}  \ottsym{)} \,  \xmapsto{ \mathmakebox[0.4em]{} [  ] \mathmakebox[0.3em]{} }\hspace{-0.4em}{}^\ast \hspace{0.2em}  \, \textsf{\textup{blame}\relax} \, \ell$ for some $\ell$ where
  \begin{itemize}
    \item $\ottnt{f_{{\mathrm{1}}}}  \ottsym{=}  \ottsym{(}   \lambda  \ottmv{x} \!:\!  X  .\,  \ottsym{(}  \ottmv{x}  \ottsym{:}   X \Rightarrow  \unskip ^ { \ell }  \!  \star \Rightarrow  \unskip ^ { \ell }  \! \star  \!\rightarrow\!  \star    \ottsym{)}  \, \ottsym{(}  \ottmv{x}  \ottsym{:}   X \Rightarrow  \unskip ^ { \ell }  \! \star   \ottsym{)}  \ottsym{)}  \ottsym{:}   X  \!\rightarrow\!  \star \Rightarrow  \unskip ^ { \ell }  \! \star  \!\rightarrow\!  \star $ and
    \item $\ottnt{f_{{\mathrm{2}}}}  \ottsym{=}  \ottsym{(}   \lambda  \ottmv{x} \!:\!  \star  .\,  \ottsym{(}  \ottmv{x}  \ottsym{:}   \star \Rightarrow  \unskip ^ { \ell }  \! \star  \!\rightarrow\!  \star   \ottsym{)}  \, \ottmv{x}  \ottsym{)}  \ottsym{:}   \star  \!\rightarrow\!  \star \Rightarrow  \unskip ^ { \ell }  \! \star $.
  \end{itemize}
\end{lemmaA}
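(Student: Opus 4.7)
The overall plan is to reduce everything to the supporting lemmas already established in this section (Lemmas~\ref{lem:omega_blame_translate_to_normal_form}, \ref{lem:omega_blame_reduce_order}, and \ref{lem:omega_blame_order_one_blame}). The first step is to observe that since $\textit{ftv}(S(X)) = \emptyset$, the type $T = S(X)$ is a closed static type, which means $T$ is built only from base types and arrow types. Hence $T = U^\iota$ for some $U^\iota$ in the grammar defined earlier. Since $X \notin \textit{ftv}(\ottnt{f_{{\mathrm{2}}}})$, we have $S(\ottnt{f_{{\mathrm{1}}}}\,\ottnt{f_{{\mathrm{2}}}}) = S(\ottnt{f_{{\mathrm{1}}}})\,\ottnt{f_{{\mathrm{2}}}}$, and $S(\ottnt{f_{{\mathrm{1}}}})$ is exactly the term $\ottnt{f_{{\mathrm{1}}}}$ from Lemma~\ref{lem:omega_blame_translate_to_normal_form} with $X$ replaced by $U^\iota$. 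I will then split on $\textit{ord}(U^\iota)$.

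For the base case $\textit{ord}(U^\iota) = 0$, i.e., $U^\iota = \iota$ for some base type $\iota$, I will compute the reduction directly. The outer cast $\iota \to \star \Rightarrow^\ell \star \to \star$ is a function cast, so \rnp{R\_AppCast} fires and the argument $\ottnt{f_{{\mathrm{2}}}}$ gets wrapped as $\ottnt{f_{{\mathrm{2}}}} : \star \Rightarrow^{\bar\ell} \iota$. But $\ottnt{f_{{\mathrm{2}}}}$ itself is a value of the form $w : \star \to \star \Rightarrow^\ell \star$, so the resulting cast sequence $w : \star \to \star \Rightarrow^\ell \star \Rightarrow^{\bar\ell} \iota$ fires \rnp{R\_Fail} (since $\star \to \star \neq \iota$), yielding $\textsf{blame}\,\bar\ell$. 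Propagating through the enclosing evaluation context via \rnp{E\_Abort} gives the desired blame.

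For the inductive case $\textit{ord}(U^\iota) > 0$, I will invoke Lemma~\ref{lem:omega_blame_translate_to_normal_form} directly on $S(\ottnt{f_{{\mathrm{1}}}})\,\ottnt{f_{{\mathrm{2}}}}$ to obtain $S(\ottnt{f_{{\mathrm{1}}}})\,\ottnt{f_{{\mathrm{2}}}} \xmapsto{[]}^* (\ottnt{f'_{{\mathrm{1}}}} \, \ottnt{f'_{{\mathrm{2}}}}) : \star \Rightarrow^\ell \star$ where the specific forms of $\ottnt{f'_{{\mathrm{1}}}}$ and $\ottnt{f'_{{\mathrm{2}}}}$ are exactly those required to witness membership in $\Omega^\iota_{\textit{ord}(U^\iota)}$ (with $n = 1$ and the single component being $U^\iota$ itself, so the minimum order is $\textit{ord}(U^\iota)$). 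Thus $\ottnt{f'_{{\mathrm{1}}}}\,\ottnt{f'_{{\mathrm{2}}}} \in \Omega^\iota_{\textit{ord}(U^\iota)}$, sitting inside the evaluation context $E = [\,] : \star \Rightarrow^\ell \star$. I will then apply Lemma~\ref{lem:omega_blame_reduce_order} by induction on $\textit{ord}(U^\iota) - 1$: each application steps from $\Omega^\iota_{i}$ to $\Omega^\iota_{i-1}$ inside some (possibly extended) evaluation context, until we reach a term in $\Omega^\iota_1$. Finally, Lemma~\ref{lem:omega_blame_order_one_blame} closes the argument, producing $\textsf{blame}\,\ell$ for some $\ell$.

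The main obstacle is not conceptual but bookkeeping: carefully tracking how the outer evaluation context $E = [\,] : \star \Rightarrow^\ell \star$ is preserved (or grown) through the chain of reductions from $\Omega^\iota_i$ down to $\Omega^\iota_1$, and ensuring the induction on $\textit{ord}(U^\iota)$ lines up with the decrement in the subscript of $\Omega^\iota_i$. However, Lemma~\ref{lem:omega_blame_reduce_order} is stated precisely in the ``for any evaluation context'' form needed to iterate, so this is straightforward. The only mildly subtle point is recognizing that the term produced by Lemma~\ref{lem:omega_blame_translate_to_normal_form} really does belong to $\Omega^\iota_{\textit{ord}(U^\iota)}$, which amounts to matching the two syntactic forms definitionally.
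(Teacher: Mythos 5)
Your proposal is correct and follows essentially the same route as the paper's own proof: substitute to get a closed static type $U^\iota$, split on $\textit{ord}(U^\iota)$, handle the base case by a direct computation ending in \rnp{R\_Fail}, and in the positive-order case chain Lemma~\ref{lem:omega_blame_translate_to_normal_form}, iterated applications of Lemma~\ref{lem:omega_blame_reduce_order}, and Lemma~\ref{lem:omega_blame_order_one_blame}. The only difference is that you spell out the order-zero case explicitly where the paper dismisses it as obvious.
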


\begin{proof}
  There exists $\ottnt{T}$ such that $S  \ottsym{(}  X  \ottsym{)}  \ottsym{=}  \ottnt{T}$.
  Here, $\textit{ftv} \, \ottsym{(}  S  \ottsym{(}  X  \ottsym{)}  \ottsym{)}  \ottsym{=}   \emptyset $.
  So, there exists $U^\iota$ such that $S  \ottsym{(}  X  \ottsym{)}  \ottsym{=}  U^\iota$,

  \begin{itemize}
    \item $\ottnt{f_{{\mathrm{1}}}}  \ottsym{=}  \ottsym{(}   \lambda  \ottmv{x} \!:\!  U^\iota  .\,  \ottsym{(}  \ottmv{x}  \ottsym{:}   U^\iota \Rightarrow  \unskip ^ { \ell }  \!  \star \Rightarrow  \unskip ^ { \ell }  \! \star  \!\rightarrow\!  \star    \ottsym{)}  \, \ottsym{(}  \ottmv{x}  \ottsym{:}   U^\iota \Rightarrow  \unskip ^ { \ell }  \! \star   \ottsym{)}  \ottsym{)}  \ottsym{:}   U^\iota  \!\rightarrow\!  \star \Rightarrow  \unskip ^ { \ell }  \! \star  \!\rightarrow\!  \star $, and
    \item $\ottnt{f_{{\mathrm{2}}}}  \ottsym{=}  \ottsym{(}   \lambda  \ottmv{x} \!:\!  \star  .\,  \ottsym{(}  \ottmv{x}  \ottsym{:}   \star \Rightarrow  \unskip ^ { \ell }  \! \star  \!\rightarrow\!  \star   \ottsym{)}  \, \ottmv{x}  \ottsym{)}  \ottsym{:}   \star  \!\rightarrow\!  \star \Rightarrow  \unskip ^ { \ell }  \! \star $.
  \end{itemize}

  By case analysis on the order of $U^\iota$.

  \begin{caseanalysis}
    \case{$\textit{ord} \, \ottsym{(}  U^\iota  \ottsym{)} = 0$}
    There exists $\iota$ such that $U^\iota  \ottsym{=}  \iota$.
    Obviously, $\ottnt{f_{{\mathrm{1}}}} \, \ottnt{f_{{\mathrm{2}}}} \,  \xmapsto{ \mathmakebox[0.4em]{} [  ] \mathmakebox[0.3em]{} }\hspace{-0.4em}{}^\ast \hspace{0.2em}  \, \textsf{\textup{blame}\relax} \, \ell$ for some $\ell$.

    \case{$\textit{ord} \, \ottsym{(}  U^\iota  \ottsym{)} > 0$}
    By Lemma \ref{lem:omega_blame_translate_to_normal_form},
    There exist
    \begin{itemize}
      \item $\ottnt{f'_{{\mathrm{1}}}}  \ottsym{=}  \ottsym{(}   \lambda  \ottmv{x} \!:\!  \star  .\,  \ottsym{(}  \ottmv{x}  \ottsym{:}   \star \Rightarrow  \unskip ^ { \ell }  \! \star  \!\rightarrow\!  \star   \ottsym{)}  \, \ottmv{x}  \ottsym{)}  \ottsym{:}   \star  \!\rightarrow\!  \star \Rightarrow  \unskip ^ { \ell }  \!  U^\iota \Rightarrow  \unskip ^ { \ell }  \! \star  \!\rightarrow\!  \star  $ and
      \item $\ottnt{f'_{{\mathrm{2}}}}  \ottsym{=}  \ottsym{(}   \lambda  \ottmv{x} \!:\!  \star  .\,  \ottsym{(}  \ottmv{x}  \ottsym{:}   \star \Rightarrow  \unskip ^ { \ell }  \! \star  \!\rightarrow\!  \star   \ottsym{)}  \, \ottmv{x}  \ottsym{)}  \ottsym{:}   \star  \!\rightarrow\!  \star \Rightarrow  \unskip ^ { \ell }  \!  U^\iota \Rightarrow  \unskip ^ { \ell }  \!  \star  \!\rightarrow\!  \star \Rightarrow  \unskip ^ { \ell }  \! \star   $,
    \end{itemize}
    such that $\ottnt{f_{{\mathrm{1}}}} \, \ottnt{f_{{\mathrm{2}}}} \,  \xmapsto{ \mathmakebox[0.4em]{} [  ] \mathmakebox[0.3em]{} }\hspace{-0.4em}{}^\ast \hspace{0.2em}  \, \ottsym{(}  \ottnt{f'_{{\mathrm{1}}}} \, \ottnt{f'_{{\mathrm{2}}}}  \ottsym{)}  \ottsym{:}   \star \Rightarrow  \unskip ^ { \ell }  \! \star $.

    Here $\ottnt{f'_{{\mathrm{1}}}} \, \ottnt{f'_{{\mathrm{2}}}} \in \Omega^\iota_n$ where $n = \textit{ord} \, \ottsym{(}  U^\iota  \ottsym{)}$.
    By Lemma \ref{lem:omega_blame_reduce_order} and
    Lemma \ref{lem:omega_blame_order_one_blame}. We finish.
    \qedhere
  \end{caseanalysis}
\end{proof}

\ifrestate
\thmDivergence*
\else
\begin{theoremA}[name=,restate=thmDivergence]
  There exists $\ottnt{f}$ such that (1) $ \emptyset   \vdash  \ottnt{f}  \ottsym{:}  \ottnt{U}$ and (2)
  $ \ottnt{f} \!  \Uparrow  $ and (3) for any $S$ such that
  $\textit{ftv} \, \ottsym{(}  S  \ottsym{(}  \ottnt{f}  \ottsym{)}  \ottsym{)}  \ottsym{=}   \emptyset $, it holds that $S  \ottsym{(}  \ottnt{f}  \ottsym{)} \, \longmapsto_{\textsf{\textup{B}\relax}\relax}^\ast \, \textsf{\textup{blame}\relax} \, \ell$ for
  some $\ell$.
\end{theoremA}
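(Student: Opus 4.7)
The plan is to exhibit a self-application term modeled on the divergent combinator $\Omega = (\lambda x.\,x\,x)(\lambda x.\,x\,x)$, but adapted to go through the dynamic type so that it is typeable in $\lambdaRTI$. Concretely, I would choose
\[
\ottnt{f} \;=\; \bigl(\ottsym{(}   \lambda  \ottmv{x} \!:\!  \ottmv{X}  .\,  \ottsym{(}  \ottmv{x}  \ottsym{:}   \ottmv{X} \Rightarrow  \unskip ^ { \ell }  \!  \star \Rightarrow  \unskip ^ { \ell }  \! \star  \!\rightarrow\!  \star    \ottsym{)}  \, \ottsym{(}  \ottmv{x}  \ottsym{:}   \ottmv{X} \Rightarrow  \unskip ^ { \ell }  \! \star   \ottsym{)}  \ottsym{)}  \ottsym{:}   \ottmv{X}  \!\rightarrow\!  \star \Rightarrow  \unskip ^ { \ell }  \! \star  \!\rightarrow\!  \star \bigr) \, \bigl(\ottsym{(}   \lambda  \ottmv{x} \!:\!  \star  .\,  \ottsym{(}  \ottmv{x}  \ottsym{:}   \star \Rightarrow  \unskip ^ { \ell }  \! \star  \!\rightarrow\!  \star   \ottsym{)}  \, \ottmv{x}  \ottsym{)}  \ottsym{:}   \star  \!\rightarrow\!  \star \Rightarrow  \unskip ^ { \ell }  \! \star \bigr).
\]
Typeability (1) is a direct check using \rnp{T-Abs}, \rnp{T-App} and \rnp{T-Cast}, exploiting $\ottmv{X}  \sim  \star$ and $\star  \sim  \star  \!\rightarrow\!  \star$; one obtains $ \emptyset   \vdash  \ottnt{f}  \ottsym{:}  \star$.

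For (2) I would first do a finite ``warm-up'' reduction: apply \rnp{R\_AppCast} once (splitting the outer cast across the application), then \rnp{R\_InstArrow} to instantiate $\ottmv{X}$ with a fresh $\ottmv{X_{{\mathrm{1}}}}  \!\rightarrow\!  \ottmv{X_{{\mathrm{2}}}}$, then a few \rnp{R\_Succeed}, \rnp{R\_Beta}, \rnp{R\_Ground} steps, to reach a ``canonical'' configuration $E[\ottnt{f_{{\mathrm{0}}}}]$ whose redex $\ottnt{f_{{\mathrm{0}}}}$ belongs to a family $\Omega^X$ of self-applications tagged with a chain of ground-arrow casts annotated by arrow-only types $U^X$. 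The key lemma is then a closure property: every $E[\ottnt{f_{{\mathrm{0}}}}]$ with $\ottnt{f_{{\mathrm{0}}}}\in\Omega^X$ reduces in finitely many steps to some $E'[\ottnt{f'_{{\mathrm{0}}}}]$ with $\ottnt{f'_{{\mathrm{0}}}}\in\Omega^X$, where each such step either peels off a ground cast (possibly invoking \rnp{R\_InstArrow} to refine a still-type-variable tag into a fresh $\ottmv{X_{\ottmv{i}}}  \!\rightarrow\!  \ottmv{X_{\ottmv{i}+{\mathrm{1}}}}$) or performs the self-application $\beta$-step that restores the $\Omega$-shape. Iterating this lemma yields an infinite evaluation sequence.

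For (3), fix any $S$ with $\textit{ftv} \, \ottsym{(}  S  \ottsym{(}  \ottnt{f}  \ottsym{)}  \ottsym{)}  \ottsym{=}   \emptyset $; then $S  \ottsym{(}  \ottmv{X}  \ottsym{)}$ is a \emph{closed} static type $U^{\iota}$ of some finite order $n$. In $\lambdaBC$ the rules \rnp{R\_InstArrow}/\rnp{R\_InstBase} are absent, so the same warm-up steps as above reduce $S  \ottsym{(}  \ottnt{f}  \ottsym{)}$ to a term of a second family $\Omega^\iota_n$ (the analogue of $\Omega^X$ but with $\iota$-only tags and a recorded order $n$). I would prove two complementary lemmas: first, a ``descent'' lemma that every $E[\ottnt{f_{{\mathrm{0}}}}]$ with $\ottnt{f_{{\mathrm{0}}}}\in\Omega^\iota_{n+{\mathrm{1}}}$ reduces under $\longmapsto_{\textsf{B}}$ to some $E'[\ottnt{f'_{{\mathrm{0}}}}]$ with $\ottnt{f'_{{\mathrm{0}}}}\in\Omega^\iota_n$ (each self-application burns one arrow-layer because no instantiation is available to supply a new one); and second, a ``base case'' lemma that every $E[\ottnt{f_{{\mathrm{0}}}}]$ with $\ottnt{f_{{\mathrm{0}}}}\in\Omega^\iota_{{\mathrm{1}}}$ eventually reduces to $\textsf{\textup{blame}\relax} \, \ell$, since the innermost self-application forces the ground cast $\star  \!\rightarrow\!  \star \Rightarrow  \unskip ^ { \ell }  \! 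U^\iota$ where $U^{\iota}$ is a base type, and \rnp{R\_Fail} triggers via \rnp{R\_AppCast}. Chaining these $n$ times completes (3).

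The main obstacle is bookkeeping in step (2): one must show that each iteration genuinely makes progress (and does not just shuffle casts) while staying inside $\Omega^X$, which requires choosing the right invariant on the shape of the tag chain and being careful about which occurrences of the instantiated type variable $\ottmv{X}$ get replaced by \rnp{E\_Step}'s substitution. Formalizing that invariant so that the induction goes through cleanly, and matching it up with the parallel invariant for $\Omega^\iota_n$ used in (3), is where the bulk of the work lies; the remaining cast-bookkeeping lemmas are routine applications of the reduction rules.
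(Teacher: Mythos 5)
Your proposal is correct and matches the paper's own proof essentially step for step: same witness term, same $\Omega^X$ family with a closure-under-evaluation lemma for divergence, and the same order-indexed $\Omega^\iota_n$ family with a descent lemma and an order-one blame base case for part (3). The only cosmetic differences are that the paper performs the part-(3) reduction in $\lambdaRTI$ with empty substitutions and transfers to $\lambdaBC$ via the conservative-extension theorem (whereas you work in $\lambdaBC$ directly), and that the paper explicitly splits off the degenerate case where $S  \ottsym{(}  \ottmv{X}  \ottsym{)}$ is a base type (order $0$), which fails immediately before reaching the $\Omega^\iota$ family.
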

\fi 

\begin{proof}
  \leavevmode
  Let $\ottnt{f}  \ottsym{=}  \ottsym{(}   \lambda  \ottmv{x} \!:\!  X  .\,  \ottsym{(}  \ottmv{x}  \ottsym{:}   X \Rightarrow  \unskip ^ { \ell }  \!  \star \Rightarrow  \unskip ^ { \ell }  \! \star  \!\rightarrow\!  \star    \ottsym{)}  \, \ottsym{(}  \ottmv{x}  \ottsym{:}   X \Rightarrow  \unskip ^ { \ell }  \! \star   \ottsym{)}  \ottsym{)}  \ottsym{:}   X  \!\rightarrow\!  \star \Rightarrow  \unskip ^ { \ell }  \! \star  \!\rightarrow\!  \star  \, \ottsym{(}   \lambda  \ottmv{x} \!:\!  \star  .\,  \ottsym{(}  \ottmv{x}  \ottsym{:}   \star \Rightarrow  \unskip ^ { \ell }  \! \star  \!\rightarrow\!  \star   \ottsym{)}  \, \ottmv{x}  \ottsym{)}  \ottsym{:}   \star  \!\rightarrow\!  \star \Rightarrow  \unskip ^ { \ell }  \! \star $.
  By Lemma \ref{lem:omega_diverge}, $ \ottnt{f} \!  \Uparrow  $.

  Here, $\textit{ftv} \, \ottsym{(}  \ottnt{f}  \ottsym{)} = \{X\}$.
  By Lemma \ref{lem:omega_blame},
  for any $S$ where $\textit{ftv} \, \ottsym{(}  S  \ottsym{(}  X  \ottsym{)}  \ottsym{)}  \ottsym{=}   \emptyset $,
  $S  \ottsym{(}  \ottnt{f}  \ottsym{)} \,  \xmapsto{ \mathmakebox[0.4em]{} [  ] \mathmakebox[0.3em]{} }\hspace{-0.4em}{}^\ast \hspace{0.2em}  \, \textsf{\textup{blame}\relax} \, \ell$ for some $\ell$.

  By Theorem \ref{thm:conservative_extension}, $S  \ottsym{(}  \ottnt{f}  \ottsym{)} \, \longmapsto_{\textsf{\textup{B}\relax}\relax}^\ast \, \textsf{\textup{blame}\relax} \, \ell$.
\end{proof}

\subsection{Completeness of Dynamic Type Inference}

\begin{lemmaA} \label{lem:subst_value_is_value}
  If $ \emptyset   \vdash  \ottnt{f}  \ottsym{:}  \ottnt{U}$ and $S  \ottsym{(}  \ottnt{f}  \ottsym{)}$ is a value,
  then $\ottnt{f}$ is a value.
\end{lemmaA}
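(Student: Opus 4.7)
The plan is to prove this by structural induction on $\ottnt{f}$ (equivalently, by case analysis on the typing derivation, using the IH on well-typed subterms). For most syntactic forms the argument is trivial since the outermost constructor of $\ottnt{f}$ is preserved by $S$: if $\ottnt{f}$ is a variable, application, primitive operation, blame, type application $\ottmv{x}  [   \overrightarrow{ \mathbbsl{T}_{\ottmv{i}} }   ]$, or let-binding $ \textsf{\textup{let}\relax} \,  \ottmv{x}  =   \Lambda    \overrightarrow{ \ottmv{X_{\ottmv{i}}} }  .\,  w   \textsf{\textup{ in }\relax}  \ottnt{f'} $, then $S  \ottsym{(}  \ottnt{f}  \ottsym{)}$ has the same head form and is not a value, contradicting the hypothesis; and if $\ottnt{f}$ is a constant or a lambda abstraction, $\ottnt{f}$ is already a value.

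The only interesting case is the cast $\ottnt{f}  \ottsym{=}  \ottnt{f'}  \ottsym{:}   \ottnt{U_{{\mathrm{1}}}} \Rightarrow  \unskip ^ { \ell }  \! \ottnt{U_{{\mathrm{2}}}} $, where $S  \ottsym{(}  \ottnt{f}  \ottsym{)}  \ottsym{=}  S  \ottsym{(}  \ottnt{f'}  \ottsym{)}  \ottsym{:}   S  \ottsym{(}  \ottnt{U_{{\mathrm{1}}}}  \ottsym{)} \Rightarrow  \unskip ^ { \ell }  \! S  \ottsym{(}  \ottnt{U_{{\mathrm{2}}}}  \ottsym{)} $ is a value. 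By inspecting the value grammar, this splits into two subcases: either (i) $S  \ottsym{(}  \ottnt{U_{{\mathrm{1}}}}  \ottsym{)}$ and $S  \ottsym{(}  \ottnt{U_{{\mathrm{2}}}}  \ottsym{)}$ are both arrow types with $S  \ottsym{(}  \ottnt{f'}  \ottsym{)}$ a value (wrapped function), or (ii) $S  \ottsym{(}  \ottnt{U_{{\mathrm{1}}}}  \ottsym{)}  \ottsym{=}  \ottnt{G}$ and $S  \ottsym{(}  \ottnt{U_{{\mathrm{2}}}}  \ottsym{)}  \ottsym{=}  \star$ with $S  \ottsym{(}  \ottnt{f'}  \ottsym{)}$ a value (injection). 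In each subcase the IH applied to $\ottnt{f'}$ (which is well typed at $\ottnt{U_{{\mathrm{1}}}}$ by inversion of \rnp{T\_Cast}) yields that $\ottnt{f'}$ itself is a value; what remains is to argue that $\ottnt{U_{{\mathrm{1}}}}$ and $\ottnt{U_{{\mathrm{2}}}}$ already have the right syntactic shape so that $\ottnt{f}$ fits one of the value forms of $\lambdaRTI$.

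Here I will exploit two key facts. First, since $S$ maps type variables to \emph{static} types, we have $S  \ottsym{(}  \ottnt{U}  \ottsym{)}  \ottsym{=}  \star$ iff $\ottnt{U}  \ottsym{=}  \star$, and $S  \ottsym{(}  \ottnt{U}  \ottsym{)}$ can never equal $\star  \!\rightarrow\!  \star$ unless $\ottnt{U}  \ottsym{=}  \star  \!\rightarrow\!  \star$ (and similarly the image of a type variable cannot contain $ \star $). Second, by Lemma~\ref{lem:canonical_forms}, no value has a type variable as its type, so the IH would prevent the subterm $\ottnt{f'}$ from having type $\ottmv{X}$. Combining these: in subcase (ii), $S  \ottsym{(}  \ottnt{U_{{\mathrm{2}}}}  \ottsym{)}  \ottsym{=}  \star$ forces $\ottnt{U_{{\mathrm{2}}}}  \ottsym{=}  \star$; if $\ottnt{G}  \ottsym{=}  \iota$ then either $\ottnt{U_{{\mathrm{1}}}}  \ottsym{=}  \iota$ or $\ottnt{U_{{\mathrm{1}}}}  \ottsym{=}  \ottmv{X}$ (ruled out by canonical forms applied to $\ottnt{f'}$); if $\ottnt{G}  \ottsym{=}  \star  \!\rightarrow\!  \star$ then $\ottnt{U_{{\mathrm{1}}}}  \ottsym{=}  \star  \!\rightarrow\!  \star$ because static-type images cannot contain $ \star $. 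Subcase (i) is similar: $\ottnt{U_{{\mathrm{1}}}}$ cannot be $ \star $ (its image would be $ \star $) nor a type variable (canonical forms), leaving only an arrow; then from $\ottnt{U_{{\mathrm{1}}}}  \sim  \ottnt{U_{{\mathrm{2}}}}$ (also from inversion of \rnp{T\_Cast}) we conclude $\ottnt{U_{{\mathrm{2}}}}$ is also an arrow, placing $\ottnt{f}$ in the wrapped-function value form.

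The main obstacle I anticipate is precisely this bookkeeping around type variables in the cast case: ensuring that every way $S  \ottsym{(}  \ottnt{U_{\ottmv{i}}}  \ottsym{)}$ could be an arrow or a ground type really does force $\ottnt{U_{\ottmv{i}}}$ to have that same shape. The static-only image of $S$ together with the canonical forms lemma is what rules out the otherwise awkward possibility that, say, $\ottnt{U_{{\mathrm{1}}}}  \ottsym{=}  \ottmv{X}$ with $S  \ottsym{(}  \ottmv{X}  \ottsym{)}$ an arrow. Once this is in hand, the remaining arithmetic of matching the value forms is straightforward.
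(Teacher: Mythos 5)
Your proof is correct and follows essentially the same route as the paper's: induction on the typing derivation, with the cast case resolved by combining the canonical forms lemma (to exclude subterms typed at a type variable) with the fact that type substitutions have static codomain (to exclude $S  \ottsym{(}  \ottnt{U}  \ottsym{)}$ being $ \star $ or $\star  \!\rightarrow\!  \star$ unless $\ottnt{U}$ already is); the paper merely organizes the cast case by the consistency derivation $\ottnt{U_{{\mathrm{1}}}}  \sim  \ottnt{U_{{\mathrm{2}}}}$ rather than by the value form of $S  \ottsym{(}  \ottnt{f}  \ottsym{)}$. One tiny point: in your subcase (i), $\ottnt{U_{{\mathrm{1}}}}  \sim  \ottnt{U_{{\mathrm{2}}}}$ alone still permits $\ottnt{U_{{\mathrm{2}}}}  \ottsym{=}  \star$, so you should also invoke the hypothesis that $S  \ottsym{(}  \ottnt{U_{{\mathrm{2}}}}  \ottsym{)}$ is an arrow to conclude $\ottnt{U_{{\mathrm{2}}}}$ is an arrow.
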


\begin{proof}
  By induction on the typing derivation.

  \begin{caseanalysis}
    \case{\rnp{T\_VarP}}
    Cannot happen.

    \case{\rnp{T\_Const}, \rnp{T\_Abs}}
    Obvious.

    \case{\rnp{T\_Op}}
    We are given $\ottnt{f}  \ottsym{=}  \mathit{op} \, \ottsym{(}  \ottnt{f_{{\mathrm{1}}}}  \ottsym{,}  \ottnt{f_{{\mathrm{2}}}}  \ottsym{)}$ for some $ \mathit{op} $, $\ottnt{f_{{\mathrm{1}}}}$, and $\ottnt{f_{{\mathrm{2}}}}$.
    $S  \ottsym{(}  \ottnt{f}  \ottsym{)}  \ottsym{=}  \mathit{op} \, \ottsym{(}  S  \ottsym{(}  \ottnt{f_{{\mathrm{1}}}}  \ottsym{)}  \ottsym{,}  S  \ottsym{(}  \ottnt{f_{{\mathrm{2}}}}  \ottsym{)}  \ottsym{)}$ is not a value.
    Contradiction.

    \case{\rnp{T\_App}}
    We are given $\ottnt{f}  \ottsym{=}  \ottnt{f_{{\mathrm{1}}}} \, \ottnt{f_{{\mathrm{2}}}}$ for some $\ottnt{f_{{\mathrm{1}}}}$ and $\ottnt{f_{{\mathrm{2}}}}$.
    $S  \ottsym{(}  \ottnt{f}  \ottsym{)}  \ottsym{=}  S  \ottsym{(}  \ottnt{f_{{\mathrm{1}}}}  \ottsym{)} \, S  \ottsym{(}  \ottnt{f_{{\mathrm{2}}}}  \ottsym{)}$ is not a value.
    Contradiction.

    \case{\rnp{T\_Cast}}
    We are given $\ottnt{f}  \ottsym{=}  \ottnt{f_{{\mathrm{1}}}}  \ottsym{:}   \ottnt{U'} \Rightarrow  \unskip ^ { \ell }  \! \ottnt{U} $ for some $\ottnt{f_{{\mathrm{1}}}}$, $\ottnt{U'}$ and $\ell$.
    By inversion, $ \emptyset   \vdash_{\textsf{\textup{B}\relax}\relax}  \ottnt{f_{{\mathrm{1}}}}  \ottsym{:}  \ottnt{U'}$ and $\ottnt{U'}  \sim  \ottnt{U}$.
    By case analysis on the type consistency relation.

    \begin{caseanalysis}
      \case{\rnp{C\_Base}}
      We are given $\ottnt{f}  \ottsym{=}  \ottnt{f_{{\mathrm{1}}}}  \ottsym{:}   \iota \Rightarrow  \unskip ^ { \ell }  \! \iota $ for some $\iota$.
      $S  \ottsym{(}  \ottnt{f}  \ottsym{)}  \ottsym{=}  S  \ottsym{(}  \ottnt{f_{{\mathrm{1}}}}  \ottsym{)}  \ottsym{:}   \iota \Rightarrow  \unskip ^ { \ell }  \! \iota $ is not a value.
      Contradiction.

      \case{\rnp{C\_TyVar}}
      We are given $\ottnt{f}  \ottsym{=}  \ottnt{f_{{\mathrm{1}}}}  \ottsym{:}   \ottmv{X} \Rightarrow  \unskip ^ { \ell }  \! \ottmv{X} $ for some $\ottmv{X}$.
      $S  \ottsym{(}  \ottnt{f}  \ottsym{)}  \ottsym{=}  S  \ottsym{(}  \ottnt{f_{{\mathrm{1}}}}  \ottsym{)}  \ottsym{:}   \ottnt{T} \Rightarrow  \unskip ^ { \ell }  \! \ottnt{T} $ for some $\ottnt{T}$.

      By case analysis on the structure of $\ottnt{T}$.

      \begin{caseanalysis}
        \case{$\ottnt{T}  \ottsym{=}  \iota$ for some $\iota$}
        $S  \ottsym{(}  \ottnt{f}  \ottsym{)}  \ottsym{=}  S  \ottsym{(}  \ottnt{f_{{\mathrm{1}}}}  \ottsym{)}  \ottsym{:}   \iota \Rightarrow  \unskip ^ { \ell }  \! \iota $ is not a value.
        Contradiction.

        \case{$\ottnt{T}  \ottsym{=}  \ottmv{X'}$ for some $\ottmv{X'}$}
        $S  \ottsym{(}  \ottnt{f}  \ottsym{)}  \ottsym{=}  S  \ottsym{(}  \ottnt{f_{{\mathrm{1}}}}  \ottsym{)}  \ottsym{:}   \ottmv{X'} \Rightarrow  \unskip ^ { \ell }  \! \ottmv{X'} $ is not a value.
        Contradiction.

        \case{$\ottnt{T}  \ottsym{=}  \ottnt{T_{{\mathrm{1}}}}  \!\rightarrow\!  \ottnt{T_{{\mathrm{2}}}}$ for some $\ottnt{T_{{\mathrm{1}}}}$ and $\ottnt{T_{{\mathrm{2}}}}$}
        \leavevmode\\
        We are given $S  \ottsym{(}  \ottnt{f}  \ottsym{)}  \ottsym{=}  S  \ottsym{(}  \ottnt{f_{{\mathrm{1}}}}  \ottsym{)}  \ottsym{:}   \ottnt{T_{{\mathrm{1}}}}  \!\rightarrow\!  \ottnt{T_{{\mathrm{2}}}} \Rightarrow  \unskip ^ { \ell }  \! \ottnt{T_{{\mathrm{1}}}}  \!\rightarrow\!  \ottnt{T_{{\mathrm{2}}}} $.
        By Lemma \ref{lem:canonical_forms}, $S  \ottsym{(}  \ottnt{f_{{\mathrm{1}}}}  \ottsym{)}$ is a value.
        By the IH, $\ottnt{f_{{\mathrm{1}}}}$ is a value.

        By Lemma \ref{lem:canonical_forms}, the value $\ottnt{f_{{\mathrm{1}}}}$ is not typed at a type variable.
        Contradiction.
      \end{caseanalysis}

      \case{\rnp{C\_DynL}}
      We are given $\ottnt{f}  \ottsym{=}  \ottnt{f_{{\mathrm{1}}}}  \ottsym{:}   \star \Rightarrow  \unskip ^ { \ell }  \! \ottnt{U} $.
      $S  \ottsym{(}  \ottnt{f}  \ottsym{)}  \ottsym{=}  S  \ottsym{(}  \ottnt{f_{{\mathrm{1}}}}  \ottsym{)}  \ottsym{:}   \star \Rightarrow  \unskip ^ { \ell }  \! S  \ottsym{(}  \ottnt{U}  \ottsym{)} $ is not a value.
      Contradiction.

      \case{\rnp{C\_DynR}}
      We are given $\ottnt{f}  \ottsym{=}  \ottnt{f_{{\mathrm{1}}}}  \ottsym{:}   \ottnt{U'} \Rightarrow  \unskip ^ { \ell }  \! \star $.
      By Lemma \ref{lem:canonical_forms},
      $S  \ottsym{(}  \ottnt{f_{{\mathrm{1}}}}  \ottsym{)}$ is a value and $S  \ottsym{(}  \ottnt{U'}  \ottsym{)}$ is a ground type.
      By the IH, $\ottnt{f_{{\mathrm{1}}}}$ is a value.

      By case analysis on the structure of $\ottnt{U'}$.

      \begin{caseanalysis}
        \case{$\ottnt{U'}  \ottsym{=}  \star$}
        $S  \ottsym{(}  \ottnt{f}  \ottsym{)}  \ottsym{=}  S  \ottsym{(}  \ottnt{f_{{\mathrm{1}}}}  \ottsym{)}  \ottsym{:}   \star \Rightarrow  \unskip ^ { \ell }  \! \star $ is not a value.
        Contradiction.

        \case{$\ottnt{U'}  \ottsym{=}  \iota$}
        We are given $S  \ottsym{(}  \ottnt{f}  \ottsym{)}  \ottsym{=}  S  \ottsym{(}  \ottnt{f_{{\mathrm{1}}}}  \ottsym{)}  \ottsym{:}   \iota \Rightarrow  \unskip ^ { \ell }  \! \star $.
        So, $\ottnt{f}  \ottsym{=}  \ottnt{f_{{\mathrm{1}}}}  \ottsym{:}   \iota \Rightarrow  \unskip ^ { \ell }  \! \star $ is a value.

        \case{$\ottnt{U'}  \ottsym{=}  \ottmv{X}$}
        By Lemma \ref{lem:canonical_forms}, the value $\ottnt{f_{{\mathrm{1}}}}$ is not typed at a type variable.
        Contradiction.

        \case{$\ottnt{U'}  \ottsym{=}  \star  \!\rightarrow\!  \star$}
        By definition, $\ottnt{f}  \ottsym{=}  \ottnt{f_{{\mathrm{1}}}}  \ottsym{:}   \star  \!\rightarrow\!  \star \Rightarrow  \unskip ^ { \ell }  \! \star $ is a value.

        \case{$\ottnt{U}  \ottsym{=}  \ottnt{U_{{\mathrm{1}}}}  \!\rightarrow\!  \ottnt{U_{{\mathrm{2}}}}$ where $\ottnt{U}  \neq  \star  \!\rightarrow\!  \star$}
        $S  \ottsym{(}  \ottnt{f}  \ottsym{)}  \ottsym{=}  S  \ottsym{(}  \ottnt{f_{{\mathrm{1}}}}  \ottsym{)}  \ottsym{:}   S  \ottsym{(}  \ottnt{U_{{\mathrm{1}}}}  \ottsym{)}  \!\rightarrow\!  S  \ottsym{(}  \ottnt{U_{{\mathrm{2}}}}  \ottsym{)} \Rightarrow  \unskip ^ { \ell }  \! \star $ is not a value.
        Contradiction.
      \end{caseanalysis}

      \case{\rnp{C\_Arrow}}
      We are given $\ottnt{f}  \ottsym{=}  \ottnt{f_{{\mathrm{1}}}}  \ottsym{:}   \ottnt{U'_{{\mathrm{1}}}}  \!\rightarrow\!  \ottnt{U'_{{\mathrm{2}}}} \Rightarrow  \unskip ^ { \ell }  \! \ottnt{U_{{\mathrm{1}}}}  \!\rightarrow\!  \ottnt{U_{{\mathrm{2}}}} $
      for some $\ottnt{U_{{\mathrm{1}}}}$, $\ottnt{U_{{\mathrm{2}}}}$, $\ottnt{U'_{{\mathrm{1}}}}$, and $\ottnt{U'_{{\mathrm{2}}}}$
      where $\ottnt{U}  \ottsym{=}  \ottnt{U_{{\mathrm{1}}}}  \!\rightarrow\!  \ottnt{U_{{\mathrm{2}}}}$ and $\ottnt{U'}  \ottsym{=}  \ottnt{U'_{{\mathrm{1}}}}  \!\rightarrow\!  \ottnt{U'_{{\mathrm{2}}}}$.

      By Lemma \ref{lem:canonical_forms}, $S  \ottsym{(}  \ottnt{f_{{\mathrm{1}}}}  \ottsym{)}$ is a value.
      By the IH, $\ottnt{f_{{\mathrm{1}}}}$ is a value.

      Finally, $\ottnt{f}  \ottsym{=}  \ottnt{f_{{\mathrm{1}}}}  \ottsym{:}   \ottnt{U'_{{\mathrm{1}}}}  \!\rightarrow\!  \ottnt{U'_{{\mathrm{2}}}} \Rightarrow  \unskip ^ { \ell }  \! \ottnt{U_{{\mathrm{1}}}}  \!\rightarrow\!  \ottnt{U_{{\mathrm{2}}}} $ is a value.
    \end{caseanalysis}

    \case{\rnp{T\_LetP}}
    We are given $\ottnt{f}  \ottsym{=}   \textsf{\textup{let}\relax} \,  \ottmv{x}  =   \Lambda    \overrightarrow{ \ottmv{X_{\ottmv{i}}} }  .\,  w_{{\mathrm{1}}}   \textsf{\textup{ in }\relax}  \ottnt{f_{{\mathrm{2}}}} $
    for some $\ottmv{x}$, $ \overrightarrow{ \ottmv{X_{\ottmv{i}}} } $, $w_{{\mathrm{1}}}$, and $\ottnt{f_{{\mathrm{2}}}}$.
    Obviously, $S  \ottsym{(}  \ottnt{f}  \ottsym{)}$ is not a value.
    Contradiction.

    \case{\rnp{T\_Blame}}
    We are given $\ottnt{f}  \ottsym{=}  \textsf{\textup{blame}\relax} \, \ell$ for some $\ell$.
    For any $S$,$ S  \ottsym{(}  \ottnt{f}  \ottsym{)}  \ottsym{=}  \textsf{\textup{blame}\relax} \, \ell$.  Contradiction.
\qedhere
  \end{caseanalysis}
\end{proof}

\ifrestate
\lemCompletenessValueEvalStep*
\else
\begin{lemmaA}[name=,restate=lemCompletenessValueEvalStep] \label{lem:completeness_value_eval_step}
  If
  $ \emptyset   \vdash  \ottnt{f}  \ottsym{:}  \ottnt{U}$ and
  $S  \ottsym{(}  \ottnt{f}  \ottsym{)} \,  \xmapsto{ \mathmakebox[0.4em]{} S'_{{\mathrm{1}}} \mathmakebox[0.3em]{} }  \, \ottnt{f'}$ and
  $\ottnt{f'} \,  \centernot{\xmapsto{ \mathmakebox[0.4em]{} S_{{\mathrm{0}}} \mathmakebox[0.3em]{} } }\hspace{-0.4em}{}^\ast \hspace{0.2em}  \, \textsf{\textup{blame}\relax} \, \ell$ for any $S_{{\mathrm{0}}}$ and $\ell$,
  then
  $\ottnt{f} \,  \xmapsto{ \mathmakebox[0.4em]{} S' \mathmakebox[0.3em]{} }  \, \ottnt{f''}$ and
  $S''  \ottsym{(}  \ottnt{f''}  \ottsym{)}  \ottsym{=}  \ottnt{f'}$
  for some $S'$, $S''$, and $\ottnt{f''}$.
\end{lemmaA}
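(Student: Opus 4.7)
My plan is to proceed by case analysis on the evaluation rule applied to $S(f) \xmapsto{S'_1} f'$. The rule \textsc{E\_Abort} is immediately impossible: it would make $f' = \textsf{blame}\,\ell$, but then $f' \xmapsto{[\,]}^{\ast} \textsf{blame}\,\ell$ in zero steps, contradicting the hypothesis. So the real work is in the \textsc{E\_Step} case, where $S(f) = E[r]$, $r \xrightarrow{S'_1} r'$, and $f' = S'_1(E[r'])$.

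The first thing I would prove is an auxiliary decomposition statement: if $\emptyset \vdash f : U$ and $S(f) = E[r]$, then $f = E_0[r_0]$ with $S(E_0) = E$ and $S(r_0) = r$. This goes by structural induction on $E$ (or equivalently on the typing derivation of $f$), using Lemma~\ref{lem:subst_value_is_value} in the $w\,E$ and $\mathit{op}(w,E)$ subcases to pull back values on the left of the hole to values in $f$. With that in hand, I would then case-analyze the reduction rule applied to $r$ and show that $r_0$ can take a corresponding step $r_0 \xrightarrow{S'} r_0'$ with $S''(r_0') = r'$ for some $S''$; the result then follows by \textsc{E\_Step} for $f$ and the fact that $S''$ agrees with $S'_1 \circ S$ on all variables in $E[r']$.

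For the non-cast redexes (\textsc{R\_Op}, \textsc{R\_Beta}, \textsc{R\_AppCast}, \textsc{R\_LetP}) the analysis is routine: the shape of $S(r_0) = r$ forces $r_0$ to have the same top-level shape (using Lemma~\ref{lem:subst_value_is_value} and the canonical forms lemma for the value subterms), so the same rule fires with $S' = [\,]$ and $S'' = S$. For the identity, succeed, ground, and expand rules, the consistency relation $\sim$ is strict enough with respect to type variables (only $X \sim X$ among type-variable cases) that, together with $S(U) = S(U')$ on both sides of a cast, the source types $U,U'$ in $r_0$ are pinned down to match those in $r$, and once again the same rule applies with $S' = [\,]$. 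The case \textsc{R\_Fail} is excluded by the hypothesis that $f'$ does not reduce to blame, since $r'$ would itself be blame and $f' = S'_1(E[\textsf{blame}\,\ell])$ reaches blame by \textsc{E\_Abort}.

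The main obstacle, and the case I would focus on, is \textsc{R\_InstBase} and \textsc{R\_InstArrow}, where $r = w : \iota \Rightarrow^{\ell_1} \star \Rightarrow^{\ell_2} X$ (or the arrow analogue) and $S'_1$ substitutes for $X$. Here $r_0 = w_0 : U_1 \Rightarrow^{\ell_1} \star \Rightarrow^{\ell_2} U_2$ with $S(w_0) = w$, $S(U_1) = \iota$, and $S(U_2) = X$. Splitting on whether $U_2$ is itself a type variable $Y$ (and whether $U_1$ is $\iota$ or a type variable that $S$ already maps to $\iota$) determines which rule applies to $r_0$: either \textsc{R\_InstBase}/\textsc{R\_InstArrow} (generating $S' = [Y := \iota]$, with $S''$ equal to $S$ on the domain of $S$ except possibly removing $Y$), or \textsc{R\_Succeed}/\textsc{R\_IdBase} when the relevant type variables are already instantiated by $S$ (so $S' = [\,]$ and $S'' = S$). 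In every subcase I need to exhibit a concrete $S''$ matching $S$ suitably extended so that $S''(r_0') = r'$; the freshness conventions for \textsc{R\_InstArrow} (fresh type variables chosen outside $\textit{ftv}(f) \cup \textit{ftv}(S)$) guarantee that the new witnesses do not clash. This book-keeping on substitutions is the only genuinely fiddly part of the proof.
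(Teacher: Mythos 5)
Your proposal is correct and follows essentially the same route as the paper's proof: dismiss \rnp{E\_Abort} via the no-blame hypothesis, pull the evaluation context back through the substitution (the paper does this implicitly, writing $\ottnt{f}  \ottsym{=}  \ottnt{E}  [  \ottnt{f_{{\mathrm{1}}}}  ]$ and using Lemma~\ref{lem:context_inversion} plus Lemma~\ref{lem:subst_value_is_value} where you state an explicit decomposition lemma), and then case-analyze the redex, with the real work in the cast cases where the relationship between $S  \ottsym{(}  \ottmv{X}  \ottsym{)}$ and the fired rule determines whether the source steps by an instantiation rule or an identity/succeed rule. The substitution bookkeeping you flag as the fiddly part is exactly where the paper spends its effort, including using the no-blame hypothesis to rule out the $\iota$-tag-versus-function-type mismatch and \rnp{R\_Fail}.
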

\fi

\begin{proof}
  By case analysis on the evaluation rule applied to $S  \ottsym{(}  \ottnt{f}  \ottsym{)}$.

  \begin{caseanalysis}
    \case{\rnp{E\_Step}}
    \leavevmode\\
    There exist $\ottnt{E}$, $\ottnt{f_{{\mathrm{1}}}}$, and $\ottnt{f'_{{\mathrm{1}}}}$ such that
    $S  \ottsym{(}  \ottnt{f_{{\mathrm{1}}}}  \ottsym{)} \,  \xrightarrow{ \mathmakebox[0.4em]{} [  ] \mathmakebox[0.3em]{} }  \, \ottnt{f'_{{\mathrm{1}}}}$,
    $S  \ottsym{(}  \ottnt{E}  \ottsym{)}  [  S  \ottsym{(}  \ottnt{f_{{\mathrm{1}}}}  \ottsym{)}  ] \,  \xmapsto{ \mathmakebox[0.4em]{} [  ] \mathmakebox[0.3em]{} }  \, [  ]  \ottsym{(}  S  \ottsym{(}  \ottnt{E}  \ottsym{)}  [  \ottnt{f'_{{\mathrm{1}}}}  ]  \ottsym{)}$,
    $\ottnt{f}  \ottsym{=}  \ottnt{E}  [  \ottnt{f_{{\mathrm{1}}}}  ]$, and $\ottnt{f'}  \ottsym{=}  S  \ottsym{(}  \ottnt{E}  \ottsym{)}  [  \ottnt{f'_{{\mathrm{1}}}}  ]$.

    By Lemma \ref{lem:context_inversion}, $ \emptyset   \vdash  \ottnt{f_{{\mathrm{1}}}}  \ottsym{:}  \ottnt{U_{{\mathrm{1}}}}$ for some $\ottnt{U_{{\mathrm{1}}}}$.
    By case analysis on the typing derivation on $\ottnt{f_{{\mathrm{1}}}}$.

    \begin{caseanalysis}
      \case{\rnp{T\_VarP}, \rnp{T\_Const}, \rnp{T\_Abs}, \rnp{T\_Blame}}
      Cannot happen.

      \case{\rnp{T\_Op}}
      We are given $ \emptyset   \vdash  \mathit{op} \, \ottsym{(}  \ottnt{f_{{\mathrm{11}}}}  \ottsym{,}  \ottnt{f_{{\mathrm{12}}}}  \ottsym{)}  \ottsym{:}  \iota_{{\mathrm{1}}}$
      for some $ \mathit{op} $, $\ottnt{f_{{\mathrm{11}}}}$, $\ottnt{f_{{\mathrm{12}}}}$, and $\iota_{{\mathrm{1}}}$
      where $\ottnt{U_{{\mathrm{1}}}}  \ottsym{=}  \iota_{{\mathrm{1}}}$.
      By inversion, we have $ \mathit{ty} ( \mathit{op} )   \ottsym{=}  \iota_{{\mathrm{11}}}  \!\rightarrow\!  \iota_{{\mathrm{12}}}  \!\rightarrow\!  \iota_{{\mathrm{1}}}$,
      $ \emptyset   \vdash  \ottnt{f_{{\mathrm{11}}}}  \ottsym{:}  \iota_{{\mathrm{11}}}$ and $ \emptyset   \vdash  \ottnt{f_{{\mathrm{12}}}}  \ottsym{:}  \iota_{{\mathrm{12}}}$
      for some $\iota_{{\mathrm{11}}}$ and $\iota_{{\mathrm{12}}}$.
      By case analysis on the reduction rule applied to $S  \ottsym{(}  \ottnt{f_{{\mathrm{1}}}}  \ottsym{)}$.
      \begin{caseanalysis}
        \case{\rnp{R\_Op}}
        We are given $\mathit{op} \, \ottsym{(}  w'_{{\mathrm{11}}}  \ottsym{,}  w'_{{\mathrm{12}}}  \ottsym{)} \,  \xrightarrow{ \mathmakebox[0.4em]{} [  ] \mathmakebox[0.3em]{} }  \,  \llbracket\mathit{op}\rrbracket ( w'_{{\mathrm{11}}} ,  w'_{{\mathrm{12}}} ) $
        where $S  \ottsym{(}  \ottnt{f_{{\mathrm{11}}}}  \ottsym{)}  \ottsym{=}  w'_{{\mathrm{11}}}$, $S  \ottsym{(}  \ottnt{f_{{\mathrm{12}}}}  \ottsym{)}  \ottsym{=}  w'_{{\mathrm{12}}}$, and $S'  \ottsym{=}  [  ]$.
        By Lemma \ref{lem:subst_value_is_value}, $\ottnt{f_{{\mathrm{11}}}}$ and $\ottnt{f_{{\mathrm{12}}}}$ are values.
        So, there exist $w_{{\mathrm{11}}}$ and $w_{{\mathrm{12}}}$
        such that $\ottnt{f_{{\mathrm{11}}}}  \ottsym{=}  w_{{\mathrm{11}}}$ and $\ottnt{f_{{\mathrm{12}}}}  \ottsym{=}  w_{{\mathrm{12}}}$.
        By \rnp{R\_Op}, $\mathit{op} \, \ottsym{(}  w_{{\mathrm{11}}}  \ottsym{,}  w_{{\mathrm{12}}}  \ottsym{)} \,  \xrightarrow{ \mathmakebox[0.4em]{} [  ] \mathmakebox[0.3em]{} }  \,  \llbracket\mathit{op}\rrbracket ( w_{{\mathrm{11}}} ,  w_{{\mathrm{12}}} ) $.
        By \rnp{E\_Step}, $\ottnt{E}  [  \mathit{op} \, \ottsym{(}  w_{{\mathrm{11}}}  \ottsym{,}  w_{{\mathrm{12}}}  \ottsym{)}  ] \,  \xrightarrow{ \mathmakebox[0.4em]{} [  ] \mathmakebox[0.3em]{} }  \, [  ]  \ottsym{(}  \ottnt{E}  [   \llbracket\mathit{op}\rrbracket ( w_{{\mathrm{11}}} ,  w_{{\mathrm{12}}} )   ]  \ottsym{)}$.\\
        Let $S''  \ottsym{=}  S$, then $S  \ottsym{(}  \ottnt{E}  [  \mathit{op} \, \ottsym{(}  w_{{\mathrm{11}}}  \ottsym{,}  w_{{\mathrm{12}}}  \ottsym{)}  ]  \ottsym{)} \,  \xrightarrow{ \mathmakebox[0.4em]{} [  ] \mathmakebox[0.3em]{} }  \, S  \ottsym{(}  \ottnt{E}  \ottsym{)}  [   \llbracket\mathit{op}\rrbracket ( w'_{{\mathrm{11}}} ,  w'_{{\mathrm{12}}} )   ]$.

        \otherwise Cannot happen.
      \end{caseanalysis}

      \case{\rnp{T\_App}}
      We are given $ \emptyset   \vdash  \ottnt{f_{{\mathrm{11}}}} \, \ottnt{f_{{\mathrm{12}}}}  \ottsym{:}  \ottnt{U_{{\mathrm{1}}}}$ for some $\ottnt{f_{{\mathrm{11}}}}$ and $\ottnt{f_{{\mathrm{12}}}}$.
      By inversion, we have $ \emptyset   \vdash  \ottnt{f_{{\mathrm{11}}}}  \ottsym{:}  \ottnt{U'_{{\mathrm{1}}}}  \!\rightarrow\!  \ottnt{U_{{\mathrm{1}}}}$ and $ \emptyset   \vdash  \ottnt{f_{{\mathrm{12}}}}  \ottsym{:}  \ottnt{U'_{{\mathrm{1}}}}$
      for some $\ottnt{U'_{{\mathrm{1}}}}$.
      By case analysis on the reduction rule applied to $S  \ottsym{(}  \ottnt{f_{{\mathrm{1}}}}  \ottsym{)}$.
      \begin{caseanalysis}
        \case{\rnp{R\_Beta}}
        We are given $\ottsym{(}   \lambda  \ottmv{x} \!:\!  \ottnt{U'_{{\mathrm{1}}}}  .\,  \ottnt{f'_{{\mathrm{11}}}}   \ottsym{)} \, w' \,  \xrightarrow{ \mathmakebox[0.4em]{} [  ] \mathmakebox[0.3em]{} }  \, \ottnt{f'_{{\mathrm{11}}}}  [  \ottmv{x}  \ottsym{:=}  w'  ]$
        where $S  \ottsym{(}  \ottnt{f_{{\mathrm{11}}}}  \ottsym{)}  \ottsym{=}   \lambda  \ottmv{x} \!:\!  \ottnt{U'_{{\mathrm{1}}}}  .\,  \ottnt{f'_{{\mathrm{11}}}} $, $S  \ottsym{(}  \ottnt{f_{{\mathrm{12}}}}  \ottsym{)}  \ottsym{=}  w'$,
        $\ottnt{f'_{{\mathrm{1}}}}  \ottsym{=}  \ottnt{f'_{{\mathrm{11}}}}  [  \ottmv{x}  \ottsym{:=}  w'  ]$, and $S'  \ottsym{=}  [  ]$.
        By definition, $\ottnt{f_{{\mathrm{11}}}}  \ottsym{=}   \lambda  \ottmv{x} \!:\!  \ottnt{U''_{{\mathrm{1}}}}  .\,  \ottnt{f''_{{\mathrm{11}}}} $
        where $S  \ottsym{(}  \ottnt{U''_{{\mathrm{1}}}}  \ottsym{)}  \ottsym{=}  \ottnt{U'_{{\mathrm{1}}}}$ and $S  \ottsym{(}  \ottnt{f''_{{\mathrm{11}}}}  \ottsym{)}  \ottsym{=}  \ottnt{f'_{{\mathrm{11}}}}$.
        By Lemma \ref{lem:subst_value_is_value}, $\ottnt{f_{{\mathrm{12}}}}$ is a value.
        By \rnp{R\_Beta}, $\ottsym{(}   \lambda  \ottmv{x} \!:\!  \ottnt{U''_{{\mathrm{1}}}}  .\,  \ottnt{f''_{{\mathrm{11}}}}   \ottsym{)} \, \ottnt{f_{{\mathrm{12}}}} \,  \xrightarrow{ \mathmakebox[0.4em]{} [  ] \mathmakebox[0.3em]{} }  \, \ottnt{f''_{{\mathrm{11}}}}  [  \ottmv{x}  \ottsym{:=}  \ottnt{f_{{\mathrm{12}}}}  ]$.
        Also, $S  \ottsym{(}  \ottnt{f''_{{\mathrm{11}}}}  [  \ottmv{x}  \ottsym{:=}  \ottnt{f_{{\mathrm{12}}}}  ]  \ottsym{)}  \ottsym{=}  \ottnt{f'_{{\mathrm{11}}}}  [  \ottmv{x}  \ottsym{:=}  w'  ]$.
        By \rnp{E\_Step}, $\ottnt{E}  [  \ottsym{(}   \lambda  \ottmv{x} \!:\!  \ottnt{U''_{{\mathrm{1}}}}  .\,  \ottnt{f''_{{\mathrm{11}}}}   \ottsym{)} \, \ottnt{f_{{\mathrm{12}}}}  ] \,  \xmapsto{ \mathmakebox[0.4em]{} [  ] \mathmakebox[0.3em]{} }  \, [  ]  \ottsym{(}  \ottnt{E}  [  \ottnt{f''_{{\mathrm{11}}}}  [  \ottmv{x}  \ottsym{:=}  \ottnt{f_{{\mathrm{12}}}}  ]  ]  \ottsym{)}$.
        Let $S''  \ottsym{=}  S$, then $S''  \ottsym{(}  \ottnt{E}  [  \ottnt{f''_{{\mathrm{11}}}}  [  \ottmv{x}  \ottsym{:=}  \ottnt{f_{{\mathrm{12}}}}  ]  ]  \ottsym{)}  \ottsym{=}  S  \ottsym{(}  \ottnt{E}  \ottsym{)}  [  \ottnt{f'_{{\mathrm{11}}}}  [  \ottmv{x}  \ottsym{:=}  w'  ]  ]$.

        \case{\rnp{R\_AppCast}}
        \leavevmode\\
        We are given $\ottsym{(}  w'_{{\mathrm{11}}}  \ottsym{:}   \ottnt{U'_{{\mathrm{11}}}}  \!\rightarrow\!  \ottnt{U'_{{\mathrm{12}}}} \Rightarrow  \unskip ^ { \ell }  \! \ottnt{U'_{{\mathrm{13}}}}  \!\rightarrow\!  \ottnt{U'_{{\mathrm{14}}}}   \ottsym{)} \, w'_{{\mathrm{12}}} \,  \xrightarrow{ \mathmakebox[0.4em]{} [  ] \mathmakebox[0.3em]{} }  \, \ottsym{(}  w'_{{\mathrm{11}}} \, \ottsym{(}  w'_{{\mathrm{12}}}  \ottsym{:}   \ottnt{U'_{{\mathrm{13}}}} \Rightarrow  \unskip ^ {  \bar{ \ell }  }  \! \ottnt{U'_{{\mathrm{11}}}}   \ottsym{)}  \ottsym{)}  \ottsym{:}   \ottnt{U'_{{\mathrm{12}}}} \Rightarrow  \unskip ^ { \ell }  \! \ottnt{U'_{{\mathrm{14}}}} $
        where $S  \ottsym{(}  \ottnt{f_{{\mathrm{11}}}}  \ottsym{)}  \ottsym{=}  w'_{{\mathrm{11}}}  \ottsym{:}   \ottnt{U'_{{\mathrm{11}}}}  \!\rightarrow\!  \ottnt{U'_{{\mathrm{12}}}} \Rightarrow  \unskip ^ { \ell }  \! \ottnt{U'_{{\mathrm{13}}}}  \!\rightarrow\!  \ottnt{U'_{{\mathrm{14}}}} $, $S  \ottsym{(}  \ottnt{f_{{\mathrm{12}}}}  \ottsym{)}  \ottsym{=}  w'_{{\mathrm{12}}}$,
        $\ottnt{f'_{{\mathrm{1}}}}  \ottsym{=}  \ottsym{(}  w'_{{\mathrm{11}}} \, \ottsym{(}  w'_{{\mathrm{12}}}  \ottsym{:}   \ottnt{U'_{{\mathrm{13}}}} \Rightarrow  \unskip ^ {  \bar{ \ell }  }  \! \ottnt{U'_{{\mathrm{11}}}}   \ottsym{)}  \ottsym{)}  \ottsym{:}   \ottnt{U'_{{\mathrm{12}}}} \Rightarrow  \unskip ^ { \ell }  \! \ottnt{U'_{{\mathrm{14}}}} $,
        and $S'  \ottsym{=}  [  ]$.
        By definition, $\ottnt{f_{{\mathrm{11}}}}  \ottsym{=}  \ottnt{f''_{{\mathrm{11}}}}  \ottsym{:}   \ottnt{U''_{{\mathrm{11}}}}  \!\rightarrow\!  \ottnt{U''_{{\mathrm{12}}}} \Rightarrow  \unskip ^ { \ell }  \! \ottnt{U''_{{\mathrm{13}}}}  \!\rightarrow\!  \ottnt{U''_{{\mathrm{14}}}} $
        where $S  \ottsym{(}  \ottnt{f''_{{\mathrm{11}}}}  \ottsym{)}  \ottsym{=}  w'_{{\mathrm{11}}}$, $S  \ottsym{(}  \ottnt{U''_{{\mathrm{11}}}}  \ottsym{)}  \ottsym{=}  \ottnt{U'_{{\mathrm{11}}}}$, $S  \ottsym{(}  \ottnt{U''_{{\mathrm{12}}}}  \ottsym{)}  \ottsym{=}  \ottnt{U'_{{\mathrm{12}}}}$,
        $S  \ottsym{(}  \ottnt{U''_{{\mathrm{13}}}}  \ottsym{)}  \ottsym{=}  \ottnt{U'_{{\mathrm{13}}}}$, and $S  \ottsym{(}  \ottnt{U''_{{\mathrm{14}}}}  \ottsym{)}  \ottsym{=}  \ottnt{U'_{{\mathrm{14}}}}$.
        By Lemma \ref{lem:subst_value_is_value}, $\ottnt{f''_{{\mathrm{11}}}}$ and $\ottnt{f_{{\mathrm{12}}}}$ are values.
        By \rnp{R\_AppCast},
        $\ottsym{(}  \ottnt{f''_{{\mathrm{11}}}}  \ottsym{:}   \ottnt{U''_{{\mathrm{11}}}}  \!\rightarrow\!  \ottnt{U''_{{\mathrm{12}}}} \Rightarrow  \unskip ^ { \ell }  \! \ottnt{U''_{{\mathrm{13}}}}  \!\rightarrow\!  \ottnt{U''_{{\mathrm{14}}}}   \ottsym{)} \, \ottnt{f_{{\mathrm{12}}}} \,  \xrightarrow{ \mathmakebox[0.4em]{} [  ] \mathmakebox[0.3em]{} }  \, \ottsym{(}  \ottnt{f''_{{\mathrm{11}}}} \, \ottsym{(}  \ottnt{f_{{\mathrm{12}}}}  \ottsym{:}   \ottnt{U''_{{\mathrm{13}}}} \Rightarrow  \unskip ^ {  \bar{ \ell }  }  \! \ottnt{U''_{{\mathrm{11}}}}   \ottsym{)}  \ottsym{)}  \ottsym{:}   \ottnt{U''_{{\mathrm{12}}}} \Rightarrow  \unskip ^ { \ell }  \! \ottnt{U''_{{\mathrm{14}}}} $.
        Also, $S  \ottsym{(}  \ottsym{(}  \ottnt{f''_{{\mathrm{11}}}} \, \ottsym{(}  \ottnt{f_{{\mathrm{12}}}}  \ottsym{:}   \ottnt{U''_{{\mathrm{13}}}} \Rightarrow  \unskip ^ {  \bar{ \ell }  }  \! \ottnt{U''_{{\mathrm{11}}}}   \ottsym{)}  \ottsym{)}  \ottsym{:}   \ottnt{U''_{{\mathrm{12}}}} \Rightarrow  \unskip ^ { \ell }  \! \ottnt{U''_{{\mathrm{14}}}}   \ottsym{)}  \ottsym{=}  \ottsym{(}  w'_{{\mathrm{11}}} \, \ottsym{(}  w'_{{\mathrm{12}}}  \ottsym{:}   \ottnt{U''_{{\mathrm{13}}}} \Rightarrow  \unskip ^ {  \bar{ \ell }  }  \! \ottnt{U''_{{\mathrm{11}}}}   \ottsym{)}  \ottsym{)}  \ottsym{:}   \ottnt{U''_{{\mathrm{12}}}} \Rightarrow  \unskip ^ { \ell }  \! \ottnt{U''_{{\mathrm{14}}}} $.
        By \rnp{E\_Step},
        $\ottnt{E}  [  \ottsym{(}  \ottnt{f''_{{\mathrm{11}}}}  \ottsym{:}   \ottnt{U''_{{\mathrm{11}}}}  \!\rightarrow\!  \ottnt{U''_{{\mathrm{12}}}} \Rightarrow  \unskip ^ { \ell }  \! \ottnt{U''_{{\mathrm{13}}}}  \!\rightarrow\!  \ottnt{U''_{{\mathrm{14}}}}   \ottsym{)} \, \ottnt{f_{{\mathrm{12}}}}  ] \,  \xmapsto{ \mathmakebox[0.4em]{} [  ] \mathmakebox[0.3em]{} }  \, [  ]  \ottsym{(}  \ottnt{E}  [  \ottsym{(}  \ottnt{f''_{{\mathrm{11}}}} \, \ottsym{(}  \ottnt{f_{{\mathrm{12}}}}  \ottsym{:}   \ottnt{U''_{{\mathrm{13}}}} \Rightarrow  \unskip ^ {  \bar{ \ell }  }  \! \ottnt{U''_{{\mathrm{11}}}}   \ottsym{)}  \ottsym{)}  \ottsym{:}   \ottnt{U''_{{\mathrm{12}}}} \Rightarrow  \unskip ^ { \ell }  \! \ottnt{U''_{{\mathrm{14}}}}   ]  \ottsym{)}$.
        Let $S''  \ottsym{=}  S$, then
        $S''  \ottsym{(}  \ottnt{E}  [  \ottsym{(}  \ottnt{f''_{{\mathrm{11}}}} \, \ottsym{(}  \ottnt{f_{{\mathrm{12}}}}  \ottsym{:}   \ottnt{U''_{{\mathrm{13}}}} \Rightarrow  \unskip ^ {  \bar{ \ell }  }  \! \ottnt{U''_{{\mathrm{11}}}}   \ottsym{)}  \ottsym{)}  \ottsym{:}   \ottnt{U''_{{\mathrm{12}}}} \Rightarrow  \unskip ^ { \ell }  \! \ottnt{U''_{{\mathrm{14}}}}   ]  \ottsym{)}  \ottsym{=}  S  \ottsym{(}  \ottnt{E}  \ottsym{)}  [  \ottsym{(}  w'_{{\mathrm{11}}} \, \ottsym{(}  w'_{{\mathrm{12}}}  \ottsym{:}   \ottnt{U'_{{\mathrm{13}}}} \Rightarrow  \unskip ^ {  \bar{ \ell }  }  \! \ottnt{U'_{{\mathrm{11}}}}   \ottsym{)}  \ottsym{)}  \ottsym{:}   \ottnt{U'_{{\mathrm{12}}}} \Rightarrow  \unskip ^ { \ell }  \! \ottnt{U'_{{\mathrm{14}}}}   ]$.

        \otherwise
        Cannot happen.
      \end{caseanalysis}

      \case{\rnp{T\_Cast}}
      We are given $ \emptyset   \vdash  \ottsym{(}  \ottnt{f_{{\mathrm{11}}}}  \ottsym{:}   \ottnt{U'_{{\mathrm{1}}}} \Rightarrow  \unskip ^ { \ell }  \! \ottnt{U_{{\mathrm{1}}}}   \ottsym{)}  \ottsym{:}  \ottnt{U_{{\mathrm{1}}}}$
      for some $\ottnt{f_{{\mathrm{11}}}}$, $\ottnt{U'_{{\mathrm{1}}}}$, and $\ell$.
      By inversion, we have $ \emptyset   \vdash  \ottnt{f_{{\mathrm{11}}}}  \ottsym{:}  \ottnt{U'_{{\mathrm{1}}}}$ and $\ottnt{U'_{{\mathrm{1}}}}  \sim  \ottnt{U_{{\mathrm{1}}}}$.
      By case analysis on $\ottnt{U'_{{\mathrm{1}}}}  \sim  \ottnt{U_{{\mathrm{1}}}}$.

      \begin{caseanalysis}
        \case{\rnp{C\_Base}}
        We are given $\ottnt{f_{{\mathrm{1}}}}  \ottsym{=}  \ottnt{f_{{\mathrm{11}}}}  \ottsym{:}   \iota \Rightarrow  \unskip ^ { \ell }  \! \iota $ for some $\iota$.
        By definition, $S  \ottsym{(}  \ottnt{f_{{\mathrm{1}}}}  \ottsym{)}  \ottsym{=}  S  \ottsym{(}  \ottnt{f_{{\mathrm{11}}}}  \ottsym{)}  \ottsym{:}   \iota \Rightarrow  \unskip ^ { \ell }  \! \iota $.
        It must be the case that $S  \ottsym{(}  \ottnt{f_{{\mathrm{11}}}}  \ottsym{)}$ is a value and
        $S  \ottsym{(}  \ottnt{f_{{\mathrm{11}}}}  \ottsym{)}  \ottsym{:}   \iota \Rightarrow  \unskip ^ { \ell }  \! \iota  \,  \xrightarrow{ \mathmakebox[0.4em]{} [  ] \mathmakebox[0.3em]{} }  \, S  \ottsym{(}  \ottnt{f_{{\mathrm{11}}}}  \ottsym{)}$.
        By Lemma \ref{lem:subst_value_is_value}, $\ottnt{f_{{\mathrm{11}}}}$ is a value.
        By \rnp{R\_IdBase}, $\ottnt{f_{{\mathrm{11}}}}  \ottsym{:}   \iota \Rightarrow  \unskip ^ { \ell }  \! \iota  \,  \xrightarrow{ \mathmakebox[0.4em]{} [  ] \mathmakebox[0.3em]{} }  \, \ottnt{f_{{\mathrm{11}}}}$.
        By \rnp{E\_Step}, $\ottnt{E}  [  \ottnt{f_{{\mathrm{11}}}}  \ottsym{:}   \iota \Rightarrow  \unskip ^ { \ell }  \! \iota   ] \,  \xmapsto{ \mathmakebox[0.4em]{} [  ] \mathmakebox[0.3em]{} }  \, [  ]  \ottsym{(}  \ottnt{E}  [  \ottnt{f_{{\mathrm{11}}}}  ]  \ottsym{)}$.
        Let $S''  \ottsym{=}  S$, then $S''  \ottsym{(}  \ottnt{E}  [  \ottnt{f_{{\mathrm{11}}}}  ]  \ottsym{)}  \ottsym{=}  S  \ottsym{(}  \ottnt{E}  \ottsym{)}  [  S  \ottsym{(}  \ottnt{f_{{\mathrm{11}}}}  \ottsym{)}  ]$.

        \case{\rnp{C\_TyVar}}
        We are given $\ottnt{f_{{\mathrm{1}}}}  \ottsym{=}  \ottnt{f_{{\mathrm{11}}}}  \ottsym{:}   \ottmv{X} \Rightarrow  \unskip ^ { \ell }  \! \ottmv{X} $ for some $\ottmv{X}$.
        By case analysis on the structure of $S  \ottsym{(}  \ottmv{X}  \ottsym{)}$.

        It must be the case that $S  \ottsym{(}  \ottnt{f_{{\mathrm{11}}}}  \ottsym{)}$ is a value.
        By Lemma \ref{lem:subst_value_is_value}, $\ottnt{f_{{\mathrm{11}}}}$ is a value.
        By Lemma \ref{lem:canonical_forms}, a value $\ottnt{f_{{\mathrm{11}}}}$ is not typed at a type variable.
        Contradiction.

        \case{\rnp{C\_DynL}}
        We are given $\ottnt{f_{{\mathrm{1}}}}  \ottsym{=}  \ottnt{f_{{\mathrm{11}}}}  \ottsym{:}   \star \Rightarrow  \unskip ^ { \ell }  \! \ottnt{U_{{\mathrm{1}}}} $.
        By definition, $S  \ottsym{(}  \ottnt{f_{{\mathrm{1}}}}  \ottsym{)}  \ottsym{=}  S  \ottsym{(}  \ottnt{f_{{\mathrm{11}}}}  \ottsym{)}  \ottsym{:}   \star \Rightarrow  \unskip ^ { \ell }  \! S  \ottsym{(}  \ottnt{U_{{\mathrm{1}}}}  \ottsym{)} $.
        It must be the case that $S  \ottsym{(}  \ottnt{f_{{\mathrm{11}}}}  \ottsym{)}$ is a value.
        By Lemma \ref{lem:subst_value_is_value}, $\ottnt{f_{{\mathrm{11}}}}$ is a value.

        By case analysis on the structure of $\ottnt{U_{{\mathrm{1}}}}$.

        \begin{caseanalysis}
          \case{$\ottnt{U_{{\mathrm{1}}}}  \ottsym{=}  \star$}
          We are given $S  \ottsym{(}  \ottnt{f_{{\mathrm{1}}}}  \ottsym{)}  \ottsym{=}  S  \ottsym{(}  \ottnt{f_{{\mathrm{11}}}}  \ottsym{)}  \ottsym{:}   \star \Rightarrow  \unskip ^ { \ell }  \! \star $ and $S  \ottsym{(}  \ottnt{f_{{\mathrm{11}}}}  \ottsym{)}  \ottsym{:}   \star \Rightarrow  \unskip ^ { \ell }  \! \star  \,  \xrightarrow{ \mathmakebox[0.4em]{} [  ] \mathmakebox[0.3em]{} }  \, S  \ottsym{(}  \ottnt{f_{{\mathrm{11}}}}  \ottsym{)}$.
          By \rnp{R\_IdStar}, $\ottnt{f_{{\mathrm{11}}}}  \ottsym{:}   \star \Rightarrow  \unskip ^ { \ell }  \! \star  \,  \xrightarrow{ \mathmakebox[0.4em]{} [  ] \mathmakebox[0.3em]{} }  \, \ottnt{f_{{\mathrm{11}}}}$.
          By \rnp{E\_Step}, $\ottnt{E}  [  \ottnt{f_{{\mathrm{11}}}}  \ottsym{:}   \star \Rightarrow  \unskip ^ { \ell }  \! \star   ] \,  \xmapsto{ \mathmakebox[0.4em]{} [  ] \mathmakebox[0.3em]{} }  \, [  ]  \ottsym{(}  \ottnt{E}  [  \ottnt{f_{{\mathrm{11}}}}  ]  \ottsym{)}$.
          Let $S''  \ottsym{=}  S$, then $S  \ottsym{(}  \ottnt{E}  [  \ottnt{f_{{\mathrm{11}}}}  ]  \ottsym{)}  \ottsym{=}  S  \ottsym{(}  \ottnt{E}  \ottsym{)}  [  S  \ottsym{(}  \ottnt{f_{{\mathrm{11}}}}  \ottsym{)}  ]$.

          \case{$\ottnt{U_{{\mathrm{1}}}}  \ottsym{=}  \ottnt{G}$ for some $\ottnt{G}$}
          We are given $S  \ottsym{(}  \ottnt{f_{{\mathrm{1}}}}  \ottsym{)}  \ottsym{=}  S  \ottsym{(}  \ottnt{f_{{\mathrm{11}}}}  \ottsym{)}  \ottsym{:}   \star \Rightarrow  \unskip ^ { \ell }  \! \ottnt{G} $.

          By Lemma \ref{lem:canonical_forms},
          $\ottnt{f_{{\mathrm{11}}}}  \ottsym{=}  w_{{\mathrm{11}}}  \ottsym{:}   \ottnt{G'} \Rightarrow  \unskip ^ { \ell' }  \! \star $ for some $w_{{\mathrm{11}}}$, $\ottnt{G'}$, and $\ell'$.
          Here, $\ottnt{f_{{\mathrm{1}}}}  \ottsym{=}  w_{{\mathrm{11}}}  \ottsym{:}   \ottnt{G'} \Rightarrow  \unskip ^ { \ell' }  \!  \star \Rightarrow  \unskip ^ { \ell }  \! \ottnt{G}  $ and $S  \ottsym{(}  \ottnt{f_{{\mathrm{1}}}}  \ottsym{)}  \ottsym{=}  S  \ottsym{(}  w_{{\mathrm{11}}}  \ottsym{)}  \ottsym{:}   \ottnt{G'} \Rightarrow  \unskip ^ { \ell' }  \!  \star \Rightarrow  \unskip ^ { \ell }  \! \ottnt{G}  $.

          \begin{caseanalysis}
            \case{$\ottnt{G}  \ottsym{=}  \ottnt{G'}$}
            We are given $S  \ottsym{(}  w_{{\mathrm{11}}}  \ottsym{)}  \ottsym{:}   \ottnt{G'} \Rightarrow  \unskip ^ { \ell' }  \!  \star \Rightarrow  \unskip ^ { \ell }  \! \ottnt{G}   \,  \xrightarrow{ \mathmakebox[0.4em]{} [  ] \mathmakebox[0.3em]{} }  \, S  \ottsym{(}  w_{{\mathrm{11}}}  \ottsym{)}$.
            By \rnp{R\_Succeed}, $w_{{\mathrm{11}}}  \ottsym{:}   \ottnt{G'} \Rightarrow  \unskip ^ { \ell' }  \!  \star \Rightarrow  \unskip ^ { \ell }  \! \ottnt{G}   \,  \xrightarrow{ \mathmakebox[0.4em]{} [  ] \mathmakebox[0.3em]{} }  \, w_{{\mathrm{11}}}$.
            By \rnp{E\_Step}, $\ottnt{E}  [  w_{{\mathrm{11}}}  \ottsym{:}   \ottnt{G'} \Rightarrow  \unskip ^ { \ell' }  \!  \star \Rightarrow  \unskip ^ { \ell }  \! \ottnt{G}    ] \,  \xmapsto{ \mathmakebox[0.4em]{} [  ] \mathmakebox[0.3em]{} }  \, [  ]  \ottsym{(}  \ottnt{E}  [  w_{{\mathrm{11}}}  ]  \ottsym{)}$.
            Let $S''  \ottsym{=}  S$, then $S  \ottsym{(}  \ottnt{E}  [  w_{{\mathrm{11}}}  ]  \ottsym{)}  \ottsym{=}  S  \ottsym{(}  \ottnt{E}  \ottsym{)}  [  S  \ottsym{(}  w_{{\mathrm{11}}}  \ottsym{)}  ]$.

            \case{$\ottnt{G}  \neq  \ottnt{G'}$}
            We are given $S  \ottsym{(}  w_{{\mathrm{11}}}  \ottsym{)}  \ottsym{:}   \ottnt{G'} \Rightarrow  \unskip ^ { \ell' }  \!  \star \Rightarrow  \unskip ^ { \ell }  \! \ottnt{G}   \,  \xrightarrow{ \mathmakebox[0.4em]{} [  ] \mathmakebox[0.3em]{} }  \, \textsf{\textup{blame}\relax} \, \ell$.
            Contradiction.
          \end{caseanalysis}

          \case{$\ottnt{U_{{\mathrm{1}}}}  \ottsym{=}  \ottmv{X}$ for some $\ottmv{X}$}
          We are given $S  \ottsym{(}  \ottnt{f_{{\mathrm{1}}}}  \ottsym{)}  \ottsym{=}  S  \ottsym{(}  \ottnt{f_{{\mathrm{11}}}}  \ottsym{)}  \ottsym{:}   \star \Rightarrow  \unskip ^ { \ell }  \! S  \ottsym{(}  \ottmv{X}  \ottsym{)} $.

          \begin{caseanalysis}
            \case{$S  \ottsym{(}  \ottmv{X}  \ottsym{)}  \ottsym{=}  \iota$ for some $\iota$}
            By Lemma \ref{lem:canonical_forms},
            $\ottnt{f_{{\mathrm{11}}}}  \ottsym{=}  w_{{\mathrm{11}}}  \ottsym{:}   \ottnt{G} \Rightarrow  \unskip ^ { \ell' }  \! \star $ for some $w_{{\mathrm{11}}}$, $\ottnt{G}$, and $\ell'$
            and $w_{{\mathrm{11}}}$ is a value.

            \begin{caseanalysis}
              \case{$\ottnt{G}  \ottsym{=}  \iota$}
              We are given $S  \ottsym{(}  w_{{\mathrm{11}}}  \ottsym{)}  \ottsym{:}   \iota \Rightarrow  \unskip ^ { \ell' }  \!  \star \Rightarrow  \unskip ^ { \ell }  \! \iota   \,  \xrightarrow{ \mathmakebox[0.4em]{} [  ] \mathmakebox[0.3em]{} }  \, S  \ottsym{(}  w_{{\mathrm{11}}}  \ottsym{)}$.
              By \rnp{R\_InstBase}, $w_{{\mathrm{11}}}  \ottsym{:}   \iota \Rightarrow  \unskip ^ { \ell' }  \!  \star \Rightarrow  \unskip ^ { \ell }  \! \ottmv{X}   \,  \xrightarrow{ \mathmakebox[0.4em]{} S' \mathmakebox[0.3em]{} }  \, w_{{\mathrm{11}}}$
              where $S'  \ottsym{=}  [  \ottmv{X}  :=  \iota  ]$.
              By \rnp{E\_Step}, $\ottnt{E}  [  w_{{\mathrm{11}}}  \ottsym{:}   \iota \Rightarrow  \unskip ^ { \ell' }  \!  \star \Rightarrow  \unskip ^ { \ell }  \! \ottmv{X}    ] \,  \xmapsto{ \mathmakebox[0.4em]{} S' \mathmakebox[0.3em]{} }  \, S'  \ottsym{(}  \ottnt{E}  [  w_{{\mathrm{11}}}  ]  \ottsym{)}$.
              Let $S''  \ottsym{=}  S$, then $S  \ottsym{(}  \ottmv{X}  \ottsym{)}  \ottsym{=}   S''  \circ  S'   \ottsym{(}  \ottmv{X}  \ottsym{)}$,
              $S''  \ottsym{(}  S'  \ottsym{(}  \ottnt{E}  [  w_{{\mathrm{11}}}  ]  \ottsym{)}  \ottsym{)}  \ottsym{=}  S  \ottsym{(}  \ottnt{E}  \ottsym{)}  [  S  \ottsym{(}  w_{{\mathrm{11}}}  \ottsym{)}  ]$,
              and $S  \ottsym{(}  \ottnt{E}  [  w_{{\mathrm{11}}}  \ottsym{:}   \iota \Rightarrow  \unskip ^ { \ell' }  \!  \star \Rightarrow  \unskip ^ { \ell }  \! \ottmv{X}    ]  \ottsym{)}  \ottsym{=}   S''  \circ  S'   \ottsym{(}  \ottnt{E}  [  w_{{\mathrm{11}}}  \ottsym{:}   \iota \Rightarrow  \unskip ^ { \ell' }  \!  \star \Rightarrow  \unskip ^ { \ell }  \! \ottmv{X}    ]  \ottsym{)}$.

              \case{$\ottnt{G}  \ottsym{=}  \iota'$ for some $\iota'$ where $\iota  \neq  \iota'$}
              We are given $S  \ottsym{(}  w_{{\mathrm{11}}}  \ottsym{)}  \ottsym{:}   \iota' \Rightarrow  \unskip ^ { \ell' }  \!  \star \Rightarrow  \unskip ^ { \ell }  \! \iota   \,  \xrightarrow{ \mathmakebox[0.4em]{} [  ] \mathmakebox[0.3em]{} }  \, \textsf{\textup{blame}\relax} \, \ell$.
              Contradiction.

              \case{$\ottnt{G'}  \ottsym{=}  \star  \!\rightarrow\!  \star$}
              We are given $S  \ottsym{(}  w_{{\mathrm{11}}}  \ottsym{)}  \ottsym{:}   \star  \!\rightarrow\!  \star \Rightarrow  \unskip ^ { \ell' }  \!  \star \Rightarrow  \unskip ^ { \ell }  \! \iota   \,  \xrightarrow{ \mathmakebox[0.4em]{} [  ] \mathmakebox[0.3em]{} }  \, \textsf{\textup{blame}\relax} \, \ell$.
              Contradiction.
          \end{caseanalysis}

          \case{$S  \ottsym{(}  \ottmv{X}  \ottsym{)}  \ottsym{=}  \ottnt{T_{{\mathrm{11}}}}  \!\rightarrow\!  \ottnt{T_{{\mathrm{12}}}}$ for some $\ottnt{T_{{\mathrm{11}}}}$ and $\ottnt{T_{{\mathrm{12}}}}$}
          \leavevmode\\
          We are given $S  \ottsym{(}  \ottnt{f_{{\mathrm{11}}}}  \ottsym{)}  \ottsym{:}   \star \Rightarrow  \unskip ^ { \ell }  \! \ottnt{T_{{\mathrm{11}}}}  \!\rightarrow\!  \ottnt{T_{{\mathrm{12}}}}  \,  \xrightarrow{ \mathmakebox[0.4em]{} [  ] \mathmakebox[0.3em]{} }  \, S  \ottsym{(}  \ottnt{f_{{\mathrm{11}}}}  \ottsym{)}  \ottsym{:}   \star \Rightarrow  \unskip ^ { \ell }  \!  \star  \!\rightarrow\!  \star \Rightarrow  \unskip ^ { \ell }  \! \ottnt{T_{{\mathrm{11}}}}  \!\rightarrow\!  \ottnt{T_{{\mathrm{12}}}}  $.

          By Lemma \ref{lem:canonical_forms},
          $\ottnt{f_{{\mathrm{11}}}}  \ottsym{=}  w_{{\mathrm{11}}}  \ottsym{:}   \ottnt{G} \Rightarrow  \unskip ^ { \ell' }  \! \star $ for some $w_{{\mathrm{11}}}$, $\ottnt{G}$, and $\ell'$
          and $w_{{\mathrm{11}}}$ is a value.

          \begin{caseanalysis}
            \case{$\ottnt{G}  \ottsym{=}  \iota$ for some $\iota$}
            We are given
            $S  \ottsym{(}  w_{{\mathrm{11}}}  \ottsym{)}  \ottsym{:}   \iota \Rightarrow  \unskip ^ { \ell' }  \!  \star \Rightarrow  \unskip ^ { \ell }  \! \ottnt{T_{{\mathrm{11}}}}  \!\rightarrow\!  \ottnt{T_{{\mathrm{12}}}}   \,  \xrightarrow{ \mathmakebox[0.4em]{} [  ] \mathmakebox[0.3em]{} }  \, S  \ottsym{(}  w_{{\mathrm{1}}}  \ottsym{)}  \ottsym{:}   \iota \Rightarrow  \unskip ^ { \ell' }  \!  \star \Rightarrow  \unskip ^ { \ell }  \!  \star  \!\rightarrow\!  \star \Rightarrow  \unskip ^ { \ell }  \! \ottnt{T_{{\mathrm{11}}}}  \!\rightarrow\!  \ottnt{T_{{\mathrm{12}}}}   $.
            By \rnp{E\_Step},
            $S  \ottsym{(}  \ottnt{E}  \ottsym{)}  [  S  \ottsym{(}  w_{{\mathrm{11}}}  \ottsym{)}  \ottsym{:}   \iota \Rightarrow  \unskip ^ { \ell' }  \!  \star \Rightarrow  \unskip ^ { \ell }  \! \ottnt{T_{{\mathrm{11}}}}  \!\rightarrow\!  \ottnt{T_{{\mathrm{12}}}}    ] \,  \xmapsto{ \mathmakebox[0.4em]{} [  ] \mathmakebox[0.3em]{} }  \, S  \ottsym{(}  \ottnt{E}  \ottsym{)}  [  S  \ottsym{(}  w_{{\mathrm{1}}}  \ottsym{)}  \ottsym{:}   \iota \Rightarrow  \unskip ^ { \ell' }  \!  \star \Rightarrow  \unskip ^ { \ell }  \!  \star  \!\rightarrow\!  \star \Rightarrow  \unskip ^ { \ell }  \! \ottnt{T_{{\mathrm{11}}}}  \!\rightarrow\!  \ottnt{T_{{\mathrm{12}}}}     ]$.
            By \rnp{E\_Fail} and \rnp{E\_Step},
            $S  \ottsym{(}  \ottnt{E}  \ottsym{)}  [  S  \ottsym{(}  w_{{\mathrm{1}}}  \ottsym{)}  \ottsym{:}   \iota \Rightarrow  \unskip ^ { \ell' }  \!  \star \Rightarrow  \unskip ^ { \ell }  \!  \star  \!\rightarrow\!  \star \Rightarrow  \unskip ^ { \ell }  \! \ottnt{T_{{\mathrm{11}}}}  \!\rightarrow\!  \ottnt{T_{{\mathrm{12}}}}     ] \,  \xmapsto{ \mathmakebox[0.4em]{} [  ] \mathmakebox[0.3em]{} }  \, S  \ottsym{(}  \ottnt{E}  \ottsym{)}  [  \textsf{\textup{blame}\relax} \, \ell  \ottsym{:}   \star  \!\rightarrow\!  \star \Rightarrow  \unskip ^ { \ell }  \! \ottnt{T_{{\mathrm{11}}}}  \!\rightarrow\!  \ottnt{T_{{\mathrm{12}}}}   ]$.
            By \rnp{E\_Abort},
            $S  \ottsym{(}  \ottnt{E}  \ottsym{)}  [  \textsf{\textup{blame}\relax} \, \ell  \ottsym{:}   \star  \!\rightarrow\!  \star \Rightarrow  \unskip ^ { \ell }  \! \ottnt{T_{{\mathrm{11}}}}  \!\rightarrow\!  \ottnt{T_{{\mathrm{12}}}}   ] \,  \xmapsto{ \mathmakebox[0.4em]{} [  ] \mathmakebox[0.3em]{} }  \, \textsf{\textup{blame}\relax} \, \ell$.

            Contradiction.

            \case{$\ottnt{G'}  \ottsym{=}  \star  \!\rightarrow\!  \star$}
            We are given
            $S  \ottsym{(}  w_{{\mathrm{11}}}  \ottsym{)}  \ottsym{:}   \star  \!\rightarrow\!  \star \Rightarrow  \unskip ^ { \ell' }  \!  \star \Rightarrow  \unskip ^ { \ell }  \! \ottnt{T_{{\mathrm{11}}}}  \!\rightarrow\!  \ottnt{T_{{\mathrm{12}}}}   \,  \xrightarrow{ \mathmakebox[0.4em]{} [  ] \mathmakebox[0.3em]{} }  \, S  \ottsym{(}  w_{{\mathrm{11}}}  \ottsym{)}  \ottsym{:}   \star  \!\rightarrow\!  \star \Rightarrow  \unskip ^ { \ell' }  \!  \star \Rightarrow  \unskip ^ { \ell }  \!  \star  \!\rightarrow\!  \star \Rightarrow  \unskip ^ { \ell }  \! \ottnt{T_{{\mathrm{11}}}}  \!\rightarrow\!  \ottnt{T_{{\mathrm{12}}}}   $.
            By \rnp{R\_InstArrow},
            $w_{{\mathrm{11}}}  \ottsym{:}   \star  \!\rightarrow\!  \star \Rightarrow  \unskip ^ { \ell' }  \!  \star \Rightarrow  \unskip ^ { \ell }  \! \ottmv{X}   \,  \xrightarrow{ \mathmakebox[0.4em]{} S' \mathmakebox[0.3em]{} }  \, w_{{\mathrm{11}}}  \ottsym{:}   \star  \!\rightarrow\!  \star \Rightarrow  \unskip ^ { \ell' }  \!  \star \Rightarrow  \unskip ^ { \ell }  \!  \star  \!\rightarrow\!  \star \Rightarrow  \unskip ^ { \ell }  \! \ottmv{X_{{\mathrm{1}}}}  \!\rightarrow\!  \ottmv{X_{{\mathrm{2}}}}   $
            where $S'  \ottsym{=}  [  \ottmv{X}  :=  \ottmv{X_{{\mathrm{1}}}}  \!\rightarrow\!  \ottmv{X_{{\mathrm{2}}}}  ]$.
            By \rnp{E\_Step},
            $\ottnt{E}  [  w_{{\mathrm{11}}}  \ottsym{:}   \star  \!\rightarrow\!  \star \Rightarrow  \unskip ^ { \ell' }  \!  \star \Rightarrow  \unskip ^ { \ell }  \! \ottmv{X}    ] \,  \xmapsto{ \mathmakebox[0.4em]{} S' \mathmakebox[0.3em]{} }  \, S'  \ottsym{(}  \ottnt{E}  [  w_{{\mathrm{11}}}  \ottsym{:}   \star  \!\rightarrow\!  \star \Rightarrow  \unskip ^ { \ell' }  \!  \star \Rightarrow  \unskip ^ { \ell }  \!  \star  \!\rightarrow\!  \star \Rightarrow  \unskip ^ { \ell }  \! \ottmv{X_{{\mathrm{1}}}}  \!\rightarrow\!  \ottmv{X_{{\mathrm{2}}}}     ]  \ottsym{)}$.

            Since we can suppose that $\ottmv{X_{{\mathrm{1}}}}, \ottmv{X_{{\mathrm{2}}}} \, \not\in \, \textit{dom} \, \ottsym{(}  S  \ottsym{)}$,
            we can let $S''  \ottsym{=}   S  \uplus  [  \ottmv{X_{{\mathrm{1}}}}  :=  \ottnt{T_{{\mathrm{11}}}}  \ottsym{,}  \ottmv{X_{{\mathrm{2}}}}  :=  \ottnt{T_{{\mathrm{12}}}}  ] $.
            By definition $S  \ottsym{(}  \ottmv{X}  \ottsym{)}  \ottsym{=}  \ottsym{(}   S''  \circ  S'   \ottsym{)}  \ottsym{(}  \ottmv{X}  \ottsym{)}$.
            $\ottmv{X_{{\mathrm{1}}}}$ and $\ottmv{X_{{\mathrm{2}}}}$ do not appear in $\ottnt{f}$ or $S$.
            So, $\ottsym{(}   S''  \circ  S'   \ottsym{)}  \ottsym{(}  \ottnt{E}  [  w_{{\mathrm{11}}}  \ottsym{:}   \star  \!\rightarrow\!  \star \Rightarrow  \unskip ^ { \ell' }  \!  \star \Rightarrow  \unskip ^ { \ell }  \!  \star  \!\rightarrow\!  \star \Rightarrow  \unskip ^ { \ell }  \! \ottmv{X_{{\mathrm{1}}}}  \!\rightarrow\!  \ottmv{X_{{\mathrm{2}}}}     ]  \ottsym{)}  \ottsym{=}  S  \ottsym{(}  \ottnt{E}  \ottsym{)}  [  S  \ottsym{(}  w_{{\mathrm{11}}}  \ottsym{)}  \ottsym{:}   \star  \!\rightarrow\!  \star \Rightarrow  \unskip ^ { \ell' }  \!  \star \Rightarrow  \unskip ^ { \ell }  \!  \star  \!\rightarrow\!  \star \Rightarrow  \unskip ^ { \ell }  \! \ottnt{T_{{\mathrm{11}}}}  \!\rightarrow\!  \ottnt{T_{{\mathrm{12}}}}     ]$.
          \end{caseanalysis}

          \case{$S  \ottsym{(}  \ottmv{X}  \ottsym{)}  \ottsym{=}  \ottmv{X'}$ for some $\ottmv{X'}$}
          By Lemma \ref{lem:canonical_forms},
          $\ottnt{f_{{\mathrm{11}}}}  \ottsym{=}  w_{{\mathrm{11}}}  \ottsym{:}   \ottnt{G} \Rightarrow  \unskip ^ { \ell' }  \! \star $ for some $w_{{\mathrm{11}}}$, $\ottnt{G}$, and $\ell'$
          and $w_{{\mathrm{11}}}$ is a value.
          \begin{caseanalysis}
           \case{$\ottnt{G}  \ottsym{=}  \iota$}
           We are given $S  \ottsym{(}  w_{{\mathrm{11}}}  \ottsym{)}  \ottsym{:}   \iota \Rightarrow  \unskip ^ { \ell' }  \!  \star \Rightarrow  \unskip ^ { \ell }  \! \ottmv{X'}   \,  \xrightarrow{ \mathmakebox[0.4em]{} [  \ottmv{X'}  :=  \iota  ] \mathmakebox[0.3em]{} }  \, S  \ottsym{(}  w_{{\mathrm{11}}}  \ottsym{)}$.
           By \rnp{E\_Step},
           $S  \ottsym{(}  \ottnt{E}  \ottsym{)}  [  S  \ottsym{(}  w_{{\mathrm{11}}}  \ottsym{)}  \ottsym{:}   \iota \Rightarrow  \unskip ^ { \ell' }  \!  \star \Rightarrow  \unskip ^ { \ell }  \! \ottmv{X'}    ] \,  \xmapsto{ \mathmakebox[0.4em]{} [  \ottmv{X'}  :=  \iota  ] \mathmakebox[0.3em]{} }  \,  [  \ottmv{X'}  :=  \iota  ]  \circ  S   \ottsym{(}  \ottnt{E}  \ottsym{)}  [  S  \ottsym{(}  w_{{\mathrm{11}}}  \ottsym{)}  ]$.
           We have $ [  \ottmv{X'}  :=  \iota  ]  \circ  S   \ottsym{(}  \ottnt{E}  [  w_{{\mathrm{11}}}  ]  \ottsym{)}  \ottsym{=}   [  \ottmv{X'}  :=  \iota  ]  \circ  S   \ottsym{(}  \ottnt{E}  [  w_{{\mathrm{11}}}  ]  [  \ottmv{X}  \ottsym{:=}  \iota  ]  \ottsym{)}$.
           By \rnp{R\_InstBase}/\rnp{E\_Step},
           $\ottnt{E}  [  w_{{\mathrm{11}}}  \ottsym{:}   \iota \Rightarrow  \unskip ^ { \ell' }  \!  \star \Rightarrow  \unskip ^ { \ell }  \! \ottmv{X}    ] \,  \xmapsto{ \mathmakebox[0.4em]{} [  \ottmv{X}  :=  \iota  ] \mathmakebox[0.3em]{} }  \, \ottnt{E}  [  w_{{\mathrm{11}}}  ]  [  \ottmv{X}  \ottsym{:=}  \iota  ]$.
           We finish by letting $S''  \ottsym{=}   [  \ottmv{X'}  :=  \iota  ]  \circ  S $.

           \case{$\ottnt{G}  \ottsym{=}  \star  \!\rightarrow\!  \star$}
           We are given $S  \ottsym{(}  w_{{\mathrm{11}}}  \ottsym{)}  \ottsym{:}   \star  \!\rightarrow\!  \star \Rightarrow  \unskip ^ { \ell' }  \!  \star \Rightarrow  \unskip ^ { \ell }  \! \ottmv{X'}   \,  \xrightarrow{ \mathmakebox[0.4em]{} [  \ottmv{X'}  :=  \ottmv{X_{{\mathrm{1}}}}  \!\rightarrow\!  \ottmv{X_{{\mathrm{2}}}}  ] \mathmakebox[0.3em]{} }  \, S  \ottsym{(}  w_{{\mathrm{11}}}  \ottsym{)}  \ottsym{:}   \star  \!\rightarrow\!  \star \Rightarrow  \unskip ^ { \ell' }  \!  \star \Rightarrow  \unskip ^ { \ell }  \! \ottmv{X_{{\mathrm{1}}}}  \!\rightarrow\!  \ottmv{X_{{\mathrm{2}}}}  $
           for some fresh $\ottmv{X_{{\mathrm{1}}}}$ and $\ottmv{X_{{\mathrm{2}}}}$.
           By \rnp{E\_Step},
           $S  \ottsym{(}  \ottnt{E}  \ottsym{)}  [  S  \ottsym{(}  w_{{\mathrm{11}}}  \ottsym{)}  \ottsym{:}   \star  \!\rightarrow\!  \star \Rightarrow  \unskip ^ { \ell' }  \!  \star \Rightarrow  \unskip ^ { \ell }  \! \ottmv{X'}    ] \,  \xmapsto{ \mathmakebox[0.4em]{} [  \ottmv{X'}  :=  \ottmv{X_{{\mathrm{1}}}}  \!\rightarrow\!  \ottmv{X_{{\mathrm{2}}}}  ] \mathmakebox[0.3em]{} }  \,  [  \ottmv{X'}  :=  \ottmv{X_{{\mathrm{1}}}}  \!\rightarrow\!  \ottmv{X_{{\mathrm{2}}}}  ]  \circ  S   \ottsym{(}  \ottnt{E}  \ottsym{)}  [  S  \ottsym{(}  w_{{\mathrm{11}}}  \ottsym{)}  \ottsym{:}   \star  \!\rightarrow\!  \star \Rightarrow  \unskip ^ { \ell' }  \!  \star \Rightarrow  \unskip ^ { \ell }  \! \ottmv{X_{{\mathrm{1}}}}  \!\rightarrow\!  \ottmv{X_{{\mathrm{2}}}}    ]$.
           Without loss of generality, we can suppose that $\ottmv{X_{{\mathrm{1}}}}, \ottmv{X_{{\mathrm{2}}}} \, \not\in \, \textit{dom} \, \ottsym{(}  S  \ottsym{)}$.
           Thus, $ [  \ottmv{X'}  :=  \ottmv{X_{{\mathrm{1}}}}  \!\rightarrow\!  \ottmv{X_{{\mathrm{2}}}}  ]  \circ  S   \ottsym{(}  \ottnt{E}  \ottsym{)}  [  S  \ottsym{(}  w_{{\mathrm{11}}}  \ottsym{)}  \ottsym{:}   \star  \!\rightarrow\!  \star \Rightarrow  \unskip ^ { \ell' }  \!  \star \Rightarrow  \unskip ^ { \ell }  \! \ottmv{X_{{\mathrm{1}}}}  \!\rightarrow\!  \ottmv{X_{{\mathrm{2}}}}    ]  \ottsym{=}   [  \ottmv{X'}  :=  \ottmv{X_{{\mathrm{1}}}}  \!\rightarrow\!  \ottmv{X_{{\mathrm{2}}}}  ]  \circ  S   \ottsym{(}  \ottnt{E}  [  w_{{\mathrm{11}}}  \ottsym{:}   \star  \!\rightarrow\!  \star \Rightarrow  \unskip ^ { \ell' }  \!  \star \Rightarrow  \unskip ^ { \ell }  \! \ottmv{X_{{\mathrm{1}}}}  \!\rightarrow\!  \ottmv{X_{{\mathrm{2}}}}    ]  \ottsym{)} =
           \ottsym{(}   [  \ottmv{X'}  :=  \ottmv{X_{{\mathrm{1}}}}  \!\rightarrow\!  \ottmv{X_{{\mathrm{2}}}}  ]  \circ  S   \ottsym{)}  \ottsym{(}  [  \ottmv{X}  :=  \ottmv{X_{{\mathrm{1}}}}  \!\rightarrow\!  \ottmv{X_{{\mathrm{2}}}}  ]  \ottsym{(}  \ottnt{E}  [  w_{{\mathrm{11}}}  \ottsym{:}   \star  \!\rightarrow\!  \star \Rightarrow  \unskip ^ { \ell' }  \!  \star \Rightarrow  \unskip ^ { \ell }  \! \ottmv{X_{{\mathrm{1}}}}  \!\rightarrow\!  \ottmv{X_{{\mathrm{2}}}}    ]  \ottsym{)}  \ottsym{)}$.
           By \rnp{R\_InstArrow}/\rnp{E\_Step},
           $\ottnt{E}  [  w_{{\mathrm{11}}}  \ottsym{:}   \star  \!\rightarrow\!  \star \Rightarrow  \unskip ^ { \ell' }  \!  \star \Rightarrow  \unskip ^ { \ell }  \! \ottmv{X}    ] \,  \xmapsto{ \mathmakebox[0.4em]{} [  \ottmv{X}  :=  \ottmv{X_{{\mathrm{1}}}}  \!\rightarrow\!  \ottmv{X_{{\mathrm{2}}}}  ] \mathmakebox[0.3em]{} }  \, [  \ottmv{X}  :=  \ottmv{X_{{\mathrm{1}}}}  \!\rightarrow\!  \ottmv{X_{{\mathrm{2}}}}  ]  \ottsym{(}  \ottnt{E}  [  w_{{\mathrm{11}}}  \ottsym{:}   \star  \!\rightarrow\!  \star \Rightarrow  \unskip ^ { \ell' }  \!  \star \Rightarrow  \unskip ^ { \ell }  \! \ottmv{X_{{\mathrm{1}}}}  \!\rightarrow\!  \ottmv{X_{{\mathrm{2}}}}    ]  \ottsym{)}$.
           We finish by letting $S''  \ottsym{=}   [  \ottmv{X'}  :=  \ottmv{X_{{\mathrm{1}}}}  \!\rightarrow\!  \ottmv{X_{{\mathrm{2}}}}  ]  \circ  S $.
          \end{caseanalysis}

        \end{caseanalysis}

        \case{$\ottnt{U_{{\mathrm{1}}}}  \ottsym{=}  \ottnt{U_{{\mathrm{11}}}}  \!\rightarrow\!  \ottnt{U_{{\mathrm{12}}}}$ for some $\ottnt{U_{{\mathrm{11}}}}$ and $\ottnt{U_{{\mathrm{12}}}}$ where $\ottnt{U_{{\mathrm{1}}}}  \neq  \star  \!\rightarrow\!  \star$}
        We are given $S  \ottsym{(}  \ottnt{f_{{\mathrm{1}}}}  \ottsym{)}  \ottsym{=}  \ottnt{f_{{\mathrm{11}}}}  \ottsym{:}   \star \Rightarrow  \unskip ^ { \ell }  \! S  \ottsym{(}  \ottnt{U_{{\mathrm{11}}}}  \!\rightarrow\!  \ottnt{U_{{\mathrm{12}}}}  \ottsym{)} $ and
        $S  \ottsym{(}  \ottnt{f_{{\mathrm{11}}}}  \ottsym{)}  \ottsym{:}   \star \Rightarrow  \unskip ^ { \ell }  \! S  \ottsym{(}  \ottnt{U_{{\mathrm{11}}}}  \!\rightarrow\!  \ottnt{U_{{\mathrm{12}}}}  \ottsym{)}  \,  \xrightarrow{ \mathmakebox[0.4em]{} [  ] \mathmakebox[0.3em]{} }  \, S  \ottsym{(}  \ottnt{f_{{\mathrm{11}}}}  \ottsym{)}  \ottsym{:}   \star \Rightarrow  \unskip ^ { \ell }  \!  \star  \!\rightarrow\!  \star \Rightarrow  \unskip ^ { \ell }  \! S  \ottsym{(}  \ottnt{U_{{\mathrm{11}}}}  \!\rightarrow\!  \ottnt{U_{{\mathrm{12}}}}  \ottsym{)}  $.
        By \rnp{R\_Expand},
        $\ottnt{f_{{\mathrm{11}}}}  \ottsym{:}   \star \Rightarrow  \unskip ^ { \ell }  \! \ottnt{U_{{\mathrm{11}}}}  \!\rightarrow\!  \ottnt{U_{{\mathrm{12}}}}  \,  \xrightarrow{ \mathmakebox[0.4em]{} [  ] \mathmakebox[0.3em]{} }  \, \ottnt{f_{{\mathrm{11}}}}  \ottsym{:}   \star \Rightarrow  \unskip ^ { \ell }  \!  \star  \!\rightarrow\!  \star \Rightarrow  \unskip ^ { \ell }  \! \ottnt{U_{{\mathrm{11}}}}  \!\rightarrow\!  \ottnt{U_{{\mathrm{12}}}}  $.
        By \rnp{E\_Step},
        $\ottnt{E}  [  \ottnt{f_{{\mathrm{11}}}}  \ottsym{:}   \star \Rightarrow  \unskip ^ { \ell }  \! \ottnt{U_{{\mathrm{11}}}}  \!\rightarrow\!  \ottnt{U_{{\mathrm{12}}}}   ] \,  \xmapsto{ \mathmakebox[0.4em]{} [  ] \mathmakebox[0.3em]{} }  \, [  ]  \ottsym{(}  \ottnt{E}  [  \ottnt{f_{{\mathrm{11}}}}  \ottsym{:}   \star \Rightarrow  \unskip ^ { \ell }  \!  \star  \!\rightarrow\!  \star \Rightarrow  \unskip ^ { \ell }  \! \ottnt{U_{{\mathrm{11}}}}  \!\rightarrow\!  \ottnt{U_{{\mathrm{12}}}}    ]  \ottsym{)}$.
        Let $S''  \ottsym{=}  S$, then
        $S''  \ottsym{(}  \ottnt{E}  [  \ottnt{f_{{\mathrm{11}}}}  \ottsym{:}   \star \Rightarrow  \unskip ^ { \ell }  \!  \star  \!\rightarrow\!  \star \Rightarrow  \unskip ^ { \ell }  \! \ottnt{U_{{\mathrm{11}}}}  \!\rightarrow\!  \ottnt{U_{{\mathrm{12}}}}    ]  \ottsym{)}  \ottsym{=}  S  \ottsym{(}  \ottnt{E}  \ottsym{)}  [  S  \ottsym{(}  \ottnt{f_{{\mathrm{11}}}}  \ottsym{)}  \ottsym{:}   \star \Rightarrow  \unskip ^ { \ell }  \!  \star  \!\rightarrow\!  \star \Rightarrow  \unskip ^ { \ell }  \! S  \ottsym{(}  \ottnt{U_{{\mathrm{11}}}}  \!\rightarrow\!  \ottnt{U_{{\mathrm{12}}}}  \ottsym{)}    ]$.
      \end{caseanalysis}

      \case{\rnp{C\_DynR}}
      We are given $\ottnt{f_{{\mathrm{1}}}}  \ottsym{=}  \ottnt{f_{{\mathrm{11}}}}  \ottsym{:}   \ottnt{U'_{{\mathrm{1}}}} \Rightarrow  \unskip ^ { \ell }  \! \star $.
      By definition, $S  \ottsym{(}  \ottnt{f_{{\mathrm{1}}}}  \ottsym{)}  \ottsym{=}  S  \ottsym{(}  \ottnt{f_{{\mathrm{11}}}}  \ottsym{)}  \ottsym{:}   S  \ottsym{(}  \ottnt{U'_{{\mathrm{1}}}}  \ottsym{)} \Rightarrow  \unskip ^ { \ell }  \! \star $.
      It must be the case that $S  \ottsym{(}  \ottnt{f_{{\mathrm{11}}}}  \ottsym{)}$ is a value.
      By Lemma \ref{lem:subst_value_is_value}, $\ottnt{f_{{\mathrm{11}}}}$ is a value.

      \begin{caseanalysis}
        \case{$\ottnt{U'_{{\mathrm{1}}}}  \ottsym{=}  \star$}
        We are given $S  \ottsym{(}  \ottnt{f_{{\mathrm{1}}}}  \ottsym{)}  \ottsym{=}  S  \ottsym{(}  \ottnt{f_{{\mathrm{11}}}}  \ottsym{)}  \ottsym{:}   \star \Rightarrow  \unskip ^ { \ell }  \! \star $ and
        $S  \ottsym{(}  \ottnt{f_{{\mathrm{11}}}}  \ottsym{)}  \ottsym{:}   \star \Rightarrow  \unskip ^ { \ell }  \! \star  \,  \xrightarrow{ \mathmakebox[0.4em]{} [  ] \mathmakebox[0.3em]{} }  \, S  \ottsym{(}  \ottnt{f_{{\mathrm{11}}}}  \ottsym{)}$.
        By \rnp{R\_IdStar}, $\ottnt{f_{{\mathrm{11}}}}  \ottsym{:}   \star \Rightarrow  \unskip ^ { \ell }  \! \star  \,  \xrightarrow{ \mathmakebox[0.4em]{} [  ] \mathmakebox[0.3em]{} }  \, \ottnt{f_{{\mathrm{11}}}}$.
        By \rnp{E\_Step}, $\ottnt{E}  [  \ottnt{f_{{\mathrm{11}}}}  \ottsym{:}   \star \Rightarrow  \unskip ^ { \ell }  \! \star   ] \,  \xrightarrow{ \mathmakebox[0.4em]{} [  ] \mathmakebox[0.3em]{} }  \, [  ]  \ottsym{(}  \ottnt{E}  [  \ottnt{f_{{\mathrm{11}}}}  ]  \ottsym{)}$.
        Let $S''  \ottsym{=}  S$, then $S''  \ottsym{(}  \ottnt{E}  [  \ottnt{f_{{\mathrm{11}}}}  ]  \ottsym{)}  \ottsym{=}  S  \ottsym{(}  \ottnt{E}  \ottsym{)}  [  S  \ottsym{(}  \ottnt{f_{{\mathrm{11}}}}  \ottsym{)}  ]$.

        \case{$\ottnt{U'_{{\mathrm{1}}}}  \ottsym{=}  \iota$ for some $\iota$}
        We are given $S  \ottsym{(}  \ottnt{f_{{\mathrm{1}}}}  \ottsym{)}  \ottsym{=}  S  \ottsym{(}  \ottnt{f_{{\mathrm{11}}}}  \ottsym{)}  \ottsym{:}   \iota \Rightarrow  \unskip ^ { \ell }  \! \star $.
        Contradiction.

        \case{$\ottnt{U'_{{\mathrm{1}}}}  \ottsym{=}  \ottmv{X}$ for some $\ottmv{X}$}
        We are given $S  \ottsym{(}  \ottnt{f_{{\mathrm{1}}}}  \ottsym{)}  \ottsym{=}  S  \ottsym{(}  \ottnt{f_{{\mathrm{11}}}}  \ottsym{)}  \ottsym{:}   S  \ottsym{(}  \ottmv{X}  \ottsym{)} \Rightarrow  \unskip ^ { \ell }  \! \star $.
        By Lemma \ref{lem:canonical_forms}, $\ottnt{f_{{\mathrm{11}}}}$ is not typed at type variable.
        Contradiction.

        \case{$\ottnt{U'_{{\mathrm{1}}}}  \ottsym{=}  \star  \!\rightarrow\!  \star$}
        We are given $S  \ottsym{(}  \ottnt{f_{{\mathrm{1}}}}  \ottsym{)}  \ottsym{=}  S  \ottsym{(}  \ottnt{f_{{\mathrm{11}}}}  \ottsym{)}  \ottsym{:}   \star  \!\rightarrow\!  \star \Rightarrow  \unskip ^ { \ell }  \! \star $.
        Contradiction.

        \case{$\ottnt{U'_{{\mathrm{1}}}}  \ottsym{=}  \ottnt{U'_{{\mathrm{11}}}}  \!\rightarrow\!  \ottnt{U'_{{\mathrm{12}}}}$ for some $\ottnt{U'_{{\mathrm{11}}}}$ and $\ottnt{U'_{{\mathrm{12}}}}$}
        \leavevmode\\
        We are given $S  \ottsym{(}  \ottnt{f_{{\mathrm{1}}}}  \ottsym{)}  \ottsym{=}  S  \ottsym{(}  \ottnt{f_{{\mathrm{11}}}}  \ottsym{)}  \ottsym{:}   S  \ottsym{(}  \ottnt{U'_{{\mathrm{11}}}}  \!\rightarrow\!  \ottnt{U'_{{\mathrm{12}}}}  \ottsym{)} \Rightarrow  \unskip ^ { \ell }  \! \star $ and
        $S  \ottsym{(}  \ottnt{f_{{\mathrm{11}}}}  \ottsym{)}  \ottsym{:}   S  \ottsym{(}  \ottnt{U'_{{\mathrm{11}}}}  \!\rightarrow\!  \ottnt{U'_{{\mathrm{12}}}}  \ottsym{)} \Rightarrow  \unskip ^ { \ell }  \! \star  \,  \xrightarrow{ \mathmakebox[0.4em]{} [  ] \mathmakebox[0.3em]{} }  \, S  \ottsym{(}  \ottnt{f_{{\mathrm{11}}}}  \ottsym{)}  \ottsym{:}   S  \ottsym{(}  \ottnt{U'_{{\mathrm{11}}}}  \!\rightarrow\!  \ottnt{U'_{{\mathrm{12}}}}  \ottsym{)} \Rightarrow  \unskip ^ { \ell }  \!  \star  \!\rightarrow\!  \star \Rightarrow  \unskip ^ { \ell }  \! \star  $.
        By \rnp{R\_Ground}, $\ottnt{f_{{\mathrm{11}}}}  \ottsym{:}   \ottnt{U'_{{\mathrm{11}}}}  \!\rightarrow\!  \ottnt{U'_{{\mathrm{12}}}} \Rightarrow  \unskip ^ { \ell }  \! \star  \,  \xrightarrow{ \mathmakebox[0.4em]{} [  ] \mathmakebox[0.3em]{} }  \, \ottnt{f_{{\mathrm{11}}}}  \ottsym{:}   \ottnt{U'_{{\mathrm{11}}}}  \!\rightarrow\!  \ottnt{U'_{{\mathrm{12}}}} \Rightarrow  \unskip ^ { \ell }  \!  \star  \!\rightarrow\!  \star \Rightarrow  \unskip ^ { \ell }  \! \star  $.
        By \rnp{E\_Step}, $\ottnt{E}  [  \ottnt{f_{{\mathrm{11}}}}  \ottsym{:}   \ottnt{U'_{{\mathrm{11}}}}  \!\rightarrow\!  \ottnt{U'_{{\mathrm{12}}}} \Rightarrow  \unskip ^ { \ell }  \! \star   ] \,  \xrightarrow{ \mathmakebox[0.4em]{} [  ] \mathmakebox[0.3em]{} }  \, [  ]  \ottsym{(}  \ottnt{E}  [  \ottnt{f_{{\mathrm{11}}}}  \ottsym{:}   \ottnt{U'_{{\mathrm{11}}}}  \!\rightarrow\!  \ottnt{U'_{{\mathrm{12}}}} \Rightarrow  \unskip ^ { \ell }  \!  \star  \!\rightarrow\!  \star \Rightarrow  \unskip ^ { \ell }  \! \star    ]  \ottsym{)}$.
        Let $S''  \ottsym{=}  S$, then
        $S''  \ottsym{(}  \ottnt{E}  [  \ottnt{f_{{\mathrm{11}}}}  \ottsym{:}   \ottnt{U'_{{\mathrm{11}}}}  \!\rightarrow\!  \ottnt{U'_{{\mathrm{12}}}} \Rightarrow  \unskip ^ { \ell }  \!  \star  \!\rightarrow\!  \star \Rightarrow  \unskip ^ { \ell }  \! \star    ]  \ottsym{)}  \ottsym{=}  S  \ottsym{(}  \ottnt{E}  \ottsym{)}  [  S  \ottsym{(}  \ottnt{f_{{\mathrm{11}}}}  \ottsym{)}  \ottsym{:}   S  \ottsym{(}  \ottnt{U'_{{\mathrm{11}}}}  \!\rightarrow\!  \ottnt{U'_{{\mathrm{12}}}}  \ottsym{)} \Rightarrow  \unskip ^ { \ell }  \!  \star  \!\rightarrow\!  \star \Rightarrow  \unskip ^ { \ell }  \! \star    ]$.
      \end{caseanalysis}

      \case{\rnp{C\_Arrow}}
      We are given $\ottnt{f_{{\mathrm{1}}}}  \ottsym{=}  \ottnt{f_{{\mathrm{11}}}}  \ottsym{:}   \ottnt{U'_{{\mathrm{11}}}}  \!\rightarrow\!  \ottnt{U'_{{\mathrm{12}}}} \Rightarrow  \unskip ^ { \ell }  \! \ottnt{U_{{\mathrm{11}}}}  \!\rightarrow\!  \ottnt{U_{{\mathrm{12}}}} $ for some $\ottnt{U_{{\mathrm{11}}}}$, $\ottnt{U_{{\mathrm{12}}}}$,
      $\ottnt{U'_{{\mathrm{11}}}}$, $\ottnt{U'_{{\mathrm{12}}}}$.
      By definition, $S  \ottsym{(}  \ottnt{f_{{\mathrm{1}}}}  \ottsym{)}  \ottsym{=}  S  \ottsym{(}  \ottnt{f_{{\mathrm{11}}}}  \ottsym{)}  \ottsym{:}   S  \ottsym{(}  \ottnt{U'_{{\mathrm{11}}}}  \ottsym{)}  \!\rightarrow\!  S  \ottsym{(}  \ottnt{U'_{{\mathrm{12}}}}  \ottsym{)} \Rightarrow  \unskip ^ { \ell }  \! S  \ottsym{(}  \ottnt{U_{{\mathrm{11}}}}  \ottsym{)}  \!\rightarrow\!  S  \ottsym{(}  \ottnt{U_{{\mathrm{12}}}}  \ottsym{)} $.
      Contradiction. 
    \end{caseanalysis}

      \case{\rnp{T\_LetP}}
      We are given $ \emptyset   \vdash   \textsf{\textup{let}\relax} \,  \ottmv{x}  =   \Lambda    \overrightarrow{ \ottmv{X_{\ottmv{i}}} }  .\,  w_{{\mathrm{11}}}   \textsf{\textup{ in }\relax}  \ottnt{f_{{\mathrm{12}}}}   \ottsym{:}  \ottnt{U}$
      for some $\ottmv{x}$, $ \overrightarrow{ \ottmv{X_{\ottmv{i}}} } $, $w_{{\mathrm{11}}}$, and $\ottnt{f_{{\mathrm{12}}}}$.
      By inversion, we have $ \emptyset   \vdash  w_{{\mathrm{11}}}  \ottsym{:}  \ottnt{U_{{\mathrm{11}}}}$,
      $  \emptyset  ,   \ottmv{x}  :  \forall \,  \overrightarrow{ \ottmv{X_{\ottmv{i}}} }   \ottsym{.}  \ottnt{U_{{\mathrm{11}}}}    \vdash  \ottnt{f_{{\mathrm{12}}}}  \ottsym{:}  \ottnt{U_{{\mathrm{1}}}}$, and $ \overrightarrow{ \ottmv{X_{\ottmv{i}}} }   \cap  \textit{ftv} \, \ottsym{(}  \Gamma  \ottsym{)}  \ottsym{=}   \emptyset $.

      By case analysis on the reduction rule applied to $S  \ottsym{(}  \ottnt{f_{{\mathrm{1}}}}  \ottsym{)}$.
      \begin{caseanalysis}
        \case{\rnp{R\_LetP}}
        We are given $ \textsf{\textup{let}\relax} \,  \ottmv{x}  =   \Lambda    \overrightarrow{ \ottmv{X_{\ottmv{i}}} }  .\,  w'_{{\mathrm{11}}}   \textsf{\textup{ in }\relax}  \ottnt{f'_{{\mathrm{12}}}}  \,  \xrightarrow{ \mathmakebox[0.4em]{} [  ] \mathmakebox[0.3em]{} }  \, \ottnt{f'_{{\mathrm{12}}}}  [  \ottmv{x}  \ottsym{:=}   \Lambda    \overrightarrow{ \ottmv{X_{\ottmv{i}}} }  .\,  w'_{{\mathrm{11}}}   ]$ where
        $S  \ottsym{(}   \textsf{\textup{let}\relax} \,  \ottmv{x}  =   \Lambda    \overrightarrow{ \ottmv{X_{\ottmv{i}}} }  .\,  w_{{\mathrm{11}}}   \textsf{\textup{ in }\relax}  \ottnt{f_{{\mathrm{12}}}}   \ottsym{)}  \ottsym{=}   \textsf{\textup{let}\relax} \,  \ottmv{x}  =   \Lambda    \overrightarrow{ \ottmv{X_{\ottmv{i}}} }  .\,  w'_{{\mathrm{11}}}   \textsf{\textup{ in }\relax}  \ottnt{f'_{{\mathrm{12}}}} $ and
        $\ottnt{f'_{{\mathrm{1}}}}  \ottsym{=}  \ottnt{f'_{{\mathrm{12}}}}  [  \ottmv{x}  \ottsym{:=}   \Lambda    \overrightarrow{ \ottmv{X_{\ottmv{i}}} }  .\,  w'_{{\mathrm{11}}}   ]$.
        By \rnp{R\_LetP},
        $ \textsf{\textup{let}\relax} \,  \ottmv{x}  =   \Lambda    \overrightarrow{ \ottmv{X_{\ottmv{i}}} }  .\,  w_{{\mathrm{11}}}   \textsf{\textup{ in }\relax}  \ottnt{f_{{\mathrm{12}}}}  \,  \xrightarrow{ \mathmakebox[0.4em]{} [  ] \mathmakebox[0.3em]{} }  \, \ottnt{f_{{\mathrm{12}}}}  [  \ottmv{x}  \ottsym{:=}   \Lambda    \overrightarrow{ \ottmv{X_{\ottmv{i}}} }  .\,  w_{{\mathrm{11}}}   ]$.
        Also, $S  \ottsym{(}  \ottnt{f_{{\mathrm{12}}}}  [  \ottmv{x}  \ottsym{:=}   \Lambda    \overrightarrow{ \ottmv{X_{\ottmv{i}}} }  .\,  w_{{\mathrm{11}}}   ]  \ottsym{)}  \ottsym{=}  \ottnt{f'_{{\mathrm{12}}}}  [  \ottmv{x}  \ottsym{:=}   \Lambda    \overrightarrow{ \ottmv{X_{\ottmv{i}}} }  .\,  w'_{{\mathrm{11}}}   ]$.
        By \rnp{E\_Step}, $\ottnt{E}  [   \textsf{\textup{let}\relax} \,  \ottmv{x}  =   \Lambda    \overrightarrow{ \ottmv{X_{\ottmv{i}}} }  .\,  w_{{\mathrm{11}}}   \textsf{\textup{ in }\relax}  \ottnt{f_{{\mathrm{12}}}}   ] \,  \xmapsto{ \mathmakebox[0.4em]{} [  ] \mathmakebox[0.3em]{} }  \, [  ]  \ottsym{(}  \ottnt{E}  [  \ottnt{f_{{\mathrm{12}}}}  [  \ottmv{x}  \ottsym{:=}   \Lambda    \overrightarrow{ \ottmv{X_{\ottmv{i}}} }  .\,  w_{{\mathrm{11}}}   ]  ]  \ottsym{)}$.
        Let $S''  \ottsym{=}  S$, then $S''  \ottsym{(}  \ottnt{E}  [  \ottnt{f_{{\mathrm{12}}}}  [  \ottmv{x}  \ottsym{:=}   \Lambda    \overrightarrow{ \ottmv{X_{\ottmv{i}}} }  .\,  w_{{\mathrm{11}}}   ]  ]  \ottsym{)}  \ottsym{=}  S  \ottsym{(}  \ottnt{E}  \ottsym{)}  [  \ottnt{f'_{{\mathrm{12}}}}  [  \ottmv{x}  \ottsym{:=}   \Lambda    \overrightarrow{ \ottmv{X_{\ottmv{i}}} }  .\,  w'_{{\mathrm{11}}}   ]  ]$.
      \end{caseanalysis}
    \end{caseanalysis}

    %
    %

    \case{\rnp{E\_Abort}}
    Here, $\ottnt{f'}  \ottsym{=}  \textsf{\textup{blame}\relax} \, \ell$ for some $\ell$.  Contradiction.
\qedhere
  \end{caseanalysis}
\end{proof}

\begin{lemmaA} \label{lem:completeness_value}
  If $ \emptyset   \vdash  \ottnt{f}  \ottsym{:}  \ottnt{U}$ and $S  \ottsym{(}  \ottnt{f}  \ottsym{)} \,  \xmapsto{ \mathmakebox[0.4em]{} S' \mathmakebox[0.3em]{} }\hspace{-0.4em}{}^\ast \hspace{0.2em}  \, w$, 
  then $\ottnt{f} \,  \xmapsto{ \mathmakebox[0.4em]{} S'' \mathmakebox[0.3em]{} }\hspace{-0.4em}{}^\ast \hspace{0.2em}  \, w'$ and $S'''  \ottsym{(}  w'  \ottsym{)}  \ottsym{=}  w$
  for some $w'$, $S''$, and $S'''$.
\end{lemmaA}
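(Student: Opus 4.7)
The plan is to induct on the length $n$ of the evaluation sequence $S(f) \xmapsto{S'}{}^\ast w$. The base case $n=0$ is immediate: here $S(f) = w$ is already a value, so by Lemma~\ref{lem:subst_value_is_value} (which says $S(f)$ being a value forces $f$ to be a value) we can take $w' = f$ and $S''' = S$, giving $S'''(w') = S(f) = w$.

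For the inductive step, decompose the sequence as $S(f) \xmapsto{S'_1} f_1 \xmapsto{S'_2}{}^\ast w$ where $S' =  S'_2 \circ S'_1 $. The key tool is Lemma~\ref{lem:completeness_value_eval_step}, which lets us lift a single evaluation step on $S(f)$ back to a single step on $f$, provided we can discharge the side condition that $f_1$ does not evaluate to $\textsf{blame}\,\ell$ under any substitution. Once Lemma~\ref{lem:completeness_value_eval_step} applies, we obtain $f \xmapsto{S_1''} f''$ with $S_1'''(f'') = f_1$ for some $S_1''$ and $S_1'''$. By Lemma~\ref{lem:preservation}, $\emptyset \vdash f'' : S_1''(U)$, and since $S_1'''(f'') = f_1 \xmapsto{S'_2}{}^\ast w$, the induction hypothesis applied to $f''$ (with substitution $S_1'''$) yields $f'' \xmapsto{S_2''}{}^\ast w'$ and $S_2'''(w') = w$ for some $w'$, $S_2''$, $S_2'''$. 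Concatenating gives $f \xmapsto{ S_2'' \circ S_1'' }{}^\ast w'$, and $S_2'''$ witnesses the required relation to $w$.

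The main obstacle is discharging the non-blame side condition on $f_1$. We need to argue that, since $f_1$ evaluates to the value $w$, it cannot also evaluate to $\textsf{blame}\,\ell$ under any further substitution $S_0$. This requires a determinism (or, more precisely, a non-interference) property: the only non-determinism in the operational semantics arises from the choice of fresh type variables in \rn{R\_InstArrow}, which affects neither whether a term converges, nor whether it converges to a value versus blame. Concretely, we can establish an auxiliary lemma that if $f_1 \xmapsto{S_a}{}^\ast w_a$ and $f_1 \xmapsto{S_b}{}^\ast r_b$, then $r_b$ is a value (not blame), by induction on evaluation using the observation that the applicable reduction rule at each redex is determined by the shape of the term up to freshness. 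This also depends on the fact that additional substitution cannot turn a successful cast reduction (\rn{R\_Succeed}, \rn{R\_InstBase}, \rn{R\_InstArrow}) into a failing one, because the relevant ground-type match is already determined before substitution.

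Once this determinism/non-interference property is in hand, the side condition of Lemma~\ref{lem:completeness_value_eval_step} follows immediately from $f_1 \xmapsto{S'_2}{}^\ast w$, and the induction goes through cleanly. Note that this lemma together with Lemma~\ref{lem:subst_value_is_value} essentially gives Theorem~\ref{thm:completeness}(1); the divergence part (Theorem~\ref{thm:completeness}(2)) would be obtained by a coinductive / contrapositive argument from the same single-step lemma.
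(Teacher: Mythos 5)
Your proof follows the same route as the paper's: induction on the length of the evaluation sequence, with Lemma~\ref{lem:subst_value_is_value} handling the base case and Lemma~\ref{lem:completeness_value_eval_step} lifting each step in the inductive case, followed by preservation and the induction hypothesis. The only substantive difference is that you explicitly flag and sketch a discharge of the non-blame side condition of Lemma~\ref{lem:completeness_value_eval_step} via a determinism/non-interference argument, a point the paper's proof applies silently; your reasoning there (the side condition follows because $f_1$ already evaluates to a value and the semantics is deterministic up to choice of fresh type variables) is the correct justification.
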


\begin{proof}
  By mathematical induction on the length of the evaluation sequence.

  \begin{caseanalysis}
    \case{the length is 0}
    Here, $S  \ottsym{(}  \ottnt{f}  \ottsym{)}  \ottsym{=}  w$.
    By Lemma~\ref{lem:subst_value_is_value}, $\ottnt{f}$ is a value $w$.
    Then, $w \,  \xmapsto{ \mathmakebox[0.4em]{} [  ] \mathmakebox[0.3em]{} }\hspace{-0.4em}{}^\ast \hspace{0.2em}  \, w$.

    \case{the length is more than 0}
    There exist $\ottnt{f'}$, $S'_{{\mathrm{1}}}$, and $S'_{{\mathrm{2}}}$ such that $S  \ottsym{(}  \ottnt{f}  \ottsym{)} \,  \xmapsto{ \mathmakebox[0.4em]{} S'_{{\mathrm{1}}} \mathmakebox[0.3em]{} }  \, \ottnt{f'}$ and $\ottnt{f'} \,  \xmapsto{ \mathmakebox[0.4em]{} S'_{{\mathrm{2}}} \mathmakebox[0.3em]{} }\hspace{-0.4em}{}^\ast \hspace{0.2em}  \, w$
    and $S' =  S'_{{\mathrm{1}}}  \circ  S'_{{\mathrm{2}}} $.

    By Lemma~\ref{lem:completeness_value_eval_step},
    there exist $\ottnt{f''}$, $S_{{\mathrm{1}}}$, and $S'_{{\mathrm{1}}}$
    such that $\ottnt{f} \,  \xmapsto{ \mathmakebox[0.4em]{} S_{{\mathrm{1}}} \mathmakebox[0.3em]{} }  \, \ottnt{f''}$ and $S'_{{\mathrm{1}}}  \ottsym{(}  \ottnt{f''}  \ottsym{)}  \ottsym{=}  \ottnt{f'}$.

    Here, $S'_{{\mathrm{1}}}  \ottsym{(}  \ottnt{f''}  \ottsym{)} \,  \xmapsto{ \mathmakebox[0.4em]{} S'_{{\mathrm{2}}} \mathmakebox[0.3em]{} }\hspace{-0.4em}{}^\ast \hspace{0.2em}  \, w$
    and its evaluation sequence is shorter than
    that of $S  \ottsym{(}  \ottnt{f}  \ottsym{)} \,  \xmapsto{ \mathmakebox[0.4em]{} S' \mathmakebox[0.3em]{} }  \, w$.
    Also,  by Lemma~\ref{lem:preservation}, $ \emptyset   \vdash  \ottnt{f''}  \ottsym{:}  \ottnt{U}$.
    By the IH, $\ottnt{f''} \,  \xmapsto{ \mathmakebox[0.4em]{} S_{{\mathrm{2}}} \mathmakebox[0.3em]{} }\hspace{-0.4em}{}^\ast \hspace{0.2em}  \, w'$ and $S'_{{\mathrm{2}}}  \ottsym{(}  w'  \ottsym{)}  \ottsym{=}  w$ for some $S_{{\mathrm{2}}}$ and $S'_{{\mathrm{2}}}$.

    Finally, $\ottnt{f} \,  \xmapsto{ \mathmakebox[0.4em]{}  S_{{\mathrm{2}}}  \circ  S_{{\mathrm{1}}}  \mathmakebox[0.3em]{} }  \, w'$ and $S'_{{\mathrm{2}}}  \ottsym{(}  w'  \ottsym{)}  \ottsym{=}  w$.
\qedhere
  \end{caseanalysis}
\end{proof}

\ifrestate
\thmCompleteness*
\else
\begin{theoremA}[name=Completeness of Dynamic Type Inference,restate=thmCompleteness] \label{thm:completeness}
  Suppose $ \emptyset   \vdash  \ottnt{f}  \ottsym{:}  \ottnt{U}$. 
  \begin{enumerate}
    \item If $S  \ottsym{(}  \ottnt{f}  \ottsym{)} \,  \xmapsto{ \mathmakebox[0.4em]{} S' \mathmakebox[0.3em]{} }\hspace{-0.4em}{}^\ast \hspace{0.2em}  \, w$,
      then $\ottnt{f} \,  \xmapsto{ \mathmakebox[0.4em]{} S'' \mathmakebox[0.3em]{} }\hspace{-0.4em}{}^\ast \hspace{0.2em}  \, w'$ and
      $S'''  \ottsym{(}  w'  \ottsym{)}  \ottsym{=}  w$ for some $w'$, $S''$, and $S'''$.
    \item If $ S  \ottsym{(}  \ottnt{f}  \ottsym{)} \!  \Uparrow  $, then $ \ottnt{f} \!  \Uparrow  $.
  \end{enumerate}
\end{theoremA}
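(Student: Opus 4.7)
The plan is to prove both parts using Lemma \ref{lem:completeness_value_eval_step} as the workhorse, which lifts a single evaluation step of $S(\ottnt{f})$ back to a step of $\ottnt{f}$, provided the immediate result does not reduce to blame. Part (1) is a length-induction argument (essentially captured already by Lemma \ref{lem:completeness_value}); Part (2) is a coinductive-style construction of an infinite sequence on $\ottnt{f}$.

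For Part (1), I would induct on the number $n$ of evaluation steps in $S(\ottnt{f}) \xmapsto{ \mathmakebox[0.4em]{} S' \mathmakebox[0.3em]{} }\hspace{-0.4em}{}^\ast \hspace{0.2em} w$. In the base case $n = 0$, $S(\ottnt{f})$ is the value $w$, from which a routine case analysis on $\ottnt{f}$ (only values substitute to values) shows $\ottnt{f}$ is a value $w'$ with $S(w') = w$; the empty sequence from $w'$ witnesses the conclusion. In the inductive case, decompose the sequence as $S(\ottnt{f}) \xmapsto{ \mathmakebox[0.4em]{} S'_1 \mathmakebox[0.3em]{} } \ottnt{f'}$ followed by $\ottnt{f'} \xmapsto{ \mathmakebox[0.4em]{} S'_2 \mathmakebox[0.3em]{} }\hspace{-0.4em}{}^\ast \hspace{0.2em} w$. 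Since the continuation of $\ottnt{f'}$ terminates in a value rather than blame, the premise of Lemma \ref{lem:completeness_value_eval_step} is satisfied, giving $\ottnt{f} \xmapsto{ \mathmakebox[0.4em]{} S_1 \mathmakebox[0.3em]{} } \ottnt{f''}$ together with a substitution $T$ such that $T(\ottnt{f''}) = \ottnt{f'}$. Preservation (Lemma \ref{lem:preservation}) keeps $\ottnt{f''}$ well typed, so the induction hypothesis applies to the shorter sequence $T(\ottnt{f''}) \xmapsto{ \mathmakebox[0.4em]{} S'_2 \mathmakebox[0.3em]{} }\hspace{-0.4em}{}^\ast \hspace{0.2em} w$, yielding the desired $\ottnt{f''} \xmapsto{}^\ast w'$ with the substitution property.

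For Part (2), I would build an infinite reduction from $\ottnt{f}$ step by step, matching the given infinite sequence $S(\ottnt{f}) = \ottnt{f'_0} \xmapsto{ \mathmakebox[0.4em]{} S'_1 \mathmakebox[0.3em]{} } \ottnt{f'_1} \xmapsto{ \mathmakebox[0.4em]{} S'_2 \mathmakebox[0.3em]{} } \cdots$. Inductively I would construct $\ottnt{f_0} = \ottnt{f}, \ottnt{f_1}, \ottnt{f_2}, \ldots$ and substitutions $T_i$ so that $T_i(\ottnt{f_i}) = \ottnt{f'_i}$ and $\ottnt{f_i} \xmapsto{} \ottnt{f_{i+1}}$. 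Since each $\ottnt{f'_i}$ sits in an infinite continuation, and reduction in $\lambdaRTI$ is deterministic up to the canonical choice of fresh type variables in \rnp{R\_InstArrow}, no evaluation sequence from $\ottnt{f'_i}$ can end in blame; hence the premise of Lemma \ref{lem:completeness_value_eval_step} is satisfied at every stage, producing the next step on the $\ottnt{f}$-side together with a new $T_{i+1}$. Preservation keeps each $\ottnt{f_i}$ well typed, so the construction proceeds indefinitely and gives $ \ottnt{f} \!  \Uparrow $.

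The main obstacle I anticipate is discharging the no-blame premise of Lemma \ref{lem:completeness_value_eval_step} throughout Part (2): the lemma quantifies over all possible evaluation continuations, while the hypothesis $ S(\ottnt{f}) \!  \Uparrow $ only supplies one specific infinite sequence. Bridging the gap requires invoking determinism of $\lambdaRTI$ reduction (up to freshness) so that the existence of an infinite continuation from $\ottnt{f'_i}$ rules out any finite continuation ending in blame. A secondary concern is bookkeeping the substitutions $T_i$, because fresh type variables introduced by \rnp{R\_InstArrow} on the $\lambdaRTI$ side may not occur in $\ottnt{f}$; this threading mirrors, and extends to infinite iteration, the accounting already handled in the proof of Lemma \ref{lem:completeness_value_eval_step}.
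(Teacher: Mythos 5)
Your proposal is correct and follows essentially the same route as the paper: part~(1) is exactly the paper's Lemma~\ref{lem:completeness_value} (induction on the length of the evaluation sequence, with Lemma~\ref{lem:subst_value_is_value} for the base case and Lemma~\ref{lem:completeness_value_eval_step} plus preservation for the inductive step), and part~(2) is the paper's repeated application of Lemma~\ref{lem:completeness_value_eval_step} to obtain arbitrarily long evaluation prefixes of $\ottnt{f}$. The obstacle you flag---discharging the universally quantified no-blame premise of Lemma~\ref{lem:completeness_value_eval_step} from a single infinite continuation via determinism of evaluation---is genuinely needed and is left implicit in the paper's proof, so your explicit treatment of it is a refinement rather than a divergence.
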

\fi 

\begin{proof}
  \leavevmode
  \begin{enumerate}
    \item
      By Lemma~\ref{lem:completeness_value}.

    \item Let $n$ be an arbitrary natural number and suppose $S  \ottsym{(}  \ottnt{f}  \ottsym{)}  \xmapsto{ \mathmakebox[0.4em]{} S'_{{\mathrm{1}}} \mathmakebox[0.3em]{} }  \ottnt{f'_{{\mathrm{1}}}}  \xmapsto{ \mathmakebox[0.4em]{} S'_{{\mathrm{2}}} \mathmakebox[0.3em]{} } \cdots  \xmapsto{ \mathmakebox[0.4em]{} S'_{\ottmv{n}} \mathmakebox[0.3em]{} }  \ottnt{f'_{\ottmv{n}}}$.
      By repeatedly applying Lemma \ref{lem:completeness_value_eval_step}, 
      we obtain  $\ottnt{f}  \xmapsto{ \mathmakebox[0.4em]{} S_{{\mathrm{1}}} \mathmakebox[0.3em]{} }  \ottnt{f_{{\mathrm{1}}}}  \xmapsto{ \mathmakebox[0.4em]{} S_{{\mathrm{2}}} \mathmakebox[0.3em]{} } \cdots  \xmapsto{ \mathmakebox[0.4em]{} S_{\ottmv{n}} \mathmakebox[0.3em]{} }  \ottnt{f_{\ottmv{n}}}$ and
      type substitutions $S''_{\ottmv{i}}$ such that $S''_{\ottmv{i}}  \ottsym{(}  \ottnt{f_{\ottmv{i}}}  \ottsym{)}  \ottsym{=}  \ottnt{f'_{\ottmv{i}}}$.
\qedhere
  \end{enumerate}
\end{proof}

\subsection{Soundness of Dynamic Type Inference}
\begin{lemmaA} \label{lem:context_inversion}
  If $ \emptyset   \vdash  \ottnt{E}  [  \ottnt{f_{{\mathrm{1}}}}  ]  \ottsym{:}  \ottnt{U}$,
  then $ \emptyset   \vdash  \ottnt{f_{{\mathrm{1}}}}  \ottsym{:}  \ottnt{U_{{\mathrm{1}}}}$ for some $\ottnt{U_{{\mathrm{1}}}}$.
\end{lemmaA}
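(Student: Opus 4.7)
The plan is a routine structural induction on the evaluation context $E$, using inversion of the typing rules at each step. This is a completely standard context inversion lemma, and I do not anticipate any real obstacle; the only thing to be careful about is covering every production of $E$.

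First I would establish the base case $E = [\,]$: here $E[f_1] = f_1$, so the premise $\emptyset \vdash f_1 : U$ directly witnesses the conclusion with $U_1 = U$. Then, for each compound form of $E$ given by the grammar, I would peel off the outermost constructor by inversion on the (unique) applicable typing rule, obtaining a typing judgment for the strictly smaller subterm $E'[f_1]$, and then apply the induction hypothesis.

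Concretely, for $E = \mathit{op}(E', f)$ (resp.\ $\mathit{op}(w, E')$) inversion of \rnp{T\_Op} gives $\emptyset \vdash E'[f_1] : \iota_i$ for the appropriate $i$; for $E = E' \, f$ inversion of \rnp{T\_App} gives $\emptyset \vdash E'[f_1] : U' \!\rightarrow\! U$ for some $U'$; for $E = w \, E'$ inversion of \rnp{T\_App} gives $\emptyset \vdash E'[f_1] : U'$ where $w$ has type $U' \!\rightarrow\! U$; and for $E = E' : U' \Rightarrow^{\ell} U$ inversion of \rnp{T\_Cast} gives $\emptyset \vdash E'[f_1] : U'$ together with $U' \sim U$. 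In each case the induction hypothesis applied to the smaller context $E'$ yields some $U_1$ with $\emptyset \vdash f_1 : U_1$, which is exactly what is needed.

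The proof requires no auxiliary lemmas beyond the typing rules themselves, and no manipulation of type substitutions, since the statement is purely about well-typedness at a single type environment. The only subtle point is to make sure the clause list of evaluation contexts used here matches the one fixed in Figure~\ref{fig:def_syntax}; any additional production (for example, a future extension with pairs or let) would require an extra case, but for the current grammar the enumeration above is exhaustive.
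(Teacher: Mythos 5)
Your proof is correct and matches the paper's own argument, which is exactly a structural induction on $\ottnt{E}$ with inversion of the applicable typing rule in each case. The case enumeration you give is exhaustive for the grammar of evaluation contexts in Figure~\ref{fig:def_syntax}, so nothing further is needed.
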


\begin{proof}
  By induction on the structure of $\ottnt{E}$.
\end{proof}

\begin{lemmaA} \label{lem:value_any_subst}
  If $ \emptyset   \vdash  w  \ottsym{:}  \ottnt{U}$ and $w$ is a value,
  then $S  \ottsym{(}  w  \ottsym{)}$ is a value for any $S$.
\end{lemmaA}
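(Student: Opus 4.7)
The plan is to prove this by structural induction on the value $w$, or equivalently, by inversion via the canonical forms lemma (Lemma \ref{lem:canonical_forms}). Since type substitution commutes with the syntactic constructors of terms and types, the main observation is simply that each of the four syntactic shapes of a well-typed closed value is preserved by type substitution.

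First I would apply Lemma \ref{lem:canonical_forms} to $\emptyset \vdash w : \ottnt{U}$ to enumerate the four possible shapes of $w$. For $w = \ottnt{c}$, we have $S(\ottnt{c}) = \ottnt{c}$, still a value. For $w = \lambda \ottmv{x}\!:\!\ottnt{U_{{\mathrm{1}}}}.\ottnt{f}$, we have $S(w) = \lambda \ottmv{x}\!:\!S(\ottnt{U_{{\mathrm{1}}}}).S(\ottnt{f})$, which directly matches the value form. For a wrapped function $w = w' : \ottnt{U'_{{\mathrm{1}}}} \rightarrow \ottnt{U'_{{\mathrm{2}}}} \Rightarrow^\ell \ottnt{U_{{\mathrm{1}}}} \rightarrow \ottnt{U_{{\mathrm{2}}}}$, inversion of \rnp{T\_Cast} gives $\emptyset \vdash w' : \ottnt{U'_{{\mathrm{1}}}} \rightarrow \ottnt{U'_{{\mathrm{2}}}}$; the inductive hypothesis yields that $S(w')$ is a value, and since $S$ preserves the arrow constructor on both sides of the cast, $S(w)$ has the wrapped-function shape.

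The slightly interesting case is the injection $w = w' : \ottnt{G} \Rightarrow^\ell \star$. Here I would use the fact (recorded in the text following the grammar of ground types) that ground types do not contain type variables: $\ottnt{G}$ is either a base type $\iota$ or $\star \rightarrow \star$, and in either case $S(\ottnt{G}) = \ottnt{G}$. Combined with $S(\star) = \star$ and the inductive hypothesis giving $S(w')$ a value, $S(w) = S(w') : \ottnt{G} \Rightarrow^\ell \star$ fits the injection value form.

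I do not expect any real obstacle: the only point that requires even a moment's thought is ensuring that $S$ does not turn a ground type into a non-ground type, which is immediate from the definition of ground types. Alternatively, one could prove this by induction on the typing derivation and invert \rnp{T\_Cast} using Lemma \ref{lem:canonical_forms} to determine the shape of the source type of each cast surrounding a value; the case analysis is the same, so the structural induction on $w$ is the more direct route.
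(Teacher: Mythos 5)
Your proof is correct and matches the paper's own argument, which is simply ``by case analysis on the structure of $w$'': the four value forms are preserved by type substitution because substitution commutes with the term and type constructors and ground types contain no type variables. The appeal to Lemma~\ref{lem:canonical_forms} is harmless but not needed, since the value grammar itself already gives the four shapes directly.
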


\begin{proof}
  By case analysis on the structure of $w$.
\end{proof}

\begin{lemmaA} \label{lem:reduce_step_any_fresh_subst}
  If
  $\ottnt{f} \,  \xrightarrow{ \mathmakebox[0.4em]{} S \mathmakebox[0.3em]{} }  \, \ottnt{f'}$ and
  $\textit{dom} \, \ottsym{(}  S  \ottsym{)}  \cap  \textit{dom} \, \ottsym{(}  S'  \ottsym{)}  \ottsym{=}   \emptyset $ and
  $\textit{dom} \, \ottsym{(}  S'  \ottsym{)}$ are disjoint from type variables generated by
  $\ottnt{f} \,  \xrightarrow{ \mathmakebox[0.4em]{} S \mathmakebox[0.3em]{} }  \, \ottnt{f'}$,
  then
  $S'  \ottsym{(}  \ottnt{f}  \ottsym{)} \,  \xrightarrow{ \mathmakebox[0.4em]{} S \mathmakebox[0.3em]{} }  \, S'  \ottsym{(}  \ottnt{f'}  \ottsym{)}$.
\end{lemmaA}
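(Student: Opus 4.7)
The plan is to proceed by case analysis on the reduction rule applied to derive $\ottnt{f} \,  \xrightarrow{ \mathmakebox[0.4em]{} S \mathmakebox[0.3em]{} }  \, \ottnt{f'}$, and for each case verify that pushing $S'$ through both sides yields an instance of the same rule with the same annotation $S$. The two disjointness hypotheses are tailored to exactly the points where the argument could otherwise fail: the first protects the instantiated variable from being shadowed by $S'$, and the second keeps the freshly generated variables fresh in the image of $S'$ as well.

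For the rules whose annotation is empty and which introduce no fresh variables (\textsc{R\_Op}, \textsc{R\_IdBase}, \textsc{R\_IdStar}, \textsc{R\_Succeed}, \textsc{R\_Fail}, \textsc{R\_AppCast}, \textsc{R\_Beta}, \textsc{R\_LetP}), I apply $S'$ to both sides and show the same rule fires. For \textsc{R\_Beta} and \textsc{R\_LetP} this uses the standard commutation of type substitution with term (and polymorphic) substitution, together with Lemma~\ref{lem:value_any_subst} to ensure the arguments remain values. For \textsc{R\_Op} I use the fact that $ \llbracket\mathit{op}\rrbracket $ returns base-typed constants and $S'$ fixes base types. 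For \textsc{R\_Fail}, the inequality of ground types is preserved because ground types contain no type variables, so $S'$ acts as the identity on them.

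For \textsc{R\_Ground} and \textsc{R\_Expand}, the delicate point is preserving the side conditions $\ottnt{U}  \neq  \star$, $\ottnt{U}  \neq  \ottnt{G}$, and $\ottnt{U}  \sim  \ottnt{G}$. Consistency is preserved by Lemma~\ref{lem:type_param_substitution_in_cosistency}. For the inequalities, by inspection of the consistency rules the only way $\ottnt{U}  \sim  \ottnt{G}$ holds for a non-dynamic, non-variable, non-equal $\ottnt{U}$ is $\ottnt{U}  \ottsym{=}  \ottnt{U_{{\mathrm{1}}}}  \!\rightarrow\!  \ottnt{U_{{\mathrm{2}}}}$ with $\ottnt{G}  \ottsym{=}  \star  \!\rightarrow\!  \star$; since $S'$ maps type variables to \emph{static} types, an occurrence of $\star$ in $S'  \ottsym{(}  \ottnt{U}  \ottsym{)}$ can only originate from an occurrence in $\ottnt{U}$, so $S'  \ottsym{(}  \ottnt{U}  \ottsym{)}  \ottsym{=}  \star  \!\rightarrow\!  \star$ would force $\ottnt{U}  \ottsym{=}  \star  \!\rightarrow\!  \star$, contradicting $\ottnt{U}  \neq  \ottnt{G}$.

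The most delicate cases, and the main obstacle, are \textsc{R\_InstBase} and especially \textsc{R\_InstArrow}. For \textsc{R\_InstBase}, $S  \ottsym{=}  [  \ottmv{X}  :=  \iota  ]$; the first disjointness hypothesis gives $\ottmv{X}  \not\in  \textit{dom} \, \ottsym{(}  S'  \ottsym{)}$, so $S'  \ottsym{(}  w  \ottsym{:}   \iota \Rightarrow  \unskip ^ { \ell_{{\mathrm{1}}} }  \!  \star \Rightarrow  \unskip ^ { \ell_{{\mathrm{2}}} }  \! \ottmv{X}    \ottsym{)}$ still has $\ottmv{X}$ as the final target and reduces via the same instantiation. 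For \textsc{R\_InstArrow}, the freshly generated $\ottmv{X_{{\mathrm{1}}}}, \ottmv{X_{{\mathrm{2}}}}$ lie outside $\textit{dom} \, \ottsym{(}  S'  \ottsym{)}$ by the second hypothesis, so they do not occur in $S'  \ottsym{(}  w  \ottsym{)}$ and remain legitimate fresh choices for the reduction of $S'  \ottsym{(}  \ottnt{f}  \ottsym{)}$; consequently the rule can fire generating the same substitution $[  \ottmv{X}  :=  \ottmv{X_{{\mathrm{1}}}}  \!\rightarrow\!  \ottmv{X_{{\mathrm{2}}}}  ]$, and the right-hand side matches $S'  \ottsym{(}  \ottnt{f'}  \ottsym{)}$ on the nose. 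Without the second hypothesis, $S'$ could mention $\ottmv{X_{{\mathrm{1}}}}$ or $\ottmv{X_{{\mathrm{2}}}}$, forcing a renaming that would break the syntactic equality we need; it is precisely this freshness bookkeeping, not any deep computation, that the hypotheses are designed to make routine.
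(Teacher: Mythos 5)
Your proof is correct and follows essentially the same route as the paper's: case analysis on the reduction rule, with the static-codomain observation handling the side conditions of \textsc{R\_Ground}/\textsc{R\_Expand}, the first disjointness hypothesis handling \textsc{R\_InstBase}, and the freshness hypothesis handling \textsc{R\_InstArrow}. One small imprecision: \textsc{R\_LetP} \emph{does} generate fresh type variables (when a type argument is $ \nu $, the polymorphic substitution $\ottnt{f}  [  \ottmv{x}  \ottsym{:=}   \Lambda    \overrightarrow{ \ottmv{X_{\ottmv{i}}} }  .\,  w   ]$ introduces them), so it belongs with the cases needing the second hypothesis rather than among the rules that introduce none---but since that hypothesis already covers all variables generated by the step, the argument goes through unchanged.
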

\begin{proof}
 By case analysis on the reduction rule applied to derive $\ottnt{f} \,  \xrightarrow{ \mathmakebox[0.4em]{} S \mathmakebox[0.3em]{} }  \, \ottnt{f'}$.
 \begin{caseanalysis}
  \case{\rnp{R\_OP}, \rnp{R\_Beta}, \rnp{R\_IdBase}, \rnp{R\_IdStar}, \rnp{R\_Succeed}, \rnp{R\_Fail}, and \rnp{R\_AppCast}} Easy.
  \case{\rnp{R\_Ground}}
  We are given $w  \ottsym{:}   \ottnt{U} \Rightarrow  \unskip ^ { \ell }  \! \star  \,  \xrightarrow{ \mathmakebox[0.4em]{} [  ] \mathmakebox[0.3em]{} }  \, w  \ottsym{:}   \ottnt{U} \Rightarrow  \unskip ^ { \ell }  \!  \ottnt{G} \Rightarrow  \unskip ^ { \ell }  \! \star  $
  for some $w$, $\ottnt{U}$, $\ell$, and $\ottnt{G}$ such that
  $\ottnt{U}  \neq  \star$ and $\ottnt{U}  \neq  \ottnt{G}$ and $\ottnt{U}  \sim  \ottnt{G}$.
  By case analysis on $\ottnt{U}$.
  \begin{caseanalysis}
   \case{$\ottnt{U}  \ottsym{=}  \iota$} Contradictory since $\ottnt{U}  \neq  \ottnt{G}$ and $\ottnt{U}  \sim  \ottnt{G}$.
   \case{$\ottnt{U}  \ottsym{=}  \star$} Contradictory.
   \case{$\ottnt{U}  \ottsym{=}  \ottmv{X}$} Contradictory since $\ottnt{U}  \sim  \ottnt{G}$.
   \case{$\ottnt{U}  \ottsym{=}  \ottnt{U_{{\mathrm{1}}}}  \!\rightarrow\!  \ottnt{U_{{\mathrm{2}}}}$}
    We have $S'  \ottsym{(}  \ottnt{U}  \ottsym{)}  \ottsym{=}  S'  \ottsym{(}  \ottnt{U_{{\mathrm{1}}}}  \ottsym{)}  \!\rightarrow\!  S'  \ottsym{(}  \ottnt{U_{{\mathrm{2}}}}  \ottsym{)}$ and $\ottnt{G}  \ottsym{=}  \star  \!\rightarrow\!  \star$.
    Since $\ottnt{U}  \neq  \star  \!\rightarrow\!  \star$, we have $S'  \ottsym{(}  \ottnt{U}  \ottsym{)}  \neq  \star  \!\rightarrow\!  \star$.
    Since $S'  \ottsym{(}  w  \ottsym{)}$ is a value by Lemma~\ref{lem:value_any_subst},
    we have $S'  \ottsym{(}  w  \ottsym{:}   \ottnt{U} \Rightarrow  \unskip ^ { \ell }  \! \star   \ottsym{)} \,  \xrightarrow{ \mathmakebox[0.4em]{} [  ] \mathmakebox[0.3em]{} }  \, S'  \ottsym{(}  w  \ottsym{:}   \ottnt{U} \Rightarrow  \unskip ^ { \ell }  \!  \ottnt{G} \Rightarrow  \unskip ^ { \ell }  \! \star    \ottsym{)}$.
  \end{caseanalysis}

  \case{\rnp{R\_Expand}}
  We are given $w  \ottsym{:}   \star \Rightarrow  \unskip ^ { \ell }  \! \ottnt{U}  \,  \xrightarrow{ \mathmakebox[0.4em]{} [  ] \mathmakebox[0.3em]{} }  \, w  \ottsym{:}   \star \Rightarrow  \unskip ^ { \ell }  \!  \ottnt{G} \Rightarrow  \unskip ^ { \ell }  \! \ottnt{U}  $
  for some $w$, $\ottnt{U}$, $\ell$, and $\ottnt{G}$ such that
  $\ottnt{U}  \neq  \star$ and $\ottnt{U}  \neq  \ottnt{G}$ and $\ottnt{U}  \sim  \ottnt{G}$.
  By case analysis on $\ottnt{U}$.
  \begin{caseanalysis}
   \case{$\ottnt{U}  \ottsym{=}  \iota$} Contradictory since $\ottnt{U}  \neq  \ottnt{G}$ and $\ottnt{U}  \sim  \ottnt{G}$.
   \case{$\ottnt{U}  \ottsym{=}  \star$} Contradictory.
   \case{$\ottnt{U}  \ottsym{=}  \ottmv{X}$} Contradictory since $\ottnt{U}  \sim  \ottnt{G}$.
   \case{$\ottnt{U}  \ottsym{=}  \ottnt{U_{{\mathrm{1}}}}  \!\rightarrow\!  \ottnt{U_{{\mathrm{2}}}}$}
    We have $S'  \ottsym{(}  \ottnt{U}  \ottsym{)}  \ottsym{=}  S'  \ottsym{(}  \ottnt{U_{{\mathrm{1}}}}  \ottsym{)}  \!\rightarrow\!  S'  \ottsym{(}  \ottnt{U_{{\mathrm{2}}}}  \ottsym{)}$ and $\ottnt{G}  \ottsym{=}  \star  \!\rightarrow\!  \star$.
    Since $\ottnt{U}  \neq  \star  \!\rightarrow\!  \star$, we have $S'  \ottsym{(}  \ottnt{U}  \ottsym{)}  \neq  \star  \!\rightarrow\!  \star$.
    Since $S'  \ottsym{(}  w  \ottsym{)}$ is a value by Lemma~\ref{lem:value_any_subst},
    we have $S'  \ottsym{(}  w  \ottsym{:}   \star \Rightarrow  \unskip ^ { \ell }  \! \ottnt{U}   \ottsym{)} \,  \xrightarrow{ \mathmakebox[0.4em]{} [  ] \mathmakebox[0.3em]{} }  \, S'  \ottsym{(}  w  \ottsym{:}   \star \Rightarrow  \unskip ^ { \ell }  \!  \ottnt{G} \Rightarrow  \unskip ^ { \ell }  \! \ottnt{U}    \ottsym{)}$.
  \end{caseanalysis}

  \case{\rnp{R\_InstBase}}
  We are given $w  \ottsym{:}   \iota \Rightarrow  \unskip ^ { \ell_{{\mathrm{1}}} }  \!  \star \Rightarrow  \unskip ^ { \ell_{{\mathrm{2}}} }  \! \ottmv{X}   \,  \xrightarrow{ \mathmakebox[0.4em]{} [  \ottmv{X}  :=  \iota  ] \mathmakebox[0.3em]{} }  \, w$
  for some $w$, $\iota$, $\ottmv{X}$, $\ell_{{\mathrm{1}}}$, and $\ell_{{\mathrm{2}}}$.
  Since $ \{  \ottmv{X}  \}   \cap  \textit{dom} \, \ottsym{(}  S'  \ottsym{)}  \ottsym{=}   \emptyset $,
  we have $S'  \ottsym{(}  \ottnt{f}  \ottsym{)}  \ottsym{=}  S'  \ottsym{(}  w  \ottsym{)}  \ottsym{:}   \iota \Rightarrow  \unskip ^ { \ell_{{\mathrm{1}}} }  \!  \star \Rightarrow  \unskip ^ { \ell_{{\mathrm{2}}} }  \! \ottmv{X}  $.
  Thus, $S'  \ottsym{(}  \ottnt{f}  \ottsym{)} \,  \xrightarrow{ \mathmakebox[0.4em]{} [  \ottmv{X}  :=  \iota  ] \mathmakebox[0.3em]{} }  \, S'  \ottsym{(}  w  \ottsym{)}$.

  \case{\rnp{R\_InstArrow}}
  We are given $w  \ottsym{:}   \star  \!\rightarrow\!  \star \Rightarrow  \unskip ^ { \ell_{{\mathrm{1}}} }  \!  \star \Rightarrow  \unskip ^ { \ell_{{\mathrm{2}}} }  \! \ottmv{X}   \,  \xrightarrow{ \mathmakebox[0.4em]{} [  \ottmv{X}  :=  \ottmv{X_{{\mathrm{1}}}}  \!\rightarrow\!  \ottmv{X_{{\mathrm{2}}}}  ] \mathmakebox[0.3em]{} }  \, w  \ottsym{:}   \star  \!\rightarrow\!  \star \Rightarrow  \unskip ^ { \ell_{{\mathrm{1}}} }  \!  \star \Rightarrow  \unskip ^ { \ell_{{\mathrm{2}}} }  \!  \star  \!\rightarrow\!  \star \Rightarrow  \unskip ^ { \ell_{{\mathrm{2}}} }  \! \ottmv{X_{{\mathrm{1}}}}  \!\rightarrow\!  \ottmv{X_{{\mathrm{2}}}}   $
  for some $w$, $\ottmv{X}$, $\ell_{{\mathrm{1}}}$, $\ell_{{\mathrm{2}}}$, and fresh $\ottmv{X_{{\mathrm{1}}}}$ and $\ottmv{X_{{\mathrm{2}}}}$.
  Without loss of generality, we can suppose that $\ottmv{X_{{\mathrm{1}}}}, \ottmv{X_{{\mathrm{2}}}} \, \not\in \, \textit{dom} \, \ottsym{(}  S'  \ottsym{)}$.
  Since $ \{  \ottmv{X}  \}   \cap  \textit{dom} \, \ottsym{(}  S'  \ottsym{)}  \ottsym{=}   \emptyset $,
  we have $S'  \ottsym{(}  \ottnt{f}  \ottsym{)}  \ottsym{=}  S'  \ottsym{(}  w  \ottsym{)}  \ottsym{:}   \star  \!\rightarrow\!  \star \Rightarrow  \unskip ^ { \ell_{{\mathrm{1}}} }  \!  \star \Rightarrow  \unskip ^ { \ell_{{\mathrm{2}}} }  \! \ottmv{X}  $.
  Thus, $S'  \ottsym{(}  \ottnt{f}  \ottsym{)} \,  \xrightarrow{ \mathmakebox[0.4em]{} [  \ottmv{X}  :=  \ottmv{X_{{\mathrm{1}}}}  \!\rightarrow\!  \ottmv{X_{{\mathrm{2}}}}  ] \mathmakebox[0.3em]{} }  \, S'  \ottsym{(}  w  \ottsym{)}  \ottsym{:}   \star  \!\rightarrow\!  \star \Rightarrow  \unskip ^ { \ell_{{\mathrm{1}}} }  \!  \star \Rightarrow  \unskip ^ { \ell_{{\mathrm{2}}} }  \!  \star  \!\rightarrow\!  \star \Rightarrow  \unskip ^ { \ell_{{\mathrm{2}}} }  \! \ottmv{X_{{\mathrm{1}}}}  \!\rightarrow\!  \ottmv{X_{{\mathrm{2}}}}   $.
  Since $S'  \ottsym{(}  w  \ottsym{)}  \ottsym{:}   \star  \!\rightarrow\!  \star \Rightarrow  \unskip ^ { \ell_{{\mathrm{1}}} }  \!  \star \Rightarrow  \unskip ^ { \ell_{{\mathrm{2}}} }  \!  \star  \!\rightarrow\!  \star \Rightarrow  \unskip ^ { \ell_{{\mathrm{2}}} }  \! \ottmv{X_{{\mathrm{1}}}}  \!\rightarrow\!  \ottmv{X_{{\mathrm{2}}}}     \ottsym{=}  S'  \ottsym{(}  w  \ottsym{:}   \star  \!\rightarrow\!  \star \Rightarrow  \unskip ^ { \ell_{{\mathrm{1}}} }  \!  \star \Rightarrow  \unskip ^ { \ell_{{\mathrm{2}}} }  \!  \star  \!\rightarrow\!  \star \Rightarrow  \unskip ^ { \ell_{{\mathrm{2}}} }  \! \ottmv{X_{{\mathrm{1}}}}  \!\rightarrow\!  \ottmv{X_{{\mathrm{2}}}}     \ottsym{)}$, we finish.

  \case{\rnp{R\_LetP}}
  We are given $ \textsf{\textup{let}\relax} \,  \ottmv{x}  =   \Lambda    \overrightarrow{ \ottmv{X} }  .\,  w''   \textsf{\textup{ in }\relax}  \ottnt{f''}  \,  \xrightarrow{ \mathmakebox[0.4em]{} [  ] \mathmakebox[0.3em]{} }  \, \ottnt{f''}  [  \ottmv{x}  \ottsym{:=}   \Lambda    \overrightarrow{ \ottmv{X} }  .\,  w''   ]$.
  Type variables generated by the value substitution are supposed not to be
  captured by $S'$.
  Thus, we finish.  \qedhere
 \end{caseanalysis}
\end{proof}

\begin{lemmaA}[name=,restate=lemEvalStepAnySubst] \label{lem:eval_step_any_subst}
  If $\ottnt{f} \,  \xmapsto{ \mathmakebox[0.4em]{} S \mathmakebox[0.3em]{} }  \, \ottnt{f'}$ and
  $\textit{dom} \, \ottsym{(}  S  \ottsym{)}  \cap  \textit{dom} \, \ottsym{(}  S'  \ottsym{)}  \ottsym{=}   \emptyset $ and
  $\textit{dom} \, \ottsym{(}  S'  \ottsym{)}$ are disjoint from type variables generated by
  $\ottnt{f} \,  \xmapsto{ \mathmakebox[0.4em]{} S \mathmakebox[0.3em]{} }  \, \ottnt{f'}$ and,
  for any $\ottmv{X} \, \in \, \textit{dom} \, \ottsym{(}  S'  \ottsym{)}$, $\textit{ftv} \, \ottsym{(}  S'  \ottsym{(}  \ottmv{X}  \ottsym{)}  \ottsym{)}  \cap  \textit{dom} \, \ottsym{(}  S  \ottsym{)}  \ottsym{=}   \emptyset $,
  then
  $S'  \ottsym{(}  \ottnt{f}  \ottsym{)} \,  \xmapsto{ \mathmakebox[0.4em]{} S \mathmakebox[0.3em]{} }  \, S'  \ottsym{(}  \ottnt{f'}  \ottsym{)}$.
\end{lemmaA}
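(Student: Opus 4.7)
My plan is to do a straightforward case analysis on the evaluation rule used to derive $\ottnt{f} \,  \xmapsto{ \mathmakebox[0.4em]{} S \mathmakebox[0.3em]{} }  \, \ottnt{f'}$, leveraging the reduction-level analogue, Lemma~\ref{lem:reduce_step_any_fresh_subst}, which we have already established. There are only two evaluation rules to consider.

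For \rnp{E\_Abort}, we have $\ottnt{f} = \ottnt{E}[\textsf{blame}\,\ell]$, $\ottnt{f'} = \textsf{blame}\,\ell$, $S = [\,]$, and $\ottnt{E} \neq [\,]$. Applying $S'$ to both sides, $S'(\ottnt{f}) = S'(\ottnt{E})[\textsf{blame}\,\ell]$, and since $S'(\ottnt{E})$ is nonempty whenever $\ottnt{E}$ is, \rnp{E\_Abort} re-applies to give the goal. The only subtlety here is noting that $S'(\ottnt{E})$ is still a valid evaluation context, which follows by induction on $\ottnt{E}$ using Lemma~\ref{lem:value_any_subst} for the cases where $\ottnt{E}$ contains a value (e.g., $w\,\ottnt{E'}$).

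For \rnp{E\_Step}, we have $\ottnt{f} = \ottnt{E}[\ottnt{f_{{\mathrm{1}}}}]$, $\ottnt{f'} = S(\ottnt{E}[\ottnt{f'_{{\mathrm{1}}}}])$, and $\ottnt{f_{{\mathrm{1}}}} \,  \xrightarrow{ \mathmakebox[0.4em]{} S \mathmakebox[0.3em]{} }  \, \ottnt{f'_{{\mathrm{1}}}}$. By Lemma~\ref{lem:reduce_step_any_fresh_subst}, $S'(\ottnt{f_{{\mathrm{1}}}}) \,  \xrightarrow{ \mathmakebox[0.4em]{} S \mathmakebox[0.3em]{} }  \, S'(\ottnt{f'_{{\mathrm{1}}}})$; again using Lemma~\ref{lem:value_any_subst}, $S'(\ottnt{E})$ is an evaluation context, so \rnp{E\_Step} gives $S'(\ottnt{E})[S'(\ottnt{f_{{\mathrm{1}}}})] \,  \xmapsto{ \mathmakebox[0.4em]{} S \mathmakebox[0.3em]{} }  \, S(S'(\ottnt{E})[S'(\ottnt{f'_{{\mathrm{1}}}})])$. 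Since substitution commutes with plugging, $S'(\ottnt{E})[S'(\ottnt{f_{{\mathrm{1}}}})] = S'(\ottnt{E}[\ottnt{f_{{\mathrm{1}}}}]) = S'(\ottnt{f})$, so it remains to show $S(S'(\ottnt{E}[\ottnt{f'_{{\mathrm{1}}}}])) = S'(S(\ottnt{E}[\ottnt{f'_{{\mathrm{1}}}}]))$, i.e., that the two substitutions commute on this term.

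The main obstacle, and the heart of the argument, is verifying this commutation $S \circ S' = S' \circ S$ on the relevant term. It suffices to check it on type variables $\ottmv{X}$. If $\ottmv{X} \in \textit{dom}(S')$, then $\ottmv{X} \notin \textit{dom}(S)$ by hypothesis, and $\textit{ftv}(S'(\ottmv{X})) \cap \textit{dom}(S) = \emptyset$ by the last hypothesis, so both sides equal $S'(\ottmv{X})$. If $\ottmv{X} \in \textit{dom}(S)$, then inspecting the reduction rules that produce nonempty substitutions (\rnp{R\_InstBase} and \rnp{R\_InstArrow}), the image $S(\ottmv{X})$ is either a base type $\iota$ (no free type variables) or $\ottmv{X_{{\mathrm{1}}}} \!\rightarrow\! \ottmv{X_{{\mathrm{2}}}}$ for freshly generated $\ottmv{X_{{\mathrm{1}}}}, \ottmv{X_{{\mathrm{2}}}}$; the freshness hypothesis on $\textit{dom}(S')$ vs.\ generated variables ensures $S'$ is the identity on $S(\ottmv{X})$, while $\ottmv{X} \notin \textit{dom}(S')$ makes $S'$ the identity on $\ottmv{X}$ itself, so both sides equal $S(\ottmv{X})$. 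The remaining case $\ottmv{X}\notin\textit{dom}(S)\cup\textit{dom}(S')$ is trivial. Once commutation is established, chaining it with the \rnp{E\_Step} derivation above yields $S'(\ottnt{f}) \,  \xmapsto{ \mathmakebox[0.4em]{} S \mathmakebox[0.3em]{} }  \, S'(\ottnt{f'})$ as required.
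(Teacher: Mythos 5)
Your proof is correct and follows essentially the same route as the paper's: case analysis on the evaluation rule, invoking Lemma~\ref{lem:reduce_step_any_fresh_subst} for the underlying reduction in the \rnp{E\_Step} case, and then commuting $S$ past $S'$ on the resulting term. The only difference is that you spell out the pointwise commutation argument on type variables (and the fact that $S'(\ottnt{E})$ remains an evaluation context), which the paper compresses into ``by the assumption'' and ``Obvious''; your elaboration is accurate.
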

\begin{proof}
  By case analysis on the evaluation rule applied to $\ottnt{f}$.
  \begin{caseanalysis}
    \case{\rnp{E\_Step}}
    We are given $\ottnt{E}  [  \ottnt{f_{{\mathrm{1}}}}  ] \,  \xmapsto{ \mathmakebox[0.4em]{} S \mathmakebox[0.3em]{} }  \, S  \ottsym{(}  \ottnt{E}  [  \ottnt{f'_{{\mathrm{1}}}}  ]  \ottsym{)}$
    for some $\ottnt{E}$, $\ottnt{f_{{\mathrm{1}}}}$, and $\ottnt{f'_{{\mathrm{1}}}}$ such that
    $\ottnt{f_{{\mathrm{1}}}} \,  \xrightarrow{ \mathmakebox[0.4em]{} S \mathmakebox[0.3em]{} }  \, \ottnt{f'_{{\mathrm{1}}}}$.
    By Lemma~\ref{lem:reduce_step_any_fresh_subst},
    $S'  \ottsym{(}  \ottnt{f_{{\mathrm{1}}}}  \ottsym{)} \,  \xrightarrow{ \mathmakebox[0.4em]{} S \mathmakebox[0.3em]{} }  \, S'  \ottsym{(}  \ottnt{f'_{{\mathrm{1}}}}  \ottsym{)}$.
    By \rnp{E\_Step},
    $S'  \ottsym{(}  \ottnt{E}  [  \ottnt{f_{{\mathrm{1}}}}  ]  \ottsym{)} \,  \xmapsto{ \mathmakebox[0.4em]{} S \mathmakebox[0.3em]{} }  \, S  \ottsym{(}  S'  \ottsym{(}  \ottnt{E}  [  \ottnt{f'_{{\mathrm{1}}}}  ]  \ottsym{)}  \ottsym{)}$.
    By the assumption, $S  \ottsym{(}  S'  \ottsym{(}  \ottnt{E}  [  \ottnt{f'_{{\mathrm{1}}}}  ]  \ottsym{)}  \ottsym{)}  \ottsym{=}  S'  \ottsym{(}  S  \ottsym{(}  \ottnt{E}  [  \ottnt{f'_{{\mathrm{1}}}}  ]  \ottsym{)}  \ottsym{)}$.
    Thus, we finish.

   By Lemma~\ref{lem:reduce_step_any_fresh_subst}
    \case{\rnp{E\_Abort}} Obvious.  \qedhere
  \end{caseanalysis}
\end{proof}

\begin{lemmaA} \label{lem:eval_multi_step_any_fresh_subst}
  If
  $\ottnt{f} \,  \xmapsto{ \mathmakebox[0.4em]{} S \mathmakebox[0.3em]{} }\hspace{-0.4em}{}^\ast \hspace{0.2em}  \, \ottnt{f'}$ and
  $\textit{dom} \, \ottsym{(}  S  \ottsym{)}  \cap  \textit{dom} \, \ottsym{(}  S'  \ottsym{)}  \ottsym{=}   \emptyset $ and
  $\textit{dom} \, \ottsym{(}  S'  \ottsym{)}$ are disjoint from type variables generated by
  $\ottnt{f} \,  \xmapsto{ \mathmakebox[0.4em]{} S \mathmakebox[0.3em]{} }\hspace{-0.4em}{}^\ast \hspace{0.2em}  \, \ottnt{f'}$ and,
  for any $\ottmv{X} \, \in \, \textit{dom} \, \ottsym{(}  S'  \ottsym{)}$, $\textit{ftv} \, \ottsym{(}  S'  \ottsym{(}  \ottmv{X}  \ottsym{)}  \ottsym{)}  \cap  \textit{dom} \, \ottsym{(}  S  \ottsym{)}  \ottsym{=}   \emptyset $,
  then
  $S'  \ottsym{(}  \ottnt{f}  \ottsym{)} \,  \xmapsto{ \mathmakebox[0.4em]{} S \mathmakebox[0.3em]{} }\hspace{-0.4em}{}^\ast \hspace{0.2em}  \, S'  \ottsym{(}  \ottnt{f'}  \ottsym{)}$.
\end{lemmaA}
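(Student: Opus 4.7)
The plan is to prove this by straightforward induction on the length $n$ of the evaluation sequence $\ottnt{f} \,  \xmapsto{ \mathmakebox[0.4em]{} S \mathmakebox[0.3em]{} }\hspace{-0.4em}{}^\ast \hspace{0.2em}  \, \ottnt{f'}$, using Lemma~\ref{lem:eval_step_any_subst} (the single-step version) as the workhorse at each inductive step. So I would write $\ottnt{f_{{\mathrm{0}}}} \xmapsto{S_{{\mathrm{1}}}} \ottnt{f_{{\mathrm{1}}}} \xmapsto{S_{{\mathrm{2}}}} \cdots \xmapsto{S_{\ottmv{n}}} \ottnt{f_{\ottmv{n}}}$ with $\ottnt{f}  \ottsym{=}  \ottnt{f_{{\mathrm{0}}}}$, $\ottnt{f'}  \ottsym{=}  \ottnt{f_{\ottmv{n}}}$, and $S =    S_{\ottmv{n}}  \circ  S_{{\ottmv{n}-1}}   \circ   \cdots    \circ  S_{{\mathrm{1}}} $.

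The base case $n = 0$ is trivial since $S  \ottsym{=}  [  ]$ and $\ottnt{f}  \ottsym{=}  \ottnt{f'}$. For the inductive step, I would apply Lemma~\ref{lem:eval_step_any_subst} to the first step $\ottnt{f_{{\mathrm{0}}}} \xmapsto{S_{{\mathrm{1}}}} \ottnt{f_{{\mathrm{1}}}}$ with the substitution $S'$ to obtain $S'  \ottsym{(}  \ottnt{f_{{\mathrm{0}}}}  \ottsym{)} \xmapsto{S_{{\mathrm{1}}}} S'  \ottsym{(}  \ottnt{f_{{\mathrm{1}}}}  \ottsym{)}$, and then apply the IH to the tail sequence $\ottnt{f_{{\mathrm{1}}}} \,  \xmapsto{ \mathmakebox[0.4em]{}  S_{\ottmv{n}}  \circ   \cdots    \circ  S_{{\mathrm{2}}}  \mathmakebox[0.3em]{} }\hspace{-0.4em}{}^\ast \hspace{0.2em}  \, \ottnt{f_{\ottmv{n}}}$ with $S'$ to obtain $S'  \ottsym{(}  \ottnt{f_{{\mathrm{1}}}}  \ottsym{)} \,  \xmapsto{ \mathmakebox[0.4em]{}  S_{\ottmv{n}}  \circ   \cdots    \circ  S_{{\mathrm{2}}}  \mathmakebox[0.3em]{} }\hspace{-0.4em}{}^\ast \hspace{0.2em}  \, S'  \ottsym{(}  \ottnt{f_{\ottmv{n}}}  \ottsym{)}$. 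Concatenating yields the goal.

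The main obstacle is the bookkeeping: I need to verify that the three side conditions (disjointness of domains, disjointness of $\textit{dom}(S')$ from freshly generated type variables, and freeness of $\textit{ftv}(S'(\ottmv{X}))$ from each step's domain) hold both for the single first step and for the IH invocation. For the first step, the domains $\textit{dom} \, \ottsym{(}  S_{{\mathrm{1}}}  \ottsym{)}$ and type variables freshly introduced at step $1$ are subsets of the corresponding quantities for the whole sequence, and for each $\ottmv{X} \, \in \, \textit{dom} \, \ottsym{(}  S'  \ottsym{)}$, $\textit{ftv} \, \ottsym{(}  S'  \ottsym{(}  \ottmv{X}  \ottsym{)}  \ottsym{)}  \cap  \textit{dom} \, \ottsym{(}  S_{{\mathrm{1}}}  \ottsym{)}  \subseteq  \textit{ftv} \, \ottsym{(}  S'  \ottsym{(}  \ottmv{X}  \ottsym{)}  \ottsym{)}  \cap  \textit{dom} \, \ottsym{(}  S  \ottsym{)}  \ottsym{=}   \emptyset $, so the single-step lemma applies. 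Likewise, the IH hypotheses for the tail are satisfied because $\textit{dom} \, \ottsym{(}   S_{\ottmv{n}}  \circ   \cdots    \circ  S_{{\mathrm{2}}}   \ottsym{)}  \subseteq  \textit{dom} \, \ottsym{(}  S  \ottsym{)}$, the fresh variables generated in steps $2,\ldots,n$ are contained in those generated overall, and the freeness condition again restricts to a subset of the original.

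One minor subtlety worth checking is that the composition of the step substitutions agrees with the way \rnp{E\_Step} applies them to the whole term: after applying $S'$ and then reducing under $S_{{\mathrm{1}}}$, the resulting term is literally $S'  \ottsym{(}  \ottnt{f_{{\mathrm{1}}}}  \ottsym{)}$ (not $S'  \ottsym{(}  S_{{\mathrm{1}}}  \ottsym{(}  \ottnt{f_{{\mathrm{0}}}}  \ottsym{)}  \ottsym{)}$ versus $S_{{\mathrm{1}}}  \ottsym{(}  S'  \ottsym{(}  \ottnt{f_{{\mathrm{0}}}}  \ottsym{)}  \ottsym{)}$ issues) because the disjointness and freshness conditions guarantee $S_{{\mathrm{1}}}  \circ  S'  \ottsym{=}  S'  \circ  S_{{\mathrm{1}}} $ on the relevant terms. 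This commutation, already used implicitly in Lemma~\ref{lem:eval_step_any_subst}, propagates along the induction, so no additional argument is needed.
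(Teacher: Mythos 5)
Your proposal matches the paper's own proof: induction on the length of the evaluation sequence, with the base case trivial and the inductive step discharged by applying Lemma~\ref{lem:eval_step_any_subst} to the first step (after noting that its side conditions follow from the corresponding conditions on the whole sequence) and the induction hypothesis to the tail. The commutation subtlety you flag is indeed absorbed into the single-step lemma, exactly as in the paper.
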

\begin{proof}
  By induction on the length of the evaluation sequence.
 \begin{caseanalysis}
  \case{the length is 0} Obvious.

  \case{the length is more than 0}
  We are given $\ottnt{f} \,  \xmapsto{ \mathmakebox[0.4em]{} S_{{\mathrm{1}}} \mathmakebox[0.3em]{} }  \, \ottnt{f''}$ and $\ottnt{f''} \,  \xmapsto{ \mathmakebox[0.4em]{} S_{{\mathrm{2}}} \mathmakebox[0.3em]{} }  \, \ottnt{f'}$
  for some $S_{{\mathrm{1}}}$, $S_{{\mathrm{2}}}$, and $\ottnt{f''}$ such that
  $S  \ottsym{=}   S_{{\mathrm{2}}}  \circ  S_{{\mathrm{1}}} $.
  Thus, $\textit{dom} \, \ottsym{(}  S_{{\mathrm{1}}}  \ottsym{)}  \cap  \textit{dom} \, \ottsym{(}  S'  \ottsym{)}  \ottsym{=}   \emptyset $ and,
  for any $\ottmv{X} \, \in \, \textit{dom} \, \ottsym{(}  S'  \ottsym{)}$, $\textit{ftv} \, \ottsym{(}  S'  \ottsym{(}  \ottmv{X}  \ottsym{)}  \ottsym{)}  \cap  \textit{dom} \, \ottsym{(}  S_{{\mathrm{1}}}  \ottsym{)}  \ottsym{=}   \emptyset $.
  Hence, by Lemma~\ref{lem:eval_step_any_subst},
  $S'  \ottsym{(}  \ottnt{f}  \ottsym{)} \,  \xmapsto{ \mathmakebox[0.4em]{} S_{{\mathrm{1}}} \mathmakebox[0.3em]{} }  \, S'  \ottsym{(}  \ottnt{f''}  \ottsym{)}$.
  By the IH, we finish.
  \qedhere
 \end{caseanalysis}
\end{proof}

\ifrestate
\lemSoundnessResult*
\else
\begin{lemmaA} \label{lem:soundness_result}
 If
 $\ottnt{f} \,  \xmapsto{ \mathmakebox[0.4em]{}  S_{{\mathrm{1}}}  \uplus  S_{{\mathrm{2}}}  \mathmakebox[0.3em]{} }\hspace{-0.4em}{}^\ast \hspace{0.2em}  \, \ottnt{r}$ and
 $\textit{dom} \, \ottsym{(}  S_{{\mathrm{1}}}  \ottsym{)}  \subseteq  \textit{ftv} \, \ottsym{(}  \ottnt{f}  \ottsym{)}$,
 then
 $S_{{\mathrm{1}}}  \ottsym{(}  \ottnt{f}  \ottsym{)} \,  \xmapsto{ \mathmakebox[0.4em]{} S'_{{\mathrm{2}}} \mathmakebox[0.3em]{} }\hspace{-0.4em}{}^\ast \hspace{0.2em}  \, \ottnt{r}$ for some $S'_{{\mathrm{2}}}$ such that
 $\textit{dom} \, \ottsym{(}  S'_{{\mathrm{2}}}  \ottsym{)}  \subseteq  \textit{dom} \, \ottsym{(}  S_{{\mathrm{2}}}  \ottsym{)}$ and $\textit{ftv} \, \ottsym{(}  S_{{\mathrm{1}}}  \ottsym{(}  \ottmv{X}  \ottsym{)}  \ottsym{)}  \cap  \textit{dom} \, \ottsym{(}  S'_{{\mathrm{2}}}  \ottsym{)}  \ottsym{=}   \emptyset $
 for any $\ottmv{X} \, \in \, \textit{dom} \, \ottsym{(}  S_{{\mathrm{1}}}  \ottsym{)}$.
\end{lemmaA}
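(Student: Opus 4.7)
The plan is to proceed by induction on the length $n$ of the evaluation sequence $\ottnt{f} \,  \xmapsto{ \mathmakebox[0.4em]{}  S_{{\mathrm{1}}}  \uplus  S_{{\mathrm{2}}}  \mathmakebox[0.3em]{} }\hspace{-0.4em}{}^\ast \hspace{0.2em}  \, \ottnt{r}$. When $n = 0$, the result $\ottnt{r}$ is $\ottnt{f}$ and both generated substitutions are empty, so taking $S'_{{\mathrm{2}}} = [  ]$ suffices. For $n > 0$, decompose the sequence as $\ottnt{f} \,  \xmapsto{ \mathmakebox[0.4em]{} S \mathmakebox[0.3em]{} }  \, \ottnt{f_{{\mathrm{1}}}} \,  \xmapsto{ \mathmakebox[0.4em]{} S' \mathmakebox[0.3em]{} }\hspace{-0.4em}{}^\ast \hspace{0.2em}  \, \ottnt{r}$ with $ S_{{\mathrm{1}}}  \uplus  S_{{\mathrm{2}}}  =  S'  \circ  S $. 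By inspection of the reduction rules, $S$ is either empty, or $[\ottmv{X}  :=  \iota]$ from \rnp{R\_InstBase}, or $[\ottmv{X}  :=  \ottmv{X_{{\mathrm{1}}}}  \!\rightarrow\!  \ottmv{X_{{\mathrm{2}}}}]$ from \rnp{R\_InstArrow} with $\ottmv{X_{{\mathrm{1}}}}, \ottmv{X_{{\mathrm{2}}}}$ fresh; in the latter two cases the instantiated variable $\ottmv{X}$ lies in $\textit{ftv}(\ottnt{f})$ and therefore in $\textit{dom}(S_{{\mathrm{1}}})$.

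The central case analysis is on whether the first step uses an original or a fresh binding. If $S = [  ]$, I apply Lemma~\ref{lem:eval_step_any_subst} to commute $S_{{\mathrm{1}}}$ past the step, obtaining $S_{{\mathrm{1}}}(\ottnt{f}) \,  \xmapsto{ \mathmakebox[0.4em]{} [  ] \mathmakebox[0.3em]{} }  \, S_{{\mathrm{1}}}(\ottnt{f_{{\mathrm{1}}}})$; the side conditions (disjointness of $\textit{dom}(S_{{\mathrm{1}}})$ from the freshly generated variables and non-capture by $S_{{\mathrm{1}}}$'s codomain) are immediate because original type variables were present already in $\ottnt{f}$. Then invoke the IH on $\ottnt{f_{{\mathrm{1}}}}$, after checking that $\textit{dom}(S_{{\mathrm{1}}}) \subseteq \textit{ftv}(\ottnt{f_{{\mathrm{1}}}})$ still holds (free type variables only disappear when an instantiation fires). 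If $S = [\ottmv{X}  :=  \ottnt{T}]$ with $\ottmv{X} \in \textit{dom}(S_{{\mathrm{1}}})$, then the equation $ S_{{\mathrm{1}}}  \uplus  S_{{\mathrm{2}}}  =  S'  \circ  S $ forces $S_{{\mathrm{1}}}(\ottmv{X}) = S'(\ottnt{T})$. Let $S_{{\mathrm{1}}}^- = S_{{\mathrm{1}}} \setminus \{\ottmv{X}\}$. Since $\ottmv{X}$ has already been replaced in $S_{{\mathrm{1}}}(\ottnt{f})$ by its final image, the analogous reduction fires at $S_{{\mathrm{1}}}(\ottnt{f})$ without generating any substitution, producing exactly $S_{{\mathrm{1}}}(\ottnt{f_{{\mathrm{1}}}})$ (more precisely, an $\alpha$-variant thereof when \rnp{R\_InstArrow} fires, because the fresh $\ottmv{X_{{\mathrm{1}}}}, \ottmv{X_{{\mathrm{2}}}}$ never appear on the pre-instantiated side). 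Then the IH on the shorter sequence gives the desired $S'_{{\mathrm{2}}}$.

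The main obstacle will be the bookkeeping for \rnp{R\_InstArrow}, where fresh variables $\ottmv{X_{{\mathrm{1}}}}, \ottmv{X_{{\mathrm{2}}}}$ generated at a step may later be used in $\textit{ftv}(S_{{\mathrm{1}}}(\ottmv{Y}))$ for some other original $\ottmv{Y}$. The challenge is twofold: (i) in the pre-applied $\ottnt{f}$ those fresh variables may never be created at all (because $\ottmv{X}$ is already replaced by $S_{{\mathrm{1}}}(\ottmv{X})$), so I must show that the remaining sequence still makes sense, which amounts to reorganizing $S'$ into a substitution that acts on the types $S_{{\mathrm{1}}}$ already deposited; and (ii) the final disjointness condition $\textit{ftv}(S_{{\mathrm{1}}}(\ottmv{X})) \cap \textit{dom}(S'_{{\mathrm{2}}}) = \emptyset$ must be maintained as an invariant through the induction. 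I plan to maintain it by strengthening the IH if needed so that $S'_{{\mathrm{2}}}$ is built from exactly those bindings in $S_{{\mathrm{2}}}$ whose target variables are generated after the last instantiation of any $\ottmv{Y} \in \textit{dom}(S_{{\mathrm{1}}})$; the freshness discipline of reduction then guarantees that these variables cannot appear in $\textit{ftv}(S_{{\mathrm{1}}}(\ottmv{Y}))$, closing the loop.
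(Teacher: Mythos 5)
Your overall strategy (induction on the length of the sequence, splitting the first step off, and treating the instantiation rules by re-partitioning the substitution) is the same as the paper's, and your handling of the $\ottmv{X} \, \in \, \textit{dom} \, \ottsym{(}  S_{{\mathrm{1}}}  \ottsym{)}$ branch, including the \rnp{R\_InstArrow} bookkeeping you flag as the main obstacle, matches what the paper actually does. However, there is a genuine gap at the pivot of your case analysis: you claim that when the first step is an instantiation $[  \ottmv{X}  :=  \ottnt{T}  ]$, the variable $\ottmv{X}$ ``lies in $\textit{ftv} \, \ottsym{(}  \ottnt{f}  \ottsym{)}$ and therefore in $\textit{dom} \, \ottsym{(}  S_{{\mathrm{1}}}  \ottsym{)}$.'' The hypothesis is only the inclusion $\textit{dom} \, \ottsym{(}  S_{{\mathrm{1}}}  \ottsym{)}  \subseteq  \textit{ftv} \, \ottsym{(}  \ottnt{f}  \ottsym{)}$, not equality, and nothing forbids $\textit{dom} \, \ottsym{(}  S_{{\mathrm{2}}}  \ottsym{)}$ from meeting $\textit{ftv} \, \ottsym{(}  \ottnt{f}  \ottsym{)}$; so $\ottmv{X}$ may well lie in $\textit{dom} \, \ottsym{(}  S_{{\mathrm{2}}}  \ottsym{)}$ instead. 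This is not a degenerate possibility you can assume away: even if you start from the theorem's stronger premise ($\textit{dom} \, \ottsym{(}  S_{{\mathrm{2}}}  \ottsym{)}$ disjoint from $\textit{ftv} \, \ottsym{(}  \ottnt{f}  \ottsym{)}$), after one step the freshly generated variables occur free in $\ottnt{f_{{\mathrm{1}}}}$ while remaining in the $S_{{\mathrm{2}}}$-part of the re-partitioned substitution you feed to the IH, so the inductive hypothesis must be proved in the weaker form and the case $\ottmv{X} \, \in \, \textit{dom} \, \ottsym{(}  S_{{\mathrm{2}}}  \ottsym{)}$ necessarily arises in later invocations.

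That missing branch carries real content: when $\ottmv{X} \, \in \, \textit{dom} \, \ottsym{(}  S_{{\mathrm{2}}}  \ottsym{)}$, the substitution $S_{{\mathrm{1}}}$ leaves $\ottmv{X}$ untouched, so the instantiation step \emph{does} fire in $S_{{\mathrm{1}}}  \ottsym{(}  \ottnt{f}  \ottsym{)}$ and its binding must be absorbed into $S'_{{\mathrm{2}}}$; one then has to verify both that $\textit{dom} \, \ottsym{(}  S'_{{\mathrm{2}}}  \ottsym{)}  \subseteq  \textit{dom} \, \ottsym{(}  S_{{\mathrm{2}}}  \ottsym{)}$ survives and that the non-capture condition $\textit{ftv} \, \ottsym{(}  S_{{\mathrm{1}}}  \ottsym{(}  \ottmv{Y}  \ottsym{)}  \ottsym{)}  \cap  \textit{dom} \, \ottsym{(}  S'_{{\mathrm{2}}}  \ottsym{)}  \ottsym{=}   \emptyset $ holds for this enlarged $S'_{{\mathrm{2}}}$, which requires arguing that $\ottmv{X}$ cannot occur in the codomain of the DTI-generated $S_{{\mathrm{1}}}$. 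Your proposal handles neither, so as written the induction does not close. The fix is exactly the two-way split the paper performs ($\ottmv{X} \, \in \, \textit{dom} \, \ottsym{(}  S_{{\mathrm{1}}}  \ottsym{)}$ versus $\ottmv{X} \, \in \, \textit{dom} \, \ottsym{(}  S_{{\mathrm{2}}}  \ottsym{)}$) inside each of the \rnp{R\_InstBase} and \rnp{R\_InstArrow} cases.
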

\fi
\begin{proof}
 By mathematical induction on the length of the evaluation sequence.
 \begin{caseanalysis}
  \case{the length is 0} Obvious since $ S_{{\mathrm{1}}}  \uplus  S_{{\mathrm{2}}}   \ottsym{=}  [  ]$.

  \case{the length is more than 0}
  We are given
  \begin{itemize}
   \item $\ottnt{f} \,  \xmapsto{ \mathmakebox[0.4em]{} S'_{{\mathrm{1}}} \mathmakebox[0.3em]{} }  \, \ottnt{f_{{\mathrm{1}}}}$,
   \item $\ottnt{f_{{\mathrm{1}}}} \,  \xmapsto{ \mathmakebox[0.4em]{} S'_{{\mathrm{2}}} \mathmakebox[0.3em]{} }\hspace{-0.4em}{}^\ast \hspace{0.2em}  \, \ottnt{r}$, and
   \item $ S_{{\mathrm{1}}}  \uplus  S_{{\mathrm{2}}}   \ottsym{=}   S'_{{\mathrm{2}}}  \circ  S'_{{\mathrm{1}}} $
  \end{itemize}
  for some $\ottnt{f_{{\mathrm{1}}}}$, $S'_{{\mathrm{1}}}$, and $S'_{{\mathrm{2}}}$.
  By case analysis on the evaluation rule applied to $\ottnt{f} \,  \xmapsto{ \mathmakebox[0.4em]{} S'_{{\mathrm{1}}} \mathmakebox[0.3em]{} }  \, \ottnt{f'}$.
  \begin{caseanalysis}
   \case{\rnp{E\_Blame}} Obvious.
   \case{\rnp{E\_Step}}
   We have
   \begin{itemize}
    \item $\ottnt{f}  \ottsym{=}  \ottnt{E}  [  \ottnt{f'}  ]$,
    \item $\ottnt{f_{{\mathrm{1}}}}  \ottsym{=}  S'_{{\mathrm{1}}}  \ottsym{(}  \ottnt{E}  [  \ottnt{f'_{{\mathrm{1}}}}  ]  \ottsym{)}$, and
    \item $\ottnt{f'} \,  \xrightarrow{ \mathmakebox[0.4em]{} S'_{{\mathrm{1}}} \mathmakebox[0.3em]{} }  \, \ottnt{f'_{{\mathrm{1}}}}$
   \end{itemize}
   for some $\ottnt{E}$, $\ottnt{f'}$, and $\ottnt{f'_{{\mathrm{1}}}}$.
   By case analysis on the reduction rule applied to derive
   $\ottnt{f'} \,  \xrightarrow{ \mathmakebox[0.4em]{} S'_{{\mathrm{1}}} \mathmakebox[0.3em]{} }  \, \ottnt{f'_{{\mathrm{1}}}}$.
   \begin{caseanalysis}
    \case{\rnp{R\_InstBase}}
    We are given
    \begin{itemize}
     \item $\ottnt{f'}  \ottsym{=}  w'  \ottsym{:}   \iota \Rightarrow  \unskip ^ { \ell_{{\mathrm{1}}} }  \!  \star \Rightarrow  \unskip ^ { \ell_{{\mathrm{2}}} }  \! \ottmv{X}  $,
     \item $\ottnt{f'_{{\mathrm{1}}}}  \ottsym{=}  w'$, and
     \item $S'_{{\mathrm{1}}}  \ottsym{=}  [  \ottmv{X}  :=  \iota  ]$
    \end{itemize}
    for some $w'$, $\iota$, $\ell$, and $\ottmv{X}$.

    Since $ S_{{\mathrm{1}}}  \uplus  S_{{\mathrm{2}}}   \ottsym{=}   S'_{{\mathrm{2}}}  \circ  S'_{{\mathrm{1}}} $, $\ottmv{X} \, \in \, \textit{dom} \, \ottsym{(}  S_{{\mathrm{1}}}  \ottsym{)}$ or $\ottmv{X} \, \in \, \textit{dom} \, \ottsym{(}  S_{{\mathrm{2}}}  \ottsym{)}$.
    \begin{caseanalysis}
     \case{$\ottmv{X} \, \in \, \textit{dom} \, \ottsym{(}  S_{{\mathrm{1}}}  \ottsym{)}$}
     Since $ S_{{\mathrm{1}}}  \uplus  S_{{\mathrm{2}}}   \ottsym{=}   S'_{{\mathrm{2}}}  \circ  S'_{{\mathrm{1}}} $,
     we have $S_{{\mathrm{1}}}  \ottsym{(}  \ottmv{X}  \ottsym{)} =  S'_{{\mathrm{2}}}  \circ  S'_{{\mathrm{1}}}   \ottsym{(}  \ottmv{X}  \ottsym{)} = \iota$.
     Thus,
     \[\begin{array}{lll}
      S_{{\mathrm{1}}}  \ottsym{(}  \ottnt{f}  \ottsym{)} & = & S_{{\mathrm{1}}}  \ottsym{(}  \ottnt{E}  \ottsym{)}  [  S_{{\mathrm{1}}}  \ottsym{(}  w'  \ottsym{:}   \iota \Rightarrow  \unskip ^ { \ell_{{\mathrm{1}}} }  \!  \star \Rightarrow  \unskip ^ { \ell_{{\mathrm{2}}} }  \! \ottmv{X}    \ottsym{)}  ] \\
                & = & S_{{\mathrm{1}}}  \ottsym{(}  \ottnt{E}  \ottsym{)}  [  S_{{\mathrm{1}}}  \ottsym{(}  w'  \ottsym{)}  \ottsym{:}   \iota \Rightarrow  \unskip ^ { \ell }  \!  \star \Rightarrow  \unskip ^ { \ell_{{\mathrm{2}}} }  \! \iota    ] \\
                &  \xmapsto{ \mathmakebox[0.4em]{} [  ] \mathmakebox[0.3em]{} }  & S_{{\mathrm{1}}}  \ottsym{(}  \ottnt{E}  \ottsym{)}  [  S_{{\mathrm{1}}}  \ottsym{(}  w'  \ottsym{)}  ]
                 \qquad \text{(since $S_{{\mathrm{1}}}  \ottsym{(}  w'  \ottsym{)}$ is a value by Lemma~\ref{lem:value_any_subst})} \\
                & = & S_{{\mathrm{1}}}  \ottsym{(}  \ottnt{E}  [  \ottnt{f'_{{\mathrm{1}}}}  ]  \ottsym{)} \\
                & = & S_{{\mathrm{1}}}  \ottsym{(}  S'_{{\mathrm{1}}}  \ottsym{(}  \ottnt{E}  [  \ottnt{f'_{{\mathrm{1}}}}  ]  \ottsym{)}  \ottsym{)}
                  \qquad \text{(since $S'_{{\mathrm{1}}}  \ottsym{=}  [  \ottmv{X}  :=  \iota  ]$ and
                                      $S_{{\mathrm{1}}}  \ottsym{(}  \ottmv{X}  \ottsym{)}  \ottsym{=}  \iota$)} \\
       \end{array}\]
     Since $S'_{{\mathrm{1}}}  \ottsym{(}  \ottnt{E}  [  \ottnt{f'_{{\mathrm{1}}}}  ]  \ottsym{)} \,  \xmapsto{ \mathmakebox[0.4em]{} S'_{{\mathrm{2}}} \mathmakebox[0.3em]{} }\hspace{-0.4em}{}^\ast \hspace{0.2em}  \, \ottnt{r}$,
     we can suppose that $\ottmv{X} \, \not\in \, \textit{dom} \, \ottsym{(}  S'_{{\mathrm{2}}}  \ottsym{)}$.
     Thus, $ S'_{{\mathrm{2}}}  \circ  S'_{{\mathrm{1}}}   \ottsym{=}   S'_{{\mathrm{1}}}  \uplus  S'_{{\mathrm{2}}} $.
     Since $ S_{{\mathrm{1}}}  \uplus  S_{{\mathrm{2}}}   \ottsym{=}   S'_{{\mathrm{1}}}  \uplus  S'_{{\mathrm{2}}} $,
     there exist some $S''_{{\mathrm{1}}}$ such that
     \begin{itemize}
      \item $S'_{{\mathrm{2}}}  \ottsym{=}   S''_{{\mathrm{1}}}  \uplus  S_{{\mathrm{2}}} $ and
      \item $S_{{\mathrm{1}}}  \ottsym{=}   S'_{{\mathrm{1}}}  \uplus  S''_{{\mathrm{1}}} $.
     \end{itemize}
     Thus, $S'_{{\mathrm{1}}}  \ottsym{(}  \ottnt{E}  [  \ottnt{f'_{{\mathrm{1}}}}  ]  \ottsym{)} = \ottnt{f_{{\mathrm{1}}}} \,  \xmapsto{ \mathmakebox[0.4em]{}  S''_{{\mathrm{1}}}  \uplus  S_{{\mathrm{2}}}  \mathmakebox[0.3em]{} }\hspace{-0.4em}{}^\ast \hspace{0.2em}  \, \ottnt{r}$.
     Here,
     \[\begin{array}{lll}
      \textit{ftv} \, \ottsym{(}  \ottnt{f_{{\mathrm{1}}}}  \ottsym{)} & = & \textit{ftv} \, \ottsym{(}  S'_{{\mathrm{1}}}  \ottsym{(}  \ottnt{E}  [  \ottnt{f'_{{\mathrm{1}}}}  ]  \ottsym{)}  \ottsym{)} \\
                  & = & \textit{ftv} \, \ottsym{(}  \ottnt{E}  [  \ottnt{f'_{{\mathrm{1}}}}  ]  \ottsym{)}  \setminus   \{  \ottmv{X}  \}  \\
                  & = & \textit{ftv} \, \ottsym{(}  \ottnt{E}  [  w'  ]  \ottsym{)}  \setminus   \{  \ottmv{X}  \}  \\
                  & = & \textit{ftv} \, \ottsym{(}  \ottnt{E}  [  w'  \ottsym{:}   \iota \Rightarrow  \unskip ^ { \ell_{{\mathrm{1}}} }  \!  \star \Rightarrow  \unskip ^ { \ell_{{\mathrm{2}}} }  \! \ottmv{X}    ]  \ottsym{)}  \setminus   \{  \ottmv{X}  \}  \\
                  & = & \textit{ftv} \, \ottsym{(}  \ottnt{E}  [  \ottnt{f'}  ]  \ottsym{)}  \setminus   \{  \ottmv{X}  \}  \\
                  & \supseteq & \textit{dom} \, \ottsym{(}  S_{{\mathrm{1}}}  \ottsym{)}  \setminus   \{  \ottmv{X}  \}  \\
                  & = & \textit{dom} \, \ottsym{(}  S''_{{\mathrm{1}}}  \ottsym{)}
       \end{array}\]
     Thus, by the IH, there exists some $S''_{{\mathrm{2}}}$ such that
     $S''_{{\mathrm{1}}}  \ottsym{(}  S'_{{\mathrm{1}}}  \ottsym{(}  \ottnt{E}  [  \ottnt{f'_{{\mathrm{1}}}}  ]  \ottsym{)}  \ottsym{)} \,  \xmapsto{ \mathmakebox[0.4em]{} S''_{{\mathrm{2}}} \mathmakebox[0.3em]{} }\hspace{-0.4em}{}^\ast \hspace{0.2em}  \, \ottnt{r}$ and
     $\textit{dom} \, \ottsym{(}  S''_{{\mathrm{2}}}  \ottsym{)}  \subseteq  \textit{dom} \, \ottsym{(}  S_{{\mathrm{2}}}  \ottsym{)}$ and
     $\textit{ftv} \, \ottsym{(}  S''_{{\mathrm{1}}}  \ottsym{(}  \ottmv{X'}  \ottsym{)}  \ottsym{)}  \cap  \textit{dom} \, \ottsym{(}  S''_{{\mathrm{2}}}  \ottsym{)}  \ottsym{=}   \emptyset $
     for any $\ottmv{X'} \, \in \, \textit{dom} \, \ottsym{(}  S''_{{\mathrm{1}}}  \ottsym{)}$.
     Then,
     we have $S''_{{\mathrm{1}}}  \ottsym{(}  S'_{{\mathrm{1}}}  \ottsym{(}  \ottnt{E}  [  \ottnt{f'_{{\mathrm{1}}}}  ]  \ottsym{)}  \ottsym{)} = S_{{\mathrm{1}}}  \ottsym{(}  S'_{{\mathrm{1}}}  \ottsym{(}  \ottnt{E}  [  \ottnt{f'_{{\mathrm{1}}}}  ]  \ottsym{)}  \ottsym{)}$ and
     $S_{{\mathrm{1}}}  \ottsym{(}  \ottnt{f}  \ottsym{)} \,  \xmapsto{ \mathmakebox[0.4em]{} [  ] \mathmakebox[0.3em]{} }  \, S_{{\mathrm{1}}}  \ottsym{(}  S'_{{\mathrm{1}}}  \ottsym{(}  \ottnt{E}  [  \ottnt{f'_{{\mathrm{1}}}}  ]  \ottsym{)}  \ottsym{)}$.

     Let $\ottmv{X'} \, \in \, \textit{dom} \, \ottsym{(}  S_{{\mathrm{1}}}  \ottsym{)}$.
     If $\ottmv{X'} \, \in \, \textit{dom} \, \ottsym{(}  S'_{{\mathrm{1}}}  \ottsym{)}$, then $\textit{ftv} \, \ottsym{(}  S_{{\mathrm{1}}}  \ottsym{(}  \ottmv{X'}  \ottsym{)}  \ottsym{)}  \ottsym{=}   \emptyset $.
     If $\ottmv{X'} \, \in \, \textit{dom} \, \ottsym{(}  S''_{{\mathrm{1}}}  \ottsym{)}$, then $\textit{ftv} \, \ottsym{(}  S_{{\mathrm{1}}}  \ottsym{(}  \ottmv{X'}  \ottsym{)}  \ottsym{)}  \cap  \textit{dom} \, \ottsym{(}  S''_{{\mathrm{2}}}  \ottsym{)}  \ottsym{=}   \emptyset $ by the IH.

     \case{$\ottmv{X} \, \in \, \textit{dom} \, \ottsym{(}  S_{{\mathrm{2}}}  \ottsym{)}$}
     Since $\textit{dom} \, \ottsym{(}  S_{{\mathrm{1}}}  \ottsym{)}$ and $\textit{dom} \, \ottsym{(}  S_{{\mathrm{2}}}  \ottsym{)}$ are disjoint,
     $\ottmv{X} \, \not\in \, \textit{dom} \, \ottsym{(}  S_{{\mathrm{1}}}  \ottsym{)}$.
     Thus,
     \[\begin{array}{lll}
      S_{{\mathrm{1}}}  \ottsym{(}  \ottnt{f}  \ottsym{)} & = & S_{{\mathrm{1}}}  \ottsym{(}  \ottnt{E}  [  w'  \ottsym{:}   \iota \Rightarrow  \unskip ^ { \ell_{{\mathrm{1}}} }  \!  \star \Rightarrow  \unskip ^ { \ell_{{\mathrm{2}}} }  \! \ottmv{X}    ]  \ottsym{)} \\
                &  \xmapsto{ \mathmakebox[0.4em]{} S'_{{\mathrm{1}}} \mathmakebox[0.3em]{} }  & S'_{{\mathrm{1}}}  \ottsym{(}  S_{{\mathrm{1}}}  \ottsym{(}  \ottnt{E}  [  w'  ]  \ottsym{)}  \ottsym{)}
                 \qquad \text{($S_{{\mathrm{1}}}  \ottsym{(}  w'  \ottsym{)}$ is a value by Lemma~\ref{lem:value_any_subst})} \\
                & = & S_{{\mathrm{1}}}  \ottsym{(}  S'_{{\mathrm{1}}}  \ottsym{(}  \ottnt{E}  [  w'  ]  \ottsym{)}  \ottsym{)}
                 \qquad \text{(since $S_{{\mathrm{1}}}$ is generated by DTI)} \\
                & = & S_{{\mathrm{1}}}  \ottsym{(}  \ottnt{f_{{\mathrm{1}}}}  \ottsym{)}
       \end{array}
       \]
    Since $S'_{{\mathrm{1}}}  \ottsym{(}  \ottnt{E}  [  \ottnt{f'_{{\mathrm{1}}}}  ]  \ottsym{)} \,  \xmapsto{ \mathmakebox[0.4em]{} S'_{{\mathrm{2}}} \mathmakebox[0.3em]{} }\hspace{-0.4em}{}^\ast \hspace{0.2em}  \, \ottnt{r}$,
    we can suppose that $\ottmv{X} \, \not\in \, \textit{dom} \, \ottsym{(}  S'_{{\mathrm{2}}}  \ottsym{)}$.
    Thus, $ S'_{{\mathrm{2}}}  \circ  S'_{{\mathrm{1}}}   \ottsym{=}   S'_{{\mathrm{1}}}  \uplus  S'_{{\mathrm{2}}} $.
    Since $ S_{{\mathrm{1}}}  \uplus  S_{{\mathrm{2}}}   \ottsym{=}   S'_{{\mathrm{1}}}  \uplus  S'_{{\mathrm{2}}} $,
    there exist some $S''_{{\mathrm{2}}}$ such that
    \begin{itemize}
     \item $S'_{{\mathrm{2}}}  \ottsym{=}   S_{{\mathrm{1}}}  \uplus  S''_{{\mathrm{2}}} $ and
     \item $S_{{\mathrm{2}}}  \ottsym{=}   S'_{{\mathrm{1}}}  \uplus  S''_{{\mathrm{2}}} $.
    \end{itemize}
    Thus, $S'_{{\mathrm{1}}}  \ottsym{(}  \ottnt{E}  [  \ottnt{f'_{{\mathrm{1}}}}  ]  \ottsym{)} = \ottnt{f_{{\mathrm{1}}}} \,  \xmapsto{ \mathmakebox[0.4em]{}  S_{{\mathrm{1}}}  \uplus  S''_{{\mathrm{2}}}  \mathmakebox[0.3em]{} }\hspace{-0.4em}{}^\ast \hspace{0.2em}  \, \ottnt{r}$.
    Here,
    \[\begin{array}{lll}
     \textit{ftv} \, \ottsym{(}  \ottnt{f_{{\mathrm{1}}}}  \ottsym{)} & = & \textit{ftv} \, \ottsym{(}  S'_{{\mathrm{1}}}  \ottsym{(}  \ottnt{E}  [  \ottnt{f'_{{\mathrm{1}}}}  ]  \ottsym{)}  \ottsym{)} \\
                 & = & \textit{ftv} \, \ottsym{(}  \ottnt{E}  [  \ottnt{f'}  ]  \ottsym{)}  \setminus   \{  \ottmv{X}  \} 
                  \qquad (\text{by the discussion above}) \\
                 & \supseteq & \textit{dom} \, \ottsym{(}  S_{{\mathrm{1}}}  \ottsym{)}  \setminus   \{  \ottmv{X}  \}  \\
                 & = & \textit{dom} \, \ottsym{(}  S_{{\mathrm{1}}}  \ottsym{)}
                  \qquad (\text{since $\ottmv{X} \, \not\in \, \textit{dom} \, \ottsym{(}  S_{{\mathrm{1}}}  \ottsym{)}$})
      \end{array}\]
    By the IH, there exists some $S'''_{{\mathrm{2}}}$ such that
    $S_{{\mathrm{1}}}  \ottsym{(}  \ottnt{f_{{\mathrm{1}}}}  \ottsym{)} \,  \xmapsto{ \mathmakebox[0.4em]{} S'''_{{\mathrm{2}}} \mathmakebox[0.3em]{} }\hspace{-0.4em}{}^\ast \hspace{0.2em}  \, \ottnt{r}$ and
    $\textit{dom} \, \ottsym{(}  S'''_{{\mathrm{2}}}  \ottsym{)}  \subseteq  \textit{dom} \, \ottsym{(}  S''_{{\mathrm{2}}}  \ottsym{)}$ and,
    for any $\ottmv{X'} \, \in \, \textit{dom} \, \ottsym{(}  S_{{\mathrm{1}}}  \ottsym{)}$, $\textit{ftv} \, \ottsym{(}  S_{{\mathrm{1}}}  \ottsym{(}  \ottmv{X'}  \ottsym{)}  \ottsym{)}  \cap  \textit{dom} \, \ottsym{(}  S'''_{{\mathrm{2}}}  \ottsym{)}  \ottsym{=}   \emptyset $.

    Let $\ottmv{X'} \, \in \, \textit{dom} \, \ottsym{(}  S_{{\mathrm{1}}}  \ottsym{)}$.
    We have $\textit{ftv} \, \ottsym{(}  S_{{\mathrm{1}}}  \ottsym{(}  \ottmv{X'}  \ottsym{)}  \ottsym{)}  \cap  \textit{dom} \, \ottsym{(}  S'''_{{\mathrm{2}}}  \ottsym{)}  \ottsym{=}   \emptyset $.
    Thus, it suffices to show that $\ottmv{X} \, \not\in \, \textit{ftv} \, \ottsym{(}  S_{{\mathrm{1}}}  \ottsym{(}  \ottmv{X'}  \ottsym{)}  \ottsym{)}$.
    It is obvious because $S_{{\mathrm{1}}}$ is generated by 
    $S'_{{\mathrm{1}}}  \ottsym{(}  \ottnt{E}  [  \ottnt{f'_{{\mathrm{1}}}}  ]  \ottsym{)} \,  \xmapsto{ \mathmakebox[0.4em]{}  S_{{\mathrm{1}}}  \uplus  S''_{{\mathrm{2}}}  \mathmakebox[0.3em]{} }\hspace{-0.4em}{}^\ast \hspace{0.2em}  \, \ottnt{r}$ where $\ottmv{X} \, \not\in \, \textit{ftv} \, \ottsym{(}  S'_{{\mathrm{1}}}  \ottsym{(}  \ottnt{E}  [  \ottnt{f'_{{\mathrm{1}}}}  ]  \ottsym{)}  \ottsym{)}$.
    \end{caseanalysis}

    \case{\rnp{R\_InstArrow}}
    We are given
    \begin{itemize}
     \item $\ottnt{f'}  \ottsym{=}  w'  \ottsym{:}   \star  \!\rightarrow\!  \star \Rightarrow  \unskip ^ { \ell_{{\mathrm{1}}} }  \!  \star \Rightarrow  \unskip ^ { \ell_{{\mathrm{2}}} }  \! \ottmv{X}  $,
     \item $\ottnt{f'_{{\mathrm{1}}}}  \ottsym{=}  w'  \ottsym{:}   \star  \!\rightarrow\!  \star \Rightarrow  \unskip ^ { \ell_{{\mathrm{1}}} }  \!  \star \Rightarrow  \unskip ^ { \ell_{{\mathrm{2}}} }  \!  \star  \!\rightarrow\!  \star \Rightarrow  \unskip ^ { \ell_{{\mathrm{2}}} }  \! \ottmv{X_{{\mathrm{1}}}}  \!\rightarrow\!  \ottmv{X_{{\mathrm{2}}}}   $, and
     \item $S'_{{\mathrm{1}}}  \ottsym{=}  [  \ottmv{X}  :=  \ottmv{X_{{\mathrm{1}}}}  \!\rightarrow\!  \ottmv{X_{{\mathrm{2}}}}  ]$
    \end{itemize}
    for some $w'$, $\ottmv{X}$, $\ell_{{\mathrm{1}}}$, $\ell_{{\mathrm{2}}}$, and fresh $\ottmv{X_{{\mathrm{1}}}}$ and
    $\ottmv{X_{{\mathrm{2}}}}$.
    Since $ S_{{\mathrm{1}}}  \uplus  S_{{\mathrm{2}}}   \ottsym{=}   S'_{{\mathrm{2}}}  \circ  S'_{{\mathrm{1}}} $, $\ottmv{X} \, \in \, \textit{dom} \, \ottsym{(}  S_{{\mathrm{1}}}  \ottsym{)}$ or $\ottmv{X} \, \in \, \textit{dom} \, \ottsym{(}  S_{{\mathrm{2}}}  \ottsym{)}$.
    \begin{caseanalysis}
    \case{$\ottmv{X} \, \in \, \textit{dom} \, \ottsym{(}  S_{{\mathrm{1}}}  \ottsym{)}$}
    Since $ S_{{\mathrm{1}}}  \uplus  S_{{\mathrm{2}}}   \ottsym{=}   S'_{{\mathrm{2}}}  \circ  S'_{{\mathrm{1}}} $,
    we have $S_{{\mathrm{1}}}  \ottsym{(}  \ottmv{X}  \ottsym{)}  \ottsym{=}  S'_{{\mathrm{2}}}  \ottsym{(}  \ottmv{X_{{\mathrm{1}}}}  \ottsym{)}  \!\rightarrow\!  S'_{{\mathrm{2}}}  \ottsym{(}  \ottmv{X_{{\mathrm{2}}}}  \ottsym{)}$.
    Since $\ottmv{X_{{\mathrm{1}}}}$ and $\ottmv{X_{{\mathrm{2}}}}$ are fresh,
    we can suppose that $\ottmv{X_{{\mathrm{1}}}}, \ottmv{X_{{\mathrm{2}}}} \, \not\in \, \textit{ftv} \, \ottsym{(}  \ottnt{f}  \ottsym{)}$.
    Since $\textit{dom} \, \ottsym{(}  S_{{\mathrm{1}}}  \ottsym{)}  \subseteq  \textit{ftv} \, \ottsym{(}  \ottnt{f}  \ottsym{)}$, we have
    $\ottmv{X_{{\mathrm{1}}}}, \ottmv{X_{{\mathrm{2}}}} \, \not\in \, \textit{dom} \, \ottsym{(}  S_{{\mathrm{1}}}  \ottsym{)}$.
    Thus, $S'_{{\mathrm{2}}}  \ottsym{(}  \ottmv{X_{{\mathrm{1}}}}  \ottsym{)}  \ottsym{=}  S_{{\mathrm{2}}}  \ottsym{(}  \ottmv{X_{{\mathrm{1}}}}  \ottsym{)}$ and $S'_{{\mathrm{2}}}  \ottsym{(}  \ottmv{X_{{\mathrm{2}}}}  \ottsym{)}  \ottsym{=}  S_{{\mathrm{2}}}  \ottsym{(}  \ottmv{X_{{\mathrm{2}}}}  \ottsym{)}$
    since $ S_{{\mathrm{1}}}  \uplus  S_{{\mathrm{2}}}   \ottsym{=}   S'_{{\mathrm{2}}}  \circ  S'_{{\mathrm{1}}} $.
    Let $S''_{{\mathrm{2}}}$ be a type substitution such that
    $\textit{dom} \, \ottsym{(}  S''_{{\mathrm{2}}}  \ottsym{)}  \ottsym{=}  \textit{dom} \, \ottsym{(}  S_{{\mathrm{2}}}  \ottsym{)}  \setminus  \ottsym{(}  \textit{dom} \, \ottsym{(}  S_{{\mathrm{2}}}  \ottsym{)}  \setminus   \{  \ottmv{X_{{\mathrm{1}}}} ,  \ottmv{X_{{\mathrm{2}}}}  \}   \ottsym{)}$ and
    $S''_{{\mathrm{2}}}  \ottsym{(}  \ottmv{Y}  \ottsym{)}  \ottsym{=}  S_{{\mathrm{2}}}  \ottsym{(}  \ottmv{Y}  \ottsym{)}$ for any $\ottmv{Y} \, \in \, \textit{dom} \, \ottsym{(}  S''_{{\mathrm{2}}}  \ottsym{)}$.
    Note that $S''_{{\mathrm{2}}}  \ottsym{(}  \ottmv{X_{{\mathrm{1}}}}  \ottsym{)}  \ottsym{=}  S'_{{\mathrm{2}}}  \ottsym{(}  \ottmv{X_{{\mathrm{1}}}}  \ottsym{)}$ and $S''_{{\mathrm{2}}}  \ottsym{(}  \ottmv{X_{{\mathrm{2}}}}  \ottsym{)}  \ottsym{=}  S'_{{\mathrm{2}}}  \ottsym{(}  \ottmv{X_{{\mathrm{2}}}}  \ottsym{)}$.
    Since $S'_{{\mathrm{1}}}  \ottsym{(}  \ottnt{E}  [  \ottnt{f'_{{\mathrm{1}}}}  ]  \ottsym{)} \,  \xmapsto{ \mathmakebox[0.4em]{} S'_{{\mathrm{2}}} \mathmakebox[0.3em]{} }\hspace{-0.4em}{}^\ast \hspace{0.2em}  \, \ottnt{r}$,
    we can suppose that $\ottmv{X} \, \not\in \, \textit{dom} \, \ottsym{(}  S'_{{\mathrm{2}}}  \ottsym{)}$.
    Thus, $ S'_{{\mathrm{2}}}  \circ  S'_{{\mathrm{1}}}  =  [  \ottmv{X}  :=  S'_{{\mathrm{2}}}  \ottsym{(}  \ottmv{X_{{\mathrm{1}}}}  \ottsym{)}  \!\rightarrow\!  S'_{{\mathrm{2}}}  \ottsym{(}  \ottmv{X_{{\mathrm{2}}}}  \ottsym{)}  ]  \uplus  S'_{{\mathrm{2}}}  =  [  \ottmv{X}  :=  S''_{{\mathrm{2}}}  \ottsym{(}  \ottmv{X_{{\mathrm{1}}}}  \ottsym{)}  \!\rightarrow\!  S''_{{\mathrm{2}}}  \ottsym{(}  \ottmv{X_{{\mathrm{2}}}}  \ottsym{)}  ]  \uplus  S'_{{\mathrm{2}}} $.
    Since $ S_{{\mathrm{1}}}  \uplus  S_{{\mathrm{2}}}   \ottsym{=}   [  \ottmv{X}  :=  S''_{{\mathrm{2}}}  \ottsym{(}  \ottmv{X_{{\mathrm{1}}}}  \ottsym{)}  \!\rightarrow\!  S''_{{\mathrm{2}}}  \ottsym{(}  \ottmv{X_{{\mathrm{2}}}}  \ottsym{)}  ]  \uplus  S'_{{\mathrm{2}}} $,
    there exist some $S'''_{{\mathrm{1}}}$ and $S'''_{{\mathrm{2}}}$ such that
    \begin{itemize}
     \item $S'_{{\mathrm{2}}}  \ottsym{=}    S'''_{{\mathrm{1}}}  \uplus  S''_{{\mathrm{2}}}   \uplus  S'''_{{\mathrm{2}}} $,
     \item $S_{{\mathrm{1}}}  \ottsym{=}   [  \ottmv{X}  :=  S''_{{\mathrm{2}}}  \ottsym{(}  \ottmv{X_{{\mathrm{1}}}}  \ottsym{)}  \!\rightarrow\!  S''_{{\mathrm{2}}}  \ottsym{(}  \ottmv{X_{{\mathrm{2}}}}  \ottsym{)}  ]  \uplus  S'''_{{\mathrm{1}}} $, and
     \item $S_{{\mathrm{2}}}  \ottsym{=}   S''_{{\mathrm{2}}}  \uplus  S'''_{{\mathrm{2}}} $.
    \end{itemize}
    Here,
     \[\begin{array}{lll}
      S_{{\mathrm{1}}}  \ottsym{(}  \ottnt{f}  \ottsym{)} & = & S_{{\mathrm{1}}}  \ottsym{(}  \ottnt{E}  \ottsym{)}  [  S_{{\mathrm{1}}}  \ottsym{(}  w'  \ottsym{:}   \star  \!\rightarrow\!  \star \Rightarrow  \unskip ^ { \ell_{{\mathrm{1}}} }  \!  \star \Rightarrow  \unskip ^ { \ell_{{\mathrm{2}}} }  \! \ottmv{X}    \ottsym{)}  ] \\
                & = & S_{{\mathrm{1}}}  \ottsym{(}  \ottnt{E}  \ottsym{)}  [  S_{{\mathrm{1}}}  \ottsym{(}  w'  \ottsym{)}  \ottsym{:}   \star  \!\rightarrow\!  \star \Rightarrow  \unskip ^ { \ell }  \!  \star \Rightarrow  \unskip ^ { \ell_{{\mathrm{2}}} }  \! S'_{{\mathrm{2}}}  \ottsym{(}  \ottmv{X_{{\mathrm{1}}}}  \ottsym{)}  \!\rightarrow\!  S'_{{\mathrm{2}}}  \ottsym{(}  \ottmv{X_{{\mathrm{2}}}}  \ottsym{)}    ] \\
                &  \xmapsto{ \mathmakebox[0.4em]{} [  ] \mathmakebox[0.3em]{} }  &
                 S_{{\mathrm{1}}}  \ottsym{(}  \ottnt{E}  \ottsym{)}  [  S_{{\mathrm{1}}}  \ottsym{(}  w'  \ottsym{)}  \ottsym{:}   \star  \!\rightarrow\!  \star \Rightarrow  \unskip ^ { \ell }  \!  \star \Rightarrow  \unskip ^ { \ell_{{\mathrm{2}}} }  \!  \star  \!\rightarrow\!  \star \Rightarrow  \unskip ^ { \ell_{{\mathrm{2}}} }  \! S'_{{\mathrm{2}}}  \ottsym{(}  \ottmv{X_{{\mathrm{1}}}}  \ottsym{)}  \!\rightarrow\!  S'_{{\mathrm{2}}}  \ottsym{(}  \ottmv{X_{{\mathrm{2}}}}  \ottsym{)}     ]
                 \\ && \quad \text{($S_{{\mathrm{1}}}  \ottsym{(}  w'  \ottsym{)}$ is a value by Lemma~\ref{lem:value_any_subst})} \\
                & = &
                 S_{{\mathrm{1}}}  \ottsym{(}  \ottnt{E}  \ottsym{)}  [  S_{{\mathrm{1}}}  \ottsym{(}  w'  \ottsym{)}  \ottsym{:}   \star  \!\rightarrow\!  \star \Rightarrow  \unskip ^ { \ell }  \!  \star \Rightarrow  \unskip ^ { \ell_{{\mathrm{2}}} }  \!  \star  \!\rightarrow\!  \star \Rightarrow  \unskip ^ { \ell_{{\mathrm{2}}} }  \! S''_{{\mathrm{2}}}  \ottsym{(}  \ottmv{X_{{\mathrm{1}}}}  \ottsym{)}  \!\rightarrow\!  S''_{{\mathrm{2}}}  \ottsym{(}  \ottmv{X_{{\mathrm{2}}}}  \ottsym{)}     ] \\
                & = &
                  S_{{\mathrm{1}}}  \uplus  S''_{{\mathrm{2}}}   \ottsym{(}  \ottnt{E}  [  w'  \ottsym{:}   \star  \!\rightarrow\!  \star \Rightarrow  \unskip ^ { \ell }  \!  \star \Rightarrow  \unskip ^ { \ell_{{\mathrm{2}}} }  \!  \star  \!\rightarrow\!  \star \Rightarrow  \unskip ^ { \ell_{{\mathrm{2}}} }  \! \ottmv{X_{{\mathrm{1}}}}  \!\rightarrow\!  \ottmv{X_{{\mathrm{2}}}}     ]  \ottsym{)}
                 \\ && \quad \text{(since $\textit{dom} \, \ottsym{(}  S''_{{\mathrm{2}}}  \ottsym{)}$ are fresh type variables if any)}\\
                & = &
                   [  \ottmv{X}  :=  S''_{{\mathrm{2}}}  \ottsym{(}  \ottmv{X_{{\mathrm{1}}}}  \ottsym{)}  \!\rightarrow\!  S''_{{\mathrm{2}}}  \ottsym{(}  \ottmv{X_{{\mathrm{2}}}}  \ottsym{)}  ]  \uplus  S'''_{{\mathrm{1}}}   \uplus  S''_{{\mathrm{2}}}   \ottsym{(}  \ottnt{E}  [  w'  \ottsym{:}   \star  \!\rightarrow\!  \star \Rightarrow  \unskip ^ { \ell }  \!  \star \Rightarrow  \unskip ^ { \ell_{{\mathrm{2}}} }  \!  \star  \!\rightarrow\!  \star \Rightarrow  \unskip ^ { \ell_{{\mathrm{2}}} }  \! \ottmv{X_{{\mathrm{1}}}}  \!\rightarrow\!  \ottmv{X_{{\mathrm{2}}}}     ]  \ottsym{)} \\
                & = &
                  S'''_{{\mathrm{1}}}  \uplus  S''_{{\mathrm{2}}}   \ottsym{(}  \ottnt{E}  [  w'  \ottsym{:}   \star  \!\rightarrow\!  \star \Rightarrow  \unskip ^ { \ell }  \!  \star \Rightarrow  \unskip ^ { \ell_{{\mathrm{2}}} }  \!  \star  \!\rightarrow\!  \star \Rightarrow  \unskip ^ { \ell_{{\mathrm{2}}} }  \! \ottmv{X_{{\mathrm{1}}}}  \!\rightarrow\!  \ottmv{X_{{\mathrm{2}}}}     ]  [  \ottmv{X}  \ottsym{:=}  \ottmv{X_{{\mathrm{1}}}}  \!\rightarrow\!  \ottmv{X_{{\mathrm{2}}}}  ]  \ottsym{)} \\
                & = &
                  S'''_{{\mathrm{1}}}  \uplus  S''_{{\mathrm{2}}}   \ottsym{(}  S'_{{\mathrm{1}}}  \ottsym{(}  \ottnt{E}  [  \ottnt{f'_{{\mathrm{1}}}}  ]  \ottsym{)}  \ottsym{)}
       \end{array}\]
    We have $S'_{{\mathrm{1}}}  \ottsym{(}  \ottnt{E}  [  \ottnt{f'_{{\mathrm{1}}}}  ]  \ottsym{)} = \ottnt{f_{{\mathrm{1}}}} \,  \xmapsto{ \mathmakebox[0.4em]{}   S'''_{{\mathrm{1}}}  \uplus  S''_{{\mathrm{2}}}   \uplus  S'''_{{\mathrm{2}}}  \mathmakebox[0.3em]{} }\hspace{-0.4em}{}^\ast \hspace{0.2em}  \, \ottnt{r}$.
    Here,
    \[\begin{array}{lll}
     \textit{ftv} \, \ottsym{(}  \ottnt{f_{{\mathrm{1}}}}  \ottsym{)} & = & \textit{ftv} \, \ottsym{(}  S'_{{\mathrm{1}}}  \ottsym{(}  \ottnt{E}  [  \ottnt{f'_{{\mathrm{1}}}}  ]  \ottsym{)}  \ottsym{)} \\
                & = & \ottsym{(}  \textit{ftv} \, \ottsym{(}  \ottnt{E}  [  w'  ]  \ottsym{)}  \setminus   \{  \ottmv{X}  \}   \ottsym{)}  \cup   \{  \ottmv{X_{{\mathrm{1}}}} ,  \ottmv{X_{{\mathrm{2}}}}  \}  \\
                & \supseteq & \ottsym{(}  \textit{ftv} \, \ottsym{(}  \ottnt{E}  [  w'  ]  \ottsym{)}  \setminus   \{  \ottmv{X}  \}   \ottsym{)}  \cup  \textit{dom} \, \ottsym{(}  S''_{{\mathrm{2}}}  \ottsym{)} \\
                & = & \ottsym{(}  \ottsym{(}  \textit{ftv} \, \ottsym{(}  \ottnt{E}  [  w'  ]  \ottsym{)}  \cup   \{  \ottmv{X}  \}   \ottsym{)}  \setminus   \{  \ottmv{X}  \}   \ottsym{)}  \cup  \textit{dom} \, \ottsym{(}  S''_{{\mathrm{2}}}  \ottsym{)} \\
                & = & \ottsym{(}  \textit{ftv} \, \ottsym{(}  \ottnt{E}  [  \ottnt{f'}  ]  \ottsym{)}  \setminus   \{  \ottmv{X}  \}   \ottsym{)}  \cup  \textit{dom} \, \ottsym{(}  S''_{{\mathrm{2}}}  \ottsym{)} \\
                & \supseteq & \ottsym{(}  \textit{dom} \, \ottsym{(}  S_{{\mathrm{1}}}  \ottsym{)}  \setminus   \{  \ottmv{X}  \}   \ottsym{)}  \cup  \textit{dom} \, \ottsym{(}  S''_{{\mathrm{2}}}  \ottsym{)} \\
                & = & \textit{dom} \, \ottsym{(}  S'''_{{\mathrm{1}}}  \ottsym{)}  \cup  \textit{dom} \, \ottsym{(}  S''_{{\mathrm{2}}}  \ottsym{)}
      \end{array}\]
    Thus, by the IH,
    $ S'''_{{\mathrm{1}}}  \uplus  S''_{{\mathrm{2}}}   \ottsym{(}  \ottnt{f_{{\mathrm{1}}}}  \ottsym{)} \,  \xmapsto{ \mathmakebox[0.4em]{} S''''_{{\mathrm{2}}} \mathmakebox[0.3em]{} }\hspace{-0.4em}{}^\ast \hspace{0.2em}  \, \ottnt{r}$ for some $S''''_{{\mathrm{2}}}$ such that
    $\textit{dom} \, \ottsym{(}  S''''_{{\mathrm{2}}}  \ottsym{)}  \subseteq  \textit{dom} \, \ottsym{(}  S'''_{{\mathrm{2}}}  \ottsym{)}$ and,
    for any $\ottmv{X'} \, \in \, \textit{dom} \, \ottsym{(}   S'''_{{\mathrm{1}}}  \uplus  S''_{{\mathrm{2}}}   \ottsym{)}$,
    $\textit{ftv} \, \ottsym{(}   S'''_{{\mathrm{1}}}  \uplus  S''_{{\mathrm{2}}}   \ottsym{(}  \ottmv{X'}  \ottsym{)}  \ottsym{)}  \cap  \textit{dom} \, \ottsym{(}  S''''_{{\mathrm{2}}}  \ottsym{)}  \ottsym{=}   \emptyset $.
    We have $S_{{\mathrm{1}}}  \ottsym{(}  \ottnt{f}  \ottsym{)} \,  \xmapsto{ \mathmakebox[0.4em]{} S''''_{{\mathrm{2}}} \mathmakebox[0.3em]{} }\hspace{-0.4em}{}^\ast \hspace{0.2em}  \, \ottnt{r}$.

    Let $\ottmv{X'} \, \in \, \textit{dom} \, \ottsym{(}  S_{{\mathrm{1}}}  \ottsym{)}$.
    If $\ottmv{X'}  \ottsym{=}  \ottmv{X}$, then $\ottmv{X'} \, \not\in \, \textit{dom} \, \ottsym{(}  S'''_{{\mathrm{2}}}  \ottsym{)}$ and so
    $\ottmv{X'} \, \not\in \, \textit{dom} \, \ottsym{(}  S''''_{{\mathrm{2}}}  \ottsym{)}$.
    Otherwise, if $\ottmv{X'}  \neq  \ottmv{X}$, then $\ottmv{X'} \, \in \, \textit{dom} \, \ottsym{(}  S'''_{{\mathrm{1}}}  \ottsym{)}$.
    Since $S'''_{{\mathrm{1}}}  \ottsym{(}  \ottmv{X'}  \ottsym{)}  \ottsym{=}  S_{{\mathrm{1}}}  \ottsym{(}  \ottmv{X'}  \ottsym{)}$, we have
    $\textit{ftv} \, \ottsym{(}  S_{{\mathrm{1}}}  \ottsym{(}  \ottmv{X'}  \ottsym{)}  \ottsym{)}  \cap  \textit{dom} \, \ottsym{(}  S''''_{{\mathrm{2}}}  \ottsym{)}  \ottsym{=}   \emptyset $ by the IH.

    \case{$\ottmv{X} \, \in \, \textit{dom} \, \ottsym{(}  S_{{\mathrm{2}}}  \ottsym{)}$}
     Since $\textit{dom} \, \ottsym{(}  S_{{\mathrm{1}}}  \ottsym{)}$ and $\textit{dom} \, \ottsym{(}  S_{{\mathrm{2}}}  \ottsym{)}$ are disjoint,
     $\ottmv{X} \, \not\in \, \textit{dom} \, \ottsym{(}  S_{{\mathrm{1}}}  \ottsym{)}$.
     Furthermore,
     we can suppose that $\ottmv{X} \, \not\in \, \textit{dom} \, \ottsym{(}  S'_{{\mathrm{2}}}  \ottsym{)}$
     since $S'_{{\mathrm{1}}}  \ottsym{(}  \ottnt{E}  [  \ottnt{f'_{{\mathrm{1}}}}  ]  \ottsym{)} \,  \xmapsto{ \mathmakebox[0.4em]{} S'_{{\mathrm{2}}} \mathmakebox[0.3em]{} }\hspace{-0.4em}{}^\ast \hspace{0.2em}  \, \ottnt{r}$,
     Thus, $ S'_{{\mathrm{2}}}  \circ  S'_{{\mathrm{1}}}   \ottsym{=}   [  \ottmv{X}  :=  S'_{{\mathrm{2}}}  \ottsym{(}  \ottmv{X_{{\mathrm{1}}}}  \ottsym{)}  \!\rightarrow\!  S'_{{\mathrm{2}}}  \ottsym{(}  \ottmv{X_{{\mathrm{2}}}}  \ottsym{)}  ]  \uplus  S'_{{\mathrm{2}}} $.
     Since $ S_{{\mathrm{1}}}  \uplus  S_{{\mathrm{2}}}   \ottsym{=}   S'_{{\mathrm{2}}}  \circ  S'_{{\mathrm{1}}} $, we have
     $S_{{\mathrm{2}}}  \ottsym{(}  \ottmv{X}  \ottsym{)}  \ottsym{=}  S'_{{\mathrm{2}}}  \ottsym{(}  \ottmv{X_{{\mathrm{1}}}}  \ottsym{)}  \!\rightarrow\!  S'_{{\mathrm{2}}}  \ottsym{(}  \ottmv{X_{{\mathrm{2}}}}  \ottsym{)}$.
     Since $\ottmv{X_{{\mathrm{1}}}}$ and $\ottmv{X_{{\mathrm{2}}}}$ are fresh,
     we can suppose that $\ottmv{X_{{\mathrm{1}}}}, \ottmv{X_{{\mathrm{2}}}} \, \not\in \, \textit{ftv} \, \ottsym{(}  \ottnt{f}  \ottsym{)}$.
     Since $\textit{dom} \, \ottsym{(}  S_{{\mathrm{1}}}  \ottsym{)}  \subseteq  \textit{ftv} \, \ottsym{(}  \ottnt{f}  \ottsym{)}$, we have
     $\ottmv{X_{{\mathrm{1}}}}, \ottmv{X_{{\mathrm{2}}}} \, \not\in \, \textit{dom} \, \ottsym{(}  S_{{\mathrm{1}}}  \ottsym{)}$.
     Thus, $S'_{{\mathrm{2}}}  \ottsym{(}  \ottmv{X_{{\mathrm{1}}}}  \ottsym{)}  \ottsym{=}  S_{{\mathrm{2}}}  \ottsym{(}  \ottmv{X_{{\mathrm{1}}}}  \ottsym{)}$ and $S'_{{\mathrm{2}}}  \ottsym{(}  \ottmv{X_{{\mathrm{2}}}}  \ottsym{)}  \ottsym{=}  S_{{\mathrm{2}}}  \ottsym{(}  \ottmv{X_{{\mathrm{2}}}}  \ottsym{)}$
     since $ S_{{\mathrm{1}}}  \uplus  S_{{\mathrm{2}}}   \ottsym{=}   S'_{{\mathrm{2}}}  \circ  S'_{{\mathrm{1}}} $.
     Let $S''_{{\mathrm{2}}}$ be a type substitution such that
     $\textit{dom} \, \ottsym{(}  S''_{{\mathrm{2}}}  \ottsym{)}  \ottsym{=}  \textit{dom} \, \ottsym{(}  S_{{\mathrm{2}}}  \ottsym{)}  \setminus  \ottsym{(}  \textit{dom} \, \ottsym{(}  S_{{\mathrm{2}}}  \ottsym{)}  \setminus   \{  \ottmv{X_{{\mathrm{1}}}} ,  \ottmv{X_{{\mathrm{2}}}}  \}   \ottsym{)}$ and
     $S''_{{\mathrm{2}}}  \ottsym{(}  \ottmv{Y}  \ottsym{)}  \ottsym{=}  S_{{\mathrm{2}}}  \ottsym{(}  \ottmv{Y}  \ottsym{)}$ for any $\ottmv{Y} \, \in \, \textit{dom} \, \ottsym{(}  S''_{{\mathrm{2}}}  \ottsym{)}$.
     Note that $S''_{{\mathrm{2}}}  \ottsym{(}  \ottmv{X_{{\mathrm{1}}}}  \ottsym{)}  \ottsym{=}  S'_{{\mathrm{2}}}  \ottsym{(}  \ottmv{X_{{\mathrm{1}}}}  \ottsym{)}$ and $S''_{{\mathrm{2}}}  \ottsym{(}  \ottmv{X_{{\mathrm{2}}}}  \ottsym{)}  \ottsym{=}  S'_{{\mathrm{2}}}  \ottsym{(}  \ottmv{X_{{\mathrm{2}}}}  \ottsym{)}$.
     Thus $ S'_{{\mathrm{2}}}  \circ  S'_{{\mathrm{1}}}   \ottsym{=}   [  \ottmv{X}  :=  S''_{{\mathrm{2}}}  \ottsym{(}  \ottmv{X_{{\mathrm{1}}}}  \ottsym{)}  \!\rightarrow\!  S''_{{\mathrm{2}}}  \ottsym{(}  \ottmv{X_{{\mathrm{2}}}}  \ottsym{)}  ]  \uplus  S'_{{\mathrm{2}}} $.
     Since $ S_{{\mathrm{1}}}  \uplus  S_{{\mathrm{2}}}  =  S'_{{\mathrm{2}}}  \circ  S'_{{\mathrm{1}}}  =  [  \ottmv{X}  :=  S''_{{\mathrm{2}}}  \ottsym{(}  \ottmv{X_{{\mathrm{1}}}}  \ottsym{)}  \!\rightarrow\!  S''_{{\mathrm{2}}}  \ottsym{(}  \ottmv{X_{{\mathrm{2}}}}  \ottsym{)}  ]  \uplus  S'_{{\mathrm{2}}} $,
     there exist some $S'''_{{\mathrm{2}}}$ such that
     \begin{itemize}
      \item $S'_{{\mathrm{2}}}  \ottsym{=}    S_{{\mathrm{1}}}  \uplus  S''_{{\mathrm{2}}}   \uplus  S'''_{{\mathrm{2}}} $ and
      \item $S_{{\mathrm{2}}}  \ottsym{=}    [  \ottmv{X}  :=  S''_{{\mathrm{2}}}  \ottsym{(}  \ottmv{X_{{\mathrm{1}}}}  \ottsym{)}  \!\rightarrow\!  S''_{{\mathrm{2}}}  \ottsym{(}  \ottmv{X_{{\mathrm{2}}}}  \ottsym{)}  ]  \uplus  S''_{{\mathrm{2}}}   \uplus  S'''_{{\mathrm{2}}} $.
     \end{itemize}
     Here,
     \[\begin{array}{lll}
      S_{{\mathrm{1}}}  \ottsym{(}  \ottnt{f}  \ottsym{)} & = & S_{{\mathrm{1}}}  \ottsym{(}  \ottnt{E}  \ottsym{)}  [  S_{{\mathrm{1}}}  \ottsym{(}  w'  \ottsym{)}  \ottsym{:}   \star  \!\rightarrow\!  \star \Rightarrow  \unskip ^ { \ell_{{\mathrm{1}}} }  \!  \star \Rightarrow  \unskip ^ { \ell_{{\mathrm{2}}} }  \! \ottmv{X}    ] \\
                &  \xmapsto{ \mathmakebox[0.4em]{} S'_{{\mathrm{1}}} \mathmakebox[0.3em]{} }  & S'_{{\mathrm{1}}}  \ottsym{(}  S_{{\mathrm{1}}}  \ottsym{(}  \ottnt{E}  [  S_{{\mathrm{1}}}  \ottsym{(}  w'  \ottsym{)}  \ottsym{:}   \star  \!\rightarrow\!  \star \Rightarrow  \unskip ^ { \ell_{{\mathrm{1}}} }  \!  \star \Rightarrow  \unskip ^ { \ell_{{\mathrm{2}}} }  \!  \star  \!\rightarrow\!  \star \Rightarrow  \unskip ^ { \ell_{{\mathrm{2}}} }  \! \ottmv{X_{{\mathrm{1}}}}  \!\rightarrow\!  \ottmv{X_{{\mathrm{2}}}}     ]  \ottsym{)}  \ottsym{)} \\
                & = & S'_{{\mathrm{1}}}  \ottsym{(}  S_{{\mathrm{1}}}  \ottsym{(}  \ottnt{E}  [  \ottnt{f'_{{\mathrm{1}}}}  ]  \ottsym{)}  \ottsym{)} \\
                & = & S_{{\mathrm{1}}}  \ottsym{(}  S'_{{\mathrm{1}}}  \ottsym{(}  \ottnt{E}  [  \ottnt{f'_{{\mathrm{1}}}}  ]  \ottsym{)}  \ottsym{)} \\
                   &&
                 \text{(since $S_{{\mathrm{1}}}$ is generated by DTI and $\textit{dom} \, \ottsym{(}  S_{{\mathrm{1}}}  \ottsym{)}  \cap   \{  \ottmv{X_{{\mathrm{1}}}} ,  \ottmv{X_{{\mathrm{2}}}}  \}   \ottsym{=}   \emptyset $)} \\
                & = & S_{{\mathrm{1}}}  \ottsym{(}  \ottnt{f_{{\mathrm{1}}}}  \ottsym{)}
       \end{array}\]
     We have $S'_{{\mathrm{1}}}  \ottsym{(}  \ottnt{E}  [  \ottnt{f'_{{\mathrm{1}}}}  ]  \ottsym{)} = \ottnt{f_{{\mathrm{1}}}} \,  \xmapsto{ \mathmakebox[0.4em]{}   S_{{\mathrm{1}}}  \uplus  S''_{{\mathrm{2}}}   \uplus  S'''_{{\mathrm{2}}}  \mathmakebox[0.3em]{} }\hspace{-0.4em}{}^\ast \hspace{0.2em}  \, \ottnt{r}$.
     Here,
     \[\begin{array}{lll}
      \textit{ftv} \, \ottsym{(}  \ottnt{f_{{\mathrm{1}}}}  \ottsym{)} & = & \textit{ftv} \, \ottsym{(}  S'_{{\mathrm{1}}}  \ottsym{(}  \ottnt{E}  [  \ottnt{f'_{{\mathrm{1}}}}  ]  \ottsym{)}  \ottsym{)} \\
                  & = & \textit{ftv} \, \ottsym{(}  S'_{{\mathrm{1}}}  \ottsym{(}  \ottnt{E}  [  w'  ]  \ottsym{)}  \ottsym{)}  \cup   \{  \ottmv{X_{{\mathrm{1}}}} ,  \ottmv{X_{{\mathrm{2}}}}  \}  \\
                  & = & \ottsym{(}  \textit{ftv} \, \ottsym{(}  \ottnt{E}  [  w'  ]  \ottsym{)}  \setminus   \{  \ottmv{X}  \}   \ottsym{)}  \cup   \{  \ottmv{X_{{\mathrm{1}}}} ,  \ottmv{X_{{\mathrm{2}}}}  \}  \\
                  & = & \ottsym{(}  \textit{ftv} \, \ottsym{(}  \ottnt{E}  [  \ottnt{f'}  ]  \ottsym{)}  \setminus   \{  \ottmv{X}  \}   \ottsym{)}  \cup   \{  \ottmv{X_{{\mathrm{1}}}} ,  \ottmv{X_{{\mathrm{2}}}}  \}  \\
                  & \supseteq & \ottsym{(}  \textit{dom} \, \ottsym{(}  S_{{\mathrm{1}}}  \ottsym{)}  \setminus   \{  \ottmv{X}  \}   \ottsym{)}  \cup   \{  \ottmv{X_{{\mathrm{1}}}} ,  \ottmv{X_{{\mathrm{2}}}}  \}  \\
                  & \supseteq & \textit{dom} \, \ottsym{(}  S_{{\mathrm{1}}}  \ottsym{)}
                  \qquad (\text{since $\ottmv{X} \, \not\in \, \textit{dom} \, \ottsym{(}  S_{{\mathrm{1}}}  \ottsym{)}$})
      \end{array}\]
    By the IH, there exist some $S''''_{{\mathrm{2}}}$ such that
    $S_{{\mathrm{1}}}  \ottsym{(}  \ottnt{f_{{\mathrm{1}}}}  \ottsym{)} \,  \xmapsto{ \mathmakebox[0.4em]{} S''''_{{\mathrm{2}}} \mathmakebox[0.3em]{} }\hspace{-0.4em}{}^\ast \hspace{0.2em}  \, \ottnt{r}$ and
    $\textit{dom} \, \ottsym{(}  S''''_{{\mathrm{2}}}  \ottsym{)}  \subseteq  \textit{dom} \, \ottsym{(}   S''_{{\mathrm{2}}}  \uplus  S'''_{{\mathrm{2}}}   \ottsym{)}$ and,
    for any $\ottmv{X'} \, \in \, \textit{dom} \, \ottsym{(}  S_{{\mathrm{1}}}  \ottsym{)}$, $\textit{ftv} \, \ottsym{(}  S_{{\mathrm{1}}}  \ottsym{(}  \ottmv{X'}  \ottsym{)}  \ottsym{)}  \cap  \textit{dom} \, \ottsym{(}  S''''_{{\mathrm{2}}}  \ottsym{)}  \ottsym{=}   \emptyset $.

    Let $\ottmv{X'} \, \in \, \textit{dom} \, \ottsym{(}  S_{{\mathrm{1}}}  \ottsym{)}$.
    Since $\ottmv{X'}  \neq  \ottmv{X}$ and $\textit{ftv} \, \ottsym{(}  S_{{\mathrm{1}}}  \ottsym{(}  \ottmv{X'}  \ottsym{)}  \ottsym{)}  \cap  \textit{dom} \, \ottsym{(}  S''''_{{\mathrm{2}}}  \ottsym{)}  \ottsym{=}   \emptyset $ by the IH,
    we finish.
    \end{caseanalysis}

    \otherwise We are given $S'_{{\mathrm{1}}}  \ottsym{=}  [  ]$ and $S'_{{\mathrm{2}}}  \ottsym{=}   S_{{\mathrm{1}}}  \uplus  S_{{\mathrm{2}}} $.  We have
     $S_{{\mathrm{1}}}  \ottsym{(}  \ottnt{f'}  \ottsym{)} \,  \xrightarrow{ \mathmakebox[0.4em]{} [  ] \mathmakebox[0.3em]{} }  \, S_{{\mathrm{1}}}  \ottsym{(}  \ottnt{f'_{{\mathrm{1}}}}  \ottsym{)}$; note that type variables generated by term
     substitution is not captured by $S_{{\mathrm{1}}}$ since $\textit{dom} \, \ottsym{(}  S_{{\mathrm{1}}}  \ottsym{)}  \subseteq  \textit{ftv} \, \ottsym{(}  \ottnt{f}  \ottsym{)}$.
     We finish by the IH.
     \qedhere
   \end{caseanalysis}
  \end{caseanalysis}
 \end{caseanalysis}
\end{proof}

\begin{lemmaA} \label{lem:soundness_blame_2}
 If $\ottnt{f} \,  \xmapsto{ \mathmakebox[0.4em]{} S \mathmakebox[0.3em]{} }\hspace{-0.4em}{}^\ast \hspace{0.2em}  \, \textsf{\textup{blame}\relax} \, \ell$, then, for any $S'$ such that type variables in
 $\textit{dom} \, \ottsym{(}  S'  \ottsym{)}$ are disjoint from ones generated during the evaluation, there
 exist $S''$ and $\ell'$ such that $S'  \ottsym{(}  \ottnt{f}  \ottsym{)} \,  \xmapsto{ \mathmakebox[0.4em]{} S'' \mathmakebox[0.3em]{} }\hspace{-0.4em}{}^\ast \hspace{0.2em}  \, \textsf{\textup{blame}\relax} \, \ell'$.
\end{lemmaA}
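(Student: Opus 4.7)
The plan is to proceed by mathematical induction on the length of the evaluation sequence $\ottnt{f} \,  \xmapsto{ \mathmakebox[0.4em]{} S \mathmakebox[0.3em]{} }\hspace{-0.4em}{}^\ast \hspace{0.2em}  \, \textsf{\textup{blame}\relax} \, \ell$, following the same skeleton as the proof of Lemma~\ref{lem:soundness_result} but specialised to the blame case. In the base case (length $0$) we have $\ottnt{f}  \ottsym{=}  \textsf{\textup{blame}\relax} \, \ell$, so $S'  \ottsym{(}  \ottnt{f}  \ottsym{)}  \ottsym{=}  \textsf{\textup{blame}\relax} \, \ell$ for every $S'$, and taking $S''  \ottsym{=}  [  ]$ and $\ell'  \ottsym{=}  \ell$ discharges the goal. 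For the inductive step, decompose the sequence as $\ottnt{f} \,  \xmapsto{ \mathmakebox[0.4em]{} S_{{\mathrm{1}}} \mathmakebox[0.3em]{} }  \, \ottnt{f_{{\mathrm{1}}}} \,  \xmapsto{ \mathmakebox[0.4em]{} S_{{\mathrm{2}}} \mathmakebox[0.3em]{} }\hspace{-0.4em}{}^\ast \hspace{0.2em}  \, \textsf{\textup{blame}\relax} \, \ell$ with $S  \ottsym{=}   S_{{\mathrm{2}}}  \circ  S_{{\mathrm{1}}} $; the aim is to produce at least one simulating step from $S'  \ottsym{(}  \ottnt{f}  \ottsym{)}$ and then invoke the induction hypothesis on $\ottnt{f_{{\mathrm{1}}}}$.

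I would split into two subcases according to the interaction between $S'$ and $S_{{\mathrm{1}}}$. First, if $\textit{dom} \, \ottsym{(}  S_{{\mathrm{1}}}  \ottsym{)}  \cap  \textit{dom} \, \ottsym{(}  S'  \ottsym{)}  \ottsym{=}   \emptyset $ and $\textit{ftv} \, \ottsym{(}  S'  \ottsym{(}  \ottmv{X}  \ottsym{)}  \ottsym{)}  \cap  \textit{dom} \, \ottsym{(}  S_{{\mathrm{1}}}  \ottsym{)}  \ottsym{=}   \emptyset $ for every $\ottmv{X} \, \in \, \textit{dom} \, \ottsym{(}  S'  \ottsym{)}$, Lemma~\ref{lem:eval_step_any_subst} yields $S'  \ottsym{(}  \ottnt{f}  \ottsym{)} \,  \xmapsto{ \mathmakebox[0.4em]{} S_{{\mathrm{1}}} \mathmakebox[0.3em]{} }  \, S'  \ottsym{(}  \ottnt{f_{{\mathrm{1}}}}  \ottsym{)}$ (the freshness premise transfers from the hypothesis, since the variables generated in this single step are among those generated in the entire sequence), and the IH applied to $\ottnt{f_{{\mathrm{1}}}} \,  \xmapsto{ \mathmakebox[0.4em]{} S_{{\mathrm{2}}} \mathmakebox[0.3em]{} }\hspace{-0.4em}{}^\ast \hspace{0.2em}  \, \textsf{\textup{blame}\relax} \, \ell$ with the same $S'$ closes the goal. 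Otherwise, the step is necessarily an instance of \rnp{R\_InstBase} or \rnp{R\_InstArrow} that instantiates some $\ottmv{X} \, \in \, \textit{dom} \, \ottsym{(}  S'  \ottsym{)}$; at the corresponding position the term $S'  \ottsym{(}  \ottnt{f}  \ottsym{)}$ contains the cast $S'  \ottsym{(}  w  \ottsym{)}  \ottsym{:}   \ottnt{G} \Rightarrow  \unskip ^ { \ell_{{\mathrm{1}}} }  \!  \star \Rightarrow  \unskip ^ { \ell_{{\mathrm{2}}} }  \! S'  \ottsym{(}  \ottmv{X}  \ottsym{)}  $ whose target is a concrete static type. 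A sub-induction on $\textit{ord} \, \ottsym{(}  S'  \ottsym{(}  \ottmv{X}  \ottsym{)}  \ottsym{)}$, using \rnp{R\_Expand}, \rnp{R\_AppCast}, and finally \rnp{R\_Succeed} or \rnp{R\_Fail}, shows that this redex either reduces in finitely many steps to $S''  \ottsym{(}  \ottnt{f_{{\mathrm{1}}}}  \ottsym{)}$ for some extension $S''$ of $S'$ that absorbs $S_{{\mathrm{1}}}$ (so the IH applies), or reduces directly to $\textsf{\textup{blame}\relax} \, \ell'$, which propagates through the surrounding evaluation context by \rnp{E\_Abort}.

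The main obstacle will be the second subcase. When $S'  \ottsym{(}  \ottmv{X}  \ottsym{)}$ has nontrivial function structure, the cast must be rewritten via \rnp{R\_Expand} and \rnp{R\_AppCast}, introducing fresh auxiliary variables that must be kept disjoint from those generated later in $\ottnt{f_{{\mathrm{1}}}} \,  \xmapsto{ \mathmakebox[0.4em]{} S_{{\mathrm{2}}} \mathmakebox[0.3em]{} }\hspace{-0.4em}{}^\ast \hspace{0.2em}  \, \textsf{\textup{blame}\relax} \, \ell$, and then extended compatibly to a $S''$ under which $S''  \ottsym{(}  \ottnt{f_{{\mathrm{1}}}}  \ottsym{)}$ is syntactically the reduct of the expanded cast. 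This bookkeeping directly mirrors the substitution-splitting argument carried out in Lemma~\ref{lem:soundness_result}; the saving grace here is that only the existence, not the identity, of $\ell'$ must be preserved, so whenever an \rnp{R\_Fail} fires we can stop tracking refinements of $S'$ and conclude immediately.
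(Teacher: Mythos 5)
Your top-level plan --- induction on the length of the evaluation sequence, with a dichotomy between steps that commute with $S'$ and steps that instantiate a variable touched by $S'$ --- is the same as the paper's, and your base case, your commuting case, and your observation that a base-type mismatch blames immediately are all fine. The genuine gap is the mechanism you propose for the remaining case. A ``sub-induction on $\textit{ord}(S'(X))$, using \rnp{R\_Expand}, \rnp{R\_AppCast}, and finally \rnp{R\_Succeed} or \rnp{R\_Fail}'' does not describe a reduction sequence that can actually occur: \rnp{R\_AppCast} fires only when a wrapped function is \emph{applied}, and the redex here is a bare cast. Concretely, when the step is \rnp{R\_InstArrow} and $S'(X)=T_1\to T_2$, the term $S'(w) : \star\to\star \Rightarrow^{\ell_1} \star \Rightarrow^{\ell_2} T_1\to T_2$ takes exactly one \rnp{R\_Expand} step to $S'(w) : \star\to\star \Rightarrow^{\ell_1} \star \Rightarrow^{\ell_2} \star\to\star \Rightarrow^{\ell_2} T_1\to T_2$ and then, after \rnp{R\_Succeed}, is a wrapped-function \emph{value}; the higher-order structure of $T_1\to T_2$ is never inspected at this point, so there is nothing for your sub-induction to recurse on. The correct move (and the paper's) is to observe that this single \rnp{R\_Expand} reduct is \emph{syntactically} $S''(f_1)$ for $S'' = S' \uplus [X_1 := T_1,\, X_2 := T_2]$, where $X_1,X_2$ are the fresh variables introduced by \rnp{R\_InstArrow} on the unsubstituted side (legitimately outside $\textit{dom}(S')$ by hypothesis), and to defer everything deeper to the induction hypothesis. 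The only blame that arises \emph{at} such a redex comes from a ground-tag mismatch --- one \rnp{R\_Fail}, possibly preceded by one \rnp{R\_Expand}, e.g.\ when the injection tag is $\iota$ but $S'(X)$ is a function type --- so the case split you need is simply on whether the unique ground type consistent with $S'(X)$ equals the tag, not on the order of $S'(X)$.

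A secondary point: your dichotomy is not exhaustive as stated. Lemma~\ref{lem:eval_step_any_subst} also demands $\textit{ftv}(S'(Y)) \cap \textit{dom}(S_1) = \emptyset$ for every $Y \in \textit{dom}(S')$, which can fail even when $\textit{dom}(S_1)\cap\textit{dom}(S')=\emptyset$, since the instantiated variable may occur in the \emph{codomain} of $S'$. The paper's proof quietly assumes this commutation as well, and additionally treats the case $S'(X)=X'$ with $X'\neq X$ by an explicit recomposition $S' = S'' \uplus [X:=X']$; you should at least acknowledge this case rather than fold it into ``the substitutions commute.''
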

\begin{proof}
 By induction on the length of the evaluation sequence.
 \begin{caseanalysis}
  \case{the length is 0} Obvious.
  \case{the length is more than 0}
  We are given $\ottnt{f} \,  \xmapsto{ \mathmakebox[0.4em]{} S_{{\mathrm{1}}} \mathmakebox[0.3em]{} }  \, \ottnt{f'}$ and
  $\ottnt{f'} \,  \xmapsto{ \mathmakebox[0.4em]{} S_{{\mathrm{2}}} \mathmakebox[0.3em]{} }\hspace{-0.4em}{}^\ast \hspace{0.2em}  \, \textsf{\textup{blame}\relax} \, \ell$ for some $S_{{\mathrm{1}}}$, $S_{{\mathrm{2}}}$, and $\ottnt{f'}$
  such that $S  \ottsym{=}   S_{{\mathrm{2}}}  \circ  S_{{\mathrm{1}}} $.
  By case analysis on the evaluation rule applied to $\ottnt{f}$.
  \begin{caseanalysis}
   \case{\rnp{E\_Step}}
   We are given $\ottnt{f}  \ottsym{=}  \ottnt{E}  [  \ottnt{f_{{\mathrm{1}}}}  ]$ and
   $\ottnt{f'}  \ottsym{=}  S_{{\mathrm{1}}}  \ottsym{(}  \ottnt{E}  [  \ottnt{f'_{{\mathrm{1}}}}  ]  \ottsym{)}$ and
   $\ottnt{f_{{\mathrm{1}}}} \,  \xrightarrow{ \mathmakebox[0.4em]{} S_{{\mathrm{1}}} \mathmakebox[0.3em]{} }  \, \ottnt{f'_{{\mathrm{1}}}}$
   for some $\ottnt{E}$, $\ottnt{f_{{\mathrm{1}}}}$, and $\ottnt{f'_{{\mathrm{1}}}}$.
   By case analysis on the reduction rule applied to $\ottnt{f_{{\mathrm{1}}}}$.
   \begin{caseanalysis}
    \case{\rnp{R\_InstBase}}
    We are given $w  \ottsym{:}   \iota \Rightarrow  \unskip ^ { \ell_{{\mathrm{1}}} }  \!  \star \Rightarrow  \unskip ^ { \ell_{{\mathrm{2}}} }  \! \ottmv{X}   \,  \xrightarrow{ \mathmakebox[0.4em]{} [  \ottmv{X}  :=  \iota  ] \mathmakebox[0.3em]{} }  \, w$
    for some $w$, $\iota$, $\ottmv{X}$, $\ell_{{\mathrm{1}}}$, and $\ell_{{\mathrm{2}}}$.
    By case analysis on $S'  \ottsym{(}  \ottmv{X}  \ottsym{)}$.
    \begin{caseanalysis}
     \case{$S'  \ottsym{(}  \ottmv{X}  \ottsym{)}  \ottsym{=}  \iota$}
     $S'  \ottsym{(}  w  \ottsym{:}   \iota \Rightarrow  \unskip ^ { \ell_{{\mathrm{1}}} }  \!  \star \Rightarrow  \unskip ^ { \ell_{{\mathrm{2}}} }  \! \ottmv{X}    \ottsym{)} \,  \xrightarrow{ \mathmakebox[0.4em]{} [  ] \mathmakebox[0.3em]{} }  \, S'  \ottsym{(}  w  \ottsym{)}$
     by \rnp{R\_IdBase}.
     Thus, $S'  \ottsym{(}  \ottnt{E}  [  \ottnt{f_{{\mathrm{1}}}}  ]  \ottsym{)} \,  \xmapsto{ \mathmakebox[0.4em]{} [  ] \mathmakebox[0.3em]{} }  \, S'  \ottsym{(}  \ottnt{E}  [  \ottnt{f'_{{\mathrm{1}}}}  ]  \ottsym{)}$.
     Since ${[S1 = {X :-> iota}]}$ and $S'  \ottsym{(}  \ottmv{X}  \ottsym{)}  \ottsym{=}  \iota$,
     we have $S'  \ottsym{(}  \ottnt{E}  [  \ottnt{f'_{{\mathrm{1}}}}  ]  \ottsym{)}  \ottsym{=}  S'  \ottsym{(}  S_{{\mathrm{1}}}  \ottsym{(}  \ottnt{E}  [  \ottnt{f'_{{\mathrm{1}}}}  ]  \ottsym{)}  \ottsym{)}$.
     Thus, $S'  \ottsym{(}  \ottnt{f}  \ottsym{)} \,  \xmapsto{ \mathmakebox[0.4em]{} [  ] \mathmakebox[0.3em]{} }  \, S'  \ottsym{(}  \ottnt{f'}  \ottsym{)}$.
     By the IH, we finish.

     \case{$S'  \ottsym{(}  \ottmv{X}  \ottsym{)}  \ottsym{=}  \iota'$ for some $\iota'  \neq  \iota$}
     We have $S'  \ottsym{(}  w  \ottsym{:}   \iota \Rightarrow  \unskip ^ { \ell_{{\mathrm{1}}} }  \!  \star \Rightarrow  \unskip ^ { \ell_{{\mathrm{2}}} }  \! \ottmv{X}    \ottsym{)} \,  \xrightarrow{ \mathmakebox[0.4em]{} [  ] \mathmakebox[0.3em]{} }  \, \textsf{\textup{blame}\relax} \, \ell_{{\mathrm{2}}}$.
     Thus, $S'  \ottsym{(}  \ottnt{E}  [  \ottnt{f_{{\mathrm{1}}}}  ]  \ottsym{)} \,  \xmapsto{ \mathmakebox[0.4em]{} [  ] \mathmakebox[0.3em]{} }  \, S'  \ottsym{(}  \ottnt{E}  [  \textsf{\textup{blame}\relax} \, \ell_{{\mathrm{2}}}  ]  \ottsym{)}  \xmapsto{ \mathmakebox[0.4em]{} [  ] \mathmakebox[0.3em]{} }  \textsf{\textup{blame}\relax} \, \ell_{{\mathrm{2}}}$.

     \case{$S'  \ottsym{(}  \ottmv{X}  \ottsym{)}  \ottsym{=}  \ottnt{T_{{\mathrm{1}}}}  \!\rightarrow\!  \ottnt{T_{{\mathrm{2}}}}$}
     We have $S'  \ottsym{(}  w  \ottsym{:}   \iota \Rightarrow  \unskip ^ { \ell_{{\mathrm{1}}} }  \!  \star \Rightarrow  \unskip ^ { \ell_{{\mathrm{2}}} }  \! \ottmv{X}    \ottsym{)} \,  \xrightarrow{ \mathmakebox[0.4em]{} [  ] \mathmakebox[0.3em]{} }  \, S'  \ottsym{(}  w  \ottsym{)}  \ottsym{:}   \iota \Rightarrow  \unskip ^ { \ell_{{\mathrm{1}}} }  \!  \star \Rightarrow  \unskip ^ { \ell_{{\mathrm{2}}} }  \!  \star  \!\rightarrow\!  \star \Rightarrow  \unskip ^ { \ell_{{\mathrm{2}}} }  \! \ottnt{T_{{\mathrm{1}}}}  \!\rightarrow\!  \ottnt{T_{{\mathrm{2}}}}   $.
     Thus, $S'  \ottsym{(}  \ottnt{E}  [  \ottnt{f_{{\mathrm{1}}}}  ]  \ottsym{)} \,  \xmapsto{ \mathmakebox[0.4em]{} [  ] \mathmakebox[0.3em]{} }\hspace{-0.4em}{}^\ast \hspace{0.2em}  \, S'  \ottsym{(}  \ottnt{E}  [  \textsf{\textup{blame}\relax} \, \ell  \ottsym{:}   \star  \!\rightarrow\!  \star \Rightarrow  \unskip ^ { \ell_{{\mathrm{2}}} }  \! \ottnt{T_{{\mathrm{1}}}}  \!\rightarrow\!  \ottnt{T_{{\mathrm{2}}}}   ]  \ottsym{)}  \xrightarrow{ \mathmakebox[0.4em]{} [  ] \mathmakebox[0.3em]{} }  \textsf{\textup{blame}\relax} \, \ell$.

     \case{$S'  \ottsym{(}  \ottmv{X}  \ottsym{)}  \ottsym{=}  \ottmv{X}$}
     We have $S'  \ottsym{(}  w  \ottsym{:}   \iota \Rightarrow  \unskip ^ { \ell_{{\mathrm{1}}} }  \!  \star \Rightarrow  \unskip ^ { \ell_{{\mathrm{2}}} }  \! \ottmv{X}    \ottsym{)} \,  \xrightarrow{ \mathmakebox[0.4em]{} S_{{\mathrm{1}}} \mathmakebox[0.3em]{} }  \, S'  \ottsym{(}  w  \ottsym{)}$.
     Thus, $S'  \ottsym{(}  \ottnt{E}  [  \ottnt{f_{{\mathrm{1}}}}  ]  \ottsym{)} \,  \xmapsto{ \mathmakebox[0.4em]{} S_{{\mathrm{1}}} \mathmakebox[0.3em]{} }  \,  S_{{\mathrm{1}}}  \circ  S'   \ottsym{(}  \ottnt{E}  [  w  ]  \ottsym{)}$.
     Since $ S_{{\mathrm{1}}}  \circ  S'   \ottsym{(}  \ottnt{E}  [  w  ]  \ottsym{)}  \ottsym{=}   S'  \circ  S_{{\mathrm{1}}}   \ottsym{(}  \ottnt{E}  [  w  ]  \ottsym{)}$, we finish by the IH.

     \case{$S'  \ottsym{(}  \ottmv{X}  \ottsym{)}  \ottsym{=}  \ottmv{X'}$ for some $\ottmv{X'}  \neq  \ottmv{X}$}
     We have $S'  \ottsym{(}  w  \ottsym{:}   \iota \Rightarrow  \unskip ^ { \ell_{{\mathrm{1}}} }  \!  \star \Rightarrow  \unskip ^ { \ell_{{\mathrm{2}}} }  \! \ottmv{X}    \ottsym{)} \,  \xrightarrow{ \mathmakebox[0.4em]{} [  \ottmv{X'}  :=  \iota  ] \mathmakebox[0.3em]{} }  \, S'  \ottsym{(}  w  \ottsym{)}$.
     By \rnp{E\_Step},
     $S'  \ottsym{(}  \ottnt{E}  [  \ottnt{f_{{\mathrm{1}}}}  ]  \ottsym{)} \,  \xmapsto{ \mathmakebox[0.4em]{} [  \ottmv{X'}  :=  \iota  ] \mathmakebox[0.3em]{} }  \,  [  \ottmv{X'}  :=  \iota  ]  \circ  S'   \ottsym{(}  \ottnt{E}  [  w  ]  \ottsym{)}$.
     Let $S''$ be a type substitution such that
     $S'  \ottsym{=}   S''  \uplus  [  \ottmv{X}  :=  \ottmv{X'}  ] $.
     Since $ [  \ottmv{X'}  :=  \iota  ]  \circ  S'   \ottsym{(}  \ottnt{E}  [  w  ]  \ottsym{)} =  \ottsym{(}   [  \ottmv{X'}  :=  \iota  ]  \circ  S''   \ottsym{)}  \uplus  [  \ottmv{X}  :=  \iota  ]   \ottsym{(}  \ottnt{E}  [  w  ]  \ottsym{)} =
      [  \ottmv{X'}  :=  \iota  ]  \circ  S''   \ottsym{(}  S_{{\mathrm{1}}}  \ottsym{(}  \ottnt{E}  [  w  ]  \ottsym{)}  \ottsym{)}$.
     By the IH, we finish.
    \end{caseanalysis}

    \case{\rnp{R\_InstArrow}}
    We are given $w  \ottsym{:}   \star  \!\rightarrow\!  \star \Rightarrow  \unskip ^ { \ell_{{\mathrm{1}}} }  \!  \star \Rightarrow  \unskip ^ { \ell_{{\mathrm{2}}} }  \! \ottmv{X}   \,  \xrightarrow{ \mathmakebox[0.4em]{} [  \ottmv{X}  :=  \ottmv{X_{{\mathrm{1}}}}  \!\rightarrow\!  \ottmv{X_{{\mathrm{2}}}}  ] \mathmakebox[0.3em]{} }  \, w  \ottsym{:}   \star  \!\rightarrow\!  \star \Rightarrow  \unskip ^ { \ell_{{\mathrm{1}}} }  \!  \star \Rightarrow  \unskip ^ { \ell_{{\mathrm{2}}} }  \!  \star  \!\rightarrow\!  \star \Rightarrow  \unskip ^ { \ell_{{\mathrm{2}}} }  \! \ottmv{X_{{\mathrm{1}}}}  \!\rightarrow\!  \ottmv{X_{{\mathrm{2}}}}   $
    for some $w$, $\ottmv{X}$, $\ell_{{\mathrm{1}}}$, $\ell_{{\mathrm{2}}}$, and fresh $\ottmv{X_{{\mathrm{1}}}}$ and $\ottmv{X_{{\mathrm{2}}}}$.
    By case analysis on $S'  \ottsym{(}  \ottmv{X}  \ottsym{)}$.
    \begin{caseanalysis}
     \case{$S'  \ottsym{(}  \ottmv{X}  \ottsym{)}  \ottsym{=}  \iota$} Since ${[S'(v : *->* =>l1 * =>l2 X) -->A {} blame l2]}$,
      we finish.

     \case{$S'  \ottsym{(}  \ottmv{X}  \ottsym{)}  \ottsym{=}  \ottnt{T_{{\mathrm{1}}}}  \!\rightarrow\!  \ottnt{T_{{\mathrm{2}}}}$}
     We have $S'  \ottsym{(}  w  \ottsym{:}   \star  \!\rightarrow\!  \star \Rightarrow  \unskip ^ { \ell_{{\mathrm{1}}} }  \!  \star \Rightarrow  \unskip ^ { \ell_{{\mathrm{2}}} }  \! \ottmv{X}    \ottsym{)} \,  \xrightarrow{ \mathmakebox[0.4em]{} [  ] \mathmakebox[0.3em]{} }  \, S'  \ottsym{(}  w  \ottsym{)}  \ottsym{:}   \star  \!\rightarrow\!  \star \Rightarrow  \unskip ^ { \ell_{{\mathrm{1}}} }  \!  \star \Rightarrow  \unskip ^ { \ell_{{\mathrm{2}}} }  \!  \star  \!\rightarrow\!  \star \Rightarrow  \unskip ^ { \ell_{{\mathrm{2}}} }  \! \ottnt{T_{{\mathrm{1}}}}  \!\rightarrow\!  \ottnt{T_{{\mathrm{2}}}}   $.
     Thus, $S'  \ottsym{(}  \ottnt{E}  [  \ottnt{f_{{\mathrm{1}}}}  ]  \ottsym{)} \,  \xrightarrow{ \mathmakebox[0.4em]{} [  ] \mathmakebox[0.3em]{} }  \, S'  \ottsym{(}  \ottnt{E}  [  S'  \ottsym{(}  w  \ottsym{)}  \ottsym{:}   \star  \!\rightarrow\!  \star \Rightarrow  \unskip ^ { \ell_{{\mathrm{1}}} }  \!  \star \Rightarrow  \unskip ^ { \ell_{{\mathrm{2}}} }  \!  \star  \!\rightarrow\!  \star \Rightarrow  \unskip ^ { \ell_{{\mathrm{2}}} }  \! \ottnt{T_{{\mathrm{1}}}}  \!\rightarrow\!  \ottnt{T_{{\mathrm{2}}}}     ]  \ottsym{)}$.
     By the assumption, $\ottmv{X_{{\mathrm{1}}}}, \ottmv{X_{{\mathrm{2}}}} \, \not\in \, \textit{dom} \, \ottsym{(}  S'  \ottsym{)}$.
     Since we can suppose that $\ottmv{X_{{\mathrm{1}}}}, \ottmv{X_{{\mathrm{2}}}} \, \not\in \, \textit{ftv} \, \ottsym{(}  \ottnt{E}  [  w  ]  \ottsym{)}$,
     we have $S'  \ottsym{(}  \ottnt{E}  [  S'  \ottsym{(}  w  \ottsym{)}  \ottsym{:}   \star  \!\rightarrow\!  \star \Rightarrow  \unskip ^ { \ell_{{\mathrm{1}}} }  \!  \star \Rightarrow  \unskip ^ { \ell_{{\mathrm{2}}} }  \!  \star  \!\rightarrow\!  \star \Rightarrow  \unskip ^ { \ell_{{\mathrm{2}}} }  \! \ottnt{T_{{\mathrm{1}}}}  \!\rightarrow\!  \ottnt{T_{{\mathrm{2}}}}     ]  \ottsym{)}  \ottsym{=}   S'  \uplus  [  \ottmv{X_{{\mathrm{1}}}}  :=  \ottnt{T_{{\mathrm{1}}}}  \ottsym{,}  \ottmv{X_{{\mathrm{2}}}}  :=  \ottnt{T_{{\mathrm{2}}}}  ]   \ottsym{(}  \ottnt{E}  [  w  \ottsym{:}   \star  \!\rightarrow\!  \star \Rightarrow  \unskip ^ { \ell_{{\mathrm{1}}} }  \!  \star \Rightarrow  \unskip ^ { \ell_{{\mathrm{2}}} }  \!  \star  \!\rightarrow\!  \star \Rightarrow  \unskip ^ { \ell_{{\mathrm{2}}} }  \! \ottmv{X_{{\mathrm{1}}}}  \!\rightarrow\!  \ottmv{X_{{\mathrm{2}}}}     ]  \ottsym{)}$.
     By the IH, we finish.

     \case{$S'  \ottsym{(}  \ottmv{X}  \ottsym{)}  \ottsym{=}  \ottmv{X}$}
     We have $S'  \ottsym{(}  w  \ottsym{:}   \star  \!\rightarrow\!  \star \Rightarrow  \unskip ^ { \ell_{{\mathrm{1}}} }  \!  \star \Rightarrow  \unskip ^ { \ell_{{\mathrm{2}}} }  \! \ottmv{X}    \ottsym{)} \,  \xrightarrow{ \mathmakebox[0.4em]{} [  \ottmv{X}  :=  \ottmv{X_{{\mathrm{1}}}}  \!\rightarrow\!  \ottmv{X_{{\mathrm{2}}}}  ] \mathmakebox[0.3em]{} }  \, S'  \ottsym{(}  w  \ottsym{)}  \ottsym{:}   \star  \!\rightarrow\!  \star \Rightarrow  \unskip ^ { \ell_{{\mathrm{1}}} }  \!  \star \Rightarrow  \unskip ^ { \ell_{{\mathrm{2}}} }  \!  \star  \!\rightarrow\!  \star \Rightarrow  \unskip ^ { \ell_{{\mathrm{2}}} }  \! \ottmv{X_{{\mathrm{1}}}}  \!\rightarrow\!  \ottmv{X_{{\mathrm{2}}}}   $.
     Thus, $S'  \ottsym{(}  \ottnt{E}  [  \ottnt{f_{{\mathrm{1}}}}  ]  \ottsym{)} \,  \xrightarrow{ \mathmakebox[0.4em]{} S_{{\mathrm{1}}} \mathmakebox[0.3em]{} }  \,  S_{{\mathrm{1}}}  \circ  S'   \ottsym{(}  \ottnt{E}  [  \ottnt{f'_{{\mathrm{1}}}}  ]  \ottsym{)}$.
     Let $S''$ be a type substitution such that
     $\textit{dom} \, \ottsym{(}  S''  \ottsym{)}  \ottsym{=}  \textit{dom} \, \ottsym{(}  S'  \ottsym{)}$ and
     $S''  \ottsym{(}  \ottmv{X'}  \ottsym{)}  \ottsym{=}  S_{{\mathrm{1}}}  \ottsym{(}  S'  \ottsym{(}  \ottmv{X'}  \ottsym{)}  \ottsym{)}$ for any $\ottmv{X'} \, \in \, \textit{dom} \, \ottsym{(}  S''  \ottsym{)}$.
     Since $\ottmv{X_{{\mathrm{1}}}}, \ottmv{X_{{\mathrm{2}}}} \, \not\in \, \textit{dom} \, \ottsym{(}  S'  \ottsym{)}$,
     we have $ S_{{\mathrm{1}}}  \circ  S'   \ottsym{(}  \ottnt{E}  [  \ottnt{f'_{{\mathrm{1}}}}  ]  \ottsym{)}  \ottsym{=}  S''  \ottsym{(}  S_{{\mathrm{1}}}  \ottsym{(}  \ottnt{E}  [  \ottnt{f'_{{\mathrm{1}}}}  ]  \ottsym{)}  \ottsym{)}$.
     By the IH, we finish.

     \case{$S'  \ottsym{(}  \ottmv{X}  \ottsym{)}  \ottsym{=}  \ottmv{X'}$ for some $\ottmv{X'}  \neq  \ottmv{X}$}
     We have $S'  \ottsym{(}  w  \ottsym{:}   \star  \!\rightarrow\!  \star \Rightarrow  \unskip ^ { \ell_{{\mathrm{1}}} }  \!  \star \Rightarrow  \unskip ^ { \ell_{{\mathrm{2}}} }  \! \ottmv{X}    \ottsym{)} \,  \xrightarrow{ \mathmakebox[0.4em]{} [  \ottmv{X'}  :=  \ottmv{X_{{\mathrm{1}}}}  \!\rightarrow\!  \ottmv{X_{{\mathrm{2}}}}  ] \mathmakebox[0.3em]{} }  \, S'  \ottsym{(}  w  \ottsym{)}  \ottsym{:}   \star  \!\rightarrow\!  \star \Rightarrow  \unskip ^ { \ell_{{\mathrm{1}}} }  \!  \star \Rightarrow  \unskip ^ { \ell_{{\mathrm{2}}} }  \!  \star  \!\rightarrow\!  \star \Rightarrow  \unskip ^ { \ell_{{\mathrm{2}}} }  \! \ottmv{X_{{\mathrm{1}}}}  \!\rightarrow\!  \ottmv{X_{{\mathrm{2}}}}   $.
     Thus, $S'  \ottsym{(}  \ottnt{E}  [  \ottnt{f_{{\mathrm{1}}}}  ]  \ottsym{)} \,  \xrightarrow{ \mathmakebox[0.4em]{} [  \ottmv{X'}  :=  \ottmv{X_{{\mathrm{1}}}}  \!\rightarrow\!  \ottmv{X_{{\mathrm{2}}}}  ] \mathmakebox[0.3em]{} }  \,  [  \ottmv{X'}  :=  \ottmv{X_{{\mathrm{1}}}}  \!\rightarrow\!  \ottmv{X_{{\mathrm{2}}}}  ]  \circ  S'   \ottsym{(}  \ottnt{E}  [  \ottnt{f'_{{\mathrm{1}}}}  ]  \ottsym{)}$.
     Let $S''$ be a type substitution such that
     $S'  \ottsym{=}   [  \ottmv{X}  :=  \ottmv{X'}  ]  \uplus  S'' $.
     Then, since $\ottmv{X_{{\mathrm{1}}}}, \ottmv{X_{{\mathrm{2}}}} \, \not\in \, \textit{dom} \, \ottsym{(}  S'  \ottsym{)}$, we have
     $ [  \ottmv{X'}  :=  \ottmv{X_{{\mathrm{1}}}}  \!\rightarrow\!  \ottmv{X_{{\mathrm{2}}}}  ]  \circ  S'   \ottsym{(}  \ottnt{E}  [  \ottnt{f'_{{\mathrm{1}}}}  ]  \ottsym{)}  \ottsym{=}   \ottsym{(}   [  \ottmv{X'}  :=  \ottmv{X_{{\mathrm{1}}}}  \!\rightarrow\!  \ottmv{X_{{\mathrm{2}}}}  ]  \circ  S''   \ottsym{)}  \uplus  [  \ottmv{X}  :=  \ottmv{X_{{\mathrm{1}}}}  \!\rightarrow\!  \ottmv{X_{{\mathrm{2}}}}  ]   \ottsym{(}  \ottnt{E}  [  \ottnt{f'_{{\mathrm{1}}}}  ]  \ottsym{)}
     \ottsym{(}   [  \ottmv{X'}  :=  \ottmv{X_{{\mathrm{1}}}}  \!\rightarrow\!  \ottmv{X_{{\mathrm{2}}}}  ]  \circ  S''   \ottsym{)}  \ottsym{(}  S_{{\mathrm{1}}}  \ottsym{(}  \ottnt{E}  [  \ottnt{f'_{{\mathrm{1}}}}  ]  \ottsym{)}  \ottsym{)}$.
     Thus, by the IH, we finish.
    \end{caseanalysis}

    \case{\rnp{R\_LetP}}
    Since $S'$ does not capture type variables generated by the value substitution,
    we finish by the IH.

    \otherwise Obvious by the IH.
    
   \end{caseanalysis}
   \case{\rnp{E\_Blame}} Obvious.
   \qedhere
  \end{caseanalysis}
 \end{caseanalysis}
\end{proof}



\ifrestate
\thmSoundness*
\else
\begin{theoremA}[name=Soundness of Dynamic Type Inference]
  Suppose $ \emptyset   \vdash  \ottnt{f}  \ottsym{:}  \ottnt{U}$.
  \begin{enumerate}
    \item If $\ottnt{f} \,  \xmapsto{ \mathmakebox[0.4em]{}  S_{{\mathrm{1}}}  \uplus  S_{{\mathrm{2}}}  \mathmakebox[0.3em]{} }\hspace{-0.4em}{}^\ast \hspace{0.2em}  \, \ottnt{r}$
          where $\textit{dom} \, \ottsym{(}  S_{{\mathrm{1}}}  \ottsym{)}  \subseteq  \textit{ftv} \, \ottsym{(}  \ottnt{f}  \ottsym{)}$ and
          $\textit{dom} \, \ottsym{(}  S_{{\mathrm{2}}}  \ottsym{)}  \cap  \textit{ftv} \, \ottsym{(}  \ottnt{f}  \ottsym{)}  \ottsym{=}   \emptyset $,
          then, for any $S'$ such that $\textit{ftv} \, \ottsym{(}  S'  \ottsym{(}  S_{{\mathrm{1}}}  \ottsym{(}  \ottnt{f}  \ottsym{)}  \ottsym{)}  \ottsym{)}  \ottsym{=}   \emptyset $,
          $S'  \ottsym{(}  S_{{\mathrm{1}}}  \ottsym{(}  \ottnt{f}  \ottsym{)}  \ottsym{)} \,  \xmapsto{ \mathmakebox[0.4em]{} S'_{{\mathrm{2}}} \mathmakebox[0.3em]{} }\hspace{-0.4em}{}^\ast \hspace{0.2em}  \, S'  \ottsym{(}  \ottnt{r}  \ottsym{)}$
          for some $S'_{{\mathrm{2}}}$.
    \item If $\ottnt{f} \,  \xmapsto{ \mathmakebox[0.4em]{} S \mathmakebox[0.3em]{} }\hspace{-0.4em}{}^\ast \hspace{0.2em}  \, \textsf{\textup{blame}\relax} \, \ell$,
      then, for any $S'$,
      there exist $S''$ and $\ell'$ such that $S'  \ottsym{(}  \ottnt{f}  \ottsym{)} \,  \xmapsto{ \mathmakebox[0.4em]{} S'' \mathmakebox[0.3em]{} }\hspace{-0.4em}{}^\ast \hspace{0.2em}  \, \textsf{\textup{blame}\relax} \, \ell'$.

    \item If $ \ottnt{f} \!  \Uparrow  $, then,
      for any $S$ such that $\textit{ftv} \, \ottsym{(}  S  \ottsym{(}  \ottnt{f}  \ottsym{)}  \ottsym{)}  \ottsym{=}   \emptyset $, either
      $ S  \ottsym{(}  \ottnt{f}  \ottsym{)} \!  \Uparrow  $ or
      $S  \ottsym{(}  \ottnt{f}  \ottsym{)} \,  \xmapsto{ \mathmakebox[0.4em]{} S' \mathmakebox[0.3em]{} }\hspace{-0.4em}{}^\ast \hspace{0.2em}  \, \textsf{\textup{blame}\relax} \, \ell$ for some $\ell$ and $S'$.
  \end{enumerate}
\end{theoremA}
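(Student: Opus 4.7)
The proof splits into the three parts of the statement. For part (1), my plan is to reduce to Lemma~\ref{lem:soundness_result}. Given $\ottnt{f} \xmapsto{S}^\ast \ottnt{r}$, I would first decompose $S = S_1 \uplus S_2$, where $S_1$ acts on type variables already present in $\ottnt{f}$ (i.e., $\textit{dom}(S_1) \subseteq \textit{ftv}(\ottnt{f})$) and $S_2$ collects the substitutions for type variables generated freshly during reduction. This decomposition is well-defined because every variable in $\textit{dom}(S_2) \setminus \textit{ftv}(\ottnt{f})$ arises from a \rnp{R\_InstArrow} step that introduces it fresh. Applying Lemma~\ref{lem:soundness_result} yields $S_1(\ottnt{f}) \xmapsto{S'_2}^\ast \ottnt{r}$ with $\textit{dom}(S'_2) \subseteq \textit{dom}(S_2)$ and with $S'_2$ disjoint from $S_1$'s codomain. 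Since $S_2$'s variables are fresh in $\ottnt{f}$, we have $S'(S(\ottnt{f})) = S'(S_1(\ottnt{f}))$, and I can then push $S'$ through the reduction using a multi-step version of Lemma~\ref{lem:eval_step_any_subst} (proved by straightforward induction on the evaluation length). The hypothesis $\textit{ftv}(S'(S_1(\ottnt{f}))) = \emptyset$ gives the disjointness condition needed for the lemma to apply: variables in $\textit{dom}(S')$ neither appear in $\textit{dom}(S'_2)$ (they are from $\ottnt{f}$) nor block the freshness conditions required when $S'_2$ instantiates via \rnp{R\_InstBase}/\rnp{R\_InstArrow}. The resulting derivation is $S'(S_1(\ottnt{f})) \xmapsto{S'_2}^\ast S'(\ottnt{r})$, which is the desired conclusion.

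For part (2), the statement follows directly from Lemma~\ref{lem:soundness_blame_2} (choosing $S'$ with a domain disjoint from the freshly generated variables, which we may always assume by alpha-renaming those fresh names). The lemma itself is proved by induction on the length of the evaluation sequence, with the delicate cases being \rnp{R\_InstBase} and \rnp{R\_InstArrow} where one performs a case split on how $S'$ treats the type variable being instantiated (mapping to the matching base type, a mismatched base type, a function type, the same variable, or a different variable); in each case one either simulates the step, shortcuts directly to blame, or carries the extended substitution forward into the induction hypothesis.

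Part (3) I plan to prove by contradiction, appealing to completeness (Theorem~\ref{thm:completeness}). Suppose $\ottnt{f} \Uparrow$ but some ground substitution $S$ makes $S(\ottnt{f})$ neither diverge nor reduce to blame. By type safety (Theorem~\ref{thm:type_safety}) applied to the well-typed $S(\ottnt{f})$, this forces $S(\ottnt{f}) \xmapsto{S_0}^\ast w$ for some value $w$ and some $S_0$. Theorem~\ref{thm:completeness}.1 then produces $\ottnt{f} \xmapsto{S''}^\ast w'$ with $S'''(w') = w$, which terminates—contradicting $\ottnt{f} \Uparrow$.

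The main obstacle is Lemma~\ref{lem:soundness_result}, and everything else is essentially bookkeeping on top of it. The tricky invariant there is the combined statement controlling the relationship between $S_1$'s codomain and the domain of the residual $S'_2$: this is what makes the induction go through when one reduction step instantiates a variable already in $\textit{dom}(S_1)$ versus one in $\textit{dom}(S_2)$, because one has to reconstruct from the post-reduction substitution a decomposition that mirrors the original $S_1 \uplus S_2$ split on the reduced term. Part (2)'s proof is long but mechanical once Lemma~\ref{lem:value_any_subst} is in hand; part (3) is a short contradiction argument given the machinery already built.
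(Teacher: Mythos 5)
Your proposal matches the paper's proof essentially step for step: part (1) is obtained from Lemma~\ref{lem:soundness_result} followed by a multi-step version of Lemma~\ref{lem:eval_step_any_subst}, part (2) from Lemma~\ref{lem:soundness_blame_2} after assuming the generated type variables avoid $\textit{dom}(S')$, and part (3) by contradiction with Theorem~\ref{thm:completeness}. The only loose end is your claim that $\textit{dom}(S')$ automatically avoids $\textit{dom}(S'_{2})$ for an arbitrary $S'$; the paper repairs this by first replacing $S'$ with its restriction to $\textit{ftv}(S_{1}(\ottnt{f}))$, which changes neither $S'(S_{1}(\ottnt{f}))$ nor (by the closedness hypothesis) $S'(\ottnt{r})$ --- a one-line fix that does not affect your overall argument.
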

\fi

\begin{proof}
  \leavevmode
  \begin{enumerate}
  \item \ifrestate
    We prove below an equivalent statement that, if
    $\ottnt{f} \,  \xmapsto{ \mathmakebox[0.4em]{}  S_{{\mathrm{1}}}  \uplus  S_{{\mathrm{2}}}  \mathmakebox[0.3em]{} }\hspace{-0.4em}{}^\ast \hspace{0.2em}  \, \ottnt{r}$ where $\textit{dom} \, \ottsym{(}  S_{{\mathrm{1}}}  \ottsym{)}  \subseteq  \textit{ftv} \, \ottsym{(}  \ottnt{f}  \ottsym{)}$ and
    $\textit{dom} \, \ottsym{(}  S_{{\mathrm{2}}}  \ottsym{)}  \cap  \textit{ftv} \, \ottsym{(}  \ottnt{f}  \ottsym{)}  \ottsym{=}   \emptyset $, then, for any $S'$ such that
    $\textit{ftv} \, \ottsym{(}  S'  \ottsym{(}  S_{{\mathrm{1}}}  \ottsym{(}  \ottnt{f}  \ottsym{)}  \ottsym{)}  \ottsym{)}  \ottsym{=}   \emptyset $, $S'  \ottsym{(}  S_{{\mathrm{1}}}  \ottsym{(}  \ottnt{f}  \ottsym{)}  \ottsym{)} \,  \xmapsto{ \mathmakebox[0.4em]{} S'_{{\mathrm{2}}} \mathmakebox[0.3em]{} }\hspace{-0.4em}{}^\ast \hspace{0.2em}  \, S'  \ottsym{(}  \ottnt{r}  \ottsym{)}$ for
    some $S'_{{\mathrm{2}}}$.  (From the conditions on $S_{{\mathrm{1}}}$ and $S_{{\mathrm{2}}}$, it is
    clear that $S'  \ottsym{(}  S_{{\mathrm{1}}}  \ottsym{(}  \ottnt{f}  \ottsym{)}  \ottsym{)}  \ottsym{=}  S'  \ottsym{(}  S  \ottsym{(}  \ottnt{f}  \ottsym{)}  \ottsym{)}$ where $S  \ottsym{=}   S_{{\mathrm{1}}}  \uplus  S_{{\mathrm{2}}} $.)
    \fi

      By Lemma~\ref{lem:soundness_result},
      $S_{{\mathrm{1}}}  \ottsym{(}  \ottnt{f}  \ottsym{)} \,  \xmapsto{ \mathmakebox[0.4em]{} S'_{{\mathrm{2}}} \mathmakebox[0.3em]{} }\hspace{-0.4em}{}^\ast \hspace{0.2em}  \, \ottnt{r}$ for some $S'_{{\mathrm{2}}}$
      such that $\textit{dom} \, \ottsym{(}  S'_{{\mathrm{2}}}  \ottsym{)}  \subseteq  \textit{dom} \, \ottsym{(}  S_{{\mathrm{2}}}  \ottsym{)}$ and,
      for any $\ottmv{X} \, \in \, \textit{dom} \, \ottsym{(}  S_{{\mathrm{1}}}  \ottsym{)}$, $\textit{ftv} \, \ottsym{(}  S_{{\mathrm{1}}}  \ottsym{(}  \ottmv{X}  \ottsym{)}  \ottsym{)}  \cap  \textit{dom} \, \ottsym{(}  S'_{{\mathrm{2}}}  \ottsym{)}  \ottsym{=}   \emptyset $.
      Let $S''$ be a type substitution such that
      $\textit{dom} \, \ottsym{(}  S''  \ottsym{)}  \ottsym{=}  \textit{ftv} \, \ottsym{(}  S_{{\mathrm{1}}}  \ottsym{(}  \ottnt{f}  \ottsym{)}  \ottsym{)}$ and
      $S''  \ottsym{(}  \ottmv{X}  \ottsym{)}  \ottsym{=}  S'  \ottsym{(}  \ottmv{X}  \ottsym{)}$ for any $\ottmv{X} \, \in \, \textit{dom} \, \ottsym{(}  S''  \ottsym{)}$.
      Since $\textit{dom} \, \ottsym{(}  S_{{\mathrm{2}}}  \ottsym{)}  \cap  \textit{ftv} \, \ottsym{(}  \ottnt{f}  \ottsym{)}  \ottsym{=}   \emptyset $, we have
      $\textit{dom} \, \ottsym{(}  S'_{{\mathrm{2}}}  \ottsym{)}  \cap  \textit{ftv} \, \ottsym{(}  \ottnt{f}  \ottsym{)}  \ottsym{=}   \emptyset $.
      Furthermore,
      for any $\ottmv{X} \, \in \, \textit{dom} \, \ottsym{(}  S_{{\mathrm{1}}}  \ottsym{)}$, $\textit{ftv} \, \ottsym{(}  S_{{\mathrm{1}}}  \ottsym{(}  \ottmv{X}  \ottsym{)}  \ottsym{)}  \cap  \textit{dom} \, \ottsym{(}  S'_{{\mathrm{2}}}  \ottsym{)}  \ottsym{=}   \emptyset $,
      and $\textit{dom} \, \ottsym{(}  S''  \ottsym{)}  \ottsym{=}  \textit{ftv} \, \ottsym{(}  S_{{\mathrm{1}}}  \ottsym{(}  \ottnt{f}  \ottsym{)}  \ottsym{)}$.
      Thus, $\textit{dom} \, \ottsym{(}  S''  \ottsym{)}  \cap  \textit{dom} \, \ottsym{(}  S'_{{\mathrm{2}}}  \ottsym{)}  \ottsym{=}   \emptyset $.
      Without loss of generality, we can suppose that
      $\textit{dom} \, \ottsym{(}  S''  \ottsym{)}$  are disjoint from type variables generated by
      $S_{{\mathrm{1}}}  \ottsym{(}  \ottnt{f}  \ottsym{)} \,  \xmapsto{ \mathmakebox[0.4em]{} S'_{{\mathrm{2}}} \mathmakebox[0.3em]{} }\hspace{-0.4em}{}^\ast \hspace{0.2em}  \, \ottnt{r}$.
      Since types to which $S''$ maps have no free type variables,
      we have $S''  \ottsym{(}  S_{{\mathrm{1}}}  \ottsym{(}  \ottnt{f}  \ottsym{)}  \ottsym{)} \,  \xmapsto{ \mathmakebox[0.4em]{} S'_{{\mathrm{2}}} \mathmakebox[0.3em]{} }\hspace{-0.4em}{}^\ast \hspace{0.2em}  \, S''  \ottsym{(}  \ottnt{r}  \ottsym{)}$
      by Lemma~\ref{lem:eval_multi_step_any_fresh_subst}.
      Since $S''  \ottsym{(}  S_{{\mathrm{1}}}  \ottsym{(}  \ottnt{f}  \ottsym{)}  \ottsym{)}$ has no free type variables,
      free type variables in $S''  \ottsym{(}  \ottnt{r}  \ottsym{)}$ are generated during the
      evaluation; thus, we can suppose that $S''  \ottsym{(}  \ottnt{r}  \ottsym{)}  \ottsym{=}  S'  \ottsym{(}  \ottnt{r}  \ottsym{)}$.
      Since $S''  \ottsym{(}  S_{{\mathrm{1}}}  \ottsym{(}  \ottnt{f}  \ottsym{)}  \ottsym{)}  \ottsym{=}  S'  \ottsym{(}  S_{{\mathrm{1}}}  \ottsym{(}  \ottnt{f}  \ottsym{)}  \ottsym{)}$,
      we have $S'  \ottsym{(}  S_{{\mathrm{1}}}  \ottsym{(}  \ottnt{f}  \ottsym{)}  \ottsym{)} \,  \xmapsto{ \mathmakebox[0.4em]{} S'_{{\mathrm{2}}} \mathmakebox[0.3em]{} }\hspace{-0.4em}{}^\ast \hspace{0.2em}  \, S'  \ottsym{(}  \ottnt{r}  \ottsym{)}$.

    \item
      We can suppose that type variables generated by the evaluation
      $\ottnt{f} \,  \xmapsto{ \mathmakebox[0.4em]{} S \mathmakebox[0.3em]{} }\hspace{-0.4em}{}^\ast \hspace{0.2em}  \, \textsf{\textup{blame}\relax} \, \ell$ are disjoint from $\textit{dom} \, \ottsym{(}  S'  \ottsym{)}$.
      Thus, we finish by Lemma~\ref{lem:soundness_blame_2}.

    \item  Suppose $S  \ottsym{(}  \ottnt{f}  \ottsym{)} \,  \xmapsto{ \mathmakebox[0.4em]{} S' \mathmakebox[0.3em]{} }\hspace{-0.4em}{}^\ast \hspace{0.2em}  \, w'$ for some $S'$ and $w'$.
      Then, by Theorem \ref{thm:completeness}, it must be the case that
      $\ottnt{f} \,  \xmapsto{ \mathmakebox[0.4em]{} S'' \mathmakebox[0.3em]{} }\hspace{-0.4em}{}^\ast \hspace{0.2em}  \, w$ for some $w$ and $S''$, contradicting $ \ottnt{f} \!  \Uparrow  $.
      \qedhere
  \end{enumerate}
\end{proof}

\subsection{Correctness of Cast Insertion and Type Safety of the ITGL}
\begin{definitionA}
 $\Gamma  \rightsquigarrow  \Gamma'$ is the least relation satisfying the following rules.
 \begin{center}
  $\ottdruleCIXXEmpty{}$ \qquad
  $\ottdruleCIXXExtendVar{}$
 \end{center}
\end{definitionA}

\begin{lemmaA} \label{lem:ci_val}
 If $\Gamma  \vdash  v  \rightsquigarrow  \ottnt{f}  \ottsym{:}  \ottnt{U}$, then $\ottnt{f}  \ottsym{=}  w$ for some $w$.
\end{lemmaA}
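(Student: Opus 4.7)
The plan is to proceed by straightforward case analysis on the derivation of $\Gamma \vdash v \rightsquigarrow f : U$, using the fact that $v$ is a value to exclude most cast insertion rules. Recall that ITGL values $v$ are only of three forms: a constant $c$, an explicitly annotated abstraction $\lambda x{:}U_1.\,e$, or an implicitly annotated abstraction $\lambda x.\,e$. So only three cast insertion rules can possibly be the last rule of the derivation, namely \textsc{CI\_Const}, \textsc{CI\_AbsE}, and \textsc{CI\_AbsI}.

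For each case, reading off the conclusion of the rule tells us the shape of the produced term $f$ immediately. In the \textsc{CI\_Const} case, $f = c$, which is a value by the grammar of $w$. In the \textsc{CI\_AbsE} case, $f = \lambda x{:}U_1.\,f'$ for some $f'$ coming from the premise, which again matches the $w$-form $\lambda x{:}U.\,f$. In the \textsc{CI\_AbsI} case, an implicit annotation $\ottnt{T}$ is inferred, and $f = \lambda x{:}\ottnt{T}.\,f'$, which once more matches the $w$-form. In all three cases, we can take $w$ to be $f$ itself. The other cast insertion rules (\textsc{CI\_VarP}, \textsc{CI\_Op}, \textsc{CI\_App}, \textsc{CI\_LetP}) cannot apply because the source term would not then be a syntactic value.

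There is no real obstacle here: the lemma is essentially a syntactic compatibility check between the grammars of ITGL values and $\lambdaRTI$ values, together with the observation that cast insertion never wraps an abstraction or constant in an outer cast at its top level (casts are inserted only around sub-terms of operators and applications). The proof is short and will be presented simply as a case analysis on the last rule of the cast insertion derivation, noting for each applicable rule that the produced term is of one of the forms $c$ or $\lambda x{:}U.\,f'$ listed in the grammar of values $w$.
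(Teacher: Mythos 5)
Your proof is correct and follows the same route as the paper, which dispatches this lemma as a straightforward case analysis on the last cast insertion rule applied. Your observation that only \textsc{CI\_Const}, \textsc{CI\_AbsE}, and \textsc{CI\_AbsI} can conclude a derivation whose source term is a syntactic value, and that each produces a term matching the grammar of $w$, is exactly the intended argument.
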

\begin{proof}
 Straightforward by case analysis on the cast insertion rule applied last to
 derive $\Gamma  \vdash  v  \rightsquigarrow  \ottnt{f}  \ottsym{:}  \ottnt{U}$.
\end{proof}

\begin{lemmaA} \label{lem:ci_typed_insert}
 If $\Gamma  \vdash  e  \ottsym{:}  \ottnt{U}$ and $\Gamma  \rightsquigarrow  \Gamma'$,
 then $\Gamma'  \vdash  e  \rightsquigarrow  \ottnt{f}  \ottsym{:}  \ottnt{U}$ for some $\ottnt{f}$.
  \end{lemmaA}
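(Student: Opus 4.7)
The plan is to proceed by structural induction on the derivation of $\Gamma \vdash e : U$, building the translated term $f$ and the cast-insertion derivation case by case, using a small extension lemma for the $\rightsquigarrow$ relation on environments. Before starting, I would establish the auxiliary fact that $\Gamma \rightsquigarrow \Gamma'$ implies $\Gamma, x{:}U' \rightsquigarrow \Gamma', x{:}U'$ (and similarly for type-scheme bindings where the fresh generalizable variables on the right are chosen disjoint from $\textit{ftv}(U')$), which is immediate from \textsc{CI\_ExtendVar}. This lemma is needed to push the IH through each binding constructor.

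For the easy cases \textsc{IT\_Const}, \textsc{IT\_Op}, \textsc{IT\_AbsE}, \textsc{IT\_AbsI}, and \textsc{IT\_App}, the IH supplies translated subterms, and the corresponding \textsc{CI\_*} rule applies directly; the only work is inserting the casts on arguments as dictated by the consistency premises, which are inherited verbatim from the ITGL typing rule. The \textsc{IT\_VarP} case requires a small reconciliation: $\Gamma'$ binds $x$ to $\forall\,\overrightarrow{X_i}\,\overrightarrow{Y_j}.U$ by \textsc{CI\_ExtendVar}, whereas \textsc{IT\_VarP} picks $\overrightarrow{T_i}$ only for $\overrightarrow{X_i}$. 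I would instantiate each $X_i$ with $T_i$ and each $Y_j$ with $\nu$; since $\overrightarrow{Y_j}\cap\textit{ftv}(U)=\emptyset$ by the side condition of \textsc{CI\_ExtendVar}, we have $U[\overrightarrow{X_i}\,\overrightarrow{Y_j} := \overrightarrow{T_i}\,\overrightarrow{\nu}] = U[\overrightarrow{X_i}:=\overrightarrow{T_i}]$, so the types agree and the $\nu$-side condition of \textsc{CI\_VarP} is satisfied.

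The main obstacle, as expected, is \textsc{IT\_LetP}. Here the IH gives $\Gamma' \vdash v_1 \rightsquigarrow f_1 : U_1$, and by Lemma~\ref{lem:ci_val} we may take $f_1 = w_1$. The difficulty is that \textsc{CI\_LetP} needs generalizable variables computed as $\overrightarrow{X_i}=\textit{ftv}(U_1)\setminus(\textit{ftv}(\Gamma')\cup\textit{ftv}(v_1))$ and $\overrightarrow{Y_j}=\textit{ftv}(w_1)\setminus(\textit{ftv}(\Gamma')\cup\textit{ftv}(U_1)\cup\textit{ftv}(v_1))$, while \textsc{IT\_LetP} generalizes only $\textit{ftv}(U_1)\setminus(\textit{ftv}(\Gamma)\cup\textit{ftv}(v_1))$. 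I would first observe that $\textit{ftv}(\Gamma')=\textit{ftv}(\Gamma)$ (an easy induction on $\Gamma\rightsquigarrow\Gamma'$, since \textsc{CI\_ExtendVar} only adds bound variables), so the $\overrightarrow{X_i}$ from the two rules coincide. Then, to apply the IH to $e_2$, I would use the extension lemma to show that $\Gamma, x{:}\forall\overrightarrow{X_i}.U_1 \rightsquigarrow \Gamma', x{:}\forall\overrightarrow{X_i}\,\overrightarrow{Y_j}.U_1$, after choosing $\overrightarrow{Y_j}$ to be the set dictated by \textsc{CI\_LetP} (which is automatically disjoint from $\textit{ftv}(U_1)$, satisfying the side condition of \textsc{CI\_ExtendVar}). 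The IH then yields the translation of $e_2$, and \textsc{CI\_LetP} assembles the result.

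Everything is constructive, so $f$ is produced explicitly in each case; the only nontrivial reasoning is the bookkeeping around generalized variables in the let case and the matching of $\nu$-instantiations in the variable case.
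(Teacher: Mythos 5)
Your proposal follows the same route as the paper's proof: induction on the typing derivation, the auxiliary extension fact for $\rightsquigarrow$ on environments, the observation that $\textit{ftv}(\Gamma')=\textit{ftv}(\Gamma)$ so that the generalized $\overrightarrow{X_i}$ of \textsc{IT\_LetP} and \textsc{CI\_LetP} coincide, and the assembly of the let case via \textsc{CI\_ExtendVar} followed by the IH and \textsc{CI\_LetP}. Your treatment of the let case is exactly the paper's argument, and the remaining compatibility cases are handled identically.

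There is, however, one step that fails as written, in the \textsc{IT\_VarP} case. The side condition of \textsc{CI\_VarP} is a biconditional: the type argument for $X_j$ equals $\nu$ \emph{if and only if} $X_j \notin \textit{ftv}(U)$. Since \textsc{IT\_VarP} does not force every quantified $X_i$ to occur in $U$, some $X_i$ may be absent from $\textit{ftv}(U)$; for such a variable your choice of $T_i$ (a static type, hence not $\nu$) violates the ``if'' direction, so the instance of \textsc{CI\_VarP} you construct is not derivable. The paper avoids this by assigning $\nu$ to \emph{every} variable of the concatenated prefix $\overrightarrow{X_i}\,\overrightarrow{Y_j}$ that does not occur in $U$ --- the added $\overrightarrow{Y_j}$ \emph{and} any original $X_i\notin\textit{ftv}(U)$ --- and $T_i$ only to those $X_i$ that do occur; the resulting instantiation still yields $U[\overrightarrow{X_i}:=\overrightarrow{T_i}]$ because the $\nu$-positions are vacuous in $U$. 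The repair is one line, but your case analysis should be amended accordingly.
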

\begin{proof}
 By induction on the derivation of $\Gamma  \vdash  e  \ottsym{:}  \ottnt{U}$.
 \begin{caseanalysis}
  \case{\rnp{IT\_VarP}}
  We are given $\Gamma  \vdash  \ottmv{x}  \ottsym{:}  \ottnt{U'}  [   \overrightarrow{ \ottmv{X_{\ottmv{i}}} }   \ottsym{:=}   \overrightarrow{ \ottnt{T_{\ottmv{i}}} }   ]$
  for some $\ottmv{x}$, $\ottnt{U'}$, $ \overrightarrow{ \ottmv{X_{\ottmv{i}}} } $, and $ \overrightarrow{ \ottnt{T_{\ottmv{i}}} } $ and, by inversion,
  $\ottmv{x}  \ottsym{:}  \forall \,  \overrightarrow{ \ottmv{X_{\ottmv{i}}} }   \ottsym{.}  \ottnt{U'} \, \in \, \Gamma$.
  Since $\Gamma  \rightsquigarrow  \Gamma'$, we have
  $\ottmv{x}  \ottsym{:}  \forall \,  \overrightarrow{ \ottmv{X_{\ottmv{i}}} }  \,  \overrightarrow{ \ottmv{Y_{\ottmv{j}}} }   \ottsym{.}  \ottnt{U'} \, \in \, \Gamma'$ for some $ \overrightarrow{ \ottmv{Y_{\ottmv{j}}} } $ such that
  $ \overrightarrow{ \ottmv{Y_{\ottmv{j}}} }   \cap  \textit{ftv} \, \ottsym{(}  \ottnt{U'}  \ottsym{)}  \ottsym{=}   \emptyset $.
  Let $ \overrightarrow{ \ottmv{X'_{\ottmv{k}}} } $ be the concatenation of $ \overrightarrow{ \ottmv{X_{\ottmv{i}}} } $ and $ \overrightarrow{ \ottmv{Y_{\ottmv{j}}} } $ and
  $ \overrightarrow{ \mathbbsl{T}_{\ottmv{k}} } $ be a sequence of optional types such that:
  if $\ottmv{X'_{\ottmv{n}}} \, \in \,  \overrightarrow{ \ottmv{X'_{\ottmv{k}}} } $ appears in $\ottnt{U'}$ (thus $\ottmv{X'_{\ottmv{n}}} \, \in \,  \overrightarrow{ \ottmv{X_{\ottmv{i}}} } $),
  $ \mathbbsl{T}_{\ottmv{n}}  =  \ottnt{T_{\ottmv{j}}} $ where $\ottmv{j}$ is a number such that $\ottmv{X'_{\ottmv{n}}} = \ottmv{X_{\ottmv{j}}} \, \in \,  \overrightarrow{ \ottmv{X_{\ottmv{i}}} } $; and
  if $\ottmv{X'_{\ottmv{n}}} \, \in \,  \overrightarrow{ \ottmv{X'_{\ottmv{k}}} } $ does not appear in $\ottnt{U'}$, $ \mathbbsl{T}'_{\ottmv{n}}  =   \nu  $.
  Then, $\ottnt{U'}  [   \overrightarrow{ \ottmv{X_{\ottmv{i}}} }   \ottsym{:=}   \overrightarrow{ \ottnt{T_{\ottmv{i}}} }   ]  \ottsym{=}  \ottnt{U'}  [   \overrightarrow{ \ottmv{X'_{\ottmv{k}}} }   \ottsym{:=}   \overrightarrow{ \mathbbsl{T}_{\ottmv{k}} }   ]$.
  By \rnp{CI\_VarP}, we have
  $\Gamma  \vdash  \ottmv{x}  \rightsquigarrow  \ottmv{x}  [   \overrightarrow{ \mathbbsl{T}_{\ottmv{i}} }   ]  \ottsym{:}  \ottnt{U'}  [   \overrightarrow{ \ottmv{X_{\ottmv{i}}} }   \ottsym{:=}   \overrightarrow{ \ottnt{T_{\ottmv{i}}} }   ]$.

  \case{\rnp{IT\_Const}} By \rnp{CI\_Const}.

  \case{\rnp{IT\_Op}}
  We are given $\Gamma  \vdash  \mathit{op} \, \ottsym{(}  e_{{\mathrm{1}}}  \ottsym{,}  e_{{\mathrm{2}}}  \ottsym{)}  \ottsym{:}  \iota$ for some $ \mathit{op} $, $e_{{\mathrm{1}}}$,
  $e_{{\mathrm{2}}}$, and $\iota$ and, by inversion,
  \begin{itemize}
   \item $\Gamma  \vdash  e_{{\mathrm{1}}}  \ottsym{:}  \ottnt{U_{{\mathrm{1}}}}$,
   \item $\Gamma  \vdash  e_{{\mathrm{2}}}  \ottsym{:}  \ottnt{U_{{\mathrm{2}}}}$,
   \item $ \mathit{ty} ( \mathit{op} )   \ottsym{=}  \iota_{{\mathrm{1}}}  \!\rightarrow\!  \iota_{{\mathrm{2}}}  \!\rightarrow\!  \iota$, and
   \item $\ottnt{U_{{\mathrm{1}}}}  \sim  \iota_{{\mathrm{1}}}$ and $\ottnt{U_{{\mathrm{2}}}}  \sim  \iota_{{\mathrm{2}}}$
  \end{itemize}
  for some $\iota_{{\mathrm{1}}}$ and $\iota_{{\mathrm{2}}}$.
  By the IHs, $\Gamma'  \vdash  e_{{\mathrm{1}}}  \rightsquigarrow  \ottnt{f_{{\mathrm{1}}}}  \ottsym{:}  \ottnt{U_{{\mathrm{1}}}}$ and $\Gamma'  \vdash  e_{{\mathrm{2}}}  \rightsquigarrow  \ottnt{f_{{\mathrm{2}}}}  \ottsym{:}  \ottnt{U_{{\mathrm{2}}}}$
  for some $\ottnt{f_{{\mathrm{1}}}}$ and $\ottnt{f_{{\mathrm{2}}}}$.
  By \rnp{CI\_Op}, we have
  $\Gamma'  \vdash  \mathit{op} \, \ottsym{(}  e_{{\mathrm{1}}}  \ottsym{,}  e_{{\mathrm{2}}}  \ottsym{)}  \rightsquigarrow  \mathit{op} \, \ottsym{(}  \ottnt{f_{{\mathrm{1}}}}  \ottsym{:}   \ottnt{U_{{\mathrm{1}}}} \Rightarrow  \unskip ^ { \ell_{{\mathrm{1}}} }  \! \iota_{{\mathrm{1}}}   \ottsym{,}  \ottnt{f_{{\mathrm{2}}}}  \ottsym{:}   \ottnt{U_{{\mathrm{2}}}} \Rightarrow  \unskip ^ { \ell_{{\mathrm{2}}} }  \! \iota_{{\mathrm{2}}}   \ottsym{)}  \ottsym{:}  \iota$
  for some $\ell_{{\mathrm{1}}}$ and $\ell_{{\mathrm{2}}}$.

  \case{\rnp{IT\_AbsI}} By the IH and \rnp{CI\_AbsI}
  since $ \Gamma ,   \ottmv{x}  :  \ottnt{T}    \rightsquigarrow   \Gamma' ,   \ottmv{x}  :  \ottnt{T}  $.

  \case{\rnp{IT\_AbsE}} By the IH and \rnp{CI\_AbsE}
  since $ \Gamma ,   \ottmv{x}  :  \ottnt{U}    \rightsquigarrow   \Gamma' ,   \ottmv{x}  :  \ottnt{U}  $.

  \case{\rnp{IT\_App}} By the IHs and \rnp{CI\_App}.

  \case{\rnp{IT\_LetP}}
  We are given
  $\Gamma  \vdash   \textsf{\textup{let}\relax} \,  \ottmv{x}  =  v_{{\mathrm{1}}}  \textsf{\textup{ in }\relax}  e_{{\mathrm{2}}}   \ottsym{:}  \ottnt{U}$ for some $\ottmv{x}$, $v_{{\mathrm{1}}}$, and $e_{{\mathrm{2}}}$
  and, by inversion,
  \begin{itemize}
   \item $\Gamma  \vdash  v_{{\mathrm{1}}}  \ottsym{:}  \ottnt{U_{{\mathrm{1}}}}$,
   \item $ \Gamma ,   \ottmv{x}  :  \forall \,  \overrightarrow{ \ottmv{X_{\ottmv{i}}} }   \ottsym{.}  \ottnt{U_{{\mathrm{1}}}}    \vdash  e_{{\mathrm{2}}}  \ottsym{:}  \ottnt{U}$, and
   \item $ \overrightarrow{ \ottmv{X_{\ottmv{i}}} }   \ottsym{=}  \textit{ftv} \, \ottsym{(}  \ottnt{U_{{\mathrm{1}}}}  \ottsym{)}  \setminus  \ottsym{(}  \textit{ftv} \, \ottsym{(}  \Gamma  \ottsym{)}  \cup  \textit{ftv} \, \ottsym{(}  v_{{\mathrm{1}}}  \ottsym{)}  \ottsym{)}$
  \end{itemize}
  for some $\ottnt{U_{{\mathrm{1}}}}$.
  By the IH and Lemma~\ref{lem:ci_val}, $\Gamma'  \vdash  v_{{\mathrm{1}}}  \rightsquigarrow  w_{{\mathrm{1}}}  \ottsym{:}  \ottnt{U_{{\mathrm{1}}}}$
  for some $w_{{\mathrm{1}}}$.
  Since $\textit{ftv} \, \ottsym{(}  \Gamma'  \ottsym{)}  \ottsym{=}  \textit{ftv} \, \ottsym{(}  \Gamma  \ottsym{)}$,
  $ \overrightarrow{ \ottmv{X_{\ottmv{i}}} }   \ottsym{=}  \textit{ftv} \, \ottsym{(}  \ottnt{U_{{\mathrm{1}}}}  \ottsym{)}  \setminus  \ottsym{(}  \textit{ftv} \, \ottsym{(}  \Gamma'  \ottsym{)}  \cup  \textit{ftv} \, \ottsym{(}  v_{{\mathrm{1}}}  \ottsym{)}  \ottsym{)}$.
  Let $ \overrightarrow{ \ottmv{Y_{\ottmv{j}}} }   \ottsym{=}  \textit{ftv} \, \ottsym{(}  w_{{\mathrm{1}}}  \ottsym{)}  \setminus  \ottsym{(}  \textit{ftv} \, \ottsym{(}  \Gamma'  \ottsym{)}  \cup  \textit{ftv} \, \ottsym{(}  \ottnt{U_{{\mathrm{1}}}}  \ottsym{)}  \cup  \textit{ftv} \, \ottsym{(}  v_{{\mathrm{1}}}  \ottsym{)}  \ottsym{)}$.
  Since $\Gamma  \rightsquigarrow  \Gamma'$, we have
  $ \Gamma ,   \ottmv{x}  :  \forall \,  \overrightarrow{ \ottmv{X_{\ottmv{i}}} }   \ottsym{.}  \ottnt{U_{{\mathrm{1}}}}    \rightsquigarrow   \Gamma' ,   \ottmv{x}  :  \forall \,  \overrightarrow{ \ottmv{X_{\ottmv{i}}} }  \,  \overrightarrow{ \ottmv{Y_{\ottmv{i}}} }   \ottsym{.}  \ottnt{U_{{\mathrm{1}}}}  $.
  Thus, by the IH,
  $ \Gamma' ,   \ottmv{x}  :  \forall \,  \overrightarrow{ \ottmv{X_{\ottmv{i}}} }  \,  \overrightarrow{ \ottmv{Y_{\ottmv{i}}} }   \ottsym{.}  \ottnt{U_{{\mathrm{1}}}}    \vdash  e_{{\mathrm{2}}}  \rightsquigarrow  \ottnt{f_{{\mathrm{2}}}}  \ottsym{:}  \ottnt{U}$ for some $\ottnt{f_{{\mathrm{2}}}}$.
  By \rnp{CI\_LetP},
  $\Gamma'  \vdash   \textsf{\textup{let}\relax} \,  \ottmv{x}  =  v_{{\mathrm{1}}}  \textsf{\textup{ in }\relax}  e_{{\mathrm{2}}}   \rightsquigarrow   \textsf{\textup{let}\relax} \,  \ottmv{x}  =   \Lambda    \overrightarrow{ \ottmv{X_{\ottmv{i}}} }     \overrightarrow{ \ottmv{Y_{\ottmv{j}}} }  .\,  w_{{\mathrm{1}}}   \textsf{\textup{ in }\relax}  \ottnt{f_{{\mathrm{2}}}}   \ottsym{:}  \ottnt{U}$.
  \qedhere
 \end{caseanalysis}
\end{proof}

\begin{lemmaA} \label{lem:ci_insert_typed}
 If $\Gamma  \vdash  e  \rightsquigarrow  \ottnt{f}  \ottsym{:}  \ottnt{U}$, then $\Gamma  \vdash  \ottnt{f}  \ottsym{:}  \ottnt{U}$.
\end{lemmaA}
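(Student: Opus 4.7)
The plan is to proceed by straightforward structural induction on the derivation of $\Gamma \vdash e \rightsquigarrow f : U$, performing case analysis on the last cast-insertion rule applied. In every case, I would appeal to the inductive hypothesis on each subderivation to obtain $\lambdaRTI$-typing for the translated subterms, and then reassemble the conclusion using the corresponding typing rule of $\lambdaRTI$ together with the side conditions already recorded in the cast-insertion rule.

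The routine cases are \rnp{CI\_Const}, \rnp{CI\_VarP}, \rnp{CI\_AbsE}, and \rnp{CI\_AbsI}: the inserted term has the same top-level shape as the source term, and \rnp{T\_Const}, \rnp{T\_VarP}, and \rnp{T\_Abs} apply directly, with the side conditions of \rnp{CI\_VarP} matching those of \rnp{T\_VarP}. For \rnp{CI\_Op}, the translation wraps each subterm $f_i$ in a cast $f_i : U_i \Rightarrow^{\ell_i} \iota_i$; the IH gives $\Gamma \vdash f_i : U_i$, and \rnp{T\_Cast} applies using the recorded consistency premise $U_i \sim \iota_i$, after which \rnp{T\_Op} closes the case. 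The \rnp{CI\_App} case is analogous, but requires the auxiliary observation that $U_1 \triangleright U_{11} \!\rightarrow\! U_{12}$ implies $U_1 \sim U_{11} \!\rightarrow\! U_{12}$: by the two clauses defining $\triangleright$, either $U_1 = \star$ (use \rnp{C\_DynL}) or $U_1 = U_{11} \!\rightarrow\! U_{12}$ (use reflexivity of $\sim$). Together with $U_2 \sim U_{11}$, both inserted casts are well-typed, so \rnp{T\_App} yields the conclusion.

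The main obstacle, as usual for let-polymorphism, is the \rnp{CI\_LetP} case, where the translation produces $\textsf{let}\, x = \Lambda\,\overrightarrow{X_i}\,\overrightarrow{Y_j}.\,w_1 \,\textsf{in}\, f_2$ and we must apply \rnp{T\_LetP}. By Lemma~\ref{lem:ci_val} the translation of $v_1$ is indeed a value $w_1$, so $\Gamma \vdash w_1 : U_1$ follows from the IH. The IH on the body gives $\Gamma, x{:}\forall\,\overrightarrow{X_i}\,\overrightarrow{Y_j}.\,U_1 \vdash f_2 : U_2$. The nontrivial side condition of \rnp{T\_LetP} is $(\overrightarrow{X_i}\,\overrightarrow{Y_j}) \cap \textit{ftv}(\Gamma) = \emptyset$, which is immediate from the two equalities carried by \rnp{CI\_LetP}: $\overrightarrow{X_i} = \textit{ftv}(U_1) \setminus (\textit{ftv}(\Gamma) \cup \textit{ftv}(v_1))$ and $\overrightarrow{Y_j} = \textit{ftv}(w_1) \setminus (\textit{ftv}(\Gamma) \cup \textit{ftv}(U_1) \cup \textit{ftv}(v_1))$, both of which explicitly subtract $\textit{ftv}(\Gamma)$. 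Applying \rnp{T\_LetP} then gives $\Gamma \vdash \textsf{let}\, x = \Lambda\,\overrightarrow{X_i}\,\overrightarrow{Y_j}.\,w_1 \,\textsf{in}\, f_2 : U_2$, completing the induction.
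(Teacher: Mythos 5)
Your proof is correct and follows exactly the paper's argument: induction on the cast-insertion derivation, using \rnp{T\_Cast} with the recorded consistency premises in the \rnp{CI\_Op} and \rnp{CI\_App} cases, the observation that $\ottnt{U}  \triangleright  \ottnt{U_{{\mathrm{1}}}}  \!\rightarrow\!  \ottnt{U_{{\mathrm{2}}}}$ implies $\ottnt{U}  \sim  \ottnt{U_{{\mathrm{1}}}}  \!\rightarrow\!  \ottnt{U_{{\mathrm{2}}}}$, and \rnp{T\_LetP} with the freshness side condition read off from the set-difference equations in \rnp{CI\_LetP}. The only (harmless) extra is the appeal to Lemma~\ref{lem:ci_val}, which is unnecessary here since the \rnp{CI\_LetP} premise already names the translated binding as a value $w_{{\mathrm{1}}}$.
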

\begin{proof}
 By induction on the derivation of $\Gamma  \vdash  e  \rightsquigarrow  \ottnt{f}  \ottsym{:}  \ottnt{U}$.
 \begin{caseanalysis}
  \case{\rnp{CI\_VarP}} By \rnp{T\_VarP}.
  \case{\rnp{CI\_Const}} By \rnp{T\_Const}.
  \case{\rnp{CI\_Op}} By the IHs, \rnp{T\_Cast}, and \rnp{T\_Op}.
  \case{\rnp{CI\_AbsI}} By the IH and \rnp{T\_Abs}.
  \case{\rnp{CI\_AbsE}} By the IH and \rnp{T\_Abs}.
  \case{\rnp{CI\_App}} By the IHs, \rnp{T\_App}, and the fact that
  $\ottnt{U}  \triangleright  \ottnt{U_{{\mathrm{1}}}}  \!\rightarrow\!  \ottnt{U_{{\mathrm{2}}}}$ implies $\ottnt{U}  \sim  \ottnt{U_{{\mathrm{1}}}}  \!\rightarrow\!  \ottnt{U_{{\mathrm{2}}}}$.
  \case{\rnp{CI\_LetP}} By the IHs and \rnp{T\_LetP}.  \qedhere
 \end{caseanalysis}
\end{proof}

\ifrestate
\thmCastInsertion*
\else
\begin{theoremA}[Cast Insertion is Type-Preserving]\label{thm:ci_type_preservation}
 If $\Gamma  \vdash  e  \ottsym{:}  \ottnt{U}$,
 then $\Gamma  \vdash  e  \rightsquigarrow  \ottnt{f}  \ottsym{:}  \ottnt{U}$ and $\Gamma  \vdash  \ottnt{f}  \ottsym{:}  \ottnt{U}$ for some $\ottnt{f}$.
\end{theoremA}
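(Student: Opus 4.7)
The plan is to derive the theorem directly from the two auxiliary lemmas that have already been established, namely Lemma~\ref{lem:ci_typed_insert} (every well-typed ITGL term admits a translation under a cast-inserted environment) and Lemma~\ref{lem:ci_insert_typed} (every translation target is well-typed in $\lambdaRTI$). The only gap between the hypothesis $\Gamma \vdash e : U$ of the theorem and the hypothesis of Lemma~\ref{lem:ci_typed_insert} is that the latter requires some $\Gamma'$ with $\Gamma \rightsquigarrow \Gamma'$; I will supply $\Gamma$ itself.

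First I would prove the reflexivity fact $\Gamma \rightsquigarrow \Gamma$ by induction on the structure of $\Gamma$. The base case is immediate by \rnp{CI\_Empty}. For the inductive step, given $\Gamma, x : \forall\,\overrightarrow{X_i}.\,U$, the IH gives $\Gamma \rightsquigarrow \Gamma$, and an application of \rnp{CI\_ExtendVar} with the empty sequence $\overrightarrow{Y_j} = \emptyset$ (for which the side condition $\overrightarrow{Y_j} \cap \mathit{ftv}(U) = \emptyset$ holds trivially) extends this to $\Gamma, x : \forall\,\overrightarrow{X_i}.\,U \rightsquigarrow \Gamma, x : \forall\,\overrightarrow{X_i}.\,U$. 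Note that the single variable bindings of the form $x : U$ appearing in the syntax of $\Gamma$ are instances of $\forall\,\overrightarrow{X_i}.\,U$ with an empty sequence of quantified variables, so a single inductive clause suffices.

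With $\Gamma \rightsquigarrow \Gamma$ in hand, Lemma~\ref{lem:ci_typed_insert} applied to $\Gamma \vdash e : U$ yields some $f$ with $\Gamma \vdash e \rightsquigarrow f : U$. Then Lemma~\ref{lem:ci_insert_typed} applied to this translation judgment delivers $\Gamma \vdash f : U$, giving both conclusions of the theorem simultaneously.

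There is no real obstacle here, since the heavy lifting is done inside Lemma~\ref{lem:ci_typed_insert}, whose only nontrivial cases are \rnp{IT\_VarP} (where the translation must pad the list of type arguments with $\nu$ for the extra generalized variables introduced by \rnp{CI\_ExtendVar}) and \rnp{IT\_LetP} (where fresh generalized variables $\overrightarrow{Y_j}$ from free type variables of $w_1$ must be threaded through). The only point to double-check in the wrap-up is that my reflexivity derivation is compatible with those lemmas, i.e.\ that producing $\overrightarrow{Y_j} = \emptyset$ at every $\forall$-binding does not conflict with the side conditions in \rnp{CI\_LetP} or \rnp{CI\_VarP}; it does not, because those conditions constrain $\overrightarrow{Y_j}$ only through disjointness requirements that are vacuous when $\overrightarrow{Y_j}$ is empty.
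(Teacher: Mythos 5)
Your proposal matches the paper's own proof exactly: the paper also derives the theorem from Lemmas~\ref{lem:ci_typed_insert} and \ref{lem:ci_insert_typed} via the reflexivity fact $\Gamma \rightsquigarrow \Gamma$ (which it asserts without detail, whereas you correctly justify it via \rnp{CI\_Empty} and \rnp{CI\_ExtendVar} with empty $\overrightarrow{Y_j}$). The argument is correct.
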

\fi

\begin{proof}
 Since $\Gamma  \rightsquigarrow  \Gamma$, we conclude by Lemmas~\ref{lem:ci_typed_insert} and
 \ref{lem:ci_insert_typed}.
\end{proof}

\ifrestate
\corTypeSafeITGL*
\else
\begin{corollaryA}[name=Type Safety of the ITGL]
 If $ \emptyset   \vdash  e  \ottsym{:}  \ottnt{U}$, then:
 \begin{itemize}
  \item $ \langle   \emptyset    \vdash   e  :  \ottnt{U}  \rangle   \xmapsto{ \mathmakebox[0.4em]{} S \mathmakebox[0.3em]{} }\hspace{-0.4em}{}^\ast \hspace{0.2em}    w $ for some $S$ and $w$
        such that $ \emptyset   \vdash  w  \ottsym{:}  S  \ottsym{(}  \ottnt{U}  \ottsym{)}$;
  \item $ \langle   \emptyset    \vdash   e  :  \ottnt{U}  \rangle   \xmapsto{ \mathmakebox[0.4em]{} S \mathmakebox[0.3em]{} }\hspace{-0.4em}{}^\ast \hspace{0.2em}    \textsf{\textup{blame}\relax} \, \ell $
        for some $S$ and $\ell$; or
  \item $ \langle   \emptyset    \vdash   e  :  \ottnt{U}  \rangle   \Uparrow  $.
 \end{itemize}
\end{corollaryA}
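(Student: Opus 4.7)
The plan is to derive the corollary directly by chaining two previously established results: cast-insertion type preservation (Theorem~\ref{thm:ci_type_preservation}) and type safety of $\lambdaRTI$ (Theorem~\ref{thm:type_safety}). The definitions $ \langle   \emptyset    \vdash   e  :  \ottnt{U}  \rangle   \xmapsto{ \mathmakebox[0.4em]{} S \mathmakebox[0.3em]{} }\hspace{-0.4em}{}^\ast \hspace{0.2em}    \ottnt{f} $ and $ \langle   \emptyset    \vdash   e  :  \ottnt{U}  \rangle   \Uparrow  $ are phrased precisely so that they are witnessed by a cast-inserting translation followed by $\lambdaRTI$-level evaluation or divergence. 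Hence the corollary should reduce to a mechanical unfolding of these definitions.

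First, I would apply Theorem~\ref{thm:ci_type_preservation} to the hypothesis $ \emptyset   \vdash  e  \ottsym{:}  \ottnt{U}$ to obtain some $\lambdaRTI$ term $\ottnt{f}$ with $ \emptyset   \vdash  e  \rightsquigarrow  \ottnt{f}  \ottsym{:}  \ottnt{U}$ and $ \emptyset   \vdash  \ottnt{f}  \ottsym{:}  \ottnt{U}$. Next, I would apply Theorem~\ref{thm:type_safety} to this $\ottnt{f}$, which yields two alternatives: either $\ottnt{f} \,  \xmapsto{ \mathmakebox[0.4em]{} S \mathmakebox[0.3em]{} }\hspace{-0.4em}{}^\ast \hspace{0.2em}  \, \ottnt{r}$ for some $S$ and result $\ottnt{r}$ with $ \emptyset   \vdash  \ottnt{r}  \ottsym{:}  S  \ottsym{(}  \ottnt{U}  \ottsym{)}$, or $ \ottnt{f} \!  \Uparrow  $.

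In the divergence case, the definition of ITGL divergence immediately gives $ \langle   \emptyset    \vdash   e  :  \ottnt{U}  \rangle   \Uparrow  $, yielding the third disjunct. In the evaluation case, I would perform a case analysis on the result $\ottnt{r}$ using the grammar of results ($\ottnt{r} \mathrel{::=} w \mid \textsf{\textup{blame}\relax} \, \ell$): if $\ottnt{r} = w$, the typing judgment $ \emptyset   \vdash  w  \ottsym{:}  S  \ottsym{(}  \ottnt{U}  \ottsym{)}$ together with the unfolding of ITGL evaluation gives the first disjunct; if $\ottnt{r} = \textsf{\textup{blame}\relax} \, \ell$, the same unfolding gives the second disjunct.

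There is no serious obstacle here — the two preceding theorems do all the work, and the only thing to check is that the conclusions of those theorems fit the shape demanded by the ITGL-level definitions of evaluation and divergence, which they do by construction. The proof is therefore essentially a two-line application, so in the final write-up I would simply state ``By Theorems~\ref{thm:ci_type_preservation} and~\ref{thm:type_safety}'' and leave the trivial case split on $\ottnt{r}$ implicit.
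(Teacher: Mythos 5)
Your proposal matches the paper's own proof exactly: the paper also proves this corollary by applying Theorem~\ref{thm:ci_type_preservation} to obtain a well-typed $\lambdaRTI$ term $\ottnt{f}$ and then Theorem~\ref{thm:type_safety} to $\ottnt{f}$, with the case split on the result $\ottnt{r}$ left implicit. No gaps; the argument is correct as stated.
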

\fi 

\begin{proof}
 By Theorems~\ref{thm:ci_type_preservation} and \ref{thm:type_safety}.
\end{proof}

\subsection{Gradual Guarantee}

\subsubsection{Static Gradual Guarantee in ITGL}


\begin{lemmaA} \label{lem:type_prec_consistent_ground1}
 If $ \ottnt{U_{{\mathrm{1}}}}   \sqsubseteq _{ [  ] }  \ottnt{U'_{{\mathrm{1}}}} $ and $\ottnt{U_{{\mathrm{1}}}}  \sim  \ottnt{U_{{\mathrm{2}}}}$ and $ \ottnt{U_{{\mathrm{2}}}}   \sqsubseteq _{ [  ] }  \ottnt{U'_{{\mathrm{2}}}} $,
 then $\ottnt{U'_{{\mathrm{1}}}}  \sim  \ottnt{U'_{{\mathrm{2}}}}$.
\end{lemmaA}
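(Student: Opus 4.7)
The plan is to prove this by induction on the derivation of $\ottnt{U_{{\mathrm{1}}}} \sim \ottnt{U_{{\mathrm{2}}}}$, with case analysis (via inversion) on the precision derivations $ \ottnt{U_{{\mathrm{1}}}} \sqsubseteq _{ [  ] } \ottnt{U'_{{\mathrm{1}}}} $ and $ \ottnt{U_{{\mathrm{2}}}} \sqsubseteq _{ [  ] } \ottnt{U'_{{\mathrm{2}}}} $ in each case. A crucial observation about the precision relation with the empty substitution is that \rnp{P\_TyVar} is \emph{never} applicable (its premise requires $\ottmv{X} \in \textit{dom} \, \ottsym{(}  [  ]  \ottsym{)}$, which is vacuous); consequently, a type variable $\ottmv{X}$ on the left can only be related to $\star$ on the right, via \rnp{P\_Dyn}. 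This is what makes the lemma go through cleanly despite the presence of type variables.

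The base cases \rnp{C\_Base}, \rnp{C\_DynR}, \rnp{C\_DynL}, and \rnp{C\_TyVar} are handled by enumerating the applicable precision rules. For instance, in \rnp{C\_Base} with $\ottnt{U_{{\mathrm{1}}}} = \ottnt{U_{{\mathrm{2}}}} = \iota$, each of $\ottnt{U'_{{\mathrm{1}}}}$ and $\ottnt{U'_{{\mathrm{2}}}}$ must be either $\iota$ (by \rnp{P\_IdBase}) or $\star$ (by \rnp{P\_Dyn}), and all four combinations satisfy $\ottnt{U'_{{\mathrm{1}}}} \sim \ottnt{U'_{{\mathrm{2}}}}$ via some consistency rule. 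For \rnp{C\_TyVar} with $\ottnt{U_{{\mathrm{1}}}} = \ottnt{U_{{\mathrm{2}}}} = \ottmv{X}$, the observation above forces $\ottnt{U'_{{\mathrm{1}}}} = \ottnt{U'_{{\mathrm{2}}}} = \star$, and $\star \sim \star$ follows immediately. For \rnp{C\_DynR} where $\ottnt{U_{{\mathrm{2}}}} = \star$, inverting $ \star \sqsubseteq _{ [  ] } \ottnt{U'_{{\mathrm{2}}}} $ forces $\ottnt{U'_{{\mathrm{2}}}} = \star$, so $\ottnt{U'_{{\mathrm{1}}}} \sim \star$ holds by \rnp{C\_DynR}; the \rnp{C\_DynL} case is symmetric.

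The only inductive case is \rnp{C\_Arrow}, where $\ottnt{U_{{\mathrm{1}}}} = \ottnt{U_{{\mathrm{11}}}}  \!\rightarrow\!  \ottnt{U_{{\mathrm{12}}}}$ and $\ottnt{U_{{\mathrm{2}}}} = \ottnt{U_{{\mathrm{21}}}}  \!\rightarrow\!  \ottnt{U_{{\mathrm{22}}}}$ with $\ottnt{U_{{\mathrm{11}}}} \sim \ottnt{U_{{\mathrm{21}}}}$ and $\ottnt{U_{{\mathrm{12}}}} \sim \ottnt{U_{{\mathrm{22}}}}$. Inverting each precision premise yields two possibilities (\rnp{P\_Arrow} or \rnp{P\_Dyn}), giving four subcases. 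If either $\ottnt{U'_{\ottmv{i}}} = \star$, the consistency conclusion is immediate from \rnp{C\_DynR} or \rnp{C\_DynL}. If both are arrow types, apply the induction hypothesis componentwise and conclude by \rnp{C\_Arrow}.

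There is no real obstacle here; the proof is a routine structural argument. The only subtlety worth flagging in the write-up is the non-applicability of \rnp{P\_TyVar} under the empty substitution, which is what confines type variables on the imprecise side to being $\star$ and makes the induction go through without having to reason about nontrivial instantiations.
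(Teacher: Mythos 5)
Your proof is correct and follows essentially the same route as the paper's: induction on the derivation of $\ottnt{U_{{\mathrm{1}}}}  \sim  \ottnt{U_{{\mathrm{2}}}}$, with the base cases discharged by case analysis on the two precision derivations and the \rnp{C\_Arrow} case by the induction hypothesis. The observation that \rnp{P\_TyVar} cannot fire under the empty substitution---so a type variable on the left can only be matched by $ \star $ on the right---is exactly the point that makes the base cases (in particular \rnp{C\_TyVar}) go through, and your write-up makes this explicit where the paper leaves it implicit.
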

\begin{proof}
  By induction on $\ottnt{U_{{\mathrm{1}}}}  \sim  \ottnt{U_{{\mathrm{2}}}}$.  Base cases follow from simple case analysis
  on $ \ottnt{U_{{\mathrm{1}}}}   \sqsubseteq _{ [  ] }  \ottnt{U'_{{\mathrm{1}}}} $ and $ \ottnt{U_{{\mathrm{2}}}}   \sqsubseteq _{ [  ] }  \ottnt{U'_{{\mathrm{2}}}} $.
  Case \rn{C\_Arrow} is easy.
\end{proof}

\begin{lemmaA}\label{lem:prec-trans1}
  If $ \ottnt{U}   \sqsubseteq _{ S_{{\mathrm{1}}} }  \ottnt{U'} $ and $ \ottnt{U'}   \sqsubseteq _{ S_{{\mathrm{2}}} }  \ottnt{U''} $, then
    $ \ottnt{U}   \sqsubseteq _{  S_{{\mathrm{1}}}  \circ  S_{{\mathrm{2}}}  }  \ottnt{U''} $.
\end{lemmaA}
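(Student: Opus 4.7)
The plan is to prove the lemma by induction on the derivation of $ \ottnt{U}   \sqsubseteq _{ S_{{\mathrm{1}}} }  \ottnt{U'} $, with a case analysis on the last rule used to derive $ \ottnt{U'}   \sqsubseteq _{ S_{{\mathrm{2}}} }  \ottnt{U''} $. Interpreting $ S_{{\mathrm{1}}}  \circ  S_{{\mathrm{2}}} $ in the standard way so that $\ottsym{(}   S_{{\mathrm{1}}}  \circ  S_{{\mathrm{2}}}   \ottsym{)}  \ottsym{(}  \ottmv{X}  \ottsym{)}  \ottsym{=}  S_{{\mathrm{1}}}  \ottsym{(}  S_{{\mathrm{2}}}  \ottsym{(}  \ottmv{X}  \ottsym{)}  \ottsym{)}$, the bookkeeping in the \rnp{P\_TyVar} subcases becomes direct: for instance, in the base case where $\ottnt{U}  \ottsym{=}  \ottnt{U'}  \ottsym{=}  \iota$ and $ \ottnt{U'}   \sqsubseteq _{ S_{{\mathrm{2}}} }  \ottnt{U''} $ comes from \rnp{P\_TyVar} with $\ottnt{U''}  \ottsym{=}  \ottmv{Y}$ and $S_{{\mathrm{2}}}  \ottsym{(}  \ottmv{Y}  \ottsym{)}  \ottsym{=}  \iota$, we have $\ottsym{(}   S_{{\mathrm{1}}}  \circ  S_{{\mathrm{2}}}   \ottsym{)}  \ottsym{(}  \ottmv{Y}  \ottsym{)}  \ottsym{=}  S_{{\mathrm{1}}}  \ottsym{(}  \iota  \ottsym{)}  \ottsym{=}  \iota$ and \rnp{P\_TyVar} closes the goal. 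The \rnp{P\_Dyn} subcases are immediate, and the compatible subcase (both arrows) follows straightforwardly from two uses of the induction hypothesis and \rnp{P\_Arrow}.

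The only subtle case is \rnp{P\_Arrow} paired with \rnp{P\_TyVar}: $\ottnt{U}  \ottsym{=}  \ottnt{U_{{\mathrm{1}}}}  \!\rightarrow\!  \ottnt{U_{{\mathrm{2}}}}$ and $\ottnt{U'}  \ottsym{=}  \ottnt{U'_{{\mathrm{1}}}}  \!\rightarrow\!  \ottnt{U'_{{\mathrm{2}}}}$ with $ \ottnt{U_{\ottmv{k}}}   \sqsubseteq _{ S_{{\mathrm{1}}} }  \ottnt{U'_{\ottmv{k}}} $, while $ \ottnt{U'}   \sqsubseteq _{ S_{{\mathrm{2}}} }  \ottmv{Y} $ with $S_{{\mathrm{2}}}  \ottsym{(}  \ottmv{Y}  \ottsym{)}  \ottsym{=}  \ottnt{U'_{{\mathrm{1}}}}  \!\rightarrow\!  \ottnt{U'_{{\mathrm{2}}}}$. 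I cannot close the goal by applying the induction hypothesis componentwise (there is no sub-derivation to reach $\ottmv{Y}$), so I must apply \rnp{P\_TyVar} directly and check that $\ottsym{(}   S_{{\mathrm{1}}}  \circ  S_{{\mathrm{2}}}   \ottsym{)}  \ottsym{(}  \ottmv{Y}  \ottsym{)}$ actually equals $\ottnt{U_{{\mathrm{1}}}}  \!\rightarrow\!  \ottnt{U_{{\mathrm{2}}}}$. Since $S_{{\mathrm{2}}}$ ranges over static types (its codomain contains no $ \star $), $\ottnt{U'_{{\mathrm{1}}}}$ and $\ottnt{U'_{{\mathrm{2}}}}$ are static, and I expect this to be the main obstacle of the proof.

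To discharge this obstacle, I would establish a small auxiliary lemma: \emph{if $ \ottnt{U}   \sqsubseteq _{ S }  \ottnt{T} $ and $\ottnt{T}$ is a static type, then $\ottnt{U}  \ottsym{=}  S  \ottsym{(}  \ottnt{T}  \ottsym{)}$.} This is a straightforward induction on the precision derivation: \rnp{P\_Dyn} is excluded because $\ottnt{T}$ is static; \rnp{P\_IdBase} and \rnp{P\_TyVar} discharge themselves by the definition of $S$; and \rnp{P\_Arrow} follows from the induction hypothesis applied componentwise. Using this lemma, in the obstacle case we obtain $\ottnt{U_{{\mathrm{1}}}}  \ottsym{=}  S_{{\mathrm{1}}}  \ottsym{(}  \ottnt{U'_{{\mathrm{1}}}}  \ottsym{)}$ and $\ottnt{U_{{\mathrm{2}}}}  \ottsym{=}  S_{{\mathrm{1}}}  \ottsym{(}  \ottnt{U'_{{\mathrm{2}}}}  \ottsym{)}$, whence $\ottsym{(}   S_{{\mathrm{1}}}  \circ  S_{{\mathrm{2}}}   \ottsym{)}  \ottsym{(}  \ottmv{Y}  \ottsym{)}  \ottsym{=}  S_{{\mathrm{1}}}  \ottsym{(}  \ottnt{U'_{{\mathrm{1}}}}  \ottsym{)}  \!\rightarrow\!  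S_{{\mathrm{1}}}  \ottsym{(}  \ottnt{U'_{{\mathrm{2}}}}  \ottsym{)}  \ottsym{=}  \ottnt{U_{{\mathrm{1}}}}  \!\rightarrow\!  \ottnt{U_{{\mathrm{2}}}}$, and \rnp{P\_TyVar} closes the goal. The analogous issue arises, but more simply, in the \rnp{P\_TyVar} case on the left combined with \rnp{P\_TyVar} on the right, where the same computation gives $\ottsym{(}   S_{{\mathrm{1}}}  \circ  S_{{\mathrm{2}}}   \ottsym{)}  \ottsym{(}  \ottmv{Y}  \ottsym{)}  \ottsym{=}  S_{{\mathrm{1}}}  \ottsym{(}  \ottmv{X}  \ottsym{)}  \ottsym{=}  \ottnt{U}$. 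No additional machinery is needed.
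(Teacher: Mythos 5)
Your proof is correct and follows essentially the same route as the paper, whose own argument for this lemma is a one-line induction on the (sum of the sizes of the) two derivations. The auxiliary fact you isolate---that $ \ottnt{U}   \sqsubseteq _{ S }  \ottnt{T} $ with $\ottnt{T}$ static forces $\ottnt{U}  \ottsym{=}  S  \ottsym{(}  \ottnt{T}  \ottsym{)}$---is exactly what is needed to discharge the \rnp{P\_Arrow}/\rnp{P\_TyVar} interaction, and it appears in the paper's appendix as Lemma~\ref{lem:type_prec_right_static_type}.
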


\begin{proof}
  By induction on the sum of the sizes of derivations of $ \ottnt{U}   \sqsubseteq _{ S_{{\mathrm{1}}} }  \ottnt{U'} $ and $ \ottnt{U'}   \sqsubseteq _{ S_{{\mathrm{2}}} }  \ottnt{U''} $.
\end{proof}

\begin{lemmaA}\label{lem:prec-trans2}
  If $ \langle  \Gamma   \vdash   e  :  \ottnt{U} 
                 \sqsubseteq _{ S_{{\mathrm{1}}} }  \ottnt{U'}  :  e'  \dashv  \Gamma'  \rangle $ and $ \langle  \Gamma'   \vdash   e'  :  \ottnt{U'} 
                 \sqsubseteq _{ S_{{\mathrm{2}}} }  \ottnt{U''}  :  e''  \dashv  \Gamma''  \rangle $, then
    $ \langle  \Gamma   \vdash   e  :  \ottnt{U} 
                 \sqsubseteq _{ \ottsym{(}   S_{{\mathrm{1}}}  \circ  S_{{\mathrm{2}}}   \ottsym{)} }  \ottnt{U''}  :  e''  \dashv  \Gamma''  \rangle $.
  \end{lemmaA}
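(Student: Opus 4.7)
The proof plan is structural induction on the derivation of $ \langle  \Gamma   \vdash   e  :  \ottnt{U}  \sqsubseteq _{ S_{{\mathrm{1}}} }  \ottnt{U'}  :  e'  \dashv  \Gamma'  \rangle $, with nested case analysis on the rule concluding $ \langle  \Gamma'   \vdash   e'  :  \ottnt{U'}  \sqsubseteq _{ S_{{\mathrm{2}}} }  \ottnt{U''}  :  e''  \dashv  \Gamma''  \rangle $. The syntactic shape of the middle term $e'$ tightly constrains which pairs of rules must be considered, so the nested analysis is finite. Throughout, Lemma~\ref{lem:prec-trans1} is used to compose the type-precision judgments on the intermediate types into one under $ S_{{\mathrm{1}}}  \circ  S_{{\mathrm{2}}} $.

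The base cases (ITP\_Const, ITP\_VarP) and the strictly compositional cases (ITP\_Op, ITP\_App) are routine: the middle shape determines the rule on each side uniquely, and the result follows from induction hypotheses applied to subterms combined with Lemma~\ref{lem:prec-trans1}. Before starting, I would also record two auxiliary facts used repeatedly: transitivity of type-scheme precision (a straightforward variant of Lemma~\ref{lem:prec-trans1} that handles the generalized variables), and the observation that if $ \ottnt{U}   \sqsubseteq _{ S }  \ottnt{T} $ with $\ottnt{T}$ static then $\ottnt{U}$ is itself static (by case analysis on the derivation, using that the codomain of $S$ consists of static types).

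The lambda abstraction rules are the first place where real case analysis is needed, because ITP\_AbsI, ITP\_AbsIE, ITP\_AbsEI, and ITP\_AbsE can all produce or consume a given middle $\lambda$-term. For instance, ITP\_AbsI composed with ITP\_AbsIE yields an ITP\_AbsIE derivation for the composition; ITP\_AbsEI composed with ITP\_AbsIE produces an ITP\_AbsE derivation (exploiting $\ottnt{T}  \sqsubseteq _{ S }  \star $); and the pair ITP\_AbsE followed by ITP\_AbsEI forces the middle annotation to be static, which by the auxiliary fact above makes the left annotation static as well, so the composition is an ITP\_AbsEI. The apparent case ITP\_AbsIE followed by ITP\_AbsEI cannot arise, since ITP\_AbsEI demands a static annotation on its left-hand abstraction while ITP\_AbsIE produces a $\star$ annotation in the middle.

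The ITP\_LetP case is the main obstacle. The premise on the value carries a substitution $ [   \overrightarrow{ \ottmv{X'_{\ottmv{j}}} }   :=   \overrightarrow{ \ottnt{T'_{\ottmv{j}}} }   ]  \uplus  S $ that encodes the generalized variables on the less precise side, and composing two such judgments requires combining the two augmented substitutions without capture. My plan is to apply the induction hypothesis directly to the body derivations to obtain a single judgment on $e_2$ under $ S_{{\mathrm{1}}}  \circ  S_{{\mathrm{2}}} $, and to apply it to the value derivations after merging the two augmenting substitutions; the new augmenting substitution then records the type arguments consumed on the way from $v_1$ to $v''_1$. The technical heart of this case is verifying the disjointness and freshness side conditions on the generalized variables ($ \overrightarrow{ \ottmv{X_{\ottmv{i}}} }  \,\cap\, \textit{ftv}(\Gamma) = \emptyset$ and the analogous conditions on $S$) across the composition; once these are aligned, the result follows by reassembling ITP\_LetP. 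I expect no further conceptual difficulty after this bookkeeping is done.
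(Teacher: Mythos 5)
Your proposal is correct and takes essentially the same route as the paper, whose own proof is only the one-line sketch ``By induction on $e'$, using Lemma~\ref{lem:prec-trans1}''; inducting on the first derivation with a nested case split on the second is equivalent here because the precision rules are syntax-directed on the middle term except for the four abstraction rules, which you handle explicitly. Your expansion of the sketch is sound: the pairwise analysis of \rnp{ITP\_AbsI}/\rnp{ITP\_AbsIE}/\rnp{ITP\_AbsEI}/\rnp{ITP\_AbsE} (including the impossibility of \rnp{ITP\_AbsIE} followed by \rnp{ITP\_AbsEI} and the static-propagation fact for the right side of $\sqsubseteq$), and the merging of the augmented substitutions in \rnp{ITP\_LetP} using the rule's own non-capture side condition, are exactly the details the paper leaves to ``check carefully.''
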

  \begin{proof}
    By induction on $e'$.  Use Lemma~\ref{lem:prec-trans1}.
      \AI{This has to be checked carefully.}
  \end{proof}

\begin{lemmaA}\label{lem:prec-prop2}
  If $ \langle  \Gamma   \vdash   e  :  \ottnt{U} 
                 \sqsubseteq _{ S_{{\mathrm{1}}} }  S_{{\mathrm{2}}}  \ottsym{(}  \ottnt{U'}  \ottsym{)}  :  S_{{\mathrm{2}}}  \ottsym{(}  e'  \ottsym{)}  \dashv  S_{{\mathrm{2}}}  \ottsym{(}  \Gamma'  \ottsym{)}  \rangle $, then $ \langle  \Gamma   \vdash   e  :  \ottnt{U} 
                 \sqsubseteq _{  S_{{\mathrm{1}}}  \circ  S_{{\mathrm{2}}}  }  \ottnt{U'}  :  e'  \dashv  \Gamma'  \rangle $.
\end{lemmaA}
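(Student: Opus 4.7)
My plan is to proceed by induction on the derivation of the premise $\langle \Gamma \vdash e : U \sqsubseteq_{S_1} S_2(U') : S_2(e') \dashv S_2(\Gamma')\rangle$. For each rule, the structure of $S_2(e')$, $S_2(U')$, and $S_2(\Gamma')$ determines the structure of $e'$, $U'$, and $\Gamma'$ up to the effect of $S_2$: a constant, abstraction, or application on the left side must come from the same shape on the right, while a type constructor $\rightarrow$ observed through $S_2$ either is already present in $U'$ or arises from some type variable $X' \in \textit{dom}(S_2)$ with $S_2(X')$ a function type. In each case I will reconstruct the derivation with substitution $S_1 \circ S_2$ by applying the induction hypothesis to the sub-derivations and using a type-level analog of the lemma where necessary.

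Before tackling the main lemma I would prove a type-level counterpart as a preparatory lemma: if $ \ottnt{U}   \sqsubseteq _{ S_{{\mathrm{1}}} }  S_{{\mathrm{2}}}  \ottsym{(}  \ottnt{U'}  \ottsym{)} $, then $ \ottnt{U}   \sqsubseteq _{  S_{{\mathrm{1}}}  \circ  S_{{\mathrm{2}}}  }  \ottnt{U'} $, by induction on the derivation. The interesting sub-case is \rnp{P\_Arrow} when $U'$ is a type variable $X'$ with $S_2(X') = T_1 \rightarrow T_2$. Here I will use the following sublemma, provable by induction on the structure of a static type $T$: if $ \ottnt{U}   \sqsubseteq _{ S }  \ottnt{T} $ with $T$ static, then $U = S(T)$, because the only rules applicable with $T$ on the right are \rnp{P\_IdBase}, \rnp{P\_TyVar}, and \rnp{P\_Arrow}, each of which pins down the left-hand type. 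With this sublemma, the componentwise precision gives $(S_1 \circ S_2)(X') = U$, and \rnp{P\_TyVar} closes the case.

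Most term-level cases --- \rnp{ITP\_Const}, \rnp{ITP\_Op}, \rnp{ITP\_App}, and the various abstraction rules \rnp{ITP\_AbsI}, \rnp{ITP\_AbsIE}, \rnp{ITP\_AbsEI}, and \rnp{ITP\_AbsE} --- then follow by direct application of the induction hypothesis combined with the type-level lemma, after reading off the shape of $e'$ from $S_2(e')$ and of $\Gamma'$ from $S_2(\Gamma')$. The rule \rnp{ITP\_VarP} additionally uses the type-level lemma on the instantiated body type $\ottnt{U'}  [   \overrightarrow{ \ottmv{X'_{\ottmv{j}}} }   \ottsym{:=}   \overrightarrow{ \ottnt{T'_{\ottmv{j}}} }   ]$, together with the fact that type instantiation commutes with $S_2$ up to renaming of the bound variables.

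The main obstacle is the \rnp{ITP\_LetP} case, whose inner premise on the polymorphic value is indexed by $[   \overrightarrow{ \ottmv{X'_{\ottmv{j}}} }   :=   \overrightarrow{ \ottnt{T'_{\ottmv{j}}} }   ]  \uplus  S_{{\mathrm{1}}}$, where $\overrightarrow{\ottmv{X'_j}}$ are the type variables bound by the $\Lambda$ on the right. To apply the induction hypothesis I will need the bound variables to be fresh with respect to $\operatorname{dom}(S_2)$, which is available by $\alpha$-renaming, and I will need to verify that the side condition $\overrightarrow{\ottmv{X_i}} \cap \operatorname{ftv}(S(\ottmv{X})) = \emptyset$ remains valid when $S$ is replaced by $S_1 \circ S_2$. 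Once these freshness and capture-avoidance side conditions are checked, the induction hypothesis applied to the sub-derivations for $w_1$ and for $f_2$ combines, via a rebuilt \rnp{ITP\_LetP}, into the desired conclusion.
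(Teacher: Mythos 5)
Your proposal matches the paper's proof, which is literally just ``By induction on the derivation of the premise'' (left otherwise unelaborated, with an author note that it must be checked carefully); your plan supplies the same induction together with the missing case structure. The type-level counterpart and the sublemma you propose to prove first already appear in the paper as Lemma~\ref{lem:type_prec_right_subst_pullback} and Lemma~\ref{lem:type_prec_right_static_type}, established by exactly the inductions you describe, so you can cite them rather than reprove them.
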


\begin{proof}
  By induction on $ \langle  \Gamma   \vdash   e  :  \ottnt{U} 
                 \sqsubseteq _{ S_{{\mathrm{1}}} }  S_{{\mathrm{2}}}  \ottsym{(}  \ottnt{U'}  \ottsym{)}  :  S_{{\mathrm{2}}}  \ottsym{(}  e'  \ottsym{)}  \dashv  S_{{\mathrm{2}}}  \ottsym{(}  \Gamma'  \ottsym{)}  \rangle $.
  \AI{This has to be checked carefully.}
\end{proof}

\begin{lemmaA}
  \label{thm:uprec-prec}
  If
  $ \Gamma   \sqsubseteq _{ S_{{\mathrm{0}}} }  \Gamma' $ and
  $ e   \sqsubseteq _{ S_{{\mathrm{0}}} }  e' $ and
  $S  \ottsym{(}  \Gamma  \ottsym{)}  \vdash  S  \ottsym{(}  e  \ottsym{)}  \ottsym{:}  \ottnt{U}$, then
  there exist $S'$ and $\ottnt{U'}$ such that
  $S'  \ottsym{(}  \Gamma'  \ottsym{)}  \vdash  S'  \ottsym{(}  e'  \ottsym{)}  \ottsym{:}  \ottnt{U'}$ and
  $ \langle  S  \ottsym{(}  \Gamma  \ottsym{)}   \vdash   S  \ottsym{(}  e  \ottsym{)}  :  \ottnt{U} 
                 \sqsubseteq _{ [  ] }  \ottnt{U'}  :  S'  \ottsym{(}  e'  \ottsym{)}  \dashv  S'  \ottsym{(}  \Gamma'  \ottsym{)}  \rangle $.   
\end{lemmaA}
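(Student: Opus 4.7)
The plan is to prove the lemma by induction on the structure of the untyped term precision derivation $e \sqsubseteq_{S_0} e'$. Since $e \sqsubseteq_{S_0} e'$ determines that $e$ and $e'$ have the same outermost syntactic form (up to annotations on $\lambda$-abstractions), each case fixes the outer shape of $e'$. In each case, I invert the typing derivation $S(\Gamma) \vdash S(e) : U$, apply the induction hypothesis to the subterms of $e$, and then recombine the resulting typed precision derivations using the corresponding \textsc{ITP}-rule. Along the way, I use Lemma~\ref{lem:prec-trans1} to chain type precisions and Lemma~\ref{lem:prec-prop2} to shift type substitutions between sides when needed.

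The base cases \rn{IP\_Var}, \rn{IP\_Const}, and the compatibility cases \rn{IP\_Op}, \rn{IP\_App} are routine: the typing of $S(e)$ and the IH supply all of the premises of the corresponding \textsc{ITP}-rule essentially for free. The variable case is slightly more involved because \rn{ITP\_VarP} requires supplying both sides' type-argument vectors; I pick them from the instantiations already used to type $S(x)$, extending on the right with fresh (or identity) substitutions so that $\ottmv{x}$'s scheme in $\Gamma'$, obtained from $\Gamma \sqsubseteq_{S_0} \Gamma'$, can be instantiated to a type $U'$ with $U \sqsubseteq_{[]} U'$.

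The abstraction cases \rn{IP\_AbsI}, \rn{IP\_AbsIE}, \rn{IP\_AbsEI}, and \rn{IP\_AbsE} require more care because the annotation on the bound variable differs between the two sides. For \rn{IP\_AbsI} and \rn{IP\_AbsEI}, the principal-type property of ITGL inference (implicit in rules \rn{IT\_AbsI}, \rn{IP\_AbsEI}) gives the appropriate static annotation on the right, so the IH on the body yields a precision derivation that \rn{ITP\_AbsI}/\rn{ITP\_AbsEI} directly accepts; for \rn{IP\_AbsIE} the right-hand annotation is already forced to be $\star$; for \rn{IP\_AbsE} the annotation-level precision $U \sqsubseteq_{S_0} U'$ is part of the premise and feeds directly into \rn{ITP\_AbsE}. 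In each subcase, care must be taken to thread $S$ through the extended context without unintended capture.

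The main obstacle will be the let-polymorphism case \rn{IP\_LetP}. From $\textsf{let } x = v_1 \textsf{ in } e_2 \sqsubseteq_{S_0} \textsf{let } x = v'_1 \textsf{ in } e'_2$ and a typing of $S(\textsf{let } \cdots)$ one must produce a most general type $U'_1$ for $S'(v'_1)$ and quantified type variables $\vec{X'_j}$ that are (a) disjoint from $\Gamma'$, (b) disjoint from $\textit{dom}(S) = [\,]$, and (c) allow building a premise of the shape $\langle S(\Gamma) \vdash S(v_1) : U_1 \sqsubseteq_{[\vec{X'_j} := \vec{T'_j}] \uplus [\,]} U'_1 : S'(v'_1) \dashv S'(\Gamma') \rangle$ as demanded by \rn{ITP\_LetP}. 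This forces an appeal to the principal type property of Garcia and Cimini's type inference for $v'_1$ (so as to obtain $U'_1$ with $U_1 \sqsubseteq_S U'_1$ for some $S$ witnessing the generalized variables) and then application of Lemmas~\ref{lem:prec-trans1}, \ref{lem:prec-trans2}, and \ref{lem:prec-prop2} to migrate the instantiating substitution between $S_0$ on the outside and the $[\vec{X'_j} := \vec{T'_j}]$ on the inside. Once this matching is arranged, the IH on $e_2$ (in the extended environment, which remains related by precision because type variables generalized at the let are fresh) completes the case.
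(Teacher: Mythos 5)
Your overall strategy---induction on the derivation of $e \sqsubseteq_{S_0} e'$---is the same as the paper's, but there is one structural gap: you never fix the substitution $S'$. The paper proves the strengthened claim with $S'$ chosen to be $S \circ S_0$ throughout, i.e.\ it shows $(S\circ S_0)(\Gamma') \vdash (S\circ S_0)(e') : U'$ together with the precision judgment at the \emph{empty} substitution. Fixing $S'$ up front is not cosmetic: in the multi-premise cases (\rnp{IP\_Op}, \rnp{IP\_App}, \rnp{IP\_LetP}) the induction hypothesis is applied to two sibling subterms, and the two resulting substitutions must coincide in order to assemble a single well-typed right-hand term $S'(e'_1)\,S'(e'_2) = S'(e'_1\,e'_2)$. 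With your formulation (``there exist $S'$ and $U'$\dots'', extended case by case ``with fresh (or identity) substitutions''), nothing guarantees this coherence, so the recombination step does not go through as stated. Once $S' = S\circ S_0$ is fixed, the fact that the resulting precision index is $[\,]$ also falls out, which is what the application case relies on.

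Two smaller points. First, you describe the compatibility cases as supplying the premises of the \textsc{ITP}-rules ``essentially for free,'' but in \rnp{IP\_App} the side conditions $U'_1 \triangleright U'_{11}\to U'_{12}$ and $U'_{11}\sim U'_2$ on the imprecise side are not free: they require that precision at the empty substitution preserves consistency, i.e.\ Lemma~\ref{lem:type_prec_consistent_ground1}, which is exactly the one ingredient the paper singles out for this case and which none of the lemmas you cite (\ref{lem:prec-trans1}, \ref{lem:prec-prop2}) provides. Second, your appeals to the principal-type property in the abstraction and let cases are unnecessary detours (and awkward, since $PT$ is defined for whole programs rather than sub-derivations): with $S'=S\circ S_0$ the right-hand annotations and generalized variables can be constructed directly---for instance, in \rnp{IP\_AbsEI} the implicit static annotation on the right is forced to equal $S(T_1)$, because a static type sits above $S(T_1)$ in $\sqsubseteq_{[\,]}$ only if it is equal to it.
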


\begin{proof}
  We show that there exists $\ottnt{U'}$ such that
  $\ottsym{(}   S  \circ  S_{{\mathrm{0}}}   \ottsym{)}  \ottsym{(}  \Gamma'  \ottsym{)}  \vdash  S'  \ottsym{(}  e'  \ottsym{)}  \ottsym{:}  \ottnt{U'}$ and
  $ \langle  S  \ottsym{(}  \Gamma  \ottsym{)}   \vdash   S  \ottsym{(}  e  \ottsym{)}  :  \ottnt{U} 
                 \sqsubseteq _{ [  ] }  \ottnt{U'}  :  \ottsym{(}   S  \circ  S_{{\mathrm{0}}}   \ottsym{)}  \ottsym{(}  e'  \ottsym{)}  \dashv  \ottsym{(}   S  \circ  S_{{\mathrm{0}}}   \ottsym{)}  \ottsym{(}  \Gamma'  \ottsym{)}  \rangle $
  by induction on $ e   \sqsubseteq _{ S_{{\mathrm{0}}} }  e' $.

  Most cases are straightforward; Case \rn{IP\_App} follows from Lemma~\ref{lem:type_prec_consistent_ground1}.
  \AI{Case for let should be checked.}
\end{proof}

\begin{lemmaA}
  \label{thm:uprec-PT-prec}
  If
  $S'  \ottsym{(}  \Gamma  \ottsym{)}  \vdash  S'  \ottsym{(}  e  \ottsym{)}  \ottsym{:}  \ottnt{U'}$, then
  there exist some $S$ and $S''$ and $\ottnt{U}$ such that
  $PT(\Gamma, e) = (S, \ottnt{U})$ and
  $S  \ottsym{(}  \Gamma  \ottsym{)}  \vdash  S  \ottsym{(}  e  \ottsym{)}  \ottsym{:}  \ottnt{U}$ and
  $ \langle  S'  \ottsym{(}  \Gamma  \ottsym{)}   \vdash   S'  \ottsym{(}  e  \ottsym{)}  :  \ottnt{U'} 
                 \sqsubseteq _{ S'' }  \ottnt{U}  :  S  \ottsym{(}  e  \ottsym{)}  \dashv  S  \ottsym{(}  \Gamma  \ottsym{)}  \rangle $.
\end{lemmaA}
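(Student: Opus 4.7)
The plan is to reduce this to the principal type property of the type inference algorithm $PT$, combined with a lemma that says ``applying a substitution makes a term more precise.''

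First, since $S'(\Gamma) \vdash S'(e) : U'$ witnesses that $e$ is typeable under some substitution, by Garcia and Cimini's principal type property assumed in the preceding definition, $PT(\Gamma, e)$ is defined and yields some $(S, U)$ with $S(\Gamma) \vdash S(e) : U$. Moreover, by principality, there exists a substitution $S''$ such that $S' = S'' \circ S$, and in particular $U' = S''(U)$, $S'(\Gamma) = S''(S(\Gamma))$, and $S'(e) = S''(S(e))$.

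Given these equalities, the remaining obligation is to establish the precision judgment
\[
  \langle S''(S(\Gamma)) \vdash S''(S(e)) : S''(U) \sqsubseteq_{S''} U : S(e) \dashv S(\Gamma) \rangle.
\]
I would prove this via the general auxiliary lemma: for any $\Gamma \vdash e : U$ and any substitution $T$, one has $\langle T(\Gamma) \vdash T(e) : T(U) \sqsubseteq_{T} U : e \dashv \Gamma \rangle$ (instantiating with $T = S''$ and with the well-typed term $S(e)$ under $S(\Gamma)$). This auxiliary lemma is proved by induction on the typing derivation $\Gamma \vdash e : U$: the variable and constant cases use \rn{P\_VarP}/\rn{P\_Const}; the operator, abstraction, and application cases use the corresponding compatibility precision rules combined with \rn{P\_TyVar} (which provides $T(X) \sqsubseteq_{T} X$) and \rn{P\_IdBase}/\rn{P\_Arrow} for the sub-type components; at base types we invoke the fact that consistency is preserved under substitution. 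At each constructor, the LHS-context $T(\Gamma)$ and LHS-type $T(U)$ arise by commuting substitution past the syntactic constructor, matching exactly how \rn{P\_App}, \rn{P\_Abs}, etc., align the two sides.

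The main obstacle is the \rn{IT\_LetP} case. There, $PT$ must produce a polymorphic type scheme $\forall \overrightarrow{X_i}. U_1$ for the bound value, and we need to exhibit a sequence of type arguments that make the typing on the more precise side $(S'(e))$ arise as a substitution instance of the less precise side. The bound variables $\overrightarrow{X_i}$ generated by principal inference must be chosen fresh w.r.t.\ $S''$, so the side conditions on captures in \rn{P\_LetP} (namely $\overrightarrow{X_i} \cap \mathit{ftv}(S''(X)) = \emptyset$ for every $X \in \mathit{dom}(S'')$, and the disjointness of $\overrightarrow{X_i}$ with $\mathit{ftv}(\Gamma)$/$\mathit{ftv}(\Gamma')$) are satisfiable; this will require $\alpha$-renaming the generalized variables to be fresh and then invoking Lemmas~\ref{lem:prec-trans1} and \ref{lem:prec-prop2} to absorb the instantiation substitution $[\overrightarrow{X_j'} := \overrightarrow{T_j'}]$ that witnesses how $S'(e)$'s type arguments refine $S(e)$'s. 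Once the let case is handled, the overall theorem follows immediately by instantiating the auxiliary lemma with the principal typing and the witness $S''$ from principality.
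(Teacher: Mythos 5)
Your proposal matches the paper's proof in its first move: both invoke the principal-type (completeness) property of $PT$ to obtain $S$, $\ottnt{U}$, and $S''$ with $S' =  S''  \circ  S $, $\ottnt{U'} = S''  \ottsym{(}  \ottnt{U}  \ottsym{)}$, and $S  \ottsym{(}  \Gamma  \ottsym{)}  \vdash  S  \ottsym{(}  e  \ottsym{)}  \ottsym{:}  \ottnt{U}$. Where you diverge is in discharging the precision judgment. The paper does this in two prepackaged steps: reflexivity of precision gives $ \langle  S'  \ottsym{(}  \Gamma  \ottsym{)}   \vdash   S'  \ottsym{(}  e  \ottsym{)}  :  \ottnt{U'}
                 \sqsubseteq _{ [  ] }  \ottnt{U'}  :  S'  \ottsym{(}  e  \ottsym{)}  \dashv  S'  \ottsym{(}  \Gamma  \ottsym{)}  \rangle $, and then Lemma~\ref{lem:prec-prop2} pulls the substitution $S''$ out of the right-hand side to yield the judgment at $ \sqsubseteq _{ S'' }$ against $\ottnt{U}$, $S  \ottsym{(}  e  \ottsym{)}$, $S  \ottsym{(}  \Gamma  \ottsym{)}$. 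Your auxiliary lemma --- $ \langle  T  \ottsym{(}  \Gamma  \ottsym{)}   \vdash   T  \ottsym{(}  e  \ottsym{)}  :  T  \ottsym{(}  \ottnt{U}  \ottsym{)}
                 \sqsubseteq _{ T }  \ottnt{U}  :  e  \dashv  \Gamma  \rangle $ for well-typed $e$ --- is exactly the composition of those two facts, and you propose to prove it by a fresh induction on the typing derivation. That works, but it forces you to redo the case analysis (including the delicate \rnp{ITP\_LetP} case with its freshness and non-capture side conditions) that the paper has already localized inside reflexivity and Lemma~\ref{lem:prec-prop2}; since Lemma~\ref{lem:prec-prop2} is already available at this point, reusing it is the more economical route. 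One caveat applies to your compatibility argument (and, to be fair, to the paper's bare appeal to reflexivity as well): \rnp{P\_TyVar} derives $ T  \ottsym{(}  \ottmv{X}  \ottsym{)}   \sqsubseteq _{ T }  \ottmv{X} $ only under the premise $\ottmv{X} \, \in \, \textit{dom} \, \ottsym{(}  T  \ottsym{)}$, so for type variables of $\ottnt{U}$ outside $\textit{dom} \, \ottsym{(}  S''  \ottsym{)}$ you must either extend $S''$ to act as the identity on them or argue separately that $ \ottmv{X}   \sqsubseteq _{ S'' }  \ottmv{X} $ is derivable; your sketch silently assumes this.
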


\begin{proof}
  By completeness of $PT$, there exists $S$, $S''$ and $U$ such that
  $S  \ottsym{(}  \Gamma  \ottsym{)}  \vdash  S  \ottsym{(}  e  \ottsym{)}  \ottsym{:}  \ottnt{U}$ and $S' =  S''  \circ  S $ and $\ottnt{U'} = S''  \ottsym{(}  \ottnt{U}  \ottsym{)}$.
  Since precision is reflexive, we have
  $ \langle  S'  \ottsym{(}  \Gamma  \ottsym{)}   \vdash   S'  \ottsym{(}  e  \ottsym{)}  :  \ottnt{U'} 
                 \sqsubseteq _{ [  ] }  \ottnt{U'}  :  S'  \ottsym{(}  e  \ottsym{)}  \dashv  S'  \ottsym{(}  \Gamma  \ottsym{)}  \rangle $.
  Lemma~\ref{lem:prec-prop2} finishes the proof.
\end{proof}

\begin{lemmaA}
  \label{thm:cast-insertion-preserves-precision}
  If
  $ \langle  \Gamma   \vdash   e  :  \ottnt{U} 
                 \sqsubseteq _{ S_{{\mathrm{0}}} }  \ottnt{U'}  :  e'  \dashv  \Gamma'  \rangle $ and $\Gamma  \rightsquigarrow  \Gamma_{{\mathrm{0}}}$ and $\Gamma'  \rightsquigarrow  \Gamma'_{{\mathrm{0}}}$, then
  there exist $\ottnt{f}$ and $\ottnt{f'}$ such that
  $\Gamma_{{\mathrm{0}}}  \vdash  e  \rightsquigarrow  \ottnt{f}  \ottsym{:}  \ottnt{U}$ and
  $\Gamma'_{{\mathrm{0}}}  \vdash  e'  \rightsquigarrow  \ottnt{f'}  \ottsym{:}  \ottnt{U'}$ and
  $ \langle  \Gamma_{{\mathrm{0}}}   \vdash   \ottnt{f}  :  \ottnt{U}   \sqsubseteq _{ S_{{\mathrm{0}}} }  \ottnt{U'}  :  \ottnt{f'}  \dashv  \Gamma'_{{\mathrm{0}}}  \rangle $.
\end{lemmaA}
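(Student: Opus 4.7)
The plan is to proceed by induction on the derivation of the typed term precision judgment $ \langle  \Gamma   \vdash   e  :  \ottnt{U}   \sqsubseteq _{ S_{{\mathrm{0}}} }  \ottnt{U'}  :  e'  \dashv  \Gamma'  \rangle $. In each case, the strategy is the same: apply the IH to each subderivation to obtain cast-inserted $\lambdaRTI$ terms on both sides that are term-precision-related, then rebuild the cast-insertion derivations for $e$ and $e'$ using the appropriate \rnp{CI\_*} rules, and finally assemble the resulting $\lambdaRTI$ term-precision derivation using the matching \rnp{P\_*} rule, wrapping casts with \rnp{P\_Cast}, \rnp{P\_CastL}, or \rnp{P\_CastR} as needed.

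Most syntax-directed cases are straightforward. For \rnp{ITP\_Const} and \rnp{ITP\_VarP}, we directly match \rnp{CI\_Const}/\rnp{P\_Const} and \rnp{CI\_VarP}/\rnp{P\_VarP}; the latter uses the fact that $\Gamma \rightsquigarrow \Gamma_0$ and $\Gamma' \rightsquigarrow \Gamma'_0$ preserve the type schemes of let-bound variables up to extra generalized variables (instantiated with $ \nu $). For \rnp{ITP\_Op}, cast insertion injects a cast $\ottnt{U_{\ottmv{i}}} \Rightarrow \iota_{\ottmv{i}}$ on each operand of both sides; since $\ottnt{U_{\ottmv{i}}}  \sim  \iota_{\ottmv{i}}$ and $\ottnt{U'_{\ottmv{i}}}  \sim  \iota_{\ottmv{i}}$ hold on both sides and $ \iota_{\ottmv{i}}   \sqsubseteq _{ S_{{\mathrm{0}}} }  \iota_{\ottmv{i}} $ trivially, we apply \rnp{P\_Cast} to each operand and then \rnp{P\_Op}. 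The three abstraction cases---\rnp{ITP\_AbsI}, \rnp{ITP\_AbsIE}, \rnp{ITP\_AbsEI}, \rnp{ITP\_AbsE}---are handled by invoking \rnp{CI\_AbsI} or \rnp{CI\_AbsE} on each side as appropriate (reading off the inferred types $\ottnt{T},\ottnt{T'}$ from the premises); the required $ \ottnt{T}   \sqsubseteq _{ S_{{\mathrm{0}}} }  \ottnt{T'} $ or $ \ottnt{T}   \sqsubseteq _{ S_{{\mathrm{0}}} }  \star $ is given directly by the rule's premises (or by \rnp{P\_Dyn}), and we close with \rnp{P\_Abs}. For \rnp{ITP\_App}, cast insertion inserts a function cast $\ottnt{U_{{\mathrm{1}}}} \Rightarrow \ottnt{U_{{\mathrm{11}}}}  \!\rightarrow\!  \ottnt{U_{{\mathrm{12}}}}$ and an argument cast $\ottnt{U_{{\mathrm{2}}}} \Rightarrow \ottnt{U_{{\mathrm{11}}}}$ on each side; the premises give us $ \ottnt{U_{{\mathrm{1}}}}   \sqsubseteq _{ S_{{\mathrm{0}}} }  \ottnt{U'_{{\mathrm{1}}}} $ and $ \ottnt{U_{{\mathrm{2}}}}   \sqsubseteq _{ S_{{\mathrm{0}}} }  \ottnt{U'_{{\mathrm{2}}}} $, and by congruence of precision under $ \triangleright $ we get $ \ottnt{U_{{\mathrm{11}}}}  \!\rightarrow\!  \ottnt{U_{{\mathrm{12}}}}   \sqsubseteq _{ S_{{\mathrm{0}}} }  \ottnt{U'_{{\mathrm{11}}}}  \!\rightarrow\!  \ottnt{U'_{{\mathrm{12}}}} $, so that \rnp{P\_Cast} applies to each cast and \rnp{P\_App} finishes the case.

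The genuinely delicate case is \rnp{ITP\_LetP}. The cast-inserting rule \rnp{CI\_LetP} does not just generalize $ \overrightarrow{ \ottmv{X_{\ottmv{i}}} }   \ottsym{=}  \textit{ftv} \, \ottsym{(}  \ottnt{U_{{\mathrm{1}}}}  \ottsym{)}  \setminus  \ottsym{(}  \textit{ftv} \, \ottsym{(}  \Gamma  \ottsym{)}  \cup  \textit{ftv} \, \ottsym{(}  v_{{\mathrm{1}}}  \ottsym{)}  \ottsym{)}$ but also the auxiliary set $ \overrightarrow{ \ottmv{Y_{\ottmv{j}}} }   \ottsym{=}  \textit{ftv} \, \ottsym{(}  w_{{\mathrm{1}}}  \ottsym{)}  \setminus  \ottsym{(}  \textit{ftv} \, \ottsym{(}  \Gamma  \ottsym{)}  \cup  \textit{ftv} \, \ottsym{(}  \ottnt{U_{{\mathrm{1}}}}  \ottsym{)}  \cup  \textit{ftv} \, \ottsym{(}  v_{{\mathrm{1}}}  \ottsym{)}  \ottsym{)}$ introduced by cast insertion on the value. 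Since the more precise side typically has fewer free type variables in its value than the less precise side, the two sequences $ \overrightarrow{ \ottmv{Y_{\ottmv{j}}} } $ and $ \overrightarrow{ \ottmv{Y'_{\ottmv{k}}} } $ need not agree; the premise of \rnp{P\_LetP} accommodates exactly this by allowing the $ \overrightarrow{ \ottmv{X'_{\ottmv{j}}} } $ on the less precise side to be instantiated with some $ \overrightarrow{ \ottnt{T'_{\ottmv{j}}} } $ under an extended substitution $  [   \overrightarrow{ \ottmv{X'_{\ottmv{j}}} }   :=   \overrightarrow{ \ottnt{T'_{\ottmv{j}}} }   ]  \uplus  S_{{\mathrm{0}}} $. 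The plan here is to choose these witnesses by matching each less-precise binder with either the corresponding more-precise binder (when it is present) or with $ \nu $-style placeholders for those generalized only on the less precise side, and to verify the non-capture side condition using the freshness conditions coming from \rnp{CI\_LetP}. The main obstacle is precisely this bookkeeping of generalized variables in the let case---showing that the witness substitution extending $S_{{\mathrm{0}}}$ yields a derivable value-level precision judgment and satisfies the disjointness side conditions of \rnp{P\_LetP}. Once this is done, the IH on the body (with the extended environments) and \rnp{P\_LetP} close the case.
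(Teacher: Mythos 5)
Your proposal is correct and follows essentially the same route as the paper, which proves this lemma by induction on the derivation of $ \langle  \Gamma   \vdash   e  :  \ottnt{U}
                 \sqsubseteq _{ S_{{\mathrm{0}}} }  \ottnt{U'}  :  e'  \dashv  \Gamma'  \rangle $ (and in fact records no further detail, flagging only that the cases must be checked carefully). Your elaboration of the syntax-directed cases is sound, and you correctly isolate the \rnp{ITP\_LetP} case --- reconciling the extra variables $ \overrightarrow{ \ottmv{Y_{\ottmv{j}}} } $ generalized by \rnp{CI\_LetP} with the witness substitution demanded by \rnp{P\_LetP} --- as the one place where the bookkeeping still needs to be carried out in full.
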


\begin{proof}
  By induction on $ \langle  \Gamma   \vdash   e  :  \ottnt{U} 
                 \sqsubseteq _{ S_{{\mathrm{0}}} }  \ottnt{U'}  :  e'  \dashv  \Gamma'  \rangle $.
  \AI{This has to be checked carefully.}
\end{proof}

\begin{lemmaA}
  \label{thm:ITGL-staticGG}
  If
  $ \Gamma   \sqsubseteq _{ S_{{\mathrm{0}}} }  \Gamma' $ and
  $ e   \sqsubseteq _{ S_{{\mathrm{0}}} }  e' $ and
  $PT(\Gamma, e) = (S, \ottnt{U})$, then
  there exist $S'$, $S''$, $\ottnt{U'}$, $\ottnt{f}$ and $\ottnt{f'}$
  such that
  $PT(\Gamma', e') = (S', \ottnt{U'})$ and
  $S  \ottsym{(}  \Gamma  \ottsym{)}  \vdash  S  \ottsym{(}  e  \ottsym{)}  \rightsquigarrow  \ottnt{f}  \ottsym{:}  \ottnt{U}$ and 
  $S'  \ottsym{(}  \Gamma'  \ottsym{)}  \vdash  S'  \ottsym{(}  e'  \ottsym{)}  \rightsquigarrow  \ottnt{f'}  \ottsym{:}  \ottnt{U'}$ and
  $ \langle  S  \ottsym{(}  \Gamma  \ottsym{)}   \vdash   \ottnt{f}  :  \ottnt{U}   \sqsubseteq _{ S'' }  \ottnt{U'}  :  \ottnt{f'}  \dashv  S'  \ottsym{(}  \Gamma'  \ottsym{)}  \rangle $.
\end{lemmaA}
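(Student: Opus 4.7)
The plan is to combine the two preceding lemmas on precision (Lemmas \ref{thm:uprec-prec} and \ref{thm:uprec-PT-prec}) together with transitivity (Lemma \ref{lem:prec-trans2}) and the cast-insertion preservation lemma (Lemma \ref{thm:cast-insertion-preserves-precision}). First, I would unpack the hypothesis $PT(\Gamma, e) = (S, \ottnt{U})$ using soundness of principal-type inference: this gives $S  \ottsym{(}  \Gamma  \ottsym{)}  \vdash  S  \ottsym{(}  e  \ottsym{)}  \ottsym{:}  \ottnt{U}$. Next, I apply Lemma \ref{thm:uprec-prec} to the hypotheses $ \Gamma   \sqsubseteq _{ S_{{\mathrm{0}}} }  \Gamma' $, $ e   \sqsubseteq _{ S_{{\mathrm{0}}} }  e' $, together with this typing, obtaining some $\hat S$ and $\hat U$ with $\hat S  \ottsym{(}  \Gamma'  \ottsym{)}  \vdash  \hat S  \ottsym{(}  e'  \ottsym{)}  \ottsym{:}  \hat U$ and a typed precision $ \langle  S  \ottsym{(}  \Gamma  \ottsym{)}   \vdash   S  \ottsym{(}  e  \ottsym{)}  :  \ottnt{U}
                 \sqsubseteq _{ [  ] }  \hat U  :  \hat S  \ottsym{(}  e'  \ottsym{)}  \dashv  \hat S  \ottsym{(}  \Gamma'  \ottsym{)}  \rangle $.

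The second step is to apply Lemma \ref{thm:uprec-PT-prec} to this typing $\hat S  \ottsym{(}  \Gamma'  \ottsym{)}  \vdash  \hat S  \ottsym{(}  e'  \ottsym{)}  \ottsym{:}  \hat U$. This produces the desired $S'$ and $\ottnt{U'}$ with $PT(\Gamma', e') = (S', \ottnt{U'})$, the typing $S'  \ottsym{(}  \Gamma'  \ottsym{)}  \vdash  S'  \ottsym{(}  e'  \ottsym{)}  \ottsym{:}  \ottnt{U'}$, and a precision $ \langle  \hat S  \ottsym{(}  \Gamma'  \ottsym{)}   \vdash   \hat S  \ottsym{(}  e'  \ottsym{)}  :  \hat U
                 \sqsubseteq _{ S'' }  \ottnt{U'}  :  S'  \ottsym{(}  e'  \ottsym{)}  \dashv  S'  \ottsym{(}  \Gamma'  \ottsym{)}  \rangle $ for some $S''$. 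Composing the two precision judgments by Lemma \ref{lem:prec-trans2} (transitivity) yields
\[
 \langle  S  \ottsym{(}  \Gamma  \ottsym{)}   \vdash   S  \ottsym{(}  e  \ottsym{)}  :  \ottnt{U}
                 \sqsubseteq _{ S'' }  \ottnt{U'}  :  S'  \ottsym{(}  e'  \ottsym{)}  \dashv  S'  \ottsym{(}  \Gamma'  \ottsym{)}  \rangle
\]
since $[\,] \circ S'' = S''$.

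Finally, to obtain the cast-inserted terms, I would invoke Lemma \ref{thm:cast-insertion-preserves-precision} on this combined precision derivation. One needs to supply $\Gamma_0 = S(\Gamma)$ and $\Gamma'_0 = S'(\Gamma')$ with derivations $S  \ottsym{(}  \Gamma  \ottsym{)}  \rightsquigarrow  S  \ottsym{(}  \Gamma  \ottsym{)}$ and $S'  \ottsym{(}  \Gamma'  \ottsym{)}  \rightsquigarrow  S'  \ottsym{(}  \Gamma'  \ottsym{)}$, which hold by taking every $ \overrightarrow{ \ottmv{Y_{\ottmv{j}}} } $ empty in \rnp{CI\_ExtendVar} together with \rnp{CI\_Empty}. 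The lemma then delivers the required $\ottnt{f}$, $\ottnt{f'}$, the translations $S  \ottsym{(}  \Gamma  \ottsym{)}  \vdash  S  \ottsym{(}  e  \ottsym{)}  \rightsquigarrow  \ottnt{f}  \ottsym{:}  \ottnt{U}$ and $S'  \ottsym{(}  \Gamma'  \ottsym{)}  \vdash  S'  \ottsym{(}  e'  \ottsym{)}  \rightsquigarrow  \ottnt{f'}  \ottsym{:}  \ottnt{U'}$, and the term-level precision $ \langle  S  \ottsym{(}  \Gamma  \ottsym{)}   \vdash   \ottnt{f}  :  \ottnt{U}   \sqsubseteq _{ S'' }  \ottnt{U'}  :  \ottnt{f'}  \dashv  S'  \ottsym{(}  \Gamma'  \ottsym{)}  \rangle $, completing the proof.

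The main obstacle I expect is just plumbing: the application of Lemma \ref{thm:uprec-prec} delivers a substitution acting on the imprecise side that need not literally equal $S'$ produced by principal typing, so one must use the second lemma (\ref{thm:uprec-PT-prec}) to pass through the principal type and rely on transitivity. Verifying that transitivity can be applied here—i.e., that the intermediate term $\hat S  \ottsym{(}  e'  \ottsym{)}$ is indeed a substitution instance compatible with both derivations—hinges on Lemma \ref{lem:prec-prop2} (which is used inside the proof of Lemma \ref{thm:uprec-PT-prec}). Once these plumbing concerns are dispatched, the remaining work is entirely mechanical.
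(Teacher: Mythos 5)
Your proposal follows exactly the same route as the paper's own proof: soundness of $PT$, then Lemma~\ref{thm:uprec-prec}, then Lemma~\ref{thm:uprec-PT-prec}, composition via the transitivity Lemma~\ref{lem:prec-trans2}, and finally Lemma~\ref{thm:cast-insertion-preserves-precision} with the reflexive $\Gamma  \rightsquigarrow  \Gamma$ derivations. The only differences are notational (your $\hat S$, $\hat U$ play the role of the paper's intermediate $S'$, $\ottnt{U'}$), so the argument is correct as written.
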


\begin{proof}
  By soundness of $PT$, $S  \ottsym{(}  \Gamma  \ottsym{)}  \vdash  S  \ottsym{(}  e  \ottsym{)}  \ottsym{:}  \ottnt{U}$.
  By Lemma~\ref{thm:uprec-prec}, there exist $S'$, $\ottnt{U'}$ such that
  $S'  \ottsym{(}  \Gamma'  \ottsym{)}  \vdash  S'  \ottsym{(}  e'  \ottsym{)}  \ottsym{:}  \ottnt{U'}$ and
  $$ \langle  S  \ottsym{(}  \Gamma  \ottsym{)}   \vdash   S  \ottsym{(}  e  \ottsym{)}  :  \ottnt{U} 
                 \sqsubseteq _{ [  ] }  \ottnt{U'}  :  S'  \ottsym{(}  e'  \ottsym{)}  \dashv  S'  \ottsym{(}  \Gamma'  \ottsym{)}  \rangle .$$
  Since $S'  \ottsym{(}  \Gamma'  \ottsym{)}  \vdash  S'  \ottsym{(}  e'  \ottsym{)}  \ottsym{:}  \ottnt{U'}$, 
  by Lemma~\ref{thm:uprec-PT-prec}, there exist $S''$, $S'''$, and $\ottnt{U''}$
  such that
  $PT(\Gamma', e') = (S'', \ottnt{U''})$ and
  $S''  \ottsym{(}  \Gamma'  \ottsym{)}  \vdash  S''  \ottsym{(}  e'  \ottsym{)}  \ottsym{:}  \ottnt{U''}$ and
  $$ \langle  S'  \ottsym{(}  \Gamma'  \ottsym{)}   \vdash   S'  \ottsym{(}  e'  \ottsym{)}  :  \ottnt{U'} 
                 \sqsubseteq _{ S''' }  \ottnt{U''}  :  S''  \ottsym{(}  e'  \ottsym{)}  \dashv  S''  \ottsym{(}  \Gamma'  \ottsym{)}  \rangle .$$
  By Lemma~\ref{lem:prec-trans2},
  $ \langle  S  \ottsym{(}  \Gamma  \ottsym{)}   \vdash   S  \ottsym{(}  e  \ottsym{)}  :  \ottnt{U} 
                 \sqsubseteq _{ S''' }  \ottnt{U''}  :  S''  \ottsym{(}  e'  \ottsym{)}  \dashv  S''  \ottsym{(}  \Gamma'  \ottsym{)}  \rangle $.
  Finally, by Lemma~\ref{thm:cast-insertion-preserves-precision},
  there exist $\ottnt{f}$ and $\ottnt{f'}$ such that
  $S  \ottsym{(}  \Gamma  \ottsym{)}  \vdash  S  \ottsym{(}  e  \ottsym{)}  \rightsquigarrow  \ottnt{f}  \ottsym{:}  \ottnt{U}$ and
  $S''  \ottsym{(}  \Gamma'  \ottsym{)}  \vdash  S''  \ottsym{(}  e'  \ottsym{)}  \rightsquigarrow  \ottnt{f'}  \ottsym{:}  \ottnt{U'}$ and
  $ \langle  S  \ottsym{(}  \Gamma  \ottsym{)}   \vdash   \ottnt{f}  :  \ottnt{U}   \sqsubseteq _{ S''' }  \ottnt{U'}  :  \ottnt{f'}  \dashv  S''  \ottsym{(}  \Gamma'  \ottsym{)}  \rangle $. \qedhere
\end{proof}



\ifrestate
\thmStaticGG*
\else
\begin{theoremA}[name=Static Gradual Guarantee,restate=thmStaticGG]
 If $ e   \sqsubseteq _{ S_{{\mathrm{0}}} }  e' $ and $PT( \emptyset , e) = (S_{{\mathrm{1}}},\ottnt{U})$,
 then $PT( \emptyset , e') = (S'_{{\mathrm{1}}}, \ottnt{U'})$ and $ \ottnt{U}   \sqsubseteq _{ S_{{\mathrm{2}}} }  \ottnt{U'} $
 for some $S_{{\mathrm{2}}}$, $S'_{{\mathrm{1}}}$, and $\ottnt{U'}$.
\end{theoremA}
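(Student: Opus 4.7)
The plan is to obtain the theorem as a corollary of Lemma~\ref{thm:ITGL-staticGG}, specialized to the empty type environment. Observe that $ \emptyset \sqsubseteq_{S_0} \emptyset $ holds trivially by \rnp{P\_Empty}, so all hypotheses of that lemma are met. Applying it yields $PT(\emptyset, e') = (S'_1, U')$ together with cast-inserted terms $f, f'$ and a substitution $S_2$ satisfying the term-precision judgment $\langle S_1(\emptyset) \vdash f : U \sqsubseteq_{S_2} U' : f' \dashv S'_1(\emptyset) \rangle$. Inspecting the term-precision rules, any such derivation carries along a type-precision component, giving $U \sqsubseteq_{S_2} U'$, which is exactly the conclusion to be established.

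The real work is thus contained in Lemma~\ref{thm:ITGL-staticGG}, whose proof I would structure as a four-step pipeline. First, soundness of $PT$ on the precise side yields $S_1(\emptyset) \vdash S_1(e) : U$. Second, I would apply Lemma~\ref{thm:uprec-prec}, which propagates term precision through typing: from $ e \sqsubseteq_{S_0} e' $ and well-typedness of $S_1(e)$, it produces some $S'$ and $U''$ such that $S'(\emptyset) \vdash S'(e') : U''$ and $\langle S_1(\emptyset) \vdash S_1(e) : U \sqsubseteq_{[\,]} U'' : S'(e') \dashv S'(\emptyset) \rangle$. Third, since $\emptyset \vdash S'(e') : U''$, Lemma~\ref{thm:uprec-PT-prec} (the completeness-with-precision property of $PT$) recovers principal-type information $PT(\emptyset, e') = (S'_1, U')$ and a witness $S''$ of $ U'' \sqsubseteq_{S''} U' $. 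Fourth, I combine the two precision chains via transitivity (Lemma~\ref{lem:prec-trans1}), yielding $ U \sqsubseteq_{S''} U' $, after which Lemma~\ref{thm:cast-insertion-preserves-precision} lifts the result through cast insertion to give the cast-inserted form required by Lemma~\ref{thm:ITGL-staticGG}. The desired theorem then drops out by taking $S_2 = S''$ and discarding the cast-insertion components.

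The hardest step is Lemma~\ref{thm:uprec-prec}: the induction on $ e \sqsubseteq_{S_0} e' $ must simultaneously construct a fresh substitution for $e'$, a less-precise type $U'$, and a term-precision derivation relating the two typings. The application case is delicate because it depends on Lemma~\ref{lem:type_prec_consistent_ground1} to transport the consistency premise of \rnp{IT\_App} (respectively, the matching relation $\triangleright$) across the precision relation. The let-polymorphic case \rnp{IP\_LetP} is subtler still: generalization on the less-precise side may introduce additional quantified variables $\overrightarrow{Y_j}$ not present on the more-precise side, and substitutions chosen for type applications on the two sides must be kept compatible under $S_0$. Managing these extra variables, and ensuring the resulting substitution $S'$ does not accidentally capture or conflict with fresh generalized variables, is the main technical obstacle; everything else reduces to bookkeeping and routine appeals to reflexivity and compatibility of precision.
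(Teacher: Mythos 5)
Your proposal matches the paper's proof: the theorem is obtained as the empty-environment instance of Lemma~\ref{thm:ITGL-staticGG}, whose proof is exactly your pipeline (soundness of $PT$, then Lemma~\ref{thm:uprec-prec}, then Lemma~\ref{thm:uprec-PT-prec}, then transitivity, then Lemma~\ref{thm:cast-insertion-preserves-precision}), with the final type-precision conclusion read off from the term-precision judgment. The only nit is that the transitivity step must be applied to the \emph{term}-precision judgments (Lemma~\ref{lem:prec-trans2}) rather than only to type precision (Lemma~\ref{lem:prec-trans1}), since the subsequent cast-insertion lemma consumes a term-precision derivation; this does not affect the correctness of your argument.
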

\fi
\begin{proof}
  Follows from Lemma~\ref{thm:ITGL-staticGG} as a special case.  (It is easy to see
  that $ \langle  S  \ottsym{(}  \Gamma  \ottsym{)}   \vdash   \ottnt{f}  :  \ottnt{U}   \sqsubseteq _{ S'' }  \ottnt{U'}  :  \ottnt{f'}  \dashv  S'  \ottsym{(}  \Gamma'  \ottsym{)}  \rangle $ implies
  $ \ottnt{U}   \sqsubseteq _{ S'' }  \ottnt{U'} $.)
\end{proof}

\subsubsection{Dynamic Gradual Guarantee in $\lambdaRTI$}

\begin{lemmaA} \label{lem:term_prec_to_type_prec}
  If $ \langle  \Gamma   \vdash   f  :  \ottnt{U}   \sqsubseteq _{ S }  \ottnt{U'}  :  f'  \dashv  \Gamma'  \rangle $,
  then $ \ottnt{U}   \sqsubseteq _{ S }  \ottnt{U'} $.
\end{lemmaA}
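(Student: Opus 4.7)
The plan is to prove the lemma by straightforward induction on the derivation of $ \langle  \Gamma   \vdash   f  :  \ottnt{U}   \sqsubseteq _{ S }  \ottnt{U'}  :  f'  \dashv  \Gamma'  \rangle $, doing a case analysis on the last term-precision rule applied. In each case, either the required type precision is already a premise of the rule, is derivable directly from the induction hypothesis, or follows from one of the basic type-precision rules \rn{P\_IdBase} or \rn{P\_Arrow}.

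The easy cases are \rn{P\_Const} and \rn{P\_Op}, where both sides of the conclusion carry the same base type (either $\mathit{ty}(c)$ or $\iota$), so $ \ottnt{U}   \sqsubseteq _{ S }  \ottnt{U'} $ follows immediately from \rn{P\_IdBase}. For \rn{P\_VarP}, \rn{P\_Cast}, \rn{P\_CastL}, \rn{P\_CastR}, and \rn{P\_Blame}, the type-precision judgment $ \ottnt{U}   \sqsubseteq _{ S }  \ottnt{U'} $ (on the types appearing in the conclusion) is literally one of the premises, so there is nothing to do. For \rn{P\_LetP}, the IH on the body precision judgment already gives the needed $ \ottnt{U_{{\mathrm{2}}}}   \sqsubseteq _{ S }  \ottnt{U'_{{\mathrm{2}}}} $.

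The two cases that require a bit more work are \rn{P\_Abs} and \rn{P\_App}. For \rn{P\_Abs}, the IH on the body gives $ \ottnt{U_{{\mathrm{2}}}}   \sqsubseteq _{ S }  \ottnt{U'_{{\mathrm{2}}}} $, and combining with the premise $ \ottnt{U_{{\mathrm{1}}}}   \sqsubseteq _{ S }  \ottnt{U'_{{\mathrm{1}}}} $ via \rn{P\_Arrow} yields $ \ottnt{U_{{\mathrm{1}}}}  \!\rightarrow\!  \ottnt{U_{{\mathrm{2}}}}   \sqsubseteq _{ S }  \ottnt{U'_{{\mathrm{1}}}}  \!\rightarrow\!  \ottnt{U'_{{\mathrm{2}}}} $. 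For \rn{P\_App}, the IH on the function subterm gives $ \ottnt{U_{{\mathrm{1}}}}  \!\rightarrow\!  \ottnt{U_{{\mathrm{2}}}}   \sqsubseteq _{ S }  \ottnt{U'_{{\mathrm{1}}}}  \!\rightarrow\!  \ottnt{U'_{{\mathrm{2}}}} $; inverting this derivation (the only way to derive precision between two function types is \rn{P\_Arrow}, since the right-hand side is not $ \star $) produces $ \ottnt{U_{{\mathrm{2}}}}   \sqsubseteq _{ S }  \ottnt{U'_{{\mathrm{2}}}} $, which is exactly what is needed for the conclusion of the application case.

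The only mild subtlety, and what I would flag as the hardest step, is the inversion in the \rn{P\_App} case: one must be sure that $ \ottnt{U_{{\mathrm{1}}}}  \!\rightarrow\!  \ottnt{U_{{\mathrm{2}}}}   \sqsubseteq _{ S }  \ottnt{U'_{{\mathrm{1}}}}  \!\rightarrow\!  \ottnt{U'_{{\mathrm{2}}}} $ can only be derived by \rn{P\_Arrow}. This is immediate by inspecting the type-precision rules: \rn{P\_IdBase} and \rn{P\_TyVar} require the right-hand side to be a base type or a type variable, and \rn{P\_Dyn} requires $ \star $ on the right, none of which match a function type. So inversion is unambiguous and the case goes through. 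No other case requires nontrivial reasoning.
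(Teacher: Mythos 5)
Your proof is correct and follows exactly the paper's approach, which is a one-line "by induction on the term precision derivation"; you have simply spelled out the cases, all of which check out (including the inversion of \textsc{P\_Arrow} in the application case, which is indeed unambiguous since no other type-precision rule can have a function type on the right-hand side).
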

\begin{proof}
  By induction on the term precision derivation.
\end{proof}

\begin{lemmaA} \label{lem:term_prec_to_typing}
  If $ \langle  \Gamma   \vdash   f  :  \ottnt{U}   \sqsubseteq _{ S }  \ottnt{U'}  :  f'  \dashv  \Gamma'  \rangle $,
  then $\Gamma  \vdash  f  \ottsym{:}  \ottnt{U}$ and $\Gamma'  \vdash  f'  \ottsym{:}  \ottnt{U'}$.
\end{lemmaA}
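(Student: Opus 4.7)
The plan is to prove this by straightforward induction on the derivation of $ \langle  \Gamma   \vdash   f  :  \ottnt{U}   \sqsubseteq _{ S }  \ottnt{U'}  :  f'  \dashv  \Gamma'  \rangle $, case-analysing on the last precision rule applied. For each case we produce typing derivations for the left-hand term $f$ with type $\ottnt{U}$ under $\Gamma$ and the right-hand term $f'$ with type $\ottnt{U'}$ under $\Gamma'$, using the IH on subderivations and the corresponding $\lambdaRTI$ typing rule.

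The compatibility cases are direct. For \rn{P\_Const} invoke \rn{T\_Const} on both sides. For \rn{P\_Op}, \rn{P\_Abs}, \rn{P\_App}, and \rn{P\_LetP}, apply the IH to each subderivation to obtain typing judgments for the component terms in both $\Gamma$ and $\Gamma'$, then close with \rn{T\_Op}, \rn{T\_Abs}, \rn{T\_App}, or \rn{T\_LetP}; the freshness side conditions on bound type variables carried in \rn{P\_LetP} match those required by \rn{T\_LetP}. For \rn{P\_VarP}, the rule itself contains the variable-lookup premises $\ottmv{x}  \ottsym{:}  \forall \,  \overrightarrow{ \ottmv{X_{\ottmv{i}}} }   \ottsym{.}  \ottnt{U} \, \in \, \Gamma$ and $\ottmv{x}  \ottsym{:}  \forall \,  \overrightarrow{ \ottmv{X'_{\ottmv{j}}} }   \ottsym{.}  \ottnt{U'} \, \in \, \Gamma'$ together with the constraints on the optional type arguments, so \rn{T\_VarP} applies on each side directly. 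For \rn{P\_Blame}, the right-hand typing is a premise of the rule and \rn{T\_Blame} supplies the left-hand typing at any type, in particular $\ottnt{U}$.

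The cast-related cases are where one must be a little careful, but they are still routine. In \rn{P\_Cast}, the premise gives $ \langle  \Gamma   \vdash   f  :  \ottnt{U_{{\mathrm{1}}}}   \sqsubseteq _{ S }  \ottnt{U'_{{\mathrm{1}}}}  :  f'  \dashv  \Gamma'  \rangle $ together with $\ottnt{U_{{\mathrm{1}}}}  \sim  \ottnt{U_{{\mathrm{2}}}}$ and $\ottnt{U'_{{\mathrm{1}}}}  \sim  \ottnt{U'_{{\mathrm{2}}}}$, so the IH gives the two inner typings and \rn{T\_Cast} wraps each side into its cast at type $\ottnt{U_{{\mathrm{2}}}}$ and $\ottnt{U'_{{\mathrm{2}}}}$ respectively. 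For \rn{P\_CastL}, the IH supplies $\Gamma  \vdash  f  \ottsym{:}  \ottnt{U_{{\mathrm{1}}}}$ and $\Gamma'  \vdash  f'  \ottsym{:}  \ottnt{U'}$; combined with $\ottnt{U_{{\mathrm{1}}}}  \sim  \ottnt{U}$, one application of \rn{T\_Cast} yields $\Gamma  \vdash  \ottsym{(}  f  \ottsym{:}   \ottnt{U_{{\mathrm{1}}}} \Rightarrow  \unskip ^ { \ell }  \! \ottnt{U}   \ottsym{)}  \ottsym{:}  \ottnt{U}$, and the $\Gamma'$-side typing is already in hand. Case \rn{P\_CastR} is symmetric.

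I do not anticipate any real obstacle: the term-precision rules were designed so that every premise needed to invoke the corresponding typing rule on each side is already present (as a precision subderivation handed to the IH, or as an explicit consistency/freshness side condition). The only subtlety worth double-checking is the \rn{P\_LetP} case, where one must confirm that the quantified type variables on both sides satisfy the no-capture requirements of \rn{T\_LetP}; this is exactly what the third line of premises in \rn{P\_LetP} asserts.
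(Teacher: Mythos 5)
Your proof is correct and follows exactly the route the paper takes: a straightforward induction on the term-precision derivation, reading off each typing premise from the corresponding precision rule and closing with the matching typing rule. The case analysis you give (including the cast and let cases) fills in the details the paper leaves implicit, and no step fails.
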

\begin{proof}
  By induction on the term precision derivation.
\end{proof}

\begin{lemmaA} \label{lem:type_prec_right_var_equal}
  If $ \ottnt{U_{{\mathrm{1}}}}   \sqsubseteq _{ S }  \ottmv{X} $ and $ \ottnt{U_{{\mathrm{2}}}}   \sqsubseteq _{ S }  \ottmv{X} $,
  then $\ottnt{U_{{\mathrm{1}}}}  \ottsym{=}  \ottnt{U_{{\mathrm{2}}}}$.
\end{lemmaA}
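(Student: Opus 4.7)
The plan is to argue by inversion on the two precision derivations. Looking at the rules that derive $ \ottnt{U}   \sqsubseteq _{ S }  \ottnt{U'} $ (namely \textsc{P\_IdBase}, \textsc{P\_TyVar}, \textsc{P\_Dyn}, \textsc{P\_Arrow}, and \textsc{P\_Scheme}), each one is deterministic in the syntactic shape of the right-hand type $\ottnt{U'}$: \textsc{P\_IdBase} requires $\ottnt{U'}$ to be a base type, \textsc{P\_Dyn} requires $\ottnt{U'} = \star$, \textsc{P\_Arrow} requires $\ottnt{U'}$ to be an arrow, and \textsc{P\_Scheme} requires $\ottnt{U'}$ to be a type scheme. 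Since here the right-hand side is a bare type variable $\ottmv{X}$, the only rule that can have concluded either judgment is \textsc{P\_TyVar}.

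Applying \textsc{P\_TyVar} to invert each derivation, the rule forces the left-hand side to be exactly $S  \ottsym{(}  \ottmv{X}  \ottsym{)}$ (and, as a side condition, $\ottmv{X} \, \in \, \textit{dom} \, \ottsym{(}  S  \ottsym{)}$). Hence $\ottnt{U_{{\mathrm{1}}}}  \ottsym{=}  S  \ottsym{(}  \ottmv{X}  \ottsym{)}$ and $\ottnt{U_{{\mathrm{2}}}}  \ottsym{=}  S  \ottsym{(}  \ottmv{X}  \ottsym{)}$, so $\ottnt{U_{{\mathrm{1}}}}  \ottsym{=}  \ottnt{U_{{\mathrm{2}}}}$ as required. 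The whole proof is essentially a one-step inversion argument; there is no real obstacle because the \textsc{P\_TyVar} rule is the unique inversion target and it completely determines the left-hand type from $S$ and $\ottmv{X}$.
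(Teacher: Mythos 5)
Your proof is correct and matches the paper's own argument, which is simply ``by case analysis on the type precision rule applied last'': with a bare type variable on the right-hand side, only \textsc{P\_TyVar} can apply, and it forces both left-hand types to be $S  \ottsym{(}  \ottmv{X}  \ottsym{)}$. Nothing further is needed.
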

\begin{proof}
  By case analysis on the type precision rule applied last.
\end{proof}

\begin{lemmaA} \label{lem:type_prec_consistent_ground2}
 If $ \ottnt{U_{{\mathrm{1}}}}   \sqsubseteq _{ S }  \ottnt{G_{{\mathrm{1}}}} $ and $\ottnt{U_{{\mathrm{1}}}}  \sim  \ottnt{U_{{\mathrm{2}}}}$ and $ \ottnt{U_{{\mathrm{2}}}}   \sqsubseteq _{ S }  \ottnt{G_{{\mathrm{2}}}} $,
 then $\ottnt{G_{{\mathrm{1}}}}  \ottsym{=}  \ottnt{G_{{\mathrm{2}}}}$.
\end{lemmaA}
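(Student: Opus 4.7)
The plan is to proceed by case analysis on the shape of the ground type $\ottnt{G_{{\mathrm{1}}}}$, which is either a base type $\iota$ or $\star  \!\rightarrow\!  \star$, and to use inversion on the precision derivation $ \ottnt{U_{{\mathrm{1}}}}   \sqsubseteq _{ S }  \ottnt{G_{{\mathrm{1}}}} $ to pin down the shape of $\ottnt{U_{{\mathrm{1}}}}$. The first key observation is that among the five type precision rules \rnp{P\_IdBase}, \rnp{P\_TyVar}, \rnp{P\_Dyn}, \rnp{P\_Arrow}, \rnp{P\_Scheme}, only one has a base type $\iota$ on the right-hand side (namely \rnp{P\_IdBase}, which forces the left-hand side to be the same $\iota$), and only \rnp{P\_Arrow} produces an arrow type on the right-hand side. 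Thus, if $\ottnt{G_{{\mathrm{1}}}}  \ottsym{=}  \iota$ we immediately get $\ottnt{U_{{\mathrm{1}}}}  \ottsym{=}  \iota$, and if $\ottnt{G_{{\mathrm{1}}}}  \ottsym{=}  \star  \!\rightarrow\!  \star$ we get $\ottnt{U_{{\mathrm{1}}}}  \ottsym{=}  \ottnt{U_{{\mathrm{11}}}}  \!\rightarrow\!  \ottnt{U_{{\mathrm{12}}}}$ for some $\ottnt{U_{{\mathrm{11}}}}$ and $\ottnt{U_{{\mathrm{12}}}}$.

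Next, I would case-analyze on $\ottnt{U_{{\mathrm{1}}}}  \sim  \ottnt{U_{{\mathrm{2}}}}$ to constrain the shape of $\ottnt{U_{{\mathrm{2}}}}$. In the case $\ottnt{U_{{\mathrm{1}}}}  \ottsym{=}  \iota$, the only applicable consistency rules are \rnp{C\_Base} (giving $\ottnt{U_{{\mathrm{2}}}}  \ottsym{=}  \iota$) and \rnp{C\_DynR} (giving $\ottnt{U_{{\mathrm{2}}}}  \ottsym{=}  \star$); the latter is ruled out because $ \star   \sqsubseteq _{ S }  \ottnt{G_{{\mathrm{2}}}} $ is not derivable for any ground type $\ottnt{G_{{\mathrm{2}}}}$ (inspection of the precision rules shows no rule has $ \star $ on the left and a non-$ \star $ right-hand side). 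Hence $\ottnt{U_{{\mathrm{2}}}}  \ottsym{=}  \iota$, and the same inversion argument applied to $ \iota   \sqsubseteq _{ S }  \ottnt{G_{{\mathrm{2}}}} $ forces $\ottnt{G_{{\mathrm{2}}}}  \ottsym{=}  \iota  \ottsym{=}  \ottnt{G_{{\mathrm{1}}}}$.

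In the case $\ottnt{U_{{\mathrm{1}}}}  \ottsym{=}  \ottnt{U_{{\mathrm{11}}}}  \!\rightarrow\!  \ottnt{U_{{\mathrm{12}}}}$, the consistency rules that can apply are \rnp{C\_Arrow} (giving $\ottnt{U_{{\mathrm{2}}}}  \ottsym{=}  \ottnt{U_{{\mathrm{21}}}}  \!\rightarrow\!  \ottnt{U_{{\mathrm{22}}}}$) and \rnp{C\_DynR} (giving $\ottnt{U_{{\mathrm{2}}}}  \ottsym{=}  \star$). Again $\ottnt{U_{{\mathrm{2}}}}  \ottsym{=}  \star$ is impossible since $ \star $ is not more precise than any ground type. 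For $\ottnt{U_{{\mathrm{2}}}}  \ottsym{=}  \ottnt{U_{{\mathrm{21}}}}  \!\rightarrow\!  \ottnt{U_{{\mathrm{22}}}}$, inversion on $ \ottnt{U_{{\mathrm{21}}}}  \!\rightarrow\!  \ottnt{U_{{\mathrm{22}}}}   \sqsubseteq _{ S }  \ottnt{G_{{\mathrm{2}}}} $ rules out $\ottnt{G_{{\mathrm{2}}}}  \ottsym{=}  \iota$ (no rule has arrow on the left and base on the right) and leaves only $\ottnt{G_{{\mathrm{2}}}}  \ottsym{=}  \star  \!\rightarrow\!  \star  \ottsym{=}  \ottnt{G_{{\mathrm{1}}}}$.

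The proof is essentially a finite enumeration of cases with no induction needed, so there is no real obstacle; the only subtle point I would be careful about is the side case requiring ``$ \star $ is not more precise than any ground type,'' which I would state and dispatch as a short auxiliary observation at the start (by inspecting the five precision rules and noting that none has $ \star $ on the left with a ground type on the right). Everything else is direct inversion.
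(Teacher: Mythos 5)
Your proof is correct and rests on exactly the same inversion facts as the paper's: only \rnp{P\_IdBase} can derive precision with a base type on the right-hand side, and only \rnp{P\_Arrow} with an arrow type on the right-hand side. The paper packages this as a contradiction argument (assume $\ottnt{G_{{\mathrm{1}}}}$ and $\ottnt{G_{{\mathrm{2}}}}$ have mismatched shapes, pin down $\ottnt{U_{{\mathrm{1}}}}$ and $\ottnt{U_{{\mathrm{2}}}}$, and contradict $\ottnt{U_{{\mathrm{1}}}}  \sim  \ottnt{U_{{\mathrm{2}}}}$), whereas you chain forward from $\ottnt{G_{{\mathrm{1}}}}$ through consistency to $\ottnt{G_{{\mathrm{2}}}}$, which costs you only the additional (correct and easily checked) observation that $ \star $ is not more precise than any ground type.
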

\begin{proof}
 If $\ottnt{G_{{\mathrm{1}}}}  \ottsym{=}  \iota$, then $\ottnt{U_{{\mathrm{1}}}}  \ottsym{=}  \iota$ because the rules that can be applied
 to derive $ \ottnt{U_{{\mathrm{1}}}}   \sqsubseteq _{ S }  \iota $ are only \rnp{P\_IdBase}.
 If $\ottnt{G_{{\mathrm{2}}}}  \ottsym{=}  \star  \!\rightarrow\!  \star$, then $\ottnt{U_{{\mathrm{2}}}}  \ottsym{=}  \ottnt{U_{{\mathrm{21}}}}  \!\rightarrow\!  \ottnt{U_{{\mathrm{22}}}}$ for some $\ottnt{U_{{\mathrm{21}}}}$ and $\ottnt{U_{{\mathrm{22}}}}$
 because the rules that can be applied to derive $ \ottnt{U_{{\mathrm{2}}}}   \sqsubseteq _{ S }  \star  \!\rightarrow\!  \star $
 are only \rnp{P\_Arrow}.
 However, $\iota  \sim  \ottnt{U_{{\mathrm{21}}}}  \!\rightarrow\!  \ottnt{U_{{\mathrm{22}}}}$ is contradictory.
 We can also prove the case of $\ottnt{G_{{\mathrm{1}}}}  \ottsym{=}  \star  \!\rightarrow\!  \star$ similarly.
\end{proof}

\begin{lemmaA} \label{lem:type_prec_cast_value}
 If $ \ottnt{U_{{\mathrm{2}}}}   \sqsubseteq _{ S }  \ottnt{G} $ and $w  \ottsym{:}   \ottnt{U_{{\mathrm{1}}}} \Rightarrow  \unskip ^ { \ell }  \! \ottnt{U_{{\mathrm{2}}}} $ is a value,
 then $ \ottnt{U_{{\mathrm{1}}}}   \sqsubseteq _{ S }  \ottnt{G} $.
\end{lemmaA}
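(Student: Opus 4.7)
The plan is to proceed by case analysis on the shape of the value $w  \ottsym{:}   \ottnt{U_{{\mathrm{1}}}} \Rightarrow  \unskip ^ { \ell }  \! \ottnt{U_{{\mathrm{2}}}} $. From the grammar of values $w$, a cast applied to a value can only be a value in two forms: a wrapped function where $\ottnt{U_{{\mathrm{1}}}}$ and $\ottnt{U_{{\mathrm{2}}}}$ are both function types, or an injection where $\ottnt{U_{{\mathrm{1}}}}$ is a ground type and $\ottnt{U_{{\mathrm{2}}}}  \ottsym{=}  \star$. So I will split the argument on these two possibilities and use the hypothesis $ \ottnt{U_{{\mathrm{2}}}}   \sqsubseteq _{ S }  \ottnt{G} $ to pin down $\ottnt{G}$ in each case.

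For the wrapped-function case, $\ottnt{U_{{\mathrm{2}}}}$ is of the form $\ottnt{U_{{\mathrm{21}}}}  \!\rightarrow\!  \ottnt{U_{{\mathrm{22}}}}$. I will inspect which type-precision rule could yield $ \ottnt{U_{{\mathrm{21}}}}  \!\rightarrow\!  \ottnt{U_{{\mathrm{22}}}}   \sqsubseteq _{ S }  \ottnt{G} $: rules \rnp{P\_IdBase}, \rnp{P\_TyVar}, \rnp{P\_Dyn}, and \rnp{P\_Scheme} are immediately ruled out by the shapes on the two sides, so it must be \rnp{P\_Arrow}, which forces $\ottnt{G}  \ottsym{=}  \star  \!\rightarrow\!  \star$. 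Since $\ottnt{U_{{\mathrm{1}}}}$ is also a function type $\ottnt{U_{{\mathrm{11}}}}  \!\rightarrow\!  \ottnt{U_{{\mathrm{12}}}}$, I can derive $ \ottnt{U_{{\mathrm{11}}}}   \sqsubseteq _{ S }  \star $ and $ \ottnt{U_{{\mathrm{12}}}}   \sqsubseteq _{ S }  \star $ by \rnp{P\_Dyn} and then $ \ottnt{U_{{\mathrm{1}}}}   \sqsubseteq _{ S }  \star  \!\rightarrow\!  \star $ by \rnp{P\_Arrow}, which is exactly the desired conclusion $ \ottnt{U_{{\mathrm{1}}}}   \sqsubseteq _{ S }  \ottnt{G} $.

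For the injection case, $\ottnt{U_{{\mathrm{2}}}}  \ottsym{=}  \star$ and the hypothesis becomes $ \star   \sqsubseteq _{ S }  \ottnt{G} $. I will observe that no precision rule can derive such a judgement: \rnp{P\_IdBase} requires both sides to be the same base type; \rnp{P\_TyVar} requires a type variable on the right; \rnp{P\_Dyn} requires $\star$ on the right; \rnp{P\_Arrow} requires function types on both sides; and \rnp{P\_Scheme} requires type schemes. Since $\ottnt{G}$ is either a base type or $\star  \!\rightarrow\!  \star$, none of the rules applies, so this case is vacuous.

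I do not expect any serious obstacle here; the argument is essentially bookkeeping against the value grammar and the five precision rules. The only point that requires a moment's care is confirming that $ \star   \sqsubseteq _{ S }  \ottnt{G} $ is genuinely underivable, which closes the injection case without any work on $\ottnt{U_{{\mathrm{1}}}}$.
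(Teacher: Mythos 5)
Your proof is correct and follows essentially the same route as the paper's: case analysis on the two value forms a cast can take (wrapped function vs.\ injection), deriving $\ottnt{G}  \ottsym{=}  \star  \!\rightarrow\!  \star$ in the first case and dismissing the second because $ \star   \sqsubseteq _{ S }  \ottnt{G} $ is underivable. You are in fact slightly more explicit than the paper in rebuilding $ \ottnt{U_{{\mathrm{1}}}}   \sqsubseteq _{ S }  \star  \!\rightarrow\!  \star $ via \rnp{P\_Dyn} and \rnp{P\_Arrow}, which the paper leaves implicit.
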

\begin{proof}
 Since $w  \ottsym{:}   \ottnt{U_{{\mathrm{1}}}} \Rightarrow  \unskip ^ { \ell }  \! \ottnt{U_{{\mathrm{2}}}} $ is a value,
 we have two cases on $\ottnt{U_{{\mathrm{2}}}}$ to be considered.
 The case $\ottnt{U_{{\mathrm{2}}}}  \ottsym{=}  \star$ leads to a contradiction because $ \ottnt{U_{{\mathrm{2}}}}   \sqsubseteq _{ S }  \ottnt{G} $.
 Otherwise, $\ottnt{U_{{\mathrm{2}}}}  \ottsym{=}  \ottnt{U_{{\mathrm{21}}}}  \!\rightarrow\!  \ottnt{U_{{\mathrm{22}}}}$ for some $\ottnt{U_{{\mathrm{21}}}}$ and $\ottnt{U_{{\mathrm{22}}}}$.
 Then, $\ottnt{U_{{\mathrm{1}}}}  \ottsym{=}  \ottnt{U_{{\mathrm{11}}}}  \!\rightarrow\!  \ottnt{U_{{\mathrm{12}}}}$ for some $\ottnt{U_{{\mathrm{11}}}}$ and $\ottnt{U_{{\mathrm{12}}}}$, and
 $\ottnt{G}  \ottsym{=}  \star  \!\rightarrow\!  \star$.
 Thus, \rnp{P\_Arrow}, $ \ottnt{U_{{\mathrm{1}}}}   \sqsubseteq _{ S }  \ottnt{G} $.
\end{proof}

\begin{lemmaA} \label{lem:prec_ground_contra}
 If $ \ottnt{U}   \sqsubseteq _{ S }  \ottnt{G'_{{\mathrm{1}}}} $ and 
 $ \langle   \emptyset    \vdash   w  :  \ottnt{U}   \sqsubseteq _{ S }  \star  :  \ottsym{(}  w'  \ottsym{:}   \ottnt{G'_{{\mathrm{2}}}} \Rightarrow  \unskip ^ { \ell' }  \! \star   \ottsym{)}  \dashv   \emptyset   \rangle $,
 then $\ottnt{G'_{{\mathrm{1}}}}  \ottsym{=}  \ottnt{G'_{{\mathrm{2}}}}$.
\end{lemmaA}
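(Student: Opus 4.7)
The plan is to prove this by structural induction on the derivation of the term precision judgment $ \langle   \emptyset    \vdash   w  :  \ottnt{U}   \sqsubseteq _{ S }  \star  :  \ottsym{(}  w'  \ottsym{:}   \ottnt{G'_{{\mathrm{2}}}} \Rightarrow  \unskip ^ { \ell' }  \! \star   \ottsym{)}  \dashv   \emptyset   \rangle $, case-analysing the last rule applied. Since the right-hand term is a cast of the form $w' : \ottnt{G'_{{\mathrm{2}}}} \Rightarrow^{\ell'} \star$, the congruence rules P\_Const, P\_Abs, P\_VarP, P\_Op, P\_App, P\_LetP all produce right-hand sides of an incompatible shape, and P\_Blame is excluded since $w$ is a value rather than blame. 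Only P\_Cast, P\_CastL, and P\_CastR remain to be treated.

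Before the case analysis, I would state a small auxiliary claim: if $ \ottnt{U}   \sqsubseteq _{ S }  \ottnt{G_{{\mathrm{1}}}} $ and $ \ottnt{U}   \sqsubseteq _{ S }  \ottnt{G_{{\mathrm{2}}}} $ for two ground types, then $\ottnt{G_{{\mathrm{1}}}} = \ottnt{G_{{\mathrm{2}}}}$. This follows by case analysis on $\ottnt{G_{{\mathrm{1}}}}$: when $\ottnt{G_{{\mathrm{1}}}} = \iota$, only \rnp{P\_IdBase} can derive the precision, forcing $\ottnt{U} = \iota$ and then $\ottnt{G_{{\mathrm{2}}}} = \iota$; when $\ottnt{G_{{\mathrm{1}}}} = \star \!\rightarrow\! \star$, only \rnp{P\_Arrow} applies, forcing $\ottnt{U}$ to be a function type and hence $\ottnt{G_{{\mathrm{2}}}} = \star \!\rightarrow\! \star$. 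The P\_CastR case then falls out immediately: inversion gives a sub-derivation $ \langle   \emptyset    \vdash   w  :  \ottnt{U}   \sqsubseteq _{ S }  \ottnt{G'_{{\mathrm{2}}}}  :  w'  \dashv   \emptyset   \rangle $, from which Lemma~\ref{lem:term_prec_to_type_prec} supplies $ \ottnt{U}   \sqsubseteq _{ S }  \ottnt{G'_{{\mathrm{2}}}} $, and combining with the hypothesis $ \ottnt{U}   \sqsubseteq _{ S }  \ottnt{G'_{{\mathrm{1}}}} $ through the auxiliary claim closes the case.

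For the P\_Cast case, we have $w = w_{0} : \ottnt{U_{{\mathrm{1}}}} \Rightarrow^\ell \ottnt{U}$ as a value. Since $ \ottnt{U}   \sqsubseteq _{ S }  \ottnt{G'_{{\mathrm{1}}}} $ rules out $\ottnt{U} = \star$, the canonical forms lemma (Lemma~\ref{lem:canonical_forms}) forces $\ottnt{U}$ to be a function type, so $\ottnt{G'_{{\mathrm{1}}}} = \star \!\rightarrow\! \star$ and $\ottnt{U_{{\mathrm{1}}}}$ is also a function type. Inverting P\_Cast yields a sub-derivation at types $\ottnt{U_{{\mathrm{1}}}}$ and $\ottnt{G'_{{\mathrm{2}}}}$, whence $ \ottnt{U_{{\mathrm{1}}}}   \sqsubseteq _{ S }  \ottnt{G'_{{\mathrm{2}}}} $ by Lemma~\ref{lem:term_prec_to_type_prec}; since $\ottnt{U_{{\mathrm{1}}}}$ is a function type, only \rnp{P\_Arrow} can witness this precision, giving $\ottnt{G'_{{\mathrm{2}}}} = \star \!\rightarrow\! \star$ as required.

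The P\_CastL case, where $w = w_{0} : \ottnt{U_{{\mathrm{1}}}} \Rightarrow^\ell \ottnt{U}$ but the right-hand side is inherited unchanged from the sub-derivation, will be the main obstacle, since the inductive hypothesis needs the sub-derivation to satisfy $ \ottnt{U_{{\mathrm{1}}}}   \sqsubseteq _{ S }  \ottnt{G'_{{\mathrm{1}}}} $ even though the hypothesis of the lemma only gives $ \ottnt{U}   \sqsubseteq _{ S }  \ottnt{G'_{{\mathrm{1}}}} $. To bridge the gap, I again use that $\ottnt{U} \neq \star$ (from $ \ottnt{U}   \sqsubseteq _{ S }  \ottnt{G'_{{\mathrm{1}}}} $) and canonical forms applied to the value $w$: this forces $\ottnt{U}$ and $\ottnt{U_{{\mathrm{1}}}}$ to both be function types and $\ottnt{G'_{{\mathrm{1}}}} = \star \!\rightarrow\! \star$, and then $ \ottnt{U_{{\mathrm{1}}}}   \sqsubseteq _{ S }  \star  \!\rightarrow\!  \star $ holds by \rnp{P\_Arrow} and \rnp{P\_Dyn}. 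The induction hypothesis then applies to the sub-derivation, delivering $\ottnt{G'_{{\mathrm{2}}}} = \star \!\rightarrow\! \star = \ottnt{G'_{{\mathrm{1}}}}$ and completing the proof.
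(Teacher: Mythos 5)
Your proof is correct and follows the same skeleton as the paper's: induction on the term precision derivation, with \rnp{P\_Cast}, \rnp{P\_CastL}, and \rnp{P\_CastR} as the only non-trivial cases and all congruence rules dismissed by the shape of the right-hand term. Where you diverge is in how the cases are closed. The paper dispatches \rnp{P\_Cast} and \rnp{P\_CastR} uniformly with Lemma~\ref{lem:type_prec_consistent_ground2} (if $ \ottnt{U_{{\mathrm{1}}}}   \sqsubseteq _{ S }  \ottnt{G_{{\mathrm{1}}}} $, $\ottnt{U_{{\mathrm{1}}}}  \sim  \ottnt{U_{{\mathrm{2}}}}$, and $ \ottnt{U_{{\mathrm{2}}}}   \sqsubseteq _{ S }  \ottnt{G_{{\mathrm{2}}}} $ then $\ottnt{G_{{\mathrm{1}}}}  \ottsym{=}  \ottnt{G_{{\mathrm{2}}}}$), whose consistency hypothesis is exactly what inversion of \rnp{P\_Cast} hands you; your auxiliary claim is the special case $\ottnt{U_{{\mathrm{1}}}}  \ottsym{=}  \ottnt{U_{{\mathrm{2}}}}$, which suffices for \rnp{P\_CastR} but obliges you to route \rnp{P\_Cast} through Lemma~\ref{lem:canonical_forms} to force both sides of the wrapper to be function types. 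For \rnp{P\_CastL} the paper invokes Lemma~\ref{lem:type_prec_cast_value} to transport $ \ottnt{U}   \sqsubseteq _{ S }  \ottnt{G'_{{\mathrm{1}}}} $ to the cast's source type, and you re-derive precisely that fact inline, again via canonical forms plus \rnp{P\_Arrow} and \rnp{P\_Dyn}. Both routes are sound; the paper's factoring isolates two small lemmas that it reuses elsewhere in the development, while your version keeps the argument self-contained at the cost of repeating the canonical-forms analysis in two of the three cases.
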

\begin{proof}
 By induction on the term precision derivation.
 There are three interesting cases.
 \begin{caseanalysis}
  \case{\rnp{P\_Cast}}
  We are given
  $ \langle   \emptyset    \vdash   \ottsym{(}  w_{{\mathrm{1}}}  \ottsym{:}   \ottnt{U_{{\mathrm{1}}}} \Rightarrow  \unskip ^ { \ell }  \! \ottnt{U}   \ottsym{)}  :  \ottnt{U}   \sqsubseteq _{ S }  \star  :  \ottsym{(}  w'  \ottsym{:}   \ottnt{G'_{{\mathrm{2}}}} \Rightarrow  \unskip ^ { \ell' }  \! \star   \ottsym{)}  \dashv   \emptyset   \rangle $ and,
  by inversion,
  $ \langle   \emptyset    \vdash   w_{{\mathrm{1}}}  :  \ottnt{U_{{\mathrm{1}}}}   \sqsubseteq _{ S }  \ottnt{G'_{{\mathrm{2}}}}  :  w'  \dashv   \emptyset   \rangle $ and $\ottnt{U_{{\mathrm{1}}}}  \sim  \ottnt{U}$.
  By Lemma~\ref{lem:term_prec_to_type_prec}, $ \ottnt{U_{{\mathrm{1}}}}   \sqsubseteq _{ S }  \ottnt{G'_{{\mathrm{2}}}} $.
  Since $ \ottnt{U}   \sqsubseteq _{ S }  \ottnt{G'_{{\mathrm{1}}}} $, we have $\ottnt{G'_{{\mathrm{1}}}}  \ottsym{=}  \ottnt{G'_{{\mathrm{2}}}}$
  by Lemma~\ref{lem:type_prec_consistent_ground2}.

  \case{\rnp{P\_CastL}}
  We are given
  $ \langle   \emptyset    \vdash   \ottsym{(}  w_{{\mathrm{1}}}  \ottsym{:}   \ottnt{U_{{\mathrm{1}}}} \Rightarrow  \unskip ^ { \ell }  \! \ottnt{U}   \ottsym{)}  :  \ottnt{U}   \sqsubseteq _{ S }  \star  :  \ottsym{(}  w'  \ottsym{:}   \ottnt{G'_{{\mathrm{2}}}} \Rightarrow  \unskip ^ { \ell' }  \! \star   \ottsym{)}  \dashv   \emptyset   \rangle $ and,
  by inversion,
  $ \langle   \emptyset    \vdash   w_{{\mathrm{1}}}  :  \ottnt{U_{{\mathrm{1}}}}   \sqsubseteq _{ S }  \star  :  \ottsym{(}  w'  \ottsym{:}   \ottnt{G'_{{\mathrm{2}}}} \Rightarrow  \unskip ^ { \ell' }  \! \star   \ottsym{)}  \dashv   \emptyset   \rangle $ and
  $\ottnt{U_{{\mathrm{1}}}}  \sim  \ottnt{U}$.
  Since $w_{{\mathrm{1}}}  \ottsym{:}   \ottnt{U_{{\mathrm{1}}}} \Rightarrow  \unskip ^ { \ell }  \! \ottnt{U} $ is a value,
  we have $ \ottnt{U_{{\mathrm{1}}}}   \sqsubseteq _{ S }  \ottnt{G'_{{\mathrm{1}}}} $ by Lemma~\ref{lem:type_prec_cast_value}.
  Thus, we finish by the IH.

  \case{\rnp{P\_CastR}}
  We are given
  $ \langle   \emptyset    \vdash   w_{{\mathrm{1}}}  :  \ottnt{U}   \sqsubseteq _{ S }  \star  :  \ottsym{(}  w'  \ottsym{:}   \ottnt{G'_{{\mathrm{2}}}} \Rightarrow  \unskip ^ { \ell' }  \! \star   \ottsym{)}  \dashv   \emptyset   \rangle $ and,
  by inversion,
  $ \langle   \emptyset    \vdash   w_{{\mathrm{1}}}  :  \ottnt{U}   \sqsubseteq _{ S }  \ottnt{G'_{{\mathrm{2}}}}  :  w'  \dashv   \emptyset   \rangle $.
  By Lemma~\ref{lem:term_prec_to_type_prec},
  $ \ottnt{U}   \sqsubseteq _{ S }  \ottnt{G'_{{\mathrm{2}}}} $.
  Since $\ottnt{U}  \sim  \ottnt{U}$ and $ \ottnt{U}   \sqsubseteq _{ S }  \ottnt{G'_{{\mathrm{1}}}} $,
  we have $\ottnt{G'_{{\mathrm{1}}}}  \ottsym{=}  \ottnt{G'_{{\mathrm{2}}}}$ by Lemma~\ref{lem:type_prec_consistent_ground2}.
  \qedhere
 \end{caseanalysis}
\end{proof}


\begin{lemmaA} \label{lem:left_subst_preserve_prec}
  For any $S$.
  \begin{enumerate}
    \item
      If $ \ottnt{U}   \sqsubseteq _{ S_{{\mathrm{0}}} }  \ottnt{U'} $,
      then $ S  \ottsym{(}  \ottnt{U}  \ottsym{)}   \sqsubseteq _{  S  \circ  S_{{\mathrm{0}}}  }  \ottnt{U'} $.
    \item
      If $ \langle  \Gamma   \vdash   f  :  \ottnt{U}   \sqsubseteq _{ S_{{\mathrm{0}}} }  \ottnt{U'}  :  f'  \dashv  \Gamma'  \rangle $,
      then $ \langle  S  \ottsym{(}  \Gamma  \ottsym{)}   \vdash   S  \ottsym{(}  f  \ottsym{)}  :  S  \ottsym{(}  \ottnt{U}  \ottsym{)}   \sqsubseteq _{  S  \circ  S_{{\mathrm{0}}}  }  \ottnt{U'}  :  f'  \dashv  \Gamma'  \rangle $.
  \end{enumerate}
\end{lemmaA}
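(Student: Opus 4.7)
The plan is to prove parts (1) and (2) by independent structural inductions: part (1) on the derivation of $ \ottnt{U}   \sqsubseteq _{ S_{{\mathrm{0}}} }  \ottnt{U'} $ and part (2) on the derivation of $ \langle  \Gamma   \vdash   f  :  \ottnt{U}   \sqsubseteq _{ S_{{\mathrm{0}}} }  \ottnt{U'}  :  f'  \dashv  \Gamma'  \rangle $, with part (1) being invoked wherever part (2) needs to handle a type-precision premise.

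For part (1), I would dispatch on the last rule used. The cases \rnp{P\_IdBase} and \rnp{P\_Dyn} are immediate, since $S$ acts as the identity on the left-hand sides $\iota$ and the consequents are unchanged. The case \rnp{P\_Arrow} follows from two applications of the induction hypothesis followed by \rnp{P\_Arrow}. The crucial case is \rnp{P\_TyVar}: from $\ottmv{X} \, \in \, \textit{dom} \, \ottsym{(}  S_{{\mathrm{0}}}  \ottsym{)}$ and the conclusion $ S_{{\mathrm{0}}}  \ottsym{(}  \ottmv{X}  \ottsym{)}   \sqsubseteq _{ S_{{\mathrm{0}}} }  \ottmv{X} $, I observe $( S  \circ  S_{{\mathrm{0}}} )(\ottmv{X})  \ottsym{=}  S  \ottsym{(}  S_{{\mathrm{0}}}  \ottsym{(}  \ottmv{X}  \ottsym{)}  \ottsym{)}$ and $\ottmv{X} \, \in \, \textit{dom} \, \ottsym{(}   S  \circ  S_{{\mathrm{0}}}   \ottsym{)}$, so \rnp{P\_TyVar} reapplies directly.

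For part (2), the compatibility cases \rnp{P\_Const}, \rnp{P\_Op}, \rnp{P\_Abs}, \rnp{P\_App} follow by applying the induction hypothesis to each subderivation, invoking part (1) for any type-precision side conditions, pushing $S$ through the term on the left by the definition of substitution, and reapplying the same rule. The cases \rnp{P\_Cast}, \rnp{P\_CastL}, \rnp{P\_CastR} additionally use the fact that consistency is preserved under type substitution (the type substitution lemma from Section~2). The case \rnp{P\_Blame} needs only part (1) together with the typing preservation under substitution established earlier.

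The main obstacle lies in \rnp{P\_VarP} and especially \rnp{P\_LetP}, where quantified type schemes appear in $\Gamma$ and the inner precision judgment is indexed by an extended substitution. In \rnp{P\_LetP}, applying the induction hypothesis to the first premise yields a derivation indexed by $ S  \circ  \ottsym{(}   [   \overrightarrow{ \ottmv{X'_{\ottmv{j}}} }   :=   \overrightarrow{ \ottnt{T'_{\ottmv{j}}} }   ]  \uplus  S_{{\mathrm{0}}}   \ottsym{)} $, whereas reassembling \rnp{P\_LetP} requires one of the form $ [   \overrightarrow{ \ottmv{X'_{\ottmv{j}}} }   :=   \overrightarrow{ \ottnt{T''_{\ottmv{j}}} }   ]  \uplus  \ottsym{(}   S  \circ  S_{{\mathrm{0}}}   \ottsym{)} $. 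I will resolve this by first $\alpha$-renaming the bound type variables $ \overrightarrow{ \ottmv{X'_{\ottmv{j}}} } $ so that they are fresh with respect to both $\textit{dom} \, \ottsym{(}  S  \ottsym{)}$ and $\textit{ftv} \, \ottsym{(}  S  \ottsym{(}  \ottmv{Y}  \ottsym{)}  \ottsym{)}$ for every $\ottmv{Y} \, \in \, \textit{dom} \, \ottsym{(}  S  \ottsym{)}$; under this freshness, a case-by-case check on the domain shows
\[
 S  \circ  \ottsym{(}   [   \overrightarrow{ \ottmv{X'_{\ottmv{j}}} }   :=   \overrightarrow{ \ottnt{T'_{\ottmv{j}}} }   ]  \uplus  S_{{\mathrm{0}}}   \ottsym{)}   \ottsym{=}   [   \overrightarrow{ \ottmv{X'_{\ottmv{j}}} }   :=  S  \ottsym{(}   \overrightarrow{ \ottnt{T'_{\ottmv{j}}} }   \ottsym{)}  ]  \uplus  \ottsym{(}   S  \circ  S_{{\mathrm{0}}}   \ottsym{)} .
\]
An analogous freshness argument for $ \overrightarrow{ \ottmv{X_{\ottmv{i}}} } $ lets $S$ pass through the type scheme bound to $\ottmv{x}$ in $S  \ottsym{(}  \Gamma  \ottsym{)}$, preserving the side condition $ \overrightarrow{ \ottmv{X_{\ottmv{i}}} }   \cap  \textit{ftv} \, \ottsym{(}  S  \ottsym{(}  \Gamma  \ottsym{)}  \ottsym{)}  \ottsym{=}   \emptyset $. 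The case \rnp{P\_VarP} is handled similarly, by using $ \overrightarrow{ \ottmv{X_{\ottmv{i}}} } $-freshness to commute $S$ with the type-argument instantiation $\ottnt{U}  [   \overrightarrow{ \ottmv{X_{\ottmv{i}}} }   \ottsym{:=}   \overrightarrow{ \mathbbsl{T}_{\ottmv{i}} }   ]$ and reapplying \rnp{P\_VarP} with the substituted type arguments $S  \ottsym{(}   \overrightarrow{ \mathbbsl{T}_{\ottmv{i}} }   \ottsym{)}$. Throughout, the usual capture-avoiding convention on bound type variables justifies the required $\alpha$-renamings without loss of generality.
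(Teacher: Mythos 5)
Your proposal is correct and follows essentially the same route as the paper: both parts are proved by induction on the respective precision derivations, with \rnp{P\_TyVar} as the only nontrivial case of part (1) and part (2) invoking part (1) for the type-precision premises. The only difference is one of detail: the paper dismisses everything beyond \rnp{P\_TyVar} as obvious, whereas you explicitly work out the commutation $ S  \circ  \ottsym{(}   [   \overrightarrow{ \ottmv{X'_{\ottmv{j}}} }   :=   \overrightarrow{ \ottnt{T'_{\ottmv{j}}} }   ]  \uplus  S_{{\mathrm{0}}}   \ottsym{)}   \ottsym{=}   [   \overrightarrow{ \ottmv{X'_{\ottmv{j}}} }   :=  S  \ottsym{(}   \overrightarrow{ \ottnt{T'_{\ottmv{j}}} }   \ottsym{)}  ]  \uplus  \ottsym{(}   S  \circ  S_{{\mathrm{0}}}   \ottsym{)} $ (justified by $\alpha$-renaming the bound $ \overrightarrow{ \ottmv{X'_{\ottmv{j}}} } $) needed for \rnp{P\_LetP} and \rnp{P\_VarP}, which is a legitimate gap-filling rather than a different argument.
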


\begin{proof}
  \leavevmode
  \begin{enumerate}
    \item By induction on the derivation of $ \ottnt{U}   \sqsubseteq _{ S_{{\mathrm{0}}} }  \ottnt{U'} $.
      \begin{caseanalysis}
        \case{\rnp{P\_TyVar}}
        We are given $ S_{{\mathrm{0}}}  \ottsym{(}  \ottmv{X}  \ottsym{)}   \sqsubseteq _{ S_{{\mathrm{0}}} }  \ottmv{X} $ for some $\ottmv{X}$.
        By definition, $ S  \circ  S_{{\mathrm{0}}}   \ottsym{(}  \ottmv{X}  \ottsym{)} = S  \ottsym{(}  S_{{\mathrm{0}}}  \ottsym{(}  \ottmv{X}  \ottsym{)}  \ottsym{)}$.
        By \rnp{P\_TyVar}, $ S  \ottsym{(}  S_{{\mathrm{0}}}  \ottsym{(}  \ottmv{X}  \ottsym{)}  \ottsym{)}   \sqsubseteq _{  S  \circ  S_{{\mathrm{0}}}  }  \ottmv{X} $.

        \otherwise Obvious.
      \end{caseanalysis}

    \item By induction on the derivation of $ \langle  \Gamma   \vdash   f  :  \ottnt{U}   \sqsubseteq _{ S_{{\mathrm{0}}} }  \ottnt{U'}  :  f'  \dashv  \Gamma'  \rangle $ using Lemma \ref{lem:left_subst_preserve_prec}.1.
      \qedhere
  \end{enumerate}
\end{proof}

\begin{lemmaA} \label{lem:prec_subst_swap}
 Suppose $\forall \ottmv{X} \, \in \, \textit{ftv} \, \ottsym{(}  \ottnt{U'}  \ottsym{)}. S  \ottsym{(}  \ottmv{X}  \ottsym{)}  \ottsym{=}  S'  \ottsym{(}  \ottmv{X}  \ottsym{)}$ and
 $\textit{ftv} \, \ottsym{(}  \ottnt{U'}  \ottsym{)}  \subseteq  \textit{dom} \, \ottsym{(}  S'  \ottsym{)}$.
 If $ \ottnt{U}   \sqsubseteq _{ S }  \ottnt{U'} $, then $ \ottnt{U}   \sqsubseteq _{ S' }  \ottnt{U'} $.
\end{lemmaA}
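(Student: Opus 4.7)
The plan is to prove the lemma by straightforward structural induction on the derivation of $ \ottnt{U}   \sqsubseteq _{ S }  \ottnt{U'} $, inspecting each of the four rules that can conclude a type-precision judgment whose right-hand side is a gradual type: \rnp{P\_IdBase}, \rnp{P\_TyVar}, \rnp{P\_Dyn}, and \rnp{P\_Arrow}. (Since $\ottnt{U'}$ is a gradual type rather than a type scheme, \rnp{P\_Scheme} is not applicable here.)

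First I would dispose of the two base cases with empty $\textit{ftv}(\ottnt{U'})$: for \rnp{P\_IdBase} we have $\ottnt{U'} = \iota$ and for \rnp{P\_Dyn} we have $\ottnt{U'} = \star$, and in both cases $\textit{ftv}(\ottnt{U'}) = \emptyset$, so the hypotheses on $S$ and $S'$ are vacuous and we can re-apply the same rule under $S'$. The \rnp{P\_Arrow} case $\ottnt{U_1} \!\rightarrow\! \ottnt{U_2}  \sqsubseteq_S  \ottnt{U'_1} \!\rightarrow\! \ottnt{U'_2}$ is also routine: since $\textit{ftv}(\ottnt{U'_1} \!\rightarrow\! \ottnt{U'_2}) = \textit{ftv}(\ottnt{U'_1}) \cup \textit{ftv}(\ottnt{U'_2})$, the agreement and domain hypotheses transfer to each of the two component derivations, and the induction hypothesis on each, followed by \rnp{P\_Arrow}, gives the result.

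The only delicate case is \rnp{P\_TyVar}: here $\ottnt{U'} = \ottmv{X}$ with $\ottmv{X} \in \textit{dom}(S)$ and $\ottnt{U} = S(\ottmv{X})$. Since $\textit{ftv}(\ottmv{X}) = \{\ottmv{X}\}$, the first hypothesis gives $S(\ottmv{X}) = S'(\ottmv{X})$ and the second gives $\ottmv{X} \in \textit{dom}(S')$; hence, by \rnp{P\_TyVar}, $ S'(\ottmv{X})  \sqsubseteq_{S'}  \ottmv{X} $, and rewriting $S'(\ottmv{X})$ as $S(\ottmv{X}) = \ottnt{U}$ yields exactly $ \ottnt{U}  \sqsubseteq_{S'}  \ottmv{X} $.

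There is no serious obstacle; the statement is essentially a bookkeeping observation that only the portion of $S$ that touches $\textit{ftv}(\ottnt{U'})$ is used in a type-precision derivation, since \rnp{P\_TyVar} is the sole rule that inspects $S$. If anything, the one subtlety is confirming that the side condition $\ottmv{X} \in \textit{dom}(S')$ in \rnp{P\_TyVar} is deducible from the hypothesis $\textit{ftv}(\ottnt{U'}) \subseteq \textit{dom}(S')$, which is immediate once the case is isolated.
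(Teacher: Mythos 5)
Your proof is correct and follows exactly the paper's own argument: induction on the derivation of $ \ottnt{U}   \sqsubseteq _{ S }  \ottnt{U'} $, with the base cases \rnp{P\_IdBase} and \rnp{P\_Dyn} immediate, \rnp{P\_Arrow} by the induction hypotheses, and \rnp{P\_TyVar} handled by observing that $S  \ottsym{(}  \ottmv{X}  \ottsym{)}  \ottsym{=}  S'  \ottsym{(}  \ottmv{X}  \ottsym{)}$ and $\ottmv{X} \, \in \, \textit{dom} \, \ottsym{(}  S'  \ottsym{)}$ follow from the hypotheses. No gaps.
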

\begin{proof}
 By induction on the derivation of $ \ottnt{U}   \sqsubseteq _{ S }  \ottnt{U'} $.
 \begin{caseanalysis}
  \case{\rnp{P\_IdBase}} Obvious.
  \case{\rnp{P\_TyVar}}
   We are given $ S  \ottsym{(}  \ottmv{X}  \ottsym{)}   \sqsubseteq _{ S }  \ottmv{X} $.
   Since $S  \ottsym{(}  \ottmv{X}  \ottsym{)}  \ottsym{=}  S'  \ottsym{(}  \ottmv{X}  \ottsym{)}$ and $\ottmv{X} \, \in \, \textit{dom} \, \ottsym{(}  S'  \ottsym{)}$, we have
   $ S  \ottsym{(}  \ottmv{X}  \ottsym{)}   \sqsubseteq _{ S' }  \ottmv{X} $ by \rnp{P\_TyVar}.
  \case{\rnp{P\_Dyn}} Obvious.
  \case{\rnp{P\_Arrow}} By the IHs and \rnp{P\_Arrow}.
  \qedhere
 \end{caseanalysis}
\end{proof}

\begin{lemmaA} \label{lem:type_prec_ident}
 For any $S$ and $\ottnt{U}$ such that $\textit{ftv} \, \ottsym{(}  \ottnt{U}  \ottsym{)}  \subseteq  \textit{dom} \, \ottsym{(}  S  \ottsym{)}$,
 $ S  \ottsym{(}  \ottnt{U}  \ottsym{)}   \sqsubseteq _{ S }  \ottnt{U} $.
\end{lemmaA}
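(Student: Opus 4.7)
The plan is to prove this by straightforward structural induction on the gradual type $\ottnt{U}$, matching each case to the corresponding precision rule from Figure~\ref{fig:prec_ITGL}. The hypothesis $\textit{ftv} \, \ottsym{(}  \ottnt{U}  \ottsym{)}  \subseteq  \textit{dom} \, \ottsym{(}  S  \ottsym{)}$ is exactly what is needed to discharge the side condition of \rnp{P\_TyVar} in the type-variable case, and it propagates naturally through the arrow case since $\textit{ftv} \, \ottsym{(}  \ottnt{U_{{\mathrm{1}}}}  \!\rightarrow\!  \ottnt{U_{{\mathrm{2}}}}  \ottsym{)} = \textit{ftv} \, \ottsym{(}  \ottnt{U_{{\mathrm{1}}}}  \ottsym{)} \cup \textit{ftv} \, \ottsym{(}  \ottnt{U_{{\mathrm{2}}}}  \ottsym{)}$.

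Concretely, I would handle the cases as follows. If $\ottnt{U} = \iota$, then $S(\iota) = \iota$ and \rnp{P\_IdBase} gives $ \iota   \sqsubseteq _{ S }  \iota $. If $\ottnt{U} = \ottmv{X}$, then $\ottmv{X} \in \textit{dom} \, \ottsym{(}  S  \ottsym{)}$ by assumption, so \rnp{P\_TyVar} applies directly to yield $ S  \ottsym{(}  \ottmv{X}  \ottsym{)}   \sqsubseteq _{ S }  \ottmv{X} $. If $\ottnt{U} = \star$, then $S(\star) = \star$ and \rnp{P\_Dyn} yields the result. For the arrow case $\ottnt{U} = \ottnt{U_{{\mathrm{1}}}}  \!\rightarrow\!  \ottnt{U_{{\mathrm{2}}}}$, the side condition gives $\textit{ftv} \, \ottsym{(}  \ottnt{U_{\ottmv{i}}}  \ottsym{)} \subseteq \textit{dom} \, \ottsym{(}  S  \ottsym{)}$ for $i = 1, 2$, so the induction hypothesis yields $ S  \ottsym{(}  \ottnt{U_{\ottmv{i}}}  \ottsym{)}   \sqsubseteq _{ S }  \ottnt{U_{\ottmv{i}}} $, and \rnp{P\_Arrow} concludes.

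There is no real obstacle here; the lemma is essentially a reflexivity-style result for the substitution-indexed precision relation, and all the work is done by the \rnp{P\_TyVar} rule's design, which was introduced precisely to allow a substituted type to be more precise than the variable it came from. The proof fits in a few lines and requires no auxiliary lemmas beyond the precision rules themselves.
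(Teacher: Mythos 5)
Your proof is correct and matches the paper's own argument exactly: a structural induction on $\ottnt{U}$, using \rnp{P\_IdBase}, \rnp{P\_TyVar} (whose premise $\ottmv{X} \, \in \, \textit{dom} \, \ottsym{(}  S  \ottsym{)}$ is discharged by the hypothesis), \rnp{P\_Dyn}, and \rnp{P\_Arrow} with the induction hypothesis. No gaps.
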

\begin{proof}
 By induction on $\ottnt{U}$.
 \begin{caseanalysis}
  \case{$\ottnt{U}  \ottsym{=}  \iota$ for some $\iota$}
   Since $S  \ottsym{(}  \ottnt{U}  \ottsym{)}  \ottsym{=}  \iota$, we finish by \rnp{P\_IdBase}.
  \case{$\ottnt{U}  \ottsym{=}  \ottnt{U_{{\mathrm{1}}}}  \!\rightarrow\!  \ottnt{U_{{\mathrm{2}}}}$ for some $\ottnt{U_{{\mathrm{1}}}}$ and $\ottnt{U_{{\mathrm{2}}}}$}
   We have $S  \ottsym{(}  \ottnt{U}  \ottsym{)}  \ottsym{=}  S  \ottsym{(}  \ottnt{U_{{\mathrm{1}}}}  \ottsym{)}  \!\rightarrow\!  S  \ottsym{(}  \ottnt{U_{{\mathrm{2}}}}  \ottsym{)}$.
   By the IH,
   $ S  \ottsym{(}  \ottnt{U_{{\mathrm{1}}}}  \ottsym{)}   \sqsubseteq _{ S }  \ottnt{U_{{\mathrm{1}}}} $ and
   $ S  \ottsym{(}  \ottnt{U_{{\mathrm{2}}}}  \ottsym{)}   \sqsubseteq _{ S }  \ottnt{U_{{\mathrm{2}}}} $.
   Thus, by \rnp{P\_Arrow},
   $ S  \ottsym{(}  \ottnt{U_{{\mathrm{1}}}}  \!\rightarrow\!  \ottnt{U_{{\mathrm{2}}}}  \ottsym{)}   \sqsubseteq _{ S }  \ottnt{U_{{\mathrm{1}}}}  \!\rightarrow\!  \ottnt{U_{{\mathrm{2}}}} $.
  \case{$\ottnt{U}  \ottsym{=}  \ottmv{X}$ for some $\ottmv{X}$}
   By \rnp{P\_TyVar}.
   \case{$\ottnt{U}  \ottsym{=}  \star$} By \rnp{P\_Dyn}.
   \qedhere
 \end{caseanalysis}
\end{proof}

\begin{lemmaA} \label{lem:term_prec_push_subst} \noindent
 Suppose that
 $\forall \ottmv{X} \, \in \, \textit{dom} \, \ottsym{(}  S_{{\mathrm{2}}}  \ottsym{)}. \textit{ftv} \, \ottsym{(}  S_{{\mathrm{2}}}  \ottsym{(}  \ottmv{X}  \ottsym{)}  \ottsym{)}  \subseteq  \textit{dom} \, \ottsym{(}  S_{{\mathrm{1}}}  \ottsym{)}$.
 \begin{enumerate}
  \item If $ \ottnt{U}   \sqsubseteq _{  S_{{\mathrm{1}}}  \circ  S_{{\mathrm{2}}}  }  \ottnt{U'} $,
        then $ \ottnt{U}   \sqsubseteq _{ S_{{\mathrm{1}}} }  S_{{\mathrm{2}}}  \ottsym{(}  \ottnt{U'}  \ottsym{)} $.
  \item If $ \langle  \Gamma   \vdash   \ottnt{f}  :  \ottnt{U}   \sqsubseteq _{  S_{{\mathrm{1}}}  \circ  S_{{\mathrm{2}}}  }  \ottnt{U'}  :  \ottnt{f'}  \dashv  \Gamma'  \rangle $,
        then $ \langle  \Gamma   \vdash   \ottnt{f}  :  \ottnt{U}   \sqsubseteq _{ S_{{\mathrm{1}}} }  S_{{\mathrm{2}}}  \ottsym{(}  \ottnt{U'}  \ottsym{)}  :  S_{{\mathrm{2}}}  \ottsym{(}  \ottnt{f'}  \ottsym{)}  \dashv  S_{{\mathrm{2}}}  \ottsym{(}  \Gamma'  \ottsym{)}  \rangle $.
 \end{enumerate}
\end{lemmaA}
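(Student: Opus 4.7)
The plan is to prove both parts by induction, with part (1) first since part (2) will rely on it.

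For part (1), I would proceed by induction on the derivation of $ \ottnt{U}   \sqsubseteq _{  S_{{\mathrm{1}}}  \circ  S_{{\mathrm{2}}}  }  \ottnt{U'} $. Cases \rnp{P\_IdBase} and \rnp{P\_Dyn} are immediate since $S_{{\mathrm{2}}}$ fixes base types and $ \star $, and case \rnp{P\_Arrow} follows by applying the induction hypotheses to both components and recombining with \rnp{P\_Arrow}. The interesting case is \rnp{P\_TyVar}, where $\ottnt{U'} = \ottmv{X}$ with $\ottmv{X} \in \textit{dom}( S_{{\mathrm{1}}}  \circ  S_{{\mathrm{2}}} )$ and we must show $ \ottsym{(}   S_{{\mathrm{1}}}  \circ  S_{{\mathrm{2}}}   \ottsym{)}  \ottsym{(}  \ottmv{X}  \ottsym{)}   \sqsubseteq _{ S_{{\mathrm{1}}} }  S_{{\mathrm{2}}}  \ottsym{(}  \ottmv{X}  \ottsym{)} $. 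If $\ottmv{X} \in \textit{dom}(S_{{\mathrm{2}}})$, the hypothesis tells us $\textit{ftv}(S_{{\mathrm{2}}}(\ottmv{X})) \subseteq \textit{dom}(S_{{\mathrm{1}}})$, so the goal becomes $ S_{{\mathrm{1}}}  \ottsym{(}  S_{{\mathrm{2}}}  \ottsym{(}  \ottmv{X}  \ottsym{)}  \ottsym{)}   \sqsubseteq _{ S_{{\mathrm{1}}} }  S_{{\mathrm{2}}}  \ottsym{(}  \ottmv{X}  \ottsym{)} $, which is Lemma~\ref{lem:type_prec_ident}. Otherwise $\ottmv{X} \notin \textit{dom}(S_{{\mathrm{2}}})$ but $\ottmv{X} \in \textit{dom}(S_{{\mathrm{1}}})$, in which case $S_{{\mathrm{2}}}(\ottmv{X}) = \ottmv{X}$ and $ \ottsym{(}   S_{{\mathrm{1}}}  \circ  S_{{\mathrm{2}}}   \ottsym{)}  \ottsym{(}  \ottmv{X}  \ottsym{)} = S_{{\mathrm{1}}}  \ottsym{(}  \ottmv{X}  \ottsym{)}$, so \rnp{P\_TyVar} directly applies.

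For part (2), I would induct on the term precision derivation, using part (1) whenever a type-precision judgement appears as a premise. The syntactic compatibility cases (\rnp{P\_Const}, \rnp{P\_Op}, \rnp{P\_Abs}, \rnp{P\_App}) are routine: apply the induction hypotheses to subderivations, use part (1) on any domain/type annotations, and repackage with the same rule, observing that $S_{{\mathrm{2}}}$ commutes appropriately with the term and environment constructors. The cast cases \rnp{P\_Cast}, \rnp{P\_CastL}, \rnp{P\_CastR}, and \rnp{P\_Blame} follow the same pattern, noting that consistency is preserved under substitution (Lemma~\ref{lem:type_param_substitution_in_cosistency}) so the consistency side-conditions survive applying $S_{{\mathrm{2}}}$ to the right-hand types.

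The main obstacle will be \rnp{P\_LetP}, where the premise uses an enlarged substitution $ [   \overrightarrow{ \ottmv{X'_{\ottmv{j}}} }   :=   \overrightarrow{ \ottnt{T'_{\ottmv{j}}} }   ]  \uplus  S_{{\mathrm{1}}}  \circ  S_{{\mathrm{2}}} $ on the body $w_{{\mathrm{1}}}$, and the let-bound variables $ \overrightarrow{ \ottmv{X'_{\ottmv{j}}} } $ on the imprecise side are bound under $\Lambda$. To apply the induction hypothesis here, I would first $\alpha$-rename $ \overrightarrow{ \ottmv{X'_{\ottmv{j}}} } $ to be fresh with respect to $\textit{dom}(S_{{\mathrm{2}}})$, $\textit{dom}(S_{{\mathrm{1}}})$, and $\textit{cod}(S_{{\mathrm{2}}})$, then rewrite $ [   \overrightarrow{ \ottmv{X'_{\ottmv{j}}} }   :=   \overrightarrow{ \ottnt{T'_{\ottmv{j}}} }   ]  \uplus  S_{{\mathrm{1}}}  \circ  S_{{\mathrm{2}}} $ as $ S'_{{\mathrm{1}}}  \circ  S'_{{\mathrm{2}}} $ where $S'_{{\mathrm{1}}} =  [   \overrightarrow{ \ottmv{X'_{\ottmv{j}}} }   :=   \overrightarrow{ \ottnt{T'_{\ottmv{j}}} }   ]  \uplus  S_{{\mathrm{1}}} $ and $S'_{{\mathrm{2}}} = S_{{\mathrm{2}}}$ (this is legitimate because $ \overrightarrow{ \ottmv{X'_{\ottmv{j}}} } $ do not occur in $S_{{\mathrm{2}}}$ after renaming, and the freshness of $ \overrightarrow{ \ottmv{X'_{\ottmv{j}}} } $ from $\textit{dom}(S_{{\mathrm{2}}})$ guarantees $S'_{{\mathrm{2}}}  \ottsym{(}  \ottmv{X'_{\ottmv{j}}}  \ottsym{)} = \ottmv{X'_{\ottmv{j}}}$). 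The hypothesis on $S_{{\mathrm{2}}}$ carries over to $S'_{{\mathrm{1}}}$ and $S'_{{\mathrm{2}}}$ because $ \overrightarrow{ \ottmv{X'_{\ottmv{j}}} } $ have been added fresh to $\textit{dom}(S'_{{\mathrm{1}}})$, so the induction hypothesis yields $ \langle  \Gamma   \vdash   w_{{\mathrm{1}}}  :  \ottnt{U_{{\mathrm{1}}}}   \sqsubseteq _{ S'_{{\mathrm{1}}} }  S_{{\mathrm{2}}}  \ottsym{(}  \ottnt{U'_{{\mathrm{1}}}}  \ottsym{)}  :  S_{{\mathrm{2}}}  \ottsym{(}  w'_{{\mathrm{1}}}  \ottsym{)}  \dashv  S_{{\mathrm{2}}}  \ottsym{(}  \Gamma'  \ottsym{)}  \rangle $, and the body hypothesis handles the continuation. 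Finally, repackaging with \rnp{P\_LetP} requires checking that $\textit{ftv}(S_{{\mathrm{1}}}(S_{{\mathrm{2}}}(\forall \, \overrightarrow{\ottmv{X'_{\ottmv{j}}}}. \ottnt{U'}))) \cap  \overrightarrow{ \ottmv{X_{\ottmv{i}}} }  = \emptyset$, which reduces to the corresponding condition in the original derivation because $S_{{\mathrm{2}}}$'s codomain is constrained to $\textit{dom}(S_{{\mathrm{1}}})$ by hypothesis. A similar but simpler analysis handles \rnp{P\_VarP}, where the substitutions for type arguments must be pushed through in the same way.
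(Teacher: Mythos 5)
Your proposal is correct and follows essentially the same route as the paper: part (1) by induction on the type-precision derivation, with the only interesting case being \rnp{P\_TyVar}, split on whether $\ottmv{X} \, \in \, \textit{dom} \, \ottsym{(}  S_{{\mathrm{2}}}  \ottsym{)}$ and discharged via Lemma~\ref{lem:type_prec_ident}; and part (2) by induction on the term-precision derivation using part (1). The paper leaves part (2) unelaborated, and your treatment of \rnp{P\_LetP} (renaming the bound type variables and refactoring the enlarged substitution) is a sound way to fill in that detail.
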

\begin{proof}
 \leavevmode
 \begin{enumerate}
  \item By induction on the derivation of $ \ottnt{U}   \sqsubseteq _{  S_{{\mathrm{1}}}  \circ  S_{{\mathrm{2}}}  }  \ottnt{U'} $.
        The only interesting case is \rnp{P\_TyVar}.
        In that case, we are given $  S_{{\mathrm{1}}}  \circ  S_{{\mathrm{2}}}   \ottsym{(}  \ottmv{X}  \ottsym{)}   \sqsubseteq _{  S_{{\mathrm{1}}}  \circ  S_{{\mathrm{2}}}  }  \ottmv{X} $ and,
        by inversion, $\ottmv{X} \, \in \, \textit{dom} \, \ottsym{(}   S_{{\mathrm{1}}}  \circ  S_{{\mathrm{2}}}   \ottsym{)}$.
        If $\ottmv{X} \, \not\in \, \textit{dom} \, \ottsym{(}  S_{{\mathrm{2}}}  \ottsym{)}$, then $S_{{\mathrm{2}}}  \ottsym{(}  \ottmv{X}  \ottsym{)}  \ottsym{=}  \ottmv{X}$ and it suffices to show that
        $ S_{{\mathrm{1}}}  \ottsym{(}  \ottmv{X}  \ottsym{)}   \sqsubseteq _{ S_{{\mathrm{1}}} }  \ottmv{X} $, which is proved by \rnp{P\_TyVar}
        since $\ottmv{X} \, \in \, \textit{dom} \, \ottsym{(}  S_{{\mathrm{1}}}  \ottsym{)}$.
        Otherwise, if $\ottmv{X} \, \in \, \textit{dom} \, \ottsym{(}  S_{{\mathrm{2}}}  \ottsym{)}$, then
        $\textit{ftv} \, \ottsym{(}  S_{{\mathrm{2}}}  \ottsym{(}  \ottmv{X}  \ottsym{)}  \ottsym{)}  \subseteq  \textit{dom} \, \ottsym{(}  S_{{\mathrm{1}}}  \ottsym{)}$.
        Thus, we have $ S_{{\mathrm{1}}}  \ottsym{(}  S_{{\mathrm{2}}}  \ottsym{(}  \ottmv{X}  \ottsym{)}  \ottsym{)}   \sqsubseteq _{ S_{{\mathrm{1}}} }  S_{{\mathrm{2}}}  \ottsym{(}  \ottmv{X}  \ottsym{)} $
        by Lemma~\ref{lem:type_prec_ident}.

  \item By induction on the derivation of
    $ \langle  \Gamma   \vdash   \ottnt{f}  :  \ottnt{U}   \sqsubseteq _{  S_{{\mathrm{1}}}  \circ  S_{{\mathrm{2}}}  }  \ottnt{U'}  :  \ottnt{f'}  \dashv  \Gamma'  \rangle $ with the first case.
    \qedhere
 \end{enumerate}
\end{proof}

\begin{lemmaA} \label{lem:right_subst_preserve_prec}
  Suppose $\forall \ottmv{X} \in \textit{dom} \, \ottsym{(}  S_{{\mathrm{0}}}  \ottsym{)}. S_{{\mathrm{0}}}  \ottsym{(}  \ottmv{X}  \ottsym{)}  \ottsym{=}   S'_{{\mathrm{0}}}  \circ   S_{{\mathrm{0}}}  \circ  S'    \ottsym{(}  \ottmv{X}  \ottsym{)}$ and
  $\forall \ottmv{X} \, \in \, \textit{dom} \, \ottsym{(}  S'  \ottsym{)}. \textit{ftv} \, \ottsym{(}  S'  \ottsym{(}  \ottmv{X}  \ottsym{)}  \ottsym{)}  \subseteq  \textit{dom} \, \ottsym{(}  S'_{{\mathrm{0}}}  \ottsym{)}$.
  \begin{enumerate}
    \item
      If $ \ottnt{U}   \sqsubseteq _{ S_{{\mathrm{0}}} }  \ottnt{U'} $,
      then $ \ottnt{U}   \sqsubseteq _{  S'_{{\mathrm{0}}}  \circ  S_{{\mathrm{0}}}  }  S'  \ottsym{(}  \ottnt{U'}  \ottsym{)} $.
    \item
      If $ \langle  \Gamma   \vdash   f  :  \ottnt{U}   \sqsubseteq _{ S_{{\mathrm{0}}} }  \ottnt{U'}  :  f'  \dashv  \Gamma'  \rangle $,
      then $ \langle  \Gamma   \vdash   f  :  \ottnt{U}   \sqsubseteq _{  S'_{{\mathrm{0}}}  \circ  S_{{\mathrm{0}}}  }  S'  \ottsym{(}  \ottnt{U'}  \ottsym{)}  :  S'  \ottsym{(}  f'  \ottsym{)}  \dashv  S'  \ottsym{(}  \Gamma'  \ottsym{)}  \rangle $.
  \end{enumerate}
\end{lemmaA}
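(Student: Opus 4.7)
\medskip

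The plan is to prove both parts simultaneously by induction, first (1) on the derivation of the type precision $ \ottnt{U}   \sqsubseteq _{ S_{{\mathrm{0}}} }  \ottnt{U'} $, and then (2) on the derivation of the term precision, where part (2) appeals to part (1) at each typed subterm. Throughout, I would rely on Lemma~\ref{lem:type_prec_ident} (``$ S  \ottsym{(}  \ottnt{U}  \ottsym{)}   \sqsubseteq _{ S }  \ottnt{U} $ whenever $\textit{ftv} \, \ottsym{(}  \ottnt{U}  \ottsym{)}  \subseteq  \textit{dom} \, \ottsym{(}  S  \ottsym{)}$'') as the main algebraic tool, and on Lemma~\ref{lem:prec_subst_swap} to rewrite the witnessing substitution on the left-hand side.

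For part (1), the cases \rnp{P\_IdBase} and \rnp{P\_Dyn} are immediate since $S'$ fixes base types and $\star$. The case \rnp{P\_Arrow} follows by pushing $S'$ into the two components and applying the IHs. The only delicate case is \rnp{P\_TyVar}: we have $ S_{{\mathrm{0}}}  \ottsym{(}  \ottmv{X}  \ottsym{)}   \sqsubseteq _{ S_{{\mathrm{0}}} }  \ottmv{X} $ with $\ottmv{X} \, \in \, \textit{dom} \, \ottsym{(}  S_{{\mathrm{0}}}  \ottsym{)}$ and must conclude $ S_{{\mathrm{0}}}  \ottsym{(}  \ottmv{X}  \ottsym{)}   \sqsubseteq _{  S'_{{\mathrm{0}}}  \circ  S_{{\mathrm{0}}}  }  S'  \ottsym{(}  \ottmv{X}  \ottsym{)} $. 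Here the first hypothesis gives $S_{{\mathrm{0}}}  \ottsym{(}  \ottmv{X}  \ottsym{)}  \ottsym{=}   S'_{{\mathrm{0}}}  \circ  S_{{\mathrm{0}}}   \ottsym{(}  S'  \ottsym{(}  \ottmv{X}  \ottsym{)}  \ottsym{)}$, and the second hypothesis gives $\textit{ftv} \, \ottsym{(}  S'  \ottsym{(}  \ottmv{X}  \ottsym{)}  \ottsym{)}  \subseteq  \textit{dom} \, \ottsym{(}  S'_{{\mathrm{0}}}  \ottsym{)}  \subseteq  \textit{dom} \, \ottsym{(}   S'_{{\mathrm{0}}}  \circ  S_{{\mathrm{0}}}   \ottsym{)}$, so Lemma~\ref{lem:type_prec_ident} gives precisely the needed judgement.

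For part (2), I proceed by induction on the term precision derivation. The compatibility cases (\rnp{P\_Const}, \rnp{P\_Op}, \rnp{P\_Abs}, \rnp{P\_App}, \rnp{P\_Blame}) are routine: each typed subterm or gradual type on the right is hit with $S'$, part (1) handles type precisions, and the IHs handle term precisions. The cast cases \rnp{P\_Cast}, \rnp{P\_CastL}, \rnp{P\_CastR} are similar, using the fact that type consistency is preserved under any substitution (Lemma for Type Substitution Preserves Consistency). The case \rnp{P\_VarP} only requires substituting uniformly through the type arguments $ \overrightarrow{ \ottnt{T'_{\ottmv{j}}} } $ on the right and invoking part (1) on the resulting type precision; observe that the bound type variables in the type scheme on the right are untouched because $S'$, by our convention about capture-avoidance, can be assumed not to bind them.

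The main obstacle is the case \rnp{P\_LetP}, because the witnessing substitution on the polymorphic subterm is $ [   \overrightarrow{ \ottmv{X'_{\ottmv{j}}} }   :=   \overrightarrow{ \ottnt{T'_{\ottmv{j}}} }   ]  \uplus  S_{{\mathrm{0}}} $ rather than $S_{{\mathrm{0}}}$ alone, and we must ensure the two side conditions of the lemma continue to hold when $S_{{\mathrm{0}}}$ is replaced by this extension. Concretely, we need to apply the IH to the subderivation on $w_{{\mathrm{1}}}$ with $S'$ replaced by a version that also handles the $\overrightarrow{\ottmv{X'_{\ottmv{j}}}}$, and verify that the freshness condition on $\overrightarrow{\ottmv{X'_{\ottmv{j}}}}$ with respect to $\Gamma'$ is preserved by $S'$. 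I would first, by $\alpha$-renaming, assume $\overrightarrow{\ottmv{X'_{\ottmv{j}}}}$ are fresh for $\textit{dom} \, \ottsym{(}  S'  \ottsym{)}$ and for $\textit{cod} \, \ottsym{(}  S'  \ottsym{)}$; then the hypothesis propagates unchanged (since $ [   \overrightarrow{ \ottmv{X'_{\ottmv{j}}} }   :=   \overrightarrow{ \ottnt{T'_{\ottmv{j}}} }   ]  \uplus  S_{{\mathrm{0}}} $ composes compatibly with $S'$ by distributivity) and I conclude with \rnp{P\_LetP} after noting that $S'  \ottsym{(}   \Lambda    \overrightarrow{ \ottmv{X'_{\ottmv{j}}} }  .\,  w'_{{\mathrm{1}}}   \ottsym{)}  \ottsym{=}   \Lambda    \overrightarrow{ \ottmv{X'_{\ottmv{j}}} }  .\,  S'  \ottsym{(}  w'_{{\mathrm{1}}}  \ottsym{)} $. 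If the bookkeeping over the composition $ [   \overrightarrow{ \ottmv{X'_{\ottmv{j}}} }   :=   \overrightarrow{ \ottnt{T'_{\ottmv{j}}} }   ]  \uplus  S_{{\mathrm{0}}} $ becomes awkward to match the lemma's hypotheses exactly, Lemma~\ref{lem:term_prec_push_subst} provides a convenient way to repackage the substitution before invoking the IH.
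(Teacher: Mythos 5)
Your proposal is correct and follows essentially the same route as the paper: induction on the two precision derivations, with \rnp{P\_TyVar} as the only interesting case of part (1) and part (2) reduced to part (1) plus routine congruence (the paper's own proof of part (2) is just ``by induction on the derivation''). The only cosmetic difference is that where the paper first derives $ S_{{\mathrm{0}}}  \ottsym{(}  \ottmv{X}  \ottsym{)}   \sqsubseteq _{ \ottsym{(}   S'_{{\mathrm{0}}}  \circ  S_{{\mathrm{0}}}   \ottsym{)}  \circ  S' }  \ottmv{X} $ by \rnp{P\_TyVar} and then invokes Lemma~\ref{lem:term_prec_push_subst}, you inline that lemma's content by applying Lemma~\ref{lem:type_prec_ident} directly to $S'  \ottsym{(}  \ottmv{X}  \ottsym{)}$ --- the same argument, since that is exactly what Lemma~\ref{lem:term_prec_push_subst} does in its \rnp{P\_TyVar} case (just note that when $\ottmv{X} \, \not\in \, \textit{dom} \, \ottsym{(}  S'  \ottsym{)}$ the inclusion $\textit{ftv} \, \ottsym{(}  S'  \ottsym{(}  \ottmv{X}  \ottsym{)}  \ottsym{)}  \subseteq  \textit{dom} \, \ottsym{(}   S'_{{\mathrm{0}}}  \circ  S_{{\mathrm{0}}}   \ottsym{)}$ comes from $\ottmv{X} \, \in \, \textit{dom} \, \ottsym{(}  S_{{\mathrm{0}}}  \ottsym{)}$ rather than from the second hypothesis).
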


\begin{proof}
  \leavevmode
  \begin{enumerate}
    \item By induction on the derivation of $ \ottnt{U}   \sqsubseteq _{ S_{{\mathrm{0}}} }  \ottnt{U'} $.
      \begin{caseanalysis}
        \case{\rnp{P\_TyVar}}
        We are given $ S_{{\mathrm{0}}}  \ottsym{(}  \ottmv{X}  \ottsym{)}   \sqsubseteq _{ S_{{\mathrm{0}}} }  \ottmv{X} $ for some $\ottmv{X}$.
        By inversion, $\ottmv{X} \, \in \, \textit{dom} \, \ottsym{(}  S_{{\mathrm{0}}}  \ottsym{)}$.
        So, $  S'_{{\mathrm{0}}}  \circ  S_{{\mathrm{0}}}   \circ  S'   \ottsym{(}  \ottmv{X}  \ottsym{)} = S_{{\mathrm{0}}}  \ottsym{(}  \ottmv{X}  \ottsym{)}$.
        Thus, $ S_{{\mathrm{0}}}  \ottsym{(}  \ottmv{X}  \ottsym{)}   \sqsubseteq _{   S'_{{\mathrm{0}}}  \circ  S_{{\mathrm{0}}}   \circ  S'  }  \ottmv{X} $ by \rnp{P\_TyVar}.
        By Lemma~\ref{lem:term_prec_push_subst},
        $ S_{{\mathrm{0}}}  \ottsym{(}  \ottmv{X}  \ottsym{)}   \sqsubseteq _{  S'_{{\mathrm{0}}}  \circ  S_{{\mathrm{0}}}  }  S'  \ottsym{(}  \ottmv{X}  \ottsym{)} $.

        \otherwise Obvious.
      \end{caseanalysis}

    \item By induction on the derivation of $ \langle  \Gamma   \vdash   f  :  \ottnt{U}   \sqsubseteq _{ S_{{\mathrm{0}}} }  \ottnt{U'}  :  f'  \dashv  \Gamma'  \rangle $
      \qedhere
  \end{enumerate}
\end{proof}


\begin{lemmaA} \label{lem:term_prec_inversion1}
  If $ \langle   \emptyset    \vdash   w  :  \ottnt{U}   \sqsubseteq _{ S_{{\mathrm{0}}} }  \star  :  \ottsym{(}  w'  \ottsym{:}   \ottnt{G'} \Rightarrow  \unskip ^ { \ell' }  \! \star   \ottsym{)}  \dashv   \emptyset   \rangle $
  and $ \ottnt{U}   \sqsubseteq _{ S_{{\mathrm{0}}} }  \ottnt{G'} $,
  then $ \langle   \emptyset    \vdash   w  :  \ottnt{U}   \sqsubseteq _{ S_{{\mathrm{0}}} }  \ottnt{G'}  :  w'  \dashv   \emptyset   \rangle $,
\end{lemmaA}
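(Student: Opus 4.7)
The plan is to proceed by induction on the derivation of the precision judgment $\langle \emptyset \vdash w : U \sqsubseteq_{S_0} \star : (w' : G' \Rightarrow^{\ell'} \star) \dashv \emptyset \rangle$. Because its right-hand side is a cast expression and the left-hand side is a value (which rules out \textsc{P\_Blame}), the only precision rules whose conclusions can match are \textsc{P\_Cast}, \textsc{P\_CastL}, and \textsc{P\_CastR}; every other rule produces a right-hand side whose head symbol is not a cast.

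The case \textsc{P\_CastR} is immediate: its inner premise is literally the desired judgment $\langle \emptyset \vdash w : U \sqsubseteq_{S_0} G' : w' \dashv \emptyset \rangle$. For \textsc{P\_Cast}, inversion gives $w = w_1 : U_1 \Rightarrow^\ell U$, the consistency $U_1 \sim U$, and an inner derivation $\langle \emptyset \vdash w_1 : U_1 \sqsubseteq_{S_0} G' : w' \dashv \emptyset \rangle$. Combining this with $U_1 \sim U$ and the hypothesis $U \sqsubseteq_{S_0} G'$, a single application of \textsc{P\_CastL} produces the conclusion.

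The substantive case is \textsc{P\_CastL}, where $w = w_0 : U_1 \Rightarrow^\ell U$ and the inner premise has the form $\langle \emptyset \vdash w_0 : U_1 \sqsubseteq_{S_0} \star : (w' : G' \Rightarrow^{\ell'} \star) \dashv \emptyset \rangle$, together with $U_1 \sim U$ and $U \sqsubseteq_{S_0} \star$. To invoke the induction hypothesis on this smaller inner derivation I must supply the additional premise $U_1 \sqsubseteq_{S_0} G'$. I plan to establish this by case analysis on the value shape of $w$: since $w = w_0 : U_1 \Rightarrow^\ell U$ is a value, the canonical-forms analysis forces it to be either a wrapped function or an injection, and the injection case would force $U = \star$, which is impossible because no precision rule derives $\star \sqsubseteq_{S_0} \iota$ or $\star \sqsubseteq_{S_0} \star \to \star$. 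Hence $U$ and $U_1$ are both function types and $G' = \star \to \star$, so $U_1 \sqsubseteq_{S_0} \star \to \star$ follows by \textsc{P\_Arrow} applied to two instances of \textsc{P\_Dyn}. The IH then yields $\langle \emptyset \vdash w_0 : U_1 \sqsubseteq_{S_0} G' : w' \dashv \emptyset \rangle$, and a further application of \textsc{P\_CastL}, using $U_1 \sim U$ and $U \sqsubseteq_{S_0} G'$, closes the case.

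The principal obstacle will be the \textsc{P\_CastL} case: I must recognize that the hypothesis $U \sqsubseteq_{S_0} G'$ together with the value-ness of $w$ propagates down to the analogous hypothesis $U_1 \sqsubseteq_{S_0} G'$ required by the induction hypothesis, while simultaneously eliminating the injection sub-case for $w$ via the fact that $\star$ cannot be more precise than a proper ground type.
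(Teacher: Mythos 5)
Your proposal is correct and follows essentially the same route as the paper: induction on the precision derivation, with only \rnp{P\_Cast}, \rnp{P\_CastL}, and \rnp{P\_CastR} applicable, the first and last closed by inversion plus \rnp{P\_CastL}, and the \rnp{P\_CastL} case closed by propagating $ \ottnt{U_{{\mathrm{1}}}}   \sqsubseteq _{ S_{{\mathrm{0}}} }  \ottnt{G'} $ to the inner derivation before invoking the induction hypothesis. The only cosmetic difference is that the paper factors the propagation step out as a separate lemma (its Lemma~\ref{lem:type_prec_cast_value}), whereas you prove the same fact inline via the value-shape analysis.
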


\begin{proof}
 By induction on the term precision derivation.
 There are three interesting cases.
 \begin{caseanalysis}
  \case{\rnp{P\_Cast}}
  We are given $ \langle   \emptyset    \vdash   \ottsym{(}  w_{{\mathrm{1}}}  \ottsym{:}   \ottnt{U_{{\mathrm{1}}}} \Rightarrow  \unskip ^ { \ell }  \! \ottnt{U}   \ottsym{)}  :  \ottnt{U}   \sqsubseteq _{ S_{{\mathrm{0}}} }  \star  :  \ottsym{(}  w'  \ottsym{:}   \ottnt{G'} \Rightarrow  \unskip ^ { \ell' }  \! \star   \ottsym{)}  \dashv   \emptyset   \rangle $
  and, by inversion,
  $ \langle   \emptyset    \vdash   w_{{\mathrm{1}}}  :  \ottnt{U_{{\mathrm{1}}}}   \sqsubseteq _{ S_{{\mathrm{0}}} }  \ottnt{G'}  :  w'  \dashv   \emptyset   \rangle $ and
  $\ottnt{U_{{\mathrm{1}}}}  \sim  \ottnt{U}$.
  By \rnp{P\_CastL}, we finish.

  \case{\rnp{P\_CastL}}
  We are given
  $ \langle   \emptyset    \vdash   \ottsym{(}  w_{{\mathrm{1}}}  \ottsym{:}   \ottnt{U_{{\mathrm{1}}}} \Rightarrow  \unskip ^ { \ell }  \! \ottnt{U}   \ottsym{)}  :  \ottnt{U}   \sqsubseteq _{ S_{{\mathrm{0}}} }  \star  :  \ottsym{(}  w'  \ottsym{:}   \ottnt{G'} \Rightarrow  \unskip ^ { \ell' }  \! \star   \ottsym{)}  \dashv   \emptyset   \rangle $
  and, by inversion,
  $ \langle   \emptyset    \vdash   w_{{\mathrm{1}}}  :  \ottnt{U_{{\mathrm{1}}}}   \sqsubseteq _{ S_{{\mathrm{0}}} }  \star  :  \ottsym{(}  w'  \ottsym{:}   \ottnt{G'} \Rightarrow  \unskip ^ { \ell' }  \! \star   \ottsym{)}  \dashv   \emptyset   \rangle $ and
  $\ottnt{U_{{\mathrm{1}}}}  \sim  \ottnt{U}$.
  Since $w = w_{{\mathrm{1}}}  \ottsym{:}   \ottnt{U_{{\mathrm{1}}}} \Rightarrow  \unskip ^ { \ell }  \! \ottnt{U} $ is a value,
  we have $ \ottnt{U_{{\mathrm{1}}}}   \sqsubseteq _{ S_{{\mathrm{0}}} }  \ottnt{G'} $ by Lemma~\ref{lem:type_prec_cast_value}.
  By the IH, $ \langle   \emptyset    \vdash   w_{{\mathrm{1}}}  :  \ottnt{U_{{\mathrm{1}}}}   \sqsubseteq _{ S_{{\mathrm{0}}} }  \ottnt{G'}  :  w'  \dashv   \emptyset   \rangle $.
  Since $\ottnt{U_{{\mathrm{1}}}}  \sim  \ottnt{U}$ and $ \ottnt{U}   \sqsubseteq _{ S_{{\mathrm{0}}} }  \ottnt{G'} $,
  we finish by \rnp{P\_CastL}.

  \case{\rnp{P\_CastR}} By inversion.
  \qedhere
 \end{caseanalysis}
\end{proof}

\begin{lemmaA} \label{lem:term_prec_inversion3}
  If $ \langle   \emptyset    \vdash   \ottsym{(}  f  \ottsym{:}   \ottnt{U} \Rightarrow  \unskip ^ { \ell }  \! \star   \ottsym{)}  :  \star   \sqsubseteq _{ S_{{\mathrm{0}}} }  \ottnt{U'}  :  f'  \dashv   \emptyset   \rangle $
  then $ \langle   \emptyset    \vdash   f  :  \ottnt{U}   \sqsubseteq _{ S_{{\mathrm{0}}} }  \ottnt{U'}  :  f'  \dashv   \emptyset   \rangle $,
\end{lemmaA}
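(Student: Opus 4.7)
The plan is induction on the derivation of $\langle \emptyset \vdash (f : U \Rightarrow^{\ell} \star) : \star \sqsubseteq_{S_0} U' : f' \dashv \emptyset \rangle$, with case analysis on the last rule applied. Because the left-hand term is a cast of type $\star$, only three rules can conclude such a judgment: \rnp{P\_Cast}, \rnp{P\_CastL}, and \rnp{P\_CastR}.

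Case \rnp{P\_CastL} is immediate: the rule has a cast on the left only and leaves the right-hand term untouched. Matching the conclusion against the rule schema shows that the premise is already $\langle \emptyset \vdash f : U \sqsubseteq_{S_0} U' : f' \dashv \emptyset \rangle$, i.e.\ exactly what we want.

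For \rnp{P\_Cast}, the rule forces $f'$ to be a cast $(f'_1 : U''_1 \Rightarrow^{\ell'} U')$ with premises $\langle \emptyset \vdash f : U \sqsubseteq_{S_0} U''_1 : f'_1 \dashv \emptyset \rangle$, $U \sim \star$, $U''_1 \sim U'$, and $\star \sqsubseteq_{S_0} U'$. Inspection of the type precision rules shows that $\star \sqsubseteq_{S_0} U'$ forces $U' = \star$ (note that \rnp{P\_TyVar} cannot apply because the codomain of a type substitution consists of static types). Given the premise and $U \sqsubseteq_{S_0} \star$ (by \rnp{P\_Dyn}), we can rebuild the desired judgment by one application of \rnp{P\_CastR}.

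For \rnp{P\_CastR}, the rule does not alter the left-hand term, so the premise is $\langle \emptyset \vdash (f : U \Rightarrow^{\ell} \star) : \star \sqsubseteq_{S_0} U'_1 : f'_1 \dashv \emptyset \rangle$, which is structurally smaller than the conclusion; here the induction hypothesis applies directly to strip the outer cast on the left, and then one more application of \rnp{P\_CastR} restores the cast on the right. The main point to double-check is that the case analysis is exhaustive (the LHS is syntactically a cast, so only the three rules above can apply) and that in the \rnp{P\_Cast} case the forcing argument $\star \sqsubseteq_{S_0} U' \Rightarrow U' = \star$ is airtight; both are straightforward once one unfolds the rule schemas, so I expect no serious obstacle.
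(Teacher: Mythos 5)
Your proposal is correct and follows essentially the same route as the paper's proof: induction on the precision derivation with the three cases \rnp{P\_Cast}, \rnp{P\_CastL}, and \rnp{P\_CastR}, using the key observation that $ \star   \sqsubseteq _{ S_{{\mathrm{0}}} }  \ottnt{U'} $ forces $\ottnt{U'}  \ottsym{=}  \star$ and then rebuilding the judgment with \rnp{P\_CastR}. The paper treats \rnp{P\_CastR} only as ``similar to \rnp{P\_Cast}''; your explicit appeal to the induction hypothesis there is exactly what that remark elides.
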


\begin{proof}
 By induction on the term precision derivation.
 There are three interesting cases.
 \begin{caseanalysis}
  \case{\rnp{P\_Cast}}
  We are given
  $ \langle   \emptyset    \vdash   \ottsym{(}  f  \ottsym{:}   \ottnt{U} \Rightarrow  \unskip ^ { \ell }  \! \star   \ottsym{)}  :  \star   \sqsubseteq _{ S_{{\mathrm{0}}} }  \ottnt{U'}  :  \ottsym{(}  f'_{{\mathrm{1}}}  \ottsym{:}   \ottnt{U'_{{\mathrm{1}}}} \Rightarrow  \unskip ^ { \ell' }  \! \ottnt{U'}   \ottsym{)}  \dashv   \emptyset   \rangle $
  and, by inversion,
  $ \langle   \emptyset    \vdash   f  :  \ottnt{U}   \sqsubseteq _{ S_{{\mathrm{0}}} }  \ottnt{U'_{{\mathrm{1}}}}  :  f'_{{\mathrm{1}}}  \dashv   \emptyset   \rangle $ and
  $ \star   \sqsubseteq _{ S_{{\mathrm{0}}} }  \ottnt{U'} $ and
  $\ottnt{U'_{{\mathrm{1}}}}  \sim  \ottnt{U'}$.
  Since $ \star   \sqsubseteq _{ S_{{\mathrm{0}}} }  \ottnt{U'} $, we have $\ottnt{U'}  \ottsym{=}  \star$.
  Thus, $ \ottnt{U}   \sqsubseteq _{ S_{{\mathrm{0}}} }  \ottnt{U'} $, and by \rnp{P\_CastR},
  $ \langle   \emptyset    \vdash   f  :  \ottnt{U}   \sqsubseteq _{ S_{{\mathrm{0}}} }  \ottnt{U'}  :  \ottsym{(}  f'_{{\mathrm{1}}}  \ottsym{:}   \ottnt{U'_{{\mathrm{1}}}} \Rightarrow  \unskip ^ { \ell' }  \! \ottnt{U'}   \ottsym{)}  \dashv   \emptyset   \rangle $.
  
  \case{\rnp{P\_CastL}} By inversion.
  \case{\rnp{P\_CastR}} Similar to the case of \rnp{P\_Cast}.
  \qedhere
 \end{caseanalysis}
\end{proof}

\begin{lemmaA} \label{lem:term_prec_inversion4}
  If $ \langle   \emptyset    \vdash   w  :  \iota   \sqsubseteq _{ S_{{\mathrm{0}}} }  \iota  :  w'  \dashv   \emptyset   \rangle $,
  then there exists $\ottnt{c}$ such that $w  \ottsym{=}  w' = c$.
\end{lemmaA}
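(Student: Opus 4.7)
The plan is to proceed by case analysis on the last rule used to derive $ \langle   \emptyset    \vdash   w  :  \iota   \sqsubseteq _{ S_{{\mathrm{0}}} }  \iota  :  w'  \dashv   \emptyset   \rangle $, exploiting the fact that both sides are syntactic values (indicated by the metavariable $w$) typed at a base type $\iota$. First, I would invoke Lemma~\ref{lem:term_prec_to_typing} to obtain the typing judgments $ \emptyset   \vdash  w  \ottsym{:}  \iota$ and $ \emptyset   \vdash  w'  \ottsym{:}  \iota$. Then, since $w$ and $w'$ are values typed at $\iota$, the canonical forms lemma (Lemma~\ref{lem:canonical_forms}) forces both to be constants: $w  \ottsym{=}  \ottnt{c}$ and $w'  \ottsym{=}  \ottnt{c'}$ for some $\ottnt{c}$ and $\ottnt{c'}$ with $ \mathit{ty} ( \ottnt{c} )   \ottsym{=}   \mathit{ty} ( \ottnt{c'} )   \ottsym{=}  \iota$.

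The remaining work is to pin down the rule that could have produced this particular conclusion and to show that it forces $\ottnt{c}  \ottsym{=}  \ottnt{c'}$. The only rule whose conclusion relates two constants at the same base type on both sides is \rnp{P\_Const}, which yields $\ottnt{c}  \ottsym{=}  \ottnt{c'}$ directly and gives the desired equality. Each other rule is excluded by inspection:
\begin{itemize}
\item \rnp{P\_Cast}, \rnp{P\_CastL}: the left-hand term takes the form $\ottsym{(}  \ottnt{f}  \ottsym{:}   \ottnt{U_{{\mathrm{1}}}} \Rightarrow  \unskip ^ { \ell }  \! \iota   \ottsym{)}$, which is not among the value forms (the cast value forms target only function types or $ \star $), contradicting $w$ being a value.
\item \rnp{P\_CastR}: symmetrically, the right-hand term $\ottsym{(}  \ottnt{f'}  \ottsym{:}   \ottnt{U'_{{\mathrm{1}}}} \Rightarrow  \unskip ^ { \ell' }  \! \iota   \ottsym{)}$ is not a value.
\item \rnp{P\_Abs}: yields lambda abstractions typed at an arrow type, not at $\iota$.
\item \rnp{P\_Op}, \rnp{P\_App}, \rnp{P\_LetP}, \rnp{P\_VarP}, \rnp{P\_Blame}: these all yield terms that are not values in the empty environment (e.g., $\mathit{op} \, \ottsym{(}  \ottnt{f_{{\mathrm{1}}}}  \ottsym{,}  \ottnt{f_{{\mathrm{2}}}}  \ottsym{)}$, $\ottnt{f_{{\mathrm{1}}}} \, \ottnt{f_{{\mathrm{2}}}}$, $ \textsf{\textup{let}\relax} \,  \ottmv{x}  =   \Lambda    \overrightarrow{ \ottmv{X_{\ottmv{i}}} }  .\,  w   \textsf{\textup{ in }\relax}  \ottnt{f}  $, $\textsf{\textup{blame}\relax} \, \ell$), or require a nonempty environment to produce a variable, again contradicting $w$ being a closed value.
\end{itemize}

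I do not anticipate any real obstacle: the argument is essentially a short case analysis, with the heavy lifting outsourced to the already-established canonical forms and typing extraction lemmas. The only mild subtlety is making sure to justify, for the rules \rnp{P\_CastL} and \rnp{P\_CastR}, that no value can arise at base type $\iota$ via a cast (since a cast value's outer target type must be either an arrow type or $ \star $); this follows immediately from the grammar of values.
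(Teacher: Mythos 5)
Your proof is correct and follows essentially the same route as the paper, which simply performs a case analysis on the last term-precision rule applied; your additional use of Lemma~\ref{lem:term_prec_to_typing} and the canonical forms lemma just makes explicit why every rule other than \rnp{P\_Const} is impossible.
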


\begin{proof}
  By case analysis on the term precision rule applied last.
\end{proof}


\begin{lemmaA}[Catch up to Value on the Left] \label{lem:prec_catch_up_left_value}
  If $ \langle   \emptyset    \vdash   w  :  \ottnt{U}   \sqsubseteq _{ S_{{\mathrm{0}}} }  \ottnt{U'}  :  f'  \dashv   \emptyset   \rangle $,
  then there exists $S'$, $S'_{{\mathrm{0}}}$, and $w'$
  such that
  \begin{itemize}
   \item $f' \,  \xmapsto{ \mathmakebox[0.4em]{} S' \mathmakebox[0.3em]{} }\hspace{-0.4em}{}^\ast \hspace{0.2em}  \, w'$,
   \item $ \langle   \emptyset    \vdash   w  :  \ottnt{U}   \sqsubseteq _{  S'_{{\mathrm{0}}}  \circ  S_{{\mathrm{0}}}  }  S'  \ottsym{(}  \ottnt{U'}  \ottsym{)}  :  w'  \dashv   \emptyset   \rangle $,
   \item $\forall \ottmv{X} \in \textit{dom} \, \ottsym{(}  S_{{\mathrm{0}}}  \ottsym{)}. S_{{\mathrm{0}}}  \ottsym{(}  \ottmv{X}  \ottsym{)}  \ottsym{=}   S'_{{\mathrm{0}}}  \circ   S_{{\mathrm{0}}}  \circ  S'    \ottsym{(}  \ottmv{X}  \ottsym{)}$,
   \item $\forall \ottmv{X} \in \textit{dom} \, \ottsym{(}  S_{{\mathrm{0}}}  \ottsym{)}. \textit{dom} \, \ottsym{(}  S'_{{\mathrm{0}}}  \ottsym{)}  \cap  \textit{ftv} \, \ottsym{(}  S_{{\mathrm{0}}}  \ottsym{(}  \ottmv{X}  \ottsym{)}  \ottsym{)}  \ottsym{=}   \emptyset $, and
   \item $\forall \ottmv{X} \in \textit{dom} \, \ottsym{(}  S'  \ottsym{)}. \textit{ftv} \, \ottsym{(}  S'  \ottsym{(}  \ottmv{X}  \ottsym{)}  \ottsym{)}  \subseteq  \textit{dom} \, \ottsym{(}  S'_{{\mathrm{0}}}  \ottsym{)}$.
  \end{itemize}
\end{lemmaA}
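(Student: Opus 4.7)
The proof proceeds by induction on the derivation of
$ \langle   \emptyset    \vdash   w  :  \ottnt{U}   \sqsubseteq _{ S_{{\mathrm{0}}} }  \ottnt{U'}  :  f'  \dashv   \emptyset   \rangle $,
with case analysis on the last precision rule applied. Since the left-hand term is a value, only a few rules are possible: \rn{P\_Const}, \rn{P\_Abs}, \rn{P\_Cast}, \rn{P\_CastL}, and \rn{P\_CastR}. (\rn{P\_Var}, \rn{P\_Op}, \rn{P\_App}, \rn{P\_VarP}, \rn{P\_LetP}, and \rn{P\_Blame} cannot apply because their left-hand sides are not values.) In the base cases \rn{P\_Const} and \rn{P\_Abs}, $f'$ is already a value, so we take $S' = S'_{{\mathrm{0}}} = [\,]$ and $w' = f'$, and all side conditions hold trivially.

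For \rn{P\_CastL}, the left-hand term $w = w_{{\mathrm{1}}}  \ottsym{:}   \ottnt{U_{{\mathrm{1}}}} \Rightarrow  \unskip ^ { \ell }  \! \ottnt{U} $ is itself a cast; inspecting the grammar of values forces $w_{{\mathrm{1}}}$ to be a value. Inversion gives a precision derivation for $w_{{\mathrm{1}}}$ against $f'$ directly, so the IH applies without further work. For \rn{P\_Cast} and \rn{P\_CastR}, the right-hand term is a cast $f_{{\mathrm{1}}}'  \ottsym{:}   \ottnt{U_{{\mathrm{1}}}}' \Rightarrow  \unskip ^ { \ell' }  \! \ottnt{U}' $. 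The IH applied to the inner derivation yields an evaluation $f_{{\mathrm{1}}}' \,  \xmapsto{ \mathmakebox[0.4em]{} S' \mathmakebox[0.3em]{} }\hspace{-0.4em}{}^\ast \hspace{0.2em}  \, w_{{\mathrm{1}}}'$ together with adjusted substitutions witnessing that $w$ and $w_{{\mathrm{1}}}'$ remain precision-related. We then extend the evaluation by reducing the outer cast $w_{{\mathrm{1}}}' \ottsym{:}  \ottnt{U_{{\mathrm{1}}}}'  \Rightarrow  \ottnt{U}'$ to a value, using the appropriate mix of \rn{R\_Succeed}, \rn{R\_Ground}, \rn{R\_Expand}, \rn{R\_InstBase}, and \rn{R\_InstArrow}. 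Lemmas~\ref{lem:type_prec_consistent_ground2}, \ref{lem:prec_ground_contra}, and \ref{lem:term_prec_inversion1}/\ref{lem:term_prec_inversion3} are needed to rule out failing reductions (blame cannot arise on the imprecise side once $w$ is fixed as a value on the precise side, when the types are related by precision) and to re-establish the precision judgment after the cast steps.

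The delicate subcase is when $\ottnt{U}'$ is a type variable $\ottmv{X}$ and $w_{{\mathrm{1}}}'$ is an injection $w_{{\mathrm{2}}}'  \ottsym{:}   \ottnt{G} \Rightarrow  \unskip ^ { \ell' }  \! \star $, so that evaluation fires \rn{R\_InstBase} or \rn{R\_InstArrow} and produces a nontrivial substitution $[\ottmv{X} := \ottnt{T}]$. Here we must update both $S'$ (to record the new instantiation that the imprecise side needed) and, correspondingly, $S'_{{\mathrm{0}}}$ so that the adjusted precision judgment $ \langle   \emptyset    \vdash   w  :  \ottnt{U}   \sqsubseteq _{  S'_{{\mathrm{0}}}  \circ  S_{{\mathrm{0}}}  }  S'  \ottsym{(}  \ottnt{U}'  \ottsym{)}  :  w'  \dashv   \emptyset   \rangle $ still holds; Lemmas~\ref{lem:right_subst_preserve_prec} and~\ref{lem:term_prec_push_subst} supply the needed substitution surgery. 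The \rn{R\_InstArrow} case also creates two fresh type variables $\ottmv{X_{{\mathrm{1}}}}, \ottmv{X_{{\mathrm{2}}}}$; their freshness with respect to $\textit{dom} \, \ottsym{(}  S_{{\mathrm{0}}}  \ottsym{)}$ is what makes the condition $\textit{dom} \, \ottsym{(}  S'_{{\mathrm{0}}}  \ottsym{)}  \cap  \textit{ftv} \, \ottsym{(}  S_{{\mathrm{0}}}  \ottsym{(}  \ottmv{Y}  \ottsym{)}  \ottsym{)}  \ottsym{=}   \emptyset $ satisfiable, since we put $\ottmv{X_{{\mathrm{1}}}},\ottmv{X_{{\mathrm{2}}}}$ into $\textit{dom} \, \ottsym{(}  S'_{{\mathrm{0}}}  \ottsym{)}$. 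Composing the IH data with the extra reduction steps and combining substitutions via $\uplus$ yields the final $S'$, $S'_{{\mathrm{0}}}$, and $w'$.

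The main obstacle is the substitution bookkeeping in \rn{P\_CastR}: one must choose $S'_{{\mathrm{0}}}$ so that it simultaneously (i)~records the new instantiations of type variables fresh from DTI, (ii)~does not collide with the image of $S_{{\mathrm{0}}}$, and (iii)~makes the equation $S_{{\mathrm{0}}}  \ottsym{(}  \ottmv{Y}  \ottsym{)}  \ottsym{=}   S'_{{\mathrm{0}}}  \circ   S_{{\mathrm{0}}}  \circ  S'    \ottsym{(}  \ottmv{Y}  \ottsym{)}$ hold for every $\ottmv{Y} \, \in \, \textit{dom} \, \ottsym{(}  S_{{\mathrm{0}}}  \ottsym{)}$. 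This hinges on a careful invariant that the IH's $S'$ never substitutes inside $\textit{dom} \, \ottsym{(}  S_{{\mathrm{0}}}  \ottsym{)}$ and that its image lands entirely in $\textit{dom} \, \ottsym{(}  S'_{{\mathrm{0}}}  \ottsym{)}$, which is exactly the last two bullets in the statement; these conditions were added precisely to make the induction go through.
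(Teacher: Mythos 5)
Your proposal is correct and follows essentially the same route as the paper's proof: induction on the precision derivation, trivial base cases for \rnp{P\_Const}/\rnp{P\_Abs}, the \rnp{P\_CastL} case discharged by the IH on the inner derivation (plus re-wrapping via Lemma~\ref{lem:right_subst_preserve_prec} and \rnp{P\_CastL}), and the \rnp{P\_Cast}/\rnp{P\_CastR} cases handled by the IH followed by a case analysis on how the outer cast on the imprecise side reduces, with exactly the substitution bookkeeping for \rnp{R\_InstBase}/\rnp{R\_InstArrow} that you identify as the delicate point. You also correctly pinpoint that precision rules out the blame-producing reductions (via Lemmas~\ref{lem:type_prec_consistent_ground2} and~\ref{lem:prec_ground_contra}) and that the last two side conditions exist precisely to make the composition equation survive the induction.
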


\begin{proof}
  By induction on the term precision derivation.

  \begin{caseanalysis}
    \case{\rnp{P\_Const} and \rnp{P\_Abs}}
    Obvious.

    \case{\rnp{P\_Cast}}
    We are given,
    $ \langle   \emptyset    \vdash   \ottsym{(}  w_{{\mathrm{1}}}  \ottsym{:}   \ottnt{U_{{\mathrm{1}}}} \Rightarrow  \unskip ^ { \ell }  \! \ottnt{U}   \ottsym{)}  :  \ottnt{U}   \sqsubseteq _{ S_{{\mathrm{0}}} }  \ottnt{U'}  :  \ottsym{(}  f'_{{\mathrm{1}}}  \ottsym{:}   \ottnt{U'_{{\mathrm{1}}}} \Rightarrow  \unskip ^ { \ell' }  \! \ottnt{U'}   \ottsym{)}  \dashv   \emptyset   \rangle $,
    where $w  \ottsym{=}  w_{{\mathrm{1}}}  \ottsym{:}   \ottnt{U_{{\mathrm{1}}}} \Rightarrow  \unskip ^ { \ell }  \! \ottnt{U} $ and $f'  \ottsym{=}  f'_{{\mathrm{1}}}  \ottsym{:}   \ottnt{U'_{{\mathrm{1}}}} \Rightarrow  \unskip ^ { \ell' }  \! \ottnt{U'} $
    for some $w_{{\mathrm{1}}}$, $f'_{{\mathrm{1}}}$, $\ottnt{U_{{\mathrm{1}}}}$, and $\ottnt{U'_{{\mathrm{1}}}}$, $\ell$, and $\ell'$.
    By inversion,
    \begin{itemize}
     \item $ \langle   \emptyset    \vdash   w_{{\mathrm{1}}}  :  \ottnt{U_{{\mathrm{1}}}}   \sqsubseteq _{ S_{{\mathrm{0}}} }  \ottnt{U'_{{\mathrm{1}}}}  :  f'_{{\mathrm{1}}}  \dashv   \emptyset   \rangle $,
     \item $ \ottnt{U}   \sqsubseteq _{ S_{{\mathrm{0}}} }  \ottnt{U'} $, and
     \item $\ottnt{U_{{\mathrm{1}}}}  \sim  \ottnt{U}$ and $\ottnt{U'_{{\mathrm{1}}}}  \sim  \ottnt{U'}$.
    \end{itemize}
    By the IH, there exist $S'_{{\mathrm{0}}}$, $S'_{{\mathrm{1}}}$ and $w'_{{\mathrm{1}}}$ such that
    \begin{itemize}
     \item $f'_{{\mathrm{1}}} \,  \xmapsto{ \mathmakebox[0.4em]{} S'_{{\mathrm{1}}} \mathmakebox[0.3em]{} }\hspace{-0.4em}{}^\ast \hspace{0.2em}  \, w'_{{\mathrm{1}}}$,
     \item $ \langle   \emptyset    \vdash   w_{{\mathrm{1}}}  :  \ottnt{U_{{\mathrm{1}}}}   \sqsubseteq _{  S'_{{\mathrm{0}}}  \circ  S_{{\mathrm{0}}}  }  S'_{{\mathrm{1}}}  \ottsym{(}  \ottnt{U'_{{\mathrm{1}}}}  \ottsym{)}  :  w'_{{\mathrm{1}}}  \dashv   \emptyset   \rangle $,
     \item $\forall \ottmv{X} \in \textit{dom} \, \ottsym{(}  S_{{\mathrm{0}}}  \ottsym{)}. S_{{\mathrm{0}}}  \ottsym{(}  \ottmv{X}  \ottsym{)}  \ottsym{=}   S'_{{\mathrm{0}}}  \circ   S_{{\mathrm{0}}}  \circ  S'_{{\mathrm{1}}}    \ottsym{(}  \ottmv{X}  \ottsym{)}$,
     \item $\forall \ottmv{X} \in \textit{dom} \, \ottsym{(}  S_{{\mathrm{0}}}  \ottsym{)}. \textit{dom} \, \ottsym{(}  S'_{{\mathrm{0}}}  \ottsym{)}  \cap  \textit{ftv} \, \ottsym{(}  S_{{\mathrm{0}}}  \ottsym{(}  \ottmv{X}  \ottsym{)}  \ottsym{)}  \ottsym{=}   \emptyset $, and
     \item $\forall \ottmv{X} \in \textit{dom} \, \ottsym{(}  S'_{{\mathrm{1}}}  \ottsym{)}. \textit{ftv} \, \ottsym{(}  S'_{{\mathrm{1}}}  \ottsym{(}  \ottmv{X}  \ottsym{)}  \ottsym{)}  \subseteq  \textit{dom} \, \ottsym{(}  S'_{{\mathrm{0}}}  \ottsym{)}$.
    \end{itemize}
    Thus, by \rnp{E\_Step},
    \[
     f'_{{\mathrm{1}}}  \ottsym{:}   \ottnt{U'_{{\mathrm{1}}}} \Rightarrow  \unskip ^ { \ell' }  \! \ottnt{U'}  \,  \xmapsto{ \mathmakebox[0.4em]{} S'_{{\mathrm{1}}} \mathmakebox[0.3em]{} }\hspace{-0.4em}{}^\ast \hspace{0.2em}  \, w'_{{\mathrm{1}}}  \ottsym{:}   S'_{{\mathrm{1}}}  \ottsym{(}  \ottnt{U'_{{\mathrm{1}}}}  \ottsym{)} \Rightarrow  \unskip ^ { \ell' }  \! S'_{{\mathrm{1}}}  \ottsym{(}  \ottnt{U'}  \ottsym{)} .
    \]
    By case analysis on $\ottnt{U'_{{\mathrm{1}}}}  \sim  \ottnt{U'}$.

    \begin{caseanalysis}
      \case{$\iota  \sim  \iota$ for some $\iota$}
      Since $ \ottnt{U}   \sqsubseteq _{ S_{{\mathrm{0}}} }  \ottnt{U'} $ and $\ottnt{U'}  \ottsym{=}  \iota$,
      we have $\ottnt{U}  \ottsym{=}  \iota$.
      However, this contradicts the fact that $w_{{\mathrm{1}}}  \ottsym{:}   \ottnt{U_{{\mathrm{1}}}} \Rightarrow  \unskip ^ { \ell }  \! \ottnt{U} $
      is a value.

      \case{$\ottmv{X'}  \sim  \ottmv{X'}$ for some $\ottmv{X'}$}
      By case analysis on $S'_{{\mathrm{1}}}  \ottsym{(}  \ottmv{X'}  \ottsym{)}$.
      \begin{caseanalysis}
        \case{$S'_{{\mathrm{1}}}  \ottsym{(}  \ottmv{X'}  \ottsym{)}  \ottsym{=}  \iota$}
        Similarly to the case $\iota  \sim  \iota$.

        \case{$S'_{{\mathrm{1}}}  \ottsym{(}  \ottmv{X'}  \ottsym{)}  \ottsym{=}  \ottmv{X''}$ for some $\ottmv{X''}$}
        By Lemma~\ref{lem:term_prec_to_typing}, $ \emptyset   \vdash  w'_{{\mathrm{1}}}  \ottsym{:}  \ottmv{X''}$.
        This contradicts Lemma~\ref{lem:canonical_forms}.

        \case{$S'_{{\mathrm{1}}}  \ottsym{(}  \ottmv{X'}  \ottsym{)}$ is a function type}
        We show that
        \[
          \langle   \emptyset    \vdash   w_{{\mathrm{1}}}  \ottsym{:}   \ottnt{U_{{\mathrm{1}}}} \Rightarrow  \unskip ^ { \ell }  \! \ottnt{U}   :  \ottnt{U}   \sqsubseteq _{  S'_{{\mathrm{0}}}  \circ  S_{{\mathrm{0}}}  }  S'_{{\mathrm{1}}}  \ottsym{(}  \ottmv{X'}  \ottsym{)}  :  w'_{{\mathrm{1}}}  \ottsym{:}   S'_{{\mathrm{1}}}  \ottsym{(}  \ottmv{X'}  \ottsym{)} \Rightarrow  \unskip ^ { \ell' }  \! S'_{{\mathrm{1}}}  \ottsym{(}  \ottmv{X'}  \ottsym{)}   \dashv   \emptyset   \rangle .
        \]
        Since $ \langle   \emptyset    \vdash   w_{{\mathrm{1}}}  :  \ottnt{U_{{\mathrm{1}}}}   \sqsubseteq _{ S_{{\mathrm{0}}} }  \ottmv{X'}  :  f'_{{\mathrm{1}}}  \dashv   \emptyset   \rangle $,
        we have $ \ottnt{U_{{\mathrm{1}}}}   \sqsubseteq _{ S_{{\mathrm{0}}} }  \ottmv{X'} $ by Lemma~\ref{lem:term_prec_to_type_prec}.
        Thus, since $ \ottnt{U}   \sqsubseteq _{ S_{{\mathrm{0}}} }  \ottmv{X'} $,
        we have $\ottnt{U_{{\mathrm{1}}}}  \ottsym{=}  \ottnt{U}$ by Lemma~\ref{lem:type_prec_right_var_equal}.
        Since $ \langle   \emptyset    \vdash   w_{{\mathrm{1}}}  :  \ottnt{U_{{\mathrm{1}}}}   \sqsubseteq _{  S'_{{\mathrm{0}}}  \circ  S_{{\mathrm{0}}}  }  S'_{{\mathrm{1}}}  \ottsym{(}  \ottmv{X'}  \ottsym{)}  :  w'_{{\mathrm{1}}}  \dashv   \emptyset   \rangle $,
        we have $ \ottnt{U_{{\mathrm{1}}}}   \sqsubseteq _{  S'_{{\mathrm{0}}}  \circ  S_{{\mathrm{0}}}  }  S'_{{\mathrm{1}}}  \ottsym{(}  \ottmv{X'}  \ottsym{)} $
        by Lemma \ref{lem:term_prec_to_type_prec}.
        Thus, $ \ottnt{U}   \sqsubseteq _{  S'_{{\mathrm{0}}}  \circ  S_{{\mathrm{0}}}  }  S'_{{\mathrm{1}}}  \ottsym{(}  \ottmv{X'}  \ottsym{)} $.
        Since $w'_{{\mathrm{1}}}  \ottsym{:}   S'_{{\mathrm{1}}}  \ottsym{(}  \ottmv{X'}  \ottsym{)} \Rightarrow  \unskip ^ { \ell }  \! S'_{{\mathrm{1}}}  \ottsym{(}  \ottmv{X'}  \ottsym{)} $ is a value (note that
        $S'_{{\mathrm{1}}}  \ottsym{(}  \ottmv{X'}  \ottsym{)}$ is a function type),
        $S'_{{\mathrm{1}}}  \ottsym{(}  \ottmv{X'}  \ottsym{)}  \sim  S'_{{\mathrm{1}}}  \ottsym{(}  \ottmv{X'}  \ottsym{)}$, and $ \ottnt{U}   \sqsubseteq _{  S'_{{\mathrm{0}}}  \circ  S_{{\mathrm{0}}}  }  S'_{{\mathrm{1}}}  \ottsym{(}  \ottmv{X'}  \ottsym{)} $,
        we finish by \rnp{P\_Cast}.
      \end{caseanalysis}

      \case{$\star  \sim  \star$}
      By \rnp{R\_IdStar}, $w'_{{\mathrm{1}}}  \ottsym{:}   S'_{{\mathrm{1}}}  \ottsym{(}  \star  \ottsym{)} \Rightarrow  \unskip ^ { \ell' }  \! S'_{{\mathrm{1}}}  \ottsym{(}  \star  \ottsym{)}  \,  \xmapsto{ \mathmakebox[0.4em]{} [  ] \mathmakebox[0.3em]{} }  \, w'_{{\mathrm{1}}}$.
      By definition, $ \ottnt{U}   \sqsubseteq _{  S'_{{\mathrm{0}}}  \circ  S_{{\mathrm{0}}}  }  \star $.
      Thus, by \rnp{P\_CastL}, we have
      \[
        \langle   \emptyset    \vdash   w_{{\mathrm{1}}}  \ottsym{:}   \ottnt{U_{{\mathrm{1}}}} \Rightarrow  \unskip ^ { \ell }  \! \ottnt{U}   :  \ottnt{U}   \sqsubseteq _{  S'_{{\mathrm{0}}}  \circ  S_{{\mathrm{0}}}  }  \star  :  w'_{{\mathrm{1}}}  \dashv   \emptyset   \rangle ,
      \]
      which is what we want to show.

      \case{$\star  \sim  \ottnt{U'}$ where $\ottnt{U'}  \neq  \star$}
      Here, $S'_{{\mathrm{1}}}  \ottsym{(}  \star  \ottsym{)}  \ottsym{=}  \star$.
      By Lemma~\ref{lem:canonical_forms},
      there exist $w'_{{\mathrm{11}}}$ and $\ottnt{G'}$ such that $w'_{{\mathrm{1}}}  \ottsym{=}  w'_{{\mathrm{11}}}  \ottsym{:}   \ottnt{G'} \Rightarrow  \unskip ^ { \ell'_{{\mathrm{1}}} }  \! \star $.
      By case analysis on $\ottnt{U'}$.

      \begin{caseanalysis}
        \case{$\ottnt{U'}  \ottsym{=}  \iota$ for some $\iota$}
        Here, $ \ottnt{U}   \sqsubseteq _{ S_{{\mathrm{0}}} }  \iota $, that is, $\ottnt{U}  \ottsym{=}  \iota$.
        However, this contradicts the fact that
        $w_{{\mathrm{1}}}  \ottsym{:}   \ottnt{U_{{\mathrm{1}}}} \Rightarrow  \unskip ^ { \ell }  \! \ottnt{U} $ is a value.

        \case{$\ottnt{U'}  \ottsym{=}  \star  \!\rightarrow\!  \star$}
        Here, $S'_{{\mathrm{1}}}  \ottsym{(}  \star  \!\rightarrow\!  \star  \ottsym{)}  \ottsym{=}  \star  \!\rightarrow\!  \star$.
        By Lemma \ref{lem:term_prec_to_type_prec}, $ \ottnt{U}   \sqsubseteq _{ S_{{\mathrm{0}}} }  \star  \!\rightarrow\!  \star $.
        By definition, $\ottnt{U}$ is a function type.
        Since $w_{{\mathrm{1}}}  \ottsym{:}   \ottnt{U_{{\mathrm{1}}}} \Rightarrow  \unskip ^ { \ell }  \! \ottnt{U} $ is a value,
        $\ottnt{U_{{\mathrm{1}}}}$ is also a function type.
        So, by definition, $ \ottnt{U_{{\mathrm{1}}}}   \sqsubseteq _{  S'_{{\mathrm{0}}}  \circ  S_{{\mathrm{0}}}  }  \star  \!\rightarrow\!  \star $.
        By case analysis on $\ottnt{G'}$.

        \begin{caseanalysis}
          \case{$\ottnt{G'}  \ottsym{=}  \iota$ for some $\iota$}
          We are given,
          $ \langle   \emptyset    \vdash   w_{{\mathrm{1}}}  :  \ottnt{U_{{\mathrm{1}}}}   \sqsubseteq _{  S'_{{\mathrm{0}}}  \circ  S_{{\mathrm{0}}}  }  \star  :  w'_{{\mathrm{11}}}  \ottsym{:}   \iota \Rightarrow  \unskip ^ { \ell'_{{\mathrm{1}}} }  \! \star   \dashv   \emptyset   \rangle $.
          This contradicts
          Lemma~\ref{lem:prec_ground_contra} with $ \ottnt{U_{{\mathrm{1}}}}   \sqsubseteq _{  S'_{{\mathrm{0}}}  \circ  S_{{\mathrm{0}}}  }  \star  \!\rightarrow\!  \star $.

          \case{$\ottnt{G'}  \ottsym{=}  \star  \!\rightarrow\!  \star$}
          By \rnp{R\_Succeed}, $w'_{{\mathrm{11}}}  \ottsym{:}   \star  \!\rightarrow\!  \star \Rightarrow  \unskip ^ { \ell'_{{\mathrm{1}}} }  \!  \star \Rightarrow  \unskip ^ { \ell' }  \! \star  \!\rightarrow\!  \star   \,  \xmapsto{ \mathmakebox[0.4em]{} [  ] \mathmakebox[0.3em]{} }  \, w'_{{\mathrm{11}}}$.
          By Lemma \ref{lem:term_prec_inversion1},
          $ \langle   \emptyset    \vdash   w_{{\mathrm{1}}}  :  \ottnt{U_{{\mathrm{1}}}}   \sqsubseteq _{  S'_{{\mathrm{0}}}  \circ  S_{{\mathrm{0}}}  }  \star  \!\rightarrow\!  \star  :  w'_{{\mathrm{11}}}  \dashv   \emptyset   \rangle $.
          Since $ \ottnt{U}   \sqsubseteq _{  S'_{{\mathrm{0}}}  \circ  S_{{\mathrm{0}}}  }  \star  \!\rightarrow\!  \star $, we have, by \rnp{P\_CastL}, 
          \[
            \langle   \emptyset    \vdash   w_{{\mathrm{1}}}  \ottsym{:}   \ottnt{U_{{\mathrm{1}}}} \Rightarrow  \unskip ^ { \ell }  \! \ottnt{U}   :  \ottnt{U}   \sqsubseteq _{  S'_{{\mathrm{0}}}  \circ  S_{{\mathrm{0}}}  }  \star  \!\rightarrow\!  \star  :  w''_{{\mathrm{1}}}  \dashv   \emptyset   \rangle ,
          \]
          which is what we want to show.
        \end{caseanalysis}

        \case{$\ottnt{U'}  \ottsym{=}  \ottmv{X'}$ for some $\ottmv{X'}$}
        If $S'_{{\mathrm{1}}}  \ottsym{(}  \ottmv{X'}  \ottsym{)}$ is not a type variable,
        then we can prove as the other cases.

        Here, $S'_{{\mathrm{1}}}  \ottsym{(}  \ottmv{X'}  \ottsym{)}  \ottsym{=}  \ottmv{X'}$ because $S'_{{\mathrm{1}}}$ is generated by DTI.
        By case analysis on $\ottnt{G'}$.
        \begin{caseanalysis}
          \case{$\ottnt{G'}  \ottsym{=}  \iota$ for some $\iota$}
          By Lemma \ref{lem:term_prec_to_type_prec},
          $ \ottnt{U}   \sqsubseteq _{ S_{{\mathrm{0}}} }  \ottmv{X'} $.
          By Lemma \ref{lem:canonical_forms},
          $\ottnt{U}$ is the dynamic type or a function type.
          By case analysis on $\ottnt{U}$.
          \begin{caseanalysis}
            \case{$\ottnt{U}$ is the dynamic type}
            Cannot happen since $ \ottnt{U}   \sqsubseteq _{ S_{{\mathrm{0}}} }  \ottmv{X'} $.

            \case{$\ottnt{U}$ is a function type}
            By Lemma \ref{lem:canonical_forms},
            $\ottnt{U_{{\mathrm{1}}}}$ is also a function type.
            By definition, we have $ \ottnt{U_{{\mathrm{1}}}}   \sqsubseteq _{  S'_{{\mathrm{0}}}  \circ  S_{{\mathrm{0}}}  }  \star  \!\rightarrow\!  \star $ but
            this contradicts Lemma~\ref{lem:prec_ground_contra} with
            $ \langle   \emptyset    \vdash   w_{{\mathrm{1}}}  :  \ottnt{U_{{\mathrm{1}}}}   \sqsubseteq _{  S'_{{\mathrm{0}}}  \circ  S_{{\mathrm{0}}}  }  \star  :  \ottsym{(}  w'_{{\mathrm{11}}}  \ottsym{:}   \ottnt{G'} \Rightarrow  \unskip ^ { \ell'_{{\mathrm{1}}} }  \! \star   \ottsym{)}  \dashv   \emptyset   \rangle $
            (note that $\ottnt{G'}$ is $\iota$).
          \end{caseanalysis}

          \case{$\ottnt{G'}  \ottsym{=}  \star  \!\rightarrow\!  \star$}
          We are given
          $ \langle   \emptyset    \vdash   w_{{\mathrm{1}}}  :  \ottnt{U_{{\mathrm{1}}}}   \sqsubseteq _{  S'_{{\mathrm{0}}}  \circ  S_{{\mathrm{0}}}  }  \star  :  \ottsym{(}  w'_{{\mathrm{11}}}  \ottsym{:}   \star  \!\rightarrow\!  \star \Rightarrow  \unskip ^ { \ell'_{{\mathrm{1}}} }  \! \star   \ottsym{)}  \dashv   \emptyset   \rangle $.
          %
          By Lemma \ref{lem:term_prec_to_type_prec},
          $ \ottnt{U}   \sqsubseteq _{ S_{{\mathrm{0}}} }  \ottmv{X'} $.
          By Lemma \ref{lem:canonical_forms},
          $\ottnt{U}$ is the dynamic type or a function type.
          The case $\ottnt{U}$ is the dynamic type contradicts
          $ \ottnt{U}   \sqsubseteq _{ S_{{\mathrm{0}}} }  \ottmv{X'} $.
          So, $\ottnt{U}$ is a function type and
          there exist $\ottnt{T_{{\mathrm{1}}}}$ and $\ottnt{T_{{\mathrm{2}}}}$ such that $\ottnt{U}  \ottsym{=}  \ottnt{T_{{\mathrm{1}}}}  \!\rightarrow\!  \ottnt{T_{{\mathrm{2}}}}$
          such that $S_{{\mathrm{0}}}  \ottsym{(}  \ottmv{X'}  \ottsym{)}  \ottsym{=}  \ottnt{T_{{\mathrm{1}}}}  \!\rightarrow\!  \ottnt{T_{{\mathrm{2}}}}$
          since $ \ottnt{U}   \sqsubseteq _{ S_{{\mathrm{0}}} }  \ottmv{X'} $.
          By Lemma \ref{lem:canonical_forms}, $\ottnt{U_{{\mathrm{1}}}}$ is also a function type.
          By definition, $ \ottnt{U_{{\mathrm{1}}}}   \sqsubseteq _{  S'_{{\mathrm{0}}}  \circ  S_{{\mathrm{0}}}  }  \star  \!\rightarrow\!  \star $.

          By \rnp{R\_InstArrow} and \rnp{R\_Succeed},
          $w'_{{\mathrm{11}}}  \ottsym{:}   \star  \!\rightarrow\!  \star \Rightarrow  \unskip ^ { \ell'_{{\mathrm{1}}} }  \!  \star \Rightarrow  \unskip ^ { \ell' }  \! \ottmv{X'}   \,  \xmapsto{ \mathmakebox[0.4em]{} S'_{{\mathrm{2}}} \mathmakebox[0.3em]{} }\hspace{-0.4em}{}^\ast \hspace{0.2em}  \, S'_{{\mathrm{2}}}  \ottsym{(}  w'_{{\mathrm{11}}}  \ottsym{)}  \ottsym{:}   \star  \!\rightarrow\!  \star \Rightarrow  \unskip ^ { \ell' }  \! \ottmv{X'_{{\mathrm{1}}}}  \!\rightarrow\!  \ottmv{X'_{{\mathrm{2}}}} $
          where $S'_{{\mathrm{2}}}  \ottsym{=}  [  \ottmv{X'}  :=  \ottmv{X'_{{\mathrm{1}}}}  \!\rightarrow\!  \ottmv{X'_{{\mathrm{2}}}}  ]$ and $\ottmv{X'_{{\mathrm{1}}}}$ and $\ottmv{X'_{{\mathrm{2}}}}$ are fresh.

          Let $S'_{{\mathrm{3}}}  \ottsym{=}  [  \ottmv{X'_{{\mathrm{1}}}}  :=  \ottnt{T_{{\mathrm{1}}}}  \ottsym{,}  \ottmv{X'_{{\mathrm{2}}}}  :=  \ottnt{T_{{\mathrm{2}}}}  ]$.
          It suffices to show that:
          \begin{align}
           &\forall \ottmv{X} \in \textit{dom} \, \ottsym{(}  S_{{\mathrm{0}}}  \ottsym{)}. S_{{\mathrm{0}}}  \ottsym{(}  \ottmv{X}  \ottsym{)}  \ottsym{=}   \ottsym{(}   S'_{{\mathrm{3}}}  \circ  S'_{{\mathrm{0}}}   \ottsym{)}  \circ   S_{{\mathrm{0}}}  \circ  \ottsym{(}   S'_{{\mathrm{2}}}  \circ  S'_{{\mathrm{1}}}   \ottsym{)}    \ottsym{(}  \ottmv{X}  \ottsym{)}
            \label{req:prec_catch_up_left_value:one} \\
           & \langle   \emptyset    \vdash   \ottsym{(}  w_{{\mathrm{1}}}  \ottsym{:}   \ottnt{U_{{\mathrm{1}}}} \Rightarrow  \unskip ^ { \ell }  \! \ottnt{U}   \ottsym{)}  :  \ottnt{U}   \sqsubseteq _{  \ottsym{(}   S'_{{\mathrm{3}}}  \circ  S'_{{\mathrm{0}}}   \ottsym{)}  \circ  S_{{\mathrm{0}}}  }   S'_{{\mathrm{2}}}  \circ  S'_{{\mathrm{1}}}   \ottsym{(}  \ottmv{X'}  \ottsym{)}  :  \ottsym{(}  S'_{{\mathrm{2}}}  \ottsym{(}  w'_{{\mathrm{11}}}  \ottsym{)}  \ottsym{:}   \star  \!\rightarrow\!  \star \Rightarrow  \unskip ^ { \ell' }  \! \ottmv{X'_{{\mathrm{1}}}}  \!\rightarrow\!  \ottmv{X'_{{\mathrm{2}}}}   \ottsym{)}  \dashv   \emptyset   \rangle 
            \label{req:prec_catch_up_left_value:two} \\
           &\forall \ottmv{X} \in \textit{dom} \, \ottsym{(}  S_{{\mathrm{0}}}  \ottsym{)}. \textit{dom} \, \ottsym{(}   S'_{{\mathrm{3}}}  \circ  S'_{{\mathrm{0}}}   \ottsym{)}  \cap  \textit{ftv} \, \ottsym{(}  S_{{\mathrm{0}}}  \ottsym{(}  \ottmv{X}  \ottsym{)}  \ottsym{)}  \ottsym{=}   \emptyset 
            \label{req:prec_catch_up_left_value:three}
          \end{align}
          (\ref{req:prec_catch_up_left_value:three}) is obvious.

          For (\ref{req:prec_catch_up_left_value:one}),
          let $\ottmv{X} \in \textit{dom} \, \ottsym{(}  S_{{\mathrm{0}}}  \ottsym{)}$.
          If $\ottmv{X}  \ottsym{=}  \ottmv{X'}$, then 
          \[\begin{array}{l@{ \quad = \quad }ll}
             \ottsym{(}   S'_{{\mathrm{3}}}  \circ  S'_{{\mathrm{0}}}   \ottsym{)}  \circ  S_{{\mathrm{0}}}   \circ  \ottsym{(}   S'_{{\mathrm{2}}}  \circ  S'_{{\mathrm{1}}}   \ottsym{)}   \ottsym{(}  \ottmv{X'}  \ottsym{)} &
              \ottsym{(}   S'_{{\mathrm{3}}}  \circ  S'_{{\mathrm{0}}}   \ottsym{)}  \circ  S_{{\mathrm{0}}}   \circ  S'_{{\mathrm{2}}}   \ottsym{(}  \ottmv{X'}  \ottsym{)} & \text{(since $S'_{{\mathrm{1}}}  \ottsym{(}  \ottmv{X'}  \ottsym{)}  \ottsym{=}  \ottmv{X'}$)} \\ &
             \ottsym{(}   S'_{{\mathrm{3}}}  \circ  S'_{{\mathrm{0}}}   \ottsym{)}  \circ  S_{{\mathrm{0}}}   \ottsym{(}  \ottmv{X'_{{\mathrm{1}}}}  \!\rightarrow\!  \ottmv{X'_{{\mathrm{2}}}}  \ottsym{)} \\ &
            S'_{{\mathrm{3}}}  \ottsym{(}  \ottmv{X'_{{\mathrm{1}}}}  \!\rightarrow\!  \ottmv{X'_{{\mathrm{2}}}}  \ottsym{)} & \text{(since $\ottmv{X'_{{\mathrm{1}}}}$ and $\ottmv{X'_{{\mathrm{2}}}}$ are fresh)} \\ &
            \ottnt{T_{{\mathrm{1}}}}  \!\rightarrow\!  \ottnt{T_{{\mathrm{2}}}} \\ &
            S_{{\mathrm{0}}}  \ottsym{(}  \ottmv{X'}  \ottsym{)} & \text{(since $S_{{\mathrm{0}}}  \ottsym{(}  \ottmv{X'}  \ottsym{)}  \ottsym{=}  \ottnt{T_{{\mathrm{1}}}}  \!\rightarrow\!  \ottnt{T_{{\mathrm{2}}}}$)}
            \end{array}\]
          Otherwise, suppose $\ottmv{X}  \neq  \ottmv{X'}$.
          By case analysis on whether $\ottmv{X'} \, \in \, \textit{ftv} \, \ottsym{(}  S'_{{\mathrm{1}}}  \ottsym{(}  \ottmv{X}  \ottsym{)}  \ottsym{)}$ or not.
          \begin{caseanalysis}
           \case{$\ottmv{X'} \, \in \, \textit{ftv} \, \ottsym{(}  S'_{{\mathrm{1}}}  \ottsym{(}  \ottmv{X}  \ottsym{)}  \ottsym{)}$}
           Since $\ottmv{X'} \, \in \, \textit{ftv} \, \ottsym{(}  S'_{{\mathrm{1}}}  \ottsym{(}  \ottmv{X}  \ottsym{)}  \ottsym{)}$,
           $\ottmv{X}  \neq  \ottmv{X'}$, and $S'_{{\mathrm{1}}}$ is generated by DTI,
           $\ottmv{X'}$ is generated as a fresh type variable during evaluation of
           $f'$.
           Thus, we can suppose that $\ottmv{X'}$ does not appear in $S_{{\mathrm{0}}}$.
           However, it contradicts the fact that $S_{{\mathrm{0}}}  \ottsym{(}  \ottmv{X'}  \ottsym{)}  \ottsym{=}  \ottnt{T_{{\mathrm{1}}}}  \!\rightarrow\!  \ottnt{T_{{\mathrm{2}}}}$,
           that is, $\ottmv{X'} \, \in \, \textit{dom} \, \ottsym{(}  S_{{\mathrm{0}}}  \ottsym{)}$.

           \case{$\ottmv{X'} \, \not\in \, \textit{ftv} \, \ottsym{(}  S'_{{\mathrm{1}}}  \ottsym{(}  \ottmv{X}  \ottsym{)}  \ottsym{)}$}
           \[\begin{array}{l@{ \quad = \quad }ll}
              \ottsym{(}   S'_{{\mathrm{3}}}  \circ  S'_{{\mathrm{0}}}   \ottsym{)}  \circ  S_{{\mathrm{0}}}   \circ  \ottsym{(}   S'_{{\mathrm{2}}}  \circ  S'_{{\mathrm{1}}}   \ottsym{)}   \ottsym{(}  \ottmv{X}  \ottsym{)} &
               \ottsym{(}   S'_{{\mathrm{3}}}  \circ  S'_{{\mathrm{0}}}   \ottsym{)}  \circ  S_{{\mathrm{0}}}   \circ  S'_{{\mathrm{1}}}   \ottsym{(}  \ottmv{X}  \ottsym{)} & \text{(since $\ottmv{X'} \, \not\in \, \textit{ftv} \, \ottsym{(}  S'_{{\mathrm{1}}}  \ottsym{(}  \ottmv{X}  \ottsym{)}  \ottsym{)}$)} \\
             S'_{{\mathrm{3}}}  \circ  S_{{\mathrm{0}}}   \ottsym{(}  \ottmv{X}  \ottsym{)} & \text{(since $S_{{\mathrm{0}}}  \ottsym{(}  \ottmv{X}  \ottsym{)}  \ottsym{=}   S'_{{\mathrm{0}}}  \circ   S_{{\mathrm{0}}}  \circ  S'_{{\mathrm{1}}}    \ottsym{(}  \ottmv{X}  \ottsym{)}$)} \\
            S_{{\mathrm{0}}}  \ottsym{(}  \ottmv{X}  \ottsym{)} & \text{(since $\ottmv{X'_{{\mathrm{1}}}}$ and $\ottmv{X'_{{\mathrm{2}}}}$ are fresh)}
             \end{array}\]
          \end{caseanalysis}

          For (\ref{req:prec_catch_up_left_value:two}),
          by \rnp{P\_Cast}, it suffices to show that:
          \begin{equation}
            \langle   \emptyset    \vdash   w_{{\mathrm{1}}}  :  \ottnt{U_{{\mathrm{1}}}}   \sqsubseteq _{  \ottsym{(}   S'_{{\mathrm{3}}}  \circ  S'_{{\mathrm{0}}}   \ottsym{)}  \circ  S_{{\mathrm{0}}}  }  \star  \!\rightarrow\!  \star  :  S'_{{\mathrm{2}}}  \ottsym{(}  w'_{{\mathrm{11}}}  \ottsym{)}  \dashv   \emptyset   \rangle 
            \label{req:prec_catch_up_left_value:two-one}
          \end{equation}
          \begin{equation}
            \ottnt{U}   \sqsubseteq _{  \ottsym{(}   S'_{{\mathrm{3}}}  \circ  S'_{{\mathrm{0}}}   \ottsym{)}  \circ  S_{{\mathrm{0}}}  }   S'_{{\mathrm{2}}}  \circ  S'_{{\mathrm{1}}}   \ottsym{(}  \ottmv{X'}  \ottsym{)} 
            \label{req:prec_catch_up_left_value:two-two}
          \end{equation}

          Since
          $ \langle   \emptyset    \vdash   w_{{\mathrm{1}}}  :  \ottnt{U_{{\mathrm{1}}}}   \sqsubseteq _{  S'_{{\mathrm{0}}}  \circ  S_{{\mathrm{0}}}  }  \star  :  \ottsym{(}  w'_{{\mathrm{11}}}  \ottsym{:}   \star  \!\rightarrow\!  \star \Rightarrow  \unskip ^ { \ell'_{{\mathrm{1}}} }  \! \star   \ottsym{)}  \dashv   \emptyset   \rangle $ and
          $ \ottnt{U_{{\mathrm{1}}}}   \sqsubseteq _{  S'_{{\mathrm{0}}}  \circ  S_{{\mathrm{0}}}  }  \star  \!\rightarrow\!  \star $,
          we have
          $ \langle   \emptyset    \vdash   w_{{\mathrm{1}}}  :  \ottnt{U_{{\mathrm{1}}}}   \sqsubseteq _{  S'_{{\mathrm{0}}}  \circ  S_{{\mathrm{0}}}  }  \star  \!\rightarrow\!  \star  :  w'_{{\mathrm{11}}}  \dashv   \emptyset   \rangle $
          by Lemma~\ref{lem:term_prec_inversion1}.
          Thus, by Lemma~\ref{lem:right_subst_preserve_prec},
          (\ref{req:prec_catch_up_left_value:two-one}) holds
          if we have
          \begin{equation}
           \forall \ottmv{X} \in \textit{dom} \, \ottsym{(}   S'_{{\mathrm{0}}}  \circ  S_{{\mathrm{0}}}   \ottsym{)}.  S'_{{\mathrm{0}}}  \circ  S_{{\mathrm{0}}}   \ottsym{(}  \ottmv{X}  \ottsym{)}  \ottsym{=}   S'_{{\mathrm{3}}}  \circ   \ottsym{(}   S'_{{\mathrm{0}}}  \circ  S_{{\mathrm{0}}}   \ottsym{)}  \circ  S'_{{\mathrm{2}}}    \ottsym{(}  \ottmv{X}  \ottsym{)},
            \label{req:prec_catch_up_left_value:two-one-one}
          \end{equation}
          which we will show now.
          Let $\ottmv{X} \, \in \, \textit{dom} \, \ottsym{(}   S'_{{\mathrm{0}}}  \circ  S_{{\mathrm{0}}}   \ottsym{)}$.
          If $\ottmv{X} \, \in \, \textit{dom} \, \ottsym{(}  S_{{\mathrm{0}}}  \ottsym{)}$, then, since $\textit{dom} \, \ottsym{(}  S'_{{\mathrm{0}}}  \ottsym{)}  \cap  \textit{ftv} \, \ottsym{(}  S_{{\mathrm{0}}}  \ottsym{(}  \ottmv{X}  \ottsym{)}  \ottsym{)}  \ottsym{=}   \emptyset $,
          we have $ S'_{{\mathrm{0}}}  \circ  S_{{\mathrm{0}}}   \ottsym{(}  \ottmv{X}  \ottsym{)}  \ottsym{=}  S_{{\mathrm{0}}}  \ottsym{(}  \ottmv{X}  \ottsym{)}$.
          If $\ottmv{X}  \ottsym{=}  \ottmv{X'}$,
          $  S'_{{\mathrm{3}}}  \circ  \ottsym{(}   S'_{{\mathrm{0}}}  \circ  S_{{\mathrm{0}}}   \ottsym{)}   \circ  S'_{{\mathrm{2}}}   \ottsym{(}  \ottmv{X}  \ottsym{)} =  S'_{{\mathrm{3}}}  \circ  \ottsym{(}   S'_{{\mathrm{0}}}  \circ  S_{{\mathrm{0}}}   \ottsym{)}   \ottsym{(}  \ottmv{X_{{\mathrm{1}}}}  \!\rightarrow\!  \ottmv{X_{{\mathrm{2}}}}  \ottsym{)} = \ottnt{T_{{\mathrm{1}}}}  \!\rightarrow\!  \ottnt{T_{{\mathrm{2}}}} = S_{{\mathrm{0}}}  \ottsym{(}  \ottmv{X}  \ottsym{)}$.
          Otherwise, if $\ottmv{X}  \neq  \ottmv{X'}$,
          $  S'_{{\mathrm{3}}}  \circ  \ottsym{(}   S'_{{\mathrm{0}}}  \circ  S_{{\mathrm{0}}}   \ottsym{)}   \circ  S'_{{\mathrm{2}}}   \ottsym{(}  \ottmv{X}  \ottsym{)} =  S'_{{\mathrm{3}}}  \circ  \ottsym{(}   S'_{{\mathrm{0}}}  \circ  S_{{\mathrm{0}}}   \ottsym{)}   \ottsym{(}  \ottmv{X}  \ottsym{)} = S_{{\mathrm{0}}}  \ottsym{(}  \ottmv{X}  \ottsym{)}$.
          If $\ottmv{X} \, \not\in \, \textit{dom} \, \ottsym{(}  S_{{\mathrm{0}}}  \ottsym{)}$, then $ S'_{{\mathrm{0}}}  \circ  S_{{\mathrm{0}}}   \ottsym{(}  \ottmv{X}  \ottsym{)}  \ottsym{=}  S'_{{\mathrm{0}}}  \ottsym{(}  \ottmv{X}  \ottsym{)}$.
          Since $\ottmv{X'} \, \in \, \textit{dom} \, \ottsym{(}  S_{{\mathrm{0}}}  \ottsym{)}$,
          $  S'_{{\mathrm{3}}}  \circ  \ottsym{(}   S'_{{\mathrm{0}}}  \circ  S_{{\mathrm{0}}}   \ottsym{)}   \circ  S'_{{\mathrm{2}}}   \ottsym{(}  \ottmv{X}  \ottsym{)} = S'_{{\mathrm{0}}}  \ottsym{(}  \ottmv{X}  \ottsym{)} =  S'_{{\mathrm{0}}}  \circ  S_{{\mathrm{0}}}   \ottsym{(}  \ottmv{X}  \ottsym{)}$.
          Thus, we have (\ref{req:prec_catch_up_left_value:two-one-one}).

          Finally, (\ref{req:prec_catch_up_left_value:two-two}) is shown
          by applying Lemma~\ref{lem:right_subst_preserve_prec} to
          $ \ottnt{U}   \sqsubseteq _{ S_{{\mathrm{0}}} }  \ottmv{X'} $ and (\ref{req:prec_catch_up_left_value:one}).
        \end{caseanalysis}

        \case{$\ottnt{U'}  \ottsym{=}  \ottnt{U'_{{\mathrm{11}}}}  \!\rightarrow\!  \ottnt{U'_{{\mathrm{12}}}}$ for some $\ottnt{U'_{{\mathrm{11}}}}$ and $\ottnt{U'_{{\mathrm{12}}}}$ where $\ottnt{U'}  \neq  \star  \!\rightarrow\!  \star$}
        \leavevmode\\
        By Lemma \ref{lem:term_prec_to_type_prec}, $ \ottnt{U}   \sqsubseteq _{ S_{{\mathrm{0}}} }  \ottnt{U'_{{\mathrm{11}}}}  \!\rightarrow\!  \ottnt{U'_{{\mathrm{12}}}} $.
        By definition, $\ottnt{U}$ is a function type.
        By Lemma \ref{lem:canonical_forms}, $\ottnt{U_{{\mathrm{1}}}}$ is a function type.
        By definition, $ \ottnt{U_{{\mathrm{1}}}}   \sqsubseteq _{  S'_{{\mathrm{0}}}  \circ  S_{{\mathrm{0}}}  }  \star  \!\rightarrow\!  \star $.
        By case analysis on $\ottnt{G'}$.

        \begin{caseanalysis}
          \case{$\ottnt{G'}  \ottsym{=}  \iota$ for some $\iota$}
          We are given,
          $ \langle   \emptyset    \vdash   w_{{\mathrm{1}}}  :  \ottnt{U_{{\mathrm{1}}}}   \sqsubseteq _{  S'_{{\mathrm{0}}}  \circ  S_{{\mathrm{0}}}  }  \star  :  w'_{{\mathrm{11}}}  \ottsym{:}   \iota \Rightarrow  \unskip ^ { \ell'_{{\mathrm{1}}} }  \! \star   \dashv   \emptyset   \rangle $.
          This contradicts Lemma \ref{lem:prec_ground_contra}.

          \case{$\ottnt{G'}  \ottsym{=}  \star  \!\rightarrow\!  \star$}
          We are given,
          $ \langle   \emptyset    \vdash   w_{{\mathrm{1}}}  :  \ottnt{U_{{\mathrm{1}}}}   \sqsubseteq _{  S'_{{\mathrm{0}}}  \circ  S_{{\mathrm{0}}}  }  \star  :  w'_{{\mathrm{11}}}  \ottsym{:}   \star  \!\rightarrow\!  \star \Rightarrow  \unskip ^ { \ell'_{{\mathrm{1}}} }  \! \star   \dashv   \emptyset   \rangle $.
          By definition,
          $w'_{{\mathrm{11}}}  \ottsym{:}   \star  \!\rightarrow\!  \star \Rightarrow  \unskip ^ { \ell'_{{\mathrm{1}}} }  \!  \star \Rightarrow  \unskip ^ { \ell' }  \! S'_{{\mathrm{1}}}  \ottsym{(}  \ottnt{U'_{{\mathrm{11}}}}  \!\rightarrow\!  \ottnt{U'_{{\mathrm{12}}}}  \ottsym{)}   \,  \xmapsto{ \mathmakebox[0.4em]{} [  ] \mathmakebox[0.3em]{} }\hspace{-0.4em}{}^\ast \hspace{0.2em}  \, w'_{{\mathrm{11}}}  \ottsym{:}   \star  \!\rightarrow\!  \star \Rightarrow  \unskip ^ { \ell' }  \! S'_{{\mathrm{1}}}  \ottsym{(}  \ottnt{U'_{{\mathrm{11}}}}  \!\rightarrow\!  \ottnt{U'_{{\mathrm{12}}}}  \ottsym{)} $.
          By Lemma \ref{lem:term_prec_inversion1},
          $ \langle   \emptyset    \vdash   w_{{\mathrm{1}}}  :  \ottnt{U_{{\mathrm{1}}}}   \sqsubseteq _{  S'_{{\mathrm{0}}}  \circ  S_{{\mathrm{0}}}  }  \star  \!\rightarrow\!  \star  :  w'_{{\mathrm{11}}}  \dashv   \emptyset   \rangle $
          By Lemma \ref{lem:right_subst_preserve_prec},
          $ \ottnt{U}   \sqsubseteq _{  S'_{{\mathrm{0}}}  \circ  S_{{\mathrm{0}}}  }  S'_{{\mathrm{1}}}  \ottsym{(}  \ottnt{U'_{{\mathrm{11}}}}  \!\rightarrow\!  \ottnt{U'_{{\mathrm{12}}}}  \ottsym{)} $.
          We finish by \rnp{P\_Cast}.
        \end{caseanalysis}
      \end{caseanalysis}

      \case{$\ottnt{U'_{{\mathrm{1}}}}  \sim  \star$ where $\ottnt{U'_{{\mathrm{1}}}}  \neq  \star$}
      By case analysis on $\ottnt{U'_{{\mathrm{1}}}}$.

      \begin{caseanalysis}
        \case{$S'_{{\mathrm{1}}}  \ottsym{(}  \ottnt{U'_{{\mathrm{1}}}}  \ottsym{)}$ is a ground type}
        Obvious.

        \case{$S'_{{\mathrm{1}}}  \ottsym{(}  \ottnt{U'_{{\mathrm{1}}}}  \ottsym{)}$ is not a ground type}
        There exist $\ottnt{U'_{{\mathrm{11}}}}$ and $\ottnt{U'_{{\mathrm{12}}}}$ such that $S'_{{\mathrm{1}}}  \ottsym{(}  \ottnt{U'_{{\mathrm{1}}}}  \ottsym{)}  \ottsym{=}  \ottnt{U'_{{\mathrm{11}}}}  \!\rightarrow\!  \ottnt{U'_{{\mathrm{12}}}}$.
        Since
        $ \langle   \emptyset    \vdash   w_{{\mathrm{1}}}  :  \ottnt{U_{{\mathrm{1}}}}   \sqsubseteq _{  S'_{{\mathrm{0}}}  \circ  S_{{\mathrm{0}}}  }  S'_{{\mathrm{1}}}  \ottsym{(}  \ottnt{U'_{{\mathrm{1}}}}  \ottsym{)}  :  w'_{{\mathrm{1}}}  \dashv   \emptyset   \rangle $,
        we have $ \ottnt{U_{{\mathrm{1}}}}   \sqsubseteq _{  S'_{{\mathrm{0}}}  \circ  S_{{\mathrm{0}}}  }  S'_{{\mathrm{1}}}  \ottsym{(}  \ottnt{U'_{{\mathrm{1}}}}  \ottsym{)} $
        by Lemma~\ref{lem:term_prec_to_type_prec}.
        By definition, $\ottnt{U_{{\mathrm{1}}}}$ is a function type.
        By Lemma \ref{lem:canonical_forms},
        $\ottnt{U}$ is the dynamic type or a function type.
        By case analysis on $\ottnt{U}$.

        \begin{caseanalysis}
          \case{$\ottnt{U}  \ottsym{=}  \star$}
          By Lemma \ref{lem:canonical_forms}, $\ottnt{U_{{\mathrm{1}}}}  \ottsym{=}  \star  \!\rightarrow\!  \star$.
          Here, $ \star  \!\rightarrow\!  \star   \sqsubseteq _{  S'_{{\mathrm{0}}}  \circ  S_{{\mathrm{0}}}  }  S'_{{\mathrm{1}}}  \ottsym{(}  \ottnt{U'_{{\mathrm{1}}}}  \ottsym{)} $
          and $S'_{{\mathrm{1}}}  \ottsym{(}  \ottnt{U'_{{\mathrm{1}}}}  \ottsym{)}$ is not a ground type.
          Contradiction.

          \case{$\ottnt{U}$ is a function type}
          Since $\ottnt{U'}  \ottsym{=}  \star$ and $S'_{{\mathrm{1}}}  \ottsym{(}  \ottnt{U'_{{\mathrm{1}}}}  \ottsym{)}$ is not a ground type,
          $w'_{{\mathrm{1}}}  \ottsym{:}   \ottnt{U'_{{\mathrm{11}}}}  \!\rightarrow\!  \ottnt{U'_{{\mathrm{12}}}} \Rightarrow  \unskip ^ { \ell' }  \! \star  \,  \xmapsto{ \mathmakebox[0.4em]{} [  ] \mathmakebox[0.3em]{} }  \, w'_{{\mathrm{1}}}  \ottsym{:}   \ottnt{U'_{{\mathrm{11}}}}  \!\rightarrow\!  \ottnt{U'_{{\mathrm{12}}}} \Rightarrow  \unskip ^ { \ell' }  \!  \star  \!\rightarrow\!  \star \Rightarrow  \unskip ^ { \ell' }  \! \star  $
          by \rnp{R\_Ground}.
          We finish by \rnp{P\_CastR} and \rnp{P\_Cast}.
        \end{caseanalysis}
      \end{caseanalysis}

      \case{$\ottnt{U'_{{\mathrm{11}}}}  \!\rightarrow\!  \ottnt{U'_{{\mathrm{12}}}}  \sim  \ottnt{U'_{{\mathrm{13}}}}  \!\rightarrow\!  \ottnt{U'_{{\mathrm{14}}}}$ for some $\ottnt{U'_{{\mathrm{11}}}}$, $\ottnt{U'_{{\mathrm{12}}}}$, $\ottnt{U'_{{\mathrm{13}}}}$, and $\ottnt{U'_{{\mathrm{14}}}}$}
      Since $ \ottnt{U}   \sqsubseteq _{ S_{{\mathrm{0}}} }  \ottnt{U'} $,
      $ \ottnt{U}   \sqsubseteq _{  S'_{{\mathrm{0}}}  \circ  S_{{\mathrm{0}}}  }  S'_{{\mathrm{1}}}  \ottsym{(}  \ottnt{U'}  \ottsym{)} $
      by Lemma~\ref{lem:right_subst_preserve_prec}.
      Thus, we finish by \rnp{P\_Cast}.
    \end{caseanalysis}

    \case{\rnp{P\_CastL}}
    Here, $ \langle   \emptyset    \vdash   w_{{\mathrm{1}}}  \ottsym{:}   \ottnt{U_{{\mathrm{1}}}} \Rightarrow  \unskip ^ { \ell }  \! \ottnt{U}   :  \ottnt{U}   \sqsubseteq _{ S_{{\mathrm{0}}} }  \ottnt{U'}  :  f'  \dashv   \emptyset   \rangle $,
    where $w  \ottsym{=}  w_{{\mathrm{1}}}  \ottsym{:}   \ottnt{U_{{\mathrm{1}}}} \Rightarrow  \unskip ^ { \ell }  \! \ottnt{U} $ for some $w_{{\mathrm{1}}}$ and $\ottnt{U_{{\mathrm{1}}}}$.
    By inversion,
    \begin{itemize}
     \item $ \langle   \emptyset    \vdash   w_{{\mathrm{1}}}  :  \ottnt{U_{{\mathrm{1}}}}   \sqsubseteq _{ S_{{\mathrm{0}}} }  \ottnt{U'}  :  f'  \dashv   \emptyset   \rangle $,
     \item $ \ottnt{U}   \sqsubseteq _{ S_{{\mathrm{0}}} }  \ottnt{U'} $, and
     \item $\ottnt{U_{{\mathrm{1}}}}  \sim  \ottnt{U}$.
    \end{itemize}
    By the IH, there exist $S'_{{\mathrm{0}}}$, $S'$, and $w'$ such that
    \begin{itemize}
     \item $f' \,  \xmapsto{ \mathmakebox[0.4em]{} S' \mathmakebox[0.3em]{} }\hspace{-0.4em}{}^\ast \hspace{0.2em}  \, w'$,
     \item $ \langle   \emptyset    \vdash   w_{{\mathrm{1}}}  :  \ottnt{U_{{\mathrm{1}}}}   \sqsubseteq _{  S'_{{\mathrm{0}}}  \circ  S_{{\mathrm{0}}}  }  S'  \ottsym{(}  \ottnt{U'}  \ottsym{)}  :  w'  \dashv   \emptyset   \rangle $,
     \item $\forall \ottmv{X} \in \textit{dom} \, \ottsym{(}  S_{{\mathrm{0}}}  \ottsym{)}. S_{{\mathrm{0}}}  \ottsym{(}  \ottmv{X}  \ottsym{)}  \ottsym{=}   S'_{{\mathrm{0}}}  \circ   S_{{\mathrm{0}}}  \circ  S'    \ottsym{(}  \ottmv{X}  \ottsym{)}$,
     \item $\forall \ottmv{X} \in \textit{dom} \, \ottsym{(}  S_{{\mathrm{0}}}  \ottsym{)}. \textit{dom} \, \ottsym{(}  S'_{{\mathrm{0}}}  \ottsym{)}  \cap  \textit{ftv} \, \ottsym{(}  S_{{\mathrm{0}}}  \ottsym{(}  \ottmv{X}  \ottsym{)}  \ottsym{)}  \ottsym{=}   \emptyset $, and
     \item $\forall \ottmv{X} \in \textit{dom} \, \ottsym{(}  S'  \ottsym{)}. \textit{ftv} \, \ottsym{(}  S'  \ottsym{(}  \ottmv{X}  \ottsym{)}  \ottsym{)}  \subseteq  \textit{dom} \, \ottsym{(}  S'_{{\mathrm{0}}}  \ottsym{)}$.
    \end{itemize}
    Since $ \ottnt{U}   \sqsubseteq _{ S_{{\mathrm{0}}} }  \ottnt{U'} $,
    $ \ottnt{U}   \sqsubseteq _{  S'_{{\mathrm{0}}}  \circ  S_{{\mathrm{0}}}  }  S'  \ottsym{(}  \ottnt{U'}  \ottsym{)} $
    by Lemma~\ref{lem:right_subst_preserve_prec}.
    Thus, we finish by \rnp{P\_CastL}.

    \case{\rnp{P\_CastR}}
    Similar to the case of \rnp{P\_Cast}.
    Here, $ \langle   \emptyset    \vdash   w  :  \ottnt{U}   \sqsubseteq _{ S_{{\mathrm{0}}} }  \ottnt{U'}  :  f'_{{\mathrm{1}}}  \ottsym{:}   \ottnt{U'_{{\mathrm{1}}}} \Rightarrow  \unskip ^ { \ell' }  \! \ottnt{U'}   \dashv   \emptyset   \rangle $,
    where $f'  \ottsym{=}  f'_{{\mathrm{1}}}  \ottsym{:}   \ottnt{U'_{{\mathrm{1}}}} \Rightarrow  \unskip ^ { \ell' }  \! \ottnt{U'} $ for some $f'_{{\mathrm{1}}}$ and $\ottnt{U'_{{\mathrm{1}}}}$.
    By inversion,
    \begin{itemize}
     \item $ \langle   \emptyset    \vdash   w  :  \ottnt{U}   \sqsubseteq _{ S_{{\mathrm{0}}} }  \ottnt{U'_{{\mathrm{1}}}}  :  f'_{{\mathrm{1}}}  \dashv   \emptyset   \rangle $,
     \item $ \ottnt{U}   \sqsubseteq _{ S_{{\mathrm{0}}} }  \ottnt{U'} $, and
     \item $\ottnt{U'_{{\mathrm{1}}}}  \sim  \ottnt{U'}$.
    \end{itemize}
    By the IH, there exist $S'_{{\mathrm{0}}}$, $S'_{{\mathrm{1}}}$ and $w'_{{\mathrm{1}}}$ such that
    \begin{itemize}
     \item $f'_{{\mathrm{1}}} \,  \xmapsto{ \mathmakebox[0.4em]{} S'_{{\mathrm{1}}} \mathmakebox[0.3em]{} }\hspace{-0.4em}{}^\ast \hspace{0.2em}  \, w'_{{\mathrm{1}}}$,
     \item $ \langle   \emptyset    \vdash   w  :  \ottnt{U}   \sqsubseteq _{  S'_{{\mathrm{0}}}  \circ  S_{{\mathrm{0}}}  }  S'_{{\mathrm{1}}}  \ottsym{(}  \ottnt{U'_{{\mathrm{1}}}}  \ottsym{)}  :  w'_{{\mathrm{1}}}  \dashv   \emptyset   \rangle $,
     \item $\forall \ottmv{X} \in \textit{dom} \, \ottsym{(}  S_{{\mathrm{0}}}  \ottsym{)}. S_{{\mathrm{0}}}  \ottsym{(}  \ottmv{X}  \ottsym{)}  \ottsym{=}   S'_{{\mathrm{0}}}  \circ   S_{{\mathrm{0}}}  \circ  S'_{{\mathrm{1}}}    \ottsym{(}  \ottmv{X}  \ottsym{)}$,
     \item $\forall \ottmv{X} \in \textit{dom} \, \ottsym{(}  S_{{\mathrm{0}}}  \ottsym{)}. \textit{dom} \, \ottsym{(}  S'_{{\mathrm{0}}}  \ottsym{)}  \cap  \textit{ftv} \, \ottsym{(}  S_{{\mathrm{0}}}  \ottsym{(}  \ottmv{X}  \ottsym{)}  \ottsym{)}  \ottsym{=}   \emptyset $, and
     \item $\forall \ottmv{X} \in \textit{dom} \, \ottsym{(}  S'_{{\mathrm{1}}}  \ottsym{)}. \textit{ftv} \, \ottsym{(}  S'_{{\mathrm{1}}}  \ottsym{(}  \ottmv{X}  \ottsym{)}  \ottsym{)}  \subseteq  \textit{dom} \, \ottsym{(}  S'_{{\mathrm{0}}}  \ottsym{)}$.
    \end{itemize}
    Thus, by \rnp{E\_Step},
    \[
     f'_{{\mathrm{1}}}  \ottsym{:}   \ottnt{U'_{{\mathrm{1}}}} \Rightarrow  \unskip ^ { \ell' }  \! \ottnt{U'}  \,  \xmapsto{ \mathmakebox[0.4em]{} S'_{{\mathrm{1}}} \mathmakebox[0.3em]{} }\hspace{-0.4em}{}^\ast \hspace{0.2em}  \, w'_{{\mathrm{1}}}  \ottsym{:}   S'_{{\mathrm{1}}}  \ottsym{(}  \ottnt{U'_{{\mathrm{1}}}}  \ottsym{)} \Rightarrow  \unskip ^ { \ell' }  \! S'_{{\mathrm{1}}}  \ottsym{(}  \ottnt{U'}  \ottsym{)} .
    \]
    By case analysis on $\ottnt{U'_{{\mathrm{1}}}}  \sim  \ottnt{U'}$.

    \begin{caseanalysis}
      \case{$\iota  \sim  \iota$ for some $\iota$}
      Here, $S'_{{\mathrm{1}}}  \ottsym{(}  \iota  \ottsym{)}  \ottsym{=}  \iota$.
      By definition, $w'_{{\mathrm{1}}}  \ottsym{:}   \iota \Rightarrow  \unskip ^ { \ell' }  \! \iota  \,  \xmapsto{ \mathmakebox[0.4em]{} [  ] \mathmakebox[0.3em]{} }  \, w'_{{\mathrm{1}}}$.
      Obvious.

      \case{$\ottmv{X'}  \sim  \ottmv{X'}$ for some $\ottmv{X'}$}
      By case analysis on $S'_{{\mathrm{1}}}  \ottsym{(}  \ottmv{X'}  \ottsym{)}$.
      \begin{caseanalysis}
        \case{$S'_{{\mathrm{1}}}  \ottsym{(}  \ottmv{X'}  \ottsym{)}  \ottsym{=}  \iota$}
        Similar to the case of $\iota  \sim  \iota$.

        \case{$S'_{{\mathrm{1}}}  \ottsym{(}  \ottmv{X'}  \ottsym{)}  \ottsym{=}  \ottmv{X''}$ for some $\ottmv{X''}$}
        By Lemma \ref{lem:term_prec_to_typing}, $ \emptyset   \vdash  w'_{{\mathrm{1}}}  \ottsym{:}  \ottmv{X''}$.
        This contradicts Lemma \ref{lem:canonical_forms}.

        \case{$S'_{{\mathrm{1}}}  \ottsym{(}  \ottmv{X'}  \ottsym{)}$ is a function type}
        By Lemma \ref{lem:term_prec_to_type_prec},
        $ \ottnt{U}   \sqsubseteq _{  S'_{{\mathrm{0}}}  \circ  S_{{\mathrm{0}}}  }  S'_{{\mathrm{1}}}  \ottsym{(}  \ottmv{X'}  \ottsym{)} $.
        We finish by \rnp{P\_CastR}.
      \end{caseanalysis}

      \case{$\star  \sim  \star$}
      Here, $S'_{{\mathrm{1}}}  \ottsym{(}  \star  \ottsym{)}  \ottsym{=}  \star$.
      By \rnp{R\_IdStar}, $w'_{{\mathrm{1}}}  \ottsym{:}   \star \Rightarrow  \unskip ^ { \ell' }  \! \star  \,  \xmapsto{ \mathmakebox[0.4em]{} [  ] \mathmakebox[0.3em]{} }  \, w'_{{\mathrm{1}}}$.
      Obvious.

      \case{$\star  \sim  \ottnt{U'}$ where $\ottnt{U'}  \neq  \star$}
      Here, $S'_{{\mathrm{1}}}  \ottsym{(}  \star  \ottsym{)}  \ottsym{=}  \star$.
      By Lemma \ref{lem:canonical_forms},
      there exist $w'_{{\mathrm{11}}}$ and $\ottnt{G'}$ such that $w'_{{\mathrm{1}}}  \ottsym{=}  w'_{{\mathrm{11}}}  \ottsym{:}   \ottnt{G'} \Rightarrow  \unskip ^ { \ell'_{{\mathrm{1}}} }  \! \star $.
      By case analysis on $\ottnt{U'}$.

      \begin{caseanalysis}
        \case{$\ottnt{U'}  \ottsym{=}  \iota$ for some $\iota$}
        By case analysis on $\ottnt{G'}$.
        \begin{caseanalysis}
          \case{$\ottnt{G'}  \ottsym{=}  \iota$}
          By \rnp{R\_Succeed}, $w'_{{\mathrm{11}}}  \ottsym{:}   \iota \Rightarrow  \unskip ^ { \ell'_{{\mathrm{1}}} }  \!  \star \Rightarrow  \unskip ^ { \ell' }  \! \iota   \,  \xmapsto{ \mathmakebox[0.4em]{} [  ] \mathmakebox[0.3em]{} }  \, w'_{{\mathrm{11}}}$.
          By Lemma \ref{lem:term_prec_inversion1},
          $ \langle   \emptyset    \vdash   w  :  \ottnt{U}   \sqsubseteq _{  S'_{{\mathrm{0}}}  \circ  S_{{\mathrm{0}}}  }  \iota  :  w'_{{\mathrm{11}}}  \dashv   \emptyset   \rangle $.

          \case{$\ottnt{G'}  \neq  \iota$}
          This contradicts Lemma \ref{lem:prec_ground_contra}.
        \end{caseanalysis}

        \case{$\ottnt{U'}  \ottsym{=}  \star  \!\rightarrow\!  \star$}
        By case analysis on $\ottnt{G'}$.
        \begin{caseanalysis}
          \case{$\ottnt{G'}  \ottsym{=}  \star  \!\rightarrow\!  \star$}
          By \rnp{R\_Succeed}, $w'_{{\mathrm{11}}}  \ottsym{:}   \star  \!\rightarrow\!  \star \Rightarrow  \unskip ^ { \ell'_{{\mathrm{1}}} }  \!  \star \Rightarrow  \unskip ^ { \ell' }  \! \star  \!\rightarrow\!  \star   \,  \xmapsto{ \mathmakebox[0.4em]{} [  ] \mathmakebox[0.3em]{} }  \, w'_{{\mathrm{11}}}$.
          By Lemma \ref{lem:term_prec_inversion1},
          $ \langle   \emptyset    \vdash   w  :  \ottnt{U}   \sqsubseteq _{  S'_{{\mathrm{0}}}  \circ  S_{{\mathrm{0}}}  }  \star  \!\rightarrow\!  \star  :  w'_{{\mathrm{11}}}  \dashv   \emptyset   \rangle $.

          \case{$\ottnt{G'}  \neq  \star  \!\rightarrow\!  \star$}
          This contradicts Lemma \ref{lem:prec_ground_contra}.
        \end{caseanalysis}

        \case{$\ottnt{U'}  \ottsym{=}  \ottmv{X'}$ for some $\ottmv{X'}$}
        If $S'_{{\mathrm{1}}}  \ottsym{(}  \ottmv{X'}  \ottsym{)}$ is not a type variable,
        then we can prove as the other cases.

        Here, $S'_{{\mathrm{1}}}  \ottsym{(}  \ottmv{X'}  \ottsym{)}  \ottsym{=}  \ottmv{X'}$ because $S'_{{\mathrm{1}}}$ is generated by DTI.
        By case analysis on $\ottnt{G'}$.
        \begin{caseanalysis}
          \case{$\ottnt{G'}  \ottsym{=}  \iota$ for some $\iota$}
          By case analysis on $\ottnt{U}$.

          \begin{caseanalysis}
            \case{$\ottnt{U}  \ottsym{=}  \iota$}
            By definition, $ \ottnt{U}   \sqsubseteq _{  S'_{{\mathrm{0}}}  \circ  S_{{\mathrm{0}}}  }  \iota $.
            By Lemma~\ref{lem:term_prec_inversion1},
            $ \langle   \emptyset    \vdash   w  :  \ottnt{U}   \sqsubseteq _{  S'_{{\mathrm{0}}}  \circ  S_{{\mathrm{0}}}  }  \iota  :  w'_{{\mathrm{11}}}  \dashv   \emptyset   \rangle $.

            By \rnp{R\_InstBase}, $w'_{{\mathrm{11}}}  \ottsym{:}   \iota \Rightarrow  \unskip ^ { \ell'_{{\mathrm{1}}} }  \!  \star \Rightarrow  \unskip ^ { \ell' }  \! \ottmv{X'}   \,  \xmapsto{ \mathmakebox[0.4em]{} S'_{{\mathrm{2}}} \mathmakebox[0.3em]{} }  \, S'_{{\mathrm{2}}}  \ottsym{(}  w'_{{\mathrm{11}}}  \ottsym{)}$
            where $S'_{{\mathrm{2}}}  \ottsym{=}  [  \ottmv{X'}  :=  \iota  ]$.

            We show
            \begin{equation}
             \forall \ottmv{X} \in \textit{dom} \, \ottsym{(}  S_{{\mathrm{0}}}  \ottsym{)}. S_{{\mathrm{0}}}  \ottsym{(}  \ottmv{X}  \ottsym{)}  \ottsym{=}   S'_{{\mathrm{0}}}  \circ   S_{{\mathrm{0}}}  \circ  \ottsym{(}   S'_{{\mathrm{2}}}  \circ  S'_{{\mathrm{1}}}   \ottsym{)}    \ottsym{(}  \ottmv{X}  \ottsym{)}.
              \label{req:prec_catch_up_left_value:four}
            \end{equation}
            Let $\ottmv{X} \, \in \, \textit{dom} \, \ottsym{(}  S_{{\mathrm{0}}}  \ottsym{)}$.
            If $\ottmv{X}  \ottsym{=}  \ottmv{X'}$, then, since $S'_{{\mathrm{1}}}$ is generated by DTI, $S'_{{\mathrm{1}}}  \ottsym{(}  \ottmv{X'}  \ottsym{)}  \ottsym{=}  \ottmv{X'}$.
            Thus, $  S'_{{\mathrm{0}}}  \circ  S_{{\mathrm{0}}}   \circ  \ottsym{(}   S'_{{\mathrm{2}}}  \circ  S'_{{\mathrm{1}}}   \ottsym{)}   \ottsym{(}  \ottmv{X'}  \ottsym{)}  \ottsym{=}  \iota$.
            Since $ \ottnt{U}   \sqsubseteq _{ S_{{\mathrm{0}}} }  \ottmv{X'} $ (note that $\ottnt{U'}  \ottsym{=}  \ottmv{X'}$) and
            $\ottnt{U}  \ottsym{=}  \iota$,
            we have $S_{{\mathrm{0}}}  \ottsym{(}  \ottmv{X'}  \ottsym{)}  \ottsym{=}  \iota$.
            Thus, $S_{{\mathrm{0}}}  \ottsym{(}  \ottmv{X'}  \ottsym{)}  \ottsym{=}   S'_{{\mathrm{0}}}  \circ   S_{{\mathrm{0}}}  \circ  \ottsym{(}   S'_{{\mathrm{2}}}  \circ  S'_{{\mathrm{1}}}   \ottsym{)}    \ottsym{(}  \ottmv{X'}  \ottsym{)}$.
            Otherwise, suppose that $\ottmv{X}  \neq  \ottmv{X'}$.
            By case analysis on whether $\ottmv{X'} \, \in \, \textit{ftv} \, \ottsym{(}  S'_{{\mathrm{1}}}  \ottsym{(}  \ottmv{X}  \ottsym{)}  \ottsym{)}$ or not.
            \begin{caseanalysis}
             \case{$\ottmv{X'} \, \in \, \textit{ftv} \, \ottsym{(}  S'_{{\mathrm{1}}}  \ottsym{(}  \ottmv{X}  \ottsym{)}  \ottsym{)}$}
             Since $\ottmv{X'} \, \in \, \textit{ftv} \, \ottsym{(}  S'_{{\mathrm{1}}}  \ottsym{(}  \ottmv{X}  \ottsym{)}  \ottsym{)}$,
             $\ottmv{X}  \neq  \ottmv{X'}$, and $S'_{{\mathrm{1}}}$ is generated by DTI,
             $\ottmv{X'}$ is generated as a fresh type variable during evaluation of
             $f'$.
             Thus, we can suppose that $\ottmv{X'}$ does not appear in $S_{{\mathrm{0}}}$.
             However, it contradicts the fact that $S_{{\mathrm{0}}}  \ottsym{(}  \ottmv{X'}  \ottsym{)}  \ottsym{=}  \iota$,
             that is, $\ottmv{X'} \, \in \, \textit{dom} \, \ottsym{(}  S_{{\mathrm{0}}}  \ottsym{)}$.

             \case{$\ottmv{X'} \, \not\in \, \textit{ftv} \, \ottsym{(}  S'_{{\mathrm{1}}}  \ottsym{(}  \ottmv{X}  \ottsym{)}  \ottsym{)}$}
              We have $S_{{\mathrm{0}}}  \ottsym{(}  \ottmv{X}  \ottsym{)} =   S'_{{\mathrm{0}}}  \circ  S_{{\mathrm{0}}}   \circ  S'_{{\mathrm{1}}}   \ottsym{(}  \ottmv{X}  \ottsym{)} =   S'_{{\mathrm{0}}}  \circ  S_{{\mathrm{0}}}   \circ  \ottsym{(}   S'_{{\mathrm{2}}}  \circ  S'_{{\mathrm{1}}}   \ottsym{)}   \ottsym{(}  \ottmv{X}  \ottsym{)}$.
            \end{caseanalysis}

            By Lemma \ref{lem:right_subst_preserve_prec} with
            (\ref{req:prec_catch_up_left_value:four}), we have
            $ \langle   \emptyset    \vdash   w  :  \ottnt{U}   \sqsubseteq _{  S'_{{\mathrm{0}}}  \circ  S_{{\mathrm{0}}}  }  \iota  :  S'_{{\mathrm{2}}}  \ottsym{(}  w'_{{\mathrm{11}}}  \ottsym{)}  \dashv   \emptyset   \rangle $.

            \case{$\ottnt{U}  \ottsym{=}  \ottnt{G}$ where $\ottnt{G}  \neq  \iota$}
            This contradicts Lemma \ref{lem:prec_ground_contra}.

            \case{$\ottnt{U}$ is a function type}
            By definition, $ \ottnt{U}   \sqsubseteq _{  S'_{{\mathrm{0}}}  \circ  S_{{\mathrm{0}}}  }  \star  \!\rightarrow\!  \star $.
            This contradicts Lemma \ref{lem:prec_ground_contra}.

            \case{$\ottnt{U}  \ottsym{=}  \star$}
            Cannot happen since $ \ottnt{U}   \sqsubseteq _{ S_{{\mathrm{0}}} }  \ottnt{U'} $.

            \case{$\ottnt{U}  \ottsym{=}  \ottmv{X''}$ for some $\ottmv{X''}$}
            Cannot happen by Lemma~\ref{lem:canonical_forms}.
          \end{caseanalysis}

          \case{$\ottnt{G'}  \ottsym{=}  \star  \!\rightarrow\!  \star$}
          By case analysis on $\ottnt{U}$.

          \begin{caseanalysis}
            \case{$\ottnt{U}  \ottsym{=}  \ottnt{G}$ where $\ottnt{G}  \neq  \star  \!\rightarrow\!  \star$}
            Cannot happen by Lemma \ref{lem:prec_ground_contra}.

            \case{$\ottnt{U}  \ottsym{=}  \star$}
            Cannot happen since $ \ottnt{U}   \sqsubseteq _{ S_{{\mathrm{0}}} }  \ottnt{U'} $.

            \case{$\ottnt{U}$ is a function type}
            By definition, $ \ottnt{U}   \sqsubseteq _{  S'_{{\mathrm{0}}}  \circ  S_{{\mathrm{0}}}  }  \star  \!\rightarrow\!  \star $.
            By Lemma \ref{lem:term_prec_inversion1},
            $ \langle   \emptyset    \vdash   w  :  \ottnt{U}   \sqsubseteq _{  S'_{{\mathrm{0}}}  \circ  S_{{\mathrm{0}}}  }  \star  \!\rightarrow\!  \star  :  w'_{{\mathrm{11}}}  \dashv   \emptyset   \rangle $.

            By \rnp{R\_InstArrow}, $w'_{{\mathrm{11}}}  \ottsym{:}   \star  \!\rightarrow\!  \star \Rightarrow  \unskip ^ { \ell'_{{\mathrm{1}}} }  \!  \star \Rightarrow  \unskip ^ { \ell' }  \! \ottmv{X'}   \,  \xmapsto{ \mathmakebox[0.4em]{} S'_{{\mathrm{2}}} \mathmakebox[0.3em]{} }\hspace{-0.4em}{}^\ast \hspace{0.2em}  \, S'_{{\mathrm{2}}}  \ottsym{(}  w'_{{\mathrm{11}}}  \ottsym{)}  \ottsym{:}   \star  \!\rightarrow\!  \star \Rightarrow  \unskip ^ { \ell' }  \! \ottmv{X'_{{\mathrm{1}}}}  \!\rightarrow\!  \ottmv{X'_{{\mathrm{2}}}} $
            where $S'_{{\mathrm{2}}}  \ottsym{=}  [  \ottmv{X'}  :=  \ottmv{X'_{{\mathrm{1}}}}  \!\rightarrow\!  \ottmv{X'_{{\mathrm{2}}}}  ]$ and $\ottmv{X'_{{\mathrm{1}}}}$ and $\ottmv{X'_{{\mathrm{2}}}}$ are fresh.
            Since $ \ottnt{U}   \sqsubseteq _{ S_{{\mathrm{0}}} }  \ottnt{U'} $ and $\ottnt{U'}  \ottsym{=}  \ottmv{X'}$, we have
            there exist some $\ottnt{T_{{\mathrm{1}}}}$ and $\ottnt{T_{{\mathrm{2}}}}$ such that
            $S_{{\mathrm{0}}}  \ottsym{(}  \ottmv{X'}  \ottsym{)} = \ottnt{U} = \ottnt{T_{{\mathrm{1}}}}  \!\rightarrow\!  \ottnt{U_{{\mathrm{2}}}}$.

            Let $S'_{{\mathrm{3}}}  \ottsym{=}  [  \ottmv{X'_{{\mathrm{1}}}}  :=  \ottnt{T_{{\mathrm{1}}}}  \ottsym{,}  \ottmv{X'_{{\mathrm{2}}}}  :=  \ottnt{T_{{\mathrm{2}}}}  ]$.
            It suffices to show that:
            \begin{align}
             \forall \ottmv{X} \in \textit{dom} \, \ottsym{(}  S_{{\mathrm{0}}}  \ottsym{)}. S_{{\mathrm{0}}}  \ottsym{(}  \ottmv{X}  \ottsym{)}  \ottsym{=}   \ottsym{(}   S'_{{\mathrm{3}}}  \circ  S'_{{\mathrm{0}}}   \ottsym{)}  \circ   S_{{\mathrm{0}}}  \circ  \ottsym{(}   S'_{{\mathrm{2}}}  \circ  S'_{{\mathrm{1}}}   \ottsym{)}    \ottsym{(}  \ottmv{X}  \ottsym{)}
              \label{req:prec_catch_up_left_value:five} \\
              \langle   \emptyset    \vdash   w  :  \ottnt{U}   \sqsubseteq _{  \ottsym{(}   S'_{{\mathrm{3}}}  \circ  S'_{{\mathrm{0}}}   \ottsym{)}  \circ  S_{{\mathrm{0}}}  }  \ottsym{(}   S'_{{\mathrm{2}}}  \circ  S'_{{\mathrm{1}}}   \ottsym{)}  \ottsym{(}  \ottnt{U'}  \ottsym{)}  :  \ottsym{(}   S'_{{\mathrm{2}}}  \circ  S'_{{\mathrm{1}}}   \ottsym{)}  \ottsym{(}  w'_{{\mathrm{11}}}  \ottsym{:}   \star  \!\rightarrow\!  \star \Rightarrow  \unskip ^ { \ell' }  \! \ottmv{X'_{{\mathrm{1}}}}  \!\rightarrow\!  \ottmv{X'_{{\mathrm{2}}}}   \ottsym{)}  \dashv   \emptyset   \rangle 
              \label{req:prec_catch_up_left_value:six} \\
             \forall \ottmv{X} \in \textit{dom} \, \ottsym{(}  S_{{\mathrm{0}}}  \ottsym{)}. \textit{dom} \, \ottsym{(}   S'_{{\mathrm{3}}}  \circ  S'_{{\mathrm{0}}}   \ottsym{)}  \cap  \textit{ftv} \, \ottsym{(}  S_{{\mathrm{0}}}  \ottsym{(}  \ottmv{X}  \ottsym{)}  \ottsym{)}  \ottsym{=}   \emptyset 
              \label{req:prec_catch_up_left_value:seven}
            \end{align}
            (\ref{req:prec_catch_up_left_value:seven}) is obvious.

            We first show that
            \begin{equation}
             \forall \ottmv{X} \in \textit{dom} \, \ottsym{(}   S'_{{\mathrm{0}}}  \circ  S_{{\mathrm{0}}}   \ottsym{)}.  S'_{{\mathrm{0}}}  \circ  S_{{\mathrm{0}}}   \ottsym{(}  \ottmv{X}  \ottsym{)}  \ottsym{=}   S'_{{\mathrm{3}}}  \circ    S'_{{\mathrm{0}}}  \circ  S_{{\mathrm{0}}}   \circ  S'_{{\mathrm{2}}}    \ottsym{(}  \ottmv{X}  \ottsym{)}.
              \label{req:prec_catch_up_left_value:five-one}
            \end{equation}
            Let $\ottmv{X} \, \in \, \textit{dom} \, \ottsym{(}  S_{{\mathrm{0}}}  \ottsym{)}$.
            If $\ottmv{X}  \ottsym{=}  \ottmv{X'}$, then
            $   S'_{{\mathrm{3}}}  \circ  S'_{{\mathrm{0}}}   \circ  S_{{\mathrm{0}}}   \circ  S'_{{\mathrm{2}}}   \ottsym{(}  \ottmv{X'}  \ottsym{)} =
               S'_{{\mathrm{3}}}  \circ  S'_{{\mathrm{0}}}   \circ  S_{{\mathrm{0}}}   \ottsym{(}  \ottmv{X'_{{\mathrm{1}}}}  \!\rightarrow\!  \ottmv{X'_{{\mathrm{2}}}}  \ottsym{)} =
             S'_{{\mathrm{3}}}  \ottsym{(}  \ottmv{X'_{{\mathrm{1}}}}  \!\rightarrow\!  \ottmv{X'_{{\mathrm{2}}}}  \ottsym{)} =
             \ottnt{T_{{\mathrm{1}}}}  \!\rightarrow\!  \ottnt{T_{{\mathrm{2}}}} =
             S_{{\mathrm{0}}}  \ottsym{(}  \ottmv{X'}  \ottsym{)}$.
            Since $\textit{dom} \, \ottsym{(}  S'_{{\mathrm{0}}}  \ottsym{)}  \cap  \textit{ftv} \, \ottsym{(}  S_{{\mathrm{0}}}  \ottsym{(}  \ottmv{X'}  \ottsym{)}  \ottsym{)}  \ottsym{=}   \emptyset $ by the IH,
            we have $ S'_{{\mathrm{0}}}  \circ  S_{{\mathrm{0}}}   \ottsym{(}  \ottmv{X'}  \ottsym{)} = S_{{\mathrm{0}}}  \ottsym{(}  \ottmv{X'}  \ottsym{)} =    S'_{{\mathrm{3}}}  \circ  S'_{{\mathrm{0}}}   \circ  S_{{\mathrm{0}}}   \circ  S'_{{\mathrm{2}}}   \ottsym{(}  \ottmv{X'}  \ottsym{)}$.
            Otherwise, if $\ottmv{X}  \neq  \ottmv{X'}$, then
            $   S'_{{\mathrm{3}}}  \circ  S'_{{\mathrm{0}}}   \circ  S_{{\mathrm{0}}}   \circ  S'_{{\mathrm{2}}}   \ottsym{(}  \ottmv{X}  \ottsym{)}  \ottsym{=}   S'_{{\mathrm{0}}}  \circ  S_{{\mathrm{0}}}   \ottsym{(}  \ottmv{X}  \ottsym{)}$.

            We show (\ref{req:prec_catch_up_left_value:five}).
            Let $\ottmv{X} \, \in \, \textit{dom} \, \ottsym{(}  S_{{\mathrm{0}}}  \ottsym{)}$.
            We have
            \[\begin{array}{l@{\quad = \quad}ll}
               \ottsym{(}   S'_{{\mathrm{3}}}  \circ  S'_{{\mathrm{0}}}   \ottsym{)}  \circ  S_{{\mathrm{0}}}   \circ  \ottsym{(}   S'_{{\mathrm{2}}}  \circ  S'_{{\mathrm{1}}}   \ottsym{)}   \ottsym{(}  \ottmv{X}  \ottsym{)} &
                S'_{{\mathrm{0}}}  \circ  S_{{\mathrm{0}}}   \circ  S'_{{\mathrm{1}}}   \ottsym{(}  \ottmv{X}  \ottsym{)} &
               \text{(by (\ref{req:prec_catch_up_left_value:five-one}))} \\ &
              S_{{\mathrm{0}}}  \ottsym{(}  \ottmv{X}  \ottsym{)} &
               \text{(by the IH)}
              \end{array}\]

            We show (\ref{req:prec_catch_up_left_value:six}).
            Since $ \langle   \emptyset    \vdash   w  :  \ottnt{U}   \sqsubseteq _{  S'_{{\mathrm{0}}}  \circ  S_{{\mathrm{0}}}  }  \star  \!\rightarrow\!  \star  :  w'_{{\mathrm{11}}}  \dashv   \emptyset   \rangle $,
            we have $ \langle   \emptyset    \vdash   w  :  \ottnt{U}   \sqsubseteq _{   S'_{{\mathrm{3}}}  \circ  S'_{{\mathrm{0}}}   \circ  S_{{\mathrm{0}}}  }  \star  \!\rightarrow\!  \star  :  S'_{{\mathrm{2}}}  \ottsym{(}  w'_{{\mathrm{11}}}  \ottsym{)}  \dashv   \emptyset   \rangle $
            by Lemma~\ref{lem:right_subst_preserve_prec} with
            (\ref{req:prec_catch_up_left_value:five-one}).
            Since $ \ottnt{U}   \sqsubseteq _{ S_{{\mathrm{0}}} }  \ottnt{U'} $ and $\ottnt{U'}  \ottsym{=}  \ottmv{X'}$,
            we have $ \ottnt{U}   \sqsubseteq _{  \ottsym{(}   S'_{{\mathrm{3}}}  \circ  S'_{{\mathrm{0}}}   \ottsym{)}  \circ  S_{{\mathrm{0}}}  }  \ottsym{(}   S'_{{\mathrm{2}}}  \circ  S'_{{\mathrm{1}}}   \ottsym{)}  \ottsym{(}  \ottmv{X'}  \ottsym{)} $
            by Lemma~\ref{lem:right_subst_preserve_prec} with
            (\ref{req:prec_catch_up_left_value:five}).
            Since $S'_{{\mathrm{1}}}  \ottsym{(}  w'_{{\mathrm{11}}}  \ottsym{)}  \ottsym{=}  w'_{{\mathrm{11}}}$ and
            $\ottsym{(}   S'_{{\mathrm{2}}}  \circ  S'_{{\mathrm{1}}}   \ottsym{)}  \ottsym{(}  \ottnt{U'}  \ottsym{)} = \ottsym{(}   S'_{{\mathrm{1}}}  \circ  S'_{{\mathrm{2}}}   \ottsym{)}  \ottsym{(}  \ottmv{X'}  \ottsym{)} = \ottmv{X'_{{\mathrm{1}}}}  \!\rightarrow\!  \ottmv{X'_{{\mathrm{2}}}}$,
            we have (\ref{req:prec_catch_up_left_value:six}) by \rnp{P\_CastR}.

            \case{$\ottnt{U}  \ottsym{=}  \ottmv{X''}$ for some $\ottmv{X''}$}
            Cannot happen by Lemma~\ref{lem:canonical_forms}.

          \end{caseanalysis}
        \end{caseanalysis}

        \case{$\ottnt{U'}  \ottsym{=}  \ottnt{U'_{{\mathrm{11}}}}  \!\rightarrow\!  \ottnt{U'_{{\mathrm{12}}}}$ for some $\ottnt{U'_{{\mathrm{11}}}}$ and $\ottnt{U'_{{\mathrm{12}}}}$ where $\ottnt{U'}  \neq  \star  \!\rightarrow\!  \star$}
        \leavevmode\\
        By Lemma \ref{lem:term_prec_to_type_prec}, $ \ottnt{U}   \sqsubseteq _{ S_{{\mathrm{0}}} }  \ottnt{U'_{{\mathrm{11}}}}  \!\rightarrow\!  \ottnt{U'_{{\mathrm{12}}}} $.
        By definition, $\ottnt{U}$ is a function type.
        So, by definition, $ \ottnt{U}   \sqsubseteq _{  S'_{{\mathrm{0}}}  \circ  S_{{\mathrm{0}}}  }  \star  \!\rightarrow\!  \star $.
        By Lemma \ref{lem:right_subst_preserve_prec},
        $ \ottnt{U}   \sqsubseteq _{  S'_{{\mathrm{0}}}  \circ  S_{{\mathrm{0}}}  }  S'_{{\mathrm{1}}}  \ottsym{(}  \ottnt{U'_{{\mathrm{11}}}}  \!\rightarrow\!  \ottnt{U'_{{\mathrm{12}}}}  \ottsym{)} $.

        By \rnp{R\_Expand},
        $w'_{{\mathrm{1}}}  \ottsym{:}   \star \Rightarrow  \unskip ^ { \ell' }  \! S'_{{\mathrm{1}}}  \ottsym{(}  \ottnt{U'_{{\mathrm{11}}}}  \!\rightarrow\!  \ottnt{U'_{{\mathrm{12}}}}  \ottsym{)}  \,  \xmapsto{ \mathmakebox[0.4em]{} [  ] \mathmakebox[0.3em]{} }  \, w'_{{\mathrm{1}}}  \ottsym{:}   \star \Rightarrow  \unskip ^ { \ell' }  \!  \star  \!\rightarrow\!  \star \Rightarrow  \unskip ^ { \ell' }  \! S'_{{\mathrm{1}}}  \ottsym{(}  \ottnt{U'_{{\mathrm{11}}}}  \!\rightarrow\!  \ottnt{U'_{{\mathrm{12}}}}  \ottsym{)}  $.
        We finish by \rnp{P\_CastR}.
      \end{caseanalysis}

      \case{$\ottnt{U'_{{\mathrm{1}}}}  \sim  \star$ where $\ottnt{U'_{{\mathrm{1}}}}  \neq  \star$}
      By case analysis on $S'_{{\mathrm{1}}}  \ottsym{(}  \ottnt{U'_{{\mathrm{1}}}}  \ottsym{)}$.

      \begin{caseanalysis}
        \case{$S'_{{\mathrm{1}}}  \ottsym{(}  \ottnt{U'_{{\mathrm{1}}}}  \ottsym{)}$ is a ground type}
        Obvious.

        \case{$S'_{{\mathrm{1}}}  \ottsym{(}  \ottnt{U'_{{\mathrm{1}}}}  \ottsym{)}  \ottsym{=}  \star$} Cannot happen.
        \case{$S'_{{\mathrm{1}}}  \ottsym{(}  \ottnt{U'_{{\mathrm{1}}}}  \ottsym{)}  \ottsym{=}  \ottmv{X'}$ for some $\ottmv{X'}$}
        Since $ \langle   \emptyset    \vdash   w  :  \ottnt{U}   \sqsubseteq _{  S'_{{\mathrm{0}}}  \circ  S_{{\mathrm{0}}}  }  S'_{{\mathrm{1}}}  \ottsym{(}  \ottnt{U'_{{\mathrm{1}}}}  \ottsym{)}  :  w'_{{\mathrm{1}}}  \dashv   \emptyset   \rangle $,
        we have $ \emptyset   \vdash_{\textsf{\textup{B}\relax}\relax}  w'_{{\mathrm{1}}}  \ottsym{:}  \ottmv{X'}$, which 
        contradicts Lemma~\ref{lem:canonical_forms}.

        \case{$S'_{{\mathrm{1}}}  \ottsym{(}  \ottnt{U'_{{\mathrm{1}}}}  \ottsym{)}  \ottsym{=}  \ottnt{U'_{{\mathrm{11}}}}  \!\rightarrow\!  \ottnt{U'_{{\mathrm{12}}}}$ for some $\ottnt{U'_{{\mathrm{11}}}}$ and $\ottnt{U'_{{\mathrm{12}}}}$}.
        By Lemma~\ref{lem:term_prec_to_type_prec},
        $ \ottnt{U}   \sqsubseteq _{  S'_{{\mathrm{0}}}  \circ  S_{{\mathrm{0}}}  }  S'_{{\mathrm{1}}}  \ottsym{(}  \ottnt{U'_{{\mathrm{1}}}}  \ottsym{)} $.
        By definition, $\ottnt{U}$ is a function type.
        So, $ \ottnt{U}   \sqsubseteq _{  S'_{{\mathrm{0}}}  \circ  S_{{\mathrm{0}}}  }  \star  \!\rightarrow\!  \star $.
        By \rnp{R\_Ground},
        $w'_{{\mathrm{11}}}  \ottsym{:}   \ottnt{U'_{{\mathrm{11}}}}  \!\rightarrow\!  \ottnt{U'_{{\mathrm{12}}}} \Rightarrow  \unskip ^ { \ell' }  \! \star  \,  \xmapsto{ \mathmakebox[0.4em]{} [  ] \mathmakebox[0.3em]{} }  \, w'_{{\mathrm{1}}}  \ottsym{:}   \ottnt{U'_{{\mathrm{11}}}}  \!\rightarrow\!  \ottnt{U'_{{\mathrm{12}}}} \Rightarrow  \unskip ^ { \ell' }  \!  \star  \!\rightarrow\!  \star \Rightarrow  \unskip ^ { \ell' }  \! \star  $.
        We finish by \rnp{P\_CastR}.
      \end{caseanalysis}

      \case{$\ottnt{U'_{{\mathrm{11}}}}  \!\rightarrow\!  \ottnt{U'_{{\mathrm{12}}}}  \sim  \ottnt{U'_{{\mathrm{13}}}}  \!\rightarrow\!  \ottnt{U'_{{\mathrm{14}}}}$ for some $\ottnt{U'_{{\mathrm{11}}}}$, $\ottnt{U'_{{\mathrm{12}}}}$, $\ottnt{U'_{{\mathrm{13}}}}$, and $\ottnt{U'_{{\mathrm{14}}}}$}
      Obvious.
    \end{caseanalysis}

    \otherwise
    Cannot happen.  \qedhere
  \end{caseanalysis}
\end{proof}

\begin{lemmaA} \label{lem:type_prec_right_static_type}
 If $ \ottnt{U}   \sqsubseteq _{ S }  \ottnt{T} $, then $\ottnt{U}  \ottsym{=}  S  \ottsym{(}  \ottnt{T}  \ottsym{)}$.
\end{lemmaA}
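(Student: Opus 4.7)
The plan is to prove this by induction on the derivation of $ \ottnt{U}   \sqsubseteq _{ S }  \ottnt{T} $, exploiting the fact that since $\ottnt{T}$ is a static type, it contains no occurrence of $\star$, and hence the rule \rnp{P\_Dyn} cannot be applied at any point in the derivation where the right-hand component is required to be static.

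The case analysis on the last rule applied is straightforward. For \rnp{P\_IdBase}, we have $\ottnt{T}  \ottsym{=}  \iota$ and $\ottnt{U}  \ottsym{=}  \iota$, so $S  \ottsym{(}  \ottnt{T}  \ottsym{)}  \ottsym{=}  \iota  \ottsym{=}  \ottnt{U}$ by the definition of substitution on base types. For \rnp{P\_TyVar}, we have $\ottnt{T}  \ottsym{=}  \ottmv{X}$ with $\ottmv{X} \, \in \, \textit{dom} \, \ottsym{(}  S  \ottsym{)}$ and $\ottnt{U}  \ottsym{=}  S  \ottsym{(}  \ottmv{X}  \ottsym{)}$, so the equation $\ottnt{U}  \ottsym{=}  S  \ottsym{(}  \ottnt{T}  \ottsym{)}$ holds immediately. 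The case \rnp{P\_Dyn} is ruled out because it would require $\ottnt{T}  \ottsym{=}  \star$, contradicting that $\ottnt{T}$ is a static type. For \rnp{P\_Arrow}, we have $\ottnt{T}  \ottsym{=}  \ottnt{T_{{\mathrm{1}}}}  \!\rightarrow\!  \ottnt{T_{{\mathrm{2}}}}$ (where $\ottnt{T_{{\mathrm{1}}}}$ and $\ottnt{T_{{\mathrm{2}}}}$ are static since $\ottnt{T}$ is) and $\ottnt{U}  \ottsym{=}  \ottnt{U_{{\mathrm{1}}}}  \!\rightarrow\!  \ottnt{U_{{\mathrm{2}}}}$ with $ \ottnt{U_{{\mathrm{1}}}}   \sqsubseteq _{ S }  \ottnt{T_{{\mathrm{1}}}} $ and $ \ottnt{U_{{\mathrm{2}}}}   \sqsubseteq _{ S }  \ottnt{T_{{\mathrm{2}}}} $; by the induction hypothesis, $\ottnt{U_{{\mathrm{1}}}}  \ottsym{=}  S  \ottsym{(}  \ottnt{T_{{\mathrm{1}}}}  \ottsym{)}$ and $\ottnt{U_{{\mathrm{2}}}}  \ottsym{=}  S  \ottsym{(}  \ottnt{T_{{\mathrm{2}}}}  \ottsym{)}$, and so $\ottnt{U}  \ottsym{=}  S  \ottsym{(}  \ottnt{T_{{\mathrm{1}}}}  \ottsym{)}  \!\rightarrow\!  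S  \ottsym{(}  \ottnt{T_{{\mathrm{2}}}}  \ottsym{)}  \ottsym{=}  S  \ottsym{(}  \ottnt{T}  \ottsym{)}$.

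There is no real obstacle here; the lemma is essentially a consequence of the observation that for static target types the precision relation reduces to ``instantiating type variables via $S$.'' The only thing to verify carefully is that the induction goes through on the sub-derivations for the arrow case, which is trivial because static types are closed under taking domain and codomain.
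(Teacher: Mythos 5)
Your proof is correct and follows exactly the same route as the paper's: induction on the derivation of $ \ottnt{U}   \sqsubseteq _{ S }  \ottnt{T} $, with \rnp{P\_IdBase} and \rnp{P\_TyVar} immediate, \rnp{P\_Dyn} ruled out because $\ottnt{T}$ is static, and \rnp{P\_Arrow} by the induction hypotheses. The paper merely states these cases without the elaboration you supply.
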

\begin{proof}
 By induction on the derivation of $ \ottnt{U}   \sqsubseteq _{ S }  \ottnt{T} $.
 \begin{caseanalysis}
  \case{\rnp{P\_IdBase}} Obvious.
  \case{\rnp{P\_TyVar}} Obvious.
  \case{\rnp{P\_Dyn}} Cannot happen.
  \case{\rnp{P\_Arrow}} By the IHs.
  \qedhere
 \end{caseanalysis}
\end{proof}

\begin{lemmaA} \label{lem:type_prec_right_subst_pullback}
 If $ \ottnt{U}   \sqsubseteq _{ S_{{\mathrm{1}}} }  S_{{\mathrm{2}}}  \ottsym{(}  \ottnt{U'}  \ottsym{)} $,
 then $ \ottnt{U}   \sqsubseteq _{  S_{{\mathrm{1}}}  \circ  S_{{\mathrm{2}}}  }  \ottnt{U'} $.
\end{lemmaA}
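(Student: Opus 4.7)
The plan is to proceed by induction on the structure of $\ottnt{U'}$ (equivalently, on the derivation of $ \ottnt{U}   \sqsubseteq _{ S_{{\mathrm{1}}} }  S_{{\mathrm{2}}}  \ottsym{(}  \ottnt{U'}  \ottsym{)} $, but organizing the cases by the shape of $\ottnt{U'}$ is cleaner because $\ottnt{U'}$ controls what $S_{{\mathrm{2}}}  \ottsym{(}  \ottnt{U'}  \ottsym{)}$ looks like and hence which precision rules can derive the premise).

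The easy cases are $\ottnt{U'}  \ottsym{=}  \iota$, $\ottnt{U'}  \ottsym{=}  \star$, and $\ottnt{U'}  \ottsym{=}  \ottnt{U'_{{\mathrm{1}}}}  \!\rightarrow\!  \ottnt{U'_{{\mathrm{2}}}}$. If $\ottnt{U'}  \ottsym{=}  \iota$, then $S_{{\mathrm{2}}}  \ottsym{(}  \ottnt{U'}  \ottsym{)}  \ottsym{=}  \iota$, so by inversion of the premise the only applicable rule is \rnp{P\_IdBase} giving $\ottnt{U}  \ottsym{=}  \iota$, and \rnp{P\_IdBase} gives the conclusion. If $\ottnt{U'}  \ottsym{=}  \star$, then $S_{{\mathrm{2}}}  \ottsym{(}  \ottnt{U'}  \ottsym{)}  \ottsym{=}  \star$ and the conclusion is immediate from \rnp{P\_Dyn}. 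If $\ottnt{U'}  \ottsym{=}  \ottnt{U'_{{\mathrm{1}}}}  \!\rightarrow\!  \ottnt{U'_{{\mathrm{2}}}}$, then $S_{{\mathrm{2}}}  \ottsym{(}  \ottnt{U'}  \ottsym{)}$ is a function type; by inversion the premise must be derived by \rnp{P\_Arrow}, so $\ottnt{U}  \ottsym{=}  \ottnt{U_{{\mathrm{1}}}}  \!\rightarrow\!  \ottnt{U_{{\mathrm{2}}}}$ with component-wise precision, the induction hypothesis gives the precisions under $ S_{{\mathrm{1}}}  \circ  S_{{\mathrm{2}}} $, and \rnp{P\_Arrow} recombines them.

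The main obstacle is the type-variable case $\ottnt{U'}  \ottsym{=}  \ottmv{X}$, where we must show $ \ottnt{U}   \sqsubseteq _{  S_{{\mathrm{1}}}  \circ  S_{{\mathrm{2}}}  }  \ottmv{X} $, which by \rnp{P\_TyVar} requires both that $\ottmv{X} \, \in \, \textit{dom} \, \ottsym{(}   S_{{\mathrm{1}}}  \circ  S_{{\mathrm{2}}}   \ottsym{)}$ and that $ S_{{\mathrm{1}}}  \circ  S_{{\mathrm{2}}}   \ottsym{(}  \ottmv{X}  \ottsym{)}  \ottsym{=}  \ottnt{U}$. I would split on whether $\ottmv{X} \, \in \, \textit{dom} \, \ottsym{(}  S_{{\mathrm{2}}}  \ottsym{)}$. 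If $\ottmv{X} \, \not\in \, \textit{dom} \, \ottsym{(}  S_{{\mathrm{2}}}  \ottsym{)}$, then $S_{{\mathrm{2}}}  \ottsym{(}  \ottmv{X}  \ottsym{)}  \ottsym{=}  \ottmv{X}$, so the premise is $ \ottnt{U}   \sqsubseteq _{ S_{{\mathrm{1}}} }  \ottmv{X} $; inverting \rnp{P\_TyVar} (the only applicable rule) gives $\ottmv{X} \, \in \, \textit{dom} \, \ottsym{(}  S_{{\mathrm{1}}}  \ottsym{)}$ and $\ottnt{U}  \ottsym{=}  S_{{\mathrm{1}}}  \ottsym{(}  \ottmv{X}  \ottsym{)}  \ottsym{=}   S_{{\mathrm{1}}}  \circ  S_{{\mathrm{2}}}   \ottsym{(}  \ottmv{X}  \ottsym{)}$, finishing the case by \rnp{P\_TyVar}. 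If $\ottmv{X} \, \in \, \textit{dom} \, \ottsym{(}  S_{{\mathrm{2}}}  \ottsym{)}$, then $S_{{\mathrm{2}}}  \ottsym{(}  \ottmv{X}  \ottsym{)}$ is a static type, so the premise $ \ottnt{U}   \sqsubseteq _{ S_{{\mathrm{1}}} }  S_{{\mathrm{2}}}  \ottsym{(}  \ottmv{X}  \ottsym{)} $ together with Lemma~\ref{lem:type_prec_right_static_type} yields $\ottnt{U}  \ottsym{=}  S_{{\mathrm{1}}}  \ottsym{(}  S_{{\mathrm{2}}}  \ottsym{(}  \ottmv{X}  \ottsym{)}  \ottsym{)}  \ottsym{=}   S_{{\mathrm{1}}}  \circ  S_{{\mathrm{2}}}   \ottsym{(}  \ottmv{X}  \ottsym{)}$; since $\ottmv{X} \, \in \, \textit{dom} \, \ottsym{(}   S_{{\mathrm{1}}}  \circ  S_{{\mathrm{2}}}   \ottsym{)}$ as well, \rnp{P\_TyVar} closes this case.

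The crux, then, is the appeal to Lemma~\ref{lem:type_prec_right_static_type}, which bridges the fact that the ``right-hand side'' of the precision is now a concrete static type and hence must be exactly reproduced by $S_{{\mathrm{1}}}$ on $\ottnt{U}$. Everything else is compositional.
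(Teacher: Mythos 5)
Your proof is correct and is essentially the paper's argument: both proceed by induction over the precision judgment (yours is indexed by the shape of $\ottnt{U'}$, which is equivalent here because the precision rules are syntax-directed in their right-hand side), and both reduce the one non-trivial case --- $\ottnt{U'}  \ottsym{=}  \ottmv{X}$ with $\ottmv{X} \, \in \, \textit{dom} \, \ottsym{(}  S_{{\mathrm{2}}}  \ottsym{)}$ --- to Lemma~\ref{lem:type_prec_right_static_type}. The only (cosmetic) difference is that the paper scatters the ``$\ottnt{U'}$ is a type variable'' subcases across its rule cases (\rnp{P\_IdBase}, \rnp{P\_TyVar}, \rnp{P\_Arrow}), invoking that lemma only when $S_{{\mathrm{2}}}  \ottsym{(}  \ottmv{X}  \ottsym{)}$ is a function type, whereas you collect them into a single case and apply the lemma uniformly to the static type $S_{{\mathrm{2}}}  \ottsym{(}  \ottmv{X}  \ottsym{)}$.
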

\begin{proof}
 By induction on the derivation of $ \ottnt{U}   \sqsubseteq _{ S_{{\mathrm{1}}} }  S_{{\mathrm{2}}}  \ottsym{(}  \ottnt{U'}  \ottsym{)} $.
 \begin{caseanalysis}
  \case{\rnp{P\_IdBase}}
   We are given $ \iota   \sqsubseteq _{ S_{{\mathrm{1}}} }  \iota $ for some $\iota$ such that
   $\ottnt{U}  \ottsym{=}  \iota$ and $S_{{\mathrm{2}}}  \ottsym{(}  \ottnt{U'}  \ottsym{)}  \ottsym{=}  \iota$.
   If $\ottnt{U'}  \ottsym{=}  \iota$, then we finish by \rnp{P\_IdBase}.
   Otherwise, $\ottnt{U'}  \ottsym{=}  \ottmv{X}$ and $S_{{\mathrm{2}}}  \ottsym{(}  \ottmv{X}  \ottsym{)}  \ottsym{=}  \iota$ for some $\ottmv{X}$.
   Since $ S_{{\mathrm{1}}}  \circ  S_{{\mathrm{2}}}   \ottsym{(}  \ottmv{X}  \ottsym{)}  \ottsym{=}  \iota$, we have
   $ \iota   \sqsubseteq _{  S_{{\mathrm{1}}}  \circ  S_{{\mathrm{2}}}  }  \ottmv{X} $ by \rnp{P\_TyVar}.

  \case{\rnp{P\_TyVar}}
   We are given $ S_{{\mathrm{1}}}  \ottsym{(}  \ottmv{X}  \ottsym{)}   \sqsubseteq _{ S_{{\mathrm{1}}} }  \ottmv{X} $ for some $\ottmv{X}$ such that
   $\ottnt{U}  \ottsym{=}  S_{{\mathrm{1}}}  \ottsym{(}  \ottmv{X}  \ottsym{)}$ and $S_{{\mathrm{2}}}  \ottsym{(}  \ottnt{U'}  \ottsym{)}  \ottsym{=}  \ottmv{X}$.
   By inversion, we have $\ottmv{X} \, \in \, \textit{dom} \, \ottsym{(}  S_{{\mathrm{1}}}  \ottsym{)}$.
   Since $S_{{\mathrm{2}}}  \ottsym{(}  \ottnt{U'}  \ottsym{)}  \ottsym{=}  \ottmv{X}$, there exist $\ottmv{X'}$ such that
   $\ottnt{U'}  \ottsym{=}  \ottmv{X'}$ and $S_{{\mathrm{2}}}  \ottsym{(}  \ottmv{X'}  \ottsym{)}  \ottsym{=}  \ottmv{X}$.
   That is, we have $ S_{{\mathrm{1}}}  \ottsym{(}  \ottmv{X}  \ottsym{)}   \sqsubseteq _{ S_{{\mathrm{1}}} }  S_{{\mathrm{2}}}  \ottsym{(}  \ottmv{X'}  \ottsym{)} $.
   We show that $ S_{{\mathrm{1}}}  \ottsym{(}  \ottmv{X}  \ottsym{)}   \sqsubseteq _{  S_{{\mathrm{1}}}  \circ  S_{{\mathrm{2}}}  }  \ottmv{X'} $.
   If $\ottmv{X'} \, \in \, \textit{dom} \, \ottsym{(}  S_{{\mathrm{2}}}  \ottsym{)}$, then we finish by \rnp{P\_TyVar}
   since $ S_{{\mathrm{1}}}  \circ  S_{{\mathrm{2}}}   \ottsym{(}  \ottmv{X'}  \ottsym{)}  \ottsym{=}  S_{{\mathrm{1}}}  \ottsym{(}  \ottmv{X}  \ottsym{)}$.
   Otherwise, suppose that $\ottmv{X'} \, \not\in \, \textit{dom} \, \ottsym{(}  S_{{\mathrm{2}}}  \ottsym{)}$.
   Since $S_{{\mathrm{2}}}  \ottsym{(}  \ottmv{X'}  \ottsym{)}  \ottsym{=}  \ottmv{X}$, $\ottmv{X}  \ottsym{=}  \ottmv{X'}$.
   Since $\ottmv{X} \, \in \, \textit{dom} \, \ottsym{(}  S_{{\mathrm{1}}}  \ottsym{)}$, we have $\ottmv{X'} \, \in \, \textit{dom} \, \ottsym{(}  S_{{\mathrm{1}}}  \ottsym{)}$.
   Since $S_{{\mathrm{1}}}  \ottsym{(}  \ottmv{X}  \ottsym{)}  \ottsym{=}   S_{{\mathrm{1}}}  \circ  S_{{\mathrm{2}}}   \ottsym{(}  \ottmv{X'}  \ottsym{)}$, we have
   $ S_{{\mathrm{1}}}  \ottsym{(}  \ottmv{X}  \ottsym{)}   \sqsubseteq _{  S_{{\mathrm{1}}}  \circ  S_{{\mathrm{2}}}  }  \ottmv{X'} $ by \rnp{P\_TyVar}.

  \case{\rnp{P\_Dyn}} By \rnp{P\_Dyn} since $\ottnt{U'}  \ottsym{=}  \star$.

  \case{\rnp{P\_Arrow}}
   We are given $ \ottnt{U_{{\mathrm{1}}}}  \!\rightarrow\!  \ottnt{U_{{\mathrm{2}}}}   \sqsubseteq _{ S_{{\mathrm{1}}} }  \ottnt{U'_{{\mathrm{1}}}}  \!\rightarrow\!  \ottnt{U'_{{\mathrm{2}}}} $ for some
   $\ottnt{U_{{\mathrm{1}}}}$, $\ottnt{U_{{\mathrm{2}}}}$, $\ottnt{U'_{{\mathrm{1}}}}$, and $\ottnt{U'_{{\mathrm{2}}}}$ such that
   $\ottnt{U}  \ottsym{=}  \ottnt{U_{{\mathrm{1}}}}  \!\rightarrow\!  \ottnt{U_{{\mathrm{2}}}}$ and $S_{{\mathrm{2}}}  \ottsym{(}  \ottnt{U'}  \ottsym{)}  \ottsym{=}  \ottnt{U'_{{\mathrm{1}}}}  \!\rightarrow\!  \ottnt{U'_{{\mathrm{2}}}}$.
   By inversion, $ \ottnt{U_{{\mathrm{1}}}}   \sqsubseteq _{ S_{{\mathrm{1}}} }  \ottnt{U'_{{\mathrm{1}}}} $ and $ \ottnt{U_{{\mathrm{2}}}}   \sqsubseteq _{ S_{{\mathrm{1}}} }  \ottnt{U'_{{\mathrm{2}}}} $.
   By case analysis on $\ottnt{U'}$.
   \begin{caseanalysis}
    \case{$\ottnt{U'}  \ottsym{=}  \ottmv{X}$ for some $\ottmv{X}$}
     Since $S_{{\mathrm{2}}}  \ottsym{(}  \ottmv{X}  \ottsym{)}  \ottsym{=}  \ottnt{U'_{{\mathrm{1}}}}  \!\rightarrow\!  \ottnt{U'_{{\mathrm{2}}}}$, there exist $\ottnt{T'_{{\mathrm{1}}}}$ and $\ottnt{T'_{{\mathrm{2}}}}$ such that
     $\ottnt{U'_{{\mathrm{1}}}}  \ottsym{=}  \ottnt{T'_{{\mathrm{1}}}}$ and $\ottnt{U'_{{\mathrm{2}}}}  \ottsym{=}  \ottnt{T'_{{\mathrm{2}}}}$.
     We have $ \ottnt{U_{{\mathrm{1}}}}  \!\rightarrow\!  \ottnt{U_{{\mathrm{2}}}}   \sqsubseteq _{ S_{{\mathrm{1}}} }  \ottnt{T'_{{\mathrm{1}}}}  \!\rightarrow\!  \ottnt{T'_{{\mathrm{2}}}} $.
     By Lemma~\ref{lem:type_prec_right_static_type},
     $\ottnt{U_{{\mathrm{1}}}}  \!\rightarrow\!  \ottnt{U_{{\mathrm{2}}}} = S_{{\mathrm{1}}}  \ottsym{(}  \ottnt{T'_{{\mathrm{1}}}}  \!\rightarrow\!  \ottnt{T'_{{\mathrm{2}}}}  \ottsym{)} = S_{{\mathrm{1}}}  \ottsym{(}  S_{{\mathrm{2}}}  \ottsym{(}  \ottmv{X}  \ottsym{)}  \ottsym{)}$.
     Thus, by \rnp{P\_TyVar}, $ \ottnt{U_{{\mathrm{1}}}}  \!\rightarrow\!  \ottnt{U_{{\mathrm{2}}}}   \sqsubseteq _{  S_{{\mathrm{1}}}  \circ  S_{{\mathrm{2}}}  }  \ottmv{X} $.
    \case{$\ottnt{U'}  \ottsym{=}  \ottnt{U''_{{\mathrm{1}}}}  \!\rightarrow\!  \ottnt{U''_{{\mathrm{2}}}}$ for some $\ottnt{U''_{{\mathrm{1}}}}$ and $\ottnt{U''_{{\mathrm{2}}}}$}
     We have $\ottnt{U'_{{\mathrm{1}}}}  \ottsym{=}  S_{{\mathrm{2}}}  \ottsym{(}  \ottnt{U''_{{\mathrm{1}}}}  \ottsym{)}$ and $\ottnt{U'_{{\mathrm{2}}}}  \ottsym{=}  S_{{\mathrm{2}}}  \ottsym{(}  \ottnt{U''_{{\mathrm{2}}}}  \ottsym{)}$.
     Since $ \ottnt{U_{{\mathrm{1}}}}   \sqsubseteq _{ S_{{\mathrm{1}}} }  S_{{\mathrm{2}}}  \ottsym{(}  \ottnt{U''_{{\mathrm{1}}}}  \ottsym{)} $ and $ \ottnt{U_{{\mathrm{2}}}}   \sqsubseteq _{ S_{{\mathrm{1}}} }  S_{{\mathrm{2}}}  \ottsym{(}  \ottnt{U''_{{\mathrm{2}}}}  \ottsym{)} $,
     we have $ \ottnt{U}   \sqsubseteq _{  S_{{\mathrm{1}}}  \circ  S_{{\mathrm{2}}}  }  \ottnt{U''_{{\mathrm{1}}}} $ and $ \ottnt{U}   \sqsubseteq _{  S_{{\mathrm{1}}}  \circ  S_{{\mathrm{2}}}  }  \ottnt{U''_{{\mathrm{2}}}} $
     by the IHs.
     By \rnp{P\_Arrow}, we have
     $ \ottnt{U}   \sqsubseteq _{  S_{{\mathrm{1}}}  \circ  S_{{\mathrm{2}}}  }  \ottnt{U''_{{\mathrm{1}}}}  \!\rightarrow\!  \ottnt{U''_{{\mathrm{2}}}} $.
     \otherwise Contradiction.
     \qedhere
   \end{caseanalysis}
 \end{caseanalysis}
\end{proof}

\begin{lemmaA} \label{lem:type_prec_subst_reflexive}
 If $ \ottnt{U}   \sqsubseteq _{ S_{{\mathrm{1}}} }  \ottnt{U'} $ and $ \ottnt{U}   \sqsubseteq _{ S_{{\mathrm{2}}} }  \ottnt{U'} $,
 then, for any $\ottmv{X} \, \in \, \textit{ftv} \, \ottsym{(}  \ottnt{U'}  \ottsym{)}$, $S_{{\mathrm{1}}}  \ottsym{(}  \ottmv{X}  \ottsym{)}  \ottsym{=}  S_{{\mathrm{2}}}  \ottsym{(}  \ottmv{X}  \ottsym{)}$.
\end{lemmaA}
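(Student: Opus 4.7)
The plan is to proceed by induction on the structure of $\ottnt{U'}$, since $\textit{ftv} \, \ottsym{(}  \ottnt{U'}  \ottsym{)}$ is what the conclusion quantifies over. Equivalently, one could induct on either of the two precision derivations, but inducting on $\ottnt{U'}$ keeps the case split cleaner because the shape of $\ottnt{U'}$ determines which precision rule could have been applied last in \emph{both} derivations.

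The base cases $\ottnt{U'} = \iota$ and $\ottnt{U'} = \star$ are immediate: in each, $\textit{ftv} \, \ottsym{(}  \ottnt{U'}  \ottsym{)}  \ottsym{=}   \emptyset $, so the universal statement is vacuous. The key case is $\ottnt{U'} = \ottmv{X'}$. Here the only rule whose conclusion has a bare type variable on the right is \rnp{P\_TyVar}, so both derivations must end with \rnp{P\_TyVar}. Inversion then yields $\ottnt{U}  \ottsym{=}  S_{{\mathrm{1}}}  \ottsym{(}  \ottmv{X'}  \ottsym{)}$ from the first derivation and $\ottnt{U}  \ottsym{=}  S_{{\mathrm{2}}}  \ottsym{(}  \ottmv{X'}  \ottsym{)}$ from the second, hence $S_{{\mathrm{1}}}  \ottsym{(}  \ottmv{X'}  \ottsym{)}  \ottsym{=}  S_{{\mathrm{2}}}  \ottsym{(}  \ottmv{X'}  \ottsym{)}$, and $\ottmv{X'}$ is the sole free type variable to check.

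For the arrow case $\ottnt{U'}  \ottsym{=}  \ottnt{U'_{{\mathrm{1}}}}  \!\rightarrow\!  \ottnt{U'_{{\mathrm{2}}}}$, only \rnp{P\_Arrow} can apply, since the other rules require $\ottnt{U'}$ to be a base type, a type variable, or the dynamic type. Inversion on both derivations forces $\ottnt{U}  \ottsym{=}  \ottnt{U_{{\mathrm{1}}}}  \!\rightarrow\!  \ottnt{U_{{\mathrm{2}}}}$ together with $ \ottnt{U_{{\mathrm{1}}}}   \sqsubseteq _{ S_{\ottmv{k}} }  \ottnt{U'_{{\mathrm{1}}}} $ and $ \ottnt{U_{{\mathrm{2}}}}   \sqsubseteq _{ S_{\ottmv{k}} }  \ottnt{U'_{{\mathrm{2}}}} $ for $\ottmv{k} \in \{1,2\}$. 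Applying the IH twice (once to each component) gives $S_{{\mathrm{1}}}  \ottsym{(}  \ottmv{X}  \ottsym{)}  \ottsym{=}  S_{{\mathrm{2}}}  \ottsym{(}  \ottmv{X}  \ottsym{)}$ for every $\ottmv{X} \, \in \, \textit{ftv} \, \ottsym{(}  \ottnt{U'_{{\mathrm{1}}}}  \ottsym{)}$ and every $\ottmv{X} \, \in \, \textit{ftv} \, \ottsym{(}  \ottnt{U'_{{\mathrm{2}}}}  \ottsym{)}$, and their union is exactly $\textit{ftv} \, \ottsym{(}  \ottnt{U'}  \ottsym{)}$.

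I do not expect any serious obstacle: the whole lemma rests on the syntax-directedness of the precision rules (each shape of $\ottnt{U'}$ admits a unique applicable rule), which makes both inversions clean and forces the two type substitutions to agree wherever $\ottnt{U'}$ actually mentions a type variable. The only subtlety worth double-checking is that nothing else could derive $ \ottnt{U}   \sqsubseteq _{ S }  \ottmv{X'} $ besides \rnp{P\_TyVar} — in particular, \rnp{P\_Dyn} yields only a $ \star $ on the right and the other two rules yield a base type or an arrow type.
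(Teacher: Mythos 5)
Your proof is correct and follows essentially the same argument as the paper's: the paper inducts on the derivation of $ \ottnt{U}   \sqsubseteq _{ S_{{\mathrm{1}}} }  \ottnt{U'} $, which amounts to the same case split as your structural induction on $\ottnt{U'}$ because the precision rules are syntax-directed in their right-hand side. The decisive observation in both is identical: when $\ottnt{U'}$ is a type variable, only \rnp{P\_TyVar} applies to either derivation, forcing $S_{{\mathrm{1}}}  \ottsym{(}  \ottmv{X}  \ottsym{)}  \ottsym{=}  \ottnt{U}  \ottsym{=}  S_{{\mathrm{2}}}  \ottsym{(}  \ottmv{X}  \ottsym{)}$.
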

\begin{proof}
 By induction on the derivation of $ \ottnt{U}   \sqsubseteq _{ S_{{\mathrm{1}}} }  \ottnt{U'} $.
 \begin{caseanalysis}
  \case{\rnp{P\_IdBase}} Obvious since $\textit{ftv} \, \ottsym{(}  \ottnt{U'}  \ottsym{)}  \ottsym{=}   \emptyset $.
  \case{\rnp{P\_TyVar}}
   We are given $ S_{{\mathrm{1}}}  \ottsym{(}  \ottmv{X}  \ottsym{)}   \sqsubseteq _{ S_{{\mathrm{1}}} }  \ottmv{X} $ for some $\ottmv{X}$ such that
   $\ottnt{U}  \ottsym{=}  S_{{\mathrm{1}}}  \ottsym{(}  \ottmv{X}  \ottsym{)}$ and $\ottnt{U'}  \ottsym{=}  \ottmv{X}$.
   Since $ S_{{\mathrm{1}}}  \ottsym{(}  \ottmv{X}  \ottsym{)}   \sqsubseteq _{ S_{{\mathrm{2}}} }  \ottmv{X} $, we have
   $S_{{\mathrm{2}}}  \ottsym{(}  \ottmv{X}  \ottsym{)}  \ottsym{=}  S_{{\mathrm{1}}}  \ottsym{(}  \ottmv{X}  \ottsym{)}$; note that the precision rules applicable
   in the case that types on the right-hand side are type variables
   are only \rnp{P\_TyVar}.

  \case{\rnp{P\_Dyn}} Obvious since $\textit{ftv} \, \ottsym{(}  \ottnt{U'}  \ottsym{)}  \ottsym{=}   \emptyset $.
  \case{\rnp{P\_Arrow}} By the IHs.
  \qedhere
 \end{caseanalysis}
\end{proof}

\begin{lemmaA} \label{lem:type_app_preserve_term_prec}
  If
  \begin{itemize}
   \item $ \langle  \Gamma   \vdash   \ottnt{f}  :  \ottnt{U}   \sqsubseteq _{  [   \overrightarrow{ \ottmv{X'_{\ottmv{j}}} }   :=   \overrightarrow{ \ottnt{T''_{\ottmv{j}}} }   ]  \uplus  S  }  \ottnt{U'}  :  \ottnt{f'}  \dashv  \Gamma'  \rangle $ and
   \item $ \ottnt{U}  [   \overrightarrow{ \ottmv{X_{\ottmv{i}}} }   \ottsym{:=}   \overrightarrow{ \ottnt{T_{\ottmv{i}}} }   ]   \sqsubseteq _{ S }  \ottnt{U'}  [   \overrightarrow{ \ottmv{X'_{\ottmv{j}}} }   \ottsym{:=}   \overrightarrow{ \ottnt{T'_{\ottmv{j}}} }   ] $ and
   \item for any $\ottmv{X} \, \in \, \textit{dom} \, \ottsym{(}  S  \ottsym{)}$, $ \overrightarrow{ \ottmv{X_{\ottmv{i}}} }   \cap  \textit{ftv} \, \ottsym{(}  S  \ottsym{(}  \ottmv{X}  \ottsym{)}  \ottsym{)}  \ottsym{=}   \emptyset $ and
         $ \overrightarrow{ \ottmv{X'_{\ottmv{j}}} }   \cap  \textit{ftv} \, \ottsym{(}  S  \ottsym{(}  \ottmv{X}  \ottsym{)}  \ottsym{)}  \ottsym{=}   \emptyset $ and
   \item $ \overrightarrow{ \ottmv{X'_{\ottmv{j}}} }   \cap  \textit{ftv} \, \ottsym{(}   \overrightarrow{ \ottnt{T'_{\ottmv{j}}} }   \ottsym{)}  \ottsym{=}   \emptyset $ and
   \item For any $\ottmv{X'_{\ottmv{k}}} \, \in \,  \overrightarrow{ \ottmv{X'_{\ottmv{j}}} }   \setminus  \textit{ftv} \, \ottsym{(}  \ottnt{U'}  \ottsym{)}$, $\ottnt{T'_{\ottmv{k}}}$ is a fresh type
         variable (we write $\ottnt{P}$ for the set of such type variables
         $\ottnt{T'_{\ottmv{k}}}$),
  \end{itemize}
  then there exist $S'$ such that
  \begin{itemize}
   \item $\textit{dom} \, \ottsym{(}  S'  \ottsym{)} = \ottnt{P}$ and
   \item for any $\ottmv{X} = \ottnt{T'_{\ottmv{k}}} \in \ottnt{P}$, $S'  \ottsym{(}  \ottmv{X}  \ottsym{)}  \ottsym{=}  \ottnt{T''_{\ottmv{k}}}  [   \overrightarrow{ \ottmv{X_{\ottmv{i}}} }   \ottsym{:=}   \overrightarrow{ \ottnt{T_{\ottmv{i}}} }   ]$ and
   \item $ \langle  \Gamma  [   \overrightarrow{ \ottmv{X_{\ottmv{i}}} }   \ottsym{:=}   \overrightarrow{ \ottnt{T_{\ottmv{i}}} }   ]   \vdash   \ottnt{f}  [   \overrightarrow{ \ottmv{X_{\ottmv{i}}} }   \ottsym{:=}   \overrightarrow{ \ottnt{T_{\ottmv{i}}} }   ]  :  \ottnt{U}  [   \overrightarrow{ \ottmv{X_{\ottmv{i}}} }   \ottsym{:=}   \overrightarrow{ \ottnt{T_{\ottmv{i}}} }   ]   \sqsubseteq _{  S  \uplus  S'  }  \ottnt{U'}  [   \overrightarrow{ \ottmv{X'_{\ottmv{j}}} }   \ottsym{:=}   \overrightarrow{ \ottnt{T'_{\ottmv{j}}} }   ]  :  \ottnt{f'}  [   \overrightarrow{ \ottmv{X'_{\ottmv{j}}} }   \ottsym{:=}   \overrightarrow{ \ottnt{T'_{\ottmv{j}}} }   ]  \dashv  \Gamma'  [   \overrightarrow{ \ottmv{X'_{\ottmv{j}}} }   \ottsym{:=}   \overrightarrow{ \ottnt{T'_{\ottmv{j}}} }   ]  \rangle $.
  \end{itemize}
\end{lemmaA}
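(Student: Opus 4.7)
The plan is to proceed by induction on the derivation of the given term precision
$ \langle \Gamma \vdash \ottnt{f} : \ottnt{U} \sqsubseteq_{[\overrightarrow{\ottmv{X'_j}} := \overrightarrow{\ottnt{T''_j}}] \uplus S} \ottnt{U'} : \ottnt{f'} \dashv \Gamma' \rangle $.
The witness $S'$ is dictated by the statement itself: $S'$ has domain $\ottnt{P}$, mapping each fresh type variable $\ottnt{T'_k} \in \ottnt{P}$ (the one substituted for $\ottmv{X'_k}$ when $\ottmv{X'_k} \notin \textit{ftv}(\ottnt{U'})$) to $\ottnt{T''_k}[\overrightarrow{\ottmv{X_i}} := \overrightarrow{\ottnt{T_i}}]$. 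The bookkeeping premises (disjointness of $\overrightarrow{\ottmv{X_i}}$, $\overrightarrow{\ottmv{X'_j}}$ from $\textit{ftv}(S(\ottmv{X}))$, and that $\overrightarrow{\ottmv{X'_j}} \cap \textit{ftv}(\overrightarrow{\ottnt{T'_j}}) = \emptyset$) ensure that $S \uplus S'$ is well-defined and that substitutions compose as expected.

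For each precision rule, I would first compute how the substitutions $[\overrightarrow{\ottmv{X_i}} := \overrightarrow{\ottnt{T_i}}]$ and $[\overrightarrow{\ottmv{X'_j}} := \overrightarrow{\ottnt{T'_j}}]$ push into the term/type constructors, then invoke the IH on each premise (after reestablishing the type-level precision hypothesis from Lemma~\ref{lem:type_prec_right_subst_pullback} applied to the subterms), and finally reassemble using the same precision rule. The compatibility rules (\rnp{P\_Const}, \rnp{P\_Op}, \rnp{P\_Abs}, \rnp{P\_App}, \rnp{P\_Cast}, \rnp{P\_CastL}, \rnp{P\_CastR}, \rnp{P\_Blame}) are essentially direct, using Lemma~\ref{lem:right_subst_preserve_prec} and Lemma~\ref{lem:left_subst_preserve_prec} to move substitutions across the subject types, and Lemma~\ref{lem:type_prec_subst_reflexive} to keep the substitution choices consistent for shared variables. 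The condition on freshness of $\ottnt{T'_k}$ guarantees that $S'$ does not accidentally constrain preexisting variables.

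The two genuinely subtle cases are \rnp{P\_VarP} and \rnp{P\_LetP}. For \rnp{P\_VarP}, both sides are polymorphic instantiations $\ottmv{x}[\overrightarrow{\mathbbsl{T}_i}]$ and $\ottmv{x}[\overrightarrow{\mathbbsl{T}'_j}]$, and the expanded types on each side already contain iterated substitution; I need to show that applying $[\overrightarrow{\ottmv{X_i}} := \overrightarrow{\ottnt{T_i}}]$ and $[\overrightarrow{\ottmv{X'_j}} := \overrightarrow{\ottnt{T'_j}}]$ to these instantiated forms yields a new \rnp{P\_VarP} instance whose precision witness is exactly $S \uplus S'$; this follows by computing each occurrence of $\mathbbsl{T}'_j$ using the defining equation for $S'$ and the assumption $\overrightarrow{\ottmv{X'_j}} \cap \textit{ftv}(\overrightarrow{\ottnt{T'_j}}) = \emptyset$, which prevents the freshly introduced variables from being captured. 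For \rnp{P\_LetP}, the bound variables in the polymorphic value are, by the side conditions of the rule, disjoint from $\textit{ftv}(\Gamma)$ and $\textit{ftv}(\Gamma')$; by $\alpha$-renaming I may further assume they are disjoint from $\overrightarrow{\ottmv{X_i}}$, $\overrightarrow{\ottmv{X'_j}}$, $\overrightarrow{\ottnt{T_i}}$, $\overrightarrow{\ottnt{T'_j}}$, $\overrightarrow{\ottnt{T''_j}}$, and $\ottnt{P}$, so that substitutions commute with let-binding and the IH applies to the body with the same $S'$.

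The main obstacle is managing the layered substitutions cleanly: at the outermost level the derivation uses the combined substitution $[\overrightarrow{\ottmv{X'_j}} := \overrightarrow{\ottnt{T''_j}}] \uplus S$ as the precision witness, and after applying the type-argument substitutions I must show the new witness is $S \uplus S'$ with $S'$ as defined. The core algebraic identity required is that for any type $\ottnt{V}$ and any $\ottmv{X'_k} \in \overrightarrow{\ottmv{X'_j}}$, we have $([\overrightarrow{\ottmv{X'_j}} := \overrightarrow{\ottnt{T''_j}}] \uplus S)(\ottnt{V})[\overrightarrow{\ottmv{X_i}} := \overrightarrow{\ottnt{T_i}}] = (S \uplus S')(\ottnt{V}[\overrightarrow{\ottmv{X'_j}} := \overrightarrow{\ottnt{T'_j}}])$, which I can prove by structural induction on $\ottnt{V}$ using the freshness and disjointness assumptions; once this is established, every case of the main induction reduces to mechanical rule reapplication via Lemmas~\ref{lem:left_subst_preserve_prec} and \ref{lem:right_subst_preserve_prec}.
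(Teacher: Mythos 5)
Your overall strategy---induction on the term-precision derivation---is not the paper's route, and as described it has a genuine gap at the point where you invoke the induction hypothesis on sub-derivations. The lemma's second and fifth premises are stated relative to the \emph{conclusion} types $\ottnt{U}$ and $\ottnt{U'}$: you assume $ \ottnt{U}  [   \overrightarrow{ \ottmv{X_{\ottmv{i}}} }   \ottsym{:=}   \overrightarrow{ \ottnt{T_{\ottmv{i}}} }   ]   \sqsubseteq _{ S }  \ottnt{U'}  [   \overrightarrow{ \ottmv{X'_{\ottmv{j}}} }   \ottsym{:=}   \overrightarrow{ \ottnt{T'_{\ottmv{j}}} }   ] $, and that $\ottnt{T'_{\ottmv{k}}}$ is a fresh variable exactly when $\ottmv{X'_{\ottmv{k}}} \, \not\in \, \textit{ftv} \, \ottsym{(}  \ottnt{U'}  \ottsym{)}$. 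A premise of, say, \rnp{P\_Cast} or \rnp{P\_App} relates subterms at types $\ottnt{U'_{{\mathrm{1}}}}$ that may contain some $\ottmv{X'_{\ottmv{k}}}$ with $\ottmv{X'_{\ottmv{k}}} \, \not\in \, \textit{ftv} \, \ottsym{(}  \ottnt{U'}  \ottsym{)}$ (the source type of a cast, the argument type of an application). For such a $\ottmv{X'_{\ottmv{k}}}$ the corresponding $\ottnt{T'_{\ottmv{k}}}$ is a fresh type variable not in $\textit{dom} \, \ottsym{(}  S  \ottsym{)}$, so the type precision over $S$ that you would need as the IH's second premise for the subterm is underivable (\rnp{P\_TyVar} requires the variable on the right to be in the domain of the witness), and the IH's fifth premise would now classify $\ottnt{T'_{\ottmv{k}}}$ as constrained rather than fresh. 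Your plan of ``reestablishing the type-level precision hypothesis'' on subterms therefore does not go through as stated; you would have to strengthen the induction hypothesis, e.g.\ by fixing $S'$ globally and demanding $\ottnt{T''_{\ottmv{k}}}  [   \overrightarrow{ \ottmv{X_{\ottmv{i}}} }   \ottsym{:=}   \overrightarrow{ \ottnt{T_{\ottmv{i}}} }   ]  \ottsym{=}  \ottsym{(}   S  \uplus  S'   \ottsym{)}  \ottsym{(}  \ottnt{T'_{\ottmv{k}}}  \ottsym{)}$ for every $\ottmv{k}$ rather than splitting on membership in $\textit{ftv} \, \ottsym{(}  \ottnt{U'}  \ottsym{)}$.

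The paper avoids the induction entirely and works purely at the level of the precision witness. It (i) applies Lemma~\ref{lem:left_subst_preserve_prec} to push $[   \overrightarrow{ \ottmv{X_{\ottmv{i}}} }   :=   \overrightarrow{ \ottnt{T_{\ottmv{i}}} }   ]$ onto the left, which by the disjointness premises turns the witness into $ [   \overrightarrow{ \ottmv{X'_{\ottmv{j}}} }   :=   \overrightarrow{ \ottnt{T''_{\ottmv{j}}}  [   \overrightarrow{ \ottmv{X_{\ottmv{i}}} }   \ottsym{:=}   \overrightarrow{ \ottnt{T_{\ottmv{i}}} }   ] }   ]  \uplus  S $; (ii) derives from the second premise, via Lemmas~\ref{lem:type_prec_right_subst_pullback} and \ref{lem:type_prec_subst_reflexive}, the key equation $\ottnt{T''_{\ottmv{k}}}  [   \overrightarrow{ \ottmv{X_{\ottmv{i}}} }   \ottsym{:=}   \overrightarrow{ \ottnt{T_{\ottmv{i}}} }   ]  \ottsym{=}  S  \ottsym{(}  \ottnt{T'_{\ottmv{k}}}  \ottsym{)}$ for those $\ottmv{X'_{\ottmv{k}}}$ occurring in $\ottnt{U'}$; and (iii) rewrites the witness as a composition and invokes Lemma~\ref{lem:term_prec_push_subst} to move $[   \overrightarrow{ \ottmv{X'_{\ottmv{j}}} }   :=   \overrightarrow{ \ottnt{T'_{\ottmv{j}}} }   ]$ onto the right-hand term, type, and environment in one step. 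Your ``core algebraic identity'' is essentially the conjunction of (i)--(iii), and it is the right observation; the fix is to route it through these already-proved substitution lemmas (which is where the induction on the derivation actually lives) instead of through a fresh induction whose hypotheses do not localize to subterms.
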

\begin{proof}
 Since $ \langle  \Gamma   \vdash   \ottnt{f}  :  \ottnt{U}   \sqsubseteq _{  [   \overrightarrow{ \ottmv{X'_{\ottmv{j}}} }   :=   \overrightarrow{ \ottnt{T''_{\ottmv{j}}} }   ]  \uplus  S  }  \ottnt{U'}  :  \ottnt{f'}  \dashv  \Gamma'  \rangle $,
 we have
 \begin{equation*}
   \langle  \Gamma  [   \overrightarrow{ \ottmv{X_{\ottmv{i}}} }   \ottsym{:=}   \overrightarrow{ \ottnt{T_{\ottmv{i}}} }   ]   \vdash   \ottnt{f}  [   \overrightarrow{ \ottmv{X_{\ottmv{i}}} }   \ottsym{:=}   \overrightarrow{ \ottnt{T_{\ottmv{i}}} }   ]  :  \ottnt{U}  [   \overrightarrow{ \ottmv{X_{\ottmv{i}}} }   \ottsym{:=}   \overrightarrow{ \ottnt{T_{\ottmv{i}}} }   ]   \sqsubseteq _{  [   \overrightarrow{ \ottmv{X_{\ottmv{i}}} }   :=   \overrightarrow{ \ottnt{T_{\ottmv{i}}} }   ]  \circ  \ottsym{(}   [   \overrightarrow{ \ottmv{X'_{\ottmv{j}}} }   :=   \overrightarrow{ \ottnt{T''_{\ottmv{j}}} }   ]  \uplus  S   \ottsym{)}  }  \ottnt{U'}  :  \ottnt{f'}  \dashv  \Gamma'  \rangle 
 \end{equation*}
 by Lemma~\ref{lem:left_subst_preserve_prec}.
 Since, for any $\ottmv{X} \, \in \, \textit{dom} \, \ottsym{(}  S  \ottsym{)}$, $ \overrightarrow{ \ottmv{X_{\ottmv{i}}} }   \cap  \textit{ftv} \, \ottsym{(}  S  \ottsym{(}  \ottmv{X}  \ottsym{)}  \ottsym{)}  \ottsym{=}   \emptyset $,
 we have $ [   \overrightarrow{ \ottmv{X_{\ottmv{i}}} }   :=   \overrightarrow{ \ottnt{T_{\ottmv{i}}} }   ]  \circ  \ottsym{(}   [   \overrightarrow{ \ottmv{X'_{\ottmv{j}}} }   :=   \overrightarrow{ \ottnt{T''_{\ottmv{j}}} }   ]  \uplus  S   \ottsym{)}   \ottsym{=}   [   \overrightarrow{ \ottmv{X'_{\ottmv{j}}} }   :=   \overrightarrow{ \ottnt{T''_{\ottmv{j}}}  [   \overrightarrow{ \ottmv{X_{\ottmv{i}}} }   \ottsym{:=}   \overrightarrow{ \ottnt{T_{\ottmv{i}}} }   ] }   ]  \uplus  S $.
 Thus, 
 \begin{equation*}
   \langle  \Gamma  [   \overrightarrow{ \ottmv{X_{\ottmv{i}}} }   \ottsym{:=}   \overrightarrow{ \ottnt{T_{\ottmv{i}}} }   ]   \vdash   \ottnt{f}  [   \overrightarrow{ \ottmv{X_{\ottmv{i}}} }   \ottsym{:=}   \overrightarrow{ \ottnt{T_{\ottmv{i}}} }   ]  :  \ottnt{U}  [   \overrightarrow{ \ottmv{X_{\ottmv{i}}} }   \ottsym{:=}   \overrightarrow{ \ottnt{T_{\ottmv{i}}} }   ]   \sqsubseteq _{  [   \overrightarrow{ \ottmv{X'_{\ottmv{j}}} }   :=   \overrightarrow{ \ottnt{T''_{\ottmv{j}}}  [   \overrightarrow{ \ottmv{X_{\ottmv{i}}} }   \ottsym{:=}   \overrightarrow{ \ottnt{T_{\ottmv{i}}} }   ] }   ]  \uplus  S  }  \ottnt{U'}  :  \ottnt{f'}  \dashv  \Gamma'  \rangle 
 \end{equation*}
 Let $ \overrightarrow{  { \ottmv{X} }_{   1     \ottmv{j}   }'  } $ be $ \overrightarrow{ \ottmv{X'_{\ottmv{j}}} }   \cap  \textit{ftv} \, \ottsym{(}  \ottnt{U'}  \ottsym{)}$ and
 $ \overrightarrow{  { \ottmv{X} }_{   2     \ottmv{j}   }'  } $ be $ \overrightarrow{ \ottmv{X'_{\ottmv{j}}} }   \setminus  \textit{ftv} \, \ottsym{(}  \ottnt{U'}  \ottsym{)}$.
 Then,
 \begin{equation}
   \langle  \Gamma  [   \overrightarrow{ \ottmv{X_{\ottmv{i}}} }   \ottsym{:=}   \overrightarrow{ \ottnt{T_{\ottmv{i}}} }   ]   \vdash   \ottnt{f}  [   \overrightarrow{ \ottmv{X_{\ottmv{i}}} }   \ottsym{:=}   \overrightarrow{ \ottnt{T_{\ottmv{i}}} }   ]  :  \ottnt{U}  [   \overrightarrow{ \ottmv{X_{\ottmv{i}}} }   \ottsym{:=}   \overrightarrow{ \ottnt{T_{\ottmv{i}}} }   ]   \sqsubseteq _{   [   \overrightarrow{ \ottmv{X} _{   1     \ottmv{j}   }'}   :=   \overrightarrow{  { \ottnt{T} }_{   1     \ottmv{j}   }''   [   \overrightarrow{ \ottmv{X_{\ottmv{i}}} }   \ottsym{:=}   \overrightarrow{ \ottnt{T_{\ottmv{i}}} }   ] }   ]  \uplus  [   \overrightarrow{ \ottmv{X} _{   2     \ottmv{j}   }'}   :=   \overrightarrow{  { \ottnt{T} }_{   2     \ottmv{j}   }''   [   \overrightarrow{ \ottmv{X_{\ottmv{i}}} }   \ottsym{:=}   \overrightarrow{ \ottnt{T_{\ottmv{i}}} }   ] }   ]   \uplus  S  }  \ottnt{U'}  :  \ottnt{f'}  \dashv  \Gamma'  \rangle 
   \label{eq:type_app_preserve_term_prec:one}
 \end{equation}

 By Lemma~\ref{lem:term_prec_to_type_prec},
 $ \ottnt{U}  [   \overrightarrow{ \ottmv{X_{\ottmv{i}}} }   \ottsym{:=}   \overrightarrow{ \ottnt{T_{\ottmv{i}}} }   ]   \sqsubseteq _{  [   \overrightarrow{ \ottmv{X_{\ottmv{i}}} }   :=   \overrightarrow{ \ottnt{T_{\ottmv{i}}} }   ]  \circ  \ottsym{(}   [   \overrightarrow{ \ottmv{X'_{\ottmv{j}}} }   :=   \overrightarrow{ \ottnt{T''_{\ottmv{j}}} }   ]  \uplus  S   \ottsym{)}  }  \ottnt{U'} $.
 Since $ \ottnt{U}  [   \overrightarrow{ \ottmv{X_{\ottmv{i}}} }   \ottsym{:=}   \overrightarrow{ \ottnt{T_{\ottmv{i}}} }   ]   \sqsubseteq _{ S }  \ottnt{U'}  [   \overrightarrow{ \ottmv{X'_{\ottmv{j}}} }   \ottsym{:=}   \overrightarrow{ \ottnt{T'_{\ottmv{j}}} }   ] $,
 we have
 $ \ottnt{U}  [   \overrightarrow{ \ottmv{X_{\ottmv{i}}} }   \ottsym{:=}   \overrightarrow{ \ottnt{T_{\ottmv{i}}} }   ]   \sqsubseteq _{  S  \circ  [   \overrightarrow{ \ottmv{X'_{\ottmv{j}}} }   :=   \overrightarrow{ \ottnt{T'_{\ottmv{j}}} }   ]  }  \ottnt{U'} $
 by Lemma~\ref{lem:type_prec_right_subst_pullback}.
 Thus, by Lemma~\ref{lem:type_prec_subst_reflexive},
 \begin{equation*}
  \text{for any } \ottmv{X} \, \in \, \textit{ftv} \, \ottsym{(}  \ottnt{U'}  \ottsym{)},  [   \overrightarrow{ \ottmv{X_{\ottmv{i}}} }   :=   \overrightarrow{ \ottnt{T_{\ottmv{i}}} }   ]  \circ  \ottsym{(}   [   \overrightarrow{ \ottmv{X'_{\ottmv{j}}} }   :=   \overrightarrow{ \ottnt{T''_{\ottmv{j}}} }   ]  \uplus  S   \ottsym{)}   \ottsym{(}  \ottmv{X}  \ottsym{)}  \ottsym{=}   S  \circ  [   \overrightarrow{ \ottmv{X'_{\ottmv{j}}} }   :=   \overrightarrow{ \ottnt{T'_{\ottmv{j}}} }   ]   \ottsym{(}  \ottmv{X}  \ottsym{)}
 \end{equation*}
 Since $\textit{dom} \, \ottsym{(}  S  \ottsym{)}  \cap   \overrightarrow{ \ottmv{X'_{\ottmv{j}}} }   \ottsym{=}   \emptyset $,
 for any $\ottmv{k}$ such that $\ottmv{X'_{\ottmv{k}}} \, \in \,  \overrightarrow{ \ottmv{X} _{   1     \ottmv{j}   }'} $,
 \begin{equation}
   \ottnt{T''_{\ottmv{k}}}  [   \overrightarrow{ \ottmv{X_{\ottmv{i}}} }   \ottsym{:=}   \overrightarrow{ \ottnt{T_{\ottmv{i}}} }   ]  \ottsym{=}  S  \ottsym{(}  \ottnt{T'_{\ottmv{k}}}  \ottsym{)}
   \label{eq:type_app_preserve_term_prec:two}
 \end{equation}
 Here,
 \[\begin{array}{llll}
  &   [   \overrightarrow{ \ottmv{X} _{   1     \ottmv{j}   }'}   :=   \overrightarrow{  { \ottnt{T} }_{   1     \ottmv{j}   }''   [   \overrightarrow{ \ottmv{X_{\ottmv{i}}} }   \ottsym{:=}   \overrightarrow{ \ottnt{T_{\ottmv{i}}} }   ] }   ]  \uplus  [   \overrightarrow{ \ottmv{X} _{   2     \ottmv{j}   }'}   :=   \overrightarrow{  { \ottnt{T} }_{   2     \ottmv{j}   }''   [   \overrightarrow{ \ottmv{X_{\ottmv{i}}} }   \ottsym{:=}   \overrightarrow{ \ottnt{T_{\ottmv{i}}} }   ] }   ]   \uplus  S  \\ = &
      [   \overrightarrow{ \ottmv{X} _{   1     \ottmv{j}   }'}   :=   \overrightarrow{ S  \ottsym{(}   { \ottnt{T} }_{   1     \ottmv{j}   }'   \ottsym{)} }   ]  \uplus  [   \overrightarrow{ \ottmv{X} _{   2     \ottmv{j}   }'}   :=   \overrightarrow{  { \ottnt{T} }_{   2     \ottmv{j}   }''   [   \overrightarrow{ \ottmv{X_{\ottmv{i}}} }   \ottsym{:=}   \overrightarrow{ \ottnt{T_{\ottmv{i}}} }   ] }   ]   \uplus  S  & \text{(by (\ref{eq:type_app_preserve_term_prec:two}))} \\ = &
     [   \overrightarrow{ \ottmv{X} _{   2     \ottmv{j}   }'}   :=   \overrightarrow{  { \ottnt{T} }_{   2     \ottmv{j}   }''   [   \overrightarrow{ \ottmv{X_{\ottmv{i}}} }   \ottsym{:=}   \overrightarrow{ \ottnt{T_{\ottmv{i}}} }   ] }   ]  \uplus  \ottsym{(}   S  \circ  [   \overrightarrow{ \ottmv{X} _{   1     \ottmv{j}   }'}   :=   \overrightarrow{  { \ottnt{T} }_{   1     \ottmv{j}   }'  }   ]   \ottsym{)}  \\ = &
      [   \overrightarrow{ \ottmv{X} _{   2     \ottmv{j}   }'}   :=   \overrightarrow{  { \ottnt{T} }_{   2     \ottmv{j}   }''   [   \overrightarrow{ \ottmv{X_{\ottmv{i}}} }   \ottsym{:=}   \overrightarrow{ \ottnt{T_{\ottmv{i}}} }   ] }   ]  \circ  S   \circ  [   \overrightarrow{ \ottmv{X} _{   1     \ottmv{j}   }'}   :=   \overrightarrow{  { \ottnt{T} }_{   1     \ottmv{j}   }'  }   ]  \\ &
    \multicolumn{3}{r}{
     \text{(since $ \overrightarrow{ \ottmv{X'_{\ottmv{j}}} }   \cap  \textit{ftv} \, \ottsym{(}   \overrightarrow{ \ottnt{T'_{\ottmv{j}}} }   \ottsym{)}  \ottsym{=}   \emptyset $ and for any $\ottmv{X} \, \in \, \textit{dom} \, \ottsym{(}  S  \ottsym{)}$ $ \overrightarrow{ \ottmv{X'_{\ottmv{j}}} }   \cap  \textit{ftv} \, \ottsym{(}  S  \ottsym{(}  \ottmv{X}  \ottsym{)}  \ottsym{)}  \ottsym{=}   \emptyset $)}
    }
   \end{array}\]
 Thus, from (\ref{eq:type_app_preserve_term_prec:one}),
 \begin{equation*}
   \langle  \Gamma  [   \overrightarrow{ \ottmv{X_{\ottmv{i}}} }   \ottsym{:=}   \overrightarrow{ \ottnt{T_{\ottmv{i}}} }   ]   \vdash   \ottnt{f}  [   \overrightarrow{ \ottmv{X_{\ottmv{i}}} }   \ottsym{:=}   \overrightarrow{ \ottnt{T_{\ottmv{i}}} }   ]  :  \ottnt{U}  [   \overrightarrow{ \ottmv{X_{\ottmv{i}}} }   \ottsym{:=}   \overrightarrow{ \ottnt{T_{\ottmv{i}}} }   ]   \sqsubseteq _{   [   \overrightarrow{ \ottmv{X} _{   2     \ottmv{j}   }'}   :=   \overrightarrow{  { \ottnt{T} }_{   2     \ottmv{j}   }''   [   \overrightarrow{ \ottmv{X_{\ottmv{i}}} }   \ottsym{:=}   \overrightarrow{ \ottnt{T_{\ottmv{i}}} }   ] }   ]  \circ  S   \circ  [   \overrightarrow{ \ottmv{X} _{   1     \ottmv{j}   }'}   :=   \overrightarrow{  { \ottnt{T} }_{   1     \ottmv{j}   }'  }   ]  }  \ottnt{U'}  :  \ottnt{f'}  \dashv  \Gamma'  \rangle 
 \end{equation*}
 By Lemma~\ref{lem:term_prec_push_subst},
 \begin{equation*}
   \langle  \Gamma  [   \overrightarrow{ \ottmv{X_{\ottmv{i}}} }   \ottsym{:=}   \overrightarrow{ \ottnt{T_{\ottmv{i}}} }   ]   \vdash   \ottnt{f}  [   \overrightarrow{ \ottmv{X_{\ottmv{i}}} }   \ottsym{:=}   \overrightarrow{ \ottnt{T_{\ottmv{i}}} }   ]  :  \ottnt{U}  [   \overrightarrow{ \ottmv{X_{\ottmv{i}}} }   \ottsym{:=}   \overrightarrow{ \ottnt{T_{\ottmv{i}}} }   ]   \sqsubseteq _{  [   \overrightarrow{ \ottmv{X} _{   2     \ottmv{j}   }'}   :=   \overrightarrow{  { \ottnt{T} }_{   2     \ottmv{j}   }''   [   \overrightarrow{ \ottmv{X_{\ottmv{i}}} }   \ottsym{:=}   \overrightarrow{ \ottnt{T_{\ottmv{i}}} }   ] }   ]  \uplus  S  }  \ottnt{U''}  :  \ottnt{f''}  \dashv  \Gamma''  \rangle 
  \end{equation*}
  where $\ottnt{U''} = \ottnt{U'}  [   \overrightarrow{ \ottmv{X} _{   1     \ottmv{j}   }'}   \ottsym{:=}   \overrightarrow{  { \ottnt{T} }_{   1     \ottmv{j}   }'  }   ]$ and $\ottnt{f''} = \ottnt{f'}  [   \overrightarrow{ \ottmv{X} _{   1     \ottmv{j}   }'}   \ottsym{:=}   \overrightarrow{  { \ottnt{T} }_{   1     \ottmv{j}   }'  }   ]$ and $\Gamma'' = \Gamma'  [   \overrightarrow{ \ottmv{X} _{   1     \ottmv{j}   }'}   \ottsym{:=}   \overrightarrow{  { \ottnt{T} }_{   1     \ottmv{j}   }'  }   ]$.
 Since $ \overrightarrow{  { \ottnt{T} }_{   2     \ottmv{j}   }'  } $ are type variables,
 \begin{equation*}
   \langle  \Gamma  [   \overrightarrow{ \ottmv{X_{\ottmv{i}}} }   \ottsym{:=}   \overrightarrow{ \ottnt{T_{\ottmv{i}}} }   ]   \vdash   \ottnt{f}  [   \overrightarrow{ \ottmv{X_{\ottmv{i}}} }   \ottsym{:=}   \overrightarrow{ \ottnt{T_{\ottmv{i}}} }   ]  :  \ottnt{U}  [   \overrightarrow{ \ottmv{X_{\ottmv{i}}} }   \ottsym{:=}   \overrightarrow{ \ottnt{T_{\ottmv{i}}} }   ]   \sqsubseteq _{  S  \uplus  S'  }  \ottnt{U'}  [   \overrightarrow{ \ottmv{X'_{\ottmv{j}}} }   \ottsym{:=}   \overrightarrow{ \ottnt{T'_{\ottmv{j}}} }   ]  :  \ottnt{f'}  [   \overrightarrow{ \ottmv{X'_{\ottmv{j}}} }   \ottsym{:=}   \overrightarrow{ \ottnt{T'_{\ottmv{j}}} }   ]  \dashv  \Gamma'  [   \overrightarrow{ \ottmv{X'_{\ottmv{j}}} }   \ottsym{:=}   \overrightarrow{ \ottnt{T'_{\ottmv{j}}} }   ]  \rangle 
 \end{equation*}
 where $S'$ is the type substitution described in the statement.
\end{proof}

\begin{lemmaA} \label{lem:prec_subst_weak} \noindent
 Let type variables in $\textit{dom} \, \ottsym{(}  S_{{\mathrm{2}}}  \ottsym{)}$ be fresh.
 \begin{enumerate}
  \item If $ \ottnt{U}   \sqsubseteq _{ S_{{\mathrm{1}}} }  \ottnt{U'} $,
        then $ \ottnt{U}   \sqsubseteq _{  S_{{\mathrm{1}}}  \uplus  S_{{\mathrm{2}}}  }  \ottnt{U'} $.
  \item If $ \langle  \Gamma   \vdash   \ottnt{f}  :  \ottnt{U}   \sqsubseteq _{ S_{{\mathrm{1}}} }  \ottnt{U'}  :  \ottnt{f'}  \dashv  \Gamma'  \rangle $,
        then $ \langle  \Gamma   \vdash   \ottnt{f}  :  \ottnt{U}   \sqsubseteq _{  S_{{\mathrm{1}}}  \uplus  S_{{\mathrm{2}}}  }  \ottnt{U'}  :  \ottnt{f'}  \dashv  \Gamma'  \rangle $.
 \end{enumerate}
\end{lemmaA}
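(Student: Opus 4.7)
The plan is to prove the two parts simultaneously (or sequentially, using (1) inside (2)) by induction on the derivation of the precision judgment, relying crucially on the freshness assumption: since $\textit{dom} \, \ottsym{(}  S_{{\mathrm{2}}}  \ottsym{)}$ consists of fresh type variables, none of them occurs free in any of $\ottnt{U}$, $\ottnt{U'}$, $\Gamma$, $\Gamma'$, $\ottnt{f}$, $\ottnt{f'}$, or in the codomain of $S_{{\mathrm{1}}}$. As a consequence, $S_{{\mathrm{1}}}  \ottsym{(}  \ottmv{X}  \ottsym{)}  \ottsym{=}   S_{{\mathrm{1}}}  \uplus  S_{{\mathrm{2}}}   \ottsym{(}  \ottmv{X}  \ottsym{)}$ for every $\ottmv{X}$ that actually appears in the types and terms under consideration, so the precision relation is preserved.

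For part (1), I would proceed by induction on $ \ottnt{U}   \sqsubseteq _{ S_{{\mathrm{1}}} }  \ottnt{U'} $. The only non-trivial case is \rnp{P\_TyVar}: here $\ottnt{U'}  \ottsym{=}  \ottmv{X}$ with $\ottmv{X} \, \in \, \textit{dom} \, \ottsym{(}  S_{{\mathrm{1}}}  \ottsym{)}$ and $\ottnt{U}  \ottsym{=}  S_{{\mathrm{1}}}  \ottsym{(}  \ottmv{X}  \ottsym{)}$. Since the fresh type variables in $\textit{dom} \, \ottsym{(}  S_{{\mathrm{2}}}  \ottsym{)}$ are by assumption disjoint from those appearing on the imprecise side, we have $\ottmv{X} \, \not\in \, \textit{dom} \, \ottsym{(}  S_{{\mathrm{2}}}  \ottsym{)}$, hence $\ottmv{X} \, \in \, \textit{dom} \, \ottsym{(}   S_{{\mathrm{1}}}  \uplus  S_{{\mathrm{2}}}   \ottsym{)}$ and $ S_{{\mathrm{1}}}  \uplus  S_{{\mathrm{2}}}   \ottsym{(}  \ottmv{X}  \ottsym{)}  \ottsym{=}  S_{{\mathrm{1}}}  \ottsym{(}  \ottmv{X}  \ottsym{)}  \ottsym{=}  \ottnt{U}$, so re-applying \rnp{P\_TyVar} gives $ \ottnt{U}   \sqsubseteq _{  S_{{\mathrm{1}}}  \uplus  S_{{\mathrm{2}}}  }  \ottmv{X} $. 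The cases \rnp{P\_IdBase}, \rnp{P\_Dyn} are immediate, and \rnp{P\_Arrow} follows from the induction hypothesis.

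For part (2), I would induct on the derivation of $ \langle  \Gamma   \vdash   \ottnt{f}  :  \ottnt{U}   \sqsubseteq _{ S_{{\mathrm{1}}} }  \ottnt{U'}  :  \ottnt{f'}  \dashv  \Gamma'  \rangle $. Every compatibility case (\rnp{P\_Const}, \rnp{P\_Op}, \rnp{P\_Abs}, \rnp{P\_App}, \rnp{P\_VarP}, \rnp{P\_Cast}, \rnp{P\_CastL}, \rnp{P\_CastR}, \rnp{P\_Blame}) is handled by combining the IH on the term-precision premises with part (1) applied to any type-precision premises. The case \rnp{P\_LetP} is the one requiring a small amount of care: its premise for $w_{{\mathrm{1}}}$ uses the substitution $  [   \overrightarrow{ \ottmv{X'_{\ottmv{j}}} }   :=   \overrightarrow{ \ottnt{T'_{\ottmv{j}}} }   ]  \uplus  S_{{\mathrm{1}}} $. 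I would apply the IH with $S_{{\mathrm{2}}}$, observing that $\textit{dom} \, \ottsym{(}   [   \overrightarrow{ \ottmv{X'_{\ottmv{j}}} }   :=   \overrightarrow{ \ottnt{T'_{\ottmv{j}}} }   ]  \uplus  S_{{\mathrm{1}}}   \ottsym{)}$ remains disjoint from $\textit{dom} \, \ottsym{(}  S_{{\mathrm{2}}}  \ottsym{)}$ (by freshness, possibly after $\alpha$-renaming the bound $ \overrightarrow{ \ottmv{X'_{\ottmv{j}}} } $), so that the identity $  [   \overrightarrow{ \ottmv{X'_{\ottmv{j}}} }   :=   \overrightarrow{ \ottnt{T'_{\ottmv{j}}} }   ]  \uplus  S_{{\mathrm{1}}}    \uplus  S_{{\mathrm{2}}}   \ottsym{=}   [   \overrightarrow{ \ottmv{X'_{\ottmv{j}}} }   :=   \overrightarrow{ \ottnt{T'_{\ottmv{j}}} }   ]  \uplus  \ottsym{(}   S_{{\mathrm{1}}}  \uplus  S_{{\mathrm{2}}}   \ottsym{)} $ holds and the side condition on $\textit{ftv} \, \ottsym{(}  S  \ottsym{(}  \ottmv{X}  \ottsym{)}  \ottsym{)}$ still passes. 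The inductive step for $\ottnt{f_{{\mathrm{2}}}}$ is direct.

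The main obstacle, as usual for this kind of ``weakening'' lemma in the presence of an effectful substitution parameter on precision, is to keep track of the side-conditions on the domain and the codomain of $S_{{\mathrm{2}}}$ in the \rnp{P\_LetP} case, in particular to ensure that the extension by $S_{{\mathrm{2}}}$ commutes with the local extension by $[   \overrightarrow{ \ottmv{X'_{\ottmv{j}}} }   :=   \overrightarrow{ \ottnt{T'_{\ottmv{j}}} }   ]$. Freshness of $\textit{dom} \, \ottsym{(}  S_{{\mathrm{2}}}  \ottsym{)}$ is exactly what makes this work; once that bookkeeping is settled, every case of the induction closes routinely.
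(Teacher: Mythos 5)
Your proof takes essentially the same route as the paper, which simply proves both parts by induction on the respective precision derivations; your treatment of the \rnp{P\_TyVar} and \rnp{P\_LetP} cases fills in exactly the bookkeeping the paper leaves implicit. The argument is correct.
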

\begin{proof}
 By induction on the derivations of
 $ \ottnt{U}   \sqsubseteq _{ S_{{\mathrm{1}}} }  \ottnt{U'} $ and $ \langle  \Gamma   \vdash   \ottnt{f}  :  \ottnt{U}   \sqsubseteq _{ S_{{\mathrm{1}}} }  \ottnt{U'}  :  \ottnt{f'}  \dashv  \Gamma'  \rangle $.xo
\end{proof}

\begin{lemmaA} \label{lem:subst_preserve_term_prec}
  If
  \begin{itemize}
   \item $ \langle   \Gamma ,   \ottmv{x}  :  \forall \,  \overrightarrow{ \ottmv{X_{\ottmv{i}}} }   \ottsym{.}  \ottnt{U_{{\mathrm{1}}}}     \vdash   \ottnt{f}  :  \ottnt{U}   \sqsubseteq _{ S }  \ottnt{U'}  :  \ottnt{f'}  \dashv   \Gamma' ,   \ottmv{x}  :  \forall \,  \overrightarrow{ \ottmv{X'_{\ottmv{j}}} }   \ottsym{.}  \ottnt{U'_{{\mathrm{1}}}}    \rangle $ and
   \item $ \langle  \Gamma   \vdash   w  :  \ottnt{U_{{\mathrm{1}}}}   \sqsubseteq _{  [   \overrightarrow{ \ottmv{X'_{\ottmv{j}}} }   :=   \overrightarrow{ \ottnt{T''_{\ottmv{j}}} }   ]  \uplus  S  }  \ottnt{U'_{{\mathrm{1}}}}  :  w'  \dashv  \Gamma'  \rangle $ and
   \item $ \overrightarrow{ \ottmv{X_{\ottmv{i}}} } $ and $ \overrightarrow{ \ottmv{X'_{\ottmv{j}}} } $ does not occur free in the derivation of
         $ \langle   \Gamma ,   \ottmv{x}  :  \forall \,  \overrightarrow{ \ottmv{X_{\ottmv{i}}} }   \ottsym{.}  \ottnt{U_{{\mathrm{1}}}}     \vdash   \ottnt{f}  :  \ottnt{U}   \sqsubseteq _{ S }  \ottnt{U'}  :  \ottnt{f'}  \dashv   \Gamma' ,   \ottmv{x}  :  \forall \,  \overrightarrow{ \ottmv{X'_{\ottmv{j}}} }   \ottsym{.}  \ottnt{U'_{{\mathrm{1}}}}    \rangle $,
  \end{itemize}
  then there exist $S'$ such that
  \begin{itemize}
   \item $\textit{dom} \, \ottsym{(}  S'  \ottsym{)}$ is a set of fresh type variables and
   \item $ \langle  \Gamma   \vdash   \ottnt{f}  [  \ottmv{x}  \ottsym{:=}   \Lambda    \overrightarrow{ \ottmv{X_{\ottmv{i}}} }  .\,  w   ]  :  \ottnt{U}   \sqsubseteq _{  S  \uplus  S'  }  \ottnt{U'}  :  \ottnt{f'}  [  \ottmv{x}  \ottsym{:=}   \Lambda    \overrightarrow{ \ottmv{X'_{\ottmv{j}}} }  .\,  w'   ]  \dashv  \Gamma'  \rangle $.
  \end{itemize}
\end{lemmaA}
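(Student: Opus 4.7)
The plan is to prove the lemma by induction on the derivation of
$ \langle   \Gamma ,   \ottmv{x}  :  \forall \,  \overrightarrow{ \ottmv{X_{\ottmv{i}}} }   \ottsym{.}  \ottnt{U_{{\mathrm{1}}}}     \vdash   \ottnt{f}  :  \ottnt{U}   \sqsubseteq _{ S }  \ottnt{U'}  :  \ottnt{f'}  \dashv   \Gamma' ,   \ottmv{x}  :  \forall \,  \overrightarrow{ \ottmv{X'_{\ottmv{j}}} }   \ottsym{.}  \ottnt{U'_{{\mathrm{1}}}}    \rangle $.  Most cases are congruence cases where we just apply the induction hypothesis to subderivations and re-apply the same precision rule, letting $S'$ be the disjoint union of the $S'$s produced by each inductive call (freshness of the new type variables ensures the unions are well-defined). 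For the cast rules \rn{P\_Cast}, \rn{P\_CastL}, \rn{P\_CastR}, and \rn{P\_Blame} the reasoning is the same: substitution does not touch the cast annotations, so the IH on the subterm gives us what we need.

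The interesting case is \rn{P\_VarP} applied to the variable $\ottmv{x}$ being substituted (when the variable is some other $\ottmv{x'}$, the substitution is the identity and the case is trivial). In that case we are given
$ \langle   \Gamma ,   \ottmv{x}  :  \forall \,  \overrightarrow{ \ottmv{X_{\ottmv{i}}} }   \ottsym{.}  \ottnt{U_{{\mathrm{1}}}}     \vdash   \ottmv{x}  [   \overrightarrow{ \mathbbsl{T}_{\ottmv{i}} }   ]  :  \ottnt{U_{{\mathrm{1}}}}  [   \overrightarrow{ \ottmv{X_{\ottmv{i}}} }   \ottsym{:=}   \overrightarrow{ \mathbbsl{T}_{\ottmv{i}} }   ]   \sqsubseteq _{ S }  \ottnt{U'_{{\mathrm{1}}}}  [   \overrightarrow{ \ottmv{X'_{\ottmv{j}}} }   \ottsym{:=}   \overrightarrow{  { \mathbbsl{T}_{\ottmv{j}} }'  }   ]  :  \ottmv{x}  [   \overrightarrow{  { \mathbbsl{T}_{\ottmv{j}} }'  }   ]  \dashv   \Gamma' ,   \ottmv{x}  :  \forall \,  \overrightarrow{ \ottmv{X'_{\ottmv{j}}} }   \ottsym{.}  \ottnt{U'_{{\mathrm{1}}}}    \rangle $.
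Unfolding the definition of value substitution, we obtain fresh type variables replacing each $ \nu $ in $ \overrightarrow{ \mathbbsl{T}_{\ottmv{i}} } $ and $ \overrightarrow{  { \mathbbsl{T}_{\ottmv{j}} }'  } $, and the resulting terms are $[   \overrightarrow{ \ottmv{X_{\ottmv{i}}} }   :=   \overrightarrow{ \ottnt{T_{\ottmv{i}}} }   ]  \ottsym{(}  w  \ottsym{)}$ and $[   \overrightarrow{ \ottmv{X'_{\ottmv{j}}} }   :=   \overrightarrow{ \ottnt{T'_{\ottmv{j}}} }   ]  \ottsym{(}  w'  \ottsym{)}$, respectively.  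The plan is then to apply Lemma~\ref{lem:type_app_preserve_term_prec} to the hypothesis precision on $w$ and $w'$ with these two substitutions; this yields a fresh type substitution $S'$ and the desired precision for the substituted terms at the instance types.  The side conditions of Lemma~\ref{lem:type_app_preserve_term_prec} (e.g.\ freshness of $ \overrightarrow{ \ottmv{X'_{\ottmv{j}}} }   \setminus  \textit{ftv} \, \ottsym{(}  \ottnt{U'_{{\mathrm{1}}}}  \ottsym{)}$ mapping to fresh type variables, and disjointness of $ \overrightarrow{ \ottmv{X_{\ottmv{i}}} } $, $ \overrightarrow{ \ottmv{X'_{\ottmv{j}}} } $ from the free type variables in the codomain of $S$) follow from the second premise of \rn{P\_VarP} applied to the polymorphic binding of $\ottmv{x}$ and from the implicit freshness assumption on $ \overrightarrow{ \ottmv{X_{\ottmv{i}}} } $ and $ \overrightarrow{ \ottmv{X'_{\ottmv{j}}} } $.

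The second subtle case is \rn{P\_LetP}, where the subterm introduces yet another polymorphic binding $ \Lambda    \overrightarrow{ \ottmv{Y_{\ottmv{k}}} }  .\,  w_{{\mathrm{1}}} $.  Here we need to rename $ \overrightarrow{ \ottmv{Y_{\ottmv{k}}} } $ and the corresponding bound variables on the right to avoid clashes with type variables in $w$ and $w'$; the side conditions $ \overrightarrow{ \ottmv{Y_{\ottmv{k}}} }   \cap  \textit{ftv} \, \ottsym{(}  \Gamma  \ottsym{)}  \ottsym{=}   \emptyset $ in \rn{P\_LetP} allow us to alpha-rename so that the substitution commutes with the $ \textsf{\textup{let}\relax} $ binder.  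After renaming, the IH applies to both subterms and we combine the resulting fresh substitutions.

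The main obstacle is the \rn{P\_VarP} case: we must carefully track how the optional-type arguments $ \overrightarrow{ \mathbbsl{T}_{\ottmv{i}} } $ and $ \overrightarrow{  { \mathbbsl{T}_{\ottmv{j}} }'  } $ become the explicit type arguments $ \overrightarrow{ \ottnt{T_{\ottmv{i}}} } $ and $ \overrightarrow{ \ottnt{T'_{\ottmv{j}}} } $ after substitution (with $ \nu $ replaced by fresh type variables), and match this up exactly with the hypotheses of Lemma~\ref{lem:type_app_preserve_term_prec}.  In particular, the fresh type variables introduced on the right-hand side for $\mathbbsl{T}'_{\ottmv{k}}  =   \nu $ must become the domain of the output substitution $S'$, and their images under $S'$ are determined by how $ \overrightarrow{ \ottnt{T''_{\ottmv{j}}} } $ instantiate the precision between $w$ and $w'$ via the premise $[   \overrightarrow{ \ottmv{X'_{\ottmv{j}}} }   :=   \overrightarrow{ \ottnt{T''_{\ottmv{j}}} }   ]  \uplus  S $.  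Once this bookkeeping is done, the rest of the proof is a routine structural induction using Lemma~\ref{lem:prec_subst_weak} to weaken precision derivations by fresh substitutions as needed.
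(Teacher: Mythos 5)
Your proposal follows essentially the same route as the paper's proof: induction on the precision derivation, with the crux being the \rnp{P\_VarP} case for the substituted variable, where unfolding the value substitution (each $\nu$ becoming a fresh type variable, turning $ \overrightarrow{ \mathbbsl{T}_{\ottmv{i}} } $ and $ \overrightarrow{  { \mathbbsl{T}_{\ottmv{j}} }'  } $ into $ \overrightarrow{ \ottnt{T_{\ottmv{i}}} } $ and $ \overrightarrow{ \ottnt{T'_{\ottmv{j}}} } $) and invoking Lemma~\ref{lem:type_app_preserve_term_prec} yields the fresh substitution $S'$, and the remaining cases are dispatched by the induction hypothesis together with Lemma~\ref{lem:prec_subst_weak}. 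One small correction: \rnp{P\_Blame} cannot be grouped with the cast rules, since its premise is a typing judgment $\Gamma'  \vdash  f'  \ottsym{:}  \ottnt{U'}$ rather than a precision subderivation, so there is no induction hypothesis to apply there; instead one shows that the substituted right-hand term remains well typed via the typing substitution lemma (Lemma~\ref{lem:term_var_substitution}) together with Lemma~\ref{lem:term_prec_to_typing}.
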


\begin{proof}
  By induction on $ \langle   \Gamma ,   \ottmv{x}  :  \forall \,  \overrightarrow{ \ottmv{X_{\ottmv{i}}} }   \ottsym{.}  \ottnt{U_{{\mathrm{1}}}}     \vdash   \ottnt{f}  :  \ottnt{U}   \sqsubseteq _{ S }  \ottnt{U'}  :  \ottnt{f'}  \dashv   \Gamma' ,   \ottmv{x}  :  \forall \,  \overrightarrow{ \ottmv{X'_{\ottmv{j}}} }   \ottsym{.}  \ottnt{U'_{{\mathrm{1}}}}    \rangle $.

  \begin{caseanalysis}
    \case{\rnp{P\_VarP}}
    We are given
    $ \langle   \Gamma ,   \ottmv{x}  :  \forall \,  \overrightarrow{ \ottmv{X_{\ottmv{i}}} }   \ottsym{.}  \ottnt{U_{{\mathrm{1}}}}     \vdash   \ottmv{x'}  [   \overrightarrow{ \mathbbsl{T}_{\ottmv{i}} }   ]  :  \ottnt{U}   \sqsubseteq _{ S }  \ottnt{U'}  :  \ottmv{x'}  [   \overrightarrow{  { \mathbbsl{T}_{\ottmv{j}} }'  }   ]  \dashv   \Gamma' ,   \ottmv{x}  :  \forall \,  \overrightarrow{ \ottmv{X'_{\ottmv{j}}} }   \ottsym{.}  \ottnt{U'_{{\mathrm{1}}}}    \rangle $
    for some $\ottmv{x'}$, $ \overrightarrow{ \mathbbsl{T}_{\ottmv{i}} } $ and $ \overrightarrow{  { \mathbbsl{T}_{\ottmv{j}} }'  } $.
    If $\ottmv{x}  \neq  \ottmv{x'}$, then obvious.
    Otherwise, we suppose that $\ottmv{x}  \ottsym{=}  \ottmv{x'}$.
    Furthermore, $ \nu $ in $ \overrightarrow{ \mathbbsl{T}_{\ottmv{i}} } $ and $ \overrightarrow{  { \mathbbsl{T}_{\ottmv{j}} }'  } $ generate fresh
    type variables.
    We write $ \overrightarrow{ \ottnt{T_{\ottmv{i}}} } $ (resp.\ $ \overrightarrow{ \ottnt{T'_{\ottmv{j}}} } $) for types which are the same as
    $ \overrightarrow{ \mathbbsl{T}_{\ottmv{i}} } $ (resp.\ $ \overrightarrow{  { \mathbbsl{T}_{\ottmv{j}} }'  } $) except that the occurrences of
    $ \nu $ are replaced with the fresh type variables.
    Note that $\ottnt{U} = \ottnt{U_{{\mathrm{1}}}}  [   \overrightarrow{ \ottmv{X_{\ottmv{i}}} }   \ottsym{:=}   \overrightarrow{ \mathbbsl{T}_{\ottmv{i}} }   ] = \ottnt{U_{{\mathrm{1}}}}  [   \overrightarrow{ \ottmv{X_{\ottmv{i}}} }   \ottsym{:=}   \overrightarrow{ \ottnt{T_{\ottmv{i}}} }   ]$ and
    $\ottnt{U'} = \ottnt{U'_{{\mathrm{1}}}}  [   \overrightarrow{ \ottmv{X'_{\ottmv{j}}} }   \ottsym{:=}   \overrightarrow{  { \mathbbsl{T}_{\ottmv{j}} }'  }   ] = \ottnt{U'_{{\mathrm{1}}}}  [   \overrightarrow{ \ottmv{X'_{\ottmv{j}}} }   \ottsym{:=}   \overrightarrow{ \ottnt{T'_{\ottmv{j}}} }   ]$.
    By inversion,
    $ \ottnt{U_{{\mathrm{1}}}}  [   \overrightarrow{ \ottmv{X_{\ottmv{i}}} }   \ottsym{:=}   \overrightarrow{ \ottnt{T_{\ottmv{i}}} }   ]   \sqsubseteq _{ S }  \ottnt{U'_{{\mathrm{1}}}}  [   \overrightarrow{ \ottmv{X'_{\ottmv{j}}} }   \ottsym{:=}   \overrightarrow{ \ottnt{T'_{\ottmv{j}}} }   ] $.
    By Lemma~\ref{lem:type_app_preserve_term_prec},
    there exist $S'$ such that
    $\textit{dom} \, \ottsym{(}  S'  \ottsym{)}$ is the same as fresh type variables generated by
    $ \nu $ in $ \overrightarrow{  { \mathbbsl{T}_{\ottmv{j}} }'  } $ and
    \[
      \langle  \Gamma  [   \overrightarrow{ \ottmv{X_{\ottmv{i}}} }   \ottsym{:=}   \overrightarrow{ \ottnt{T_{\ottmv{i}}} }   ]   \vdash   w  [   \overrightarrow{ \ottmv{X_{\ottmv{i}}} }   \ottsym{:=}   \overrightarrow{ \ottnt{T_{\ottmv{i}}} }   ]  :  \ottnt{U_{{\mathrm{1}}}}  [   \overrightarrow{ \ottmv{X_{\ottmv{i}}} }   \ottsym{:=}   \overrightarrow{ \ottnt{T_{\ottmv{i}}} }   ]   \sqsubseteq _{  S  \uplus  S'  }  \ottnt{U'_{{\mathrm{1}}}}  [   \overrightarrow{ \ottmv{X'_{\ottmv{j}}} }   \ottsym{:=}   \overrightarrow{ \ottnt{T'_{\ottmv{j}}} }   ]  :  w'  [   \overrightarrow{ \ottmv{X'_{\ottmv{j}}} }   \ottsym{:=}   \overrightarrow{ \ottnt{T'_{\ottmv{j}}} }   ]  \dashv  \Gamma'  [   \overrightarrow{ \ottmv{X'_{\ottmv{j}}} }   \ottsym{:=}   \overrightarrow{ \ottnt{T'_{\ottmv{j}}} }   ]  \rangle .
    \]
    Since $ \overrightarrow{ \ottmv{X_{\ottmv{i}}} }   \cap  \textit{ftv} \, \ottsym{(}  \Gamma  \ottsym{)}  \ottsym{=}   \emptyset $ and $ \overrightarrow{ \ottmv{X'_{\ottmv{j}}} }   \cap  \textit{ftv} \, \ottsym{(}  \Gamma'  \ottsym{)}  \ottsym{=}   \emptyset $,
    we have
    \[
      \langle  \Gamma   \vdash   w  [   \overrightarrow{ \ottmv{X_{\ottmv{i}}} }   \ottsym{:=}   \overrightarrow{ \ottnt{T_{\ottmv{i}}} }   ]  :  \ottnt{U_{{\mathrm{1}}}}  [   \overrightarrow{ \ottmv{X_{\ottmv{i}}} }   \ottsym{:=}   \overrightarrow{ \ottnt{T_{\ottmv{i}}} }   ]   \sqsubseteq _{  S  \uplus  S'  }  \ottnt{U'_{{\mathrm{1}}}}  [   \overrightarrow{ \ottmv{X'_{\ottmv{j}}} }   \ottsym{:=}   \overrightarrow{ \ottnt{T'_{\ottmv{j}}} }   ]  :  w'  [   \overrightarrow{ \ottmv{X'_{\ottmv{j}}} }   \ottsym{:=}   \overrightarrow{ \ottnt{T'_{\ottmv{j}}} }   ]  \dashv  \Gamma'  \rangle ,
    \]
    which is what we want to show.

    \case{\rnp{P\_Blame}} Obvious by Lemmas~\ref{lem:term_prec_to_typing} and
    \ref{lem:term_var_substitution}.

    \otherwise  By the IH(s) and Lemma~\ref{lem:prec_subst_weak}.
    \qedhere
  \end{caseanalysis}
\end{proof}

\begin{lemmaA} \label{lem:eval_subterm}
 If $f_{{\mathrm{1}}} \,  \xmapsto{ \mathmakebox[0.4em]{} S \mathmakebox[0.3em]{} }\hspace{-0.4em}{}^\ast \hspace{0.2em}  \, f_{{\mathrm{2}}}$ where \rnp{E\_Abort} is not applied,
 then $\ottnt{E}  [  f_{{\mathrm{1}}}  ] \,  \xmapsto{ \mathmakebox[0.4em]{} S \mathmakebox[0.3em]{} }\hspace{-0.4em}{}^\ast \hspace{0.2em}  \, S  \ottsym{(}  \ottnt{E}  \ottsym{)}  [  f_{{\mathrm{2}}}  ]$.
\end{lemmaA}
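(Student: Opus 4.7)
The plan is to proceed by induction on the length $n$ of the evaluation sequence $f_1 \xmapsto{S}^\ast f_2$, lifting each step through the outer evaluation context $E$ and pushing the accumulated substitution outward.

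For the base case $n=0$, we have $f_1 = f_2$ and $S = [\,]$, so $E[f_1] \xmapsto{[\,]}^\ast E[f_2]$ holds in zero steps and $S(E) = E$.

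For the inductive step, decompose $f_1 \xmapsto{S_1} f_1' \xmapsto{S_2}^\ast f_2$ with $S = S_2 \circ S_1$. Since \rnp{E\_Abort} is not applied by assumption, the first step must come from \rnp{E\_Step}, so there exist $E'$, $g$, $g'$ with $f_1 = E'[g]$, $g \xrightarrow{S_1} g'$, and $f_1' = S_1(E'[g'])$. The key observation is that whenever $E$ and $E'$ are evaluation contexts their composition $E[E'[\cdot]]$ is again an evaluation context (immediate from the grammar of $E$), so $E[f_1] = (E \circ E')[g]$ and another application of \rnp{E\_Step} gives $E[f_1] \xmapsto{S_1} S_1(E \circ E')[g'] = S_1(E)[S_1(E'[g'])] = S_1(E)[f_1']$. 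The induction hypothesis, applied to the shorter sequence $f_1' \xmapsto{S_2}^\ast f_2$ and the evaluation context $S_1(E)$ (still without \rnp{E\_Abort}), yields $S_1(E)[f_1'] \xmapsto{S_2}^\ast S_2(S_1(E))[f_2] = S(E)[f_2]$. Composing these two pieces finishes the case.

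No step is truly hard here; the only care needed is the bookkeeping around how substitutions compose with evaluation contexts. In particular one uses that type substitutions distribute over context plugging, $S(E[g]) = S(E)[S(g)]$, and that reduction steps are preserved by such substitutions at the syntactic level of the rules \rnp{E\_Step}/\rnp{R\_*} applied at the hole. The exclusion of \rnp{E\_Abort} is essential precisely to rule out the situation $f_1 = E''[\textsf{blame}\,\ell]$ where $E''\neq[\,]$, since in that case $E[f_1]$ would collapse to $\textsf{blame}\,\ell$ in one step regardless of $E$, breaking the desired shape $S(E)[f_2]$ of the conclusion.
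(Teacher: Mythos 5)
Your proposal is correct and follows essentially the same argument as the paper's proof: induction on the length of the evaluation sequence, decomposing the first step via \rnp{E\_Step} through the composed context $E[E'[\cdot]]$, and then applying the induction hypothesis with the substituted context $S_1(E)$. The additional remarks on context composition, distribution of substitution over plugging, and the role of excluding \rnp{E\_Abort} are all accurate and consistent with the paper's reasoning.
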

\begin{proof}
 By induction on the number of steps of $f_{{\mathrm{1}}} \,  \xmapsto{ \mathmakebox[0.4em]{} S \mathmakebox[0.3em]{} }\hspace{-0.4em}{}^\ast \hspace{0.2em}  \, f_{{\mathrm{2}}}$.
 If the number of steps is zero, then $f_{{\mathrm{1}}}  \ottsym{=}  f_{{\mathrm{2}}}$ and $S  \ottsym{=}  [  ]$.
 Thus, $\ottnt{E}  [  f_{{\mathrm{1}}}  ]  \ottsym{=}  S  \ottsym{(}  \ottnt{E}  \ottsym{)}  [  f_{{\mathrm{2}}}  ]$, and $\ottnt{E}  [  f_{{\mathrm{1}}}  ] \,  \xmapsto{ \mathmakebox[0.4em]{} [  ] \mathmakebox[0.3em]{} }\hspace{-0.4em}{}^\ast \hspace{0.2em}  \, \ottnt{E}  [  f_{{\mathrm{1}}}  ]$.
 If the number of steps is more than zero, there exist $\ottnt{E'}$, $f'_{{\mathrm{1}}}$,
 $S_{{\mathrm{1}}}$, $S_{{\mathrm{2}}}$, and $f'$ such that
 \begin{itemize}
  \item $f_{{\mathrm{1}}}  \ottsym{=}  \ottnt{E'}  [  f'_{{\mathrm{1}}}  ]$,
  \item $f'_{{\mathrm{1}}} \,  \xrightarrow{ \mathmakebox[0.4em]{} S_{{\mathrm{1}}} \mathmakebox[0.3em]{} }  \, f'$,
  \item $\ottnt{E'}  [  f'_{{\mathrm{1}}}  ] \,  \xmapsto{ \mathmakebox[0.4em]{} S_{{\mathrm{1}}} \mathmakebox[0.3em]{} }  \, S_{{\mathrm{1}}}  \ottsym{(}  \ottnt{E'}  [  f'  ]  \ottsym{)}$,
  \item $S_{{\mathrm{1}}}  \ottsym{(}  \ottnt{E'}  [  f'  ]  \ottsym{)} \,  \xmapsto{ \mathmakebox[0.4em]{} S_{{\mathrm{2}}} \mathmakebox[0.3em]{} }\hspace{-0.4em}{}^\ast \hspace{0.2em}  \, f_{{\mathrm{2}}}$, and
  \item $S  \ottsym{=}   S_{{\mathrm{2}}}  \circ  S_{{\mathrm{1}}} $.
 \end{itemize}
 By \rnp{E\_Step}, $\ottnt{E}  [  \ottnt{E'}  [  f'_{{\mathrm{1}}}  ]  ] \,  \xmapsto{ \mathmakebox[0.4em]{} S_{{\mathrm{1}}} \mathmakebox[0.3em]{} }  \, S_{{\mathrm{1}}}  \ottsym{(}  \ottnt{E}  [  \ottnt{E'}  [  f'  ]  ]  \ottsym{)}$.
 By the IH, $S_{{\mathrm{1}}}  \ottsym{(}  \ottnt{E}  [  \ottnt{E'}  [  f'  ]  ]  \ottsym{)} \,  \xmapsto{ \mathmakebox[0.4em]{} S_{{\mathrm{2}}} \mathmakebox[0.3em]{} }\hspace{-0.4em}{}^\ast \hspace{0.2em}  \, S_{{\mathrm{2}}}  \ottsym{(}  S_{{\mathrm{1}}}  \ottsym{(}  \ottnt{E}  \ottsym{)}  \ottsym{)}  [  f_{{\mathrm{2}}}  ] = S  \ottsym{(}  \ottnt{E}  \ottsym{)}  [  f_{{\mathrm{2}}}  ]$.
 Thus, $\ottnt{E}  [  \ottnt{E'}  [  f'_{{\mathrm{1}}}  ]  ] \,  \xmapsto{ \mathmakebox[0.4em]{} S \mathmakebox[0.3em]{} }\hspace{-0.4em}{}^\ast \hspace{0.2em}  \, S  \ottsym{(}  \ottnt{E}  \ottsym{)}  [  f_{{\mathrm{2}}}  ]$.
\end{proof}

\begin{lemmaA}[Simulation of Function Application] \label{lem:term_prec_simulation_app}
  If $ \langle   \emptyset    \vdash    \lambda  \ottmv{x} \!:\!  \ottnt{U_{{\mathrm{1}}}}  .\,  \ottnt{f_{{\mathrm{1}}}}   :  \ottnt{U_{{\mathrm{1}}}}  \!\rightarrow\!  \ottnt{U_{{\mathrm{2}}}}   \sqsubseteq _{ S_{{\mathrm{0}}} }  \ottnt{U'_{{\mathrm{1}}}}  \!\rightarrow\!  \ottnt{U'_{{\mathrm{2}}}}  :  w'_{{\mathrm{1}}}  \dashv   \emptyset   \rangle $
  and $ \langle   \emptyset    \vdash   w_{{\mathrm{2}}}  :  \ottnt{U_{{\mathrm{1}}}}   \sqsubseteq _{ S_{{\mathrm{0}}} }  \ottnt{U'_{{\mathrm{1}}}}  :  w'_{{\mathrm{2}}}  \dashv   \emptyset   \rangle $,
  then there exists $S'_{{\mathrm{0}}}$, $S'$, $S''$, and $f'$ such that
  \begin{itemize}
   \item $w'_{{\mathrm{1}}} \, w'_{{\mathrm{2}}} \,  \xmapsto{ \mathmakebox[0.4em]{} S' \mathmakebox[0.3em]{} }\hspace{-0.4em}{}^+ \hspace{0.2em}  \, f'$ where \rnp{E\_Abort} is not applied,
   \item $ \langle   \emptyset    \vdash   \ottnt{f_{{\mathrm{1}}}}  [  \ottmv{x}  \ottsym{:=}  w_{{\mathrm{2}}}  ]  :  \ottnt{U_{{\mathrm{2}}}}   \sqsubseteq _{  S''  \uplus  \ottsym{(}   S'_{{\mathrm{0}}}  \circ  S_{{\mathrm{0}}}   \ottsym{)}  }  S'  \ottsym{(}  \ottnt{U'_{{\mathrm{2}}}}  \ottsym{)}  :  \ottnt{f'}  \dashv   \emptyset   \rangle $,
   \item $\forall \ottmv{X} \in \textit{dom} \, \ottsym{(}  S_{{\mathrm{0}}}  \ottsym{)}. S_{{\mathrm{0}}}  \ottsym{(}  \ottmv{X}  \ottsym{)}  \ottsym{=}   S'_{{\mathrm{0}}}  \circ   S_{{\mathrm{0}}}  \circ  S'    \ottsym{(}  \ottmv{X}  \ottsym{)}$,
   \item type variables in $\textit{dom} \, \ottsym{(}  S''  \ottsym{)}$ are fresh, and
   \item $\forall \ottmv{X} \in \textit{dom} \, \ottsym{(}  S'  \ottsym{)}. \textit{ftv} \, \ottsym{(}  S'  \ottsym{(}  \ottmv{X}  \ottsym{)}  \ottsym{)}  \subseteq  \textit{dom} \, \ottsym{(}  S'_{{\mathrm{0}}}  \ottsym{)}$.
  \end{itemize}
\end{lemmaA}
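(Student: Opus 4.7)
The proof proceeds by induction on the derivation of $ \langle   \emptyset    \vdash    \lambda  \ottmv{x} \!:\!  \ottnt{U_{{\mathrm{1}}}}  .\,  \ottnt{f_{{\mathrm{1}}}}   :  \ottnt{U_{{\mathrm{1}}}}  \!\rightarrow\!  \ottnt{U_{{\mathrm{2}}}}   \sqsubseteq _{ S_{{\mathrm{0}}} }  \ottnt{U'_{{\mathrm{1}}}}  \!\rightarrow\!  \ottnt{U'_{{\mathrm{2}}}}  :  w'_{{\mathrm{1}}}  \dashv   \emptyset   \rangle $. Since the left-hand term is a lambda abstraction (not a cast), and $w'_{{\mathrm{1}}}$ must be a value of type $\ottnt{U'_{{\mathrm{1}}}}  \!\rightarrow\!  \ottnt{U'_{{\mathrm{2}}}}$, inspection of the term precision rules shows that only two rules can be the last: \rnp{P\_Abs} (where $w'_{{\mathrm{1}}}$ is itself a lambda) and \rnp{P\_CastR} (where $w'_{{\mathrm{1}}}$ is a cast wrapping a smaller value).

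The \rnp{P\_Abs} case is the base case. Here $w'_{{\mathrm{1}}}  \ottsym{=}   \lambda  \ottmv{x} \!:\!  \ottnt{U'_{{\mathrm{1}}}}  .\,  \ottnt{f'_{{\mathrm{1}}}} $ and inversion gives $ \langle   \emptyset  ,   \ottmv{x}  :  \ottnt{U_{{\mathrm{1}}}}     \vdash   \ottnt{f_{{\mathrm{1}}}}  :  \ottnt{U_{{\mathrm{2}}}}   \sqsubseteq _{ S_{{\mathrm{0}}} }  \ottnt{U'_{{\mathrm{2}}}}  :  \ottnt{f'_{{\mathrm{1}}}}  \dashv   \emptyset  ,   \ottmv{x}  :  \ottnt{U'_{{\mathrm{1}}}}    \rangle $. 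One step of \rnp{R\_Beta}/\rnp{E\_Step} reduces the application to $\ottnt{f'_{{\mathrm{1}}}}  [  \ottmv{x}  \ottsym{:=}  w'_{{\mathrm{2}}}  ]$, and the required precision between $\ottnt{f_{{\mathrm{1}}}}  [  \ottmv{x}  \ottsym{:=}  w_{{\mathrm{2}}}  ]$ and $\ottnt{f'_{{\mathrm{1}}}}  [  \ottmv{x}  \ottsym{:=}  w'_{{\mathrm{2}}}  ]$ follows from a monomorphic instance of Lemma~\ref{lem:subst_preserve_term_prec} (taking the generalized type-variable sequences to be empty). All of $S'$, $S'_{{\mathrm{0}}}$, $S''$ are taken to be the empty substitution.

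The \rnp{P\_CastR} case is the substantive one. Here $w'_{{\mathrm{1}}}  \ottsym{=}  w''_{{\mathrm{1}}}  \ottsym{:}   \ottnt{U''_{{\mathrm{1}}}}  \!\rightarrow\!  \ottnt{U''_{{\mathrm{2}}}} \Rightarrow  \unskip ^ { \ell' }  \! \ottnt{U'_{{\mathrm{1}}}}  \!\rightarrow\!  \ottnt{U'_{{\mathrm{2}}}} $, and by inversion there is a smaller derivation $ \langle   \emptyset    \vdash    \lambda  \ottmv{x} \!:\!  \ottnt{U_{{\mathrm{1}}}}  .\,  \ottnt{f_{{\mathrm{1}}}}   :  \ottnt{U_{{\mathrm{1}}}}  \!\rightarrow\!  \ottnt{U_{{\mathrm{2}}}}   \sqsubseteq _{ S_{{\mathrm{0}}} }  \ottnt{U''_{{\mathrm{1}}}}  \!\rightarrow\!  \ottnt{U''_{{\mathrm{2}}}}  :  w''_{{\mathrm{1}}}  \dashv   \emptyset   \rangle $ together with $\ottnt{U''_{{\mathrm{1}}}}  \!\rightarrow\!  \ottnt{U''_{{\mathrm{2}}}}  \sim  \ottnt{U'_{{\mathrm{1}}}}  \!\rightarrow\!  \ottnt{U'_{{\mathrm{2}}}}$ and $ \ottnt{U_{{\mathrm{1}}}}  \!\rightarrow\!  \ottnt{U_{{\mathrm{2}}}}   \sqsubseteq _{ S_{{\mathrm{0}}} }  \ottnt{U'_{{\mathrm{1}}}}  \!\rightarrow\!  \ottnt{U'_{{\mathrm{2}}}} $. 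The plan is: first fire \rnp{R\_AppCast}/\rnp{E\_Step} to obtain $\ottsym{(}  w''_{{\mathrm{1}}} \, \ottsym{(}  w'_{{\mathrm{2}}}  \ottsym{:}   \ottnt{U'_{{\mathrm{1}}}} \Rightarrow  \unskip ^ {  \bar{ \ell' }  }  \! \ottnt{U''_{{\mathrm{1}}}}   \ottsym{)}  \ottsym{)}  \ottsym{:}   \ottnt{U''_{{\mathrm{2}}}} \Rightarrow  \unskip ^ { \ell' }  \! \ottnt{U'_{{\mathrm{2}}}} $; next extract $ \ottnt{U_{{\mathrm{1}}}}   \sqsubseteq _{ S_{{\mathrm{0}}} }  \ottnt{U''_{{\mathrm{1}}}} $ from the smaller derivation via Lemma~\ref{lem:term_prec_to_type_prec} and arrow inversion, and use \rnp{P\_CastR} to form $ \langle   \emptyset    \vdash   w_{{\mathrm{2}}}  :  \ottnt{U_{{\mathrm{1}}}}   \sqsubseteq _{ S_{{\mathrm{0}}} }  \ottnt{U''_{{\mathrm{1}}}}  :  w'_{{\mathrm{2}}}  \ottsym{:}   \ottnt{U'_{{\mathrm{1}}}} \Rightarrow  \unskip ^ {  \bar{ \ell' }  }  \! \ottnt{U''_{{\mathrm{1}}}}   \dashv   \emptyset   \rangle $; apply Lemma~\ref{lem:prec_catch_up_left_value} to this cast to obtain a value $w''_{{\mathrm{2}}}$, a substitution $S'_\mathrm{a}$ along the right-hand evaluation, and a bookkeeping substitution $S'_{0\mathrm{a}}$ witnessing the new precision at type $S'_\mathrm{a}  \ottsym{(}  \ottnt{U''_{{\mathrm{1}}}}  \ottsym{)}$; use Lemma~\ref{lem:right_subst_preserve_prec} to transport the smaller derivation of $w''_{{\mathrm{1}}}$ under $S'_\mathrm{a}$, so that it matches $S'_\mathrm{a}  \ottsym{(}  w''_{{\mathrm{1}}}  \ottsym{)}$ at type $S'_\mathrm{a}  \ottsym{(}  \ottnt{U''_{{\mathrm{1}}}}  \!\rightarrow\!  \ottnt{U''_{{\mathrm{2}}}}  \ottsym{)}$; and apply the induction hypothesis on this now-smaller derivation together with the precision for $w''_{{\mathrm{2}}}$ to obtain an evaluation $S'_\mathrm{a}  \ottsym{(}  w''_{{\mathrm{1}}}  \ottsym{)} \, w''_{{\mathrm{2}}} \,  \xmapsto{ \mathmakebox[0.4em]{} S'_\mathrm{b} \mathmakebox[0.3em]{} }\hspace{-0.4em}{}^+ \hspace{0.2em}  \, \ottnt{f''}$ with a suitable precision derivation for $\ottnt{f_{{\mathrm{1}}}}  [  \ottmv{x}  \ottsym{:=}  w_{{\mathrm{2}}}  ]$ and $\ottnt{f''}$ at type $S'_\mathrm{b}  \ottsym{(}  S'_\mathrm{a}  \ottsym{(}  \ottnt{U''_{{\mathrm{2}}}}  \ottsym{)}  \ottsym{)}$. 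Lemma~\ref{lem:eval_subterm} then lifts the argument evaluation and the inner application evaluation through the surrounding cast context, and one final use of \rnp{P\_CastR} (using $S'_\mathrm{b}  \ottsym{(}  S'_\mathrm{a}  \ottsym{(}  \ottnt{U''_{{\mathrm{2}}}}  \ottsym{)}  \ottsym{)}  \sim  S'_\mathrm{b}  \ottsym{(}  S'_\mathrm{a}  \ottsym{(}  \ottnt{U'_{{\mathrm{2}}}}  \ottsym{)}  \ottsym{)}$ obtained by preservation of consistency under substitution, and $ \ottnt{U_{{\mathrm{2}}}}   \sqsubseteq _{ S_{{\mathrm{0}}} }  \ottnt{U'_{{\mathrm{2}}}} $ transported by Lemma~\ref{lem:right_subst_preserve_prec}) reattaches the outer cast to yield the required precision at type $S'_\mathrm{b}  \ottsym{(}  S'_\mathrm{a}  \ottsym{(}  \ottnt{U'_{{\mathrm{2}}}}  \ottsym{)}  \ottsym{)}$. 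Setting $S'  \ottsym{=}   S'_\mathrm{b}  \circ  S'_\mathrm{a} $, $S'_{{\mathrm{0}}}  \ottsym{=}   S'_{0\mathrm{b}}  \circ  S'_{0\mathrm{a}} $, and $S''$ to the fresh part supplied by the IH completes the construction.

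The main obstacle is the substitution bookkeeping in the \rnp{P\_CastR} case. The three conditions in the conclusion---the compatibility identity $S_{{\mathrm{0}}}  \ottsym{(}  \ottmv{X}  \ottsym{)}  \ottsym{=}   S'_{{\mathrm{0}}}  \circ   S_{{\mathrm{0}}}  \circ  S'    \ottsym{(}  \ottmv{X}  \ottsym{)}$, the freshness of $\textit{dom} \, \ottsym{(}  S''  \ottsym{)}$, and the containment $\textit{ftv} \, \ottsym{(}  S'  \ottsym{(}  \ottmv{X}  \ottsym{)}  \ottsym{)}  \subseteq  \textit{dom} \, \ottsym{(}  S'_{{\mathrm{0}}}  \ottsym{)}$---must be discharged by threading together the analogous conditions produced by Lemma~\ref{lem:prec_catch_up_left_value} (contributing $S'_\mathrm{a}$ and $S'_{0\mathrm{a}}$) and by the IH (contributing $S'_\mathrm{b}$, $S'_{0\mathrm{b}}$, and the fresh $S''$). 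I expect the verification of the composed identity to be the most delicate step, requiring the disjointness of freshly introduced variables from the domains and codomains of the earlier substitutions, exactly as in the proof of Lemma~\ref{lem:prec_catch_up_left_value} itself.
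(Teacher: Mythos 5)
Your decomposition matches the paper's proof almost exactly: the same two-case split on the last precision rule (\rnp{P\_Abs} and \rnp{P\_CastR}), the same use of Lemma~\ref{lem:subst_preserve_term_prec} with empty type-variable sequences in the base case, and in the cast case the same chain of \rnp{R\_AppCast}, \rnp{P\_CastR} on the argument, Lemma~\ref{lem:prec_catch_up_left_value}, Lemma~\ref{lem:right_subst_preserve_prec}, the induction hypothesis, Lemma~\ref{lem:eval_subterm}, and a final \rnp{P\_CastR}. Your identification of the substitution bookkeeping as the delicate step is also on target, and the paper discharges it exactly as you anticipate, using the disjointness of freshly generated variables.

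The one genuine issue is your induction measure. You induct on the derivation of the precision judgment, but in the \rnp{P\_CastR} case the IH is applied not to the immediate subderivation for $w''_{{\mathrm{1}}}$ but to the derivation for $S'_\mathrm{a}  \ottsym{(}  w''_{{\mathrm{1}}}  \ottsym{)}$ produced by Lemma~\ref{lem:right_subst_preserve_prec}. That transported derivation is a fresh object constructed by the lemma, not a structural subderivation of the original, so structural induction does not directly license the IH there. The paper avoids this by inducting on the number of casts occurring in $w'_{{\mathrm{1}}}$, a syntactic measure that is invariant under type substitution and strictly decreases when the outer cast is stripped. Your argument can be repaired either by adopting that measure or by strengthening Lemma~\ref{lem:right_subst_preserve_prec} to state that it preserves derivation height, but as written the inductive step is not justified.
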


\begin{proof}
  By induction on the number of casts that occur in $w'_{{\mathrm{1}}}$.
  By case analysis on the precision rule applied last to derive
  $ \langle   \emptyset    \vdash    \lambda  \ottmv{x} \!:\!  \ottnt{U_{{\mathrm{1}}}}  .\,  \ottnt{f_{{\mathrm{1}}}}   :  \ottnt{U_{{\mathrm{1}}}}  \!\rightarrow\!  \ottnt{U_{{\mathrm{2}}}}   \sqsubseteq _{ S_{{\mathrm{0}}} }  \ottnt{U'_{{\mathrm{1}}}}  \!\rightarrow\!  \ottnt{U'_{{\mathrm{2}}}}  :  w'_{{\mathrm{1}}}  \dashv   \emptyset   \rangle $.

  \begin{caseanalysis}
    \case{\rnp{P\_Abs}}
    We are given
    $ \langle   \emptyset    \vdash    \lambda  \ottmv{x} \!:\!  \ottnt{U_{{\mathrm{1}}}}  .\,  \ottnt{f_{{\mathrm{1}}}}   :  \ottnt{U_{{\mathrm{1}}}}  \!\rightarrow\!  \ottnt{U_{{\mathrm{2}}}}   \sqsubseteq _{ S_{{\mathrm{0}}} }  \ottnt{U'_{{\mathrm{1}}}}  \!\rightarrow\!  \ottnt{U'_{{\mathrm{2}}}}  :   \lambda  \ottmv{x} \!:\!  \ottnt{U'_{{\mathrm{1}}}}  .\,  \ottnt{f'_{{\mathrm{1}}}}   \dashv   \emptyset   \rangle $
    and $w'_{{\mathrm{1}}}  \ottsym{=}   \lambda  \ottmv{x} \!:\!  \ottnt{U'_{{\mathrm{1}}}}  .\,  \ottnt{f'_{{\mathrm{1}}}} $ for some $\ottnt{f'_{{\mathrm{1}}}}$.
    By inversion,
    $ \langle   \ottmv{x}  :  \ottnt{U_{{\mathrm{1}}}}    \vdash   \ottnt{f_{{\mathrm{1}}}}  :  \ottnt{U_{{\mathrm{2}}}}   \sqsubseteq _{ S_{{\mathrm{0}}} }  \ottnt{U'_{{\mathrm{2}}}}  :  \ottnt{f'_{{\mathrm{1}}}}  \dashv   \ottmv{x}  :  \ottnt{U'_{{\mathrm{1}}}}   \rangle $
    and $ \ottnt{U_{{\mathrm{1}}}}   \sqsubseteq _{ S_{{\mathrm{0}}} }  \ottnt{U'_{{\mathrm{1}}}} $.
    By \rnp{R\_Beta}, $\ottsym{(}   \lambda  \ottmv{x} \!:\!  \ottnt{U'_{{\mathrm{1}}}}  .\,  \ottnt{f'_{{\mathrm{1}}}}   \ottsym{)} \, w'_{{\mathrm{2}}} \,  \xmapsto{ \mathmakebox[0.4em]{} [  ] \mathmakebox[0.3em]{} }  \, \ottnt{f'_{{\mathrm{1}}}}  [  \ottmv{x}  \ottsym{:=}  w'_{{\mathrm{2}}}  ]$.
    By Lemma~\ref{lem:subst_preserve_term_prec},
    $ \langle   \emptyset    \vdash   \ottnt{f_{{\mathrm{1}}}}  [  \ottmv{x}  \ottsym{:=}  w_{{\mathrm{2}}}  ]  :  \ottnt{U_{{\mathrm{2}}}}   \sqsubseteq _{ S_{{\mathrm{0}}} }  \ottnt{U'_{{\mathrm{2}}}}  :  \ottnt{f'_{{\mathrm{1}}}}  [  \ottmv{x}  \ottsym{:=}  w'_{{\mathrm{2}}}  ]  \dashv   \emptyset   \rangle $.
    We finish by letting $S'_{{\mathrm{0}}}$, $S'$, and $S''$ be empty.

    \case{\rnp{P\_CastR}}
    By Lemma \ref{lem:canonical_forms},
    there exist $w'_{{\mathrm{11}}}$, $\ottnt{U'_{{\mathrm{11}}}}$, $\ottnt{U'_{{\mathrm{12}}}}$, and $\ell'$ such that
    $ \langle   \emptyset    \vdash    \lambda  \ottmv{x} \!:\!  \ottnt{U_{{\mathrm{1}}}}  .\,  \ottnt{f_{{\mathrm{1}}}}   :  \ottnt{U_{{\mathrm{1}}}}  \!\rightarrow\!  \ottnt{U_{{\mathrm{2}}}}   \sqsubseteq _{ S_{{\mathrm{0}}} }  \ottnt{U'_{{\mathrm{1}}}}  \!\rightarrow\!  \ottnt{U'_{{\mathrm{2}}}}  :  w'_{{\mathrm{11}}}  \ottsym{:}   \ottnt{U'_{{\mathrm{11}}}}  \!\rightarrow\!  \ottnt{U'_{{\mathrm{12}}}} \Rightarrow  \unskip ^ { \ell' }  \! \ottnt{U'_{{\mathrm{1}}}}  \!\rightarrow\!  \ottnt{U'_{{\mathrm{2}}}}   \dashv   \emptyset   \rangle $
    and $w'_{{\mathrm{1}}}  \ottsym{=}  w'_{{\mathrm{11}}}  \ottsym{:}   \ottnt{U'_{{\mathrm{11}}}}  \!\rightarrow\!  \ottnt{U'_{{\mathrm{12}}}} \Rightarrow  \unskip ^ { \ell' }  \! \ottnt{U'_{{\mathrm{1}}}}  \!\rightarrow\!  \ottnt{U'_{{\mathrm{2}}}} $.

    By inversion,
    $ \langle   \emptyset    \vdash    \lambda  \ottmv{x} \!:\!  \ottnt{U_{{\mathrm{1}}}}  .\,  \ottnt{f_{{\mathrm{1}}}}   :  \ottnt{U_{{\mathrm{1}}}}  \!\rightarrow\!  \ottnt{U_{{\mathrm{2}}}}   \sqsubseteq _{ S_{{\mathrm{0}}} }  \ottnt{U'_{{\mathrm{11}}}}  \!\rightarrow\!  \ottnt{U'_{{\mathrm{12}}}}  :  w'_{{\mathrm{11}}}  \dashv   \emptyset   \rangle $
    and $ \ottnt{U_{{\mathrm{1}}}}  \!\rightarrow\!  \ottnt{U_{{\mathrm{2}}}}   \sqsubseteq _{ S_{{\mathrm{0}}} }  \ottnt{U'_{{\mathrm{1}}}}  \!\rightarrow\!  \ottnt{U'_{{\mathrm{2}}}} $.
    By inversion of $ \ottnt{U_{{\mathrm{1}}}}  \!\rightarrow\!  \ottnt{U_{{\mathrm{2}}}}   \sqsubseteq _{ S_{{\mathrm{0}}} }  \ottnt{U'_{{\mathrm{1}}}}  \!\rightarrow\!  \ottnt{U'_{{\mathrm{2}}}} $, we have
    $ \ottnt{U_{{\mathrm{1}}}}   \sqsubseteq _{ S_{{\mathrm{0}}} }  \ottnt{U'_{{\mathrm{1}}}} $ and $ \ottnt{U_{{\mathrm{2}}}}   \sqsubseteq _{ S_{{\mathrm{0}}} }  \ottnt{U'_{{\mathrm{2}}}} $.
    By Lemma~\ref{lem:term_prec_to_type_prec},
    $ \ottnt{U_{{\mathrm{1}}}}  \!\rightarrow\!  \ottnt{U_{{\mathrm{2}}}}   \sqsubseteq _{ S_{{\mathrm{0}}} }  \ottnt{U'_{{\mathrm{11}}}}  \!\rightarrow\!  \ottnt{U'_{{\mathrm{12}}}} $.
    By its inversion,
    $ \ottnt{U_{{\mathrm{1}}}}   \sqsubseteq _{ S_{{\mathrm{0}}} }  \ottnt{U'_{{\mathrm{11}}}} $ and $ \ottnt{U_{{\mathrm{2}}}}   \sqsubseteq _{ S_{{\mathrm{0}}} }  \ottnt{U'_{{\mathrm{12}}}} $.

    By \rnp{R\_AppCast},
    $\ottsym{(}  w'_{{\mathrm{11}}}  \ottsym{:}   \ottnt{U'_{{\mathrm{11}}}}  \!\rightarrow\!  \ottnt{U'_{{\mathrm{12}}}} \Rightarrow  \unskip ^ { \ell' }  \! \ottnt{U'_{{\mathrm{1}}}}  \!\rightarrow\!  \ottnt{U'_{{\mathrm{2}}}}   \ottsym{)} \, w'_{{\mathrm{2}}} \,  \xmapsto{ \mathmakebox[0.4em]{} [  ] \mathmakebox[0.3em]{} }  \, \ottsym{(}  w'_{{\mathrm{11}}} \, \ottsym{(}  w'_{{\mathrm{2}}}  \ottsym{:}   \ottnt{U'_{{\mathrm{1}}}} \Rightarrow  \unskip ^ {  \bar{ \ell' }  }  \! \ottnt{U'_{{\mathrm{11}}}}   \ottsym{)}  \ottsym{)}  \ottsym{:}   \ottnt{U'_{{\mathrm{12}}}} \Rightarrow  \unskip ^ { \ell' }  \! \ottnt{U'_{{\mathrm{2}}}} $.

    By \rnp{P\_CastR},
    $ \langle   \emptyset    \vdash   w_{{\mathrm{2}}}  :  \ottnt{U_{{\mathrm{1}}}}   \sqsubseteq _{ S_{{\mathrm{0}}} }  \ottnt{U'_{{\mathrm{11}}}}  :  w'_{{\mathrm{2}}}  \ottsym{:}   \ottnt{U'_{{\mathrm{1}}}} \Rightarrow  \unskip ^ {  \bar{ \ell' }  }  \! \ottnt{U'_{{\mathrm{11}}}}   \dashv   \emptyset   \rangle $.

    By Lemma~\ref{lem:prec_catch_up_left_value},
    there exist $S'_{{\mathrm{0}}}$, $S'_{{\mathrm{2}}}$, and $w''_{{\mathrm{2}}}$ such that
    \begin{itemize}
     \item $w'_{{\mathrm{2}}}  \ottsym{:}   \ottnt{U'_{{\mathrm{1}}}} \Rightarrow  \unskip ^ {  \bar{ \ell' }  }  \! \ottnt{U'_{{\mathrm{11}}}}  \,  \xmapsto{ \mathmakebox[0.4em]{} S'_{{\mathrm{2}}} \mathmakebox[0.3em]{} }\hspace{-0.4em}{}^\ast \hspace{0.2em}  \, w''_{{\mathrm{2}}}$,
     \item $ \langle   \emptyset    \vdash   w_{{\mathrm{2}}}  :  \ottnt{U_{{\mathrm{1}}}}   \sqsubseteq _{  S'_{{\mathrm{0}}}  \circ  S_{{\mathrm{0}}}  }  S'_{{\mathrm{2}}}  \ottsym{(}  \ottnt{U'_{{\mathrm{11}}}}  \ottsym{)}  :  w''_{{\mathrm{2}}}  \dashv   \emptyset   \rangle $,
     \item $\forall \ottmv{X} \in \textit{dom} \, \ottsym{(}  S_{{\mathrm{0}}}  \ottsym{)}. S_{{\mathrm{0}}}  \ottsym{(}  \ottmv{X}  \ottsym{)}  \ottsym{=}   S'_{{\mathrm{0}}}  \circ   S_{{\mathrm{0}}}  \circ  S'_{{\mathrm{2}}}    \ottsym{(}  \ottmv{X}  \ottsym{)}$, and
     \item $\forall \ottmv{X} \in \textit{dom} \, \ottsym{(}  S'_{{\mathrm{2}}}  \ottsym{)}. \textit{ftv} \, \ottsym{(}  S'_{{\mathrm{2}}}  \ottsym{(}  \ottmv{X}  \ottsym{)}  \ottsym{)}  \subseteq  \textit{dom} \, \ottsym{(}  S'_{{\mathrm{0}}}  \ottsym{)}$.
    \end{itemize}
    So,
    $\ottsym{(}  w'_{{\mathrm{11}}} \, \ottsym{(}  w'_{{\mathrm{2}}}  \ottsym{:}   \ottnt{U'_{{\mathrm{1}}}} \Rightarrow  \unskip ^ {  \bar{ \ell' }  }  \! \ottnt{U'_{{\mathrm{11}}}}   \ottsym{)}  \ottsym{)}  \ottsym{:}   \ottnt{U'_{{\mathrm{12}}}} \Rightarrow  \unskip ^ { \ell' }  \! \ottnt{U'_{{\mathrm{2}}}}  \,  \xmapsto{ \mathmakebox[0.4em]{} S'_{{\mathrm{2}}} \mathmakebox[0.3em]{} }\hspace{-0.4em}{}^\ast \hspace{0.2em}  \, \ottsym{(}  S'_{{\mathrm{2}}}  \ottsym{(}  w'_{{\mathrm{11}}}  \ottsym{)} \, w''_{{\mathrm{2}}}  \ottsym{)}  \ottsym{:}   S'_{{\mathrm{2}}}  \ottsym{(}  \ottnt{U'_{{\mathrm{12}}}}  \ottsym{)} \Rightarrow  \unskip ^ { \ell' }  \! S'_{{\mathrm{2}}}  \ottsym{(}  \ottnt{U'_{{\mathrm{2}}}}  \ottsym{)} $
    by Lemma~\ref{lem:eval_subterm}.

    By Lemma \ref{lem:right_subst_preserve_prec},
    $ \ottnt{U_{{\mathrm{2}}}}   \sqsubseteq _{  S'_{{\mathrm{0}}}  \circ  S_{{\mathrm{0}}}  }  S'_{{\mathrm{2}}}  \ottsym{(}  \ottnt{U'_{{\mathrm{2}}}}  \ottsym{)} $ and
    $ \langle   \emptyset    \vdash    \lambda  \ottmv{x} \!:\!  \ottnt{U_{{\mathrm{1}}}}  .\,  \ottnt{f_{{\mathrm{1}}}}   :  \ottnt{U_{{\mathrm{1}}}}  \!\rightarrow\!  \ottnt{U_{{\mathrm{2}}}}   \sqsubseteq _{  S'_{{\mathrm{0}}}  \circ  S_{{\mathrm{0}}}  }  S'_{{\mathrm{2}}}  \ottsym{(}  \ottnt{U'_{{\mathrm{11}}}}  \!\rightarrow\!  \ottnt{U'_{{\mathrm{12}}}}  \ottsym{)}  :  S'_{{\mathrm{2}}}  \ottsym{(}  w'_{{\mathrm{11}}}  \ottsym{)}  \dashv   \emptyset   \rangle $.

    By the IH, there exist $S''_{{\mathrm{0}}}$, $S'_{{\mathrm{3}}}$, $S''$, and $\ottnt{f''}$ such that
    \begin{itemize}
     \item $S'_{{\mathrm{2}}}  \ottsym{(}  w'_{{\mathrm{11}}}  \ottsym{)} \, w''_{{\mathrm{2}}} \,  \xmapsto{ \mathmakebox[0.4em]{} S'_{{\mathrm{3}}} \mathmakebox[0.3em]{} }\hspace{-0.4em}{}^+ \hspace{0.2em}  \, \ottnt{f''}$ where \rnp{E\_Abort} is not applied,
     \item $ \langle   \emptyset    \vdash   \ottnt{f_{{\mathrm{1}}}}  [  \ottmv{x}  \ottsym{:=}  w_{{\mathrm{2}}}  ]  :  \ottnt{U_{{\mathrm{2}}}}   \sqsubseteq _{  S''  \uplus  \ottsym{(}   S''_{{\mathrm{0}}}  \circ   S'_{{\mathrm{0}}}  \circ  S_{{\mathrm{0}}}    \ottsym{)}  }   S'_{{\mathrm{3}}}  \circ  S'_{{\mathrm{2}}}   \ottsym{(}  \ottnt{U'_{{\mathrm{12}}}}  \ottsym{)}  :  \ottnt{f''}  \dashv   \emptyset   \rangle $,
     \item $\forall \ottmv{X} \in \textit{dom} \, \ottsym{(}   S'_{{\mathrm{0}}}  \circ  S_{{\mathrm{0}}}   \ottsym{)}.  S'_{{\mathrm{0}}}  \circ  S_{{\mathrm{0}}}   \ottsym{(}  \ottmv{X}  \ottsym{)}  \ottsym{=}   S''_{{\mathrm{0}}}  \circ    S'_{{\mathrm{0}}}  \circ  S_{{\mathrm{0}}}   \circ  S'_{{\mathrm{3}}}    \ottsym{(}  \ottmv{X}  \ottsym{)}$,
     \item type variables in $\textit{dom} \, \ottsym{(}  S''  \ottsym{)}$ are fresh, and
     \item $\forall \ottmv{X} \in \textit{dom} \, \ottsym{(}  S'_{{\mathrm{3}}}  \ottsym{)}. \textit{ftv} \, \ottsym{(}  S'_{{\mathrm{3}}}  \ottsym{(}  \ottmv{X}  \ottsym{)}  \ottsym{)}  \subseteq  \textit{dom} \, \ottsym{(}  S''_{{\mathrm{0}}}  \ottsym{)}$.
    \end{itemize}
    We have
    $\ottsym{(}  S'_{{\mathrm{2}}}  \ottsym{(}  w'_{{\mathrm{11}}}  \ottsym{)} \, w''_{{\mathrm{2}}}  \ottsym{)}  \ottsym{:}   S'_{{\mathrm{2}}}  \ottsym{(}  \ottnt{U'_{{\mathrm{12}}}}  \ottsym{)} \Rightarrow  \unskip ^ { \ell' }  \! S'_{{\mathrm{2}}}  \ottsym{(}  \ottnt{U'_{{\mathrm{2}}}}  \ottsym{)}  \,  \xmapsto{ \mathmakebox[0.4em]{} S'_{{\mathrm{3}}} \mathmakebox[0.3em]{} }\hspace{-0.4em}{}^+ \hspace{0.2em}  \, \ottnt{f''}  \ottsym{:}    S'_{{\mathrm{3}}}  \circ  S'_{{\mathrm{2}}}   \ottsym{(}  \ottnt{U'_{{\mathrm{12}}}}  \ottsym{)} \Rightarrow  \unskip ^ { \ell' }  \!  S'_{{\mathrm{3}}}  \circ  S'_{{\mathrm{2}}}   \ottsym{(}  \ottnt{U'_{{\mathrm{2}}}}  \ottsym{)} $
    by Lemma~\ref{lem:eval_subterm}.

    By Lemma~\ref{lem:right_subst_preserve_prec},
    $ \ottnt{U_{{\mathrm{2}}}}   \sqsubseteq _{   S''_{{\mathrm{0}}}  \circ  S'_{{\mathrm{0}}}   \circ  S_{{\mathrm{0}}}  }   S'_{{\mathrm{3}}}  \circ  S'_{{\mathrm{2}}}   \ottsym{(}  \ottnt{U'_{{\mathrm{2}}}}  \ottsym{)} $.
    By Lemma~\ref{lem:prec_subst_weak},
    $ \ottnt{U_{{\mathrm{2}}}}   \sqsubseteq _{  S''  \uplus  \ottsym{(}   S''_{{\mathrm{0}}}  \circ   S'_{{\mathrm{0}}}  \circ  S_{{\mathrm{0}}}    \ottsym{)}  }   S'_{{\mathrm{3}}}  \circ  S'_{{\mathrm{2}}}   \ottsym{(}  \ottnt{U'_{{\mathrm{2}}}}  \ottsym{)} $.
    Thus, by \rnp{P\_CastR},
    \[
      \langle   \emptyset    \vdash   \ottnt{f_{{\mathrm{1}}}}  [  \ottmv{x}  \ottsym{:=}  w_{{\mathrm{2}}}  ]  :  \ottnt{U_{{\mathrm{2}}}}   \sqsubseteq _{  S''  \uplus  \ottsym{(}   S''_{{\mathrm{0}}}  \circ   S'_{{\mathrm{0}}}  \circ  S_{{\mathrm{0}}}    \ottsym{)}  }   S'_{{\mathrm{3}}}  \circ  S'_{{\mathrm{2}}}   \ottsym{(}  \ottnt{U'_{{\mathrm{2}}}}  \ottsym{)}  :  \ottnt{f''}  \ottsym{:}    S'_{{\mathrm{3}}}  \circ  S'_{{\mathrm{2}}}   \ottsym{(}  \ottnt{U'_{{\mathrm{12}}}}  \ottsym{)} \Rightarrow  \unskip ^ { \ell' }  \!  S'_{{\mathrm{3}}}  \circ  S'_{{\mathrm{2}}}   \ottsym{(}  \ottnt{U'_{{\mathrm{2}}}}  \ottsym{)}   \dashv   \emptyset   \rangle .
    \]

    We show $\forall \ottmv{X} \in \textit{dom} \, \ottsym{(}  S_{{\mathrm{0}}}  \ottsym{)}. S_{{\mathrm{0}}}  \ottsym{(}  \ottmv{X}  \ottsym{)}  \ottsym{=}   S''_{{\mathrm{0}}}  \circ     S'_{{\mathrm{0}}}  \circ  S_{{\mathrm{0}}}   \circ  S'_{{\mathrm{3}}}   \circ  S'_{{\mathrm{2}}}    \ottsym{(}  \ottmv{X}  \ottsym{)}$.
    Let $\ottmv{X} \, \in \, \textit{dom} \, \ottsym{(}  S_{{\mathrm{0}}}  \ottsym{)}$.
    We have $S_{{\mathrm{0}}}  \ottsym{(}  \ottmv{X}  \ottsym{)}  \ottsym{=}   S'_{{\mathrm{0}}}  \circ   S_{{\mathrm{0}}}  \circ  S'_{{\mathrm{2}}}    \ottsym{(}  \ottmv{X}  \ottsym{)}$.
    Since $S'_{{\mathrm{2}}}$ is generated by DTI,
    free type variables in $S'_{{\mathrm{2}}}  \ottsym{(}  \ottmv{X}  \ottsym{)}$ are fresh for $S_{{\mathrm{0}}}$.
    Since $\forall \ottmv{X'}, S_{{\mathrm{0}}}  \ottsym{(}  \ottmv{X'}  \ottsym{)}  \ottsym{=}   S'_{{\mathrm{0}}}  \circ   S_{{\mathrm{0}}}  \circ  S'_{{\mathrm{2}}}    \ottsym{(}  \ottmv{X'}  \ottsym{)}$,
    it is found that $\textit{ftv} \, \ottsym{(}  S'_{{\mathrm{2}}}  \ottsym{(}  \ottmv{X}  \ottsym{)}  \ottsym{)}  \subseteq  \textit{dom} \, \ottsym{(}  S'_{{\mathrm{0}}}  \ottsym{)}$ if $\ottmv{X} \, \in \, \textit{dom} \, \ottsym{(}  S'_{{\mathrm{2}}}  \ottsym{)}$.
    If $\ottmv{X} \, \not\in \, \textit{dom} \, \ottsym{(}  S'_{{\mathrm{2}}}  \ottsym{)}$, then $S'_{{\mathrm{2}}}  \ottsym{(}  \ottmv{X}  \ottsym{)}  \ottsym{=}  \ottmv{X}$.
    Thus, $\textit{ftv} \, \ottsym{(}  S'_{{\mathrm{2}}}  \ottsym{(}  \ottmv{X}  \ottsym{)}  \ottsym{)}  \subseteq  \textit{dom} \, \ottsym{(}   S'_{{\mathrm{0}}}  \circ  S_{{\mathrm{0}}}   \ottsym{)}$, and so
    $  S'_{{\mathrm{0}}}  \circ  S_{{\mathrm{0}}}   \circ  S'_{{\mathrm{2}}}   \ottsym{(}  \ottmv{X}  \ottsym{)}  \ottsym{=}   S''_{{\mathrm{0}}}  \circ     S'_{{\mathrm{0}}}  \circ  S_{{\mathrm{0}}}   \circ  S'_{{\mathrm{3}}}   \circ  S'_{{\mathrm{2}}}    \ottsym{(}  \ottmv{X}  \ottsym{)}$.

    Finally, it is obvious that $\forall \ottmv{X} \, \in \, \textit{dom} \, \ottsym{(}   S'_{{\mathrm{3}}}  \circ  S'_{{\mathrm{2}}}   \ottsym{)}. \textit{ftv} \, \ottsym{(}   S'_{{\mathrm{3}}}  \circ  S'_{{\mathrm{2}}}   \ottsym{(}  \ottmv{X}  \ottsym{)}  \ottsym{)}  \subseteq  \textit{dom} \, \ottsym{(}   S''_{{\mathrm{0}}}  \circ  S'_{{\mathrm{0}}}   \ottsym{)}$.

    \otherwise Cannot happen.  \qedhere
  \end{caseanalysis}
\end{proof}

\begin{lemmaA}[Simulation of Unwrapping] \label{lem:term_prec_simulation_unwrap}
  If $ \langle   \emptyset    \vdash   w_{{\mathrm{1}}}  \ottsym{:}   \ottnt{U_{{\mathrm{11}}}}  \!\rightarrow\!  \ottnt{U_{{\mathrm{12}}}} \Rightarrow  \unskip ^ { \ell }  \! \ottnt{U_{{\mathrm{1}}}}  \!\rightarrow\!  \ottnt{U_{{\mathrm{2}}}}   :  \ottnt{U_{{\mathrm{1}}}}  \!\rightarrow\!  \ottnt{U_{{\mathrm{2}}}}   \sqsubseteq _{ S_{{\mathrm{0}}} }  \ottnt{U'_{{\mathrm{1}}}}  \!\rightarrow\!  \ottnt{U'_{{\mathrm{2}}}}  :  w'_{{\mathrm{1}}}  \dashv   \emptyset   \rangle $
  and $ \langle   \emptyset    \vdash   w_{{\mathrm{2}}}  :  \ottnt{U_{{\mathrm{1}}}}   \sqsubseteq _{ S_{{\mathrm{0}}} }  \ottnt{U'_{{\mathrm{1}}}}  :  w'_{{\mathrm{2}}}  \dashv   \emptyset   \rangle $,
  then there exists $S'_{{\mathrm{0}}}$, $S'$, and $f'$ such that
  \begin{itemize}
   \item $w'_{{\mathrm{1}}} \, w'_{{\mathrm{2}}} \,  \xmapsto{ \mathmakebox[0.4em]{} S' \mathmakebox[0.3em]{} }\hspace{-0.4em}{}^\ast \hspace{0.2em}  \, f'$ where \rnp{E\_Abort} is not applied,
   \item $ \langle   \emptyset    \vdash   \ottsym{(}  w_{{\mathrm{1}}} \, \ottsym{(}  w_{{\mathrm{2}}}  \ottsym{:}   \ottnt{U_{{\mathrm{1}}}} \Rightarrow  \unskip ^ {  \bar{ \ell }  }  \! \ottnt{U_{{\mathrm{11}}}}   \ottsym{)}  \ottsym{)}  \ottsym{:}   \ottnt{U_{{\mathrm{12}}}} \Rightarrow  \unskip ^ { \ell }  \! \ottnt{U_{{\mathrm{2}}}}   :  \ottnt{U_{{\mathrm{2}}}}   \sqsubseteq _{  S'_{{\mathrm{0}}}  \circ  S_{{\mathrm{0}}}  }  S'  \ottsym{(}  \ottnt{U'_{{\mathrm{2}}}}  \ottsym{)}  :  \ottnt{f'}  \dashv   \emptyset   \rangle $,
   \item $\forall \ottmv{X} \in \textit{dom} \, \ottsym{(}  S_{{\mathrm{0}}}  \ottsym{)}. S_{{\mathrm{0}}}  \ottsym{(}  \ottmv{X}  \ottsym{)}  \ottsym{=}   S'_{{\mathrm{0}}}  \circ   S_{{\mathrm{0}}}  \circ  S'    \ottsym{(}  \ottmv{X}  \ottsym{)}$, and
   \item $\forall \ottmv{X} \in \textit{dom} \, \ottsym{(}  S'  \ottsym{)}. \textit{ftv} \, \ottsym{(}  S'  \ottsym{(}  \ottmv{X}  \ottsym{)}  \ottsym{)}  \subseteq  \textit{dom} \, \ottsym{(}  S'_{{\mathrm{0}}}  \ottsym{)}$.
  \end{itemize}
\end{lemmaA}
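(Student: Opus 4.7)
The plan is to proceed by induction on the number of casts appearing in $w'_1$, with case analysis on the last precision rule used to derive the first premise. Since the LHS term $w_1 : U_{11}\!\to\!U_{12} \Rightarrow^\ell U_1\!\to\!U_2$ is a cast form, only rules \rnp{P\_Cast}, \rnp{P\_CastL}, and \rnp{P\_CastR} can be the final rule applied, mirroring the structure of the analogous proof of Lemma~\ref{lem:term_prec_simulation_app}.

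In the \rnp{P\_Cast} case, $w'_1 = w'_{11} : U'_{11}\!\to\!U'_{12} \Rightarrow^{\ell'} U'_1\!\to\!U'_2$ with an inner precision $\langle \emptyset \vdash w_1 : U_{11}\!\to\!U_{12} \sqsubseteq_{S_0} U'_{11}\!\to\!U'_{12} : w'_{11} \dashv \emptyset \rangle$. I would apply \rnp{R\_AppCast} on the RHS to reduce $w'_1\,w'_2$ in one step, then use the componentwise precisions obtained from the inner derivation (via Lemma~\ref{lem:term_prec_to_type_prec}) together with the consistencies of the two function casts to reconstruct the precision judgment: stack \rnp{P\_Cast} on the argument cast, then \rnp{P\_App} for the application, then \rnp{P\_Cast} for the outer cast on the return type. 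In this case $S'$ and $S'_0$ can both be taken to be empty. In the \rnp{P\_CastL} case, the LHS cast is ``extra'' and by inversion we obtain $\langle \emptyset \vdash w_1 : U_{11}\!\to\!U_{12} \sqsubseteq_{S_0} U'_1\!\to\!U'_2 : w'_1 \dashv \emptyset \rangle$ directly. No reduction on the RHS is needed: the LHS unwrapped form is precision-related to $w'_1\,w'_2$ in zero steps, by applying \rnp{P\_CastL} to introduce the argument cast on the LHS, \rnp{P\_App} to form the application, and \rnp{P\_CastL} once more to introduce the outer return-type cast.

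In the \rnp{P\_CastR} case, $w'_1 = w''_1 : U''_1 \Rightarrow^{\ell'} U'_1\!\to\!U'_2$ where $w''_1$ contains one fewer cast than $w'_1$; since $w'_1$ is a value of function type, Lemma~\ref{lem:canonical_forms} forces $U''_1$ to be some function type $U''_{11}\!\to\!U''_{12}$. I would apply \rnp{R\_AppCast} on the RHS to obtain $(w''_1\,(w'_2 : U'_1 \Rightarrow^{\bar{\ell'}} U''_{11})) : U''_{12} \Rightarrow^{\ell'} U'_2$, use \rnp{P\_CastR} on the second premise to obtain a precision judgment for the new argument cast at type $U''_{11}$, invoke Lemma~\ref{lem:prec_catch_up_left_value} to catch up this argument to a value, and then apply the induction hypothesis to the inner judgment on $w''_1$ (which has strictly fewer casts) together with the caught-up argument value. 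The result of the IH is finally reassembled with \rnp{P\_CastR} for the outer cast on $U'_2$.

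The main obstacle will be the careful bookkeeping of the type substitutions $S'_0$ and $S'$ in the \rnp{P\_CastR} case, where several sources of substitutions compose: the one produced by Lemma~\ref{lem:prec_catch_up_left_value} when catching up the argument cast, and the one produced by the recursive application of the lemma. Ensuring that the freshness conditions $\textit{ftv}(S'(X)) \subseteq \textit{dom}(S'_0)$ and the commutation property $S_0(X) = (S'_0 \circ S_0 \circ S')(X)$ propagate across these compositions will require the same pattern of substitution splitting (via auxiliary lemmas such as Lemmas~\ref{lem:left_subst_preserve_prec}, \ref{lem:right_subst_preserve_prec}, and~\ref{lem:prec_subst_weak}) as used in the proof of Lemma~\ref{lem:term_prec_simulation_app}, and Lemma~\ref{lem:eval_subterm} will be used to lift internal evaluation sequences into the surrounding evaluation context.
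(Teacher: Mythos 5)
Your proposal is correct and follows essentially the same route as the paper's proof: induction on the number of casts in $w'_{{\mathrm{1}}}$, case analysis on the last precision rule (\rnp{P\_Cast}, \rnp{P\_CastL}, \rnp{P\_CastR}), with \rnp{R\_AppCast} and zero-step matching in the first two cases and the combination of \rnp{P\_CastR}, Lemma~\ref{lem:prec_catch_up_left_value}, Lemma~\ref{lem:eval_subterm}, and the induction hypothesis in the third. The substitution bookkeeping you anticipate is exactly the pattern the paper carries out via Lemma~\ref{lem:right_subst_preserve_prec} and the composition/freshness arguments borrowed from the proof of Lemma~\ref{lem:term_prec_simulation_app}.
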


\begin{proof}
  By \rnp{R\_AppCast},
  $\ottsym{(}  w_{{\mathrm{1}}}  \ottsym{:}   \ottnt{U_{{\mathrm{11}}}}  \!\rightarrow\!  \ottnt{U_{{\mathrm{12}}}} \Rightarrow  \unskip ^ { \ell }  \! \ottnt{U_{{\mathrm{1}}}}  \!\rightarrow\!  \ottnt{U_{{\mathrm{2}}}}   \ottsym{)} \, w_{{\mathrm{2}}} \,  \xmapsto{ \mathmakebox[0.4em]{} [  ] \mathmakebox[0.3em]{} }  \, \ottsym{(}  w_{{\mathrm{1}}} \, \ottsym{(}  w_{{\mathrm{2}}}  \ottsym{:}   \ottnt{U_{{\mathrm{1}}}} \Rightarrow  \unskip ^ {  \bar{ \ell }  }  \! \ottnt{U_{{\mathrm{11}}}}   \ottsym{)}  \ottsym{)}  \ottsym{:}   \ottnt{U_{{\mathrm{12}}}} \Rightarrow  \unskip ^ { \ell }  \! \ottnt{U_{{\mathrm{2}}}} $.
  By induction on the number of casts that occur in $w'_{{\mathrm{1}}}$.
  By case analysis on the precision rule applied last to derive
  $ \langle   \emptyset    \vdash   \ottsym{(}  w_{{\mathrm{1}}}  \ottsym{:}   \ottnt{U_{{\mathrm{11}}}}  \!\rightarrow\!  \ottnt{U_{{\mathrm{12}}}} \Rightarrow  \unskip ^ { \ell }  \! \ottnt{U_{{\mathrm{1}}}}  \!\rightarrow\!  \ottnt{U_{{\mathrm{2}}}}   \ottsym{)}  :  \ottnt{U_{{\mathrm{1}}}}  \!\rightarrow\!  \ottnt{U_{{\mathrm{2}}}}   \sqsubseteq _{ S_{{\mathrm{0}}} }  \ottnt{U'_{{\mathrm{1}}}}  \!\rightarrow\!  \ottnt{U'_{{\mathrm{2}}}}  :  w'_{{\mathrm{1}}}  \dashv   \emptyset   \rangle $.

  \begin{caseanalysis}
    \case{\rnp{P\_Cast}}
    Here, by Lemma \ref{lem:canonical_forms},
    there exist $w'_{{\mathrm{11}}}$, $\ottnt{U'_{{\mathrm{11}}}}$, $\ottnt{U'_{{\mathrm{12}}}}$, and $\ell'$ such that
    $ \langle   \emptyset    \vdash   \ottsym{(}  w_{{\mathrm{1}}}  \ottsym{:}   \ottnt{U_{{\mathrm{11}}}}  \!\rightarrow\!  \ottnt{U_{{\mathrm{12}}}} \Rightarrow  \unskip ^ { \ell }  \! \ottnt{U_{{\mathrm{1}}}}  \!\rightarrow\!  \ottnt{U_{{\mathrm{2}}}}   \ottsym{)}  :  \ottnt{U_{{\mathrm{1}}}}  \!\rightarrow\!  \ottnt{U_{{\mathrm{2}}}}   \sqsubseteq _{ S_{{\mathrm{0}}} }  \ottnt{U'_{{\mathrm{1}}}}  \!\rightarrow\!  \ottnt{U'_{{\mathrm{2}}}}  :  w'_{{\mathrm{11}}}  \ottsym{:}   \ottnt{U'_{{\mathrm{11}}}}  \!\rightarrow\!  \ottnt{U'_{{\mathrm{12}}}} \Rightarrow  \unskip ^ { \ell' }  \! \ottnt{U'_{{\mathrm{1}}}}  \!\rightarrow\!  \ottnt{U'_{{\mathrm{2}}}}   \dashv   \emptyset   \rangle $
    and $w'_{{\mathrm{1}}}  \ottsym{=}  w'_{{\mathrm{11}}}  \ottsym{:}   \ottnt{U'_{{\mathrm{11}}}}  \!\rightarrow\!  \ottnt{U'_{{\mathrm{12}}}} \Rightarrow  \unskip ^ { \ell' }  \! \ottnt{U'_{{\mathrm{1}}}}  \!\rightarrow\!  \ottnt{U'_{{\mathrm{2}}}} $.

    By inversion,
    $ \langle   \emptyset    \vdash   w_{{\mathrm{1}}}  :  \ottnt{U_{{\mathrm{11}}}}  \!\rightarrow\!  \ottnt{U_{{\mathrm{12}}}}   \sqsubseteq _{ S_{{\mathrm{0}}} }  \ottnt{U'_{{\mathrm{11}}}}  \!\rightarrow\!  \ottnt{U'_{{\mathrm{12}}}}  :  w'_{{\mathrm{11}}}  \dashv   \emptyset   \rangle $
    and $ \ottnt{U_{{\mathrm{1}}}}  \!\rightarrow\!  \ottnt{U_{{\mathrm{2}}}}   \sqsubseteq _{ S_{{\mathrm{0}}} }  \ottnt{U'_{{\mathrm{1}}}}  \!\rightarrow\!  \ottnt{U'_{{\mathrm{2}}}} $.
    By inversion of $ \ottnt{U_{{\mathrm{1}}}}  \!\rightarrow\!  \ottnt{U_{{\mathrm{2}}}}   \sqsubseteq _{ S_{{\mathrm{0}}} }  \ottnt{U'_{{\mathrm{1}}}}  \!\rightarrow\!  \ottnt{U'_{{\mathrm{2}}}} $,
    $ \ottnt{U_{{\mathrm{1}}}}   \sqsubseteq _{ S_{{\mathrm{0}}} }  \ottnt{U'_{{\mathrm{1}}}} $ and $ \ottnt{U_{{\mathrm{2}}}}   \sqsubseteq _{ S_{{\mathrm{0}}} }  \ottnt{U'_{{\mathrm{2}}}} $.
    By Lemma~\ref{lem:term_prec_to_type_prec},
    $ \ottnt{U_{{\mathrm{11}}}}  \!\rightarrow\!  \ottnt{U_{{\mathrm{12}}}}   \sqsubseteq _{ S_{{\mathrm{0}}} }  \ottnt{U'_{{\mathrm{11}}}}  \!\rightarrow\!  \ottnt{U'_{{\mathrm{12}}}} $.
    By its inversion, $ \ottnt{U_{{\mathrm{11}}}}   \sqsubseteq _{ S_{{\mathrm{0}}} }  \ottnt{U'_{{\mathrm{11}}}} $ and $ \ottnt{U_{{\mathrm{12}}}}   \sqsubseteq _{ S_{{\mathrm{0}}} }  \ottnt{U'_{{\mathrm{12}}}} $.

    By \rnp{R\_AppCast},
    $\ottsym{(}  w'_{{\mathrm{11}}}  \ottsym{:}   \ottnt{U'_{{\mathrm{11}}}}  \!\rightarrow\!  \ottnt{U'_{{\mathrm{12}}}} \Rightarrow  \unskip ^ { \ell' }  \! \ottnt{U'_{{\mathrm{1}}}}  \!\rightarrow\!  \ottnt{U'_{{\mathrm{2}}}}   \ottsym{)} \, w'_{{\mathrm{2}}} \,  \xmapsto{ \mathmakebox[0.4em]{} [  ] \mathmakebox[0.3em]{} }  \, \ottsym{(}  w'_{{\mathrm{11}}} \, \ottsym{(}  w'_{{\mathrm{2}}}  \ottsym{:}   \ottnt{U'_{{\mathrm{1}}}} \Rightarrow  \unskip ^ {  \bar{ \ell' }  }  \! \ottnt{U'_{{\mathrm{11}}}}   \ottsym{)}  \ottsym{)}  \ottsym{:}   \ottnt{U'_{{\mathrm{12}}}} \Rightarrow  \unskip ^ { \ell' }  \! \ottnt{U'_{{\mathrm{2}}}} $.

    We finish by \rnp{P\_Cast}, \rnp{P\_App}, and \rnp{P\_Cast}.

    \case{\rnp{P\_CastL}}
    Here,
    $ \langle   \emptyset    \vdash   \ottsym{(}  w_{{\mathrm{1}}}  \ottsym{:}   \ottnt{U_{{\mathrm{11}}}}  \!\rightarrow\!  \ottnt{U_{{\mathrm{12}}}} \Rightarrow  \unskip ^ { \ell }  \! \ottnt{U_{{\mathrm{1}}}}  \!\rightarrow\!  \ottnt{U_{{\mathrm{2}}}}   \ottsym{)}  :  \ottnt{U_{{\mathrm{1}}}}  \!\rightarrow\!  \ottnt{U_{{\mathrm{2}}}}   \sqsubseteq _{ S_{{\mathrm{0}}} }  \ottnt{U'_{{\mathrm{1}}}}  \!\rightarrow\!  \ottnt{U'_{{\mathrm{2}}}}  :  w'_{{\mathrm{1}}}  \dashv   \emptyset   \rangle $.

    By inversion,
    $ \langle   \emptyset    \vdash   w_{{\mathrm{1}}}  :  \ottnt{U_{{\mathrm{11}}}}  \!\rightarrow\!  \ottnt{U_{{\mathrm{12}}}}   \sqsubseteq _{ S_{{\mathrm{0}}} }  \ottnt{U'_{{\mathrm{1}}}}  \!\rightarrow\!  \ottnt{U'_{{\mathrm{2}}}}  :  w'_{{\mathrm{1}}}  \dashv   \emptyset   \rangle $
    and $ \ottnt{U_{{\mathrm{1}}}}  \!\rightarrow\!  \ottnt{U_{{\mathrm{2}}}}   \sqsubseteq _{ S_{{\mathrm{0}}} }  \ottnt{U'_{{\mathrm{1}}}}  \!\rightarrow\!  \ottnt{U'_{{\mathrm{2}}}} $.
    By inversion of $ \ottnt{U_{{\mathrm{1}}}}  \!\rightarrow\!  \ottnt{U_{{\mathrm{2}}}}   \sqsubseteq _{ S_{{\mathrm{0}}} }  \ottnt{U'_{{\mathrm{1}}}}  \!\rightarrow\!  \ottnt{U'_{{\mathrm{2}}}} $,
    $ \ottnt{U_{{\mathrm{1}}}}   \sqsubseteq _{ S_{{\mathrm{0}}} }  \ottnt{U'_{{\mathrm{1}}}} $ and $ \ottnt{U_{{\mathrm{2}}}}   \sqsubseteq _{ S_{{\mathrm{0}}} }  \ottnt{U'_{{\mathrm{2}}}} $.
    By Lemma~\ref{lem:term_prec_to_type_prec},
    $ \ottnt{U_{{\mathrm{11}}}}  \!\rightarrow\!  \ottnt{U_{{\mathrm{12}}}}   \sqsubseteq _{ S_{{\mathrm{0}}} }  \ottnt{U'_{{\mathrm{1}}}}  \!\rightarrow\!  \ottnt{U'_{{\mathrm{2}}}} $.
    By its inversion, $ \ottnt{U_{{\mathrm{11}}}}   \sqsubseteq _{ S_{{\mathrm{0}}} }  \ottnt{U'_{{\mathrm{1}}}} $ and $ \ottnt{U_{{\mathrm{12}}}}   \sqsubseteq _{ S_{{\mathrm{0}}} }  \ottnt{U'_{{\mathrm{2}}}} $.

    We finish by \rnp{P\_CastL}, \rnp{P\_App}, and \rnp{P\_CastL}.

    \case{\rnp{P\_CastR}}
    Here, by Lemma \ref{lem:canonical_forms},
    there exist $w'_{{\mathrm{11}}}$, $\ottnt{U'_{{\mathrm{11}}}}$, $\ottnt{U'_{{\mathrm{12}}}}$, and $\ell'$ such that
    $ \langle   \emptyset    \vdash   w_{{\mathrm{1}}}  \ottsym{:}   \ottnt{U_{{\mathrm{11}}}}  \!\rightarrow\!  \ottnt{U_{{\mathrm{12}}}} \Rightarrow  \unskip ^ { \ell }  \! \ottnt{U_{{\mathrm{1}}}}  \!\rightarrow\!  \ottnt{U_{{\mathrm{2}}}}   :  \ottnt{U_{{\mathrm{1}}}}  \!\rightarrow\!  \ottnt{U_{{\mathrm{2}}}}   \sqsubseteq _{ S_{{\mathrm{0}}} }  \ottnt{U'_{{\mathrm{1}}}}  \!\rightarrow\!  \ottnt{U'_{{\mathrm{2}}}}  :  w'_{{\mathrm{11}}}  \ottsym{:}   \ottnt{U'_{{\mathrm{11}}}}  \!\rightarrow\!  \ottnt{U'_{{\mathrm{12}}}} \Rightarrow  \unskip ^ { \ell' }  \! \ottnt{U'_{{\mathrm{1}}}}  \!\rightarrow\!  \ottnt{U'_{{\mathrm{2}}}}   \dashv   \emptyset   \rangle $
    and $w'_{{\mathrm{1}}}  \ottsym{=}  w'_{{\mathrm{11}}}  \ottsym{:}   \ottnt{U'_{{\mathrm{11}}}}  \!\rightarrow\!  \ottnt{U'_{{\mathrm{12}}}} \Rightarrow  \unskip ^ { \ell' }  \! \ottnt{U'_{{\mathrm{1}}}}  \!\rightarrow\!  \ottnt{U'_{{\mathrm{2}}}} $.

    By inversion,
    $ \langle   \emptyset    \vdash   w_{{\mathrm{1}}}  \ottsym{:}   \ottnt{U_{{\mathrm{11}}}}  \!\rightarrow\!  \ottnt{U_{{\mathrm{12}}}} \Rightarrow  \unskip ^ { \ell }  \! \ottnt{U_{{\mathrm{1}}}}  \!\rightarrow\!  \ottnt{U_{{\mathrm{2}}}}   :  \ottnt{U_{{\mathrm{1}}}}  \!\rightarrow\!  \ottnt{U_{{\mathrm{2}}}}   \sqsubseteq _{ S_{{\mathrm{0}}} }  \ottnt{U'_{{\mathrm{11}}}}  \!\rightarrow\!  \ottnt{U'_{{\mathrm{12}}}}  :  w'_{{\mathrm{11}}}  \dashv   \emptyset   \rangle $
    and $ \ottnt{U_{{\mathrm{1}}}}  \!\rightarrow\!  \ottnt{U_{{\mathrm{2}}}}   \sqsubseteq _{ S_{{\mathrm{0}}} }  \ottnt{U'_{{\mathrm{1}}}}  \!\rightarrow\!  \ottnt{U'_{{\mathrm{2}}}} $.
    By inversion of $ \ottnt{U_{{\mathrm{1}}}}  \!\rightarrow\!  \ottnt{U_{{\mathrm{2}}}}   \sqsubseteq _{ S_{{\mathrm{0}}} }  \ottnt{U'_{{\mathrm{1}}}}  \!\rightarrow\!  \ottnt{U'_{{\mathrm{2}}}} $,
    $ \ottnt{U_{{\mathrm{1}}}}   \sqsubseteq _{ S_{{\mathrm{0}}} }  \ottnt{U'_{{\mathrm{1}}}} $ and $ \ottnt{U_{{\mathrm{2}}}}   \sqsubseteq _{ S_{{\mathrm{0}}} }  \ottnt{U'_{{\mathrm{2}}}} $.
    By Lemma~\ref{lem:term_prec_to_type_prec},
    $ \ottnt{U_{{\mathrm{1}}}}  \!\rightarrow\!  \ottnt{U_{{\mathrm{2}}}}   \sqsubseteq _{ S_{{\mathrm{0}}} }  \ottnt{U'_{{\mathrm{11}}}}  \!\rightarrow\!  \ottnt{U'_{{\mathrm{12}}}} $.
    By its inversion, $ \ottnt{U_{{\mathrm{1}}}}   \sqsubseteq _{ S_{{\mathrm{0}}} }  \ottnt{U'_{{\mathrm{11}}}} $ and $ \ottnt{U_{{\mathrm{2}}}}   \sqsubseteq _{ S_{{\mathrm{0}}} }  \ottnt{U'_{{\mathrm{12}}}} $.

    By \rnp{R\_AppCast},
    $\ottsym{(}  w'_{{\mathrm{11}}}  \ottsym{:}   \ottnt{U'_{{\mathrm{11}}}}  \!\rightarrow\!  \ottnt{U'_{{\mathrm{12}}}} \Rightarrow  \unskip ^ { \ell' }  \! \ottnt{U'_{{\mathrm{1}}}}  \!\rightarrow\!  \ottnt{U'_{{\mathrm{2}}}}   \ottsym{)} \, w'_{{\mathrm{2}}} \,  \xmapsto{ \mathmakebox[0.4em]{} [  ] \mathmakebox[0.3em]{} }  \, \ottsym{(}  w'_{{\mathrm{11}}} \, \ottsym{(}  w'_{{\mathrm{2}}}  \ottsym{:}   \ottnt{U'_{{\mathrm{1}}}} \Rightarrow  \unskip ^ {  \bar{ \ell' }  }  \! \ottnt{U'_{{\mathrm{11}}}}   \ottsym{)}  \ottsym{)}  \ottsym{:}   \ottnt{U'_{{\mathrm{12}}}} \Rightarrow  \unskip ^ { \ell' }  \! \ottnt{U'_{{\mathrm{2}}}} $.

    By \rnp{P\_CastR},
    $ \langle   \emptyset    \vdash   w_{{\mathrm{2}}}  :  \ottnt{U_{{\mathrm{1}}}}   \sqsubseteq _{ S_{{\mathrm{0}}} }  \ottnt{U'_{{\mathrm{11}}}}  :  w'_{{\mathrm{2}}}  \ottsym{:}   \ottnt{U'_{{\mathrm{1}}}} \Rightarrow  \unskip ^ {  \bar{ \ell' }  }  \! \ottnt{U'_{{\mathrm{11}}}}   \dashv   \emptyset   \rangle $.
    By Lemma \ref{lem:prec_catch_up_left_value},
    there exist $S'_{{\mathrm{0}}}$, $S'_{{\mathrm{2}}}$, and $w''_{{\mathrm{2}}}$ such that
    \begin{itemize}
     \item $w'_{{\mathrm{2}}}  \ottsym{:}   \ottnt{U'_{{\mathrm{1}}}} \Rightarrow  \unskip ^ {  \bar{ \ell' }  }  \! \ottnt{U'_{{\mathrm{11}}}}  \,  \xmapsto{ \mathmakebox[0.4em]{} S'_{{\mathrm{2}}} \mathmakebox[0.3em]{} }\hspace{-0.4em}{}^\ast \hspace{0.2em}  \, w''_{{\mathrm{2}}}$,
     \item $ \langle   \emptyset    \vdash   w_{{\mathrm{2}}}  :  \ottnt{U_{{\mathrm{1}}}}   \sqsubseteq _{  S'_{{\mathrm{0}}}  \circ  S_{{\mathrm{0}}}  }  S'_{{\mathrm{2}}}  \ottsym{(}  \ottnt{U'_{{\mathrm{11}}}}  \ottsym{)}  :  w''_{{\mathrm{2}}}  \dashv   \emptyset   \rangle $,
     \item $\forall \ottmv{X} \in \textit{dom} \, \ottsym{(}  S_{{\mathrm{0}}}  \ottsym{)}. S_{{\mathrm{0}}}  \ottsym{(}  \ottmv{X}  \ottsym{)}  \ottsym{=}   S'_{{\mathrm{0}}}  \circ   S_{{\mathrm{0}}}  \circ  S'_{{\mathrm{2}}}    \ottsym{(}  \ottmv{X}  \ottsym{)}$, and
     \item $\forall \ottmv{X} \in \textit{dom} \, \ottsym{(}  S'_{{\mathrm{2}}}  \ottsym{)}. \textit{ftv} \, \ottsym{(}  S'_{{\mathrm{2}}}  \ottsym{(}  \ottmv{X}  \ottsym{)}  \ottsym{)}  \subseteq  \textit{dom} \, \ottsym{(}  S'_{{\mathrm{0}}}  \ottsym{)}$.
    \end{itemize}
    We have
    $\ottsym{(}  w'_{{\mathrm{11}}} \, \ottsym{(}  w'_{{\mathrm{2}}}  \ottsym{:}   \ottnt{U'_{{\mathrm{1}}}} \Rightarrow  \unskip ^ {  \bar{ \ell' }  }  \! \ottnt{U'_{{\mathrm{11}}}}   \ottsym{)}  \ottsym{)}  \ottsym{:}   \ottnt{U'_{{\mathrm{12}}}} \Rightarrow  \unskip ^ { \ell' }  \! \ottnt{U'_{{\mathrm{2}}}}  \,  \xmapsto{ \mathmakebox[0.4em]{} S'_{{\mathrm{2}}} \mathmakebox[0.3em]{} }\hspace{-0.4em}{}^\ast \hspace{0.2em}  \, \ottsym{(}  S'_{{\mathrm{2}}}  \ottsym{(}  w'_{{\mathrm{11}}}  \ottsym{)} \, w''_{{\mathrm{2}}}  \ottsym{)}  \ottsym{:}   S'_{{\mathrm{2}}}  \ottsym{(}  \ottnt{U'_{{\mathrm{12}}}}  \ottsym{)} \Rightarrow  \unskip ^ { \ell' }  \! S'_{{\mathrm{2}}}  \ottsym{(}  \ottnt{U'_{{\mathrm{2}}}}  \ottsym{)} $
    by Lemma~\ref{lem:eval_subterm}.
    By Lemma~\ref{lem:right_subst_preserve_prec},
    $ \ottnt{U_{{\mathrm{2}}}}   \sqsubseteq _{  S'_{{\mathrm{0}}}  \circ  S_{{\mathrm{0}}}  }  S'_{{\mathrm{2}}}  \ottsym{(}  \ottnt{U'_{{\mathrm{2}}}}  \ottsym{)} $ and
    $ \langle   \emptyset    \vdash   w_{{\mathrm{1}}}  :  \ottnt{U_{{\mathrm{1}}}}  \!\rightarrow\!  \ottnt{U_{{\mathrm{2}}}}   \sqsubseteq _{  S'_{{\mathrm{0}}}  \circ  S_{{\mathrm{0}}}  }  S'_{{\mathrm{2}}}  \ottsym{(}  \ottnt{U'_{{\mathrm{11}}}}  \!\rightarrow\!  \ottnt{U'_{{\mathrm{12}}}}  \ottsym{)}  :  S'_{{\mathrm{2}}}  \ottsym{(}  w'_{{\mathrm{11}}}  \ottsym{)}  \dashv   \emptyset   \rangle $.

    By the IH,
    there exist $S''_{{\mathrm{0}}}$, $S'_{{\mathrm{3}}}$, and $\ottnt{f''}$ such that
    \begin{itemize}
     \item $S'_{{\mathrm{2}}}  \ottsym{(}  w'_{{\mathrm{11}}}  \ottsym{)} \, w''_{{\mathrm{2}}} \,  \xmapsto{ \mathmakebox[0.4em]{} S'_{{\mathrm{3}}} \mathmakebox[0.3em]{} }\hspace{-0.4em}{}^\ast \hspace{0.2em}  \, \ottnt{f''}$ where \rnp{E\_Abort} is not applied,
     \item $ \langle   \emptyset    \vdash   \ottsym{(}  w_{{\mathrm{1}}} \, \ottsym{(}  w_{{\mathrm{2}}}  \ottsym{:}   \ottnt{U_{{\mathrm{1}}}} \Rightarrow  \unskip ^ {  \bar{ \ell }  }  \! \ottnt{U_{{\mathrm{11}}}}   \ottsym{)}  \ottsym{)}  \ottsym{:}   \ottnt{U_{{\mathrm{12}}}} \Rightarrow  \unskip ^ { \ell }  \! \ottnt{U_{{\mathrm{2}}}}   :  \ottnt{U_{{\mathrm{2}}}}   \sqsubseteq _{   S''_{{\mathrm{0}}}  \circ  S'_{{\mathrm{0}}}   \circ  S_{{\mathrm{0}}}  }   S'_{{\mathrm{3}}}  \circ  S'_{{\mathrm{2}}}   \ottsym{(}  \ottnt{U'_{{\mathrm{12}}}}  \ottsym{)}  :  \ottnt{f''}  \dashv   \emptyset   \rangle $,
     \item $\forall \ottmv{X} \in \textit{dom} \, \ottsym{(}   S'_{{\mathrm{0}}}  \circ  S_{{\mathrm{0}}}   \ottsym{)}.  S'_{{\mathrm{0}}}  \circ  S_{{\mathrm{0}}}   \ottsym{(}  \ottmv{X}  \ottsym{)}  \ottsym{=}   S''_{{\mathrm{0}}}  \circ    S'_{{\mathrm{0}}}  \circ  S_{{\mathrm{0}}}   \circ  S'_{{\mathrm{3}}}    \ottsym{(}  \ottmv{X}  \ottsym{)}$, and
     \item $\forall \ottmv{X} \in \textit{dom} \, \ottsym{(}  S'_{{\mathrm{3}}}  \ottsym{)}. \textit{ftv} \, \ottsym{(}  S'_{{\mathrm{3}}}  \ottsym{(}  \ottmv{X}  \ottsym{)}  \ottsym{)}  \subseteq  \textit{dom} \, \ottsym{(}  S''_{{\mathrm{0}}}  \ottsym{)}$.
    \end{itemize}
    We have
    $\ottsym{(}  S'_{{\mathrm{2}}}  \ottsym{(}  w'_{{\mathrm{11}}}  \ottsym{)} \, w''_{{\mathrm{2}}}  \ottsym{)}  \ottsym{:}   S'_{{\mathrm{2}}}  \ottsym{(}  \ottnt{U'_{{\mathrm{12}}}}  \ottsym{)} \Rightarrow  \unskip ^ { \ell' }  \! S'_{{\mathrm{2}}}  \ottsym{(}  \ottnt{U'_{{\mathrm{2}}}}  \ottsym{)}  \,  \xmapsto{ \mathmakebox[0.4em]{} S'_{{\mathrm{3}}} \mathmakebox[0.3em]{} }\hspace{-0.4em}{}^\ast \hspace{0.2em}  \, \ottnt{f''}  \ottsym{:}    S'_{{\mathrm{3}}}  \circ  S'_{{\mathrm{2}}}   \ottsym{(}  \ottnt{U'_{{\mathrm{12}}}}  \ottsym{)} \Rightarrow  \unskip ^ { \ell' }  \!  S'_{{\mathrm{3}}}  \circ  S'_{{\mathrm{2}}}   \ottsym{(}  \ottnt{U'_{{\mathrm{2}}}}  \ottsym{)} $ by Lemma~\ref{lem:eval_subterm}.

    By Lemma~\ref{lem:right_subst_preserve_prec},
    $ \ottnt{U_{{\mathrm{2}}}}   \sqsubseteq _{   S''_{{\mathrm{0}}}  \circ  S'_{{\mathrm{0}}}   \circ  S_{{\mathrm{0}}}  }   S'_{{\mathrm{3}}}  \circ  S'_{{\mathrm{2}}}   \ottsym{(}  \ottnt{U'_{{\mathrm{2}}}}  \ottsym{)} $.
    By \rnp{P\_CastR},
    \[
      \langle   \emptyset    \vdash   \ottsym{(}  w_{{\mathrm{1}}} \, \ottsym{(}  w_{{\mathrm{2}}}  \ottsym{:}   \ottnt{U_{{\mathrm{1}}}} \Rightarrow  \unskip ^ {  \bar{ \ell }  }  \! \ottnt{U_{{\mathrm{11}}}}   \ottsym{)}  \ottsym{)}  \ottsym{:}   \ottnt{U_{{\mathrm{12}}}} \Rightarrow  \unskip ^ { \ell }  \! \ottnt{U_{{\mathrm{2}}}}   :  \ottnt{U_{{\mathrm{2}}}}   \sqsubseteq _{ S'''_{{\mathrm{0}}} }  S'_{{\mathrm{32}}}  \ottsym{(}  \ottnt{U'_{{\mathrm{2}}}}  \ottsym{)}  :  \ottnt{f''}  \ottsym{:}   S'_{{\mathrm{32}}}  \ottsym{(}  \ottnt{U'_{{\mathrm{12}}}}  \ottsym{)} \Rightarrow  \unskip ^ { \ell' }  \! S'_{{\mathrm{32}}}  \ottsym{(}  \ottnt{U'_{{\mathrm{2}}}}  \ottsym{)}   \dashv   \emptyset   \rangle .
   \]
   where $S'''_{{\mathrm{0}}} =   S''_{{\mathrm{0}}}  \circ  S'_{{\mathrm{0}}}   \circ  S_{{\mathrm{0}}} $ and $S'_{{\mathrm{32}}} =  S'_{{\mathrm{3}}}  \circ  S'_{{\mathrm{2}}} $.

    We show $\forall \ottmv{X} \in \textit{dom} \, \ottsym{(}  S_{{\mathrm{0}}}  \ottsym{)}. S_{{\mathrm{0}}}  \ottsym{(}  \ottmv{X}  \ottsym{)}  \ottsym{=}   S''_{{\mathrm{0}}}  \circ     S'_{{\mathrm{0}}}  \circ  S_{{\mathrm{0}}}   \circ  S'_{{\mathrm{3}}}   \circ  S'_{{\mathrm{2}}}    \ottsym{(}  \ottmv{X}  \ottsym{)}$.
    Let $\ottmv{X} \, \in \, \textit{dom} \, \ottsym{(}  S_{{\mathrm{0}}}  \ottsym{)}$.
    We have $S_{{\mathrm{0}}}  \ottsym{(}  \ottmv{X}  \ottsym{)}  \ottsym{=}   S'_{{\mathrm{0}}}  \circ   S_{{\mathrm{0}}}  \circ  S'_{{\mathrm{2}}}    \ottsym{(}  \ottmv{X}  \ottsym{)}$.
    Since $S'_{{\mathrm{2}}}$ is generated by DTI,
    free type variables in $S'_{{\mathrm{2}}}  \ottsym{(}  \ottmv{X}  \ottsym{)}$ are fresh for $S_{{\mathrm{0}}}$.
    Since $\forall \ottmv{X'}, S_{{\mathrm{0}}}  \ottsym{(}  \ottmv{X'}  \ottsym{)}  \ottsym{=}   S'_{{\mathrm{0}}}  \circ   S_{{\mathrm{0}}}  \circ  S'_{{\mathrm{2}}}    \ottsym{(}  \ottmv{X'}  \ottsym{)}$,
    it is found that $\textit{ftv} \, \ottsym{(}  S'_{{\mathrm{2}}}  \ottsym{(}  \ottmv{X}  \ottsym{)}  \ottsym{)}  \subseteq  \textit{dom} \, \ottsym{(}  S'_{{\mathrm{0}}}  \ottsym{)}$ if $\ottmv{X} \, \in \, \textit{dom} \, \ottsym{(}  S'_{{\mathrm{2}}}  \ottsym{)}$.
    If $\ottmv{X} \, \not\in \, \textit{dom} \, \ottsym{(}  S'_{{\mathrm{2}}}  \ottsym{)}$, then $S'_{{\mathrm{2}}}  \ottsym{(}  \ottmv{X}  \ottsym{)}  \ottsym{=}  \ottmv{X}$.
    Thus, $\textit{ftv} \, \ottsym{(}  S'_{{\mathrm{2}}}  \ottsym{(}  \ottmv{X}  \ottsym{)}  \ottsym{)}  \subseteq  \textit{dom} \, \ottsym{(}   S'_{{\mathrm{0}}}  \circ  S_{{\mathrm{0}}}   \ottsym{)}$, and so
    $  S'_{{\mathrm{0}}}  \circ  S_{{\mathrm{0}}}   \circ  S'_{{\mathrm{2}}}   \ottsym{(}  \ottmv{X}  \ottsym{)}  \ottsym{=}   S''_{{\mathrm{0}}}  \circ     S'_{{\mathrm{0}}}  \circ  S_{{\mathrm{0}}}   \circ  S'_{{\mathrm{3}}}   \circ  S'_{{\mathrm{2}}}    \ottsym{(}  \ottmv{X}  \ottsym{)}$.

    Finally, it is obvious that $\forall \ottmv{X} \, \in \, \textit{dom} \, \ottsym{(}   S'_{{\mathrm{3}}}  \circ  S'_{{\mathrm{2}}}   \ottsym{)}. \textit{ftv} \, \ottsym{(}   S'_{{\mathrm{3}}}  \circ  S'_{{\mathrm{2}}}   \ottsym{(}  \ottmv{X}  \ottsym{)}  \ottsym{)}  \subseteq  \textit{dom} \, \ottsym{(}   S''_{{\mathrm{0}}}  \circ  S'_{{\mathrm{0}}}   \ottsym{)}$.
    \qedhere
  \end{caseanalysis}
\end{proof}

\begin{lemmaA}[Simulation of More Precise Programs] \label{lem:term_prec_simulation}
  If $ \langle   \emptyset    \vdash   f_{{\mathrm{1}}}  :  \ottnt{U}   \sqsubseteq _{ S_{{\mathrm{0}}} }  \ottnt{U'}  :  f'_{{\mathrm{1}}}  \dashv   \emptyset   \rangle $ and $f_{{\mathrm{1}}} \,  \xmapsto{ \mathmakebox[0.4em]{} S \mathmakebox[0.3em]{} }  \, f_{{\mathrm{2}}}$,
  then there exist $S'$, $S'_{{\mathrm{0}}}$, $S''$, and $f'_{{\mathrm{2}}}$ such that
  \begin{itemize}
   \item $f'_{{\mathrm{1}}} \,  \xmapsto{ \mathmakebox[0.4em]{} S' \mathmakebox[0.3em]{} }\hspace{-0.4em}{}^\ast \hspace{0.2em}  \, f'_{{\mathrm{2}}}$ where \rnp{E\_Abort} is not applied,
   \item $ \langle   \emptyset    \vdash   f_{{\mathrm{2}}}  :  S  \ottsym{(}  \ottnt{U}  \ottsym{)}   \sqsubseteq _{  S''  \uplus  \ottsym{(}   S'_{{\mathrm{0}}}  \circ   S  \circ  S_{{\mathrm{0}}}    \ottsym{)}  }  S'  \ottsym{(}  \ottnt{U'}  \ottsym{)}  :  f'_{{\mathrm{2}}}  \dashv   \emptyset   \rangle $,
   \item $\forall \ottmv{X} \in \textit{dom} \, \ottsym{(}   S  \circ  S_{{\mathrm{0}}}   \ottsym{)}.  S  \circ  S_{{\mathrm{0}}}   \ottsym{(}  \ottmv{X}  \ottsym{)}  \ottsym{=}   S'_{{\mathrm{0}}}  \circ    S  \circ  S_{{\mathrm{0}}}   \circ  S'    \ottsym{(}  \ottmv{X}  \ottsym{)}$,
   \item $\forall \ottmv{X} \in \textit{dom} \, \ottsym{(}  S'  \ottsym{)}. \textit{ftv} \, \ottsym{(}  S'  \ottsym{(}  \ottmv{X}  \ottsym{)}  \ottsym{)}  \subseteq  \textit{dom} \, \ottsym{(}  S'_{{\mathrm{0}}}  \ottsym{)}$, and
   \item $\textit{dom} \, \ottsym{(}  S''  \ottsym{)}$ is a set of fresh type variables.
  \end{itemize}
\end{lemmaA}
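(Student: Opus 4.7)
The plan is to prove this lemma by induction on the derivation of $\langle \emptyset \vdash f_1 : U \sqsubseteq_{S_0} U' : f'_1 \dashv \emptyset \rangle$, with an inner case analysis on the evaluation rule used to derive $f_1 \xmapsto{S} f_2$. Cases P\_Const, P\_Var(P), P\_Abs, and P\_Blame put $f_1$ in a form (value or blame) that cannot take a top-level step, so only subterm reductions via \rnp{E\_Step} need to be handled there, which follow directly from the IH combined with Lemma~\ref{lem:eval_subterm} to lift the simulation through the evaluation context.

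For P\_Op and P\_App with a top-level reduction, the step is R\_Op, R\_Beta, or R\_AppCast. Here I would first apply Lemma~\ref{lem:prec_catch_up_left_value} to reduce each argument subterm of $f'_1$ to a value while keeping the precision relation, and then invoke Lemma~\ref{lem:term_prec_simulation_app} (for R\_Beta when the left-hand function is a $\lambda$) or Lemma~\ref{lem:term_prec_simulation_unwrap} (for R\_AppCast when the function is wrapped). These lemmas produce the matching multi-step reduction on the right, together with the appropriate $S''$ of fresh variables and the $S'_0$ witnessing that the new $S'$ refines $S_0$. For P\_LetP with R\_LetP, the result follows from Lemma~\ref{lem:subst_preserve_term_prec}, since the right-hand let reduces in one step to the corresponding substitution and that lemma produces the $S''$ of fresh type variables introduced by $\nu$-substitution on the right.

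The interesting cases are P\_Cast, P\_CastL, and P\_CastR, where I would enumerate the applicable reduction rules: R\_IdBase, R\_IdStar, R\_Succeed, R\_Fail, R\_Ground, R\_Expand, R\_InstBase, and R\_InstArrow. For the identity, ground, and expand rules, the right-hand cast can be driven by Lemma~\ref{lem:prec_catch_up_left_value} to a matching shape and the corresponding cast reduction replayed, the new precision following from Lemmas~\ref{lem:left_subst_preserve_prec} and~\ref{lem:right_subst_preserve_prec}. For R\_InstBase with step $[X := \iota]$, after catch-up the right-hand value is tagged by the same ground type $\iota$ (by Lemmas~\ref{lem:type_prec_consistent_ground2} and~\ref{lem:prec_ground_contra}), so the right either needs no further reduction (when the target type on the right is not a type variable) or fires R\_InstBase itself on some $X'$ with $S_0(X) = \iota$; the accounting $[X := \iota] \circ S_0 = S'_0 \circ (S_0 \circ S')$ is done by picking $S'_0$ to map $X'$ to $\iota$ when needed and to be empty otherwise. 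R\_InstArrow is the analogous but subtler case, because the fresh $X_1, X_2$ on the left have to be matched either by fresh variables generated on the right (recorded in $S''$) or by projecting $S_0(X)$ into its domain and range components.

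The main obstacle will be the bookkeeping of the five substitutions $S, S_0, S', S'_0, S''$ and their interaction with freshness. The key invariants are that $S''$ consists of truly fresh variables produced by left-side R\_InstArrow that have no correspondent yet on the right, that $\text{dom}(S'_0) \cap \text{ftv}(S_0(X)) = \emptyset$ for all $X \in \text{dom}(S_0)$, and that the triangle $(S \circ S_0)(X) = S'_0((S \circ S_0 \circ S')(X))$ holds on $\text{dom}(S \circ S_0)$. These are preserved by repeatedly applying Lemmas~\ref{lem:left_subst_preserve_prec}, \ref{lem:right_subst_preserve_prec}, \ref{lem:term_prec_push_subst}, and \ref{lem:prec_subst_weak} to shuffle substitutions between the two sides of the precision judgment. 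The most delicate point will be R\_InstArrow, where constructing $S''$ requires choosing fresh right-hand witnesses that will later be instantiated consistently with how DTI continues on the left; the freshness conditions in the assumptions of Lemma~\ref{lem:prec_catch_up_left_value} are engineered exactly to support this construction.
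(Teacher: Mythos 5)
Your overall architecture — induction on the precision derivation, inner case analysis on the evaluation and reduction rules, catch-up via Lemma~\ref{lem:prec_catch_up_left_value} followed by Lemmas~\ref{lem:term_prec_simulation_app}, \ref{lem:term_prec_simulation_unwrap}, and \ref{lem:subst_preserve_term_prec} for the $\beta$/unwrap/let cases — is exactly the paper's proof, and the substitution bookkeeping you describe is the right shape.

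Two points in your description of the cast cases would need correction when you carry them out. First, for \rnp{P\_Cast}, \rnp{P\_CastL}, and \rnp{P\_CastR} the right-hand term usually takes \emph{no} step at all: you cannot in general ``replay'' the left-hand cast reduction on the right, because the right-hand cast target is frequently $\star$ or a type variable when the left fires \rnp{R\_IdBase}, \rnp{R\_Ground}, \rnp{R\_Expand}, \rnp{R\_InstBase}, or \rnp{R\_InstArrow} (e.g.\ the left reduces $w:\iota\Rightarrow^{\ell}\iota$ while the right is $f':\iota\Rightarrow^{\ell'}\star$, which is already a value). The actual proof leaves $f'_{1}$ untouched, takes $S'=S'_{0}=[\,]$, pushes the generated substitution onto the left side with Lemma~\ref{lem:left_subst_preserve_prec}, and re-derives the precision judgment using the asymmetric rules \rnp{P\_CastL}/\rnp{P\_CastR} — this is precisely what those rules exist for. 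Second, your final paragraph attributes $\textit{dom} \, \ottsym{(}  S''  \ottsym{)}$ to fresh variables from left-side \rnp{R\_InstArrow}; that is inconsistent with your (correct) account in the \rnp{P\_LetP} paragraph. The fresh $\ottmv{X_{{\mathrm{1}}}},\ottmv{X_{{\mathrm{2}}}}$ of \rnp{R\_InstArrow} land only in the \emph{codomain} of $S$ and are absorbed by \rnp{P\_TyVar} or \rnp{P\_Dyn}; $S''$ is populated exclusively by the fresh type variables that $ \nu $-instantiation creates during polymorphic substitution, threaded through Lemma~\ref{lem:subst_preserve_term_prec} into the \rnp{R\_Beta}, \rnp{R\_AppCast}, and \rnp{R\_LetP} cases. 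Neither issue breaks the plan, but both would have to be repaired before the cast cases go through.
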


\begin{proof}
  By induction on the term precision derivation.

  \begin{caseanalysis}
    \case{\rnp{P\_Op}}
    We are given $f_{{\mathrm{1}}}  \ottsym{=}  \mathit{op} \, \ottsym{(}  f_{{\mathrm{11}}}  \ottsym{,}  f_{{\mathrm{12}}}  \ottsym{)}$ and $f'_{{\mathrm{1}}}  \ottsym{=}  \mathit{op} \, \ottsym{(}  f'_{{\mathrm{11}}}  \ottsym{,}  f'_{{\mathrm{12}}}  \ottsym{)}$
    for some $ \mathit{op} $, $f_{{\mathrm{11}}}$, $f_{{\mathrm{12}}}$, $f'_{{\mathrm{11}}}$, and $f'_{{\mathrm{12}}}$.

    By inversion,
    there exist $\iota_{{\mathrm{1}}}$, $\iota_{{\mathrm{2}}}$, and $\iota$ such that
    \begin{itemize}
     \item $ \mathit{ty} ( \mathit{op} )   \ottsym{=}  \iota_{{\mathrm{1}}}  \!\rightarrow\!  \iota_{{\mathrm{2}}}  \!\rightarrow\!  \iota$,
     \item $ \langle   \emptyset    \vdash   f_{{\mathrm{11}}}  :  \iota_{{\mathrm{1}}}   \sqsubseteq _{ S_{{\mathrm{0}}} }  \iota_{{\mathrm{1}}}  :  f'_{{\mathrm{11}}}  \dashv   \emptyset   \rangle $, and
     \item $ \langle   \emptyset    \vdash   f_{{\mathrm{12}}}  :  \iota_{{\mathrm{2}}}   \sqsubseteq _{ S_{{\mathrm{0}}} }  \iota_{{\mathrm{2}}}  :  f'_{{\mathrm{12}}}  \dashv   \emptyset   \rangle $
    \end{itemize}
    where $\ottnt{U}  \ottsym{=}  \ottnt{U'} = \iota$.
    By case analysis on the evaluation rule applied to $f_{{\mathrm{1}}}$.

    \begin{caseanalysis}
      \case{\rnp{E\_Step}}
      There exist $\ottnt{E}$ and $f_{{\mathrm{13}}}$ such that $\ottnt{E}  [  f_{{\mathrm{13}}}  ]  \ottsym{=}  \mathit{op} \, \ottsym{(}  f_{{\mathrm{11}}}  \ottsym{,}  f_{{\mathrm{12}}}  \ottsym{)}$ and
      $\ottnt{E}  [  f_{{\mathrm{13}}}  ] \,  \xmapsto{ \mathmakebox[0.4em]{} S \mathmakebox[0.3em]{} }  \, S  \ottsym{(}  \ottnt{E}  [  f'_{{\mathrm{13}}}  ]  \ottsym{)}$.
      By inversion, $f_{{\mathrm{13}}} \,  \xrightarrow{ \mathmakebox[0.4em]{} S \mathmakebox[0.3em]{} }  \, f'_{{\mathrm{13}}}$.
      By case analysis on the structure of $\ottnt{E}$.

      \begin{caseanalysis}
        \case{$\ottnt{E}  \ottsym{=}  \left[ \, \right]$}
        Here, $\mathit{op} \, \ottsym{(}  f_{{\mathrm{11}}}  \ottsym{,}  f_{{\mathrm{12}}}  \ottsym{)}  \ottsym{=}  f_{{\mathrm{13}}}$.
        By case analysis on the reduction rule applied to $f_{{\mathrm{13}}}$.

        \begin{caseanalysis}
          \case{\rnp{R\_Op}}
          $f_{{\mathrm{11}}}$ and $f_{{\mathrm{12}}}$ are values.

          By Lemma~\ref{lem:prec_catch_up_left_value},
          there exist $S'_{{\mathrm{0}}}$, $S'_{{\mathrm{11}}}$, and $w'_{{\mathrm{11}}}$ such that
          \begin{itemize}
           \item $f'_{{\mathrm{11}}} \,  \xmapsto{ \mathmakebox[0.4em]{} S'_{{\mathrm{11}}} \mathmakebox[0.3em]{} }\hspace{-0.4em}{}^\ast \hspace{0.2em}  \, w'_{{\mathrm{11}}}$,
           \item $ \langle   \emptyset    \vdash   f_{{\mathrm{11}}}  :  \iota_{{\mathrm{1}}}   \sqsubseteq _{  S'_{{\mathrm{0}}}  \circ  S_{{\mathrm{0}}}  }  S'_{{\mathrm{11}}}  \ottsym{(}  \iota_{{\mathrm{1}}}  \ottsym{)}  :  w'_{{\mathrm{11}}}  \dashv   \emptyset   \rangle $,
           \item $\forall \ottmv{X} \in \textit{dom} \, \ottsym{(}  S_{{\mathrm{0}}}  \ottsym{)}. S_{{\mathrm{0}}}  \ottsym{(}  \ottmv{X}  \ottsym{)}  \ottsym{=}   S'_{{\mathrm{0}}}  \circ   S_{{\mathrm{0}}}  \circ  S'_{{\mathrm{11}}}    \ottsym{(}  \ottmv{X}  \ottsym{)}$, and
           \item $\forall \ottmv{X} \in \textit{dom} \, \ottsym{(}  S'_{{\mathrm{11}}}  \ottsym{)}. \textit{ftv} \, \ottsym{(}  S'_{{\mathrm{11}}}  \ottsym{(}  \ottmv{X}  \ottsym{)}  \ottsym{)}  \subseteq  \textit{dom} \, \ottsym{(}  S'_{{\mathrm{0}}}  \ottsym{)}$.
          \end{itemize}
      
          So, $\mathit{op} \, \ottsym{(}  f'_{{\mathrm{11}}}  \ottsym{,}  f'_{{\mathrm{12}}}  \ottsym{)} \,  \xmapsto{ \mathmakebox[0.4em]{} S'_{{\mathrm{11}}} \mathmakebox[0.3em]{} }\hspace{-0.4em}{}^\ast \hspace{0.2em}  \, \mathit{op} \, \ottsym{(}  w'_{{\mathrm{11}}}  \ottsym{,}  S'_{{\mathrm{11}}}  \ottsym{(}  f'_{{\mathrm{12}}}  \ottsym{)}  \ottsym{)}$ by Lemma~\ref{lem:eval_subterm}.
      
          By Lemma~\ref{lem:right_subst_preserve_prec},
          $ \langle   \emptyset    \vdash   f_{{\mathrm{12}}}  :  \iota_{{\mathrm{2}}}   \sqsubseteq _{  S'_{{\mathrm{0}}}  \circ  S_{{\mathrm{0}}}  }  S'_{{\mathrm{11}}}  \ottsym{(}  \iota_{{\mathrm{2}}}  \ottsym{)}  :  S'_{{\mathrm{11}}}  \ottsym{(}  f'_{{\mathrm{12}}}  \ottsym{)}  \dashv   \emptyset   \rangle $.
          By Lemma~\ref{lem:prec_catch_up_left_value},
          there exist $S''_{{\mathrm{0}}}$, $S'_{{\mathrm{12}}}$, and $w'_{{\mathrm{12}}}$ such that
          \begin{itemize}
           \item $S'_{{\mathrm{11}}}  \ottsym{(}  f'_{{\mathrm{12}}}  \ottsym{)} \,  \xmapsto{ \mathmakebox[0.4em]{} S'_{{\mathrm{12}}} \mathmakebox[0.3em]{} }\hspace{-0.4em}{}^\ast \hspace{0.2em}  \, w'_{{\mathrm{12}}}$,
           \item $ \langle   \emptyset    \vdash   f_{{\mathrm{12}}}  :  \iota_{{\mathrm{2}}}   \sqsubseteq _{   S''_{{\mathrm{0}}}  \circ  S'_{{\mathrm{0}}}   \circ  S_{{\mathrm{0}}}  }  S'_{{\mathrm{12}}}  \ottsym{(}  \iota_{{\mathrm{2}}}  \ottsym{)}  :  w'_{{\mathrm{12}}}  \dashv   \emptyset   \rangle $,
           \item $\forall \ottmv{X} \in \textit{dom} \, \ottsym{(}   S'_{{\mathrm{0}}}  \circ  S_{{\mathrm{0}}}   \ottsym{)}.  S'_{{\mathrm{0}}}  \circ  S_{{\mathrm{0}}}   \ottsym{(}  \ottmv{X}  \ottsym{)}  \ottsym{=}   S''_{{\mathrm{0}}}  \circ    S'_{{\mathrm{0}}}  \circ  S_{{\mathrm{0}}}   \circ  S'_{{\mathrm{12}}}    \ottsym{(}  \ottmv{X}  \ottsym{)}$, and
           \item $\forall \ottmv{X} \in \textit{dom} \, \ottsym{(}  S'_{{\mathrm{12}}}  \ottsym{)}. \textit{ftv} \, \ottsym{(}  S'_{{\mathrm{12}}}  \ottsym{(}  \ottmv{X}  \ottsym{)}  \ottsym{)}  \subseteq  \textit{dom} \, \ottsym{(}  S''_{{\mathrm{0}}}  \ottsym{)}$.
          \end{itemize}
          So, $\mathit{op} \, \ottsym{(}  w'_{{\mathrm{11}}}  \ottsym{,}  S'_{{\mathrm{11}}}  \ottsym{(}  f'_{{\mathrm{12}}}  \ottsym{)}  \ottsym{)} \,  \xmapsto{ \mathmakebox[0.4em]{} S'_{{\mathrm{12}}} \mathmakebox[0.3em]{} }\hspace{-0.4em}{}^\ast \hspace{0.2em}  \, \mathit{op} \, \ottsym{(}  S'_{{\mathrm{12}}}  \ottsym{(}  w'_{{\mathrm{11}}}  \ottsym{)}  \ottsym{,}  w'_{{\mathrm{12}}}  \ottsym{)}$ by Lemma~\ref{lem:eval_subterm}.
          By Lemma \ref{lem:term_prec_inversion4},
          there exists $\ottnt{c_{{\mathrm{12}}}}$ such that $\ottnt{c_{{\mathrm{12}}}} = f_{{\mathrm{12}}}  \ottsym{=}  w'_{{\mathrm{12}}}$.
      
          By Lemma~\ref{lem:right_subst_preserve_prec},
          $ \langle   \emptyset    \vdash   f_{{\mathrm{11}}}  :  \iota_{{\mathrm{1}}}   \sqsubseteq _{   S''_{{\mathrm{0}}}  \circ  S'_{{\mathrm{0}}}   \circ  S_{{\mathrm{0}}}  }  \iota_{{\mathrm{1}}}  :  S'_{{\mathrm{12}}}  \ottsym{(}  w'_{{\mathrm{11}}}  \ottsym{)}  \dashv   \emptyset   \rangle $.
          By Lemma~\ref{lem:term_prec_inversion4},
          there exists $\ottnt{c_{{\mathrm{11}}}}$ such that $\ottnt{c_{{\mathrm{11}}}} = f_{{\mathrm{11}}}  \ottsym{=}  S'_{{\mathrm{12}}}  \ottsym{(}  w'_{{\mathrm{11}}}  \ottsym{)}$.
      
          Finally,
          \begin{itemize}
           \item $\mathit{op} \, \ottsym{(}  f'_{{\mathrm{11}}}  \ottsym{,}  f'_{{\mathrm{12}}}  \ottsym{)} \,  \xmapsto{ \mathmakebox[0.4em]{}  S'_{{\mathrm{12}}}  \circ  S'_{{\mathrm{11}}}  \mathmakebox[0.3em]{} }\hspace{-0.4em}{}^\ast \hspace{0.2em}  \,  \llbracket\mathit{op}\rrbracket ( \ottnt{c_{{\mathrm{11}}}} ,  \ottnt{c_{{\mathrm{12}}}} ) $,
           \item $ \langle   \emptyset    \vdash    \llbracket\mathit{op}\rrbracket ( \ottnt{c_{{\mathrm{11}}}} ,  \ottnt{c_{{\mathrm{12}}}} )   :  \iota   \sqsubseteq _{   S''_{{\mathrm{0}}}  \circ  S'_{{\mathrm{0}}}   \circ  S_{{\mathrm{0}}}  }  \iota  :   \llbracket\mathit{op}\rrbracket ( \ottnt{c_{{\mathrm{11}}}} ,  \ottnt{c_{{\mathrm{12}}}} )   \dashv   \emptyset   \rangle $,
           \item $\forall \ottmv{X} \in \textit{dom} \, \ottsym{(}  S_{{\mathrm{0}}}  \ottsym{)}. S_{{\mathrm{0}}}  \ottsym{(}  \ottmv{X}  \ottsym{)}  \ottsym{=}   S''_{{\mathrm{0}}}  \circ     S'_{{\mathrm{0}}}  \circ  S_{{\mathrm{0}}}   \circ  S'_{{\mathrm{12}}}   \circ  S'_{{\mathrm{11}}}    \ottsym{(}  \ottmv{X}  \ottsym{)}$, and
           \item $\forall \ottmv{X} \in \textit{dom} \, \ottsym{(}   S'_{{\mathrm{12}}}  \circ  S'_{{\mathrm{11}}}   \ottsym{)}. \textit{ftv} \, \ottsym{(}   S'_{{\mathrm{12}}}  \circ  S'_{{\mathrm{11}}}   \ottsym{(}  \ottmv{X}  \ottsym{)}  \ottsym{)}  \subseteq  \textit{dom} \, \ottsym{(}   S''_{{\mathrm{0}}}  \circ  S'_{{\mathrm{0}}}   \ottsym{)}$.
          \end{itemize}

          \otherwise Cannot happen.
        \end{caseanalysis}

        \case{$\ottnt{E}  \ottsym{=}  \mathit{op} \, \ottsym{(}  \ottnt{E'}  \ottsym{,}  f_{{\mathrm{12}}}  \ottsym{)}$ for some $\ottnt{E'}$}
        Here, $f_{{\mathrm{11}}}  \ottsym{=}  \ottnt{E'}  [  f_{{\mathrm{13}}}  ]$.
        Since $ \langle   \emptyset    \vdash   \ottnt{E'}  [  f_{{\mathrm{13}}}  ]  :  \iota_{{\mathrm{1}}}   \sqsubseteq _{ S_{{\mathrm{0}}} }  \iota_{{\mathrm{1}}}  :  f'_{{\mathrm{11}}}  \dashv   \emptyset   \rangle $
        and $\ottnt{E'}  [  f_{{\mathrm{13}}}  ] \,  \xmapsto{ \mathmakebox[0.4em]{} S \mathmakebox[0.3em]{} }  \, S  \ottsym{(}  \ottnt{E'}  [  f'_{{\mathrm{13}}}  ]  \ottsym{)}$,
        there exist $S'$, $S'_{{\mathrm{0}}}$, $S''$, and $f'_{{\mathrm{2}}}$ such that
        \begin{itemize}
         \item $f'_{{\mathrm{11}}} \,  \xmapsto{ \mathmakebox[0.4em]{} S' \mathmakebox[0.3em]{} }\hspace{-0.4em}{}^\ast \hspace{0.2em}  \, f'_{{\mathrm{2}}}$ where \rnp{E\_Abort} is not applied,
         \item $ \langle   \emptyset    \vdash   S  \ottsym{(}  \ottnt{E'}  [  f'_{{\mathrm{13}}}  ]  \ottsym{)}  :  \iota_{{\mathrm{1}}}   \sqsubseteq _{  S''  \uplus  \ottsym{(}   S'_{{\mathrm{0}}}  \circ   S  \circ  S_{{\mathrm{0}}}    \ottsym{)}  }  \iota_{{\mathrm{1}}}  :  f'_{{\mathrm{2}}}  \dashv   \emptyset   \rangle $,
         \item $\forall \ottmv{X} \in \textit{dom} \, \ottsym{(}   S  \circ  S_{{\mathrm{0}}}   \ottsym{)}.  S  \circ  S_{{\mathrm{0}}}   \ottsym{(}  \ottmv{X}  \ottsym{)}  \ottsym{=}   S'_{{\mathrm{0}}}  \circ    S  \circ  S_{{\mathrm{0}}}   \circ  S'    \ottsym{(}  \ottmv{X}  \ottsym{)}$,
         \item $\forall \ottmv{X} \in \textit{dom} \, \ottsym{(}  S'  \ottsym{)}. \textit{ftv} \, \ottsym{(}  S'  \ottsym{(}  \ottmv{X}  \ottsym{)}  \ottsym{)}  \subseteq  \textit{dom} \, \ottsym{(}  S'_{{\mathrm{0}}}  \ottsym{)}$, and
         \item $\textit{dom} \, \ottsym{(}  S''  \ottsym{)}$ is a set of fresh type variables
        \end{itemize}
        by the IH.
        By Lemma~\ref{lem:eval_subterm},
        $\mathit{op} \, \ottsym{(}  f'_{{\mathrm{11}}}  \ottsym{,}  f'_{{\mathrm{12}}}  \ottsym{)} \,  \xmapsto{ \mathmakebox[0.4em]{} S' \mathmakebox[0.3em]{} }\hspace{-0.4em}{}^\ast \hspace{0.2em}  \, \mathit{op} \, \ottsym{(}  f'_{{\mathrm{2}}}  \ottsym{,}  S  \ottsym{(}  f'_{{\mathrm{12}}}  \ottsym{)}  \ottsym{)}$.
        Since $ \langle   \emptyset    \vdash   f_{{\mathrm{12}}}  :  \iota_{{\mathrm{2}}}   \sqsubseteq _{ S_{{\mathrm{0}}} }  \iota_{{\mathrm{2}}}  :  f'_{{\mathrm{12}}}  \dashv   \emptyset   \rangle $,
        we have $ \langle   \emptyset    \vdash   S  \ottsym{(}  f_{{\mathrm{12}}}  \ottsym{)}  :  \iota_{{\mathrm{2}}}   \sqsubseteq _{  S''  \uplus  \ottsym{(}   S'_{{\mathrm{0}}}  \circ   S  \circ  S_{{\mathrm{0}}}    \ottsym{)}  }  \iota_{{\mathrm{2}}}  :  S'  \ottsym{(}  f'_{{\mathrm{12}}}  \ottsym{)}  \dashv   \emptyset   \rangle $
        by Lemmas~\ref{lem:left_subst_preserve_prec}, \ref{lem:right_subst_preserve_prec}, and \ref{lem:prec_subst_weak}.
        By \rnp{P\_Op},
        \[
          \langle   \emptyset    \vdash   S  \ottsym{(}  \mathit{op} \, \ottsym{(}  \ottnt{E'}  [  f'_{{\mathrm{13}}}  ]  \ottsym{,}  f'_{{\mathrm{12}}}  \ottsym{)}  \ottsym{)}  :  \iota   \sqsubseteq _{  S''  \uplus  \ottsym{(}   S'_{{\mathrm{0}}}  \circ   S  \circ  S_{{\mathrm{0}}}    \ottsym{)}  }  \iota  :  \mathit{op} \, \ottsym{(}  f'_{{\mathrm{2}}}  \ottsym{,}  S'  \ottsym{(}  f'_{{\mathrm{12}}}  \ottsym{)}  \ottsym{)}  \dashv   \emptyset   \rangle .
        \]
        Thus, we finish.

        \case{$\ottnt{E}  \ottsym{=}  \mathit{op} \, \ottsym{(}  f_{{\mathrm{11}}}  \ottsym{,}  \ottnt{E'}  \ottsym{)}$ for some $\ottnt{E'}$}
        Similar to the above.

        \otherwise Cannot happen.

       \case{\rnp{E\_Abort}} By \rnp{P\_Blame}.
      \end{caseanalysis}
    \end{caseanalysis}

    \case{\rnp{P\_App}}
    We are given $f_{{\mathrm{1}}}  \ottsym{=}  f_{{\mathrm{11}}} \, f_{{\mathrm{12}}}$ and $f'_{{\mathrm{1}}}  \ottsym{=}  f'_{{\mathrm{11}}} \, f'_{{\mathrm{12}}}$
    for some $f_{{\mathrm{11}}}$, $f_{{\mathrm{12}}}$, $f'_{{\mathrm{11}}}$, and $f'_{{\mathrm{12}}}$.

    By inversion,
    $ \langle   \emptyset    \vdash   f_{{\mathrm{11}}}  :  \ottnt{U_{{\mathrm{1}}}}  \!\rightarrow\!  \ottnt{U_{{\mathrm{2}}}}   \sqsubseteq _{ S_{{\mathrm{0}}} }  \ottnt{U'_{{\mathrm{1}}}}  \!\rightarrow\!  \ottnt{U'_{{\mathrm{2}}}}  :  f'_{{\mathrm{11}}}  \dashv   \emptyset   \rangle $ and
    $ \langle   \emptyset    \vdash   f_{{\mathrm{12}}}  :  \ottnt{U_{{\mathrm{2}}}}   \sqsubseteq _{ S_{{\mathrm{0}}} }  \ottnt{U'_{{\mathrm{2}}}}  :  f'_{{\mathrm{12}}}  \dashv   \emptyset   \rangle $.

    By case analysis on the reduction rule applied to $f_{{\mathrm{1}}}$.

    \begin{caseanalysis}
      \case{\rnp{E\_Step}}
      There exist $\ottnt{E}$ and $f_{{\mathrm{13}}}$ such that $\ottnt{E}  [  f_{{\mathrm{13}}}  ] \,  \xmapsto{ \mathmakebox[0.4em]{} S \mathmakebox[0.3em]{} }  \, S  \ottsym{(}  \ottnt{E}  [  f'_{{\mathrm{13}}}  ]  \ottsym{)}$.
      By inversion, $f_{{\mathrm{13}}} \,  \xrightarrow{ \mathmakebox[0.4em]{} S \mathmakebox[0.3em]{} }  \, f'_{{\mathrm{13}}}$.
      By case analysis on the structure of $\ottnt{E}$.

      \begin{caseanalysis}
        \case{$\ottnt{E}  \ottsym{=}  \left[ \, \right]$}
        Here, $f_{{\mathrm{11}}} \, f_{{\mathrm{12}}}  \ottsym{=}  f_{{\mathrm{13}}}$.
        By case analysis on the reduction rule applied to $f_{{\mathrm{13}}}$.

        \begin{caseanalysis}
         \case{\rnp{R\_Beta}}
         There exist $\ottnt{U_{{\mathrm{11}}}}$, $f_{{\mathrm{111}}}$, and $w_{{\mathrm{12}}}$ such that
         $f_{{\mathrm{11}}}  \ottsym{=}   \lambda  \ottmv{x} \!:\!  \ottnt{U_{{\mathrm{11}}}}  .\,  f_{{\mathrm{111}}} $, $f_{{\mathrm{12}}}  \ottsym{=}  w_{{\mathrm{12}}}$, and $f_{{\mathrm{2}}}  \ottsym{=}  f_{{\mathrm{111}}}  [  \ottmv{x}  \ottsym{:=}  w_{{\mathrm{12}}}  ]$.
         Here, $S  \ottsym{=}  [  ]$.
    
         By Lemma~\ref{lem:prec_catch_up_left_value},
         there exist $S'_{{\mathrm{0}}}$, $S'_{{\mathrm{11}}}$, and $w'_{{\mathrm{11}}}$ such that
         \begin{itemize}
          \item $f'_{{\mathrm{11}}} \,  \xmapsto{ \mathmakebox[0.4em]{} S'_{{\mathrm{11}}} \mathmakebox[0.3em]{} }\hspace{-0.4em}{}^\ast \hspace{0.2em}  \, w'_{{\mathrm{11}}}$,
          \item $ \langle   \emptyset    \vdash   f_{{\mathrm{11}}}  :  \ottnt{U_{{\mathrm{1}}}}  \!\rightarrow\!  \ottnt{U_{{\mathrm{2}}}}   \sqsubseteq _{  S'_{{\mathrm{0}}}  \circ  S_{{\mathrm{0}}}  }  S'_{{\mathrm{11}}}  \ottsym{(}  \ottnt{U'_{{\mathrm{1}}}}  \!\rightarrow\!  \ottnt{U'_{{\mathrm{2}}}}  \ottsym{)}  :  w'_{{\mathrm{11}}}  \dashv   \emptyset   \rangle $,
          \item $\forall \ottmv{X} \in \textit{dom} \, \ottsym{(}  S_{{\mathrm{0}}}  \ottsym{)}. S_{{\mathrm{0}}}  \ottsym{(}  \ottmv{X}  \ottsym{)}  \ottsym{=}   S'_{{\mathrm{0}}}  \circ   S_{{\mathrm{0}}}  \circ  S'_{{\mathrm{11}}}    \ottsym{(}  \ottmv{X}  \ottsym{)}$, and
          \item $\forall \ottmv{X} \in \textit{dom} \, \ottsym{(}  S'_{{\mathrm{11}}}  \ottsym{)}. \textit{ftv} \, \ottsym{(}  S'_{{\mathrm{11}}}  \ottsym{(}  \ottmv{X}  \ottsym{)}  \ottsym{)}  \subseteq  \textit{dom} \, \ottsym{(}  S'_{{\mathrm{0}}}  \ottsym{)}$.
         \end{itemize}
    
         So, $f'_{{\mathrm{11}}} \, f'_{{\mathrm{12}}} \,  \xmapsto{ \mathmakebox[0.4em]{} S'_{{\mathrm{11}}} \mathmakebox[0.3em]{} }\hspace{-0.4em}{}^\ast \hspace{0.2em}  \, w'_{{\mathrm{11}}} \, S'_{{\mathrm{11}}}  \ottsym{(}  f'_{{\mathrm{12}}}  \ottsym{)}$ by Lemma~\ref{lem:eval_subterm}.
         By Lemma~\ref{lem:right_subst_preserve_prec},
         $ \langle   \emptyset    \vdash   f_{{\mathrm{12}}}  :  \ottnt{U_{{\mathrm{2}}}}   \sqsubseteq _{  S'_{{\mathrm{0}}}  \circ  S_{{\mathrm{0}}}  }  S'_{{\mathrm{11}}}  \ottsym{(}  \ottnt{U'_{{\mathrm{2}}}}  \ottsym{)}  :  S'_{{\mathrm{11}}}  \ottsym{(}  f'_{{\mathrm{12}}}  \ottsym{)}  \dashv   \emptyset   \rangle $.
         By Lemma~\ref{lem:prec_catch_up_left_value},
         there exist $S''_{{\mathrm{0}}}$, $S'_{{\mathrm{12}}}$, and $w'_{{\mathrm{12}}}$ such that
         \begin{itemize}
          \item $S'_{{\mathrm{11}}}  \ottsym{(}  f'_{{\mathrm{12}}}  \ottsym{)} \,  \xmapsto{ \mathmakebox[0.4em]{} S'_{{\mathrm{12}}} \mathmakebox[0.3em]{} }\hspace{-0.4em}{}^\ast \hspace{0.2em}  \, w'_{{\mathrm{12}}}$,
          \item $ \langle   \emptyset    \vdash   f_{{\mathrm{12}}}  :  \ottnt{U_{{\mathrm{2}}}}   \sqsubseteq _{   S''_{{\mathrm{0}}}  \circ  S'_{{\mathrm{0}}}   \circ  S_{{\mathrm{0}}}  }  S'_{{\mathrm{12}}}  \ottsym{(}  \ottnt{U'_{{\mathrm{2}}}}  \ottsym{)}  :  w'_{{\mathrm{12}}}  \dashv   \emptyset   \rangle $,
          \item $\forall \ottmv{X} \in \textit{dom} \, \ottsym{(}   S'_{{\mathrm{0}}}  \circ  S_{{\mathrm{0}}}   \ottsym{)}.  S'_{{\mathrm{0}}}  \circ  S_{{\mathrm{0}}}   \ottsym{(}  \ottmv{X}  \ottsym{)}  \ottsym{=}   S''_{{\mathrm{0}}}  \circ    S'_{{\mathrm{0}}}  \circ  S_{{\mathrm{0}}}   \circ  S'_{{\mathrm{12}}}    \ottsym{(}  \ottmv{X}  \ottsym{)}$, and
          \item $\forall \ottmv{X} \in \textit{dom} \, \ottsym{(}  S'_{{\mathrm{12}}}  \ottsym{)}. \textit{ftv} \, \ottsym{(}  S'_{{\mathrm{12}}}  \ottsym{(}  \ottmv{X}  \ottsym{)}  \ottsym{)}  \subseteq  \textit{dom} \, \ottsym{(}  S''_{{\mathrm{0}}}  \ottsym{)}$.
         \end{itemize}
    
         So, $w'_{{\mathrm{11}}} \, S'_{{\mathrm{11}}}  \ottsym{(}  f'_{{\mathrm{12}}}  \ottsym{)} \,  \xmapsto{ \mathmakebox[0.4em]{} S'_{{\mathrm{12}}} \mathmakebox[0.3em]{} }\hspace{-0.4em}{}^\ast \hspace{0.2em}  \, S'_{{\mathrm{12}}}  \ottsym{(}  w'_{{\mathrm{11}}}  \ottsym{)} \, w'_{{\mathrm{12}}}$ by Lemma~\ref{lem:eval_subterm}.
         By Lemma~\ref{lem:right_subst_preserve_prec},
         $ \langle   \emptyset    \vdash   f_{{\mathrm{11}}}  :  \ottnt{U_{{\mathrm{1}}}}  \!\rightarrow\!  \ottnt{U_{{\mathrm{2}}}}   \sqsubseteq _{   S''_{{\mathrm{0}}}  \circ  S'_{{\mathrm{0}}}   \circ  S_{{\mathrm{0}}}  }   S'_{{\mathrm{12}}}  \circ  S'_{{\mathrm{11}}}   \ottsym{(}  \ottnt{U'_{{\mathrm{1}}}}  \!\rightarrow\!  \ottnt{U'_{{\mathrm{2}}}}  \ottsym{)}  :  S'_{{\mathrm{12}}}  \ottsym{(}  w'_{{\mathrm{11}}}  \ottsym{)}  \dashv   \emptyset   \rangle $.
         By Lemma~\ref{lem:term_prec_simulation_app},
         there exist $S'''_{{\mathrm{0}}}$, $S'$, and $f'_{{\mathrm{2}}}$ such that
         \begin{itemize}
          \item $S'_{{\mathrm{12}}}  \ottsym{(}  w'_{{\mathrm{11}}}  \ottsym{)} \, w'_{{\mathrm{12}}} \,  \xmapsto{ \mathmakebox[0.4em]{} S' \mathmakebox[0.3em]{} }\hspace{-0.4em}{}^\ast \hspace{0.2em}  \, f'_{{\mathrm{2}}}$ where \rnp{E\_Abort} is not applied,
          \item $ \langle   \emptyset    \vdash   f_{{\mathrm{2}}}  :  \ottnt{U_{{\mathrm{12}}}}   \sqsubseteq _{    S'''_{{\mathrm{0}}}  \circ  S''_{{\mathrm{0}}}   \circ  S'_{{\mathrm{0}}}   \circ  S_{{\mathrm{0}}}  }   S'  \circ   S'_{{\mathrm{12}}}  \circ  S'_{{\mathrm{11}}}    \ottsym{(}  \ottnt{U'_{{\mathrm{2}}}}  \ottsym{)}  :  f'_{{\mathrm{2}}}  \dashv   \emptyset   \rangle $,
          \item $\forall \ottmv{X} \in \textit{dom} \, \ottsym{(}    S''_{{\mathrm{0}}}  \circ  S'_{{\mathrm{0}}}   \circ  S_{{\mathrm{0}}}   \ottsym{)}.   S''_{{\mathrm{0}}}  \circ  S'_{{\mathrm{0}}}   \circ  S_{{\mathrm{0}}}   \ottsym{(}  \ottmv{X}  \ottsym{)}  \ottsym{=}   S'''_{{\mathrm{0}}}  \circ     S''_{{\mathrm{0}}}  \circ  S'_{{\mathrm{0}}}   \circ  S_{{\mathrm{0}}}   \circ  S'    \ottsym{(}  \ottmv{X}  \ottsym{)}$, and
          \item $\forall \ottmv{X} \in \textit{dom} \, \ottsym{(}  S'  \ottsym{)}. \textit{ftv} \, \ottsym{(}  S'  \ottsym{(}  \ottmv{X}  \ottsym{)}  \ottsym{)}  \subseteq  \textit{dom} \, \ottsym{(}  S'''_{{\mathrm{0}}}  \ottsym{)}$.
         \end{itemize}
    
         So, $S'_{{\mathrm{12}}}  \ottsym{(}  w'_{{\mathrm{11}}}  \ottsym{)} \, w'_{{\mathrm{12}}} \,  \xmapsto{ \mathmakebox[0.4em]{} S' \mathmakebox[0.3em]{} }  \, f'_{{\mathrm{2}}}$.
         Finally,
         \begin{itemize}
          \item $f'_{{\mathrm{11}}} \, f'_{{\mathrm{12}}} \,  \xmapsto{ \mathmakebox[0.4em]{}   S'  \circ  S'_{{\mathrm{12}}}   \circ  S'_{{\mathrm{11}}}  \mathmakebox[0.3em]{} }\hspace{-0.4em}{}^\ast \hspace{0.2em}  \, f'_{{\mathrm{2}}}$,
          \item $ \langle   \emptyset    \vdash   f_{{\mathrm{2}}}  :  \ottnt{U_{{\mathrm{12}}}}   \sqsubseteq _{    S'''_{{\mathrm{0}}}  \circ  S''_{{\mathrm{0}}}   \circ  S'_{{\mathrm{0}}}   \circ  S_{{\mathrm{0}}}  }   S'  \circ   S'_{{\mathrm{12}}}  \circ  S'_{{\mathrm{11}}}    \ottsym{(}  \ottnt{U'_{{\mathrm{2}}}}  \ottsym{)}  :  f'_{{\mathrm{2}}}  \dashv   \emptyset   \rangle $,
          \item $\forall \ottmv{X} \in \textit{dom} \, \ottsym{(}  S_{{\mathrm{0}}}  \ottsym{)}. S_{{\mathrm{0}}}  \ottsym{(}  \ottmv{X}  \ottsym{)}  \ottsym{=}   S'''_{{\mathrm{0}}}  \circ       S''_{{\mathrm{0}}}  \circ  S'_{{\mathrm{0}}}   \circ  S_{{\mathrm{0}}}   \circ  S'   \circ  S'_{{\mathrm{12}}}   \circ  S'_{{\mathrm{11}}}    \ottsym{(}  \ottmv{X}  \ottsym{)}$, and
          \item $\forall \ottmv{X} \in \textit{dom} \, \ottsym{(}    S'  \circ  S'_{{\mathrm{12}}}   \circ  S'_{{\mathrm{11}}}   \ottsym{)}. \textit{ftv} \, \ottsym{(}   S'  \circ   S'_{{\mathrm{12}}}  \circ  S'_{{\mathrm{11}}}    \ottsym{(}  \ottmv{X}  \ottsym{)}  \ottsym{)}  \subseteq  \textit{dom} \, \ottsym{(}    S'''_{{\mathrm{0}}}  \circ  S''_{{\mathrm{0}}}   \circ  S'_{{\mathrm{0}}}   \ottsym{)}$.
         \end{itemize}
    
         \case{\rnp{R\_AppCast}}
         We are given $f_{{\mathrm{11}}}  \ottsym{=}  w_{{\mathrm{11}}}  \ottsym{:}   \ottnt{U_{{\mathrm{11}}}}  \!\rightarrow\!  \ottnt{U_{{\mathrm{12}}}} \Rightarrow  \unskip ^ { \ell_{{\mathrm{1}}} }  \! \ottnt{U_{{\mathrm{13}}}}  \!\rightarrow\!  \ottnt{U_{{\mathrm{14}}}} $, $f_{{\mathrm{12}}}  \ottsym{=}  w_{{\mathrm{12}}}$,
         $f_{{\mathrm{2}}}  \ottsym{=}  \ottsym{(}  w_{{\mathrm{11}}} \, \ottsym{(}  w_{{\mathrm{12}}}  \ottsym{:}   \ottnt{U_{{\mathrm{13}}}} \Rightarrow  \unskip ^ {  \bar{ \ell_{{\mathrm{1}}} }  }  \! \ottnt{U_{{\mathrm{11}}}}   \ottsym{)}  \ottsym{)}  \ottsym{:}   \ottnt{U_{{\mathrm{12}}}} \Rightarrow  \unskip ^ { \ell_{{\mathrm{1}}} }  \! \ottnt{U_{{\mathrm{14}}}} $ for some
         $w_{{\mathrm{11}}}$, $w_{{\mathrm{12}}}$, $\ottnt{U_{{\mathrm{11}}}}$, $\ottnt{U_{{\mathrm{12}}}}$,
         $\ottnt{U_{{\mathrm{13}}}}$, $\ottnt{U_{{\mathrm{14}}}}$, and $\ell_{{\mathrm{1}}}$.
         Here, $S  \ottsym{=}  [  ]$.
    
         By Lemma~\ref{lem:prec_catch_up_left_value},
         there exist $S'_{{\mathrm{0}}}$, $S'_{{\mathrm{11}}}$, and $w'_{{\mathrm{11}}}$ such that
         \begin{itemize}
          \item $f'_{{\mathrm{11}}} \,  \xmapsto{ \mathmakebox[0.4em]{} S'_{{\mathrm{11}}} \mathmakebox[0.3em]{} }\hspace{-0.4em}{}^\ast \hspace{0.2em}  \, w'_{{\mathrm{11}}}$,
          \item $ \langle   \emptyset    \vdash   f_{{\mathrm{11}}}  :  \ottnt{U_{{\mathrm{1}}}}  \!\rightarrow\!  \ottnt{U_{{\mathrm{2}}}}   \sqsubseteq _{  S'_{{\mathrm{0}}}  \circ  S_{{\mathrm{0}}}  }  S'_{{\mathrm{11}}}  \ottsym{(}  \ottnt{U'_{{\mathrm{1}}}}  \!\rightarrow\!  \ottnt{U'_{{\mathrm{2}}}}  \ottsym{)}  :  w'_{{\mathrm{11}}}  \dashv   \emptyset   \rangle $,
          \item $\forall \ottmv{X} \in \textit{dom} \, \ottsym{(}  S_{{\mathrm{0}}}  \ottsym{)}. S_{{\mathrm{0}}}  \ottsym{(}  \ottmv{X}  \ottsym{)}  \ottsym{=}   S'_{{\mathrm{0}}}  \circ   S_{{\mathrm{0}}}  \circ  S'_{{\mathrm{11}}}    \ottsym{(}  \ottmv{X}  \ottsym{)}$, and
          \item $\forall \ottmv{X} \in \textit{dom} \, \ottsym{(}  S'_{{\mathrm{11}}}  \ottsym{)}. \textit{ftv} \, \ottsym{(}  S'_{{\mathrm{11}}}  \ottsym{(}  \ottmv{X}  \ottsym{)}  \ottsym{)}  \subseteq  \textit{dom} \, \ottsym{(}  S'_{{\mathrm{0}}}  \ottsym{)}$.
         \end{itemize}
         
         So, $f'_{{\mathrm{11}}} \, f'_{{\mathrm{12}}} \,  \xmapsto{ \mathmakebox[0.4em]{} S'_{{\mathrm{11}}} \mathmakebox[0.3em]{} }\hspace{-0.4em}{}^\ast \hspace{0.2em}  \, w'_{{\mathrm{11}}} \, S'_{{\mathrm{11}}}  \ottsym{(}  f'_{{\mathrm{12}}}  \ottsym{)}$ by Lemma~\ref{lem:eval_subterm}.
         By Lemma~\ref{lem:right_subst_preserve_prec},
         $ \langle   \emptyset    \vdash   f_{{\mathrm{12}}}  :  \ottnt{U_{{\mathrm{2}}}}   \sqsubseteq _{  S'_{{\mathrm{0}}}  \circ  S_{{\mathrm{0}}}  }  S'_{{\mathrm{11}}}  \ottsym{(}  \ottnt{U'_{{\mathrm{2}}}}  \ottsym{)}  :  S'_{{\mathrm{11}}}  \ottsym{(}  f'_{{\mathrm{12}}}  \ottsym{)}  \dashv   \emptyset   \rangle $.
         By Lemma~\ref{lem:prec_catch_up_left_value},
         there exist $S''_{{\mathrm{0}}}$, $S'_{{\mathrm{12}}}$, and $w'_{{\mathrm{12}}}$ such that
         \begin{itemize}
          \item $f'_{{\mathrm{12}}} \,  \xmapsto{ \mathmakebox[0.4em]{} S'_{{\mathrm{12}}} \mathmakebox[0.3em]{} }\hspace{-0.4em}{}^\ast \hspace{0.2em}  \, w'_{{\mathrm{12}}}$,
          \item $ \langle   \emptyset    \vdash   f_{{\mathrm{12}}}  :  \ottnt{U_{{\mathrm{2}}}}   \sqsubseteq _{   S''_{{\mathrm{0}}}  \circ  S'_{{\mathrm{0}}}   \circ  S_{{\mathrm{0}}}  }  S'_{{\mathrm{12}}}  \ottsym{(}  \ottnt{U'_{{\mathrm{2}}}}  \ottsym{)}  :  w'_{{\mathrm{12}}}  \dashv   \emptyset   \rangle $,
          \item $\forall \ottmv{X} \in \textit{dom} \, \ottsym{(}   S'_{{\mathrm{0}}}  \circ  S_{{\mathrm{0}}}   \ottsym{)}.  S'_{{\mathrm{0}}}  \circ  S_{{\mathrm{0}}}   \ottsym{(}  \ottmv{X}  \ottsym{)}  \ottsym{=}   S''_{{\mathrm{0}}}  \circ    S'_{{\mathrm{0}}}  \circ  S_{{\mathrm{0}}}   \circ  S'_{{\mathrm{12}}}    \ottsym{(}  \ottmv{X}  \ottsym{)}$, and
          \item $\forall \ottmv{X} \in \textit{dom} \, \ottsym{(}  S'_{{\mathrm{12}}}  \ottsym{)}. \textit{ftv} \, \ottsym{(}  S'_{{\mathrm{12}}}  \ottsym{(}  \ottmv{X}  \ottsym{)}  \ottsym{)}  \subseteq  \textit{dom} \, \ottsym{(}  S''_{{\mathrm{0}}}  \ottsym{)}$.
         \end{itemize}
         
         So, $w'_{{\mathrm{11}}} \, S'_{{\mathrm{11}}}  \ottsym{(}  f'_{{\mathrm{12}}}  \ottsym{)} \,  \xmapsto{ \mathmakebox[0.4em]{} S'_{{\mathrm{12}}} \mathmakebox[0.3em]{} }\hspace{-0.4em}{}^\ast \hspace{0.2em}  \, S'_{{\mathrm{12}}}  \ottsym{(}  w'_{{\mathrm{11}}}  \ottsym{)} \, w'_{{\mathrm{12}}}$ by Lemma~\ref{lem:eval_subterm}.
    
         By Lemma~\ref{lem:right_subst_preserve_prec},
         $ \langle   \emptyset    \vdash   f_{{\mathrm{11}}}  :  \ottnt{U_{{\mathrm{1}}}}  \!\rightarrow\!  \ottnt{U_{{\mathrm{2}}}}   \sqsubseteq _{   S''_{{\mathrm{0}}}  \circ  S'_{{\mathrm{0}}}   \circ  S_{{\mathrm{0}}}  }   S'_{{\mathrm{12}}}  \circ  S'_{{\mathrm{11}}}   \ottsym{(}  \ottnt{U'_{{\mathrm{1}}}}  \!\rightarrow\!  \ottnt{U'_{{\mathrm{2}}}}  \ottsym{)}  :  S'_{{\mathrm{12}}}  \ottsym{(}  w'_{{\mathrm{11}}}  \ottsym{)}  \dashv   \emptyset   \rangle $.
         By Lemma~\ref{lem:term_prec_simulation_unwrap},
         there exist $S'''_{{\mathrm{0}}}$, $S'$, and $f'_{{\mathrm{2}}}$ such that
         \begin{itemize}
          \item $S'_{{\mathrm{12}}}  \ottsym{(}  w'_{{\mathrm{11}}}  \ottsym{)} \, w'_{{\mathrm{12}}} \,  \xmapsto{ \mathmakebox[0.4em]{} S' \mathmakebox[0.3em]{} }\hspace{-0.4em}{}^\ast \hspace{0.2em}  \, f'_{{\mathrm{2}}}$ where \rnp{E\_Abort} is not applied,
          \item $ \langle   \emptyset    \vdash   f_{{\mathrm{2}}}  :  \ottnt{U}   \sqsubseteq _{    S'''_{{\mathrm{0}}}  \circ  S''_{{\mathrm{0}}}   \circ  S'_{{\mathrm{0}}}   \circ  S_{{\mathrm{0}}}  }  \ottnt{U'}  :  f'_{{\mathrm{2}}}  \dashv   \emptyset   \rangle $,
          \item $\forall \ottmv{X} \in \textit{dom} \, \ottsym{(}    S''_{{\mathrm{0}}}  \circ  S'_{{\mathrm{0}}}   \circ  S_{{\mathrm{0}}}   \ottsym{)}.   S''_{{\mathrm{0}}}  \circ  S'_{{\mathrm{0}}}   \circ  S_{{\mathrm{0}}}   \ottsym{(}  \ottmv{X}  \ottsym{)}  \ottsym{=}   S'''_{{\mathrm{0}}}  \circ     S''_{{\mathrm{0}}}  \circ  S'_{{\mathrm{0}}}   \circ  S_{{\mathrm{0}}}   \circ  S'    \ottsym{(}  \ottmv{X}  \ottsym{)}$, and
          \item $\forall \ottmv{X} \in \textit{dom} \, \ottsym{(}  S'  \ottsym{)}. \textit{ftv} \, \ottsym{(}  S'  \ottsym{(}  \ottmv{X}  \ottsym{)}  \ottsym{)}  \subseteq  \textit{dom} \, \ottsym{(}  S'''_{{\mathrm{0}}}  \ottsym{)}$.
         \end{itemize}
    
         Finally,
         \begin{itemize}
          \item $f'_{{\mathrm{11}}} \, f'_{{\mathrm{12}}} \,  \xmapsto{ \mathmakebox[0.4em]{}   S'  \circ  S'_{{\mathrm{12}}}   \circ  S'_{{\mathrm{11}}}  \mathmakebox[0.3em]{} }\hspace{-0.4em}{}^\ast \hspace{0.2em}  \, f'_{{\mathrm{2}}}$,
          \item $ \langle   \emptyset    \vdash   f_{{\mathrm{2}}}  :  \ottnt{U}   \sqsubseteq _{    S'''_{{\mathrm{0}}}  \circ  S''_{{\mathrm{0}}}   \circ  S'_{{\mathrm{0}}}   \circ  S_{{\mathrm{0}}}  }  \ottnt{U'}  :  f'_{{\mathrm{2}}}  \dashv   \emptyset   \rangle $, and
          \item $\forall \ottmv{X} \in \textit{dom} \, \ottsym{(}  S_{{\mathrm{0}}}  \ottsym{)}. S_{{\mathrm{0}}}  \ottsym{(}  \ottmv{X}  \ottsym{)}  \ottsym{=}   S'''_{{\mathrm{0}}}  \circ       S''_{{\mathrm{0}}}  \circ  S'_{{\mathrm{0}}}   \circ  S_{{\mathrm{0}}}   \circ  S'   \circ  S'_{{\mathrm{12}}}   \circ  S'_{{\mathrm{11}}}    \ottsym{(}  \ottmv{X}  \ottsym{)}$, and
          \item $\forall \ottmv{X} \in \textit{dom} \, \ottsym{(}    S'  \circ  S'_{{\mathrm{12}}}   \circ  S'_{{\mathrm{11}}}   \ottsym{)}. \textit{ftv} \, \ottsym{(}   S'  \circ   S'_{{\mathrm{12}}}  \circ  S'_{{\mathrm{11}}}    \ottsym{(}  \ottmv{X}  \ottsym{)}  \ottsym{)}  \subseteq  \textit{dom} \, \ottsym{(}    S'''_{{\mathrm{0}}}  \circ  S''_{{\mathrm{0}}}   \circ  S'_{{\mathrm{0}}}   \ottsym{)}$.
         \end{itemize}

          \otherwise Contradiction.
        \end{caseanalysis}

        \case{$\ottnt{E}  \ottsym{=}  \ottnt{E_{{\mathrm{1}}}} \, f_{{\mathrm{12}}}$ for some $\ottnt{E_{{\mathrm{1}}}}$}
        Similar to the case of \rnp{P\_Op}.

        \case{$\ottnt{E}  \ottsym{=}  w_{{\mathrm{11}}} \, \ottnt{E_{{\mathrm{1}}}}$ for some $\ottnt{E_{{\mathrm{1}}}}$ and $w_{{\mathrm{11}}}$ where $f_{{\mathrm{11}}}  \ottsym{=}  w_{{\mathrm{11}}}$}
        Similar to the case of \rnp{P\_Op}.

        \otherwise
        Contradiction.
      \end{caseanalysis}

      \case{\rnp{E\_Abort}}
      Here, $f_{{\mathrm{11}}} \, f_{{\mathrm{12}}} \,  \xmapsto{ \mathmakebox[0.4em]{} [  ] \mathmakebox[0.3em]{} }  \, \textsf{\textup{blame}\relax} \, \ell$ for some $\ell$.
      We finish by \rnp{P\_Blame}.
    \end{caseanalysis}

    \case{\rnp{P\_LetP}}
    We are given $f_{{\mathrm{1}}}  \ottsym{=}   \textsf{\textup{let}\relax} \,  \ottmv{x}  =   \Lambda    \overrightarrow{ \ottmv{X_{\ottmv{i}}} }  .\,  w_{{\mathrm{11}}}   \textsf{\textup{ in }\relax}  f_{{\mathrm{12}}} $ and
    $f'_{{\mathrm{1}}}  \ottsym{=}   \textsf{\textup{let}\relax} \,  \ottmv{x}  =   \Lambda    \overrightarrow{ \ottmv{X'_{\ottmv{j}}} }  .\,  w'_{{\mathrm{11}}}   \textsf{\textup{ in }\relax}  f'_{{\mathrm{12}}} $
    for some $ \overrightarrow{ \ottmv{X_{\ottmv{i}}} } $, $ \overrightarrow{ \ottmv{X'_{\ottmv{j}}} } $, $w_{{\mathrm{11}}}$, $w'_{{\mathrm{11}}}$,
    $f_{{\mathrm{12}}}$, and $f'_{{\mathrm{12}}}$.
    By inversion,
    \begin{itemize}
     \item $ \langle   \emptyset    \vdash   w_{{\mathrm{11}}}  :  \ottnt{U_{{\mathrm{1}}}}   \sqsubseteq _{  [   \overrightarrow{ \ottmv{X'_{\ottmv{j}}} }   :=   \overrightarrow{ \ottnt{T'_{\ottmv{j}}} }   ]  \uplus  S_{{\mathrm{0}}}  }  \ottnt{U'_{{\mathrm{1}}}}  :  w'_{{\mathrm{11}}}  \dashv   \emptyset   \rangle $,
     \item $ \langle   \ottmv{x}  :  \forall \,  \overrightarrow{ \ottmv{X_{\ottmv{i}}} }   \ottsym{.}  \ottnt{U_{{\mathrm{1}}}}    \vdash   f_{{\mathrm{12}}}  :  \ottnt{U}   \sqsubseteq _{ S_{{\mathrm{0}}} }  \ottnt{U'}  :  f'_{{\mathrm{12}}}  \dashv   \ottmv{x'}  :  \forall \,  \overrightarrow{ \ottmv{X'_{\ottmv{j}}} }   \ottsym{.}  \ottnt{U'_{{\mathrm{1}}}}   \rangle $,
     \item for any $\ottmv{X} \, \in \, \textit{dom} \, \ottsym{(}  S_{{\mathrm{0}}}  \ottsym{)}$, $ \overrightarrow{ \ottmv{X_{\ottmv{i}}} }   \cap  \textit{ftv} \, \ottsym{(}  S  \ottsym{(}  \ottmv{X}  \ottsym{)}  \ottsym{)}  \ottsym{=}   \emptyset $, and
     \item $\textit{ftv} \, \ottsym{(}  \Gamma  \ottsym{)}  \cap   \overrightarrow{ \ottmv{X_{\ottmv{i}}} }   \ottsym{=}   \emptyset $ and $\textit{ftv} \, \ottsym{(}  \Gamma'  \ottsym{)}  \cap   \overrightarrow{ \ottmv{X'_{\ottmv{j}}} }   \ottsym{=}   \emptyset $.
    \end{itemize}
    By case analysis on the evaluation rule applied to $f_{{\mathrm{1}}}$.

    \begin{caseanalysis}
      \case{\rnp{E\_Step}}
      There exists $\ottnt{E}$ and $f_{{\mathrm{13}}}$ such that $\ottnt{E}  [  f_{{\mathrm{13}}}  ] \,  \xmapsto{ \mathmakebox[0.4em]{} S \mathmakebox[0.3em]{} }  \, S  \ottsym{(}  \ottnt{E}  [  f'_{{\mathrm{13}}}  ]  \ottsym{)}$.
      By inversion, $f_{{\mathrm{13}}} \,  \xrightarrow{ \mathmakebox[0.4em]{} S \mathmakebox[0.3em]{} }  \, f'_{{\mathrm{13}}}$.
      By case analysis on the structure of $\ottnt{E}$.

      \begin{caseanalysis}
        \case{$\ottnt{E}  \ottsym{=}  \left[ \, \right]$}
        Here, $f_{{\mathrm{1}}}  \ottsym{=}  f_{{\mathrm{13}}}$.
        By case analysis on the reduction rule applied to $f_{{\mathrm{13}}}$.

        \begin{caseanalysis}
          \case{\rnp{R\_LetP}}
          We are given $ \textsf{\textup{let}\relax} \,  \ottmv{x}  =   \Lambda    \overrightarrow{ \ottmv{X_{\ottmv{i}}} }  .\,  w_{{\mathrm{11}}}   \textsf{\textup{ in }\relax}  f_{{\mathrm{12}}}  \,  \xmapsto{ \mathmakebox[0.4em]{} [  ] \mathmakebox[0.3em]{} }  \, f_{{\mathrm{12}}}  [  \ottmv{x}  \ottsym{:=}   \Lambda    \overrightarrow{ \ottmv{X_{\ottmv{i}}} }  .\,  w_{{\mathrm{11}}}   ]$.
          By \rnp{R\_LetP},
          $ \textsf{\textup{let}\relax} \,  \ottmv{x}  =   \Lambda    \overrightarrow{ \ottmv{X'_{\ottmv{j}}} }  .\,  w'_{{\mathrm{11}}}   \textsf{\textup{ in }\relax}  f'_{{\mathrm{12}}}  \,  \xmapsto{ \mathmakebox[0.4em]{} [  ] \mathmakebox[0.3em]{} }  \, f'_{{\mathrm{12}}}  [  \ottmv{x}  \ottsym{:=}   \Lambda    \overrightarrow{ \ottmv{X'_{\ottmv{j}}} }  .\,  w'_{{\mathrm{11}}}   ]$.
          Without loss of generality, we can suppose that $ \overrightarrow{ \ottmv{X_{\ottmv{i}}} } $ and $ \overrightarrow{ \ottmv{X'_{\ottmv{j}}} } $
          do not occur in the derivation of
          $ \langle   \ottmv{x}  :  \forall \,  \overrightarrow{ \ottmv{X_{\ottmv{i}}} }   \ottsym{.}  \ottnt{U_{{\mathrm{1}}}}    \vdash   f_{{\mathrm{12}}}  :  \ottnt{U}   \sqsubseteq _{ S_{{\mathrm{0}}} }  \ottnt{U'}  :  f'_{{\mathrm{12}}}  \dashv   \ottmv{x'}  :  \forall \,  \overrightarrow{ \ottmv{X'_{\ottmv{j}}} }   \ottsym{.}  \ottnt{U'_{{\mathrm{1}}}}   \rangle $.
          We finish by Lemma~\ref{lem:subst_preserve_term_prec}.

          \otherwise Cannot happen.
        \end{caseanalysis}

        \otherwise Cannot happen.
      \end{caseanalysis}

      \case{\rnp{E\_Abort}}
      Cannot happen.
    \end{caseanalysis}

    \case{\rnp{P\_Cast}}
    We are given $f_{{\mathrm{1}}}  \ottsym{=}  \ottsym{(}  f_{{\mathrm{11}}}  \ottsym{:}   \ottnt{U_{{\mathrm{1}}}} \Rightarrow  \unskip ^ { \ell }  \! \ottnt{U}   \ottsym{)}$ and $f'_{{\mathrm{1}}}  \ottsym{=}  \ottsym{(}  f'_{{\mathrm{11}}}  \ottsym{:}   \ottnt{U'_{{\mathrm{1}}}} \Rightarrow  \unskip ^ { \ell' }  \! \ottnt{U'}   \ottsym{)}$
    for some $f_{{\mathrm{11}}}$, $f'_{{\mathrm{11}}}$, $\ottnt{U_{{\mathrm{1}}}}$, $\ottnt{U'_{{\mathrm{1}}}}$, $\ell$, and $\ell'$.

    By inversion,
    $ \langle   \emptyset    \vdash   f_{{\mathrm{11}}}  :  \ottnt{U_{{\mathrm{1}}}}   \sqsubseteq _{ S_{{\mathrm{0}}} }  \ottnt{U'_{{\mathrm{1}}}}  :  f'_{{\mathrm{11}}}  \dashv   \emptyset   \rangle $,
    $ \ottnt{U}   \sqsubseteq _{ S_{{\mathrm{0}}} }  \ottnt{U'} $, $\ottnt{U_{{\mathrm{1}}}}  \sim  \ottnt{U}$, and $\ottnt{U'_{{\mathrm{1}}}}  \sim  \ottnt{U'}$.

    By case analysis on the evaluation rule applied to $f_{{\mathrm{1}}}$.

    \begin{caseanalysis}
      \case{\rnp{E\_Step}}
      There exist $\ottnt{E}$ and $f_{{\mathrm{12}}}$ such that $\ottnt{E}  [  f_{{\mathrm{12}}}  ] \,  \xmapsto{ \mathmakebox[0.4em]{} S \mathmakebox[0.3em]{} }  \, S  \ottsym{(}  \ottnt{E}  [  f'_{{\mathrm{12}}}  ]  \ottsym{)}$.
      By inversion, $f_{{\mathrm{12}}} \,  \xrightarrow{ \mathmakebox[0.4em]{} S \mathmakebox[0.3em]{} }  \, f'_{{\mathrm{12}}}$.
      By case analysis on the structure of $\ottnt{E}$.

      \begin{caseanalysis}
        \case{$\ottnt{E}  \ottsym{=}  \left[ \, \right]$}
        Here, $f_{{\mathrm{11}}}  \ottsym{:}   \ottnt{U_{{\mathrm{1}}}} \Rightarrow  \unskip ^ { \ell }  \! \ottnt{U}   \ottsym{=}  f_{{\mathrm{12}}}$.
        By case analysis on the reduction rule applied to $f_{{\mathrm{12}}}$.

        \begin{caseanalysis}
          \case{\rnp{R\_IdBase}}
          There exist $w_{{\mathrm{11}}}$ and $\iota$ such that
          $f_{{\mathrm{11}}}  \ottsym{=}  w_{{\mathrm{11}}}$, $\ottnt{U_{{\mathrm{1}}}}  \ottsym{=}  \iota$ and $\ottnt{U}  \ottsym{=}  \iota$.
          We have $w_{{\mathrm{11}}}  \ottsym{:}   \iota \Rightarrow  \unskip ^ { \ell }  \! \iota  \,  \xmapsto{ \mathmakebox[0.4em]{} [  ] \mathmakebox[0.3em]{} }  \, w_{{\mathrm{11}}}$.
          By Lemma~\ref{lem:term_prec_to_type_prec},
          $ \iota   \sqsubseteq _{ S_{{\mathrm{0}}} }  \ottnt{U'} $.
          By \rnp{P\_CastR},
          $ \langle   \emptyset    \vdash   w_{{\mathrm{11}}}  :  \iota   \sqsubseteq _{ S_{{\mathrm{0}}} }  \ottnt{U'}  :  f'_{{\mathrm{11}}}  \ottsym{:}   \ottnt{U'_{{\mathrm{1}}}} \Rightarrow  \unskip ^ { \ell' }  \! \ottnt{U'}   \dashv   \emptyset   \rangle $.

          \case{\rnp{R\_IdStar}}
          Similar to the case of \rnp{R\_IdBase}.

          \case{\rnp{R\_Succeed}}
          There exist $w_{{\mathrm{111}}}$ and $\ottnt{G}$ such that
          $f_{{\mathrm{1}}}  \ottsym{=}  \ottsym{(}  w_{{\mathrm{111}}}  \ottsym{:}   \ottnt{G} \Rightarrow  \unskip ^ { \ell_{{\mathrm{1}}} }  \!  \star \Rightarrow  \unskip ^ { \ell }  \! \ottnt{G}    \ottsym{)}$, $\ottnt{U_{{\mathrm{1}}}}  \ottsym{=}  \star$ and $\ottnt{U}  \ottsym{=}  \ottnt{G}$.
          We have $w_{{\mathrm{111}}}  \ottsym{:}   \ottnt{G} \Rightarrow  \unskip ^ { \ell_{{\mathrm{1}}} }  \!  \star \Rightarrow  \unskip ^ { \ell }  \! \ottnt{G}   \,  \xmapsto{ \mathmakebox[0.4em]{} [  ] \mathmakebox[0.3em]{} }  \, w_{{\mathrm{111}}}$.
          Here, $ \langle   \emptyset    \vdash   w_{{\mathrm{111}}}  \ottsym{:}   \ottnt{G} \Rightarrow  \unskip ^ { \ell_{{\mathrm{1}}} }  \! \star   :  \star   \sqsubseteq _{ S_{{\mathrm{0}}} }  \ottnt{U'_{{\mathrm{1}}}}  :  f'_{{\mathrm{11}}}  \dashv   \emptyset   \rangle $.
          By Lemma~\ref{lem:term_prec_inversion3},
          $ \langle   \emptyset    \vdash   w_{{\mathrm{111}}}  :  \ottnt{G}   \sqsubseteq _{ S_{{\mathrm{0}}} }  \ottnt{U'_{{\mathrm{1}}}}  :  f'_{{\mathrm{11}}}  \dashv   \emptyset   \rangle $.
          By Lemma~\ref{lem:term_prec_to_type_prec},
          $ \ottnt{G}   \sqsubseteq _{ S_{{\mathrm{0}}} }  \ottnt{U'} $.
          By \rnp{P\_CastR},
          $ \langle   \emptyset    \vdash   w_{{\mathrm{111}}}  :  \ottnt{G}   \sqsubseteq _{ S_{{\mathrm{0}}} }  \ottnt{U'}  :  f'_{{\mathrm{11}}}  \ottsym{:}   \ottnt{U'_{{\mathrm{1}}}} \Rightarrow  \unskip ^ { \ell' }  \! \ottnt{U'}   \dashv   \emptyset   \rangle $.

          \case{\rnp{R\_Fail}}
          There exist $w_{{\mathrm{111}}}$, $\ottnt{G}$, $\ottnt{G'}$ such that
          $f_{{\mathrm{1}}}  \ottsym{=}  \ottsym{(}  w_{{\mathrm{111}}}  \ottsym{:}   \ottnt{G'} \Rightarrow  \unskip ^ { \ell_{{\mathrm{1}}} }  \!  \star \Rightarrow  \unskip ^ { \ell }  \! \ottnt{G}    \ottsym{)}$, $\ottnt{U_{{\mathrm{1}}}}  \ottsym{=}  \star$ and $\ottnt{U}  \ottsym{=}  \ottnt{G}$.
          We have $w_{{\mathrm{111}}}  \ottsym{:}   \ottnt{G'} \Rightarrow  \unskip ^ { \ell_{{\mathrm{1}}} }  \!  \star \Rightarrow  \unskip ^ { \ell }  \! \ottnt{G}   \,  \xmapsto{ \mathmakebox[0.4em]{} [  ] \mathmakebox[0.3em]{} }  \, \textsf{\textup{blame}\relax} \, \ell$.
          We finish by \rnp{P\_Blame} with Lemma~\ref{lem:term_prec_to_typing}.

          \case{\rnp{R\_Ground}}
          There exist $w_{{\mathrm{11}}}$ and $\ottnt{G}$ such that
          $f_{{\mathrm{11}}}  \ottsym{=}  w_{{\mathrm{11}}}$, $\ottnt{U}  \ottsym{=}  \star$, and $\ottnt{U_{{\mathrm{1}}}}  \sim  \ottnt{G}$.
          We have $w_{{\mathrm{11}}}  \ottsym{:}   \ottnt{U_{{\mathrm{1}}}} \Rightarrow  \unskip ^ { \ell }  \! \star  \,  \xmapsto{ \mathmakebox[0.4em]{} [  ] \mathmakebox[0.3em]{} }  \, w_{{\mathrm{11}}}  \ottsym{:}   \ottnt{U_{{\mathrm{1}}}} \Rightarrow  \unskip ^ { \ell }  \!  \ottnt{G} \Rightarrow  \unskip ^ { \ell }  \! \star  $ and
          $\ottnt{U_{{\mathrm{1}}}}  \neq  \star$ and $\ottnt{U_{{\mathrm{1}}}}  \neq  \ottnt{G}$.
          Since $ \ottnt{U}   \sqsubseteq _{ S_{{\mathrm{0}}} }  \ottnt{U'} $ and $\ottnt{U}  \ottsym{=}  \star$, we have $\ottnt{U'}  \ottsym{=}  \star$.
          Since $ \ottnt{G}   \sqsubseteq _{ S_{{\mathrm{0}}} }  \star $, we have
          \[
           \langle   \emptyset    \vdash   w_{{\mathrm{11}}}  \ottsym{:}   \ottnt{U_{{\mathrm{1}}}} \Rightarrow  \unskip ^ { \ell_{{\mathrm{1}}} }  \! \ottnt{G}   :  \ottnt{G}   \sqsubseteq _{ S_{{\mathrm{0}}} }  \star  :  f'_{{\mathrm{11}}}  \ottsym{:}   \ottnt{U'_{{\mathrm{1}}}} \Rightarrow  \unskip ^ { \ell' }  \! \star   \dashv   \emptyset   \rangle 
          \]
          by \rnp{P\_Cast}.
          We finish by \rnp{P\_CastL}.

          \case{\rnp{R\_Expand}}
          There exist $w_{{\mathrm{11}}}$ and $\ottnt{G}$ such that
          $f_{{\mathrm{11}}}  \ottsym{=}  w_{{\mathrm{11}}}$, $\ottnt{U_{{\mathrm{1}}}}  \ottsym{=}  \star$, and $\ottnt{U}  \sim  \ottnt{G}$.
          We have $w_{{\mathrm{11}}}  \ottsym{:}   \star \Rightarrow  \unskip ^ { \ell }  \! \ottnt{U}  \,  \xmapsto{ \mathmakebox[0.4em]{} [  ] \mathmakebox[0.3em]{} }  \, w_{{\mathrm{11}}}  \ottsym{:}   \star \Rightarrow  \unskip ^ { \ell }  \!  \ottnt{G} \Rightarrow  \unskip ^ { \ell }  \! \ottnt{U}  $,
          $\ottnt{U}  \neq  \star$, and $\ottnt{U}  \neq  \ottnt{G}$.
          By Lemma~\ref{lem:term_prec_to_type_prec},
          $ \ottnt{U_{{\mathrm{1}}}}   \sqsubseteq _{ S_{{\mathrm{0}}} }  \ottnt{U'_{{\mathrm{1}}}} $.
          Since $\ottnt{U_{{\mathrm{1}}}}  \ottsym{=}  \star$, we have $\ottnt{U'_{{\mathrm{1}}}}  \ottsym{=}  \star$.
          Since $ \ottnt{G}   \sqsubseteq _{ S_{{\mathrm{0}}} }  \star $,
          we have
          \[
           \langle   \emptyset    \vdash   w_{{\mathrm{11}}}  \ottsym{:}   \star \Rightarrow  \unskip ^ { \ell }  \! \ottnt{G}   :  \ottnt{G}   \sqsubseteq _{ S_{{\mathrm{0}}} }  \star  :  f'_{{\mathrm{1}}}  \dashv   \emptyset   \rangle 
          \]
          by \rnp{P\_CastL}.
          We finish by \rnp{P\_Cast}.

          \case{\rnp{R\_InstBase}}
          There exist $w_{{\mathrm{111}}}$, $\iota$, and $\ottmv{X}$ such that
          $f_{{\mathrm{1}}}  \ottsym{=}  w_{{\mathrm{111}}}  \ottsym{:}   \iota \Rightarrow  \unskip ^ { \ell_{{\mathrm{1}}} }  \!  \star \Rightarrow  \unskip ^ { \ell }  \! \ottmv{X}  $, $\ottnt{U_{{\mathrm{1}}}}  \ottsym{=}  \star$, and $\ottnt{U}  \ottsym{=}  \ottmv{X}$.
          $w_{{\mathrm{111}}}  \ottsym{:}   \iota \Rightarrow  \unskip ^ { \ell_{{\mathrm{1}}} }  \!  \star \Rightarrow  \unskip ^ { \ell }  \! \ottmv{X}   \,  \xmapsto{ \mathmakebox[0.4em]{} S \mathmakebox[0.3em]{} }  \, S  \ottsym{(}  w_{{\mathrm{111}}}  \ottsym{)}$ where $S  \ottsym{=}  [  \ottmv{X}  :=  \iota  ]$.
          By Lemma \ref{lem:term_prec_inversion3},
          $ \langle   \emptyset    \vdash   w_{{\mathrm{111}}}  :  \iota   \sqsubseteq _{ S_{{\mathrm{0}}} }  \ottnt{U'_{{\mathrm{1}}}}  :  f'_{{\mathrm{11}}}  \dashv   \emptyset   \rangle $.
          By Lemma \ref{lem:left_subst_preserve_prec},
          $ \langle   \emptyset    \vdash   S  \ottsym{(}  w_{{\mathrm{111}}}  \ottsym{)}  :  \iota   \sqsubseteq _{  S  \circ  S_{{\mathrm{0}}}  }  \ottnt{U'_{{\mathrm{1}}}}  :  f'_{{\mathrm{11}}}  \dashv   \emptyset   \rangle $
          and $ S  \ottsym{(}  \ottmv{X}  \ottsym{)}   \sqsubseteq _{  S  \circ  S_{{\mathrm{0}}}  }  \ottnt{U'} $.
          So, $ \iota   \sqsubseteq _{  S  \circ  S_{{\mathrm{0}}}  }  \ottnt{U'} $.
          By \rnp{P\_CastR},
          $ \langle   \emptyset    \vdash   S  \ottsym{(}  w_{{\mathrm{111}}}  \ottsym{)}  :  \iota   \sqsubseteq _{  S  \circ  S_{{\mathrm{0}}}  }  \ottnt{U'}  :  f'_{{\mathrm{11}}}  \ottsym{:}   \ottnt{U'_{{\mathrm{1}}}} \Rightarrow  \unskip ^ { \ell' }  \! \ottnt{U'}   \dashv   \emptyset   \rangle $.

          \case{\rnp{R\_InstArrow}}
          There exist $w_{{\mathrm{111}}}$ and $\ottmv{X}$ such that
          $f_{{\mathrm{1}}}  \ottsym{=}  w_{{\mathrm{111}}}  \ottsym{:}   \star  \!\rightarrow\!  \star \Rightarrow  \unskip ^ { \ell_{{\mathrm{1}}} }  \!  \star \Rightarrow  \unskip ^ { \ell }  \! \ottmv{X}  $, $\ottnt{U_{{\mathrm{1}}}}  \ottsym{=}  \star$, and $\ottnt{U}  \ottsym{=}  \ottmv{X}$.
          We have
          $w_{{\mathrm{111}}}  \ottsym{:}   \star  \!\rightarrow\!  \star \Rightarrow  \unskip ^ { \ell_{{\mathrm{1}}} }  \!  \star \Rightarrow  \unskip ^ { \ell }  \! \ottmv{X}   \,  \xmapsto{ \mathmakebox[0.4em]{} S \mathmakebox[0.3em]{} }  \, S  \ottsym{(}  w_{{\mathrm{111}}}  \ottsym{)}  \ottsym{:}   \star  \!\rightarrow\!  \star \Rightarrow  \unskip ^ { \ell_{{\mathrm{1}}} }  \!  \star \Rightarrow  \unskip ^ { \ell }  \!  \star  \!\rightarrow\!  \star \Rightarrow  \unskip ^ { \ell }  \! \ottmv{X_{{\mathrm{1}}}}  \!\rightarrow\!  \ottmv{X_{{\mathrm{2}}}}   $
          where $S  \ottsym{=}  [  \ottmv{X}  :=  \ottmv{X_{{\mathrm{1}}}}  \!\rightarrow\!  \ottmv{X_{{\mathrm{2}}}}  ]$.
          By Lemma \ref{lem:term_prec_to_type_prec},
          $ \star   \sqsubseteq _{ S_{{\mathrm{0}}} }  \ottnt{U'_{{\mathrm{1}}}} $.
          By definition, $\ottnt{U'_{{\mathrm{1}}}}  \ottsym{=}  \star$.
          Here, $ \langle   \emptyset    \vdash   \ottsym{(}  w_{{\mathrm{111}}}  \ottsym{:}   \star  \!\rightarrow\!  \star \Rightarrow  \unskip ^ { \ell_{{\mathrm{1}}} }  \! \star   \ottsym{)}  :  \star   \sqsubseteq _{ S_{{\mathrm{0}}} }  \star  :  f'_{{\mathrm{11}}}  \dashv   \emptyset   \rangle $
          and $ \ottmv{X}   \sqsubseteq _{ S_{{\mathrm{0}}} }  \ottnt{U'} $.
          By \rnp{P\_CastL},
          $ \langle   \emptyset    \vdash   \ottsym{(}  w_{{\mathrm{111}}}  \ottsym{:}   \star  \!\rightarrow\!  \star \Rightarrow  \unskip ^ { \ell_{{\mathrm{1}}} }  \!  \star \Rightarrow  \unskip ^ { \ell }  \! \star  \!\rightarrow\!  \star    \ottsym{)}  :  \star  \!\rightarrow\!  \star   \sqsubseteq _{ S_{{\mathrm{0}}} }  \star  :  f'_{{\mathrm{11}}}  \dashv   \emptyset   \rangle $.
          By Lemma \ref{lem:left_subst_preserve_prec},
          $ \langle   \emptyset    \vdash   \ottsym{(}  S  \ottsym{(}  w_{{\mathrm{111}}}  \ottsym{)}  \ottsym{:}   \star  \!\rightarrow\!  \star \Rightarrow  \unskip ^ { \ell_{{\mathrm{1}}} }  \!  \star \Rightarrow  \unskip ^ { \ell }  \! \star  \!\rightarrow\!  \star    \ottsym{)}  :  \star  \!\rightarrow\!  \star   \sqsubseteq _{  S  \circ  S_{{\mathrm{0}}}  }  \star  :  f'_{{\mathrm{11}}}  \dashv   \emptyset   \rangle $
          and $ S  \ottsym{(}  \ottmv{X}  \ottsym{)}   \sqsubseteq _{  S  \circ  S_{{\mathrm{0}}}  }  \ottnt{U'} $.
          So, $ \ottmv{X_{{\mathrm{1}}}}  \!\rightarrow\!  \ottmv{X_{{\mathrm{2}}}}   \sqsubseteq _{  S  \circ  S_{{\mathrm{0}}}  }  \ottnt{U'} $.
          By \rnp{P\_Cast},
          $ \langle   \emptyset    \vdash   \ottsym{(}  S  \ottsym{(}  w_{{\mathrm{111}}}  \ottsym{)}  \ottsym{:}   \star  \!\rightarrow\!  \star \Rightarrow  \unskip ^ { \ell_{{\mathrm{1}}} }  \!  \star \Rightarrow  \unskip ^ { \ell }  \!  \star  \!\rightarrow\!  \star \Rightarrow  \unskip ^ { \ell }  \! \ottmv{X_{{\mathrm{1}}}}  \!\rightarrow\!  \ottmv{X_{{\mathrm{2}}}}     \ottsym{)}  :  \ottmv{X_{{\mathrm{1}}}}  \!\rightarrow\!  \ottmv{X_{{\mathrm{2}}}}   \sqsubseteq _{  S  \circ  S_{{\mathrm{0}}}  }  \ottnt{U'}  :  f'_{{\mathrm{11}}}  \ottsym{:}   \star \Rightarrow  \unskip ^ { \ell' }  \! \ottnt{U'}   \dashv   \emptyset   \rangle $.

          \otherwise
          Contradiction.
        \end{caseanalysis}

        \case{$\ottnt{E}  \ottsym{=}  \ottnt{E_{{\mathrm{1}}}}  \ottsym{:}  \ottnt{U_{{\mathrm{1}}}}  \Rightarrow   \unskip ^ { \ell }  \, \ottnt{U}$ for some $\ottnt{E_{{\mathrm{1}}}}$} Similar to the case of \rnp{P\_Op}.

        \otherwise
        Contradiction.
      \end{caseanalysis}

      \case{\rnp{E\_Abort}}
      Here, $f_{{\mathrm{11}}}  \ottsym{:}   \ottnt{U_{{\mathrm{1}}}} \Rightarrow  \unskip ^ { \ell }  \! \ottnt{U}  \,  \xmapsto{ \mathmakebox[0.4em]{} [  ] \mathmakebox[0.3em]{} }  \, \textsf{\textup{blame}\relax} \, \ell_{{\mathrm{1}}}$ for some $\ell_{{\mathrm{1}}}$.
      We finish by \rnp{P\_Blame}.
    \end{caseanalysis}

    \case{\rnp{P\_CastL}}
    We are given $f_{{\mathrm{1}}}  \ottsym{=}  f_{{\mathrm{11}}}  \ottsym{:}   \ottnt{U_{{\mathrm{1}}}} \Rightarrow  \unskip ^ { \ell }  \! \ottnt{U} $
    for some $f_{{\mathrm{11}}}$, $\ottnt{U_{{\mathrm{1}}}}$, and $\ell$.

    By inversion,
    $ \langle   \emptyset    \vdash   f_{{\mathrm{11}}}  :  \ottnt{U_{{\mathrm{1}}}}   \sqsubseteq _{ S_{{\mathrm{0}}} }  \ottnt{U'}  :  f'_{{\mathrm{1}}}  \dashv   \emptyset   \rangle $,
    $ \ottnt{U}   \sqsubseteq _{ S_{{\mathrm{0}}} }  \ottnt{U'} $, and $\ottnt{U_{{\mathrm{1}}}}  \sim  \ottnt{U}$.

    By case analysis on the evaluation rule applied to $f_{{\mathrm{1}}}$.

    \begin{caseanalysis}
      \case{\rnp{E\_Step}}
      There exist $\ottnt{E}$ and $f_{{\mathrm{12}}}$ such that $\ottnt{E}  [  f_{{\mathrm{12}}}  ] \,  \xmapsto{ \mathmakebox[0.4em]{} S \mathmakebox[0.3em]{} }  \, S  \ottsym{(}  \ottnt{E}  [  f'_{{\mathrm{12}}}  ]  \ottsym{)}$.
      By inversion, $f_{{\mathrm{12}}} \,  \xrightarrow{ \mathmakebox[0.4em]{} S \mathmakebox[0.3em]{} }  \, f'_{{\mathrm{12}}}$.
      By case analysis on the structure of $\ottnt{E}$.

      \begin{caseanalysis}
        \case{$\ottnt{E}  \ottsym{=}  \left[ \, \right]$}
        Here, $f_{{\mathrm{11}}}  \ottsym{:}   \ottnt{U_{{\mathrm{1}}}} \Rightarrow  \unskip ^ { \ell }  \! \ottnt{U}   \ottsym{=}  f_{{\mathrm{12}}}$.
        By case analysis on the reduction rule applied to $f_{{\mathrm{12}}}$.

        \begin{caseanalysis}
          \case{\rnp{R\_IdBase}}
          There exist $w_{{\mathrm{11}}}$ and $\iota$ such that
          $f_{{\mathrm{11}}}  \ottsym{=}  w_{{\mathrm{11}}}$, $\ottnt{U_{{\mathrm{1}}}}  \ottsym{=}  \iota$ and $\ottnt{U}  \ottsym{=}  \iota$.
          Since $w_{{\mathrm{11}}}  \ottsym{:}   \iota \Rightarrow  \unskip ^ { \ell }  \! \iota  \,  \xmapsto{ \mathmakebox[0.4em]{} [  ] \mathmakebox[0.3em]{} }  \, w_{{\mathrm{11}}}$, we finish.

          \case{\rnp{R\_IdStar}}
          Similar to the case \rnp{R\_IdBase}.

          \case{\rnp{R\_Succeed}}
          There exist $w_{{\mathrm{111}}}$ and $\ottnt{G}$ such that
          $f_{{\mathrm{1}}}  \ottsym{=}  w_{{\mathrm{111}}}  \ottsym{:}   \ottnt{G} \Rightarrow  \unskip ^ { \ell_{{\mathrm{1}}} }  \!  \star \Rightarrow  \unskip ^ { \ell }  \! \ottnt{G}  $, $\ottnt{U_{{\mathrm{1}}}}  \ottsym{=}  \star$, and $\ottnt{U}  \ottsym{=}  \ottnt{G}$.
          We have $w_{{\mathrm{111}}}  \ottsym{:}   \ottnt{G} \Rightarrow  \unskip ^ { \ell_{{\mathrm{1}}} }  \!  \star \Rightarrow  \unskip ^ { \ell }  \! \ottnt{G}   \,  \xmapsto{ \mathmakebox[0.4em]{} [  ] \mathmakebox[0.3em]{} }  \, w_{{\mathrm{111}}}$.
          Here, $ \langle   \emptyset    \vdash   w_{{\mathrm{111}}}  \ottsym{:}   \ottnt{G} \Rightarrow  \unskip ^ { \ell_{{\mathrm{1}}} }  \! \star   :  \star   \sqsubseteq _{ S_{{\mathrm{0}}} }  \ottnt{U'}  :  f'_{{\mathrm{1}}}  \dashv   \emptyset   \rangle $.
          By Lemma~\ref{lem:term_prec_inversion3},
          $ \langle   \emptyset    \vdash   w_{{\mathrm{111}}}  :  \ottnt{G}   \sqsubseteq _{ S_{{\mathrm{0}}} }  \ottnt{U'}  :  f'_{{\mathrm{1}}}  \dashv   \emptyset   \rangle $.

          \case{\rnp{R\_Fail}}
          There exist $w_{{\mathrm{111}}}$, $\ottnt{G}$, $\ottnt{G'}$ such that
          $f_{{\mathrm{1}}}  \ottsym{=}  w_{{\mathrm{111}}}  \ottsym{:}   \ottnt{G'} \Rightarrow  \unskip ^ { \ell_{{\mathrm{1}}} }  \!  \star \Rightarrow  \unskip ^ { \ell }  \! \ottnt{G}  $, $\ottnt{U_{{\mathrm{1}}}}  \ottsym{=}  \star$, and $\ottnt{U}  \ottsym{=}  \ottnt{G}$.
          We have $w_{{\mathrm{111}}}  \ottsym{:}   \ottnt{G'} \Rightarrow  \unskip ^ { \ell_{{\mathrm{1}}} }  \!  \star \Rightarrow  \unskip ^ { \ell }  \! \ottnt{G}   \,  \xmapsto{ \mathmakebox[0.4em]{} [  ] \mathmakebox[0.3em]{} }  \, \textsf{\textup{blame}\relax} \, \ell$.
          We finish by \rnp{P\_Blame} with Lemma~\ref{lem:term_prec_to_typing}.

          \case{\rnp{R\_Ground}}
          There exist $w_{{\mathrm{11}}}$ and $\ottnt{G}$ such that
          $f_{{\mathrm{11}}}  \ottsym{=}  w_{{\mathrm{11}}}$, $\ottnt{U}  \ottsym{=}  \star$, and $\ottnt{U_{{\mathrm{1}}}}  \sim  \ottnt{G}$.
          We have $w_{{\mathrm{11}}}  \ottsym{:}   \ottnt{U_{{\mathrm{1}}}} \Rightarrow  \unskip ^ { \ell }  \! \star  \,  \xmapsto{ \mathmakebox[0.4em]{} [  ] \mathmakebox[0.3em]{} }  \, w_{{\mathrm{11}}}  \ottsym{:}   \ottnt{U_{{\mathrm{1}}}} \Rightarrow  \unskip ^ { \ell }  \!  \ottnt{G} \Rightarrow  \unskip ^ { \ell }  \! \star  $.
          Since $ \ottnt{U}   \sqsubseteq _{ S_{{\mathrm{0}}} }  \ottnt{U'} $ and $\ottnt{U}  \ottsym{=}  \star$, we have $\ottnt{U'}  \ottsym{=}  \star$.
          We finish by \rnp{P\_CastL}.

          \case{\rnp{R\_Expand}}
          There exist $w_{{\mathrm{11}}}$ and $\ottnt{G}$ such that
          $f_{{\mathrm{11}}}  \ottsym{=}  w_{{\mathrm{11}}}$, $\ottnt{U_{{\mathrm{1}}}}  \ottsym{=}  \star$, and $\ottnt{U}  \sim  \ottnt{G}$.
          We have $w_{{\mathrm{11}}}  \ottsym{:}   \star \Rightarrow  \unskip ^ { \ell }  \! \ottnt{U}  \,  \xmapsto{ \mathmakebox[0.4em]{} [  ] \mathmakebox[0.3em]{} }  \, w_{{\mathrm{11}}}  \ottsym{:}   \star \Rightarrow  \unskip ^ { \ell }  \!  \ottnt{G} \Rightarrow  \unskip ^ { \ell }  \! \ottnt{U}  $.
          By Lemma~\ref{lem:term_prec_to_type_prec}, $ \ottnt{U_{{\mathrm{1}}}}   \sqsubseteq _{ S_{{\mathrm{0}}} }  \ottnt{U'} $.
          Since $\ottnt{U_{{\mathrm{1}}}}  \ottsym{=}  \star$, we have $\ottnt{U'}  \ottsym{=}  \star$.
          Thus, we finish by \rnp{P\_CastL}.

          \case{\rnp{R\_InstBase}}
          There exist $w_{{\mathrm{111}}}$, $\iota$, and $\ottmv{X}$ such that
          $f_{{\mathrm{1}}}  \ottsym{=}  w_{{\mathrm{111}}}  \ottsym{:}   \iota \Rightarrow  \unskip ^ { \ell_{{\mathrm{1}}} }  \!  \star \Rightarrow  \unskip ^ { \ell }  \! \ottmv{X}  $, $\ottnt{U_{{\mathrm{1}}}}  \ottsym{=}  \star$, and $\ottnt{U}  \ottsym{=}  \ottmv{X}$.
          We have $w_{{\mathrm{111}}}  \ottsym{:}   \iota \Rightarrow  \unskip ^ { \ell_{{\mathrm{1}}} }  \!  \star \Rightarrow  \unskip ^ { \ell }  \! \ottmv{X}   \,  \xmapsto{ \mathmakebox[0.4em]{} S \mathmakebox[0.3em]{} }  \, S  \ottsym{(}  w_{{\mathrm{111}}}  \ottsym{)}$ where $S  \ottsym{=}  [  \ottmv{X}  :=  \iota  ]$.
          By Lemma \ref{lem:term_prec_inversion3},
          $ \langle   \emptyset    \vdash   w_{{\mathrm{111}}}  :  \iota   \sqsubseteq _{ S_{{\mathrm{0}}} }  \ottnt{U'}  :  f'_{{\mathrm{1}}}  \dashv   \emptyset   \rangle $.
          By Lemma \ref{lem:left_subst_preserve_prec},
          $ \langle   \emptyset    \vdash   S  \ottsym{(}  w_{{\mathrm{111}}}  \ottsym{)}  :  \iota   \sqsubseteq _{  S  \circ  S_{{\mathrm{0}}}  }  \ottnt{U'}  :  f'_{{\mathrm{1}}}  \dashv   \emptyset   \rangle $.

          \case{\rnp{R\_InstArrow}}
          There exist $w_{{\mathrm{111}}}$ and $\ottmv{X}$ such that
          $f_{{\mathrm{1}}}  \ottsym{=}  w_{{\mathrm{111}}}  \ottsym{:}   \star  \!\rightarrow\!  \star \Rightarrow  \unskip ^ { \ell_{{\mathrm{1}}} }  \!  \star \Rightarrow  \unskip ^ { \ell }  \! \ottmv{X}  $, $\ottnt{U_{{\mathrm{1}}}}  \ottsym{=}  \star$, and $\ottnt{U}  \ottsym{=}  \ottmv{X}$.
          We have $w_{{\mathrm{111}}}  \ottsym{:}   \star  \!\rightarrow\!  \star \Rightarrow  \unskip ^ { \ell_{{\mathrm{1}}} }  \!  \star \Rightarrow  \unskip ^ { \ell }  \! \ottmv{X}   \,  \xmapsto{ \mathmakebox[0.4em]{} S \mathmakebox[0.3em]{} }  \, S  \ottsym{(}  w_{{\mathrm{111}}}  \ottsym{)}  \ottsym{:}   \star  \!\rightarrow\!  \star \Rightarrow  \unskip ^ { \ell_{{\mathrm{1}}} }  \!  \star \Rightarrow  \unskip ^ { \ell }  \!  \star  \!\rightarrow\!  \star \Rightarrow  \unskip ^ { \ell }  \! \ottmv{X_{{\mathrm{1}}}}  \!\rightarrow\!  \ottmv{X_{{\mathrm{2}}}}   $
          where $S  \ottsym{=}  [  \ottmv{X}  :=  \ottmv{X_{{\mathrm{1}}}}  \!\rightarrow\!  \ottmv{X_{{\mathrm{2}}}}  ]$.
          By Lemma \ref{lem:term_prec_to_type_prec},
          $ \ottnt{U_{{\mathrm{1}}}}   \sqsubseteq _{ S_{{\mathrm{0}}} }  \ottnt{U'} $.
          Since $\ottnt{U_{{\mathrm{1}}}}  \ottsym{=}  \star$, $\ottnt{U'}  \ottsym{=}  \star$.
          Here, $ \langle   \emptyset    \vdash   w_{{\mathrm{111}}}  \ottsym{:}   \star  \!\rightarrow\!  \star \Rightarrow  \unskip ^ { \ell_{{\mathrm{1}}} }  \! \star   :  \star   \sqsubseteq _{ S_{{\mathrm{0}}} }  \star  :  f'_{{\mathrm{1}}}  \dashv   \emptyset   \rangle $.
          By \rnp{P\_CastL},
          $ \langle   \emptyset    \vdash   \ottsym{(}  w_{{\mathrm{111}}}  \ottsym{:}   \star  \!\rightarrow\!  \star \Rightarrow  \unskip ^ { \ell_{{\mathrm{1}}} }  \!  \star \Rightarrow  \unskip ^ { \ell }  \! \star  \!\rightarrow\!  \star    \ottsym{)}  :  \star  \!\rightarrow\!  \star   \sqsubseteq _{ S_{{\mathrm{0}}} }  \star  :  f'_{{\mathrm{1}}}  \dashv   \emptyset   \rangle $.
          By Lemma \ref{lem:left_subst_preserve_prec},
          $ \langle   \emptyset    \vdash   \ottsym{(}  S  \ottsym{(}  w_{{\mathrm{111}}}  \ottsym{)}  \ottsym{:}   \star  \!\rightarrow\!  \star \Rightarrow  \unskip ^ { \ell_{{\mathrm{1}}} }  \!  \star \Rightarrow  \unskip ^ { \ell }  \! \star  \!\rightarrow\!  \star    \ottsym{)}  :  \star  \!\rightarrow\!  \star   \sqsubseteq _{  S  \circ  S_{{\mathrm{0}}}  }  \star  :  f'_{{\mathrm{1}}}  \dashv   \emptyset   \rangle $.
          Since $ \ottmv{X_{{\mathrm{1}}}}  \!\rightarrow\!  \ottmv{X_{{\mathrm{2}}}}   \sqsubseteq _{  S  \circ  S_{{\mathrm{0}}}  }  \star $,
          we have
          $ \langle   \emptyset    \vdash   \ottsym{(}  S  \ottsym{(}  w_{{\mathrm{111}}}  \ottsym{)}  \ottsym{:}   \star  \!\rightarrow\!  \star \Rightarrow  \unskip ^ { \ell_{{\mathrm{1}}} }  \!  \star \Rightarrow  \unskip ^ { \ell }  \!  \star  \!\rightarrow\!  \star \Rightarrow  \unskip ^ { \ell }  \! \ottmv{X_{{\mathrm{1}}}}  \!\rightarrow\!  \ottmv{X_{{\mathrm{2}}}}     \ottsym{)}  :  \ottmv{X_{{\mathrm{1}}}}  \!\rightarrow\!  \ottmv{X_{{\mathrm{2}}}}   \sqsubseteq _{  S  \circ  S_{{\mathrm{0}}}  }  \star  :  f'_{{\mathrm{1}}}  \dashv   \emptyset   \rangle $
          by \rnp{P\_CastL}.

          \otherwise
          Contradiction.
        \end{caseanalysis}

        \case{$\ottnt{E}  \ottsym{=}  \ottnt{E_{{\mathrm{1}}}}  \ottsym{:}  \ottnt{U_{{\mathrm{1}}}}  \Rightarrow   \unskip ^ { \ell }  \, \ottnt{U}$ for some $\ottnt{E_{{\mathrm{1}}}}$}
        Similar to the case of \rnp{P\_Op}.

        \otherwise
        Contradiction.
      \end{caseanalysis}

      \case{\rnp{E\_Abort}}
      Here, $f_{{\mathrm{11}}}  \ottsym{:}   \ottnt{U_{{\mathrm{1}}}} \Rightarrow  \unskip ^ { \ell }  \! \ottnt{U}  \,  \xmapsto{ \mathmakebox[0.4em]{} [  ] \mathmakebox[0.3em]{} }  \, \textsf{\textup{blame}\relax} \, \ell_{{\mathrm{1}}}$ for some $\ell_{{\mathrm{1}}}$.
      We finish by \rnp{P\_Blame}.
    \end{caseanalysis}

    \case{\rnp{P\_CastR}}
    We are given $f'_{{\mathrm{1}}}  \ottsym{=}  f'_{{\mathrm{11}}}  \ottsym{:}   \ottnt{U'_{{\mathrm{1}}}} \Rightarrow  \unskip ^ { \ell' }  \! \ottnt{U'} $
    for some $f'_{{\mathrm{11}}}$, $\ottnt{U'_{{\mathrm{1}}}}$, and $\ell'$.
    By inversion,
    $ \langle   \emptyset    \vdash   f_{{\mathrm{1}}}  :  \ottnt{U}   \sqsubseteq _{ S_{{\mathrm{0}}} }  \ottnt{U'_{{\mathrm{1}}}}  :  f'_{{\mathrm{11}}}  \dashv   \emptyset   \rangle $,
    $ \ottnt{U}   \sqsubseteq _{ S_{{\mathrm{0}}} }  \ottnt{U'} $, and $\ottnt{U'_{{\mathrm{1}}}}  \sim  \ottnt{U'}$.
    By the IH, there exist $S'_{{\mathrm{0}}}$, $S'$, $S''$,and $f'_{{\mathrm{21}}}$
    such that
    \begin{itemize}
     \item $f'_{{\mathrm{11}}} \,  \xmapsto{ \mathmakebox[0.4em]{} S' \mathmakebox[0.3em]{} }  \, f'_{{\mathrm{21}}}$,
     \item $ \langle   \emptyset    \vdash   f_{{\mathrm{2}}}  :  S  \ottsym{(}  \ottnt{U}  \ottsym{)}   \sqsubseteq _{  S''  \uplus  \ottsym{(}   S'_{{\mathrm{0}}}  \circ   S  \circ  S_{{\mathrm{0}}}    \ottsym{)}  }  S'  \ottsym{(}  \ottnt{U'_{{\mathrm{1}}}}  \ottsym{)}  :  f'_{{\mathrm{21}}}  \dashv   \emptyset   \rangle $,
     \item $\forall \ottmv{X} \in \textit{dom} \, \ottsym{(}   S  \circ  S_{{\mathrm{0}}}   \ottsym{)}.  S  \circ  S_{{\mathrm{0}}}   \ottsym{(}  \ottmv{X}  \ottsym{)}  \ottsym{=}   S'_{{\mathrm{0}}}  \circ    S  \circ  S_{{\mathrm{0}}}   \circ  S'    \ottsym{(}  \ottmv{X}  \ottsym{)}$,
     \item $\forall \ottmv{X} \in \textit{dom} \, \ottsym{(}  S'  \ottsym{)}. \textit{ftv} \, \ottsym{(}  S'  \ottsym{(}  \ottmv{X}  \ottsym{)}  \ottsym{)}  \subseteq  \textit{dom} \, \ottsym{(}  S'_{{\mathrm{0}}}  \ottsym{)}$, and
     \item $\textit{dom} \, \ottsym{(}  S''  \ottsym{)}$ is a set of fresh type variables.
    \end{itemize}
    So, $f'_{{\mathrm{11}}}  \ottsym{:}   \ottnt{U'_{{\mathrm{1}}}} \Rightarrow  \unskip ^ { \ell' }  \! \ottnt{U'}  \,  \xmapsto{ \mathmakebox[0.4em]{} S' \mathmakebox[0.3em]{} }  \, f'_{{\mathrm{21}}}  \ottsym{:}   S'  \ottsym{(}  \ottnt{U'_{{\mathrm{1}}}}  \ottsym{)} \Rightarrow  \unskip ^ { \ell' }  \! S'  \ottsym{(}  \ottnt{U'}  \ottsym{)} $.

    By Lemma~\ref{lem:left_subst_preserve_prec},
    $ S  \ottsym{(}  \ottnt{U}  \ottsym{)}   \sqsubseteq _{  S  \circ  S_{{\mathrm{0}}}  }  \ottnt{U'} $.
    By Lemma~\ref{lem:right_subst_preserve_prec},
    $ S  \ottsym{(}  \ottnt{U}  \ottsym{)}   \sqsubseteq _{   S'_{{\mathrm{0}}}  \circ  S   \circ  S_{{\mathrm{0}}}  }  S'  \ottsym{(}  \ottnt{U'}  \ottsym{)} $.
    By Lemma~\ref{lem:prec_subst_weak},
    $ S  \ottsym{(}  \ottnt{U}  \ottsym{)}   \sqsubseteq _{  S''  \uplus  \ottsym{(}   S'_{{\mathrm{0}}}  \circ   S  \circ  S_{{\mathrm{0}}}    \ottsym{)}  }  S'  \ottsym{(}  \ottnt{U'}  \ottsym{)} $.

    By \rnp{P\_CastR},
    $ \langle   \emptyset    \vdash   f_{{\mathrm{2}}}  :  S  \ottsym{(}  \ottnt{U}  \ottsym{)}   \sqsubseteq _{  S''  \uplus  \ottsym{(}   S'_{{\mathrm{0}}}  \circ   S  \circ  S_{{\mathrm{0}}}    \ottsym{)}  }  S'  \ottsym{(}  \ottnt{U'}  \ottsym{)}  :  f'_{{\mathrm{21}}}  \ottsym{:}   S'  \ottsym{(}  \ottnt{U'_{{\mathrm{1}}}}  \ottsym{)} \Rightarrow  \unskip ^ { \ell' }  \! S'  \ottsym{(}  \ottnt{U'}  \ottsym{)}   \dashv   \emptyset   \rangle $.

    \otherwise Cannot happen.  \qedhere
  \end{caseanalysis}
\end{proof}


\begin{theoremA}[Dynamic Gradual Guarantee of $\lambdaRTI$] \label{thm:gg}
  Suppose $ \langle   \emptyset    \vdash   f  :  \ottnt{U}   \sqsubseteq _{ S_{{\mathrm{0}}} }  \ottnt{U'}  :  f'  \dashv   \emptyset   \rangle $ and $ \emptyset   \vdash  f  \ottsym{:}  \ottnt{U}$.

  \begin{enumerate}
    \item $ \emptyset   \vdash  f'  \ottsym{:}  \ottnt{U'}$ and $ \ottnt{U}   \sqsubseteq _{ S_{{\mathrm{0}}} }  \ottnt{U'} $.
    \item
      \begin{itemize}
        \item If $f \,  \xmapsto{ \mathmakebox[0.4em]{} S \mathmakebox[0.3em]{} }\hspace{-0.4em}{}^\ast \hspace{0.2em}  \, w$,
          then
          $f' \,  \xmapsto{ \mathmakebox[0.4em]{} S' \mathmakebox[0.3em]{} }\hspace{-0.4em}{}^\ast \hspace{0.2em}  \, w'$ and
          $ \langle   \emptyset    \vdash   w  :  S  \ottsym{(}  \ottnt{U}  \ottsym{)}   \sqsubseteq _{  S''  \uplus  \ottsym{(}   S'_{{\mathrm{0}}}  \circ   S  \circ  S_{{\mathrm{0}}}    \ottsym{)}  }  S'  \ottsym{(}  \ottnt{U'}  \ottsym{)}  :  w'  \dashv   \emptyset   \rangle $
          for some $w'$, $S'$, $S'_{{\mathrm{0}}}$, and $S''$.
        \item If $ f \!  \Uparrow  $, then $ f' \!  \Uparrow  $.
      \end{itemize}
    \item
      \begin{itemize}
        \item If $f' \,  \xmapsto{ \mathmakebox[0.4em]{} S' \mathmakebox[0.3em]{} }\hspace{-0.4em}{}^\ast \hspace{0.2em}  \, w'$,
          then
          $f \,  \xmapsto{ \mathmakebox[0.4em]{} S \mathmakebox[0.3em]{} }\hspace{-0.4em}{}^\ast \hspace{0.2em}  \, w$ and
          $ \langle   \emptyset    \vdash   w  :  S  \ottsym{(}  \ottnt{U}  \ottsym{)}   \sqsubseteq _{  S''  \uplus  \ottsym{(}   S'_{{\mathrm{0}}}  \circ   S  \circ  S_{{\mathrm{0}}}    \ottsym{)}  }  S'  \ottsym{(}  \ottnt{U'}  \ottsym{)}  :  w'  \dashv   \emptyset   \rangle $
          for some $w$, $S$, $S'_{{\mathrm{0}}}$, and $S''$,
          or $f \,  \xmapsto{ \mathmakebox[0.4em]{} S \mathmakebox[0.3em]{} }\hspace{-0.4em}{}^\ast \hspace{0.2em}  \, \textsf{\textup{blame}\relax} \, \ell$ for some $\ell$ and $S$.
        \item If $f' \,  \xmapsto{ \mathmakebox[0.4em]{} S' \mathmakebox[0.3em]{} }\hspace{-0.4em}{}^\ast \hspace{0.2em}  \, \textsf{\textup{blame}\relax} \, \ell'$,
          then $f \,  \xmapsto{ \mathmakebox[0.4em]{} S \mathmakebox[0.3em]{} }\hspace{-0.4em}{}^\ast \hspace{0.2em}  \, \textsf{\textup{blame}\relax} \, \ell$ for some $\ell$ and $S$.
        \item If $ f' \!  \Uparrow  $, then $ f \!  \Uparrow  $,
          or $f \,  \xmapsto{ \mathmakebox[0.4em]{} S \mathmakebox[0.3em]{} }\hspace{-0.4em}{}^\ast \hspace{0.2em}  \, \textsf{\textup{blame}\relax} \, \ell$ for some $\ell$ and $S$.
      \end{itemize}
  \end{enumerate}
\end{theoremA}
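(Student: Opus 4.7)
The plan is to derive the theorem as a corollary of the already-developed simulation machinery (Lemmas~\ref{lem:term_prec_simulation}, \ref{lem:prec_catch_up_left_value}, and \ref{lem:term_prec_simulation_app}), adding only the global induction on evaluation length and a small argument about divergence. First, for part~(1), I would proceed by a straightforward induction on the derivation of $ \langle   \emptyset    \vdash   f  :  \ottnt{U}   \sqsubseteq _{ S_{{\mathrm{0}}} }  \ottnt{U'}  :  f'  \dashv   \emptyset   \rangle $: at each rule the type-precision premises give $ \ottnt{U}   \sqsubseteq _{ S_{{\mathrm{0}}} }  \ottnt{U'} $ directly, and the typing of $f'$ at $\ottnt{U'}$ is immediate from the structure of each precision rule together with the consistency premises. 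Lemma~\ref{lem:term_prec_to_type_prec} and Lemma~\ref{lem:term_prec_to_typing} do all of this work for us uniformly.

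For part~(2), the value subcase is proved by induction on the length of the evaluation sequence $f \,  \xmapsto{ \mathmakebox[0.4em]{} S \mathmakebox[0.3em]{} }\hspace{-0.4em}{}^\ast \hspace{0.2em}  \, w$. Each one-step reduction of $f$ is simulated on the $f'$ side by Lemma~\ref{lem:term_prec_simulation}, which produces a (possibly longer) evaluation $f' \,  \xmapsto{ \mathmakebox[0.4em]{} S' \mathmakebox[0.3em]{} }\hspace{-0.4em}{}^\ast \hspace{0.2em}  \, f'_{{\mathrm{2}}}$ together with matching, compatible type substitutions and a precision derivation that carries the induction forward. Iterating yields $f' \,  \xmapsto{ \mathmakebox[0.4em]{} S' \mathmakebox[0.3em]{} }\hspace{-0.4em}{}^\ast \hspace{0.2em}  \, f''$ with $ \langle   \emptyset    \vdash   w  :  \ldots   \sqsubseteq _{  S''  \uplus  \ldots  }  \ldots  :  f''  \dashv   \emptyset   \rangle $, and then Lemma~\ref{lem:prec_catch_up_left_value} promotes $f''$ to a value $w'$ in precision with $w$, giving the conclusion. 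The divergence subcase is the one I expect to be the main obstacle: we have only a one-step simulation lemma, so I cannot mechanically lift it to infinite sequences. The key observation is that any infinite evaluation of $f$ must contain infinitely many applications of \rnp{R\_Beta} (since all other reduction rules decrease a suitable measure on the number and depth of casts in the active redex, and \rnp{R\_InstArrow}/\rnp{R\_InstBase} can be triggered only finitely often at the spine before either reaching a value or firing a $\beta$). At each such $\beta$ step, Lemma~\ref{lem:term_prec_simulation_app} guarantees a \emph{strictly positive} evaluation on the $f'$ side (it is stated with $ \xmapsto{ \mathmakebox[0.4em]{} S' \mathmakebox[0.3em]{} }\hspace{-0.4em}{}^+ \hspace{0.2em} $), so the simulation on $f'$ cannot stall, and $f'$ must diverge.

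Part~(3) is then derived from parts~(1) and~(2) by a case analysis on the behaviour of $f$, which is exhaustive by the type safety theorem (Theorem~\ref{thm:type_safety}): either $f \,  \xmapsto{ \mathmakebox[0.4em]{} S \mathmakebox[0.3em]{} }\hspace{-0.4em}{}^\ast \hspace{0.2em}  \, w$, or $f \,  \xmapsto{ \mathmakebox[0.4em]{} S \mathmakebox[0.3em]{} }\hspace{-0.4em}{}^\ast \hspace{0.2em}  \, \textsf{\textup{blame}\relax} \, \ell$, or $ f \!  \Uparrow  $. In the first bullet of part~(3), if $f$ evaluates to a value $w$, part~(2) gives us an evaluation of $f'$ to some $w''$ in precision with $w$; by determinism of evaluation up to the generated substitution, $w'' = w'$ (modulo the fresh variables), giving the precision derivation we need, while the other two behaviours of $f$ either yield the blame alternative or contradict $f' \,  \xmapsto{ \mathmakebox[0.4em]{} S' \mathmakebox[0.3em]{} }\hspace{-0.4em}{}^\ast \hspace{0.2em}  \, w'$ via part~(2). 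The blame subcase and the divergence subcase are handled similarly: $f$ cannot evaluate to a value (that would force $f'$ to evaluate to a matching value by part~(2), contradicting its actual behaviour), so it must either raise blame or diverge, which is precisely the disjunction the theorem claims. Throughout, the bookkeeping of the substitutions $S$, $S'$, $S'_{{\mathrm{0}}}$, $S''$ is exactly what the earlier lemmas already deliver, so no new substitution-level reasoning is required beyond composing them.
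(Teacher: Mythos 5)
Your proposal is correct and follows essentially the same route as the paper: part (1) via Lemmas~\ref{lem:term_prec_to_type_prec} and \ref{lem:term_prec_to_typing}, part (2) by induction on the evaluation sequence using Lemmas~\ref{lem:term_prec_simulation} and \ref{lem:prec_catch_up_left_value} for termination and the ``infinitely many \rnp{R\_Beta} steps, each simulated by the strictly positive Lemma~\ref{lem:term_prec_simulation_app}'' argument for divergence, and part (3) by contradiction from (2) together with the exhaustiveness of behaviours given by type safety. The only difference is that you spell out a measure justifying why an infinite evaluation must contain infinitely many $\beta$-steps, which the paper leaves implicit.
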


\begin{proof}
  \leavevmode
  \begin{enumerate}
    \item By Lemmas \ref{lem:term_prec_to_type_prec} and \ref{lem:term_prec_to_typing}.
    \item
      \begin{itemize}
        \item
          By induction on the evaluation sequence $f \,  \xmapsto{ \mathmakebox[0.4em]{} S \mathmakebox[0.3em]{} }\hspace{-0.4em}{}^\ast \hspace{0.2em}  \, w$ with
          Lemmas~\ref{lem:prec_catch_up_left_value} and \ref{lem:term_prec_simulation}.
        \item
          Here, there is an infinite evaluation sequence from $f$,
          and it contains infinite times application of \rnp{R\_Beta}.
          By Lemma \ref{lem:term_prec_simulation_app},
          there is an infinite evaluation sequence from $f'$.
      \end{itemize}
    \item
      \begin{itemize}
        \item $\ottnt{f}$ cannot diverge as it contradicts the second item of (2).
        \item $\ottnt{f}$ cannot diverge or evaluate to a value as it contradicts
          (2).
        \item $\ottnt{f}$ cannot evaluate to a value as it contradicts
          the first item of (2).
          \qedhere
      \end{itemize}
  \end{enumerate}
\end{proof}

\ifrestate
\thmDynamicGG*
\else
\begin{theorem}[name=Dynamic Gradual Guarantee,restate=thmDynamicGG]
  \label{thm:dynamicGG}
 Suppose that $ e   \sqsubseteq _{ S_{{\mathrm{0}}} }  e' $.
 Let
 $(S,\ottnt{U}) = PT( \emptyset ,e)$ and
 $(S',\ottnt{U'}) = PT( \emptyset , e')$.
 \begin{enumerate}
  \item \begin{itemize}
         \item If $ \langle   \emptyset    \vdash   S  \ottsym{(}  e  \ottsym{)}  :  \ottnt{U}  \rangle   \xmapsto{ \mathmakebox[0.4em]{} S_{{\mathrm{1}}} \mathmakebox[0.3em]{} }\hspace{-0.4em}{}^\ast \hspace{0.2em}    w $,
               then $ \langle   \emptyset    \vdash   S'  \ottsym{(}  e'  \ottsym{)}  :  \ottnt{U'}  \rangle   \xmapsto{ \mathmakebox[0.4em]{} S'_{{\mathrm{1}}} \mathmakebox[0.3em]{} }\hspace{-0.4em}{}^\ast \hspace{0.2em}    w' $ and
               $ \langle   \emptyset    \vdash   w  :  S_{{\mathrm{1}}}  \ottsym{(}  \ottnt{U}  \ottsym{)}   \sqsubseteq _{ S'_{{\mathrm{0}}} }  S'_{{\mathrm{1}}}  \ottsym{(}  \ottnt{U'}  \ottsym{)}  :  w'  \dashv   \emptyset   \rangle $
               for some $w'$, $S'_{{\mathrm{1}}}$, and $S'_{{\mathrm{0}}}$.
         \item If $ \langle   \emptyset    \vdash   S  \ottsym{(}  e  \ottsym{)}  :  \ottnt{U}  \rangle   \Uparrow  $,
               then $ \langle   \emptyset    \vdash   S'  \ottsym{(}  e'  \ottsym{)}  :  \ottnt{U'}  \rangle   \Uparrow  $.
        \end{itemize}
  \item \begin{itemize}
        \item If $ \langle   \emptyset    \vdash   S'  \ottsym{(}  e'  \ottsym{)}  :  \ottnt{U'}  \rangle   \xmapsto{ \mathmakebox[0.4em]{} S'_{{\mathrm{1}}} \mathmakebox[0.3em]{} }\hspace{-0.4em}{}^\ast \hspace{0.2em}    w' $,
          then
          $ \langle   \emptyset    \vdash   S  \ottsym{(}  e  \ottsym{)}  :  \ottnt{U}  \rangle   \xmapsto{ \mathmakebox[0.4em]{} S_{{\mathrm{1}}} \mathmakebox[0.3em]{} }\hspace{-0.4em}{}^\ast \hspace{0.2em}    w $ and
          $ \langle   \emptyset    \vdash   w  :  S_{{\mathrm{1}}}  \ottsym{(}  \ottnt{U}  \ottsym{)}   \sqsubseteq _{ S'_{{\mathrm{0}}} }  S'_{{\mathrm{1}}}  \ottsym{(}  \ottnt{U'}  \ottsym{)}  :  w'  \dashv   \emptyset   \rangle $
          for some $w$, $S_{{\mathrm{1}}}$, and $S'_{{\mathrm{0}}}$;
          or $ \langle   \emptyset    \vdash   S  \ottsym{(}  e  \ottsym{)}  :  \ottnt{U}  \rangle   \xmapsto{ \mathmakebox[0.4em]{} S_{{\mathrm{1}}} \mathmakebox[0.3em]{} }\hspace{-0.4em}{}^\ast \hspace{0.2em}    \textsf{\textup{blame}\relax} \, \ell $ for some $\ell$ and $S_{{\mathrm{1}}}$.
        \item If $ \langle   \emptyset    \vdash   S'  \ottsym{(}  e'  \ottsym{)}  :  \ottnt{U'}  \rangle   \xmapsto{ \mathmakebox[0.4em]{} S'_{{\mathrm{1}}} \mathmakebox[0.3em]{} }\hspace{-0.4em}{}^\ast \hspace{0.2em}    \textsf{\textup{blame}\relax} \, \ell' $,
          then $ \langle   \emptyset    \vdash   S  \ottsym{(}  e  \ottsym{)}  :  \ottnt{U}  \rangle   \xmapsto{ \mathmakebox[0.4em]{} S_{{\mathrm{1}}} \mathmakebox[0.3em]{} }\hspace{-0.4em}{}^\ast \hspace{0.2em}    \textsf{\textup{blame}\relax} \, \ell $ for some $\ell$ and $S_{{\mathrm{1}}}$.
        \item If $ \langle   \emptyset    \vdash   S'  \ottsym{(}  e'  \ottsym{)}  :  \ottnt{U'}  \rangle   \Uparrow  $,
          then $ \langle   \emptyset    \vdash   S  \ottsym{(}  e  \ottsym{)}  :  \ottnt{U}  \rangle   \Uparrow  $,
          or $ \langle   \emptyset    \vdash   S  \ottsym{(}  e  \ottsym{)}  :  \ottnt{U}  \rangle   \xmapsto{ \mathmakebox[0.4em]{} S_{{\mathrm{1}}} \mathmakebox[0.3em]{} }\hspace{-0.4em}{}^\ast \hspace{0.2em}    \textsf{\textup{blame}\relax} \, \ell $ for some $\ell$ and $S_{{\mathrm{1}}}$.
       \end{itemize}
 \end{enumerate}
\end{theorem}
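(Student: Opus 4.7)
The plan is to reduce the statement to the Dynamic Gradual Guarantee of $\lambdaRTI$ (Theorem~\ref{thm:gg}) by routing through the cast-insertion translation. Unfolding the definition of $\langle \Gamma \vdash e : U \rangle \ottnt{evaluate}\ \ottnt{f}$, what we really have to produce, given $e \sqsubseteq_{S_0} e'$, are $\lambdaRTI$ terms $\ottnt{f}$ and $\ottnt{f'}$ such that $S(e) \rightsquigarrow \ottnt{f}$, $S'(e') \rightsquigarrow \ottnt{f'}$, and $\langle \emptyset \vdash \ottnt{f} : U \sqsubseteq_{S''} U' : \ottnt{f'} \dashv \emptyset \rangle$ for some $S''$. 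Once this is established, the conclusions about evaluation to values, blame, and divergence fall out directly from Theorem~\ref{thm:gg}.

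First, I would apply Lemma~\ref{thm:ITGL-staticGG} (the machinery behind the static gradual guarantee) to obtain, from $PT(\emptyset, e) = (S, U)$ and $e \sqsubseteq_{S_0} e'$, a principal type $(S', U')$ for $e'$ and cast-inserted $\lambdaRTI$ terms $\ottnt{f}, \ottnt{f'}$ together with a substitution $S''$ such that $S(\emptyset) \vdash S(e) \rightsquigarrow \ottnt{f} : U$, $S'(\emptyset) \vdash S'(e') \rightsquigarrow \ottnt{f'} : U'$, and $\langle \emptyset \vdash \ottnt{f} : U \sqsubseteq_{S''} U' : \ottnt{f'} \dashv \emptyset \rangle$. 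This single invocation takes care of both the existence of $PT(\emptyset, e')$ and the precision-preservation of cast insertion (Lemma~\ref{thm:cast-insertion-preserves-precision}), which is exactly what connects the surface-language precision $e \sqsubseteq_{S_0} e'$ to the intermediate-language precision on $\ottnt{f}$ and $\ottnt{f'}$.

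With the precision-related pair $(\ottnt{f}, \ottnt{f'})$ in hand, I would apply Theorem~\ref{thm:gg} in its three forms. For item (1), part (2) of Theorem~\ref{thm:gg} directly gives both the value-to-value case (yielding the $w, w'$ along with a refined precision at $S_1(U)$ and $S'_1(U')$) and the divergence-to-divergence case. For item (2), part (3) of Theorem~\ref{thm:gg} gives the three sub-cases: the more-precise side may catch up to a related value, raise blame, or diverge; these exactly correspond to the bulleted cases in the statement. The type-safety corollary for the ITGL guarantees that $S(e)$ has well-defined evaluation behaviour (value, blame, or divergence), so the trichotomy cases are exhaustive.

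The main technical obstacle is bookkeeping of type substitutions. The $\lambdaRTI$ DGG produces a substitution of the form $S'' \uplus (S'_0 \circ (S \circ S_0))$, whereas the ITGL statement is phrased more loosely with an existentially quantified $S'_0$. I expect this to be a matter of repackaging via Lemmas~\ref{lem:left_subst_preserve_prec} and \ref{lem:right_subst_preserve_prec} (to push substitutions across the precision indices) together with Lemma~\ref{lem:prec_subst_weak} (to absorb the fresh-variable substitutions $S''$ generated by DTI during evaluation), rather than a genuinely new argument. The real conceptual work has already been done: precision-preservation of $PT$ and of cast insertion on one side, and the $\lambdaRTI$-level simulation lemmas (especially Lemmas~\ref{lem:prec_catch_up_left_value}, \ref{lem:term_prec_simulation_app}, and \ref{lem:term_prec_simulation}) on the other.
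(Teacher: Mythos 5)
Your proposal follows exactly the paper's route: the paper proves this theorem by combining Lemma~\ref{thm:ITGL-staticGG} (which already packages principal-type inference for $e'$, cast insertion for both sides, and precision preservation into one statement) with Theorem~\ref{thm:ci_type_preservation} and the $\lambdaRTI$-level dynamic gradual guarantee (Theorem~\ref{thm:gg}). Your observation that the substitution indices only need repackaging (the composite $S'' \uplus (S'_{0} \circ (S \circ S_{0}))$ serves directly as the existential witness) is also correct, so the proposal is sound and matches the paper's argument.
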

\fi 

\begin{proof}
  By Lemma~\ref{thm:ITGL-staticGG} and Theorems~\ref{thm:ci_type_preservation} and \ref{thm:gg}.
\end{proof}

  \fi
\end{document}